\definecolor{darkblue}{rgb}{0.0,0.0,0.3}
\newtheorem{theorem}{Theorem}[chapter]
\newtheorem{lemma}[theorem]{Lemma}
\newtheorem{definition}[theorem]{Definition}
\newtheorem{corollary}[theorem]{Corollary}
\newtheorem{claim}[theorem]{Claim}
\newtheorem{remark}[theorem]{Remark}
\newcommand{\E}{\mathbb{E}}
\newcommand{\Prob}{\mathbf{Pr}}
\newcommand{\NP}{\textrm{NP}}
\renewcommand{\P}{\textrm{P}}
\newcommand{\RP}{\textrm{RP}}
\newcommand{\ZPP}{\textrm{ZPP}}
\newcommand{\poly}{\texttt{poly}}
\newcommand{\calI}{\mathcal{I}}
\DeclareMathOperator {\Harm}{Harm}
\DeclareMathOperator {\Path}{Path}
\newcommand{\HLp}{HL$_p$\xspace}
\newcommand{\HL}[1]{HL$_#1$\xspace}
\newcommand{\R}{\mathbb{R}}
\newcommand{\MIS}{\texttt{MIS}}
\newcommand{\MVC}{\texttt{MVC}}
\newcommand{\X}{\mathcal{X}}
\newcommand{\F}{\mathcal{F}}
\newcommand{\C}{\mathcal{C}}
\newcommand{\cP}{\mathcal{P}}
\begin{document}
\pagenumbering{Roman}

\begin{titlepage}
\begin{center}

\LARGE \textbf{Shortest path queries, graph partitioning and covering problems in worst and beyond worst case settings}\\[20pt]
\Large by\\[10pt]
Charalampos Angelidakis\\[60pt]

\Large A thesis submitted\\[5pt]
in partial fulfillment of the requirements for\\[5pt]
the degree of\\[20pt]
Doctor of Philosophy in Computer Science\\[20pt]
at the\\[20pt]
TOYOTA TECHNOLOGICAL INSTITUTE AT CHICAGO\\[5pt]
Chicago, Illinois\\[40pt]
August, 2018\\[80pt]

Thesis Committee:\\[10pt]
Yury Makarychev (Thesis Advisor)\\[5pt]
Avrim Blum\\[5pt]
Julia Chuzhoy\\[5pt]
Aravindan Vijayaraghavan

\end{center}
\end{titlepage}

\newpage
\thispagestyle{empty}
\mbox{}
\newpage
\thispagestyle{empty}

\begin{center}

\Large\textbf{Shortest path queries, graph partitioning and covering problems in worst and beyond worst case settings}\\[30pt] \normalsize
A thesis presented\\[5pt]
by\\[25pt]
\Large Haris (Charalampos) Angelidakis\\[30pt] \normalsize
in partial fulfillment of the requirements for the degree of\\[12pt]
Doctor of Philosophy in Computer Science.\\[12pt]
Toyota Technological Institute at Chicago\\[12pt]
Chicago, Illinois\\[12pt]
August, 2018\\[45pt]

---\;Thesis Committee\;---\\[30pt]
\begin{table}[h]
\begin{tabular}{l l l}
    Julia Chuzhoy & &\\
    \rule{4.8cm}{1pt}\quad\quad\quad & \rule{5cm}{1pt}\;\;\; & \rule{3cm}{1pt}\\[5pt]
    Committee member & Signature & Date\\[40pt]

    Aravindan Vijayaraghavan & &\\
    \rule{4.8cm}{1pt}\quad\quad\quad & \rule{5cm}{1pt}\;\;\; & \rule{3cm}{1pt}\\[5pt]
    Committee member & Signature & Date\\[40pt]

    Yury Makarychev & &\\
    \rule{4.8cm}{1pt}\quad\quad\quad & \rule{5cm}{1pt}\;\;\; & \rule{3cm}{1pt}\\[5pt]
    Thesis/Research Advisor & Signature & Date\\[40pt]

    Avrim Blum & &\\
    \rule{4.8cm}{1pt}\quad\quad\quad & \rule{5cm}{1pt}\;\;\; & \rule{3cm}{1pt}\\[5pt]
    Chief Academic Officer & Signature & Date

\end{tabular}
\end{table}

\end{center}

\newpage
\thispagestyle{empty}
\mbox{}
\begin{comment}
\input{./title/copyrights.tex}
\end{comment}
\newpage
\thispagestyle{empty}
\begin{center}
\large \textbf{Shortest path queries, graph partitioning and covering problems in worst and beyond worst case settings}
\vspace{20pt}

by\\
Charalampos Angelidakis
\end{center}

\vspace{20pt}

\normalsize

\begin{center}
\textbf{Abstract}
\end{center}

In this thesis, we design algorithms for several \NP-hard problems in both worst and beyond worst case settings. In the first part of the thesis, we apply the traditional worst case methodology and design approximation algorithms for the Hub Labeling problem; Hub Labeling is a preprocessing technique introduced to speed up shortest path queries. Before this work, Hub Labeling had been extensively studied mainly in the beyond worst case analysis setting, and in particular on graphs with low highway dimension (a notion introduced in order to explain why certain heuristics for shortest paths are very successful in real-life road networks). In this work, we significantly improve our theoretical understanding of the problem and design (worst-case) algorithms for various classes of graphs, such as general graphs, graphs with unique shortest paths and trees, as well as provide matching inapproximability lower bounds for the problem in its most general settings. Finally, we demonstrate a connection between computing a Hub Labeling on a tree and searching for a node in a tree.

In the second part of the thesis, we turn to beyond worst case analysis and extensively study the stability model introduced by Bilu and Linial in an attempt to describe real-life instances of graph partitioning and clustering problems. Informally, an instance of a combinatorial optimization problem is stable if it has a unique optimal solution that remains the unique optimum under small (multiplicative and adversarial) perturbations of the parameters of the input (e.g.~edge or vertex weights). Utilizing the power of convex relaxations for stable instances, we obtain several results for problems such as Edge/Node Multiway Cut, Independent Set (and its equivalent, in terms of exact solvability, Vertex Cover), clustering problems such as $k$-center and $k$-median and the symmetric Traveling Salesman problem. We also provide strong lower bounds for certain families of algorithms for covering problems, thus exhibiting potential barriers towards the design of improved algorithms in this framework.

\newpage

\thispagestyle{empty}
\mbox{}
\newpage
\thispagestyle{empty}
\begin{center}
\Large
\textbf{Acknowledgements}
\end{center}
\normalsize

Concluding this 6-year journey, there are a lot of people that I would like to thank, starting with my advisor, Yury Makarychev. Yury's generosity with his time and his ideas is simply unmatched. He spent countless hours explaining to me concepts, ideas, techniques, and most of the things I learned during my PhD I learned through him. Brainstorming together and sharing ideas was really an eye-opening experience that helped me get a better grasp of how research is done. I will always be grateful to him for his time, help and kindness.

None of this would have been possible without the trust and support of Julia Chuzhoy. Julia was the reason I joined TTIC in the first place and I learned a lot working with her in the first 1.5 years. Her professionalism and work ethic are second to none and her advice and sincerity helped me a lot early on. I would like to thank her for her time during the first two years of my PhD and for happily joining my committee and helping me throughout the last stages of it.

I am grateful to Avrim Blum and Aravindan Vijayaraghavan, the other two members of my committee, I was lucky enough to overlap with Avrim for one year at TTI, and I am glad I got the chance to work with him. I would also like to thank Aravindan for his time and help in the last year of my PhD, as well as Madhur Tulsiani for all his time, support and understanding.

During my PhD, I had the chance to work with great people that I would like to thank: Pranjal Awasthi, Vaggos Chatziafratis, Chen Dan, Konstantin Makarychev, Pasin Manurangsi, Vsevolod Oparin and Colin White.

And now starts the long list of friends. From TTI, I will start with Shubhendu, my office-mate and one of the closest friends that I have had throughout. Every-day life at TTI would be much more boring without Shubhendu, his printouts, his random texts/messages, his attempts to speak and write Greek, his love for books and cigars... The list could go on forever, and I would like to thank him for making life in Chicago much more interesting. Mrinal was my other office-mate for several years, and along with Rachit, the core of Theory students at TTI. I thank them both for all the interesting conversations and all the Theory-related jokes that helped us survive the ML storm that we were caught in. Many thanks also to Somaye, Behnam, Bahador and Vikas that made adapting to the US reality much easier. 

And now come the Greeks. From the first days in Chicago, I felt that I had friends that I could trust if anything went wrong. I will start with Eleftheria, the very first person I met when I visited Chicago as a prospective student. And of course my high-schoolmate Panagiotis. Then came Dimitris (x2), Tasos, Katerina and a few years later Tony, Monika, Aristotelis, Aris, Panos and Dimitris (the new guy!); I will never forget all the fun and great moments we had together. And, of course, the great Nicholas, Pantelis and Alex, along with Elina, Iro and Maria. Valia was one of the kindest people I met in Chicago, and I was lucky to run into her the first time I visited Hyde Park! Valia also introduced me to George (epistimon), who ended up being one of my closest friends. I will never forget the never-ending discussions about mathematics, computer science, politics and all other serious issues of life that I had with George, as well as his chaotic sleep/work schedule, his love for the good old greek music, and his childlike enthusiasm towards all the things he liked. Finally, I would like to thank the friends outside of Chicago for their help, support, and for checking with me once in a while: Alex, Dimitris, Nikos, Kostas (x2), George, Andreas, Matoula and Thodoris, as well as the artists Areti and Alex. The list of people could go on and on, and I apologize to those I am forgetting.

Last but not least, I would like to thank my family; my parents George and Maria, and my sister Eirini, along with her husband George. Without their unconditional love and support, and their faith in me throughout all my life, none of this would have been possible. I am grateful to them for always being there, supporting me with no questions asked and no doubts about the choices I made. Finally, it is hard to find words of love and appreciation for Olina, who followed me patiently in this long journey and who supported and believed in me throughout the highs and (many) lows of it. I will never forget that.

\newpage
\thispagestyle{empty}
\mbox{}

\frontmatter
\tableofcontents
\newpage
\thispagestyle{empty}
\mbox{}
\newpage
\thispagestyle{empty}
\mbox{}
\listoffigures
\newpage
\thispagestyle{empty}
\mbox{}
\newpage
\listofalgorithms
\newpage
\thispagestyle{empty}
\mbox{}

\mainmatter

\chapter{Introduction}

Traditionally, the field of algorithm design has been concerned with worst-case analysis, requiring that algorithms work for every possible instance of a problem. This approach has proved very fruitful, leading to the development of an elegant theory of algorithm design and analysis. The focus on worst-case instances has also been the driving force behind the theory of \NP-completeness, a cornerstone of Computer Science. However, it has also created a significant barrier for the design of efficient (i.e.~polynomial-time) exact algorithms. Assuming $\P \neq \NP$, we know that we cannot have efficient algorithms that optimally solve \emph{every} instance of any of the so-called $\NP$-hard problems.

Ideally, the three main conditions that an algorithm should satisfy are the following: (i) it should work for every instance, i.e.~return a feasible solution for all inputs, (ii) it should always run in polynomial time, and (iii) it should return an optimal solution. \NP-hardness suggests that it is unlikely that an algorithm can satisfy all these properties for an \NP-hard problem. Thus, a natural thing to do is drop one of these conditions and aim to satisfy the remaining two. This gives rise to three predominant approaches towards handling \NP-hardness.

The first approach drops the optimality condition and suggests the design of approximation algorithms for \NP-hard problems. More precisely, one can relax the condition of optimality when designing algorithms for an \NP-hard problem and ask for algorithms that still work for every instance of the problem and return an ``approximately" good solution. The standard formalization of an $\alpha$-approximation algorithm, for some parameter $\alpha \geq 1$, is an efficient algorithm that, for a minimization problem whose optimal cost is $OPT$, returns a feasible solution whose cost is at most $\alpha \cdot OPT$. The definition for maximization problems is similar. Such approaches have led to the development of the theory of approximation algorithms, a very rich and mature field of theoretical Computer Science that has given several breakthrough results throughout the years.

The second approach drops the universality condition; it relaxes the severe restriction that the algorithm must work for every instance of a problem. In other words, instead of designing approximation algorithms that work for every instance of an \NP-hard problem, we design efficient algorithms that are optimal or near-optimal, but only work for a restricted subset of instances of an \NP-hard problem. One standard way of doing so is by looking at natural restricted classes of instances. For example, if we are dealing with an optimization problem defined on general graphs, we could first try to solve the problem on special classes of graphs such as trees, planar graphs, bounded-degree graphs, bounded-treewidth graphs, sparse/dense graphs etc. In many cases, such a restriction makes the problem much easier (e.g. Vertex Cover is easy on bipartite graphs and admits a PTAS on planar graphs~\cite{DBLP:journals/jacm/Baker94}), and also gives insights about where the difficulty of a problem stems from.

Besides these (mathematically) natural classes of instances, during the last few years, in an attempt to explain why certain heuristics seem to work in practice and why some \NP-hard problems seem to be solvable in real life, a lot of works have tried to describe classes of instances that correspond to \emph{average-case} instances or \emph{real-life} instances. This research direction is usually referred to as \emph{beyond worst-case analysis} and has gained much traction lately. There are two challenges immediately raised by such an approach. The first is to theoretically model and describe such instances (e.g.~what is a real-life instance), and the second is, given such a model of instances, to design algorithms that provably work better in this model. Many such models have been proposed, that can roughly be divided into two large classes.

\begin{enumerate}
    \item \textbf{Generative models}: In generative models, one describes a procedure that generates an average-case/real-life instance. Some examples are random instances, semi-random instances, planted random instances and others (e.g. see~\cite{DBLP:conf/focs/McSherry01, DBLP:conf/focs/DyerF86, DBLP:journals/rsa/AlonKS98, DBLP:journals/cpc/BollobasS04, DBLP:conf/focs/Boppana87, DBLP:conf/soda/Coja-Oghlan05, DBLP:conf/stoc/MakarychevMV12, DBLP:conf/stoc/MakarychevMV14}). In many cases, such approaches have led to the development of improved algorithms that work optimally or near-optimally in these models.
    \item \textbf{Descriptive models}: In descriptive models, one describes structural properties that real-life instances (seem to) satisfy. A prominent example of a family of problems for which many different descriptive models have been proposed are the various clustering problems that have been defined and proved to be \NP-hard. For example, many conditions such as approximation stability~\cite{DBLP:journals/jacm/BalcanBG13}, spectral proximity condition~\cite{DBLP:conf/focs/KumarK10} and others have been proposed for the $k$-means, $k$-median and other objectives, that aim to describe real-life ``meaningful" instances of clustering, and for which one can prove improved guarantees for various algorithms.
\end{enumerate}

Finally, the third approach allows for superpolynomial-time algorithms, such as quasi-polynomial-time algorithms, subexponential algorithms, fixed-parameter algorithms etc. In this thesis, we mainly utilize the first two approaches.

\paragraph{This Thesis.} In this thesis, we focus on the interplay between worst and beyond worst case analysis and how these two approaches have given rise to interesting problems and questions. The first part of the thesis follows the more traditional approach of designing approximation algorithms for $\NP$-hard problems, but the problems we are interested in are problems that are solved efficiently in real life and which have inspired interesting beyond worst-case notions. In particular, we study the Hub Labeling framework, a preprocessing technique aimed at speeding up shortest-path queries. Since its inception by Cohen et al.~\cite{DBLP:journals/siamcomp/CohenHKZ03} and Gavoille et al.~\cite{DBLP:journals/jal/GavoillePPR04}, the Hub Labeling framework has been very successful in practice, and is currently used in many state-of-the-art algorithms (see e.g.~\cite{DBLP:conf/wea/AbrahamDGW11}). In order to explain the success of these methods, Abraham et al.~\cite{DBLP:conf/soda/AbrahamFGW10} introduced the notion of highway dimension and claimed that road networks have small highway dimension. Moreover, they proved that small highway dimension implies the existence of efficient data structures (i.e.~hub labelings) that significantly improve the response time for both worst-case and average-case distance queries. In other words, they proposed a descriptive model in which one is able to prove good (absolute) upper bounds on the size of the data structures constructed. A natural and well-justified question to ask now is how easy it is to construct the optimal hub labelings, given that the already obtained absolute bounds demonstrate that the framework indeed works very well in practice. More formally, one can take a step back, apply the traditional theoretical methodology and ask whether computing the optimal hub labeling is $\NP$-hard, and if it is, what is the best approximation that we can get.

These questions had been posed even prior to this thesis. It is known that the most standard versions of Hub Labeling are indeed $\NP$-hard, and, moreover, there is an $O(\log n)$-approximation algorithm known for the problem. In this work, we prove strong lower bounds on the approximability of Hub Labeling, thus extending the $\NP$-hardness results to hardness of approximation results. Then, we make a structural assumption that is common in the literature, namely that in road networks shortest paths are unique. Although the problem remains \NP-hard even under such an assumption, we obtain improved approximation algorithms for graphs with unique shortest paths and shortest-path diameter $D$, as well as graphs that are trees; in particular, a structural result of ours implies that Hub Labeling on trees is equivalent to the problem of searching for a node in a tree, for which polynomial-time algorithms are known. To obtain these results, we use combinatorial techniques as well as convex relaxations and rounding techniques; to the best of our knowledge, linear/convex programming techniques had not been applied to the Hub Labeling problem prior to our work.

In the second part of the thesis, we apply the beyond worst-case methodology and study a descriptive model that was proposed in an influential paper of Bilu and Linial about a decade ago. More precisely, we are interested in the notion of stability introduced by Bilu and Linial~\cite{DBLP:journals/cpc/BiluL12} for graph partitioning and optimization problems, and its extension to clustering problems, defined by Awasthi, Blum and Sheffet~\cite{DBLP:journals/ipl/AwasthiBS12} under the term perturbation resilience. Informally, an instance is stable if there is a unique optimal solution that remains the unique optimal solution under small perturbations of the parameters of the input; the larger the perturbations that are allowed without affecting the optimal solution, the more stable the instance is. Having this definition as their starting point, Bilu and Linual explore how much stability is needed so as to be able to recover the unique optimal solution in polynomial time. Their test case was the Max Cut problem, where they gave the first upper bounds on the stability that allowed one to recover the optimal solution. Similarly, Awasthi, Blum and Sheffet studied the most common ``center-based" clustering objectives such as $k$-median, $k$-means and $k$-center and gave upper bounds on the stability that is needed in order to recover the unique optimal clustering under any such objective. Continuing the line of work inspired by these two papers, with a main focus on the work of Makarychev et al.~\cite{DBLP:conf/soda/MakarychevMV14} (that studied Max Cut and Multiway Cut), we use and extend the framework introduced in~\cite{DBLP:conf/soda/MakarychevMV14} and give improved algorithms (i.e.~algorithms that require smaller stability) for stable instances of Multiway Cut, using the CKR relaxation. Moreover, we give a tight analysis of the standard path-based LP relaxation of Edge/Node Multiway Cut, thus proving the first upper bounds for the more general Node Multiway Cut problem. Extending the notion of stability to covering problems, such as Vertex Cover and Set Cover, we give strong lower bounds on certain families of algorithms (in particular, robust algorithms, i.e.~algorithms that are not allowed to err, even if the instance is not stable) and also give several algorithmic results for stable instances of Vertex Cover. We note here that the presentation of the algorithms for Vertex Cover is in the context of Independent Set; since we are interested in exact solvability, it is easy to observe that the two problems are equivalent, and so our algorithms work for both problems. We conclude with some LP-based results for perturbation-resilient $k$-center and $k$-median, and with an analysis of the classic ``subtour-elimination" LP relaxation for stable instances of the symmetric Traveling Salesman problem.

\paragraph{Organization of material.} The thesis is organized into two parts. The first part studies the Hub Labeling problem. In Chapter~\ref{chapter:hl} we introduce the problem and present our results for graphs with unique shortest paths, as well as the hardness results for general graphs. Then, in Chapter~\ref{chapter:trees}, we focus on Hub Labeling on trees, and present several algorithms, culminating with the formalization of the equivalence between Hub Labeling and the problem of searching for a node in a tree, first observed and communicated to us by Gawrychowski et al.~\cite{GawKMW17}. We conclude the first part with some interesting open problems and directions (see Chapter~\ref{chapter:open-problems-hl}).

The second part of the thesis explores the beyond worst-case analysis framework introduced by Bilu and Linial. In Chapter~\ref{chapter:stability} we formally introduce and describe the model. In Chapter~\ref{chap:multiway-cut} we study the Multiway Cut problem and give improved algorithms for stable instances of the Edge Multiway Cut problem, as well as a tight analysis of the standard LP relaxation for stable instances of the Node Multiway Cut problem. We conclude with strong lower bounds on robust algorithms for stable instances of the Node Multiway Cut problem. In Chapter~\ref{chap:hardness} we provide strong lower bounds for robust algorithms for stable instances of covering problems such as Vertex Cover/Independent Set, Set Cover and others. In Chapter~\ref{chap:independent-set}, we study the Vertex Cover problem in its equivalent Independent Set formulation, and give algorithms for several special classes of instances, such as bounded-degree graphs, graphs with low chromatic number and planar graphs. In Chapter~\ref{chap:clustering}, we study the (equivalent) notion of perturbation resilience for clustering problems through the lens of linear programming. In particular, we give a robust LP-based algorithm for metric-perturbation-resilient instances of $k$-center, as well as lower bounds on the integrality of the standard LP relaxation for $k$-median on perturbation-resilient instances. Finally, in Chapter~\ref{chap:tsp}, we continue the exploration of the power of LP relaxations for stable instances and give a robust LP-based algorithm for stable instances of the symmetric Traveling Salesman problem. We conclude in Chapter~\ref{chapter:open-problems-stability} with some interesting open problems.

\paragraph{Notational and other conventions.} Throughout this thesis, we use the following conventions:
\begin{itemize}
    \item For any positive integer $n$, the notation $[n]$ denotes the set $\{1, ..., n\}$.
    \item Whenever not specified, $n$ denotes the number of vertices of a graph, $\Delta$ denotes the maximum degree of a graph, and $D$ denotes the shortest-path diameter of a graph, i.e.~the maximum number of vertices that appear in any shortest path.
    \item Regarding the bibliography, we always cite the journal version of a work, if any. Whenever we mention year of publication though, we write the year that the first conference version of a work appeared.
\end{itemize}

\part{Hub Labeling and related problems}\label{part:HL}

%%%%%%%%%%%%% HUB LABELING %%%%%%%%%%%%%%%%%
\chapter{The Hub Labeling problem}\label{chapter:hl}
\section{Introduction}

Computing shortest-path queries has become an essential part of many modern-day applications. A typical setting is a sparse input graph $G = (V,E)$ of millions of nodes (we denote $n = |V|$) with a length function $l: E \to \mathbb{R}_{>0}$, and a very large number of queries that need to be answered in (essentially) real time. Two classical approaches to such a problem are the following. One could precompute all pairwise distances and store them in an $n \times n$ matrix, and then respond to any distance query in constant time (and, by using the appropriate data structures of size $O(n^2)$, one can also recover the vertices of the shortest path in time linear in the number of vertices that the path contains). This approach, although optimal with respect to the query time, is potentially wasteful with respect to space. Moreover, in many applications quadratic space is simply prohibitive. A second approach then would be to simply store an efficient graph representation of the graph, which for sparse graphs would result in a representation of size $\widetilde{O}(n)$. In this second approach, whenever a query arrives, one can run Dijkstra's algorithm and retrieve the distance and the corresponding shortest path in linear time for undirected graphs with integer weights~\cite{DBLP:journals/jacm/Thorup99}, and, more generally, in time $O\left(|E|+ |V| \log |V| \right)$ for arbitrary weighted directed graphs~\cite{Fredman:1987:FHU:28869.28874}. Two obvious problems with this latter approach are that linear time is nowhere close to real time, and moreover, such an approach requires a network representation that is global in nature, and so it does not allow for a more distributed way of computing shortest-path queries. Thus, a natural question that arises is whether we can get a trade-off between space and query time complexity, and whether we can obtain data structures that inherently allow for distributed computations as well (the latter is a desirable property in many applications, when, ideally, one would not want a central coordination system).

Data structures that allow for responding to distance queries are usually called \emph{distance oracles}, and have been intensively studied in the last few decades, mainly focusing on the optimal trade-offs between space and query time, as well as exact/approximate recovery (e.g.~see~\cite{DBLP:journals/jacm/ThorupZ05, DBLP:journals/siamcomp/PatrascuR14, DBLP:conf/soda/KawarabayashiST13, DBLP:conf/focs/Cohen-AddadDW17, DBLP:conf/soda/GawrychowskiMWW18}). 

Here, we will mainly be interested in a slightly different approach based on vertex labelings, that allows for simple schemes that are easy to implement in a distributed setting. The starting point is the observation that in an explicit representation of (parts of) the adjacency matrix of a graph, the names of the vertices are simply place holders, not revealing any information about the structure of the graph. This motivates the search for more informative names (or labels) for each vertex, that would allow us to derive some information about the vertex.

The first such approach was introduced by Breuer and Folkman~\cite{DBLP:journals/tit/Breuer66, Breuer1967583}, and involves using more localized labeling schemes that allow one to infer the adjacency of two nodes directly from their labels, without using any additional information, while achieving sublinear space bounds. A classic work of Kannan, Naor and Rudich~\cite{DBLP:journals/siamdm/KannanNR92} further explored the feasibility of efficient adjacency labeling schemes for various families of graphs. Taking this line of research a step further, Graham and Pollak~\cite{Graham1972} were among the first to consider the problem of labeling the nodes of an unweighted graph such that the distance between two vertices can be computed using these two labels alone. They proposed to label each node with a word of $q_n$ symbols (where $n$ is the number of vertices of the graph) from the set $\{0, 1, *\}$, such that the distance between two nodes corresponds to the Hamming distance of the two words (the distance between $*$ and any symbol being zero). Referenced as the Squashed Cube Conjecture, Winkler~\cite{DBLP:journals/combinatorica/Winkler83} proved that $q_n \leq n - 1$ for every $n$ (note though that the scheme requires linear query time to decode the distance of a pair).

Moving towards the end of the 90s, Peleg~\cite{DBLP:journals/jgt/Peleg00} revisited the problem of existence of efficient labeling schemes of any kind that could answer shortest-path queries. The setting now is quite general, in that we are allowed as much preprocessing time as needed for the whole network, and the goal is to precompute labels for each vertex of the graph such that any shortest-path query between two vertices can be computed by looking only at the corresponding labels of the two vertices (and applying some efficiently computable ``decoding" function on them that actually computes the distance). If the labels are short enough on average, then the average query time can be sublinear. In~\cite{DBLP:journals/jgt/Peleg00}, Peleg did manage to give polylogarithmic upper bounds for the size of the labels needed to answer exact shortest-path queries for weighted trees and chordal graphs, and also gave some bounds for distance approximating schemes. Gavoille et al.~\cite{DBLP:journals/jal/GavoillePPR04} continued along similar lines and proved various upper and lower bounds for the label size for various classes of (undirected) graphs. They also modified the objective function and, besides getting bounds for the size of the largest label, they also obtained bounds for the average size of the labels. Shortly after that work, Cohen et al.~\cite{DBLP:journals/siamcomp/CohenHKZ03} presented their approach for the problem, proposing what is now known as the Hub Labeling framework for generating efficient labeling scheme for both undirected and directed weighted graphs.

\begin{definition}[Hub Labeling~\cite{DBLP:journals/siamcomp/CohenHKZ03, DBLP:journals/jal/GavoillePPR04}]
Consider an undirected graph $G=(V,E)$ with edge lengths $l(e) > 0$. Suppose that we are given a set system $\{H_u\}_{u\in V}$ with one set $H_u\subset V$ for every vertex $u$. We say that  $\{H_u\}_{u\in V}$ is a hub labeling if it satisfies the following covering property: for every pair of vertices $(u, v)$  ($u$ and $v$ are not necessarily distinct), there is a vertex in $H_u \cap H_v$ (a common ``hub'' for $u$ and $v$) that lies on a shortest path between $u$ and $v$. We call vertices in sets $H_u$ hubs: a vertex $v\in H_u$ is a hub for $u$.
\end{definition}
\noindent In the Hub Labeling problem (HL), our goal is to find a hub labeling with a small number of hubs; specifically, we want to minimize the $\ell_p$-cost of a hub labeling.
\begin{definition}
The $\ell_p$-cost of a hub labeling $\{H_u\}_{u\in V}$ equals $(\sum_{u \in V} |H_u|^p)^{1/p}$ for $p\in[1,\infty)$; the $\ell_\infty$-cost is $\max_{u \in V} |H_u|$. The hub labeling problem with the $\ell_p$-cost, which we denote by \HLp, asks to find a hub labeling with the minimum possible $\ell_p$-cost.
\end{definition}

We note here that, although our presentation will only involve undirected graphs, most of our results extend to the directed setting as well (see Section~\ref{sec:appendix_directed}). In the next few sections, we will study \HLp and design approximation algorithms for various classes of graphs, as well as show strong lower bounds for general graphs. But first, we will explain why we care about the Hub Labeling problem, and how it is related to the shortest-path problem.

Nowadays hundreds of millions of people worldwide use web mapping services and GPS devices to get driving directions. That creates a huge demand for fast algorithms for computing shortest paths (algorithms that are even faster than the classic Dijkstra's algorithm). Hub labelings provide a highly efficient way for computing shortest paths and is used in many state-of-the-art algorithms (see also the paper of Bast et al. \cite{DBLP:series/lncs/BastDGMPSWW16} for a review and discussion of various methods for computing shortest paths that are used in practice). 

We will now demonstrate the connection between the Hub Labeling and the problem of computing shortest paths. Consider a graph $G=(V,E)$ with edge lengths $l(e) > 0$. Let $d(u,v)$ be the shortest-path metric on $G$. Suppose that we have a hub labeling $\{H_u\}_{u \in V}$. During the preprocessing step, we compute and store the distance $d(u,w)$ between every vertex $u$ and each hub $w\in H_u$ of $u$. Observe that we can now quickly answer a distance query: to find $d(u,v)$ we compute $\min_{w\in H_u \cap H_v} \left(d(u,w) + d(v,w) \right)$. By the triangle inequality, $d(u,v) \leq \min_{w\in H_u \cap H_v} \left(d(u,w) + d(v,w) \right)$, and the covering property guarantees that there is a hub $w\in H_u \cap H_v$ on a shortest path between $u$ and $v$; so $d(u,v) = \min_{w\in H_u \cap H_v} \left(d(u,w) + d(v,w) \right)$. We can compute $\min_{w\in H_u \cap H_v} \left(d(u,w) + d(v,w) \right)$ and answer the query in time  $O(\max(|H_u|, |H_v|))$. We need to keep a lookup table of size $O(\sum_{u\in V} |H_u|)$ to store the distances between the vertices and their hubs. So, if, say, all hub sets $H_u$ are of polylogarithmic size, the algorithm answers a distance query in polylogarithmic time and requires $n \mathop{\mathrm{polylog}} n$ space. The outlined approach can be used not only for computing distances but also shortest paths between vertices. It is clear from this discussion that it is important to have a hub labeling of small size, since both the query time and storage space depend on the number of hubs.

Recently, there has been a lot of research on algorithms for computing shortest paths using the hub labeling framework (see e.g.~the following papers by Abraham et al.~\cite{DBLP:conf/soda/AbrahamFGW10, DBLP:conf/wea/AbrahamDGW11, DBLP:journals/jea/AbrahamDGW13, DBLP:conf/esa/AbrahamDGW12, DBLP:conf/icalp/AbrahamDFGW11, DBLP:conf/gis/AbrahamDFGW12}). It was noted that these algorithms perform really well in practice (see e.g.~\cite{DBLP:conf/wea/AbrahamDGW11}). A systematic attempt to explain why this is the case led to the introduction of the notion of \emph{highway dimension}~\cite{DBLP:conf/soda/AbrahamFGW10}. Highway dimension is an interesting concept that managed to explain, at least partially, the success of the above methods: it was proved that graphs with small highway dimension have hub labelings with a small number of hubs; moreover, there is evidence that most real-life road networks have low highway dimension~\cite{Bast06}. Even more recently, Kosowski and Viennot~\cite{DBLP:conf/soda/KosowskiV17}, inspired by the notion of highway dimension, introduced another related notion, the \emph{skeleton dimension}, that is a slightly more tractable and elegant notion that again explains, to some extent, why the hub labeling framework is successful for distance queries.

However, most papers on Hub Labeling offer only algorithms with absolute guarantees on the cost of the hub labeling they find (e.g.~they show that a graph with a given highway dimension has a hub labeling of a certain size and provide an algorithm that finds such a hub labeling); they do not relate the cost of the hub labeling to the cost of the optimal hub labeling. There are very few results on the approximability of the Hub Labeling problem. Only very recently, Babenko et al.~\cite {DBLP:conf/mfcs/BabenkoGKSW15} and White~\cite{DBLP:conf/esa/White15} proved respectively that \HL1 and \HL\infty are NP-hard. Cohen et al.~\cite{DBLP:journals/siamcomp/CohenHKZ03} gave an $O(\log n)$-approximation algorithm for \HL1 by reducing the problem to a Set Cover instance and using the greedy algorithm for Set Cover to solve the obtained instance (the latter step is non-trivial since the reduction gives a Set Cover instance of exponential size); later, Babenko et al.~\cite{DBLP:conf/icalp/BabenkoGGN13} gave a combinatorial $O(\log n)$-approximation algorithm for \HLp, for any $p \in [1, \infty]$.

\paragraph{Our results.} In this thesis, we will present the following results (most of which were published in 2017~\cite{DBLP:conf/soda/AngelidakisMO17}). We prove an $\Omega(\log{n})$ hardness for \HL1 and \HL\infty on graphs that have multiple shortest paths between some pairs of vertices (assuming that $\mathrm{P}\neq \mathrm{NP}$). The result (which easily extends to \HLp for $p = \Omega(\log n)$ on graphs with $n$ vertices) shows that the algorithms by Cohen et al.~and Babenko et al.~are optimal, up to constant factors. Since it is impossible to improve the approximation guarantee of $O(\log n)$ for arbitrary graphs, we focus on special families of graphs. We consider the family of graphs with \textit{unique shortest paths} --- graphs in which there is only one shortest path between every pair of vertices. This family of graphs appears in the majority of prior works on Hub Labeling (see e.g.\  \cite{DBLP:conf/icalp/AbrahamDFGW11, DBLP:conf/mfcs/BabenkoGKSW15, DBLP:conf/esa/AbrahamDGW12}) and is very natural, in our opinion, since in real life all edge lengths are somewhat random, and, therefore, any two paths between two vertices $u$ and $v$ have different lengths. For such graphs, we design an approximation algorithm with approximation guarantee $O(\log D)$, where $D$ is the shortest-path diameter of the graph (which equals the maximum hop length of a shortest path; see Section~\ref{Prelim-Def} for the definition); the algorithm works for every fixed $p\in [1,\infty)$ (the constant in the $O$-notation depends on $p$). In particular, this algorithm gives an $O(\log\log n)$  factor approximation for graphs of diameter $\mathop{\mathrm{polylog}} n$, while previously known algorithms give only an $O(\log n)$ approximation. Our algorithm crucially relies on the fact that the input graph has unique shortest paths; in fact, our lower bounds of $\Omega(\log n)$ on the approximation ratio apply to graphs of constant diameter (with non-unique shortest paths). We also extensively study HL on trees. Somewhat surprisingly, the problem is not at all trivial on trees. In particular, the standard LP relaxation for the problem is not integral. In~\cite{DBLP:conf/soda/AngelidakisMO17} we presented the following results for trees.
\begin{enumerate}
\item Design a polynomial-time approximation scheme (PTAS) for \HLp for every $p\in[1,\infty]$.
\item Design an exact quasi-polynomial time algorithm for \HLp for every $p\in[1,\infty]$, with running time $n^{O(\log^2 n)}$.
\item Analyze a simple combinatorial heuristic for trees, proposed by Peleg in 2000, and prove that it gives a 2-approximation for \HL1
(we also show that this heuristic does not work well for \HLp when $p$ is large).
\end{enumerate}

After the publication of our work~\cite{DBLP:conf/soda/AngelidakisMO17}, Gawrychowski et al.~\cite{GawKMW17} observed that an algorithm of Onak and Parys~\cite{DBLP:conf/focs/OnakP06}, combined with a structural result of ours, shows that \HL\infty can be solved exactly on trees in polynomial time. Their main observation is that the problem of computing an optimal hub labeling on trees can be cast as a problem of ``binary search" in trees; this implies that the algorithm of Onak and Parys~\cite{DBLP:conf/focs/OnakP06} solves \HL\infty optimally, and moreover, the work of Jacob et al.~\cite{DBLP:conf/icalp/JacobsCLM10} can be adapted in order to obtain a polynomial-time algorithm for \HLp on trees for fixed $p \geq 1$ and for $p \in[\varepsilon \log n, \infty]$ (for any fixed $\varepsilon > 0)$. Since we believe that our original DP approach might still be of interest, we will present it, and then we will formally state and analyze the algorithm of~\cite{DBLP:conf/icalp/JacobsCLM10} and how it can be used to solve HL on trees.

\paragraph{Organization of material.}
In Section~\ref{sec:logn-approx} we start with a simple rounding scheme that gives a relaxation-based $O(\log n)$-approximation algorithm for \HLp for every $p \in [1, \infty]$, thus matching the guarantees of the known combinatorial algorithms. In Section~\ref{sec:bounded} we present an $O(\log D)$ approximation algorithm for graphs with unique shortest paths; we first present the (slightly simpler) algorithm for \HL1, and then the algorithm for \HLp for any fixed $p \geq 1$. Then, in Section~\ref{Hardness}, we prove an $\Omega(\log n)$-hardness for \HL1 and \HL\infty by constructing a reduction from Set Cover. As mentioned, the result easily extends to \HLp for $p = \Omega(\log n)$ on graphs with $n$ vertices. Chapter~\ref{chapter:hl} concludes with a brief section that explains how our results extend to the case of directed graphs (see Section~\ref{sec:appendix_directed}). Finally, in Chapter~\ref{chapter:trees} we present several algorithms for HL on trees, and also discuss the equivalence of HL on trees with the problem of searching for a node in a tree.

\section{Preliminaries}\label{Prelim}
\subsection{Definitions}\label{Prelim-Def}
Throughout the rest of this chapter, we always assume (unless stated otherwise) that we have an undirected graph $G=(V,E)$ with positive edge lengths $l(e) > 0$, $e \in E$. We denote the number of vertices as $n = |V|$. We will say that a graph $G$ has unique shortest paths if there is a unique shortest path between every pair of vertices. We note that if the lengths of the edges are obtained by measurements, which are naturally affected by noise, the graph will satisfy the unique shortest path property with probability 1.

One parameter that our algorithms' performance will depend on is the shortest path diameter $D$ of a graph $G$, which is defined as the maximum hop length of a shortest path in $G$ (i.e.~the minimum number $D$ such that every shortest path contains at most $D$ edges). Note that $D$ is upper bounded by the aspect ratio $\rho$ of the graph:
\begin{equation*}
    D\leq \rho\equiv \frac{\max_{u,v\in V} d(u,v)}{\min_{(u,v)\in E} l(u,v)}.
\end{equation*}
Here, $d(u,v)$ is the shortest path distance in $G$ w.r.t.\ edge lengths $l(e)$. In particular, if all edges in $G$ have length at least $1$, then $D\leq \mathrm{diam}(G)$, where $\mathrm{diam}(G) = \max_{u,v\in V} d(u,v)$.

We will use the following observation about hub labelings: the covering property for the pair $(u,u)$ (technically) requires that $u\in H_u$, and from now on, we will always assume that $u \in H_u$, for every $u\in V$.

\subsection{Linear/Convex programming relaxations for HL}
In this section, we introduce a natural LP formulation for \HL1. Let $I$ be the set of all (unordered) pairs of vertices, including pairs $(u,u)$, which we also denote as $\{u,u\}$, $u \in V$. We use indicator variables $x_{uv}$, for all $(u,v) \in V \times V$, that represent whether $v \in H_u$ or not. Let $S_{uv} (\equiv S_{vu})$ be the set of all vertices that appear in any of the (possibly many) shortest paths between $u$ and $v$ (including the endpoints $u$ and $v$). We also define $S_{uu} = \{u\}$. Note that, although the number of shortest paths between $u$ and $v$ might, in general, be exponential in $n$, the set $S_{uv}$ can always be computed in polynomial time. In case there is a unique shortest path between $u$ and $v$, we use both $S_{uv}$ and $P_{uv}$ to denote the vertices of that unique shortest path. One way of expressing the covering property as a constraint is ``$\sum_{w \in S_{uv}} \min\{x_{uw}, x_{vw}\} \geq 1$, for all $\{u,v\} \in I$". The resulting LP relaxation is given in Figure~\ref{fig:lp1}.

\begin{figure}[ht]
\noindent($\mathbf{LP_1}$)
\begin{align*}
    \min: &\quad \sum_{u \in V} \sum_{v \in V} x_{uv}\\
    \text{s.t.:} &\quad \sum_{w \in S_{uv}} \min\{x_{uw}, x_{vw}\} \geq 1,  && \forall \{u, v\} \in I, \\
          &\quad x_{uv} \geq 0,  && \forall (u,v) \in V \times V.
\end{align*}
\caption{The LP relaxation for \HL1.}
\label{fig:lp1}
\end{figure}

We note that the constraint ``$\sum_{w \in S_{uv}} \min\{x_{uw}, x_{vw}\} \geq 1$" can be equivalently rewritten as follows: $\sum_{w \in S_{uv}} y_{uvw} \geq 1$, and for all $w \in S_{uv}$, $x_{uw} \geq y_{uvw}$ and $x_{vw}\geq y_{uvw}$, where we introduce variables $y_{uvw} \geq 0$ for every pair $\{u,v\} \in I$ and every $w \in S_{uv}$. Observe that these constraints are linear, and moreover, the total number of variables and constraints remains polynomial in $n$. Thus, an optimal solution can always be found efficiently.

One indication that the above LP is indeed an appropriate relaxation for HL is that we can reproduce the result of \cite{DBLP:journals/siamcomp/CohenHKZ03} and get an $O(\log n)$-approximation algorithm for \HL1 by using a very simple rounding scheme. But, we will use the above LP in more refined ways, mainly in conjunction with the notion of pre-hubs, which we introduce later on.

We also generalize the above LP to a convex relaxation for \HLp, for any $p \in [1, \infty]$. The only difference with the above relaxation is that we use a convex objective function and not a linear one. More concretely, the convex program for \HLp, for any $p \in [1, \infty)$ is given in Figure~\ref{fig:cpp}. In the case of $p = \infty$, we end up with an LP, whose objective is simply ``$\min: t$", and there are $n$ more constraints of the form ``$t \geq \sum_{v \in V} x_{uv}$", for each $u \in V$. To make our presentation more uniform, we will always refer to the convex relaxation of Figure~\ref{fig:cpp}, even when $p = \infty$.
\begin{figure}[ht]
\noindent($\mathbf{CP_p}$)
\begin{align*}
    \min: &\quad \left(\sum_{u \in V} \left(\sum_{v \in V} x_{uv}\right)^p \right)^{1/p} \\
    \text{s.t.:} &\quad \sum_{w \in S_{uv}} \min\{x_{uw}, x_{vw}\} \geq 1,  && \forall \{u, v\} \in I, \\
          &\quad x_{uv} \geq 0,  && \forall (u,v) \in V \times V.
\end{align*}
\caption{The convex relaxation for \HLp.}
\label{fig:cpp}
\end{figure}

%%%%%%%%%%%%%%%%%%%%%%%%%%%%%%%%%%%%%%%%%%%%%%%%%%%%%%%%%%%%%%%%%%%%%%%%%%%%%%%%%%%%%%%%%%%%%%%%%%
\subsection{Hierarchical hub labeling}\label{hhl-section}
We now define and discuss the notion of hierarchical hub labeling (HHL), introduced by Abraham et al.~\cite{DBLP:conf/esa/AbrahamDGW12}. The presentation in this section follows closely the one in~\cite{DBLP:conf/esa/AbrahamDGW12}.
\begin{definition}\label{def:hhl}
Consider a set system $\{H_u\}_{u\in V}$. We say that $v\preceq u$ if $v\in H_u$. Then, the set system $\{H_u\}_{u\in V}$ is a hierarchical hub labeling if it is a hub labeling, and $\preceq$ is a partial order.
\end{definition}

We will say that $v$ is higher ranked than $u$ if $v \preceq u$. Every two vertices $u$ and $v$ have a common hub $w\in H_u \cap H_v$, and thus there is a vertex $w$ such that $w\preceq u$ and $w\preceq v$. Therefore, there is the highest ranked vertex in $G$. 

We now define a special type of hierarchical hub labelings. Given a total order $\pi: [n] \to V$, a \textit{canonical} labeling is the hub labeling $H$ that is obtained as follows: $v \in H_u$ if and only if $\pi^{-1}(v) \leq \pi^{-1}(w)$ for all $w \in S_{uv}$. It is easy to see that a canonical labeling is a feasible hierarchical hub labeling. We say that a hierarchical hub labeling $H$ respects a total order $\pi$ if the implied (by $H$) partial order is consistent with $\pi$. Observe that there might be many different total orders that $H$ respects. In~\cite{DBLP:conf/esa/AbrahamDGW12}, it is proved that all total orders that $H$ respects have the same canonical labeling $H'$, and $H'$ is a subset of $H$. Therefore, $H'$ is a minimal hierarchical hub labeling that respects the partial order that $H$ implies.

From now on, all hierarchical hub labelings we consider will be canonical hub labelings. Any canonical hub labeling can be obtained by the following process~\cite{DBLP:conf/esa/AbrahamDGW12}. Start with empty sets $H_u$, choose a vertex $u_1$ and add it to each hub set $H_u$. Then, choose another vertex $u_2$. Consider all pairs $u$ and $v$ that currently do not have a common hub, such that $u_2$ lies on a shortest path between $u$ and $v$. Add $u_2$ to $H_u$ and $H_v$. Then, choose $u_3$, \dots, $u_n$, and perform the same step. We get a hierarchical hub labeling. (The hub labeling, of course, depends on the order in which we choose vertices of $G$.)

This procedure is particularly simple if the input graph is a tree. In a tree, we choose a vertex $u_1$ and add it to each hub set $H_u$. We remove $u_1$ from the tree and recursively process each connected component of $G-u_1$. No matter how we choose vertices $u_1,\dots, u_n$, we get a canonical hierarchical hub labeling; given a hierarchical hub labeling $H$, in order to get a canonical hub labeling $H'$, we need to choose the vertex $u_i$ of highest rank in $T'$ (w.r.t. to the order $\preceq$ defined by $H$) when our recursive procedure processes subinstance $T'$. A canonical hub labeling gives a recursive decomposition of the tree to subproblems of gradually smaller size.

%%%%%%%%%%%%%%%%%%%%%%%%%%%%%%%%%%%%%%%%%%%%%%%%%%%%%%%%%%%%%%%%%%%%%%%%%%%%%%%%%%%%
\section{Warm-up: a relaxation-based $O(\log n)$-approximation algorithm for \HLp}\label{sec:logn-approx}

In this section, we describe and analyze a simple rounding scheme (inspired by Set Cover) for the convex relaxation for \HLp (see Figure~\ref{fig:cpp}), that gives an $O(\log n)$-approximation for \HLp, for every $p \in [1, \infty]$, and works on all graphs (even with multiple shortest paths). This matches the approximation guarantee of the combinatorial algorithms of Cohen et al.~\cite{DBLP:journals/siamcomp/CohenHKZ03} and Babenko et al.~\cite{DBLP:conf/icalp/BabenkoGGN13}. For any graph $G = (V,E)$ with $n$ vertices, the rounding scheme is the following (see Algorithm~\ref{alg:logn-rounding}).
\begin{algorithm}[h]
\begin{enumerate}
    \item Solve $\mathbf{CP_p}$ and obtain an optimal solution $\{x_{uv}\}_{(u,v) \in V \times V}$.
    \item Pick independent uniformly random thresholds $r_w \in (0,1)$, for each $w \in V$, and set $t_w = \frac{r_w}{3 \cdot \ln n}$.
    \item Set $H_u = \{v \in V: x_{uv} \geq t_v\}$, for every $u \in V$.
    \item Return $\{H_u\}_{u \in V}$.
\end{enumerate}
\caption{A relaxation-based $O(\log n)$-approximation algorithm for \HLp on general graphs}
\label{alg:logn-rounding}
\end{algorithm}

\begin{theorem}
For every $p \in [1, \infty]$, Algorithm~\ref{alg:logn-rounding} is an $O(\log n)$-approximation algorithm for \HLp that succeeds with high probability.
\end{theorem}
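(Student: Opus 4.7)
The plan has two parts: prove that the returned collection $\{H_u\}$ satisfies the covering property with high probability (feasibility), and prove that its $\ell_p$-cost is within an $O(\log n)$ factor of the optimum of $\mathbf{CP_p}$ with high probability. Since $\mathbf{CP_p}$ is a relaxation, this yields the claimed approximation guarantee.

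For feasibility, fix a pair $\{u,v\}\in I$ and write $a_w=\min\{x_{uw},x_{vw}\}$ for $w\in S_{uv}$. By construction, $w\in H_u\cap H_v$ holds iff $x_{uw}\geq t_w$ and $x_{vw}\geq t_w$, i.e.\ iff $r_w\leq 3\ln n\cdot a_w$. Since the thresholds $r_w$ are mutually independent, the probability that no $w\in S_{uv}$ covers $(u,v)$ equals
\[
\prod_{w\in S_{uv}}\bigl(1-\min(1,3\ln n\cdot a_w)\bigr)\ \leq\ \exp\Bigl(-3\ln n\sum_{w\in S_{uv}}a_w\Bigr)\ \leq\ n^{-3},
\]
where the last inequality uses the LP constraint $\sum_{w\in S_{uv}}a_w\geq 1$. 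A union bound over the $O(n^2)$ pairs in $I$ shows that $\{H_u\}$ is a valid hub labeling with probability at least $1-1/n$.

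For the cost, fix $u$ and let $S_u=\sum_{v}x_{uv}$. The indicators $\mathbf{1}[v\in H_u]=\mathbf{1}[r_v\leq 3\ln n\cdot x_{uv}]$ are independent across $v$ (for fixed $u$), and
\[
\mu_u\ :=\ \E[|H_u|]\ =\ \sum_v\min(1,3\ln n\cdot x_{uv})\ \leq\ 3\ln n\cdot S_u.
\]
Observe that the constraint for the pair $\{u,u\}$ (with $S_{uu}=\{u\}$) forces $x_{uu}\geq 1$, so $S_u\geq 1$ and hence $\mu_u=\Omega(\log n)$ is available as a useful lower bound. A standard Chernoff bound for sums of independent $\{0,1\}$-variables gives, for an appropriate absolute constant $C$,
\[
\Prob\bigl[|H_u|>C(\mu_u+\ln n)\bigr]\ \leq\ n^{-3}.
\]
Using $\mu_u\leq 3\ln n\cdot S_u$ and $S_u\geq 1$, on this event we have $|H_u|=O(\log n\cdot S_u)$. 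A union bound over the $n$ vertices yields simultaneously $|H_u|\leq K\log n\cdot S_u$ for all $u$, with probability $1-1/n^2$, for some absolute constant $K$.

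Conditioned on both good events above, the $\ell_p$-cost of the returned labeling satisfies
\[
\Bigl(\sum_u|H_u|^p\Bigr)^{1/p}\ \leq\ K\log n\cdot\Bigl(\sum_u S_u^p\Bigr)^{1/p}\ =\ O(\log n)\cdot\mathrm{OPT}(\mathbf{CP_p})\ \leq\ O(\log n)\cdot\mathrm{OPT},
\]
with the analogous statement in $\ell_\infty$. The main conceptual step is the Chernoff concentration in the second part: a union bound on expectations alone is insufficient for $p>1$ since the relevant quantity is a $p$-norm rather than an average, so one genuinely needs a pointwise bound $|H_u|=O(\log n\cdot S_u)$ for every $u$. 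The key observation that makes this cheap is $x_{uu}\geq 1$, which absorbs the additive $O(\log n)$ slack produced by Chernoff into the multiplicative $O(\log n\cdot S_u)$ bound.
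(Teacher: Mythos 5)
Your proof is correct and follows essentially the same route as the paper's: feasibility via the independence of the thresholds $r_w$ together with the LP covering constraint $\sum_{w\in S_{uv}}\min\{x_{uw},x_{vw}\}\geq 1$, and the cost bound via a per-vertex Chernoff bound whose additive slack is absorbed using $x_{uu}\geq 1$ (the paper phrases this as $\E[|H_u|]\geq 3\ln n$ and applies the multiplicative Chernoff form with $\delta=2$, but the substance is identical). No gaps.
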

\begin{proof}
First, it is easy to see that for each $u \in V$, we can write $|H_u| = \sum_{v \in V} Y_{uv}$, where $Y_{uv} = 1$ if $x_{uv} \geq t_v$, and 0 otherwise. We have $\E[Y_{uv}] = \Pr[v \in H_u] = 3 \ln n \cdot x_{uv}$, and so, by linearity of expectation, we get $\E[|H_u|] = \sum_{v \in V} \E[Y_{uv}] = 3\ln n \cdot \sum_{v \in V} x_{uv}$. We now observe that for each $u \in V$, the variables $\{Y_{uv}\}_{v \in V}$ are independent. Thus, we can use the standard Chernoff bound, which, for any $\delta > 0$, gives
\begin{equation*}
    \Pr \left[|H_u| \geq (1 + \delta) \cdot  \E[|H_u|] \right] \leq \left(\frac{e^\delta}{(1 + \delta)^{1 + \delta}}\right)^{\E[|H_u|]}.
\end{equation*}
We set $\delta = 2$ and get $\Pr \left[|H_u| \geq 3 \cdot  \E[|H_u|] \right] \leq e^{- \E[|H_u|]} \leq 1 / n^3$ (where the last inequality holds since $x_{uu} = 1$ and thus $\sum_{v \in V} x_{uv} \geq 1$). Taking a union bound, we get that with probability at least $1 - 1/n^2$, for all $u \in V$, $|H_u| \leq 9 \ln n \cdot \sum_{v \in V} x_{uv}$. We conclude that with probability at least $1 - 1/n^2$,
\begin{equation*}
    \left(\sum_{u \in V} |H_u|^p \right)^{1/p} \leq \left( (9 \ln n)^p  \sum_{u \in V} \left( \sum_{v \in V} x_{uv} \right)^p \right)^{1/p} = 9 \ln n \cdot OPT_{CP},
\end{equation*}
where $OPT_{CP} = \left(\sum_{u \in V} \left(\sum_{v \in V} x_{uv} \right)^p \right)^{1/p}$ is the optimal value of the convex program.

We will now prove that the sets $\{H_u\}_{u \in V}$ are indeed a feasible hub labeling with high probability. It is easy to verify that we always get $u \in H_u$. So, let $u \neq v$. We have
\begin{align*}
    \Pr[H_u \cap H_v \cap S_{uv} = \emptyset] &= \prod_{w \in S_{uv}} \Pr[t_w > \min\{x_{uw}, x_{vw}\}] = \prod_{w \in S_{uv}} \left(1 - 3\ln n \cdot \min\{x_{uw}, x_{vw}\} \right)\\
                                 &\leq \prod_{w \in S_{uv}} e^{-3\ln n \cdot \min\{x_{uw}, x_{vw}\}} = e^{-3\ln n \cdot \sum_{w \in S_{uv}} \min\{x_{uw}, x_{vw}\}}\\
                                 &\leq e^{-3 \ln n} = 1 / n^3.
\end{align*}
Taking a union bound over all $\binom{n}{2}$ pairs of vertices, we get that the probability that the algorithm does not return a feasible hub labeling is at most $1/n$.  Thus, we conclude that the algorithm returns a feasible solution of value at most $9 \ln n \cdot OPT_{CP}$ with probability at least $1 - 2/n$.
\end{proof}

%%%%%%%%%%%%%%%%%%%%%%%%%%%%%%%%%%%%%%%%%%%%%%%%%%
\section{Pre-hub labeling}
We now introduce the notion of a pre-hub labeling that we will use in designing algorithms for HL. From now on, we will only consider graphs with unique shortest paths.
\begin{definition}[Pre-hub labeling]\label{def:prehub}
Consider a graph $G = (V,E)$ and a length function $l: E \to \mathbb{R}^+$; assume that all shortest paths are unique. A family of sets $\{\widehat{H}_u\}_{u \in V}$, with $\widehat{H}_u \subseteq V$, is called a pre-hub labeling, if for every pair $\{u,v\}$, there exist $u' \in \widehat{H}_u \cap P_{uv}$ and $v' \in \widehat{H}_v \cap P_{uv}$ such that $u' \in P_{v'v}$; that is, vertices $u$, $v$, $u'$, and $v'$ appear in the following order along $P_{uv}$:
$u, v', u', v$ (possibly, some of the adjacent, with respect to this order, vertices coincide).
\end{definition}

\begin{figure}[h]
\begin{center}
\scalebox{1}{\input{./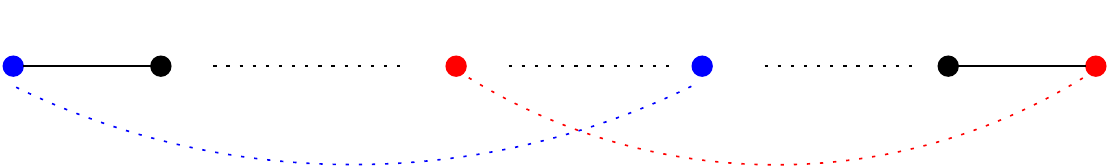_t}}
\caption{The shortest path between $u$ and $v$ and a valid pre-hub labeling for the pair $\{u, v\}$.}
\end{center}
\end{figure}

Observe that any feasible HL is a valid pre-hub labeling. We now show how to find a pre-hub labeling given a feasible LP solution.
\begin{lemma}\label{lem:prehubs}
Consider a graph $G = (V,E)$ and a length function $l: E \to \mathbb{R}^+$; assume that all shortest paths are unique. Let $\{x_{uv}\}_{(u,v) \in V \times V}$ be a feasible solution to $\mathbf{LP_1}$ (see Figure~\ref{fig:lp1}). Then, there exists a pre-hub labeling $\{\widehat{H}_u\}_{u\in V}$ such that $|\widehat{H}_u| \leq 2\sum_{v\in V} x_{uv}$. In particular, if $\{x_{uv}\}$ is an optimal LP solution and $OPT$ is the $\ell_1$-cost of the optimal hub labeling (for \HL1), then $\sum_{u\in V}|\widehat{H}_u| \leq 2 \, OPT$. Furthermore, the pre-hub labeling $\{\widehat{H}_u\}_{u\in V}$ can be constructed efficiently given the LP solution $\{x_{uv}\}$.
\end{lemma}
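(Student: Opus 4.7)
The plan is to construct $\widehat{H}_u$ by selecting, for each pair $\{u,v\}$ with $u\neq v$, a ``split'' vertex $s_{uv}=s_{vu}\in P_{uv}$ and placing it in both $\widehat{H}_u$ and $\widehat{H}_v$ (putting $u\in \widehat{H}_u$ by default). The pre-hub property will be trivial, and we use freedom in choosing the split to meet the size bound.

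\textbf{Step 1: A wide interval of valid splits.} Fix $\{u,v\}$ and write $P_{uv}=w_0,w_1,\ldots,w_k$ with $w_0=u$, $w_k=v$. Let $g(i)=\sum_{j\le i} x_{uw_j}$ and $h(i)=\sum_{j\ge i} x_{vw_j}$; these are monotone (nondecreasing and nonincreasing, respectively), and the LP constraints for the singleton pairs $\{u,u\}$ and $\{v,v\}$ force $x_{uu}\ge 1$ and $x_{vv}\ge 1$, so $g(0),h(k)\ge 1$. Let $i_2=\min\{i: g(i)\ge 1/2\}$ and $i_1=\max\{i: h(i)\ge 1/2\}$. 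I would first show $i_2\le i_1$ by contradiction: if $i_2>i_1$, then $g(i_1)<1/2$ and $h(i_1+1)<1/2$, yet the LP constraint for $\{u,v\}$ gives
\[
1\le\sum_{j=0}^k \min(x_{uw_j},x_{vw_j})\le \sum_{j\le i_1} x_{uw_j} + \sum_{j\ge i_1+1} x_{vw_j} = g(i_1)+h(i_1+1)<1,
\]
a contradiction. Consequently every $w_{i^\ast}$ with $i_2\le i^\ast\le i_1$ is a valid split, satisfying $\sum_{w\in P_{u,w_{i^\ast}}} x_{uw}\ge 1/2$ and $\sum_{w\in P_{w_{i^\ast},v}} x_{vw}\ge 1/2$.

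\textbf{Step 2: Pre-hub.} Choose any symmetric $s_{uv}=s_{vu}$ in the valid interval and set $\widehat{H}_u=\{u\}\cup\{s_{uv}:v\neq u\}$. Then $s_{uv}\in \widehat{H}_u\cap \widehat{H}_v\cap P_{uv}$, so taking $u'=v'=s_{uv}$ trivially realizes the required ordering $u,v',u',v$ on $P_{uv}$.

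\textbf{Step 3: Size bound (the main obstacle).} We need $|\widehat{H}_u|\le 2\sum_v x_{uv}$ for every $u$ simultaneously. Note first that $u$ itself is always in the valid interval (since $g(0)\ge 1$ forces $i_2=0$), and so is $v$; hence $s_{uv}\in\{u,v\}$ is always a candidate. The plan is to choose each split from the ``Markov set'' $\{w: x_{uw}\ge 1/2\}$ of the side that has room: if $s_{uv}$ lies in the Markov set of $u$, it contributes no extra cost to $\widehat{H}_u$, and by Markov's inequality the total Markov set has size $\le 2\sum_v x_{uv}$. This reduces the size bound to an \emph{orientation problem} --- for each ``hard'' pair (one whose Markov sets do not already witness pre-hub), orient it toward $u$ or $v$, charging one unit to the chosen side, subject to each vertex's remaining budget. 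The hardest part is proving such an orientation exists consistently; I would do this by a Hall/flow argument that extracts the orientation from the LP solution itself (using the constraints to guarantee enough ``slack'' on each side). Since the intervals $[i_2(u,v),i_1(u,v)]$ and the Markov sets are computable in polynomial time, and the orientation step reduces to a polynomial-sized flow/LP, the entire construction runs in polynomial time.
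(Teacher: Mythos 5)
Your construction does not establish the lemma: the critical Step~3 is left as a sketch, and the obstacle there is not a technicality but the whole difficulty of the problem. By insisting on a \emph{single common} split vertex $s_{uv}\in\widehat H_u\cap\widehat H_v\cap P_{uv}$ for every pair, you are no longer building a pre-hub labeling but an actual hub labeling; if you could also guarantee $|\widehat H_u|\le 2\sum_v x_{uv}$, you would have rounded $\mathbf{LP_1}$ into a feasible hub labeling of cost at most $2\,OPT_{LP}$, i.e.\ a $2$-approximation for \HL1 on graphs with unique shortest paths --- far stronger than anything the paper achieves (its algorithm only gets $O(\log D)$, and it needs the pre-hubs \emph{plus} a random-permutation step to produce actual hubs). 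The Hall/flow ``orientation'' argument you defer to is therefore not expected to exist in the generality you need: a split must simultaneously lie on $P_{uv}$ and in a Markov set of one endpoint, and nothing in the LP guarantees these sets intersect, nor that the hard pairs can be charged within each vertex's budget. A further sign that the setup gives no leverage is that your Step~1 is vacuous as written: since $x_{uu}\ge 1$ and $x_{vv}\ge 1$, you have $g(i)\ge g(0)\ge 1$ and $h(i)\ge h(k)\ge 1$ for all $i$, so $i_2=0$, $i_1=k$, and \emph{every} vertex of $P_{uv}$ is a ``valid split''; the covering constraint $\sum_w\min\{x_{uw},x_{vw}\}\ge 1$ is never actually invoked.

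The paper's proof sidesteps all of this by exploiting the asymmetry permitted by the pre-hub definition ($u'$ and $v'$ need not coincide, only appear in the order $u,v',u',v$). It builds each $\widehat H_u$ \emph{independently of all other vertices}, using only the variables $x_{u\cdot}$: it processes the shortest-path tree $T_u$ bottom-up and adds a vertex $v$ to $\widehat H_u$ whenever the (residual) LP mass $\sum_{w\in T'_{uv}}x_{uw}$ in the subtree rooted at $v$ reaches $1/2$, then zeroes out that mass. Each addition consumes at least $1/2$ of $\sum_v x_{uv}$, which gives $|\widehat H_u|\le 2\sum_v x_{uv}$ immediately --- no global orientation or matching is needed. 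The pre-hub property is then derived from the covering constraint: if the closest pre-hubs $u'\in\widehat H_u$ and $v'\in\widehat H_v$ on $P_{uv}$ were in the wrong order, the two untouched segments would each carry less than $1/2$ of mass, so $\sum_{w\in P_{uv}}\min\{x_{uw},x_{vw}\}<1$, contradicting feasibility. To repair your proof you would need to abandon the common-split requirement and adopt some per-vertex charging of this kind.
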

\begin{proof}
Let us fix a vertex $u \in V$. We build the breadth-first search tree $T_u$ (w.r.t.~edge lengths; i.e.~the shortest path tree)  from $u$; tree $T_u$ is rooted at $u$ and contains those edges $e \in E$ that appear on a shortest path between $u$ and some vertex $v \in V$. Observe that $T_u$ is indeed a tree and is uniquely defined, since we have assumed that shortest paths in $G$ are unique. For every vertex $v$, let $T'_{uv}$ be the subtree of $T_u$ rooted at vertex $v$. Given a feasible LP solution $\{x_{uv}\}$, we define the weight of $T'_{uv}$ to be $\mathcal{W}(T'_{uv}) = \sum_{w \in T'_{uv}} x_{uw}$.

We now use the following procedure to construct set $\widehat{H}_{u}$. We process the tree $T_u$ bottom up (i.e.\ we process a vertex $v$ after we have processed all other vertices in the subtree rooted at $v$), and whenever we detect a subtree $T'_{uv}$ of $T_u$ such that $\mathcal{W}(T'_{uv}) \geq 1/2$, we add vertex $v$ to the set $\widehat{H}_u$. We then set $x_{uw} = 0$ for all $w \in T'_{uv}$, and continue (with the updated $x_{uw}$ values) until we reach the root $u$ of $T_u$. Observe that every time we add one vertex to $\widehat{H}_u$, we decrease the value of $\sum_{v \in V} x_{uv}$ by at least $1/2$. Therefore, $|\widehat{H}_u| \leq 2 \cdot \sum_{v \in V} x_{uv}$. We will now show that sets $\{\widehat{H}_u\}_{u \in V}$ form a pre-hub labeling. To this end, we prove the following two claims.
\begin{claim}\label{claim:prehub-1}
Consider a vertex $u$ and two vertices $v_1, v_2$ such that $v_1\in P_{uv_2}$. If $\widehat{H}_u \cap P_{v_1 v_2} = \emptyset$, then
$\sum_{w\in P_{v_1v_2}} x_{uw} < 1/2$.
\end{claim}
\begin{proof}
Consider the execution of the algorithm that defined $\widehat{H}_u$. Consider the moment $M$ when we processed vertex $v_1$. Since we did not add $v_1$ to $\widehat{H}_u$, we had $\mathcal{W}(T'_{uv_1}) < 1/2$. In particular, since $P_{v_1v_2}$ lies in $T'_{uv_1}$, we have $\sum_{w\in P_{v_1v_2}} x_{uw}'<1/2$, where $x_{uw}'$ is the value of $x_{uw}$ at the moment $M$. Since none of the vertices on the path $P_{v_1v_2}$ were added to $\widehat{H}_u$, none of the variables $x_{uw}$ for $w\in P_{v_1v_2}$ had been set to $0$. Therefore, $x_{uw}' = x_{uw}$ (where $x_{uw}$ is the initial value of the variable) for $w\in P_{v_1v_2}$. We conclude that $\sum_{w\in P_{v_1v_2}} x_{uw}<1/2$, as required.
\end{proof}

\begin{claim}\label{claim:prehub-2}
%\label{prehub_lemma}
For any pair $\{u,v\}$, let $u' \in \widehat{H}_u \cap P_{uv}$ be the vertex closest to $v$ among all vertices in $\widehat{H}_u \cap P_{uv}$ and $v' \in \widehat{H}_v \cap P_{uv}$ be the vertex closest to $u$ among all vertices in $\widehat{H}_v \cap P_{uv}$. Then $u' \in P_{v'v}$. (Note that $\widehat{H}_u \cap P_{uv}\neq \emptyset$, since we always have $x_{uu} = 1$ and hence $u\in \widehat{H}_u \cap P_{uv}$; similarly, $\widehat{H}_v \cap P_{uv}\neq \emptyset$.)
\end{claim}
\begin{proof}
Let us assume that this is not the case; that is, $u' \notin P_{v'v}$. Then $v'\neq u$ and $u'\neq v$ (otherwise, we would trivially have $u' \in P_{v'v}$). Let $u''$ be the first vertex after $u'$ on the path $P_{u'v}$, and $v''$ be the first vertex after $v'$ on the path $P_{v'u}$. Since $u' \notin P_{v'v}$, every vertex of $P_{uv}$ lies either on $P_{v''u}$ or $P_{u''v}$, or both (i.e. $P_{v''u} \cup P_{u''v} = P_{uv}$).

By our choice of $u'$, there are no pre-hubs for $u$ on $P_{u''v}$. By Claim~\ref{claim:prehub-1}, $\sum_{w\in P_{u''v}} x_{uw} <1/2$. Similarly, $\sum_{w\in P_{v''u}} x_{vw} <1/2$. Thus,
\begin{equation*}
    1 > \sum_{w \in P_{uv''}} x_{vw} + \sum_{w \in P_{u''v}} x_{uw} \geq \sum_{w \in P_{uv}} \min\{x_{uw}, x_{vw}\}.
\end{equation*}
We get a contradiction since $\{x_{uv}\}$ is a feasible LP solution.
\end{proof}
Claim~\ref{claim:prehub-2} shows that $\{\widehat{H}_u\}$ is a valid pre-hub labeling.
\end{proof}

\section{Hub labeling on graphs with unique shortest paths}\label{sec:bounded}
In this section, we present an $O(\log D)$-approximation algorithm for \HLp on graphs with unique shortest paths, where $D$ is the shortest path diameter of the graph. The algorithm works for every fixed $p \geq 1$ (the hidden constant in the approximation factor $O(\log D)$  depends on $p$). We will first present the (slightly simpler) algorithm for \HL1, and then extend the algorithm and make it work for \HLp, for arbitrary fixed $p \geq 1$.

\subsection{An $O(\log D)$-approximation algorithm for \HL1}

Consider Algorithm~\ref{Pre-Hubs_Algorithm}. The algorithm solves the LP relaxation (see Figure~\ref{fig:lp1}) and computes a pre-hub labeling $\{\widehat{H}_u\}_{u \in V}$ as described in Lemma~\ref{lem:prehubs}. Then it chooses a random permutation $\pi$ of $V$ and goes over all vertices one-by-one in the order specified by $\pi$: $\pi_1$, $\pi_2$,\dots, $\pi_n$. It adds $\pi_i$ to $H_u$ if there is a pre-hub $u'\in \widehat{H}_u$ such that the following conditions hold: $\pi_i$ lies on the path $P_{uu'}$, there are no pre-hubs for $u$ between $\pi_i$ and $u'$ (other than $u'$), and currently there are no hubs for $u$  between  $\pi_i$ and $u'$.

\begin{algorithm}[h]
\begin{enumerate}
    \item Solve $\mathbf{LP_1}$ and get an optimal solution $\{x_{uv}\}_{(u,v) \in V \times V}$.
    \item Obtain a set of pre-hubs $\{\widehat{H}_u\}_{u \in V}$ from $x$ as described in Lemma~\ref{lem:prehubs}.
    \item Generate a random permutation $\pi : [n] \to V$ of the vertices.
    \item Set $H_u = \emptyset$, for every $u \in V$.
    \item \textbf{for} $i= 1$ \textbf{to} $n$ \textbf{do}:\\
           \hspace*{16pt}  \textbf{for} every $u \in V$ \textbf{do}:\\
           \hspace*{36pt}     \textbf{for} every $u' \in \widehat{H}_u$ such that $\pi_i \in P_{uu'}$ and $P_{\pi_i u'} \cap \widehat{H}_u = \{u'\}$ \textbf{do}:\\
           \hspace*{56pt}         \textbf{if} $P_{\pi_i u'} \cap H_u = \emptyset$ \textbf{then} $H_u := H_u \cup \{\pi_i\}$.
    \item Return $\{H_u\}_{u \in V}$.
\end{enumerate}
\caption{An $O(\log D)$-approximation algorithm for \HL1 on graphs with unique shortest paths}
\label{Pre-Hubs_Algorithm}
\end{algorithm}

\begin{theorem}
Algorithm~\ref{Pre-Hubs_Algorithm} always returns a feasible hub labeling $H$. The cost of the hub labeling  is $\E[\sum_u |H_u|] = O(\log D) \cdot OPT_{LP_1}$ in expectation, where $OPT_{LP_1}$ is the optimal value of $\mathbf{LP_1}$.
\end{theorem}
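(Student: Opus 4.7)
The plan is to prove feasibility and bound $\E[\sum_u |H_u|]$ separately. For feasibility, fix an arbitrary pair $\{u,v\}$. By Claim~\ref{claim:prehub-2}, there exist $u' \in \widehat{H}_u$ and $v' \in \widehat{H}_v$ lying on $P_{uv}$ in the order $u, v', u', v$. Let $w$ be the vertex of smallest $\pi$-rank on the subpath $P_{v'u'}$. I claim $w \in H_u \cap H_v$, which suffices because $w \in P_{uv}$. To see $w \in H_u$, at the step when $\pi_i = w$ I would take $u''$ to be the first pre-hub of $u$ encountered when walking from $w$ toward $v$ along $P_{uv}$ (setting $u'' = w$ if $w \in \widehat{H}_u$); such a $u''$ exists because $u' \in \widehat{H}_u \cap P_{wv}$, and by construction $w \in P_{uu''}$ and $P_{wu''} \cap \widehat{H}_u = \{u''\}$. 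Since $P_{wu''} \subseteq P_{v'u'}$ and $\pi^{-1}(w)$ is minimum on $P_{v'u'}$, no vertex of $P_{wu''} \setminus \{w\}$ has been processed before step $i$, so $P_{wu''} \cap H_u = \emptyset$ at that moment and $w$ is added. Swapping the roles of $u,v$ and of $u',v'$ yields $w \in H_v$ by the same argument.

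For the expected cost, fix $u$ and, for each $u' \in \widehat{H}_u$, let $A_{u,u'}$ denote the set consisting of $u'$ together with every vertex of the shortest-path tree $T_u$ lying strictly between $u'$ and the pre-hub of $u$ immediately preceding $u'$ on the $u$-to-$u'$ path (with $A_{u,u}=\{u\}$; note $u \in \widehat{H}_u$ always). Whenever $w$ is added to $H_u$ via a pre-hub $u'''$, the algorithm's conditions force $w \in A_{u,u'''}$; hence $|H_u| \le \sum_{u' \in \widehat{H}_u} |\{w \in A_{u,u'} : w \text{ is added via } u'\}|$. Enumerating $A_{u,u'} = \{w_1, \dots, w_k\}$ with $w_k = u'$ and $w_1$ farthest from $u'$, the plan is to prove the crisp combinatorial characterization: $w_j$ is added via $u'$ if and only if $\pi^{-1}(w_j) < \pi^{-1}(w_\ell)$ for every $\ell > j$. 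Given this, linearity of expectation over a uniformly random $\pi$ gives expected contribution $\sum_{j=1}^{k} \tfrac{1}{k-j+1} = H_k$ from $A_{u,u'}$; since $A_{u,u'}$ is contained in a single shortest path of $G$, we have $k \le D$ and $H_k = O(\log D)$. Summing over $u' \in \widehat{H}_u$, then over $u$, and applying Lemma~\ref{lem:prehubs} in the form $\sum_u |\widehat{H}_u| \le 2\, OPT_{LP_1}$, produces $\E[\sum_u |H_u|] = O(\log D) \cdot OPT_{LP_1}$.

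The main obstacle is the characterization inside each $A_{u,u'}$. The direction ``$w_j$ has minimum rank on $\{w_j, \dots, w_k\}$ $\Rightarrow$ $w_j$ is added via $u'$'' is immediate, since $P_{w_j u'} \subseteq \{w_j,\dots,w_k\}$ and none of $w_{j+1}, \dots, w_k$ has yet been processed. The converse needs the following observation: if some $w_\ell$ with $\ell > j$ has $\pi^{-1}(w_\ell) < \pi^{-1}(w_j)$, then the \emph{smallest}-rank such $w_\ell$ is itself a right-to-left prefix minimum inside $\{w_\ell, \dots, w_k\}$, hence by the easy direction it is added to $H_u$ via $u'$ before $w_j$ is processed; at the moment $w_j$ is processed we then have $w_\ell \in P_{w_j u'} \cap H_u$, which blocks $w_j$ from being added through $u'$. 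One should also note that attributing each added $w$ to some specific $u'''$ with $w \in A_{u,u'''}$ (possibly overcounting when $w$ is simultaneously valid for several pre-hubs) only inflates the bound, so the sum-over-pre-hubs estimate remains a legitimate upper bound on $|H_u|$.
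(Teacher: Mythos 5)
Your proof is correct and follows essentially the same route as the paper's: feasibility via the minimum-rank (first-processed) vertex on the segment between the pre-hubs guaranteed by Claim~\ref{claim:prehub-2}, and the cost bound via charging each added hub to a pre-hub $u'$ whose associated segment $A_{u,u'}$ (the paper's $P_{u''u'}\setminus\{u''\}$) receives in expectation $H_{|A_{u,u'}|}\leq \log D + O(1)$ charges by the suffix-minimum/record argument, combined with $|\widehat{H}_u|\leq 2\sum_v x_{uv}$ from Lemma~\ref{lem:prehubs}. The only cosmetic difference is that you locate the blocking pre-hub $u''$ on the fly rather than re-choosing $u',v'$ to be the closest admissible pair as the paper does, and you spell out the iff-characterization and the multi-attribution overcount more explicitly; both of these are sound.
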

\begin{remark}\label{rem:derandomize}
Algorithm~\ref{Pre-Hubs_Algorithm} can be easily derandomized using the method of conditional expectations: instead of choosing a random permutation $\pi$, we first choose $\pi_1\in V$, then $\pi_2\in V\setminus\{\pi_1\}$ and so on; each time we choose $\pi_i \in V\setminus\{\pi_1,\dots,\pi_{i-1}\}$ so as to minimize the conditional expectation $\E \left[\sum_u |H_u| \:|\, \pi_1,\dots, \pi_i \right]$.
\end{remark}
\begin{proof}
We first show that the algorithm always finds a feasible hub labeling. Consider a pair of vertices $u$ and $v$. We need to show that they have a common hub on $P_{uv}$. The statement is true if $u=v$ since $u\in \widehat{H}_u$ and thus $u\in H_u$. So, we assume that $u \neq v$. Consider the path $P_{uv}$. Because of the pre-hub property, there exist $u' \in \widehat{H}_u$ and $v' \in \widehat{H}_v$ such that $u' \in P_{v'v}$. In fact, there may be several possible ways to choose such $u'$ and $v'$. We choose $u'$ and $v'$ so that $\widehat{H}_u \cap (P_{u'v'} \setminus \{u',v'\}) = \widehat{H}_v \cap (P_{u'v'} \setminus \{u',v'\}) = \emptyset$ (for instance, choose the closest pair of  $u'$ and $v'$ among all possible pairs). Consider the first iteration $i$ of the algorithm such that $\pi_i \in P_{u'v'}$. We claim that the algorithm adds $\pi_i$ to both $H_u$ and $H_v$. Indeed, we have: (i) $\pi_i$ lies on $P_{v'u'}\subset P_{uu'}$, (ii) there are no pre-hubs of $u$ on $P_{v'u'}\supset P_{\pi_iu'}$ other than $u'$,  (iii) $\pi_i$ is the first vertex we process on the path $P_{u'v'}$, thus currently there are no hubs on $P_{u'v'}$. Therefore, the algorithm adds $\pi_i$ to $H_u$. Similarly, the algorithm adds $\pi_i$ to $H_v$.

Now we upper bound the expected cost of the solution. We will charge every hub that we add to $H_u$ to a pre-hub in $\widehat{H}_u$; namely, when we add $\pi_i$ to $H_u$ (see line 5 of Algorithm~\ref{Pre-Hubs_Algorithm}), we charge it to pre-hub $u'$. For every vertex $u$, we have $|\widehat{H}_u| \leq 2\sum_w x_{uw}$. We are going to show that every $u' \in \widehat{H}_u$ is charged at most $O(\log D)$ times in expectation. Therefore, the expected number of hubs in $H_u$ is at most $O(2\sum_w x_{uw} \cdot \log D)$.

Consider a vertex $u$ and a pre-hub $u'\in \widehat{H}_u$ ($u'\neq u$). Let $u'' \in \widehat{H}_u$ be the closest pre-hub to $u'$ on the path $P_{u'u}$. Observe that all hubs charged to $u'$ lie on the path $P_{u''u'} \setminus \{u''\}$. Let $k = |P_{u''u'} \setminus \{u''\}|$. Note that $k \leq D$. Consider the order $\sigma : [k] \to P_{u''u'} \setminus \{u''\}$ in which the vertices of $P_{u''u'} \setminus \{u''\}$ were processed by the algorithm ($\sigma$ is a random permutation). Note that $\sigma_i$ charges $u'$ if and only if $\sigma_i$ is closer to $u'$ than $\sigma_{1},\dots,\sigma_{i-1}$. The probability of this event is $1/i$. We get that the number of hubs charged to $u'$ is $\sum_{i=1}^k \frac{1}{i} = \log k + O(1)$, in expectation.  Hence, $\E \left[\sum_{u\in V} |H_u| \right] \leq 2 \left(\log D + O(1) \right) \cdot OPT_{LP_1}$.
\end{proof}

%%%%%%%%%%%%%%%%%%%%%%%%%%%%%%%%%%%%%%%%%%%%%%%%%%%%%%%%%%%%%%%%%%%%%%%%%%%%
\subsection{An $O_p(\log D)$-approximation algorithm for \texorpdfstring{\HLp}{HLp}}\label{sec:lp-norm-algorithm}

In this section, we analyze Algorithm~\ref{Pre-Hubs_Algorithm}, assuming that we solve the convex program of Figure~\ref{fig:cpp}. To analyze the performance of Algorithm \ref{Pre-Hubs_Algorithm} in this case, we need the following theorem by Berend and Tassa \cite{Berend_improvedbounds}.
\begin{theorem}[Theorem 2.4, \cite{Berend_improvedbounds}]\label{moment_theorem}
Let $X_1, ..., X_t$ be a sequence of independent random variables for which $\Prob[0 \leq X_i \leq 1] = 1$, and let $X = \sum_{i = 1}^t X_i$. Then, for all $p \geq 1$,
\begin{equation*}
    \left(\E[X^p]\right)^{1 / p} \leq 0.942 \cdot \frac{p}{\ln(p + 1)} \cdot \max\{\E[X]^{1 / p}, \E[X]\}.
\end{equation*}
\end{theorem}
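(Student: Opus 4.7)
The plan is to bound $\E[X^p]$ by the $p$-th moment of a Poisson random variable with mean $\lambda = \E[X]$ (``Poissonization''), and then to use sharp estimates on the Bell numbers.

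First I would reduce to the Bernoulli case. Fix the distributions of all $X_j$ for $j\neq i$ and view $\E[X^p]$ as a function of the marginal $\mu_i$ of $X_i$. Writing $X = X_i + Y$ with $Y = \sum_{j\neq i} X_j$, for each fixed value of $Y$ the conditional expectation $\E[(X_i+Y)^p \mid Y]$ is the integral of the convex function $x \mapsto (x+Y)^p$ against $\mu_i$. Among probability measures on $[0,1]$ with a prescribed mean $p_i = \E[X_i]$, such an integral is maximized by the two-point distribution concentrated on the extremes $\{0,1\}$, i.e.\ by $\mathrm{Bernoulli}(p_i)$. Iterating over $i$, it suffices to prove the bound when each $X_i$ is Bernoulli with parameter $p_i$, and $\lambda = \sum_i p_i$.

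For independent Bernoullis, the $k$-th factorial moment is $\E[(X)_k] = k!\, e_k(p_1,\dots,p_t)$, where $e_k$ is the $k$-th elementary symmetric polynomial; expanding $(\sum_i p_i)^k$ multinomially and discarding terms with repeated indices gives $e_k(p_1,\dots,p_t) \leq \lambda^k/k!$, so $\E[(X)_k] \leq \lambda^k$. Writing ordinary powers as factorial powers via the Stirling numbers of the second kind, $X^p = \sum_{k=0}^p S(p,k)\,(X)_k$, we conclude
\[
\E[X^p] \;\leq\; \sum_{k=0}^p S(p,k)\,\lambda^k,
\]
which is exactly $\E[Z^p]$ for a Poisson random variable $Z$ with mean $\lambda$. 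I would then split on the regime of $\lambda$: if $\lambda \leq 1$ then $\lambda^k \leq \lambda$ for $k \geq 1$, and if $\lambda \geq 1$ then $\lambda^k \leq \lambda^p$, so in either case $\sum_{k=0}^p S(p,k)\lambda^k \leq B(p)\cdot\max(\lambda,\lambda^p)$, where $B(p) = \sum_k S(p,k)$ is the $p$-th Bell number. Taking $p$-th roots yields
\[
\bigl(\E[X^p]\bigr)^{1/p} \;\leq\; B(p)^{1/p} \cdot \max\bigl(\lambda^{1/p},\lambda\bigr).
\]

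The hard part is the sharp inequality $B(p)^{1/p} \leq 0.942\cdot p/\ln(p+1)$ for every $p \geq 1$. The classical asymptotic $B(p)^{1/p} \sim p/(e\ln p)$ (de Bruijn), obtained by a saddle-point analysis of Dobinski's identity $B(p) = e^{-1}\sum_{m\geq 0} m^p/m!$, shows the ratio $B(p)^{1/p}\cdot\ln(p+1)/p$ tends to $1/e \approx 0.368$, safely below $0.942$. A finite-range numerical check for $p$ up to some explicit cutoff $p_0$, together with a monotonicity/convexity argument derived from Dobinski's identity for $p \geq p_0$, would then pin the maximum of this ratio at approximately $0.942$. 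Substituting the resulting bound on $B(p)^{1/p}$ into the previous display yields the theorem. The only delicate step is this last numerical optimization; the reduction to Bell numbers is otherwise mechanical.
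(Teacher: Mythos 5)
First, note that the thesis does not prove this statement at all: it is quoted verbatim from Berend and Tassa \cite{Berend_improvedbounds} and used as a black box, so there is no in-paper proof to compare against. Your outline is, in spirit, the route that reference takes (reduce to Bernoulli summands, dominate by a Poisson variable, bound Poisson moments by Bell-number-type quantities), and the first two stages of your argument --- the convexity reduction to $\mathrm{Bernoulli}(p_i)$ summands, the factorial-moment computation $\E[(X)_k]=k!\,e_k(p_1,\dots,p_t)\le \lambda^k$, and the resulting bound $\E[X^p]\le\sum_k S(p,k)\lambda^k\le B(p)\max(\lambda,\lambda^p)$ --- are correct as written.

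There are, however, two genuine problems. The more serious one is that your argument only covers \emph{integer} $p$: the identity $X^p=\sum_{k}S(p,k)(X)_k$ has no meaning for non-integer $p$, yet the theorem is stated for all real $p\ge1$ and is applied in the thesis with arbitrary fixed $p\ge1$. Rounding $p$ up to $\lceil p\rceil$ does not rescue you, since both $p/\ln(p+1)$ and (for $\E[X]\le1$) $\E[X]^{1/p}$ move in the unfavourable direction. The standard repair is to establish the Poisson comparison at the level of distributions rather than moments: a sum of independent $[0,1]$-valued variables with mean $\lambda$ is dominated in the convex order by $\mathrm{Poisson}(\lambda)$, so $\E[X^p]\le\E[Z^p]$ for every real $p\ge1$, after which one needs bounds on the \emph{real} moments $\E[Z^p]=e^{-\lambda}\sum_{m\ge0}m^p\lambda^m/m!$ (the ``fractional Bell number'' analysis that occupies most of Berend--Tassa). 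The second problem is your numerology: the supremum of $B(p)^{1/p}\ln(p+1)/p$ over positive integers is attained at $p=4$ and equals roughly $0.792$ (e.g.\ $B(4)=15$, $15^{1/4}\ln5/4\approx0.792$), not $0.942$; Berend--Tassa indeed prove $B(n)^{1/n}<0.792\,n/\ln(n+1)$. The larger constant $0.942$ in the moment theorem is the price of the real-$p$/general-$\lambda$ extension, i.e.\ exactly of the step your outline is missing. So for integer $p$ your plan proves a stronger statement than claimed, but as a proof of the theorem as stated it has a gap that cannot be closed by ``a finite numerical check plus monotonicity'' on the integer Bell numbers alone.
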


In order to simplify our analysis, we slightly modify Algorithm~\ref{Pre-Hubs_Algorithm} and get Algorithm~\ref{alg:HLp-algorithm}.

\begin{algorithm}[h]
\begin{enumerate}
    \item Solve $\mathbf{CP_p}$ and get an optimal solution $\{x_{uv}\}_{(u,v) \in V \times V}$.
    \item Obtain a set of pre-hubs $\{\widehat{H}_u\}_{u \in V}$ from $x$ as described in Lemma~\ref{lem:prehubs}.
    \item For each $u \in V$, let $J_u = \bigcup_{u' \in \widehat{H}_u} P_{uu'}$ be a tree rooted at $u$, and let $F_u \subset V(J_u)$ be the set of vertices of $J_u$ whose degree (in $J_u$) is at least 3. Set $\widehat{H}_u' := \widehat{H}_u \cup F_u$.
    \item Generate a random permutation $\pi : [n] \to V$ of the vertices.
    \item Set $H_u = \emptyset$, for every $u \in V$.
    \item \textbf{for} $i= 1$ \textbf{to} $n$ \textbf{do}:\\
           \hspace*{16pt}  \textbf{for} every $u \in V$ \textbf{do}:\\
           \hspace*{36pt}     \textbf{for} every $u' \in \widehat{H}_u'$ such that $\pi_i \in P_{uu'}$ and $P_{\pi_i u'} \cap \widehat{H}_u' = \{u'\}$ \textbf{do}:\\
           \hspace*{56pt}         \textbf{if} $P_{\pi_i u'} \cap H_u = \emptyset$ \textbf{then} $H_u := H_u \cup \{\pi_i\}$.
    \item Return $\{H_u\}_{u \in V}$.
\end{enumerate}
\caption{An $O_p(\log D)$-approximation algorithm for \HLp on graphs with unique shortest paths}
\label{alg:HLp-algorithm}
\end{algorithm}

\begin{theorem}
For any $p \geq 1$, Algorithm~\ref{alg:HLp-algorithm} is an $O \left(\frac{p}{\ln(p + 1)} \cdot \log D \right)$-approximation algorithm for \HLp.
\end{theorem}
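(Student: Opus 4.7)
The plan is to adapt the analysis of Algorithm~\ref{Pre-Hubs_Algorithm} from expectations to $\ell_p$-moments, using the enrichment of $\widehat{H}_u$ with branching points to reduce to a clean record-counting problem on simple paths. Feasibility of the output $\{H_u\}$ is immediate: since $\widehat{H}_u \subseteq \widehat{H}_u'$, the family $\{\widehat{H}_u'\}$ is still a valid pre-hub labeling, so the argument used for Algorithm~\ref{Pre-Hubs_Algorithm} transfers verbatim. The real work is to bound the expected $\ell_p$-cost.

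First, I control $|\widehat{H}_u'|$. Since $J_u$ is a tree whose leaves are endpoints of paths $P_{uu'}$ for $u'\in\widehat{H}_u$, every leaf lies in $\widehat{H}_u$ (using $u\in\widehat{H}_u$). A tree with $\ell$ leaves has at most $\ell-2$ vertices of degree at least three, so $|F_u|\leq|\widehat{H}_u|$ and therefore $|\widehat{H}_u'|\leq 2|\widehat{H}_u|\leq 4\sum_{v} x_{uv}$ by Lemma~\ref{lem:prehubs}. Summing over $u$ yields $\sum_{u}|\widehat{H}_u'|^p\leq 4^p\cdot OPT_{CP_p}^p$.

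Next, I charge each hub added to $H_u$ to some pre-hub in $\widehat{H}_u'$ and write $|H_u|=\sum_{u'\in\widehat{H}_u'}Z_{u,u'}$. The crucial effect of adding branching vertices is that, for each $u'\in\widehat{H}_u'$, the vertices that can be charged to $u'$ form a simple sub-path of $J_u$ (from $u'$ to the next enriched pre-hub $u''$ towards $u$, with no branching in between). As in the $\ell_1$-analysis, $Z_{u,u'}$ equals the number of ``records'' --- new closest-to-$u'$ vertices --- in the order induced by $\pi$ on this segment. Since $\pi$ is uniform on $V$, we have $Z_{u,u'}=\sum_{i=1}^{k}X_i$ for independent Bernoullis $X_i$ with $\Pr[X_i=1]=1/i$ and segment length $k\leq D$, so Theorem~\ref{moment_theorem} gives
\begin{equation*}
\bigl(\E[Z_{u,u'}^p]\bigr)^{1/p}\leq 0.942\cdot\frac{p}{\ln(p+1)}\cdot\max\!\bigl\{\mu^{1/p},\mu\bigr\}=O\!\left(\frac{p}{\ln(p+1)}\log D\right),
\end{equation*}
where $\mu=\E[Z_{u,u'}]=\sum_{i=1}^k 1/i\leq 1+\ln D$.

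Finally, Minkowski's inequality on the probability space gives $\|H_u\|_p\leq\sum_{u'\in\widehat{H}_u'}\|Z_{u,u'}\|_p = O\!\left(|\widehat{H}_u'|\cdot\tfrac{p}{\ln(p+1)}\log D\right)$, and Jensen's inequality for the concave map $x\mapsto x^{1/p}$ gives
\begin{align*}
\E\!\left[\Bigl(\sum_u|H_u|^p\Bigr)^{1/p}\right] &\leq \Bigl(\sum_u\|H_u\|_p^p\Bigr)^{1/p} \leq O\!\left(\tfrac{p}{\ln(p+1)}\log D\right)\cdot\Bigl(\sum_u|\widehat{H}_u'|^p\Bigr)^{1/p}\\
&\leq O\!\left(\tfrac{p}{\ln(p+1)}\log D\right)\cdot OPT_{CP_p}.
\end{align*}
The main obstacle is bridging from the per-pre-hub $L^p$ bound to an $\ell_p$-norm over all vertices: Minkowski introduces a factor $|\widehat{H}_u'|$ that looks loose, but combined with Jensen it is absorbed exactly by the inequality $\sum_u|\widehat{H}_u'|^p\leq 4^p\cdot OPT_{CP_p}^p$, for which the branching-point enrichment (and the resulting tree-leaf count) is indispensable.
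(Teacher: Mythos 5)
Your proof is correct and follows essentially the same route as the paper's: the same enrichment bound $|\widehat{H}_u'|\leq 2|\widehat{H}_u|\leq 4\sum_v x_{uv}$, the same charging of hubs to pre-hubs with the record-counting argument on each segment, and the same use of the Berend--Tassa moment bound. The only (harmless) technical difference is that the paper applies Theorem~\ref{moment_theorem} once to the full independent sum $\sum_{v\in\widehat{H}_u'}\sum_i Z_i^{uv}$, whose mean is at most $|\widehat{H}_u'|\cdot\Harm_D$, whereas you apply it per segment and then combine via Minkowski's inequality in $L^p$; both yield $\bigl(\E[|H_u|^p]\bigr)^{1/p}=O\bigl(\tfrac{p}{\ln(p+1)}\Harm_D\bigr)\cdot|\widehat{H}_u'|$ and the same final bound.
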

\begin{proof}
%Let $ALG$ be the cost of the original algorithm, and $ALG'$ be the cost of this modified algorithm. It is not hard to prove that it always holds $ALG \leq ALG'$.

First, it is easy to see that, since all leaves of $J_u$ are pre-hubs of the set $\widehat{H}_u$, we have $|F_u| \leq |\widehat{H}_u|$, and so $|\widehat{H}_u'| \leq 2 \cdot |\widehat{H}_u|$. 

Let $\mathcal{P}_u$ be the collection of subpaths of $J_u$ defined as follows: $P$ belongs to $\mathcal{P}_u$ if $P$ is a path between consecutive pre-hubs $u''$ and $u'$ of $\widehat{H}_u'$, with $u''$ being an ancestor of $u'$ in $J_u$, and no other pre-hub $u''' \in \widehat{H}_u'$ appears in $P$. For convenience, we exclude the endpoint $u''$ that is closer to $u$: $P = P_{u''u'} - u''$. Note that any such path $P$ is uniquely defined by the pre-hub $u'$ of $u$, and so we will denote $P$ as $P_{(uu')}$. The modification we made in the algorithm allows us now to observe that $P \cap P' = \emptyset$, for $P, P' \in \mathcal{P}_u$, $P \neq P'$. 

Let $ALG'$ be the cost of the solution $\{H_u\}_{u \in V}$ that the modified algorithm (i.e. Algorithm~\ref{alg:HLp-algorithm}) returns. We have $\E[ALG'] = \E \left[ \left(\sum_{u \in V} |H_u|^p \right)^{1 / p} \right] \leq \left(\sum \E[|H_u|^p] \right)^{1 / p}$ (by Jensen's inequality).

We can write $|H_u| \leq \sum_{v \in \widehat{H}_u'} X_v^{u}$, where $X_v^u$ is the random variable indicating how many vertices are added to $H_u$ ``because of" the pre-hub $v \in \widehat{H}_u'$ (see line 6 of the algorithm). Observe that we can write $X_v^u$ as follows: $X_v^u = \sum_{w \in P_{(uv)}} Y_w^{uv}$, with $Y_w^{uv}$ being 1 if $w$ is added in $H_u$, and 0 otherwise. The modification that we made in the algorithm implies, as already observed, that any variable $Y_w^{uv}$, $w \in P_{(uv)}$, is independent from $Y_{w'}^{uv'}$, $w' \in P_{(uv')}$, for $v \neq v'$, as the corresponding paths $P_{(uv)}$ and $P_{(uv')}$ are disjoint.

Let $u \in \widehat{H}_u'$, and let $\pi_{uv}: [|P_{(uv)}|] \to P_{(uv)}$ be the induced permutation when we restrict $\pi$ (see line 4 of the algorithm) to the vertices of $P_{(uv)}$. We can then write $\sum_{w \in P_{(uv)}} Y_w^{uv} = \sum_{i = 1}^l Z_i^{uv}$, $l = |P_{(uv)}|$, where $Z_i^{uv}$ is 1 if the $i^{\textrm{th}}$ vertex considered by the algorithm that belongs to $P_{(uv)}$ (i.e. the $i^{\textrm{th}}$ vertex of permutation $\pi_{uv}$) is added to $H_u$ and 0 otherwise. It is easy to see that $\Pr[Z_i^{uv} = 1] = 1/ i$. We now need one last observation. We have $\Pr[Z_i^{uv} = 1 \;|\; Z_1^{uv}, ..., Z_{i - 1}^{uv}] = 1/i$. To see this, note that the variables $Z_i^{uv}$ do not reveal which particular vertex is picked from the permutation at each step, but only the relative order of the current draw (i.e. $i^{\textrm{th}}$ random choice) with respect to the current best draw (where best here means the closest vertex to $v$ that we have seen so far, i.e. in positions $\pi_{uv}(1), ..., \pi_{uv}(i -1)$). Thus, regardless of the relative order of $\pi_{uv}(1), ..., \pi_{uv}(i -1)$, there are exactly $i$ possibilities to extend that order when the permutation picks $\pi_{uv}(i)$, each with probability $1/i$. This shows that the variables $\{Z_i^{uv}\}_i$ are independent, and thus all variables $\{Z_i^{uv}\}_{v \in \widehat{H}_v', \;i \in [|P_{(uv)}|]}$ are independent.

We can now apply Theorem \ref{moment_theorem}. This gives
\begin{equation*}
\begin{split}
    \E[|H_u|^p] \leq \E \left[ \left(\sum_{v \in \widehat{H}_u'} \sum_{i  = 1}^{|P_{(uv)}|} Z_i^{uv} \right)^p \right] \leq \left(0.942 \cdot \frac{p}{\ln(p + 1)} \right)^p \cdot \Harm_D^p \cdot |\widehat{H}_u'|^p.
\end{split}
\end{equation*}
Here, $\Harm_D = \sum_{i=1}^D \frac{1}{i} = \log D + O(1)$ is the $D$-th harmonic number. Thus,
\begin{equation*}
\begin{split}
    \E[ALG'] &\leq 0.942 \cdot \frac{p}{\ln(p + 1)} \cdot \Harm_D \cdot \left(\sum_{u \in V} |\widehat{H}_u'|^p \right)^{1 / p} \\
            &\leq 0.942  \cdot \frac{p}{\ln(p + 1)} \cdot \Harm_D \cdot \left(\sum_{u \in V} 4^p \cdot \left(\sum_{v \in V} x_{uv} \right)^p \right)^{1 / p} \\
            &\leq 3.768 \cdot \frac{p}{\ln(p + 1)} \cdot \Harm_D \cdot OPT_{CP},
\end{split}
\end{equation*}
where $OPT_{CP}$ is the optimal value of the convex relaxation.
\end{proof}

%%%%%%%%%%%%%%%%%%%%%%%%%%%%%%%%%%%%%%%%%%%%%%%%%%%%%%%%%%%%
\subsection{Any ``natural" rounding scheme cannot break the $\widetilde{O}(\log n)$ barrier for $HL_1$ on graphs with unique shortest paths and diameter $D$}\label{lower_bound_rounding}

In this section, we show that any rounding scheme that may assign $v \in H_u$ only if $x_{uv} > 0$ gives $\Omega(\log n / \log \log n)$ approximation, even on graphs with shortest-path diameter $D = O(\log n)$. For that, consider the following tree $T$, which consists of a path $P = \{1, ..., k\}$ of length $k = 3t$, $t \in \mathbb{N} \setminus \{0\}$, and two stars $\mathcal{A}$ and $\mathcal{B}$, with $N = \binom{k}{2t}$ leaves each (each leaf corresponding to a subset of $[k]$ of size exactly $2t$). The center $a$ of $\mathcal{A}$ is connected to vertex ``1" of $P$ and the center $b$ of $\mathcal{B}$ is connected to vertex ``$k$" of $P$. The total number of vertices of $T$ is $n = 2N + 2 + k$, which implies that $t = \Omega(\log n / \log \log n)$.

\begin{figure}[h]
\begin{center}
\scalebox{1.1}{\input{./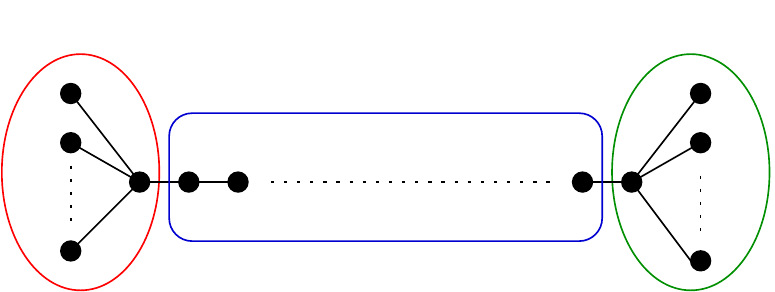_t}}
\caption{An instance that cannot be rounded well with any ``natural" rounding scheme.}
\end{center}
\end{figure}

Consider now the following LP solution for the LP of Figure~\ref{fig:lp1} (all variables not assigned below are set to zero):
\begin{itemize}
    \item $x_{uu} = 1$, for all $u \in T$.
    \item $x_{Sa} = 1$, for all $S \in \mathcal{A}$.
    \item $x_{Wb} = 1$, for all $W \in \mathcal{B}$.
    \item $x_{Si} = 1/t$, for all $S \in \mathcal{A}$, $i \in S \subseteq P$.
    \item $x_{Wi} = 1/t$, for all $W \in \mathcal{B}$, $i \in W \subseteq P$.
    \item $x_{ab} = x_{ba} = 1$.
    \item $x_{ia} = x_{ib} = 1$, for all $i \in [k]$.
    \item $\{x_{ij}\}_{i,j \in [k]}$ is an optimal solution for $P$.
\end{itemize}
Observe that the above solution is indeed a feasible fractional solution. Its cost is at most $n + 3(|\mathcal{A}| + |\mathcal{B}|) + 2 + 2k + c \cdot k \cdot \log k = \Theta(n)$, for some constant $c$. Suppose now that we are looking for a rounding scheme that assigns $v \in H_u$ only if $x_{uv} > 0$, and let's assume that there exists a vertex $S \in \mathcal{A}$ whose resulting hub set satisfies $|H_S \cap P| < t$. We must also have $H_S \cap \mathcal{B} = \emptyset$, since $x_{Su} = 0$ for all $u \in \mathcal{B}$. This implies that there exists a $W \in \mathcal{B}$ such that $W \cap H_S = \emptyset$. Since the above fractional solution assigns non-zero values only to $x_{Wi}$ with $i \in W$ and $x_{Wb}$, this means that $x_{Wi} = 0$ for all $i \in H_S$. Thus, the resulting hub set cannot be feasible, which implies that any rounding that satisfies the aforementioned property and returns a feasible solution must satisfy $|H_S \cap P| \geq t$ for all $S \in \mathcal{A}$ (similarly, the same holds for all $W \in \mathcal{B}$). This means that the returned solution has cost $\Omega(n \cdot t) = \Omega(n \cdot \log n / \log \log n)$, and so the approximation factor must be at least $\Omega(\log n / \log \log n)$.

\section{Hardness of approximating hub labeling on general graphs}\label{Hardness}

In this section, we prove that \HL1 and \HL\infty are \NP-hard to approximate on general graphs with $n$ vertices and multiple shortest paths within a factor better than $\Omega(\log n)$, by using the $\Omega(\log n)$-hardness results for Set Cover. This implies that the current known algorithms for \HL1 and \HL\infty are optimal (up to constant factors). The result for \HL\infty also almost immediately implies the same hardness for \HLp, when $p = \Omega(\log n)$.

%%%%%%%%%%%%%%%%%%%%%%%%%%%%%%%%%%%%%%%%%%%%%%%%%%%%%%%%%%%%%%%%%%%%%%%%%%%%%
\subsection{$\Omega(\log n)$-hardness for \HL1}\label{sec:l1-hardness}

In this section, we show that it is \NP-hard to approximate \HL1 on general graphs with multiple shortest paths within a factor better than $\Omega(\log n)$. We will use the hardness results for Set Cover, that, through a series of works spanning more than 20 years \cite{DBLP:journals/jacm/LundY94, DBLP:journals/jacm/Feige98, DBLP:conf/stoc/RazS97, DBLP:journals/talg/AlonMS06}, culminated in the following theorem.
\begin{theorem}[Dinur \& Steurer \cite{DBLP:conf/stoc/DinurS14}]\label{Set-Cover-Hardness}
For every $\alpha > 0$,  it is \NP-hard to approximate Set Cover to within a factor $(1 - \alpha) \cdot \ln n$, where $n$ is the size of the universe.
\end{theorem}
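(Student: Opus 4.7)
The statement is a deep result cited as a black box from the work of Dinur and Steurer, and the full proof involves substantial PCP machinery, but the plan would be to follow the standard Feige-style reduction from Label Cover, tightened via near-optimal parallel repetition / direct product testing. The first step is to start from a 2-prover 1-round (2P1R) game --- equivalently, a bipartite Label Cover instance with question sides $X$ and $Y$, label sets $\Sigma_X,\Sigma_Y$, and a projection constraint on each edge $(x,y)$ --- which is NP-hard to distinguish between the case when a labeling satisfies all constraints and the case when no labeling satisfies more than a $\delta$ fraction. Classical techniques (PCP theorem followed by parallel repetition) suffice to make $\delta$ arbitrarily small; the quantitative behavior of $\delta$ as a function of the repetition parameter is exactly what ultimately controls the constant in front of $\ln n$.

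Next, I would reduce Label Cover to Set Cover via Feige's partition-system gadget. Fix parameters $m$ and $L$ and recall that an $(m,L)$-partition system is a ground set $U$ of size $m$ together with $L$ partitions $\mathcal{Q}_1,\dots,\mathcal{Q}_L$ of $U$ such that any collection of sets drawn one from each partition fails to cover $U$ unless it uses at least $(1-o(1))\ln m$ partitions; such systems exist probabilistically with $m$ polynomial in $L$. For each question vertex $y \in Y$ I place a fresh copy of the partition system, indexed by the possible labels of $y$: for every $\sigma \in \Sigma_Y$ the partition $\mathcal{Q}_\sigma$ is used, and for each label $\tau \in \Sigma_X$ of a neighboring $x$ the corresponding set in the partition is made available to $x$. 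The universe is the disjoint union over $y$, and the sets are indexed by pairs $(x,\tau)$.

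The completeness direction is straightforward: a perfect labeling of the Label Cover instance selects one set per vertex $x$, and by construction these sets together cover every copy of the partition system. For soundness, the analysis shows that any Set Cover of size $s$ can be converted into a random labeling of the Label Cover instance (pick a uniformly random set that covers a random element) whose expected number of satisfied constraints is at least $\Omega(s / ((1-\alpha)\ln m \cdot |Y|))$; contrapositively, if no labeling satisfies more than a $\delta$ fraction of constraints then any cover must use at least $(1-\alpha)\ln m \cdot |Y|$ sets, giving the claimed gap once the parameters are tuned so that $\ln m$ matches $\ln n$ for the resulting universe size $n = |Y|\cdot m$.

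The main obstacle, and what Dinur and Steurer specifically contributed, is not the reduction itself (which goes back to Feige) but obtaining the tight constant $(1-\alpha)$ rather than merely $\Omega(1)$. This requires the soundness parameter $\delta$ of Label Cover to decay fast enough under repetition relative to the alphabet blow-up, and a matching tight loss in the partition-system analysis. Feige's original proof achieved this under a non-standard complexity assumption; the Dinur-Steurer contribution was an analytical approach to parallel repetition / agreement testing that produces Label Cover instances with the precise parameter trade-off needed, so that standard NP-hardness suffices. I would therefore treat the construction of this near-optimal parallel-repetition primitive as the technical heart of the argument, invoke it as a black box, and then verify that plugging it into the partition-system reduction yields the $(1-\alpha)\ln n$ bound.
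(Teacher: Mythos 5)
The paper does not prove this theorem at all: it is imported verbatim from Dinur and Steurer's work as a black box, cited and then used in the reductions of Section~\ref{Hardness}. So there is no ``paper's own proof'' to compare against, and any self-contained argument you give is necessarily going beyond what the thesis does.

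As an account of how the result is actually established in the literature, your sketch is essentially faithful: the reduction is Feige's Label-Cover-plus-partition-system construction, and the genuinely new ingredient in Dinur--Steurer is a parallel-repetition/agreement-testing theorem whose soundness decays fast enough relative to the alphabet and instance blow-up that the tight constant $(1-\alpha)$ is achieved under plain $\P \neq \NP$ rather than Feige's quasi-polynomial assumption. One small correction to your soundness step: the inequality as you state it points the wrong way. The argument is that a cover of size $s$ yields a (random) labeling satisfying an expected fraction of constraints that is \emph{large when $s$ is small} (roughly, if $s < (1-\alpha)\ln m \cdot |Y|$ then for many vertices $y$ the sets touching $y$'s partition system cannot all come from distinct partitions without exceeding the covering threshold, which forces consistent label pairs and hence a labeling beating the soundness bound $\delta$); the contrapositive then gives the lower bound on $s$. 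With that fixed, and with the Dinur--Steurer repetition primitive invoked as a black box as you propose, the plan is the standard and correct one.
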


We start with an arbitrary unweighted instance of Set Cover. Let $\mathcal{X} = \{x_1, ..., x_n\}$ be the universe and $\mathcal{S} = \{S_1, ..., S_m\}$ be the family of subsets of $\mathcal{X}$, with $m = \poly(n)$. Our goal is to pick the smallest set of indices $I \subseteq [m]$ (i.e. minimize $|I|$) such that $\bigcup_{i \in I} S_i = \mathcal{X}$.

The high-level idea of our argument is the following: we define a weighted variant of \HLp, and we show that an $\alpha$-approximation for the standard \HLp can be used to obtain an $O(\alpha)$-approximation for the weighted \HLp. We then proceed to construct a weighted instance of \HL1 such that, given an $f(n)$-approximation algorithm for the weighted \HL1, we can use it to construct a solution for the original Set Cover instance of cost $O(f(\poly(n))) \cdot OPT_{SC}$, where $OPT_{SC}$ is the cost of the optimal Set Cover solution. Formally, we prove the following theorem.

\begin{theorem}\label{HL1_hardness_theorem}
Given an arbitrary unweighted Set Cover instance ($\mathcal{X}, \mathcal{S}$), $|\mathcal{X}|=n$, $|\mathcal{S}| = m$, with optimal value $OPT_{SC}$, and an $f(n)$-approximation algorithm for weighted \HL1, there is an algorithm that returns a solution for the Set Cover instance of cost $O(f(\poly(n))) \cdot OPT_{SC}$.
\end{theorem}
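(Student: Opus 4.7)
The plan is to give a polynomial-time reduction from Set Cover to weighted \HL1, together with a decoding procedure. From $(\mathcal{X},\mathcal{S})$ I will build a weighted graph $G$ of size $\poly(n,m)=\poly(n)$ with the two complementary properties: (i)~every set cover of size $k$ induces a feasible hub labeling of weighted $\ell_1$-cost $O(Nk)$, where $N=\poly(n)$ is a parameter of the construction; and (ii)~from every feasible hub labeling $H$ of weighted cost $C$, one can extract in polynomial time a set cover of size $O(C/N)$. Feeding the hypothesized $f(\poly(n))$-approximation algorithm into $G$ and decoding its output then yields a set cover of size $O(f(\poly(n))\cdot OPT_{SC})$, completing the proof.

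For the construction I take a three-layer graph: $N$ root copies $r_1,\dots,r_N$, set vertices $s_1,\dots,s_m$, and element vertices $t_1,\dots,t_n$. I add a unit-length edge $(r_\ell,s_j)$ for every pair $(\ell,j)$ and a unit-length edge $(s_j,t_i)$ whenever $x_i\in S_j$. Every shortest path from $r_\ell$ to $t_i$ then has length exactly $2$ and passes through some set vertex $s_j$ with $x_i\in S_j$, so the covering constraint for the pair $(r_\ell,t_i)$ forces a common hub in $\{r_\ell,t_i\}\cup\{s_j:x_i\in S_j\}$. I give weights $w(r_\ell)=w(t_i)=1$ and $w(s_j)=\varepsilon$ very small under the objective $\sum_u w(u)\,|H_u|$, so that the dominant cost terms are $\sum_\ell|H_{r_\ell}|$ and $\sum_i|H_{t_i}|$ while set-vertex labels are essentially free.

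Completeness follows by lifting: a set cover $C\subseteq[m]$ of size $k$ gives $H_{r_\ell}=\{r_\ell\}\cup\{s_j:j\in C\}$ and $H_{t_i}=\{t_i,s_{j(i)}\}$ (with $j(i)\in C$ chosen so that $x_i\in S_{j(i)}$), and the $H_{s_j}$ are populated sparsely (using $r_1$ as a default common hub, for instance) to satisfy the remaining auxiliary pairs; the weighted cost evaluates to $O(Nk+n)=O(Nk)$ once $N\ge n$. Soundness is an averaging argument: since $\sum_\ell|H_{r_\ell}|\le C$ and $\sum_i|H_{t_i}|\le C$, there exists an index $\ell^*$ for which $|\{j:s_j\in H_{r_{\ell^*}}\}|$, $|\{i:t_i\in H_{r_{\ell^*}}\}|$ and $|\{i:r_{\ell^*}\in H_{t_i}\}|$ are each $O(C/N)$. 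The set $I_{\ell^*}=\{j:s_j\in H_{r_{\ell^*}}\}$ must then cover every element $x_i$ for which the pair $(r_{\ell^*},t_i)$ is covered via a set-vertex hub; the remaining elements (those covered by an endpoint hub) number at most the sum of the second and third quantities above, i.e.\ $O(C/N)$, so augmenting $I_{\ell^*}$ by one arbitrary covering set per exception produces a set cover of size $O(C/N)$. Combining with $C\le f(\poly(n))\cdot O(N\cdot OPT_{SC})$ gives the stated bound.

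The main obstacle will be choosing the weights and $N$ so that three requirements hold simultaneously: (a)~the ``endpoint-hub loophole'' of covering $(r_\ell,t_i)$ by putting $r_\ell$ into $H_{t_i}$ or $t_i$ into $H_{r_\ell}$ is strictly more expensive than honest set-cover-based labelings; (b)~the averaging argument picking $\ell^*$ still isolates a near-minimum set cover; and (c)~all auxiliary pairs — in particular pairs $(t_i,t_{i'})$ whose shortest path goes through a shared set vertex $s_j$ with $x_i,x_{i'}\in S_j$ rather than through a root — can be covered without blowing up the cost beyond a constant factor of $Nk$. I expect (c), specifically controlling the labels needed at each $t_i$ so that every potential shared-set neighbour pair is served by a common set-vertex hub, to be the technically heaviest step; a likely remedy is to further stratify the construction (e.g.\ replace the single $s_j\!-\!t_i$ edge by a short chain through element-specific auxiliary vertices) so that shortest paths between different element vertices are forced to pass through the root layer, making $r_1$ a universal common hub for such pairs.
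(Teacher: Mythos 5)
Your overall strategy is the same as the paper's: replicate the ``query'' vertex into $N$ heavy root copies so that each root's hub set is forced to encode a set cover, lift a set cover to a cheap labeling for completeness, and recover a cover by averaging over the roots for soundness. The averaging step, including the accounting of the endpoint-hub loophole as $O(C/N)$ exceptions, matches the paper's argument. The one genuine gap is exactly the obstacle you flag as (c), and your proposed remedy for it does not work. Subdividing each $s_j$--$t_i$ edge into a chain of length $L$ makes the shared-set route between two elements have length $2L$ while the route through the root layer has length $2L+2$; lengthening these edges preserves the relative order of the two routes, so no choice of chain length forces element--element shortest paths through the roots (you would need the root detour to become \emph{strictly shorter}, which subdivision can never achieve). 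The same problem afflicts pairs $(s_j,t_i)$ with $x_i\notin S_j$.

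The paper's fix is to add a fourth layer: a single apex vertex $q$ of small weight adjacent to every element vertex by edges of length $\varepsilon<1/2$. This creates a genuinely new, strictly shorter route $t_i\,\text{--}\,q\,\text{--}\,t_{i'}$ of length $2\varepsilon<2$ for every element pair, and a route of length $1+2\varepsilon<2$ for every non-incident set--element pair, so $q$ becomes a canonical cheap common hub for all the auxiliary pairs and the case analysis collapses. An alternative that stays within your three-layer graph is to simply pay for the auxiliary pairs: put every set containing $x_i$ into $H_{t_i}$ (so shared-set element pairs are covered on their true shortest paths) and put $r_1$ into every $H_{t_i}$ and every $H_{s_j}$ (covering the distance-$3$ and distance-$4$ pairs, on whose shortest paths $r_1$ does lie); this costs an extra $O(nm)$, which is absorbed once $N\ge nm$. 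Either repair completes your argument; as written, the auxiliary pairs are not covered and the suggested stratification cannot cover them.
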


Using the above theorem, if we assume that $f(n) = o(\log n)$, then we have $O(f(\poly(n))) = o(\log \mathtt{poly} (n)) = o(\log n)$, and so this would imply that we can get a $o(\log n)$-approximation algorithm for Set Cover. By Theorem \ref{Set-Cover-Hardness}, this is \NP-hard, and so we must have $f(n) = \Omega(\log n)$.

\begin{corollary}
It is \NP-hard to approximate \HL1 to within a factor $c \cdot \log n$, for some constant $c$, on general graphs with $n$ vertices (and multiple shortest paths).
\end{corollary}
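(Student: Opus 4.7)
The plan is to derive the corollary as a direct contrapositive of Theorem~\ref{HL1_hardness_theorem} combined with the Dinur--Steurer lower bound from Theorem~\ref{Set-Cover-Hardness}. Concretely, I would argue: suppose, for contradiction, that for every constant $c > 0$ there is a polynomial-time $c \log n$-approximation algorithm for \HL1 on $n$-vertex graphs. Applying Theorem~\ref{HL1_hardness_theorem} with this algorithm in the role of the $f$-approximation, the output on a Set Cover instance of universe size $n$ is a set cover of cost $O(f(\mathrm{poly}(n))) \cdot OPT_{SC}$. Since $f(N) = c\log N$ and $N = \mathrm{poly}(n)$, we have $f(\mathrm{poly}(n)) = c \log(\mathrm{poly}(n)) = O(c \log n)$, so the final approximation factor for Set Cover is $O(c \log n)$.

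The second step is to fix the quantifiers carefully. Let $K$ be the hidden constant inside the $O(\cdot)$ produced by the reduction in Theorem~\ref{HL1_hardness_theorem} (this constant depends only on the polynomial blow-up used to build the \HL1 instance from the Set Cover instance, and not on $c$). Choosing $c$ small enough that $Kc < (1-\alpha)$ for some fixed $\alpha > 0$ (e.g.\ $\alpha = 1/2$ and $c < 1/(2K)$), the resulting Set Cover approximation factor is at most $(1-\alpha) \ln n$ for all sufficiently large $n$. By Theorem~\ref{Set-Cover-Hardness}, this is $\NP$-hard, contradicting the polynomial-time assumption. Hence no such $c \log n$-approximation for \HL1 exists for that particular $c$, which is exactly the corollary with this value of $c$.

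The only real subtlety, and the place I would be most careful, is the reduction from weighted \HL1 to unweighted \HL1: the statement of Theorem~\ref{HL1_hardness_theorem} is phrased for weighted \HL1, but the corollary is about (unweighted) \HL1. I would rely on the reduction from weighted to unweighted mentioned at the start of Section~\ref{sec:l1-hardness} (the ``we show that an $\alpha$-approximation for the standard \HLp can be used to obtain an $O(\alpha)$-approximation for the weighted \HLp'' remark) to absorb any polynomial blow-up in the instance size into the $\mathrm{poly}(n)$ term; since $\log(\mathrm{poly}(n)) = \Theta(\log n)$, the resulting factor for Set Cover is still $O(c \log n)$ and the contradiction argument goes through unchanged. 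No additional machinery is needed beyond these two ingredients, so the corollary follows in a couple of lines once the theorem is in hand.
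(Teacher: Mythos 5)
Your proposal is correct and follows essentially the same route as the paper: instantiate Theorem~\ref{HL1_hardness_theorem} with the hypothetical approximation algorithm, absorb the polynomial blow-up (including the weighted-to-unweighted reduction) via $\log(\mathrm{poly}(n)) = \Theta(\log n)$, and invoke Theorem~\ref{Set-Cover-Hardness}. Your explicit tracking of the constant $K$ hidden in the reduction is in fact slightly more careful than the paper's own $o(\log n)$ phrasing and matches the literal quantifier structure of the corollary (``for some constant $c$''), so nothing further is needed.
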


Before proving Theorem \ref{HL1_hardness_theorem}, we need a few lemmas.

\begin{lemma}\label{degree_one_vertices_lemma}
Let $G = (V, E)$ and $l: E \to \mathbb{R}^+$  be an instance of \HLp, for any $p \geq 1$, and let $Z = \{u \in V: deg(u) = 1\}$ be the set of vertices of $G$ of degree 1. Suppose that $|Z| > 0$, and let $n(u)$ denote the unique neighbor of a vertex $u \in A$. Then, any feasible solution $\{H_v\}_{v \in V}$  can be converted to a solution $H'$ of at most twice the $\ell_p$-cost, with the property that $H_u' = H_{n(u)}' \cup \{u\}$ and $u \notin H_v'$, for every vertex $u \in Z$ and $v \neq u$.
\end{lemma}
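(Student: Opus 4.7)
The plan is to obtain $H'$ by a very direct surgery: for every non-leaf $v\in V\setminus Z$ I would set $H_v' = H_v\setminus Z$, and for every leaf $u\in Z$ I would set $H_u' = H_{n(u)}'\cup\{u\}$. Both stated structural properties are then immediate from the definition: $H_u' = H_{n(u)}'\cup\{u\}$ holds by construction, while $u\notin H_v'$ for $v\neq u$ follows because $Z$ was purged from every non-leaf label, and the only $Z$-vertex we ever insert into $H_u'$ is $u$ itself.

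The next step is to verify feasibility. The key observation is that a degree-one vertex cannot appear in the interior of any shortest path, so for any pair $(v_1,v_2)\in (V\setminus Z)^2$ the common hub guaranteed by the covering property of the original $H$ on a shortest $v_1v_2$-path already lies in $V\setminus Z$, hence survives in $H_{v_1}'\cap H_{v_2}'$. For a mixed pair $(u,v)$ with $u\in Z$ and $v\neq u$ the shortest $uv$-path is the concatenation of the edge $un(u)$ with a shortest $n(u)v$-path, so any hub covering the pair $(n(u),v)$ lies on a subpath of $P_{uv}$ and belongs to $H_{n(u)}'\subseteq H_u'$ and to $H_v'$; the case $(u_1,u_2)\in Z^2$ is analogous, reducing to $(n(u_1),n(u_2))$ (with the trivial case $n(u_1)=n(u_2)$ handled by the vertex $n(u_1)$ itself, which is in both $H_{u_1}'$ and $H_{u_2}'$).

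The crux of the argument is the cost bound. Plainly $|H_v'|\leq |H_v|$ for $v\notin Z$, and $|H_u'|\leq |H_{n(u)}|+1\leq 2|H_{n(u)}|$ for $u\in Z$ (using $n(u)\in H_{n(u)}$, so $|H_{n(u)}|\geq 1$). Bounding via Minkowski's inequality yields
\[
\|H'\|_p \;\leq\; \bigl\|(|H_v|)_{v\in V\setminus Z}\bigr\|_p \;+\; \bigl\|(|H_{n(u)}|+1)_{u\in Z}\bigr\|_p,
\]
and the first term is already at most $\|H\|_p$. The main obstacle is the second term: many leaves may share a common neighbor $w$, and thus $\sum_{u\in Z}|H_{n(u)}|^p = \sum_{w\notin Z}|Z_w|\cdot|H_w|^p$ is not a priori controlled by $\sum_{w\notin Z}|H_w|^p$ alone.

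My plan to close this gap is to exploit the covering constraint of the original $H$ on each pair $(u,n(u))$ with $u\in Z$. Since $P_{u,n(u)}=\{u,n(u)\}$ has only two vertices, feasibility forces either $n(u)\in H_u$ (so $|H_u|\geq 2$) or $u\in H_{n(u)}$ (so $u$ inflates $|H_{n(u)}|$ by one). Either way, the ``presence'' of the leaf $u$ is already paid for inside $\|H\|_p^p$, either through its own hub set or through an extra element of $H_{n(u)}$. Aggregating these charges over all $u\in Z_w$ for each $w$ absorbs the factor $|Z_w|$ into the original cost and, combined with $|H_u|\geq 1$ and $|H_{n(u)}|+1\leq 2|H_{n(u)}|$, yields the desired bound $\|H'\|_p\leq 2\|H\|_p$.
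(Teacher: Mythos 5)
Your construction and feasibility argument are fine, but the cost bound has a genuine gap that your charging idea cannot close. The required property forces $|H_u'| = |H_{n(u)}'|+1$ for every leaf $u$, and you always take $H_{n(u)}' = H_{n(u)}\setminus Z$, i.e.\ you never shrink the neighbor's hub set. Consider a vertex $w$ with $m$ leaf neighbors $u_1,\dots,u_m$ and a path $w = v_1 - v_2 - \dots - v_k$ hanging off it, with a feasible labeling in which $H_{u_i} = \{u_i, w\}$, $w \in H_{v_j}$ for all $j$, and $H_w = \{v_1,\dots,v_k\}$ (adding hubs never breaks feasibility, so $H_w$ can legitimately be this large). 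Your $H'$ gives $|H_{u_i}'| = k+1$ for every $i$, so the new cost contains $m(k+1)^p$ while the old cost is roughly $k^p + m\cdot 2^p$ plus the path's contribution; the ratio is unbounded as $k,m\to\infty$. Your proposed fix only certifies that each leaf contributes at least $2^p$ (or one extra unit to $|H_w|$) to $\|H\|_p^p$ --- a budget of $2^p$ per leaf --- which cannot pay for a charge of $(k+1)^p$ per leaf. No charging scheme can rescue this particular $H'$, because the solution it outputs really does cost more than $2\|H\|_p$.

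The missing idea is that when $|H_{n(u)}|$ is much larger than $|H_u|$, you must \emph{replace} $H_{n(u)}$ by (essentially) the smaller set $H_u$, not copy $H_{n(u)}$ into $H_u$. This is exactly what the paper's proof does: it processes leaves one at a time, sets $B$ to be the smaller of $H_w\setminus\{u\}$ and $H_u\setminus\{u\}$, redefines $H_u' = B\cup\{u,w\}$ and $H_w' = B\cup\{w\}$, and substitutes $w$ for $u$ in any other hub set containing $u$ (this substitution is what keeps the relabeled $H_w$ feasible). A short case analysis then gives the pointwise bound $|H_v'|\le 2|H_v|$ for every vertex, from which $\|H'\|_p\le 2\|H\|_p$ follows for all $p$ without any aggregation argument. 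In the example above this correctly shrinks $H_w'$ to $\{w\}$ and leaves each $|H_{u_i}'|=2$.
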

\begin{proof}
Let $\{H_v\}$ be any feasible hub labeling. If the desired property already holds for every vertex of degree 1, then we are done. So let us assume that the property does not hold for some vertex $u \in Z$. Let $w = n(u)$ be its unique neighbor and let
$$B = \begin{cases}
         H_w \setminus \{u\}, & \mathrm{ if } \;\;|H_w \setminus \{u\}| \leq |H_u \setminus \{u\}|,\\
         H_u \setminus \{u\}, & \textrm{otherwise}.
   \end{cases}$$
We now set
\begin{itemize}
    \item $H_u' = B \cup \{u,w\}$.
    \item $H_w' = B \cup \{w\}$.
    \item $\forall v \in V \setminus \{u,w\}, \;H_v' = \begin{cases}
            H_v, & \mathrm{ if } \;u \notin H_v,\\
            (H_v \setminus \{u\}) \cup \{w\}, & \textrm{otherwise}.
   \end{cases}$
\end{itemize}
We first check the feasibility of $H'$. The pairs $\{u,w\}$, and $\{v,v\}$, for all $v \in V$, are clearly satisfied. Also, every pair $\{v,v'\}$ with $v,v' \notin \{u,w\}$ is satisfied, since $u \notin S_{vv'}$. Consider now a pair $\{u,v\}$, with $v \in V \setminus \{u,w\}$. If $u \in H_u \cap H_v$, we have $w \in H_u' \cap H_v'$. Otherwise, $\{u,v\}$ is covered with some vertex $z\in S_{uv} \setminus \{u\}$, and since $S_{uv} \setminus \{u\} = S_{wv}$, we have that $z \in H_u \cap H_v$ and $z \in H_w \cap H_v$.
It follows that $z\in H'_u \cap H'_v$. Now, consider a pair $\{w,v\}$, $v \in V \setminus \{u,w\}$. We have either $H_w' = (H_w \setminus \{u\}) \cup \{w\}$, which gives $H_w' \cap S_{wv} = H_w \cap S_{vw}$, or $H_w' = (H_u \setminus \{u\}) \cup \{w\}$. In the latter case, either $u \in H_v$ and so $w \in H_v'$, or $H_u \cap S_{uv} = H_u \cap S_{wv}$. It is easy to see that in all cases the covering property is satisfied.

We now argue about the cost of $H'$. We distinguish between the two possible values of $B$:
\begin{itemize}
    \setlength{\parskip}{0pt}
    \item $B = H_w \setminus \{u\}$: In this case, $|H_w'| \leq |H_w|$, since $w \in B$. If $u \in H_w$, then $|H_u'| = |H_w| \leq |H_u|$. Otherwise, it holds that $|H_w| \leq |H_u| - 1$, and so $|H_u'| = |H_w| + 1 \leq |H_u|$. For all $v \in V \setminus \{u,w\}$, it is obvious that $|H_v'| \leq |H_v|$.
    \item $B = H_u \setminus \{u\}$: If $w \in H_u$, then $|H_w'| = |H_u| - 1 < |H_w \setminus \{u\}| \leq |H_w|$, and $|H_u'| = |H_u|$. Otherwise, we must have $u \in H_w$, which means that $|H_u| < |H_w|$. Thus, $|H_w'| = |H_u| <  |H_w|$, and $|H_u'| = |H_u| + 1 \leq 2|H_u|$. Again, it is obvious that $|H_v'| \leq |H_v|$, for all $v \in V \setminus \{u,w\}$.
\end{itemize}
By the above case analysis, it is easy to see that we can apply the above argument to every vertex $u \in Z$, one by one, and in the end we will obtain a feasible hub labeling $H'$ that satisfies the desired properties, such that $|H'_u| \leq 2 \cdot |H_u|$ for every $u \in V$. Thus, for every $p \in [1, \infty]$, we have $\|H'\|_p \leq 2 \|H\|_p$.
\end{proof}

To make our construction slightly simpler, we now introduce a weighted variant of \HLp.
\begin{definition}[Weighted \HLp]
Let $G = (V,E, l)$ be an edge-weighted graph, $n = |V|$, and let $w: V \to \mathbb{R}_{>0}$ be a weight function that assigns positive weights to every vertex. Let $H$ be a feasible hub labeling for $G$. Then, its weighted $\ell_p$-cost is defined as
\begin{equation*}
    \left(\sum_{u \in V} \left(w_u \cdot |H_u|\right)^p \right)^{1/p}.
\end{equation*}
The $\ell_\infty$-cost is defined as $\max_{u \in V} (w(u) \cdot |H_u|)$.
\end{definition}

It is immediate that the weighted \HLp is a generalization of \HLp. We will now show that any $\alpha$-approximation algorithm for \HLp can give an $O(\alpha)$-approximation for the weighted \HLp when $p$ is fixed and the vertex weights are polynomially bounded.
\begin{lemma}
Let $G = (V,E,l)$ and suppose that we have an $\alpha$-approximation algorithm for the unweighted \HLp. Let $w: V \to \{1, ..., \mathrm{poly}(|V|\}$. Then, there exists a $(2e \cdot \alpha)$-approximation algorithm for the weighted \HLp, for any fixed $p \geq 1$.
\end{lemma}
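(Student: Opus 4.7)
My plan is to reduce the weighted \HLp instance on $G = (V, E, l)$ with polynomial vertex weights $w : V \to \{1, \ldots, \poly(n)\}$ to an ordinary (unweighted) \HLp instance on a slightly enlarged graph $G'$, run the given $\alpha$-approximation algorithm on $G'$, and then project the resulting labeling back to $G$. Because $p$ is a fixed constant and the weights are polynomial in $n$, quantities of the form $w_u^p$ remain polynomially bounded, which is what allows the reduction to be polynomial.

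First, I would build $G'$ by attaching to each vertex $u \in V$ a bundle of $N_u = \lceil w_u^p \rceil$ pendant (degree-1) vertices $u^{(1)}, \ldots, u^{(N_u)}$, connected to $u$ by edges of length small enough that they never lie on shortest paths between original vertices. Given any feasible weighted hub labeling $H^*$ on $G$ with weighted $\ell_p$-cost $W^*$, I extend it to a hub labeling $H'$ on $G'$ by setting $H'_u = H^*_u$ for $u \in V$ and $H'_{u^{(i)}} = H^*_u \cup \{u^{(i)}\}$ for every leaf. Feasibility in $G'$ is routine: the shortest path between a leaf $u^{(i)}$ and any other vertex simply extends a shortest path of $G$ incident to $u$ by one pendant edge, so any hub that covers the corresponding pair in $H^*$ still works in $G'$. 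Using $|H^*_u| \geq 1$ (so $(|H^*_u|+1)^p \leq 2^p |H^*_u|^p$) one obtains $\|H'\|_p \leq c_1(p) \cdot W^*$ for a constant depending only on $p$, and therefore $\mathrm{OPT}(G') \leq c_1 \cdot W^*$.

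Next, I would run the $\alpha$-approximation algorithm for unweighted \HLp on $G'$ to obtain a labeling $\bar{H}'$, and then invoke Lemma~\ref{degree_one_vertices_lemma} to obtain a labeling $\bar{H}''$ of $\ell_p$-cost at most $2\|\bar{H}'\|_p$ satisfying $\bar{H}''_{u^{(i)}} = \bar{H}''_u \cup \{u^{(i)}\}$ and $u^{(i)} \notin \bar{H}''_v$ for all $v \neq u^{(i)}$. Setting $\bar{H}_u := \bar{H}''_u$ for $u \in V$ yields a hub labeling on $G$: feasibility for pairs in $V \times V$ is inherited from $\bar{H}''$ because no leaf can serve as a hub for a pair of original vertices. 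The weighted cost is bounded by
\begin{equation*}
    W_{\bar{H}}^p \;=\; \sum_{u \in V} w_u^p |\bar{H}_u|^p \;\leq\; \sum_{u \in V} N_u \,(|\bar{H}''_u|+1)^p \;\leq\; \|\bar{H}''\|_p^p,
\end{equation*}
and chaining the estimates gives $W_{\bar{H}} \leq 2\alpha\, c_1\, W^*$.

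The main technical obstacle is pinning down the constant so as to obtain $2 c_1 = 2e$, rather than the $6$ that falls out of the naive bound $(|H^*_u|+1)^p \leq 2^p |H^*_u|^p$. The refinement I would use is to enlarge the gadget to $N_u = K \cdot \lceil w_u^p \rceil$ for a large integer $K = K(p)$ (still polynomial since $p$ is fixed), and invoke the elementary inequality $(1 + 1/m)^m \leq e$ applied with $m$ proportional to $K^{1/p}$; this allows one to drive the forward-direction constant $c_1$ arbitrarily close to $e$, while the reverse-direction factor of $2$ comes from Lemma~\ref{degree_one_vertices_lemma}. Combining these two factors gives the claimed $(2e)\alpha$ approximation.
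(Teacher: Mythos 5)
Your proposal is correct and follows essentially the same route as the paper: the identical pendant-vertex gadget of size $\Theta(w_u^p)$ (polynomial because $p$ is fixed and the weights are polynomially bounded), the same appeal to Lemma~\ref{degree_one_vertices_lemma} to normalize the approximate solution on $G'$, and the same two-sided cost comparison between $cost_p(G,\cdot,w)$ and $cost_p(G',\cdot)$. The only divergence is in the constant bookkeeping: the paper attaches $w(u)^p-1$ pendants and uses $(1+1/x)^p\le e^p$ to obtain the factor $e$ directly, whereas your $K$-scaling — whose actual mechanism is that the reverse direction gains a factor $K^{-1/p}$ that dilutes the additive terms, driving the effective forward constant toward $2$ rather than $e$ — also lands safely within the claimed $2e\cdot\alpha$.
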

\begin{proof}
Let $n = |V|$. We create a new graph as follows. We first create a copy of $G$, and for each vertex $u \in V$, with $w(u) > 1$, we attach $w(u)^p - 1$ vertices $b(u,1), ..., b(u, w(u)^p - 1)$ to it, each having degree exactly 1 with $u$ being its unique neighbor. The length of the new edges added is set to 1 (although that length is not important). Let $G' = (V', E', l')$ be the resulting graph. Since we have assumed that $w(u) \leq \mathrm{poly}(n)$, for every $u \in V$, and $p$ is fixed (and not part of the input), the size of $G'$ is polynomial in $n$.

We denote the unweighted $\ell_p$-cost of a hub labeling $H$ for a graph $G$ as $cost_p(G, H)$ and the weighted cost as $cost_p(G, H, w)$. Let $H$ be an optimal solution for $G$ for the weighted \HLp and $H'$ be an optimal solution for $G'$ for the unweighted \HLp. We will need the simple fact that for every $x \geq 1$ we have $\left(\frac{x + 1}{x} \right)^p \leq e^p$.

First, we define $H_u'' = H_u$ for every $u \in V$ and $H_{b(u,i)}'' = H_u \cup \{b(u,i)\}$ for every $u \in V$ and $i \in [w(u)^p - 1]$. It is easy to see that $H''$ is a feasible hub labeling for $G'$. We have
\begin{align*}
    cost_p(G', H'')^p &= \sum_{u \in V} |H_u|^p + \sum_{u \in V}\sum_{i = 1}^{w(u)^p - 1} (|H_u| + 1)^p \leq \sum_{u \in V} |H_u|^p + \sum_{u \in V}\sum_{i = 1}^{w(u)^p - 1} e^p |H_u|^p\\
                      &= \sum_{u \in V} |H_u|^p + e^p \sum_{u \in V} (w(u)^p - 1) |H_u|^p\\
                      &\leq e^p \sum_{u \in V} w(u)^p |H_u|^p = e^p \cdot cost_p(G, H, w)^p.
\end{align*}
Thus, $cost_p(G', H') \leq cost_p(G', H'') \leq e \cdot cost_p(G, H, w)$.

Suppose now that we have an $\alpha$-approximation algorithm for the unweighted \HLp. Then, given an instance $G = (V,E, l)$ with polynomially bounded integer weights $w$, we construct the graph $G'$ and run the algorithm on this graph, thus obtaining a hub labeling $H'''$ that is an $\alpha$-approximate solution for the unweighted \HLp for $G'$. Note that since $p$ is fixed and $w$ is polynomially bounded, the resulting graph $G'$ has polynomially many vertices. Using Lemma~\ref{degree_one_vertices_lemma}, we get a solution $\widetilde{H}$ such that $\widetilde{H}_{b(u,i)} = \widetilde{H}_u \cup \{b(u,i)\}$ for every $u \in V$ and $i \in [w(u)^p - 1]$, such that $cost_p(G', \widetilde{H}) \leq 2 \cdot cost_p(G', H''')$.  We observe that $\{\widetilde{H}_u\}_{u \in V}$ is a feasible hub labeling for $G$, and we have
\begin{align*}
    cost_p(G, \widetilde{H}, w)^p &= \sum_{u \in V} w(u) |\widetilde{H}_u|^p = \sum_{u \in V} |\widetilde{H}_u|^p + \sum_{u \in V} \sum_{i = 1}^{w(u)^p - 1} |\widetilde{H}_u|^p \\
                                  &\leq \sum_{u \in V} |\widetilde{H}_u|^p + \sum_{u \in V} \sum_{i = 1}^{w(u)^p - 1} (|\widetilde{H}_u| + 1)^p \\
                                  &= \sum_{u \in V} |\widetilde{H}_u|^p + \sum_{u \in V} \sum_{i = 1}^{w(u)^p - 1} |\widetilde{H}_{b(u,i)}|^p \\
                                  &=cost_p(G', \widetilde{H})^p \\
                                  &\leq 2^p \cdot cost_p(G', H''')^p \\
                                  &\leq (2\alpha)^p cost_p(G', H')^p.
\end{align*}
We conclude that $cost_p(G, \widetilde{H}, w)^p \leq (2e \cdot \alpha)^p \cdot cost_p(G, H, w)^p$, which implies that $cost_p(G, \widetilde{H}, w) \leq (2e \cdot \alpha) \cdot cost_p(G, H, w)$. Thus, we obtain a $(2e \cdot \alpha)$-approximation for the weighted \HLp.
\end{proof}

The above lemma will allow us to reduce Set Cover to the weighted \HL1. Thus, if we assume that we have a $o(\log n)$-approximation algorithm for the unweighted \HL1, this would imply an $o(\log n)$-approximation algorithm for the weighted \HL1, which would further imply an $o(\log n)$-approximation for Set Cover. And this will give a contradiction. We are now ready to prove Theorem \ref{HL1_hardness_theorem}.

\begin{proof}[Proof of Theorem \ref{HL1_hardness_theorem}]
Given an unweighted Set Cover instance, we create a graph $G = (V, E)$ and a corresponding weighted \HL1 instance. We fix two integer parameters $A$ and $B$ (whose values we specify later) and do the following (see Figure \ref{HL_1 fig}):
\begin{itemize}
	\item The two layers directly corresponding to the Set Cover instance are the $2^{nd}$ and the $3^{rd}$ layer. In the $2^{nd}$ layer we introduce one vertex for each set $S_i \in \mathcal{S}$, whose weight is 1, and in the $3^{rd}$ layer we introduce one vertex for each element $x_j \in \mathcal{X}$, whose weight is $B$. We then connect $x_j$ to $S_i$ if and only if $x_j \in S_i$. 
    \item The $1^{st}$ layer contains $A$ vertices $\{r_1, ..., r_A\}$, each of weight $B$. Each vertex $r_i$ is connected to all vertices $\{S_1, ..., S_m\}$.
    \item Finally, we introduce a single vertex $q$ of weight 1 in the $4^{th}$ layer, which is connected to every vertex $x_i$ of the $3^{th}$ layer.
\end{itemize}
We also assign lengths to the edges. The (black) edges $(q, x_i)$ have length $\varepsilon < 1 / 2$ for every $x_i \in \mathcal{X}$, while all other (brown) edges have length 1. We will show that by picking the parameters $A$ and $B$ appropriately, we can get an $O(f(\poly(n))$-approximation for the Set Cover instance, given an $f(n)$-approximation for the weighted \HL1.

\begin{figure}[h]
\begin{center}
\scalebox{1.30}{\input{./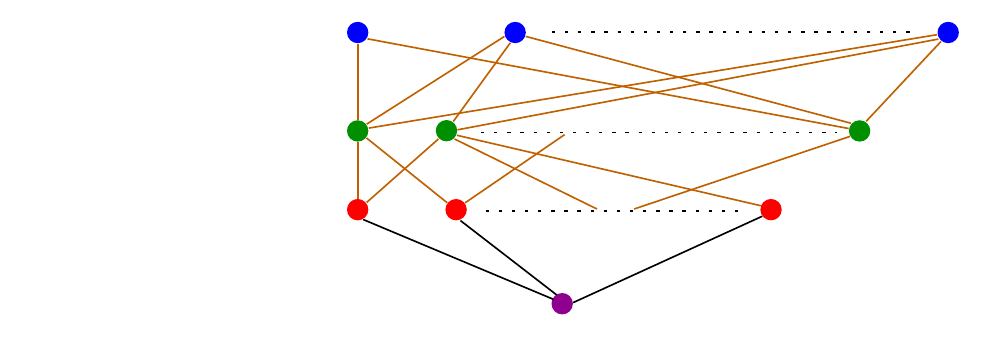_t}}
\caption{The \HL1 instance corresponding to an arbitrary unweighted Set Cover instance \label{HL_1 fig}.}
\end{center}
\end{figure}
We will now define a solution for this HL instance, whose cost depends on the cost of the optimal Set Cover. Let $I \subseteq [m]$ be the set of indices of an optimal Set Cover solution. We define the following HL solution, given in the table below. We use the notation $S(x_i)$ to denote an arbitrarily chosen set of the optimal Set Cover solution that covers $x_i$.

\begin{equation*}
\begin{array}{|c|c|}
\hline
    \mathbf{Layer} & \mathbf{Hubs}\\
    \hline
    1^{st} & \textrm{For every }i \in [A], H_{r_i} = \{r_i\} \cup \{S_j: j \in I\}\\ \hline
    2^{nd} & \textrm{For every }i \in [m], \;\; H_{S_i} = \{S_i, q\} \cup \{r_1, ..., r_A\} \cup \{x_j: x_j \in S_i\} \\ \hline
    3^{rd} & \textrm{For every }i \in [n], \;\; H_{x_i} = \{x_i, q, S(x_i)\} \\ \hline
    4^{th} & H_q = \{q\} \cup \{r_1, ..., r_A\}\\
    \hline
\end{array}
\end{equation*}
\\\\
We argue that the above solution is a feasible solution for the constructed instance. To this end, we consider all possible pairs of vertices for all layers. The notation ``$i$ - $j$" means that we check a pair with one vertex at layer $i$ and the other at layer $j$. The pairs $\{u,u\}$ are trivially satisfied, so we will not consider them below:
\begin{itemize}
    \setlength{\parskip}{0pt}
    \item 1 - 1: $\{r_i, r_j\}$. The common hub is any $S_j$ with $j \in I$.
    \item 1 - 2: $\{r_i, S_j\}$. The common hub is $r_i$.
    \item 1 - 3: $\{r_i, x_j\}$. The common hub is $S(x_j)$.
    \item 1 - 4: $\{r_i, q\}$. The common hub is $r_i$.
    \item 2 - 2: $\{S_i, S_j\}$. The common hub is any $r_t$.
    \item 2 - 3: $\{S_i, x_j\}$. If $x_j \in S_i$, then $x_j \in H_{S_{i}}$. If $x_j \notin S_i$, then $q \in H_{S_i} \cap H_{x_j}$.
    \item 2 - 4: $\{S_i, q\}$. The common hub is $q$.
    \item 3 - 3: $\{x_i, x_j\}$. The common hub is $q$.
    \item 3 - 4: $\{x_j, q\}$. The common hub is $q$.
\end{itemize}

Thus, the above solution is indeed a feasible one. We compute its weighted $\ell_1$-cost, which we denote as $\textrm{COST}_1$ (each term from left to right corresponds to the total cost of the vertices of the corresponding layer):
\begin{align*}
	\textrm{COST}_1 &\leq AB \cdot (OPT_{SC} + 1)+  m \cdot (A + n + 2) + B \cdot n \cdot 3 + (A + m + 1) \\
                    &= O(AB \cdot OPT_{SC})  + O(Am + mn) + O(Bn) + O(m + A).
\end{align*}
We set $A = B = \lceil\max\{m, n\}^{3/2} \rceil$. Then, the total cost is dominated by the term $AB \cdot OPT_{SC}$, and so we get that the cost $OPT$ of the optimal weighted $HL_1$ solution is at most $c \cdot AB \cdot OPT_{SC}$, for some constant $c$. It is also easy to see that $OPT \geq A \cdot B$.

Let $N = A + m + n + 1 = O(\max\{m, n\}^{3/2})$ denote the number of vertices of the constructed graph. Assuming that we have an $f(n)$-approximation for the weighted \HL1, we can get a solution $H'$ of cost $cost_1(G, H', w) \leq c \cdot f(N) \cdot AB \cdot OPT_{SC}$. We will show that we can extract a feasible Set Cover solution of cost at most $O\left(\frac{cost_1(G, H', w)}{AB} \right)$.

To extract a feasible Set Cover, we first modify $H'$. We add $\{r_1, ..., r_A\}$ to the hub set of $q$, and $S_1$ to the hub set of every $r_i$, $i \in [A]$, thus increasing the weighted cost by at most $AB + A$. Thus, we end up with a solution $H''$ whose weighted $\ell_1$-cost is at most $c \cdot f(N) \cdot AB \cdot OPT_{SC} + AB + A \leq c' \cdot f(N) \cdot AB \cdot OPT_{SC}$.
%In addition to that, we add $W$ to the hub set of any vertex of the $4^{th}$ and $5^{th}$ layer that doesn't contain it, increasing the cost by at most $n(B + 1)$. We also add $\{S_1, ..., S_m\} \cup \{r_1, ..., r_A\}$ to the hub set of $W$, increasing the cost by at most $m + A$, and finally we add $\{W\} \cup \{r_1, ..., r_A\} \cup \{x_j: x_j \in S_i\}$ to the hub set of $S_i$, for every $i \in [m]$, increasing the cost by at most $m(1 + A + n)$. Thus, we end up with a solution $H'$ of cost at most $C(H) + n(B + 1) + m + A + m(1 + A + n)$.
We now look at every vertex $r_i$ of the $1^{st}$ layer for which we have $x_j \in H_{r_i}''$, for some $j \in [n]$. The hub $x_j$ can only be used for the pair $\{r_i, x_j\}$. In that case, we can remove $x_j$ from $H_{r_i}''$ and add $r_i$ to $H_{x_j}''$. The cost of the solution cannot increase, and we again call this new solution $H''$.

We are ready to define our Set Cover solution. For each $i \in [A]$, we define $F_i = H_{r_i}'' \cap \{S_1, ..., S_m\}$. Let $Z_i = |\mathcal{X} \setminus \bigcup_{S \in F_i} S|$ be the number of uncovered elements. If $Z_i = 0$, then $F_i$ is a valid Set Cover. If $Z_i > 0$, then we cover the remaining elements using some extra sets (at most $Z_i$ such sets). At the end, we return $\min_{i \in [A]} \{|F_i| + Z_i\} $.

In order to analyze the cost of the above algorithm, we need the following observation. Let us look at $H_{r_i}''$, and an element $x_j$ that is not covered. By the structure of $H''$, this means that $r_i \in H_{x_j}''$. Thus, the number of uncovered elements $Z_i$ contributes a term $B \cdot Z_i$ to the weighted cost of $H''$.  For each $i$, the number of uncovered elements $Z_i$ implies an increase in the cost  that is ``disjoint" from the increase implied by $Z_j$, for $j \neq i$, and so the total weighted cost is at least $\sum_{i = 1}^A B \cdot (|H_{r_i}''| + Z_i)$. This means that there must exist a $j \in [A]$ such that
\begin{equation*}
	|H_{r_j}''| + Z_j \leq \frac{cost_1(G, H'', w)}{AB}.
\end{equation*}
We pick the Set Cover with cost at most $\min_{i \in [A]} \{|F_i| + Z_i\} \leq \min_{i \in [A]} \{|H_{r_i}''| + Z_i\}$, and so we end up with a feasible Set Cover solution of cost at most
\begin{equation*}
	\frac{c' \cdot f(N) \cdot AB \cdot OPT_{SC}}{AB} = O(f(\mathrm{poly}(n))) \cdot OPT_{SC}.
\end{equation*}
\end{proof}

%%%%%%%%%%%%%%%%%%%%%%%%%%%%%%%%%%%%%%%%%%%%%%%%%%%%%%%%%%%%%%%%%%%%%%%%%%%%%%%%%%%%%%%%%%%%
\subsection{$\Omega(\log n)$-hardness for \HL\infty}\label{sec:infinity-hardness}

In this section, we will show that it is \NP-hard to approximate \HL\infty to within a factor better than $\Omega(\log n)$. We will again use the hardness results for Set Cover.

\begin{theorem}
Given an arbitrary unweighted Set Cover instance $(\mathcal{X}, \mathcal{S})$, $|\mathcal{X}| = n$, $|\mathcal{S}| = m = \poly(n)$, with optimal value $OPT_{SC}$, and an $f(n)$-approximation algorithm for \HL\infty, there is an algorithm that returns a solution for the Set Cover instance with cost at most $O \left(f(\mathrm{poly}(n)) \right) \cdot OPT_{SC}$.
\end{theorem}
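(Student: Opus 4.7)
The plan is to mimic the $HL_1$ hardness reduction from Section~\ref{sec:l1-hardness}, adapting both the gadget and the counting argument to the $\ell_\infty$ objective. Since $\|H\|_\infty=\max_u|H_u|$, the reduction must be engineered so that in every feasible labeling of the constructed graph the largest hub set is tightly governed by $OPT_{SC}$ (rather than by, say, the degrees $|S_j|$), so that an $f(N)$-approximate labeling reveals a Set Cover of size $O(f(N))\cdot OPT_{SC}$.

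I would start from the same four-layer architecture as in Section~\ref{sec:l1-hardness}: $A$ focal vertices $r_1,\dots,r_A$ each joined to all set vertices $S_j$; the $S_j$'s joined to the element vertices $x_i$ iff $x_i\in S_j$; and a vertex $q$ joined to every $x_i$ by short edges of length $\varepsilon<1/2$, with all other edges of length $1$. In the $\ell_\infty$ setting the parameter $A$ plays a different role than before: it is not used to amplify total cost, but rather to \emph{rule out} the shortcut of placing $r_k\in H_{x_i}$ for each uncovered element, because doing so for many $k$ simultaneously would blow up $|H_{x_i}|$. The counting argument is that, since $|H_{x_i}|\le\|H\|_\infty=C$, we have $\sum_{i,k}\mathbf{1}[r_k\in H_{x_i}]\le nC$, so by averaging over $k$ some $r_k$ has $r_k\in H_{x_i}$ for at most $nC/A$ values of $i$; choosing $A=n$ (or any $A\ge n$) gives a specific $r_k$ such that all but $O(C)$ elements are covered by sets in $H_{r_k}\cap\{S_1,\dots,S_m\}$.

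For the forward direction, given a Set Cover $F$ of size $k=OPT_{SC}$, the natural feasible labeling has $H_{r_i}=\{r_i\}\cup F$ of size $k+1$, $H_{x_i}$ of constant size (containing $x_i$, $q$, and $S(x_i)\in F$), $H_q$ small, and $H_{S_j}$ containing $S_j$, $q$, the $r_i$'s, together with—most delicately—whatever is needed to cover pairs $(S_j,x_i)$ with $x_i\in S_j$. To prevent $|H_{S_j}|$ from scaling with $|S_j|$ I would subdivide each edge $(S_j,x_i)$ by an auxiliary vertex that can serve as the hub for that pair, thereby decoupling $|H_{S_j}|$ from $|S_j|$. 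For the converse direction, given an $f(N)$-approximate labeling of $\ell_\infty$-cost $C$, a cheap normalization forces the intended structure on the hub sets of the $r_k$'s; the pigeonhole step above then identifies a good $r_k$, and $F_r:=H_{r_k}\cap\{S_1,\dots,S_m\}$ together with at most $O(C)$ patch sets (one per uncovered element) yields a Set Cover of size $O(C)=O(f(\mathtt{poly}(n)))\cdot OPT_{SC}$.

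The main obstacle is the gadget design, which is genuinely harder than in the $\ell_1$ case. Here I must simultaneously keep $|H_{r_k}|$, $|H_{S_j}|$, $|H_{x_i}|$, $|H_q|$, and the hub sets of all auxiliary vertices introduced by the subdivisions all bounded by $O(OPT_{SC})$ in the intended labeling, while preserving the shortest-path structure that makes Set Cover hard. Verifying that every pairwise covering constraint in the modified graph is simultaneously satisfiable by a small-$\ell_\infty$ labeling—especially cross-layer pairs such as $(r_k,S_j)$, $(S_i,x_j)$ with $x_j\notin S_i$, and the many new pairs among the subdividing vertices—is where the care is really needed, and it is this decoupling step that has no analogue in the $\ell_1$ reduction, where large $|S_j|$ is absorbed into the dominant $A\cdot B\cdot OPT_{SC}$ term.
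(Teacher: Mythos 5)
There is a genuine gap, and it sits exactly where you flag your ``main obstacle'': the forward direction of your gadget cannot work, and no amount of edge subdivision fixes it. In any construction where the focal vertices $r_1,\dots,r_A$ are directly adjacent to all set vertices $S_1,\dots,S_m$ (which your averaging step needs, with $A\geq n$, so that the pairs $(r_k,x_i)$ can only be covered by endpoints or by some $S_j\ni x_i$), each edge $(r_k,S_j)$ is the unique shortest path between its endpoints, so one endpoint must lie in the other's hub set. Summing over the $Am$ such edges and pigeonholing over the $A+m$ vertices involved forces $\|H\|_\infty\geq Am/(A+m)=\Omega(\min(n,m))$ for \emph{every} feasible labeling, independently of $OPT_{SC}$. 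Hence the optimum of your constructed instance is $\Omega(\min(n,m))$ rather than $O(OPT_{SC})$, an $f(N)$-approximation only guarantees cost $C=O(f(N)\cdot\min(n,m))$, and your (otherwise sound) extraction of a cover of size $O(C)$ yields something no better than the trivial cover of size $n$. The same additive-baseline problem recurs for the pairs $(S_i,S_j)$ and $(S_j,q)$; it is intrinsic to the $\ell_\infty$ objective, where a single unavoidable dense layer swamps the $OPT_{SC}$-dependent term instead of being absorbed into a dominant product as in the $\ell_1$ reduction.

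The paper's proof resolves this with a multiplicative amplification that your proposal lacks. It builds a complete bipartite graph on super-nodes $A\times B$ with $|A|=|B|=K=n^2$, places $K$ query vertices $r_{u,1},\dots,r_{u,K}$ inside each $u\in A$, a fresh copy of the universe inside each $v\in B$, and a fresh copy $\mathcal{S}_{uv}$ of the set system on each super-edge, plus apex vertices $q_A,q_B,q_S$. An edge/vertex pigeonhole shows every feasible labeling has $\|H\|_\infty\geq M/N=\Omega(n^2)$, while the intended labeling has $\|H\|_\infty=\Theta(n^2\cdot OPT_{SC})$ because each $H_{r_{u,i}}$ must hold a set cover for each of the $B=n^2$ universe copies. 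The extraction then averages $|C_{u,v,i}|+Z_{u,v,i}$ over all $ABK$ choices and divides by $n^2$, which exactly cancels the baseline and recovers a cover of size $O(f(N))\cdot OPT_{SC}$. Your averaging-over-$r_k$ step is the right instinct (it is a one-dimensional version of the paper's $\E[|C_{u,v,i}|+Z_{u,v,i}]$ computation), but without the replication that makes the intended maximum hub set scale as $(\text{baseline})\times OPT_{SC}$, the ratio you need is not there to extract.
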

\begin{proof}
Let $\mathcal{X} = \{x_1, ..., x_n\}$ and $\mathcal{S} = \{S_1, ..., S_m\}$, $S_i \subseteq \mathcal{X}$, be a Set Cover instance. We will construct an instance of \HL\infty, such that, given an $f(n)$-approximation algorithm for it, we will be able to solve the Set Cover instance within a factor of $O \left(f(O(n^4m)) \right)$. We now describe our construction:
\begin{itemize}
    \setlength{\parskip}{0pt}
    \item We introduce a complete bipartite graph $(A,B,E)$. By slightly abusing notation, we denote $|A| = A$ and $|B| = B$, where $A$ and $B$ are two parameters to be set later on.
    \item Each vertex $u \in A$ ``contains'' $K$ vertices $\{r_{u,1}, ..., r_{u,K}\}$.
    \item Each vertex vertex $v \in B$ ``contains" a copy of the universe $\{x_{v,1}, ..., x_{v,n}\}$.
    \item Each edge $(u,v)$ is replaced by an intermediate layer of vertices $\mathcal{S}_{uv} = \{S_{uv,1}, ..., S_{uv,m}\}$, which is essentially one copy of $\mathcal{S}$. We then connect every vertex $r_{u,i}$, $i \in [K]$, to every vertex $S_{uv,j}$, $j \in [m]$, and we also connect each $S_{uv,j}$ to $x_{v,t}$, if $x_t \in S_j$. All these edges (colored red in the figure) have length 1.
    \item Finally, we introduce three extra vertices $q_A$ and $q_B$ and $q_S$, and the edges $(q_A, r_{u,i})$, for all $u\in A$ and $i \in [K]$, the edges $(q_B, x_{v,j})$, for all $v \in B$ and $j \in [n]$, and the edges $(q_S, S_{uv,j})$, for all $u \in A$, $v \in B$ and $j \in [m]$. All these edges (colored black in the figure) have length $\varepsilon < 1$.
\end{itemize}
The construction is summarized in Figures \ref{HL infty fig1}, \ref{HL infty fig2}.

\begin{figure}[ht!]
\centering
\begin{subfigure}{.5\textwidth}
  \centering
\scalebox{0.5}{\input{./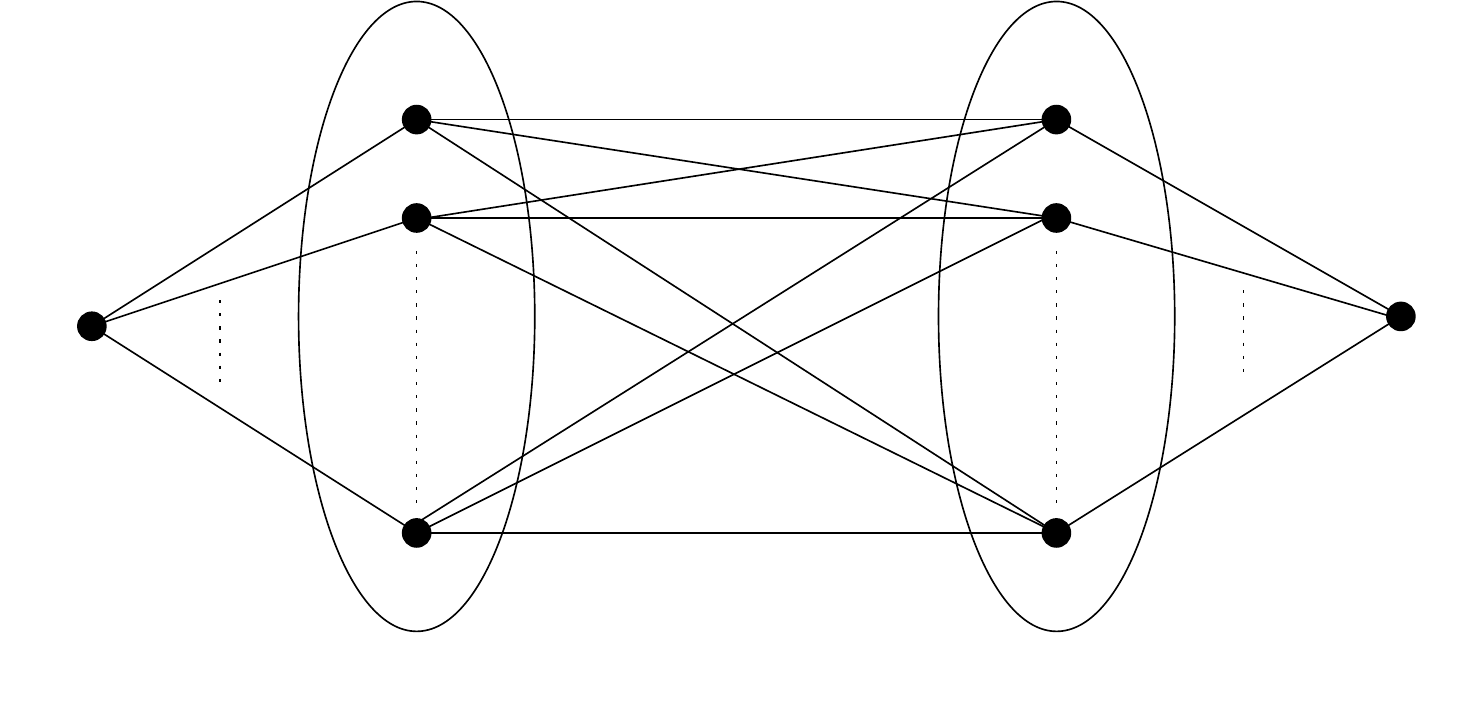_t}}
  \caption{The general structure of the graph.}
\label{HL infty fig1}
\end{subfigure}%
\begin{subfigure}{.5\textwidth}
  \centering
  \scalebox{0.4}{\input{./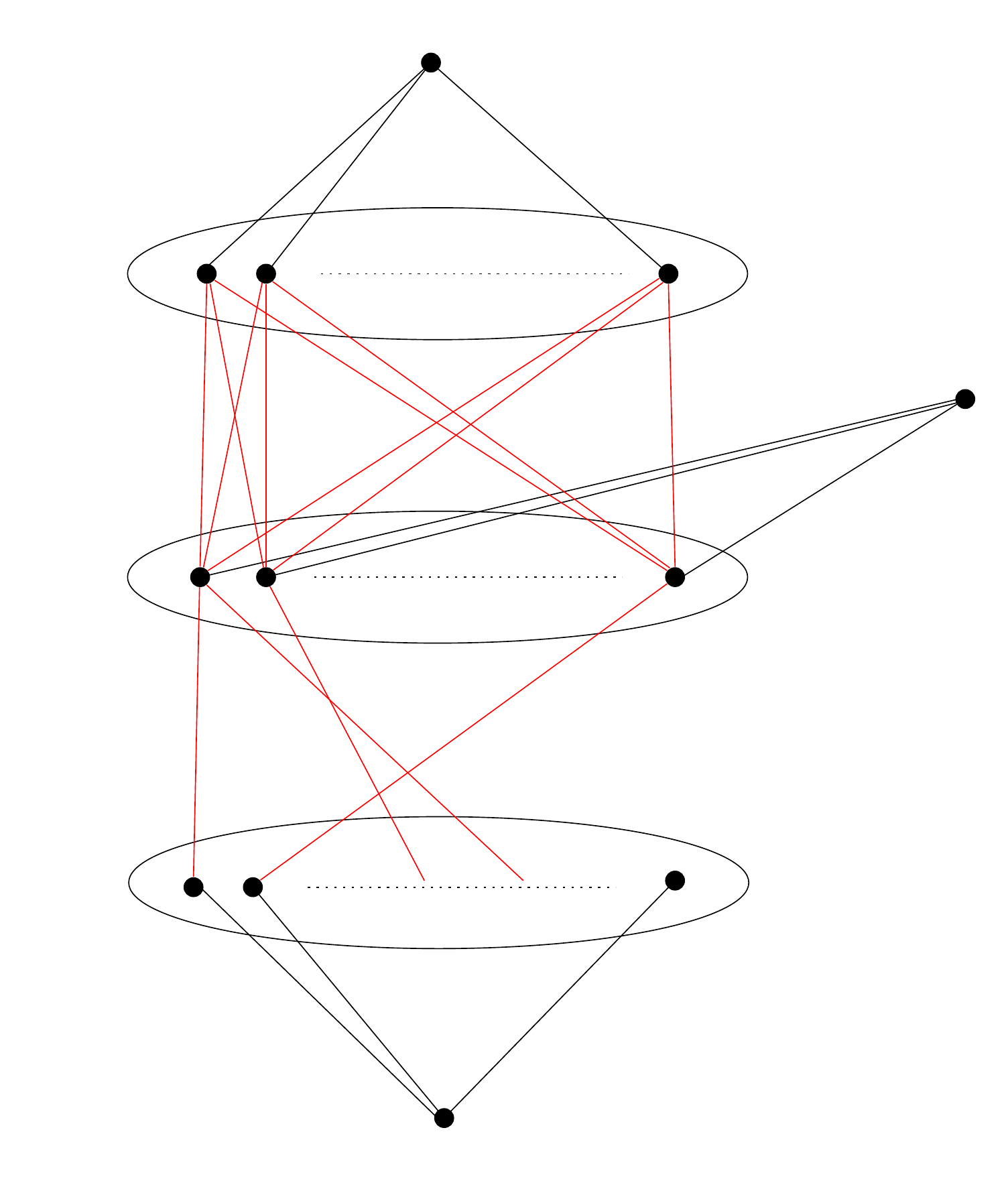_t}}
  \caption{A closer look at an edge $(u,v) \in E$.}
\label{HL infty fig2}
\end{subfigure}
\end{figure}

In the resulting construction, the number of vertices, denoted by $N$, is $N = AK + Bn + ABm + 3$, and the number of edges, denoted by $M$, is at least $M \geq AB(Km + m) + AK + ABm + Bn$. Let $OPT$ denote the cost of an optimal \HL\infty solution $H$ for this instance. Then, by a standard pigeonhole principle argument, and since every edge is a unique shortest path, we get that $OPT \geq \frac{M}{N}$. We now set the parameters, as follows: $A = B = K = n^2$. With these values, we have $N = \Theta(n^4 \cdot m)$, $M = \Omega(n^6 \cdot m)$ and $OPT = \Omega(n^2)$.

We will describe an intended feasible solution for this instance, that will give an upper bound on OPT. Let $I \subseteq[m]$ denote an optimal Set Cover of our original Set Cover instance, and let $I_j \in I$ denote the index of an arbitrarily chosen set of the optimal solution that covers $x_j$. The HL solution is the following:
\begin{itemize}
    \setlength{\parskip}{0pt}
	\item $H_{r_{u,i}} = \{r_{u,i}\} \cup \left( \bigcup_{{v \in B}} \{ S_{uv,j}: j \in I \} \right) \cup \{q_A, q_B, q_S\}$, for $u \in A$ and $i \in [K]$.
    \item $H_{x_{v,j}} = \{x_{v,j}\} \cup \left( \bigcup_{u \in A} \{S_{uv,I_j}\} \right) \cup \{q_A, q_B, q_S\}$, for $v \in B$ and $j \in [n]$.
    \item $H_{S_{uv,t}} = \{ S_{uv,t} \} \cup \{r_{u,1}, ..., r_{u,K} \} \cup \{ x_{v,1}, ..., x_{v,n} \} \cup \{q_A, q_B, q_S\}$, for $u\in A$, $v \in B$ and $t \in [m]$.
    \item $H_{q_t} = \{q_A, q_B, q_S\}$, for $t \in \{A,B,S\}$.
\end{itemize}
We now compute the sizes of these hub sets. We have:
\begin{itemize}
    \setlength{\parskip}{0pt}
	\item $|H_{r_{u,i}}| = B |I| + 4 = \Theta(n^2 \cdot |I|)$, for $u \in A$ and $i \in [K]$.
    \item $|H_{x_{v,j}}| = A + 4 = \Theta(n^2)$, for $v \in B$ and $j \in [n]$.
    \item $|H_{S_{uv,t}}| = K + n + 4 = \Theta(n^2)$, for $u\in A$, $v \in B$ and $t \in [m]$.
    \item $|H_{q_t}| = 3$, for $t \in \{A,B,S\}$.
\end{itemize}
Thus, we get that the value of the above solution is $\|H\|_\infty = Val = \Theta(n^2 \cdot |I|)$. We now show that the above is indeed a feasible solution. For that, we consider all possible pairs of vertices:
\begin{itemize}
    \setlength{\parskip}{0pt}
	\item $r_{u,i}$ - $r_{v,j}$: The common hub is $q_A$.
    \item $r_{u,i}$ - $S_{uv,j}$: The common hub is $r_{u,i}$.
    \item $r_{u,i}$ - $S_{wv,j}$, $w \neq u$, $v \neq u$: The common hub is $q_S$.
    \item $r_{u,i}$ - $x_{v,j}$: The common hub is $S_{uv,I_j}$.
    \item $r_{u,i}$ - $q_t$, for $t \in \{A,B,S\}$: The common hub is  $q_t$.
    \item $S_{uv,i}$ - $S_{u'v',j}$: The common hub is $q_S$.
    \item $S_{uv,i}$ - $x_{v,j}$:  The common hub is $x_{v,j}$.
    \item $S_{uv,i}$ - $x_{v',j}$, $u \neq v'$, $v \neq v'$: The common hub is $q_B$ (or $q_S$).
    \item $S_{uv,i}$ - $q_t$, for $t \in \{A,B,S\}$: The common hub is $q_t$.
    \item $x_{v,i}$ - $x_{v',j}$: The common hub is $q_B$.
    \item $x_{v,j}$ - $q_t$, for $t \in \{A,B,S\}$: The common hub is $q_t$
    \item $q_t$ - $q_{t'}$, for $t, t' \in \{A,B,S\}$: The common hub is $q_{t'}$.
\end{itemize}
Thus, the proposed solution is indeed a feasible solution. Assuming now that we have an $f(n)$-approximation algorithm for \HL\infty, we can obtain a solution $H'$ of cost $\|H'\|_\infty \leq f(N) \cdot OPT \leq c \cdot f(N) \cdot n^2 \cdot |I|$. We will now show that we can extract a feasible solution for the original Set Cover instance, of cost $O(f(N)) \cdot |I|$. As a reminder, we have already proved that $\|H'\|_\infty = \Omega(n^2)$. We first transform $H'$ to a solution $H''$ that will look more like our intended solution, as follows:
\begin{itemize}
    \setlength{\parskip}{0pt}
	\item $H''_{S_{uv},t} := H_{S_{uv},t}' \cup \{r_{u,1}, ..., r_{u,K} \} \cup \{ x_{v,1}, ..., x_{v,n} \} \cup \{q_A, q_B, q_S\}$, for $u\in A$, $v \in B$ and $t \in [m]$. We have $|H_{S_{uv},t}''| \leq |H_{S_{uv},t}'| + K + n + 3 \leq \|H'\|_\infty + n^2 + n + 3 = O(\|H'\|_\infty)$.
    \item $H_{q_t}'' := H_{q_t}' \cup \{q_A, q_B, q_S\}$, for $t \in \{A,B,S\}$. We have $|H_{q_t}''| \leq |H_{q_t}'| + 3 = O(|H_{q_t}'|) = O(\|H'\|_\infty)$.
    \item We now look at $H_{r_{u,i}}'$. For every $x_j \in \mathcal{X}$, we (arbitrarily) pick a set $S(x_j) \in \mathcal{S}$ with $x_j \in S(x_j)$, that we will use to cover it. Now, if $x_{v,j} \in H_{r_{u,i}}'$, we remove $x_{v,j}$ from $H_{r_{u,i}}'$ and add $S_{uv}(x_j)$ to $H_{r_{u,i}}'$ (here we slightly abuse notation; the vertex $S_{uv}(x_j)$ corresponds to the vertex $S_{uv, t}$ where $t$ is the index of the set $S(x_j)$). This doesn't change the size of $H_{r_{u,i}}'$. We also add $S_{uv}(x_j)$ to the hub set of $x_{v,j}$. This increases the size of $H_{x_{v,j}}'$ by 1. The crucial observation here is that since we have decided in advance which set we will use to cover $x_j$, then $|H_{x_{v,j}}'|$ can only increase by 1, for every edge $(u,v)$. Thus, the total increase in $|H_{x_{v,j}}'|$ is at most $A$, i.e.~$|H_{x_{v,j}}''| \leq |H_{x_{v,j}}'| + n^2 = O(\|H'\|_\infty)$.
\end{itemize}
The above transformed solution, as shown, has the same (up to constant factors) cost as the solution that the algorithm returns, i.e. $\|H''\|_\infty = O(\|H'\|_\infty) = O(f(N)) \cdot n^2 \cdot |I|$, and is clearly feasible.

In order to recover a good Set Cover solution, we look at the sets $H_{r_{u,i}}'' \cap \mathcal{S}_{uv}$. Each such intersection can be viewed as a subset $C_{u,v,i}$ of $\mathcal{S}$. Let $Z_{u,v,i}$ denote the number of elements that are not covered by $C_{u,v,i}$, i.e. $Z_{u,v,i} = |\mathcal{X} \setminus (\bigcup_{S \in C_{u,v,i}} S)|$. Our goal is to show that there exists a $\{u,v,i\}$ such that $|C_{u,v,i}| + Z_{u,v,i} = O(\|H''\|_\infty / n^2)$. Since there is a polynomial number of choices of $\{u,v,i\}$, we can then enumerate over all choices and find a Set Cover with cost $O(f(N)) \cdot |I|$.

To prove that such a good choice exists, we will make a uniformly random choice over $\{u,v,i\}$, and look at the expected value $\E \left[|C_{u,v,i}| + Z_{u,v,i} \right]$. We have $\E\left[|C_{u,v,i}| + Z_{u,v,i}\right] = \E\left[|C_{u,v,i}|\right] + \E[Z_{u,v,i}]$. We look separately at the two terms. We make the following 2 observations:
\begin{equation*}
	\sum_{v \in B} |C_{u,v,i}| \leq |H_{r_{u,i}}''| = O(\|H'\|_\infty),
\end{equation*}
and
\begin{equation*}
	\sum_{u \in A} \sum_{i \in K} Z_{u,v,i} \leq \sum_{j \in [n]} |H_{x_{v,j}}''| = n \cdot O(\|H'\|_\infty).
\end{equation*}
The second observation follows from the fact that for any given $r_{u,i}$ and edge $(u,v)$, the uncovered elements $x_{v,j}$ must have $r_{u,i} \in H_{x_{v,j}}''$. With these, we have
\begin{equation*}
	\E[|C_{u,v,i}|] = \frac{1}{ABK} \sum_{u \in A, i \in [K]} \sum_{v \in B} |C_{u,v,i}| \leq \frac{1}{ABK} \cdot A K \cdot O(\|H'\|_\infty) = O(\|H'\|_\infty / B).
\end{equation*}
Similarly,
\begin{equation*}
	\E[Z_{u,v,i}] = \frac{1}{ABK} \sum_{v \in B} \sum_{u \in A} \sum_{i \in K} Z_{u,v,i} \leq \frac{1}{ABK} \cdot B \cdot n \cdot O(\|H'\|_\infty) = \frac{n}{AK} \cdot O(\|H'\|_\infty).
\end{equation*}
Thus, we get that $\E\left[|C_{u,v,i}| + Z_{u,v,i}\right] = \left(\frac{1}{B} + \frac{n}{AK} \right) \cdot O(\|H'\|_\infty) = O(\|H'\|_\infty / n^2) = O(f(N)) \cdot |I|$. This means that there exists a choice of $\{u,v,i\}$ such that the corresponding Set Cover has size $O(f(N)) \cdot |I|$. As already mentioned, there are polynomially many choices, so we can enumerate them and find the appropriate $\{u,v,i\}$, and, thus, recover a Set Cover solution for our original Set Cover instance of cost $O(f(N)) \cdot |I|$, where, as already stated, $N = \Theta(n^4 \cdot m)$.
\end{proof}

\begin{corollary}\label{cor:infinity-hardness}
It is \NP-hard to approximate \HL\infty to within a factor better than $\Omega(\log n)$.
\end{corollary}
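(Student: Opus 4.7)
The plan is to derive the corollary as an immediate consequence of the reduction in the preceding theorem, combined with the known $\Omega(\log n)$ inapproximability of Set Cover (Theorem~\ref{Set-Cover-Hardness}). There are no further geometric or combinatorial ingredients needed; the proof is purely a chaining of reductions.

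First I would suppose, for the sake of contradiction, that \HL\infty admits an $f(n)$-approximation algorithm with $f(n) = o(\log n)$. By the theorem just proved, given any unweighted Set Cover instance $(\X, \F)$ with $|\X| = n$ and $|\F| = m = \poly(n)$, we can construct an \HL\infty instance on $N = \Theta(n^4 m) = \poly(n)$ vertices, run the assumed approximation algorithm on it, and extract a feasible set cover of cost at most $O(f(N)) \cdot OPT_{SC}$.

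Next I would observe that $f(N) = f(\poly(n)) = o(\log \poly(n)) = o(\log n)$, since $\log \poly(n) = \Theta(\log n)$. Hence the overall approximation factor achieved for Set Cover is $o(\log n)$. By Theorem~\ref{Set-Cover-Hardness}, approximating Set Cover within any factor smaller than $(1-\alpha)\ln n$ is \NP-hard for every $\alpha > 0$, which directly contradicts the existence of a $o(\log n)$-approximation unless $\P = \NP$. Therefore no $f(n) = o(\log n)$-approximation for \HL\infty can exist (assuming $\P \neq \NP$), which is exactly the claim that approximating \HL\infty within a factor better than $\Omega(\log n)$ is \NP-hard.

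The only step requiring any care is verifying that the reduction indeed blows up the instance size only polynomially, so that $f(N) = f(\poly(n))$ still falls in $o(\log n)$; but this is already built into the construction (we chose $A = B = K = n^2$, giving $N = \Theta(n^4 m)$). There is no genuine obstacle here — the corollary is essentially a restatement of the preceding theorem together with Theorem~\ref{Set-Cover-Hardness}.
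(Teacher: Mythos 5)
Your proposal is correct and follows exactly the paper's own argument: assume an $f(n) = o(\log n)$-approximation for \HL\infty, feed it through the reduction of the preceding theorem (noting $N = \Theta(n^4 m) = \poly(n)$, so $f(N) = o(\log n)$), and contradict Theorem~\ref{Set-Cover-Hardness}. Nothing is missing.
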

\begin{proof}
 	The previous theorem gives an $O(f(O(n^4m))$-approximation algorithm for Set Cover, given that an $f(n)$-approximation algorithm for \HL\infty exists. If we assume that there exists such an algorithm with $f(n) = o(\log n)$, then we could use it to approximate Set Cover within a factor $o(\log O(n^4m)) = o(\log \poly(n)) = o(\log n)$, and, by Theorem~\ref{Set-Cover-Hardness}, this is impossible, assuming that $\P \neq \NP$.
\end{proof}

%%%%%%%%%%%%%%%%%%%%%%%%%%%%%%%%%%%%%%%%%%%%%%%%%%%%%%%%%%%%%%%%%%%%%%%%%%%%%
\subsection{$\Omega(\log n)$-hardness for \HLp, for $p = \Omega(\log n)$}\label{sec:lp-hardness}

In this section, we use the well-known fact that the $\ell_{\log n}$-norm of an $n$-dimensional vector is within a constant factor from its $\ell_\infty$-norm to conclude that \HLp is $\Omega(\log n)$-hard to approximate on graphs with $n$ vertices, for $p = \Omega(\log n)$.

\begin{theorem}
For any fixed $\varepsilon > 0$, it is \NP-hard to approximate \HLp to within a factor better than $O(\log n)$, for every $p \geq \varepsilon \log n$.
\end{theorem}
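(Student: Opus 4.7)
The plan is to reduce $\HL\infty$ to \HLp via the standard norm-equivalence between $\ell_p$ and $\ell_\infty$ on $n$-dimensional vectors. For any nonnegative $v=(v_1,\dots,v_n)$ we have $\|v\|_\infty\leq \|v\|_p\leq n^{1/p}\|v\|_\infty$, and whenever $p\geq \varepsilon\log n$ this gives $n^{1/p}\leq n^{1/(\varepsilon\log n)}=e^{1/\varepsilon}$, a constant $C_\varepsilon$ depending only on $\varepsilon$. Applying this to the vector $(|H_u|)_{u\in V}$ associated with a hub labeling $H$, we obtain $\|H\|_\infty\leq \|H\|_p\leq C_\varepsilon\cdot \|H\|_\infty$ uniformly over all hub labelings of an $n$-vertex graph.

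From here the reduction is a one-liner. Suppose, towards a contradiction, that for some fixed $\varepsilon>0$ there is a polynomial-time $f(n)$-approximation algorithm $\mathcal{A}$ for \HLp with $f(n)=o(\log n)$, valid for all $p\geq \varepsilon\log n$. Given an \HL\infty instance on $n$ vertices, I would run $\mathcal{A}$ on the same graph with, say, $p=\lceil\varepsilon\log n\rceil$ (which satisfies $p\geq \varepsilon\log n$). Let $H^*$ be an optimal \HL\infty solution and $\widetilde H$ the output of $\mathcal{A}$; then
\begin{equation*}
\|\widetilde H\|_\infty\leq \|\widetilde H\|_p\leq f(n)\cdot OPT_p\leq f(n)\cdot \|H^*\|_p\leq C_\varepsilon\cdot f(n)\cdot \|H^*\|_\infty,
\end{equation*}
so $\widetilde H$ is a feasible \HL\infty solution of cost at most $C_\varepsilon\cdot f(n)\cdot OPT_\infty=o(\log n)\cdot OPT_\infty$. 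This contradicts Corollary~\ref{cor:infinity-hardness}, which rules out any $o(\log n)$-approximation for \HL\infty unless $\P=\NP$.

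There is essentially no obstacle beyond bookkeeping: the only point to verify is that the choice of $p$ is legal, i.e.\ that $p\geq \varepsilon\log N$ where $N$ is the number of vertices of the graph fed into $\mathcal{A}$. Since the graph we feed in is the \HL\infty hard instance itself (of some size $N$), and we simply pick $p=\lceil\varepsilon\log N\rceil$ for that particular $N$, the hypothesis of the assumed \HLp algorithm is satisfied. Note that $p$ is allowed to depend on the input size, matching the quantification ``for every $p\geq \varepsilon\log n$'' in the theorem statement, so no genuine difficulty arises.
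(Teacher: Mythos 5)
Your proposal is correct and is essentially identical to the paper's own proof: both use the equivalence $\|H\|_\infty\leq\|H\|_p\leq n^{1/p}\|H\|_\infty$ with $n^{1/p}$ bounded by a constant depending only on $\varepsilon$, chain the inequalities through the assumed \HLp approximation, and invoke Corollary~\ref{cor:infinity-hardness} to derive the contradiction. The only cosmetic difference is the constant ($e^{1/\varepsilon}$ versus the paper's $2^{1/\varepsilon}$), which is immaterial.
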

\begin{proof}
Let $G = (V, E, l)$ be a hub labeling instance, and let $H^{(p)}$ denote an optimal \HLp solution, and $H^{(\infty)}$ denote an optimal \HL\infty solution. We have
\begin{align*}
    \left(\sum_{u \in V} |H_u^{(p)}|^p \right)^{1/p} &\leq \left(\sum_{u \in V} |H_u^{(\infty)}|^p \right)^{1/p} \leq n^{1/p} \|H^{(\infty)}\|_\infty = 2^{\log n / p} \|H^{(\infty)}\|_\infty \\
                                                     &\leq 2^{1 / \varepsilon} \|H^{(\infty)}\|_\infty.
\end{align*}
If we have an $f(n)$-approximation for \HLp, then this means that we can get a solution $H'$ such that $\left(\sum_{u \in V} |H_u'|^p \right)^{1/p} \leq f(n) \cdot \left(\sum_{u \in V} |H_u^{(p)}|^p \right)^{1/p}$. From the previous discussion, this implies
\begin{equation*}
    \|H'\|_\infty \leq \left(\sum_{u \in V} |H_u'|^p \right)^{1/p} \leq f(n) \cdot \left(\sum_{u \in V} |H_u^{(p)}|^p \right)^{1/p} \leq (f(n) \cdot 2^{1 / \varepsilon}) \cdot \|H^{(\infty)}\|_\infty.
\end{equation*}
By Corollary~\ref{cor:infinity-hardness}, we now conclude that we must have $f(n) \geq \frac{c \cdot \log n}{2^{1 / \varepsilon}} = \Omega(\log n)$, where $c$ is some universal constant.
\end{proof}

\section{Hub labeling on directed graphs}\label{sec:appendix_directed}

In this section, we sketch how some of the presented techniques can be used for the case of directed graphs. Let $G = (V, E)$ be a directed graph with edge lengths $l(e) > 0$. Instead of having one set of hubs, each vertex $u$ has two sets of hubs, the forward hubs $H_u^{(f)}$ and the backward hubs $H_u^{(b)}$. The covering property is now stated as follows: for every (directed) pair $(u,v)$ and some directed shortest path $P$ from $u$ to $v$, we must have $H_u^{(f)} \cap H_v^{(b)} \cap P \neq \emptyset$. The \HLp objective function can be written as $\left(\sum_{u \in V} L_u^p \right)^{1 / p}$, where $L_u = |H_u^{(f)}| + |H_u^{(b)}|$.

The Set Cover based approach of Cohen et al.~\cite{DBLP:journals/siamcomp/CohenHKZ03} and Babenko et al.~\cite{DBLP:conf/icalp/BabenkoGGN13} can be used in this setting in order to obtain an $O(\log n)$-approximation for \HLp, $p \in [1, \infty]$. It is also straightforward to see that there is a very simple 2-approximation preserving reduction from undirected \HLp to directed \HLp, implying that an $\alpha$-approximation for directed \HLp would give a $2\alpha$-approximation for undirected \HLp. Thus, the hardness results of Section~\ref{Hardness} can be applied to the directed case as well, and so we end up with the following theorem.
\begin{theorem}
\HLp is $\Omega(\log n)$-hard to approximate in directed graphs with $n$ vertices and multiple shortest paths, for $p \in \{1\} \cup [\varepsilon \log n, \infty]$, unless \P = \NP.
\end{theorem}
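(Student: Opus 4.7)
The plan is to establish a simple approximation-preserving reduction from undirected \HLp to directed \HLp, and then invoke the undirected hardness results that were already proved earlier in this chapter (Sections~\ref{sec:l1-hardness}, \ref{sec:infinity-hardness}, \ref{sec:lp-hardness}). The reduction: given an undirected instance $G=(V,E)$ with lengths $l:E\to\mathbb{R}^+$, construct a directed instance $G'=(V,E')$ by replacing each undirected edge $\{u,v\}$ with two anti-parallel directed edges $(u,v)$ and $(v,u)$, both of length $l(\{u,v\})$. Clearly $|V(G')|=n$, the set of (directed) shortest paths in $G'$ from $u$ to $v$ is in bijection with the set of undirected shortest paths between $u$ and $v$ in $G$, and the property of having multiple shortest paths between some pair is preserved.

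Next, I would compare the two optimal costs. From an undirected hub labeling $\{H_u\}_{u\in V}$ for $G$, setting $H_u^{(f)} := H_u$ and $H_u^{(b)} := H_u$ yields a feasible directed hub labeling for $G'$ with $L_u = |H_u^{(f)}|+|H_u^{(b)}| = 2|H_u|$, so
\[
    OPT_{\mathrm{dir}}(G') \;\leq\; 2\cdot OPT_{\mathrm{und}}(G).
\]
Conversely, from any feasible directed hub labeling $\{H_u^{(f)}, H_u^{(b)}\}_{u\in V}$ for $G'$, setting $H_u := H_u^{(f)} \cup H_u^{(b)}$ yields a feasible undirected hub labeling for $G$ (the covering vertex for an ordered pair $(u,v)$ in $G'$ also covers the unordered pair $\{u,v\}$ in $G$), with $|H_u| \leq L_u$. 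Therefore, given an $\alpha$-approximation algorithm $\mathcal{A}$ for directed \HLp, the composition ``run $\mathcal{A}$ on $G'$, then output $H_u := H_u^{(f)} \cup H_u^{(b)}$'' produces an undirected hub labeling whose $\ell_p$-cost is at most $\alpha \cdot OPT_{\mathrm{dir}}(G') \leq 2\alpha \cdot OPT_{\mathrm{und}}(G)$; this is a $2\alpha$-approximation for undirected \HLp.

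To finish, I would simply combine the contrapositive of this reduction with the undirected lower bounds already proved: for $p=1$ by Theorem~\ref{HL1_hardness_theorem} and its corollary, for $p=\infty$ by Corollary~\ref{cor:infinity-hardness}, and for $p \in [\varepsilon \log n, \infty]$ by the $\ell_p$-to-$\ell_\infty$ comparison of Section~\ref{sec:lp-hardness}. Since a factor of $2$ does not affect an $\Omega(\log n)$ bound, the same hardness transfers to the directed setting for every $p \in \{1\}\cup[\varepsilon \log n,\infty]$. There is no real obstacle here; the only things to check carefully are that the undirected hard instances from Section~\ref{Hardness} have multiple shortest paths (they do, by construction of the gadgets in Figures~\ref{HL_1 fig}, \ref{HL infty fig1} and \ref{HL infty fig2}) and that this property survives the directed doubling (it does, since two undirected shortest paths between $u$ and $v$ lift to two directed shortest paths from $u$ to $v$).
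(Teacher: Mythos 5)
Your proposal is correct and matches the paper's argument: the paper likewise observes that doubling each undirected edge into two anti-parallel arcs gives a 2-approximation-preserving reduction from undirected \HLp to directed \HLp, and then transfers the $\Omega(\log n)$ hardness results of Section~\ref{Hardness}. Your write-up just fills in the feasibility and cost-comparison details that the paper leaves as ``straightforward.''
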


Having matching lower and upper bounds (up to constant factors) for the general case, we turn again to graphs with unique (directed) shortest paths. The notion of pre-hubs can be extended to the directed case as follows: a family of sets $\{(\widehat{H}_u^{(f)}, \widehat{H}_u^{(b)})\}_{u \in V}$ is a family of pre-hubs if for every pair $(u,v)$ there exist $u' \in \widehat{H}_u^{(f)} \cap P_{uv}$ and $v' \in \widehat{H}_v^{(b)} \cap P_{uv}$ such that $u' \in P_{v'v}$.

We now present the LP relaxation for the $\ell_1$ case (see Figure~\ref{fig:directed-LP}). In order to obtain a feasible set of pre-hubs, for each vertex $u \in V$ we construct two trees (both rooted at $u$): $T_u^{(f)}$ is the union of all directed paths from $u$ to all other vertices, and $T_u^{(b)}$ is the union of all directed paths to $u$ from all other vertices. We drop the orientation on the edges, and we note that these are indeed trees. We proceed as in the undirected case and obtain a set $\widehat{H}_u^{(f)}$ from $T_u^{(f)}$ (using the variables $x_{uv}^{(f)}$) of size at most $2 \sum_{v \in V} x_{uv}^{(f)}$, and a set $\widehat{H}_u^{(b)}$ from $T_u^{(b)}$ (using the variables $x_{uv}^{(b)}$) of size at most $2 \sum_{v \in V} x_{uv}^{(b)}$. It is not hard to see that the obtained sets $(\widehat{H}_u^{(f)}, \widehat{H}_u^{(b)})_{u \in V}$ are indeed pre-hubs.

\begin{figure}[h]
\noindent($\mathbf{DIR-LP_1}$)
\begin{align*}
    \min:          &\quad \sum_{u \in V} \sum_{v \in V} \left(x_{uv}^{(f)} + x_{uv}^{(b)}\right) \\
    \textrm{s.t.:} &\quad \sum_{w \in P_{uv}} \min\{x_{uw}^{(f)}, x_{vw}^{(b)}\} \geq 1,  && \forall (u, v) \in V \times V, \\
                   &\quad x_{uv}^{(f)} \geq 0,  && \forall (u,v) \in V \times V, \\
                   &\quad x_{uv}^{(b)} \geq 0,  && \forall (u,v) \in V \times V.
\end{align*}
\caption{The LP relaxation for \HL1 on directed graphs.}
\label{fig:directed-LP}
\end{figure}

We can now use a modified version of Algorithm \ref{Pre-Hubs_Algorithm}; see Algorithm~\ref{DIR-Pre-Hubs_Algorithm}. It is easy to see that the obtained solution is always feasible, and, with similar analysis as before, we prove that $\E[|\widehat{H}_u^{(f)}|] \leq 2(\log D + O(1)) \cdot \sum_{v} x_{uv}^{(f)}$ and $\E[|\widehat{H}_u^{(b)}|] \leq 2(\log D + O(1)) \cdot \sum_{v} x_{uv}^{(b)}$. Thus, in expectation, we obtain a solution of cost $O(\log D) \cdot OPT_{DIR-LP_1}$.

\begin{algorithm}[h]
\begin{enumerate}
    \item Solve $\mathbf{DIR-LP_1}$ and get an optimal solution $\{(x_{uv}^{(f)}, x_{uv}^{(b)})\}_{(u,v) \in V \times V}$.
    \item Obtain a set of pre-hubs $\{(\widehat{H}_u^{(f)}, \widehat{H}_u^{(b)})\}_{u \in V}$ from $x$. 
    \item Generate a random permutation $\pi : [n] \to V$ of the vertices. 
    \item Set $(H_u^{(f)}, H_u^{(b)}) = (\varnothing, \varnothing)$, for every $u \in V$.
    \item \textbf{for} $i= 1$ \textbf{to} $n$ \textbf{do}:\\
          \hspace*{16pt}  \textbf{for} every $u \in V$ \textbf{do}:\\
          \hspace*{36pt}     \textbf{for} every $u' \in \widehat{H}_u^{(f)}$ such that $\pi_i \in P_{uu'}$ and $P_{\pi_i u'} \cap \widehat{H}_u^{(f)} = \{u'\}$ \textbf{do}:\\
          \hspace*{56pt}         \textbf{if} $P_{\pi_i u'} \cap H_u^{(f)} = \varnothing$ \textbf{then} $H_u^{(f)} := H_u^{(f)} \cup \{\pi_i\}$\\
          \hspace*{36pt}     \textbf{for} every $u' \in \widehat{H}_u^{(b)}$ such that $\pi_i \in P_{u'u}$ and $P_{u' \pi_i} \cap \widehat{H}_u^{(b)} = \{u'\}$ \textbf{do}:\\
          \hspace*{56pt}         \textbf{if} $P_{u' \pi_i} \cap H_u^{(b)} = \varnothing$ \textbf{then} $H_u^{(b)} := H_u^{(b)} \cup \{\pi_i\}$.
    \item Return $\{(H_u^{(f)}, H_u^{(b)})\}_{u \in V}$.
\end{enumerate}
\caption{Algorithm for \HL1 on directed graphs with unique shortest paths}
\label{DIR-Pre-Hubs_Algorithm}
\end{algorithm}

The analysis can also be generalized for arbitrary fixed $p \geq 1$, similar to the analysis in Section~\ref{sec:lp-norm-algorithm}. The algorithm is modified in the same way, and using the fact that for $x,y,p \geq 1$, we have $x^p + y^p \leq (x + y)^p \leq 2^p (x^p + y^p)$, we can again obtain a solution of cost at most $O_p(\Harm_D) \cdot OPT_{REL}$, where $OPT_{REL}$ is the optimal value of the corresponding convex relaxation. Thus, we obtain the following theorem.
\begin{theorem}
There is an $O(\log D)$-approximation algorithm for \HLp, for any fixed $p \geq 1$, on directed graphs with unique shortest paths.
\end{theorem}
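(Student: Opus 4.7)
The plan is to mirror the analysis of Algorithm~\ref{alg:HLp-algorithm} in the directed setting. First I would write down the natural convex relaxation $\mathbf{DIR\text{-}CP_p}$, obtained from $\mathbf{DIR\text{-}LP_1}$ (Figure~\ref{fig:directed-LP}) by replacing the linear objective with $\bigl(\sum_u (\sum_v (x_{uv}^{(f)}+x_{uv}^{(b)}))^p\bigr)^{1/p}$; this is still a convex program and lower bounds $OPT_{\HLp}$. Given an optimal solution $(x^{(f)}, x^{(b)})$, I would run the pre-hub extraction on the two shortest-path trees $T_u^{(f)}$ and $T_u^{(b)}$ exactly as described in the paragraph preceding Algorithm~\ref{DIR-Pre-Hubs_Algorithm}, producing sets with $|\widehat H_u^{(f)}|\le 2\sum_v x_{uv}^{(f)}$ and $|\widehat H_u^{(b)}|\le 2\sum_v x_{uv}^{(b)}$.

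Next, I would modify Algorithm~\ref{DIR-Pre-Hubs_Algorithm} in the same way as in Section~\ref{sec:lp-norm-algorithm}: for each $u$, augment $\widehat H_u^{(f)}$ by adding the vertices of $J_u^{(f)}=\bigcup_{u'\in \widehat H_u^{(f)}} P_{uu'}$ that have degree at least three in $J_u^{(f)}$, and do the same for $\widehat H_u^{(b)}$ on the reverse tree. As in the undirected argument, these additions at most double the pre-hub counts, and they partition the set of vertices that can be charged to any pre-hub into the disjoint subpaths $P_{(uv)}^{(f)}$ (resp. $P_{(uv)}^{(b)}$) between consecutive pre-hubs on the same root-to-leaf path of $J_u^{(f)}$ (resp. $J_u^{(b)}$). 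This disjointness is the key step that lets us write $|H_u^{(f)}|\le \sum_{v\in \widehat H_u^{(f)}}\sum_{i=1}^{|P_{(uv)}^{(f)}|} Z_i^{uv,f}$ with the $Z_i^{uv,f}$ all independent, because they depend on disjoint coordinates of the random permutation $\pi$; the same holds separately for the backward variables.

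With independence in hand, I would apply Theorem~\ref{moment_theorem} (Berend--Tassa) to each of $|H_u^{(f)}|$ and $|H_u^{(b)}|$ individually, yielding
\[
\bigl(\E[|H_u^{(f)}|^p]\bigr)^{1/p}\le c_p\,\Harm_D\,|\widehat H_u^{(f)}|, \qquad \bigl(\E[|H_u^{(b)}|^p]\bigr)^{1/p}\le c_p\,\Harm_D\,|\widehat H_u^{(b)}|,
\]
where $c_p=O(p/\ln(p+1))$. To combine them I would use the elementary inequality $(a+b)^p\le 2^p(a^p+b^p)$ mentioned in the excerpt, giving
\[
\E[L_u^p]=\E[(|H_u^{(f)}|+|H_u^{(b)}|)^p]\le 2^p\bigl(\E[|H_u^{(f)}|^p]+\E[|H_u^{(b)}|^p]\bigr)\le (O_p(\Harm_D))^p\,(|\widehat H_u^{(f)}|+|\widehat H_u^{(b)}|)^p.
\]
Summing over $u$, taking $p$-th roots, using Jensen's inequality to pull expectation inside the norm, and applying the pre-hub cost bound $|\widehat H_u^{(f)}|+|\widehat H_u^{(b)}|\le 4\sum_v (x_{uv}^{(f)}+x_{uv}^{(b)})$ yields a solution of expected cost $O_p(\log D)\cdot OPT_{\mathbf{DIR\text{-}CP_p}}\le O_p(\log D)\cdot OPT_{\HLp}$; derandomization proceeds by conditional expectations as in Remark~\ref{rem:derandomize}.

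The only place where the directed setting requires genuine care is verifying feasibility of the returned labeling: I would repeat the undirected argument but now noting that for an ordered pair $(u,v)$, the pre-hub property gives $u'\in \widehat H_u^{(f)}\cap P_{uv}$ and $v'\in \widehat H_v^{(b)}\cap P_{uv}$ with $u'$ on the sub-path from $v'$ to $v$, and the first $\pi_i$ landing on $P_{v'u'}$ gets added to both $H_u^{(f)}$ and $H_v^{(b)}$. I expect this feasibility verification and the bookkeeping around the two trees $J_u^{(f)}, J_u^{(b)}$ to be the main (but essentially routine) obstacle; the quantitative analysis then carries through verbatim from the undirected case.
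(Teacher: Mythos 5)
Your proposal is correct and follows essentially the same route as the paper: the paper's directed-case argument is exactly to solve the convex analogue of $\mathbf{DIR\text{-}LP_1}$, extract pre-hubs separately from the forward and backward shortest-path trees, modify the algorithm as in Section~\ref{sec:lp-norm-algorithm} (adding the branching vertices to make the charged subpaths disjoint), apply Theorem~\ref{moment_theorem} to each of $|H_u^{(f)}|$ and $|H_u^{(b)}|$, and combine with $x^p+y^p\le (x+y)^p\le 2^p(x^p+y^p)$ to get cost $O_p(\Harm_D)\cdot OPT_{REL}$. Your filled-in details (feasibility for ordered pairs, Jensen, derandomization) are the same ones the paper imports verbatim from the undirected analysis.
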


%% separate chapter for trees
\chapter{Hub Labeling on trees}\label{chapter:trees}

\newcommand{\tc}{\tilde{c}}

In this chapter, we study the Hub Labeling problem on trees. Although trees might seem a very simple class of graphs, the problem is not at all trivial even on trees. We will present several algorithms and results, culminating in the equivalence of HL on trees with a seemingly unrelated problem, namely the problem of searching for a node in a tree.  We first observe that when the graph is a tree, the length function $l$ does not play any role in the task of choosing the optimal hubs (it only affects the actual distances between the vertices), and so we assume that we are simply given an unweighted tree $T = (V, E)$, $|V| = n$. We start with proving a structural result about optimal solutions in trees; we show that there always exists a \textit{hierarchical hub labeling} that is also an optimal hub labeling. We then analyze a simple and fast heuristic for HL on trees proposed by Peleg \cite{DBLP:journals/jgt/Peleg00}, and prove that it gives a 2-approximation for \HL1. We do not know if our analysis is tight, but we prove that there are instances where the heuristic finds a suboptimal solution of cost at least $\left(\frac{3}{2}-\varepsilon\right) \cdot OPT$ (for every $\varepsilon > 0$). We then refine the approximation factor by presenting a DP-based polynomial-time approximation scheme (PTAS) and a quasi-polynomial-time exact algorithms for \HLp on trees, for every $p \in [1, \infty]$.

As mentioned in the introduction, after the publication of our work~\cite{DBLP:conf/soda/AngelidakisMO17}, it was pointed out to us~\cite{GawKMW17} that our structural result that there always exists a hierarchical hub labeling that is optimal allows one to cast the Hub Labeling problem on trees as a problem of vertex search in trees. We discuss this connection after the presentation of our results, and give a complete description and analysis of how previous results imply exact polynomial-time algorithms for HL on trees.

%%%%%%%%%%%%%%%%%%%%%%%%%%%%%%%%%%%%%%%%%%%%%%%%%%%%%%%%%%%%%%%

\section{Optimal solutions for trees are hierarchical}
Let $T=(V,E)$ be a tree. In this section, we show that any feasible hub labeling $H$ for $T$ can be converted to a hierarchical hub labeling $H'$ of at most the same $\ell_p$-cost (for every $p\in[1,\infty]$). Therefore, there always exists an optimal solution that is hierarchical.

\begin{theorem}\label{HHL_optimal_for_trees}
For every tree $T = (V,E)$, there always exists an optimal \HLp solution that is hierarchical, for every $p \in [1, \infty]$.
\end{theorem}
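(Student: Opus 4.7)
The plan is to prove, by induction on $n = |V(T)|$, the stronger statement that every feasible hub labeling $H$ on a tree $T$ admits a hierarchical hub labeling $H'$ with $|H'_u| \leq |H_u|$ for every $u \in V(T)$. Such a pointwise bound immediately implies $\|H'\|_p \leq \|H\|_p$ for every $p \in [1, \infty]$, so applying the claim to an optimum $H$ yields the theorem. The base case $n = 1$ is immediate, since setting $H_u := \{u\}$ gives a trivially hierarchical labeling.

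In the inductive step, I would pick a ``root'' vertex $r$ satisfying $r \in H_u$ for every $u \in V$ and then recurse on each connected component of $T \setminus r$. Because $r \in H_u$ for all $u$, removing $r$ splits $T$ into subtrees $T_1, \ldots, T_k$, and for each $T_i$ the restriction $\{H_u \cap V(T_i)\}_{u \in V(T_i)}$ is a feasible HL on $T_i$, since shortest paths between vertices of a common subtree stay inside that subtree. Invoking the inductive hypothesis on each $T_i$ yields hierarchical HLs $H^{(i)}$ with $|H^{(i)}_u| \leq |H_u \cap V(T_i)|$ for every $u \in V(T_i)$; setting $H'_u := H^{(i)}_u \cup \{r\}$ for $u \in V(T_i)$ and $H'_r := \{r\}$ produces a hierarchical HL on $T$ (canonical in the sense of Section~\ref{hhl-section}, with $r$ at the top) that satisfies $|H'_u| = |H^{(i)}_u| + 1 \leq |H_u \cap V(T_i)| + 1 \leq |H_u|$, where the last inequality uses that $r \in H_u \setminus V(T_i)$.

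The main obstacle is establishing that such a universal root $r$ always exists, at least after a cost-neutral modification of $H$. My plan is to exploit the Helly property for subtrees of a tree: letting $H_u^*$ denote the smallest subtree of $T$ containing $H_u$, the covering hub of any pair $(u,v)$ lies in $H_u \cap H_v \subseteq H_u^* \cap H_v^*$, so the family $\{H_u^*\}_{u \in V}$ consists of pairwise-intersecting subtrees, and Helly's theorem produces some $r^* \in \bigcap_u H_u^*$. If $r^* \in H_u$ for every $u$, we are done. Otherwise, for each $u$ with $r^* \notin H_u$, the vertex $r^*$ lies strictly between two hubs $a, b \in H_u$ on $P_{ab}$, and I would perform a local exchange, swapping one of $a, b$ for $r^*$ in $H_u$, while verifying that any pair previously covered only by that hub is now covered either by $r^*$ (after $r^*$ is inserted into the hub sets of its other endpoints as well) or by the remaining hubs. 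Orchestrating these exchanges simultaneously across all $u$ without increasing any $|H_u|$ and without breaking feasibility is the most delicate piece; an alternative route is to argue directly that any optimal $H$ already contains a universal hub, since otherwise such a local swap would produce a strictly cheaper HL, contradicting optimality.
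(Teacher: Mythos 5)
Your skeleton is the right one --- Helly's property for subtrees plus recursion on the components of $T-r$ is exactly how the paper proceeds, and your feasibility-of-restriction and cost-accounting steps are sound \emph{conditional on} finding the root you ask for. The genuine gap is the condition you impose on $r$: you require $r \in H_u$ for every $u$, correctly identify that Helly only delivers $r^* \in \bigcap_u H_u^*$ (a point of the subtree \emph{spanned} by each hub set, not necessarily a hub itself), and then defer the repair to an unspecified system of simultaneous local exchanges, which you yourself flag as the delicate unproved piece. That repair is not needed, and your fallback claim --- that any optimal $H$ must already contain a universal hub, else a swap improves it --- is also unsubstantiated; nothing in your sketch rules out an optimal labeling in which no single vertex belongs to every $H_u$.

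The missing observation that closes the gap is that $r^* \in H_u^*$ is already strong enough for the cardinality accounting. Since $u \in H_u$, the subtree $H_u^*$ is the union of the paths $P_{uv}$ over $v \in H_u$, so $r^* \in H_u^*$ means $r^*$ lies on $P_{uv}$ for some hub $v \in H_u$; that $v$ is then either $r^*$ itself or separated from $u$ once $r^*$ is deleted. Hence $H_u$ always contains at least one hub outside the component $Q^u$ of $T - r^*$ containing $u$, giving $|H_u \cap Q^u| \le |H_u| - 1$ with no modification of $H$ whatsoever. Recursing on the restrictions $H_u \cap Q^u$ and then setting $H'_u := H''_u \cup \{r^*\}$ (and $H'_{r^*} := \{r^*\}$) yields the pointwise bound $|H'_u| \le |H_u|$ exactly as in your plan; the vertex paying for the re-inserted root is the hub $v \notin Q^u$, not $r^*$ itself. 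With that substitution your argument becomes the paper's proof; without it, the inductive step does not go through as written.
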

\begin{proof}
To prove this, we consider a feasible solution $H$ and convert it to a hierarchical solution $H'$ such that $|H'_u|\leq |H_u|$ for every $u \in V$. In particular, the $\ell_p$-cost of $H'$ is at most the $\ell_p$-cost of $H$ for every $p$.

The construction is recursive (the underlying inductive hypothesis for smaller subinstances being that a feasible HL $H$ can be converted to a hierarchical solution $H'$ such that $|H'_u|\leq |H_u|$ for every $u$.) First, for each $u \in V$, define an induced subtree $T_u \subseteq T$ as follows: $T_u$ is the union of paths $P_{uv}$ over all $v\in H_u$. In other words, a vertex $w$ belongs to $H_u$ if there is a hub $v\in H_u$ such that $w\in P_{uv}$. Note that $T_u$ is a (non-empty) connected subtree of $T$.

The crucial property that we need is that $T_u \cap T_v \neq \emptyset$, for every $u,v \in V$. To see this, consider any pair $\{u,v\}$, $u \neq v$.  We know that $H_u \cap H_v \cap P_{uv} \neq \emptyset$. Let $w \in H_u \cap H_v \cap P_{uv}$. By construction, $w \in T_u$ and $w \in T_v$, and so $T_u \cap T_v \neq \emptyset$. We now use the fact that a family of subtrees of a tree satisfies the \textit{Helly property} (which first appeared as a consequence of the work of Gilmore \cite{Gilmore}, and more explicitly a few years later in \cite{gyarfas1970helly}) that can be stated as follows. If we are given a family of subtrees of $T$ such that every two subtrees in the family intersect, then all subtrees in the family intersect (i.e. they share a common vertex).

Let $r \in \bigcap_{u \in V} T_u$. We remove $r$ from $T$. Consider the connected components $Q_1, ...,Q_c$ of $T-r$. Denote the connected component that contains vertex $u$ by $Q^u$. Let $\widetilde H_u = H_u\cap Q^u$. Note that $|\widetilde H_u|\leq|H_u| -1$, since $r \in T_u$, which, by the definition of $T_u$, implies that there exists some $w \notin Q^u$ with $w \in H_u$. Consider now two vertices $u,v\in Q_i$. They have a common hub $w\in H_u \cap H_v\cap P_{uv}$. Since $P_{uv} \subset Q^u = Q^v = Q_i$, we have $w\in \widetilde H_u \cap \widetilde H_v \cap P_{uv}$. Therefore, $\{\widetilde H_u:u\in Q_i\}$ is a feasible hub labeling for $Q_i$. Now, we recursively find hierarchical hub labelings for the subtrees $Q_1,\dots, Q_c$. Denote the hierarchical hub labeling for $u$ in $Q^u$ by $H''_u$. The inductive hypothesis ensures that for every $u \in Q^u$, $|H''_u| \leq |\widetilde H_u|\leq |H_u| -1$.

Finally, define $H_u' = H''_u \cup \{r\}$, for $u\neq r$, and $H_r' = \{r\}$. We show that $H_u'$ is a hub labeling. Consider $u,v\in V$. If $u,v\in Q_i$ for some $i$, then $H'_u \cap H'_v\cap P_{uv} \supset H''_u \cap H''_v\cap P_{uv}\neq \emptyset$ since $H''$ is a hub labeling for $Q_i$. If $u\in Q_i$ and $v \in Q_j$ ($i\neq j$), then $r\in  H'_u \cap H'_v\cap P_{uv}$. Also, if either $u=r$ or $v=r$, then again $r\in H'_u \cap H'_v\cap P_{uv}$. We conclude that $H'$ is a feasible hub labeling. Furthermore, $H'$ is a \textit{hierarchical} hub labeling: $r\preceq u$ for every $u$ and $\preceq$ is a partial order on every set $Q_i$; elements from different sets $Q_i$ and $Q_j$ are not comparable w.r.t.\ $\preceq$.

We have $|H_u'| = |H''_u| + 1 \leq |H_u|$ for $u\neq r$ and $|H'_r| = 1\leq |H_r|$, as required.
\end{proof}

This theorem allows us to restrict our attention only to hierarchical hub labelings, which have a much simpler structure than arbitrary hub labelings, when we design algorithms for HL on trees.

%%%%%%%%%%%%%%%%%%%%%%%%%%%%%%%%%%%%%%%%%%%%%%%%%%%%%%%%%%%%%%%%
\section{An analysis of Peleg's heuristic for \HL1 on trees} \label{trees_2_approx}

In this section, we  analyze a purely combinatorial algorithm for HL proposed by Peleg in \cite{DBLP:journals/jgt/Peleg00} and show that it returns a hierarchical 2-approximate hub labeling on trees (see Algorithm~\ref{Tree-Algorithm}). In Peleg's paper~\cite{DBLP:journals/jgt/Peleg00}, it is only proved that the algorithm returns a feasible hub labeling $H$ with $\max_{u \in V} |H_u| = O(\log n)$ for a tree on $n$ vertices.

\begin{definition}\label{def:bal-sep-vertex}
Consider a tree $T$ on $n$ vertices. We say that a vertex $u$ is a balanced separator vertex if every connected component of $T - u$  has at most $n/2$ vertices. The weighted balanced separator vertex for a vertex-weighted tree is defined analogously.
\end{definition}
It is well known that every tree $T$ has a balanced separator vertex (in fact, a tree may have either exactly one or exactly two balanced separator vertices) and such a separator vertex can be found efficiently (i.e.~in linear time) given $T$. The algorithm by Peleg, named here \emph{Tree Algorithm}, is described in the figure below (Algorithm~\ref{Tree-Algorithm}).

\begin{algorithm}[h]
\begin{enumerate}
    \item Find a balanced separator vertex $r$ in $T'$.
    \item Remove  $r$ and recursively find a HL in each subtree $T_i$ of $T'-r$. Let $H'$ be the \\labeling obtained by the recursive procedure. \\(If $T'$ consists of a single vertex and, therefore, $T'-r$ is empty, the algorithm does\\ not make any recursive calls.)
    \item Return $H_u := H_u' \cup \{r\}$, for every vertex $u$ in $ T'- \{r\}$, and $H_r = \{r\}$.
\end{enumerate}
\caption{Tree Algorithm}
\label{Tree-Algorithm}
\end{algorithm}

It is easy to see that the algorithm always returns a feasible hierarchical hub labeling, in total time $O(n \log n)$. To bound its cost, we use the primal-dual approach. We consider the dual of $\mathbf{LP_1}$ (which was presented in Figure~\ref{fig:lp1}). Then, we define a dual feasible solution whose cost is at least half of the cost of the solution that the algorithm returns. We formally prove the following theorem.
\begin{theorem}\label{peleg_theorem}
The Tree Algorithm (Algorithm~\ref{Tree-Algorithm}) is a 2-approximation algorithm for \HL1 on trees.
\end{theorem}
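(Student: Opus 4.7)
The approach is LP duality applied to $\mathbf{LP_1}$. First, I would write down the dual: variables $\alpha_{\{u,v\}}\geq 0$ for each pair $\{u,v\}\in I$ (dual of the covering constraint $\sum_w y_{\{u,v\},w}\geq 1$) and $\beta^u_{\{u,v\},w},\beta^v_{\{u,v\},w}\geq 0$ for the constraints $y_{\{u,v\},w}\leq x_{uw}$ and $y_{\{u,v\},w}\leq x_{vw}$. The dual maximizes $\sum_{\{u,v\}}\alpha_{\{u,v\}}$ subject to $\sum_{\{u,v\}:\,w\in S_{uv}}\beta^u_{\{u,v\},w}\leq 1$ for every $(u,w)\in V\times V$ and $\beta^u_{\{u,v\},w}+\beta^v_{\{u,v\},w}\geq \alpha_{\{u,v\}}$ for every $\{u,v\}\in I$ and $w\in S_{uv}$. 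By weak duality, any feasible dual value is at most $OPT$, so it suffices to exhibit a dual feasible solution of value at least $\mathrm{ALG}/2$.

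To understand the cost that has to be covered, observe that because the algorithm's labeling is hierarchical, $H_u$ is exactly the set of ancestors of $u$ in the recursion tree (including $u$ itself), and for every pair $\{u,v\}$ the unique common hub on $P_{uv}$ is the highest-ranked vertex $r(u,v)$, which is the separator of the first recursion subtree in which $u$ and $v$ are separated. This immediately yields $\mathrm{ALG}=\sum_u|H_u|=\sum_{T'}|T'|$, where the sum ranges over all recursion subtrees $T'$ (each contributing $|T'|$ because its separator is added as a hub for every vertex of $T'$).

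I would then construct the dual level-by-level. At every recursion subtree $T'$ with balanced separator $r$, each component of $T'-r$ has size at most $|T'|/2$, and hence there are many pairs $\{u,v\}$ with $u,v\in T'\setminus\{r\}$ lying in distinct components of $T'-r$; for any such pair $r(u,v)=r$ and $r\in P_{uv}$. I would assign (possibly fractional) $\alpha$-weight to these cross-component pairs and use diagonal $\alpha_{\{u,u\}}$'s to absorb whatever per-vertex budget remains, setting the $\beta$-values (either symmetrically with $\beta^u=\beta^v=\alpha/2$ or asymmetrically by charging each pair predominantly to a single endpoint) so as to satisfy all dual constraints. The target is for the $\alpha$-mass produced at level $T'$ to equal $|T'|/2$, so that summing over all recursion subtrees yields $\sum_{T'}|T'|/2 = \mathrm{ALG}/2$.

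The main technical obstacle is the interaction of dual constraints across recursion levels. A vertex $u$ of depth $d(u)$ participates as a non-separator in every ancestor subtree $T_1^u\supset T_2^u\supset\cdots\supset T_{d(u)-1}^u$, and the inequality $\sum_{\{u,v\}:\,u\in S_{uv}}\beta^u_{\{u,v\},u}\leq 1$ at $(u,u)$---together with the analogous constraints at $(u,w)$ for every $w\in H_u$---caps the aggregate $\beta^u$-weight incident to $u$. Naively setting $\alpha=1$ on every cross-component pair would therefore violate feasibility for deep vertices, so the delicate step is a careful calibration of $\alpha$ and $\beta$ values that exploits the balanced-separator guarantee (the opposite side of $u$ in every ancestor subtree contains at least $|T'|/2-1$ vertices), distributing each pair's $\beta$-burden across its two endpoints so that each vertex's total budget is respected while still extracting the full $|T'|/2$ dual value from each level.
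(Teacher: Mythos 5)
Your proposal follows the same route as the paper: dualize $\mathbf{LP_1}$, build the dual solution level-by-level over the recursion tree, and aim to extract dual value $|T'|/2$ from each recursion subtree $T'$ so that the total is $\mathrm{ALG}/2$. The per-level accounting ($\mathrm{ALG}=\sum_{T'}|T'|$) and the idea of charging each cross-component pair asymmetrically to its endpoints are both exactly what the paper does.

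However, the step you flag as ``the delicate step'' is the entire content of the proof, and you leave it unresolved. The worry you raise about constraints accumulating mass across recursion levels is dissolved by an observation you are missing: for a fixed constraint indexed by $(u,w)$, every nonzero $\beta_{uvw}$ is assigned at a \emph{single} recursion level, namely the level whose separator $r$ separates $u$ from $w$. This is because the assignment sets $\beta_{uvw}\neq 0$ (with $w\neq r$) only when $w$ lies strictly on $v$'s side of $r$, which forces $r\in P_{uw}$. Concretely, at the level with subtree $T'$ of size $n'$ and separator $r$, one sets $\alpha_{uv}=2/n'$ for each pair split at this level, $\beta_{uvw}=2/n'$ for $w\in P_{rv}\setminus\{r\}$ and $\beta_{uvw}=0$ for $w\in P_{ur}\setminus\{r\}$, and $\beta_{uvr}=\beta_{vur}=1/n'$. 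Then the constraint at $(u,w)$ with $w\neq r$ sums $2/n'$ over $v$ ranging inside the single component of $T'-r$ containing $w$, which by the balanced-separator property has at most $n'/2$ vertices, giving total at most $1$; the constraint at $(u,r)$ sums $1/n'$ over at most $n'$ vertices. In particular the constraints at $(u,u)$ and at $(u,w)$ for $w\in H_u$ never see contributions from more than one level, so no calibration across depths is needed. Without this locality observation (or an explicit substitute for it), your sketch does not yet constitute a proof: a symmetric split $\beta^u=\beta^v=\alpha/2$, for instance, would put mass $1/n'$ on the constraint at $(u,w)$ for \emph{every} $v$ with $w\in P_{uv}$ split at that level, and one would then have to verify a different counting. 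The remaining computation in the paper, showing the $\alpha$-mass per level is $\frac{2}{n'}\sum_{i<j}k_ik_j+\frac{2n'-1}{n'}\geq \frac{n'+1}{2}$, is routine once the assignment is fixed.
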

\begin{proof}
The primal and dual linear programs for \HL1 on trees are given in Figure~\ref{primal-dual}. We note that the dual variables $\{a_{uv}\}_{u,v}$ correspond to unordered pairs $\{u,v\} \in I$, while the variables $\{\beta_{uvw}\}_{u,v,w}$ correspond to ordered pairs $(u,v) \in V \times V$, i.e. $\beta_{uvw}$ and $\beta_{vuw}$ are different variables.

%%%%%%%%%%%%%%%
\begin{figure}
\begin{minipage}{0.4\textwidth}
\footnotesize
\vspace{-6pt}
(\textbf{PRIMAL-LP})
\vspace{12pt}
\begin{align*}
    \min: &\quad \sum_{u \in V} \sum_{v \in V} x_{uv}\\
    \text{s.t.:}  &\quad \sum_{w \in P_{uv}} y_{uvw} \geq 1, &&\forall\, \{u,v\} \in I, \\
          &\quad x_{uw} \geq y_{uvw}, &&\forall\, \{u,v\} \in I, \; \forall\, w \in P_{uv},\\
          &\quad x_{vw} \geq y_{uvw}, &&\forall\, \{u,v\} \in I, \; \forall\, w \in P_{uv},\\
          &\quad x_{uv} \geq 0, &&\forall\, \{u,v\} \in V \times V,\\
          &\quad y_{uvw} \geq 0, &&\forall\, \{u,v\} \in I, \;\forall\, w \in P_{uv}.
\end{align*}
\vspace{8pt}
\end{minipage}%
\hspace{4mm}\vline \hspace{4mm}
\begin{minipage}{0.4\textwidth}
\footnotesize
(\textbf{DUAL-LP})\\
variables: $\alpha_{uv}$ and $\beta_{uvw}$ for $w\in P_{uv}$
\begin{align*}
    \max: &\quad \sum_{\{u,v\} \in I} \alpha_{uv}\\
    \text{s.t.:} &\quad \alpha_{uv} \leq \beta_{uvw} + \beta_{vuw}, && \forall\, \{u,v\} \in I \;, u \neq v,\\
    & &&\forall\, w \in P_{uv},\\
    &\quad \alpha_{uu} \leq \beta_{uuu}, &&\forall\, u \in V,\\
    &\quad \sum_{v: w \in P_{uv}} \beta_{uvw} \leq 1, &&\forall\, (u,w) \in V \times V,\\
    &\quad  \alpha_{uv} \geq 0, && \forall\, \{u,v\} \in I,\\
    &\quad  \beta_{uvw} \geq 0, && \forall\, \{u,v\} \in I, \forall\, w \in P_{uv},\\
    &\quad  \beta_{vuw} \geq 0,&& \forall\, \{u,v\} \in I, \forall\, w \in P_{uv}. 
\end{align*}
\end{minipage}
\caption{Primal and Dual LPs for \HL1 on trees.}
\label{primal-dual}
\end{figure}
%%%%%%%%%%%%%%%%%
\normalsize

As already mentioned, it is straightforward to prove that the algorithm finds a feasible hierarchical hub labeling. We now bound the cost of the solution by  constructing a fractional solution for the DUAL-LP. To this end, we track the execution of the algorithm and gradually define the fractional solution. Consider one iteration (i.e. one level of the recursion) of the algorithm in which the algorithm processes a tree $T'$ ($T'$ is a subtree of $T$). Let $r$ be the balanced separator vertex that the algorithm finds in line 1. At this iteration, we assign dual variables $a_{uv}$ and $\beta_{uvw}$ for those pairs $u$ and $v$ in $T'$ for which $P_{uv}$ contains vertex $r$. Let $n'$ be the size of $T'$, $A = 2/n'$ and $B=1/n'$. Denote the connected components of $T'-r$ by $T_1, ..., T_t$; each $T_i$ is a subtree of $T'$.

Observe that we assign a value to each $a_{uv}$ and $\beta_{uvw}$ exactly once. Indeed, since we split $u$ and $v$ at some iteration, we will assign a value to $a_{uv}$ and $\beta_{uvw}$ at least once. Consider the first iteration in which we assign a value to $a_{uv}$ and $\beta_{uvw}$. At this iteration, vertices $u$ and $v$ lie in different subtrees $T_i$ and $T_j$ of $T'$ (or $r\in\{u,v\}$). Therefore, vertices $u$ and $v$ do not lie in the same subtree $T''$ in the consecutive iterations; consequently, we will not assign new values to $a_{uv}$ and $\beta_{uvw}$ later.

For $u\in T_i$ and $v\in T_j$ (with $i\neq j$), we define $\alpha_{uv}$, $\beta_{uvw}$ and $\beta_{vuw}$ as follows
\begin{itemize}
	\item $\alpha_{uv} = A$,
    \item For $w \in P_{ur} \setminus \{r\}$: $\beta_{uvw} = 0$ and $\beta_{vuw} = A$.
    \item For $w \in P_{rv} \setminus \{r\}$: $\beta_{uvw} = A$ and $\beta_{vuw} = 0$.
    \item For $w = r$: $\beta_{uvr} =\beta_{vur} = B$.
\end{itemize}
For $u\in T_i$ and $v=r$, we define $\alpha_{ur}$, $\beta_{urw}$ and $\beta_{ruw}$ as follows
\begin{itemize}
    \item $\alpha_{ur} = A$.
    \item For $w \in P_{ur} \setminus \{r\}$: $\beta_{urw} = 0$ and $\beta_{ruw} = A$.
    \item For $w = r$: $\beta_{urr} = \beta_{rur} = B$.
\end{itemize}
Finally, we set $\alpha_{rr} = \beta_{rrr} = B$.

We now show that the obtained solution $\{\alpha, \beta\}$ is a feasible solution for DUAL-LP. Consider the first constraint: $\alpha_{uv}\leq \beta_{uvw}+\beta_{vuw}$. If $u\neq r$ or $v\neq r$, $A=\alpha_{uv} = \beta_{uvw} + \beta_{vuw}  = 2B$. The second constraint is satisfied since $\alpha_{rr} =\beta_{rrr}$.

We now verify that the third constraint, $\sum_{v:w\in P_{uv}} \beta_{uvw} \leq 1$, is satisfied. Consider a \textit{non-zero} variable $\beta_{uvw}$ appearing in the sum. Consider the iteration of the algorithm in which we assign $\beta_{uvw}$ a value. Let $r$ be the balanced separator vertex during this iteration. Then, $r\in P_{uv}$ (otherwise, we would not assign any value to $\beta_{uvw}$) and $w\in P_{rv}$. Therefore,  $r\in P_{uw}$; that is, we assign the value to $\beta_{uvw}$ in the iteration when the algorithm splits $u$ and $w$ (the only iteration when $r\in P_{uw}$). In particular, we assign a value to all non-zero variables $\beta_{uvw}$ appearing in the constraint in the same iteration of the algorithm. Let us consider this iteration.

If $u\in T_i\cup \{r\}$ and $w\in T_j$, then every $v$ satisfying $w\in P_{uv}$ lies in $T_j$. For every such $v$, we have $\beta_{uvw} = A$. Therefore, $\sum_{v:w\in P_{uv}} \beta_{uvw} \leq |T_j| \cdot A \leq \frac{n'}{2} \cdot \frac{2}{n'} = 1$, as required. If $u\in T_i\cup\{r\}$ and $w=r$, then we have $\sum_{v: w\in P_{uv}} \beta_{uvw} = \sum_{v: r\in P_{uv}} \beta_{uvr} = \sum_{v: r \in P_{uv}} B \leq n' B   =  1$, as required. We have showed that $\{\alpha,\beta\}$ is a feasible solution. Now we prove that its value is at least half of the value of the hub labeling found by the algorithm. Since the value of any feasible solution of DUAL-LP is at most the cost of the optimal hub labeling, this will prove that the algorithm gives a 2-approximation.

We consider one iteration of the algorithm. In this iteration, we add $r$ to the hub set $H_u$ of every vertex $u\in T'$. Thus, we increase the cost of the hub labeling by $n'$. We are going to show that the dual variables that we set during this iteration contribute at least $n'/2$ to the value of DUAL-LP.

Let $k_i = |T_i| \leq  n' / 2$, for all $i \in \{1,\dots, t\}$. We have $\sum_i k_i  = n'-1$. The contribution $C$ of the variables $\alpha_{uv}$ that we set during this iteration to the objective function equals
\begin{align*}
    C &= \sum_{i<j} \sum_{u\in T_i, v\in T_j} \alpha_{uv} + \sum_{i} \sum_{u\in T_i} \alpha_{ur} + \alpha_{rr} = A\sum_{i<j} k_i k_j + A (n'-1) + B \\
      &= \frac{2}{n'} \sum_{i<j} k_i k_j + \frac{2n'-1}{n'}.
\end{align*}
Now, since $\sum_{j:j\neq i} k_j = (n'-1 - k_i)\geq (n'-2)/2$, we have
\begin{equation*}
    \frac{2}{n'} \sum_{i<j} k_i k_j =\frac{1}{n'} \sum_{i\neq j} k_i k_j = \frac{1}{n'}\sum_{i}k_i \left(\sum_{j:j\neq i} k_j\right) \geq \frac{n'-2}{2n'} \sum_{i} k_i = \frac{(n'-1)(n'-2)}{2n'}.
\end{equation*}
Thus,
\begin{equation*}
    C \geq \frac{(4n'-2) + ((n')^2 -3 n' + 2)}{2n'} = \frac{n'+1}{2}.
\end{equation*}
We proved that $C \geq n'/2$. This concludes the proof.
\end{proof}

Given the simplicity of the Tree Algorithm, it is interesting to understand whether the 2 approximation factor is tight or not. We do not have a matching lower bound, but we show an asymptotic lower bound of $3/2$. The instances that give this $3/2$ lower bound are, somewhat surprisingly, the (very symmetric!) complete binary trees.
\begin{lemma}\label{tree-alg-lower-bound}
The approximation factor of the Tree Algorithm for \HL1 is at least $3/2 - \varepsilon$, for every fixed $\varepsilon > 0$.
\end{lemma}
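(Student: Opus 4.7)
The plan is to use as lower bound instances the family of complete binary trees $\{T_d\}_{d\geq 1}$ of depth $d$ (with $n_d = 2^{d+1}-1$ vertices), and to show that on these trees the ratio between the cost of the Tree Algorithm and the optimum approaches $3/2$ as $d$ grows. First I would analyze the algorithm itself: on $T_d$ the root is the unique balanced separator (removing any non-root vertex leaves a component of size strictly greater than $n_d/2$), so Algorithm~\ref{Tree-Algorithm} is forced to pick the root and then to recurse on two complete binary subtrees of depth $d-1$. Unwinding the recursion, a vertex at depth $j$ in $T_d$ acquires a hub set of size exactly $j+1$, giving the exact formula
\[
    \mathrm{ALG}(T_d) \;=\; \sum_{j=0}^{d} 2^{j}(j+1) \;=\; d\cdot 2^{d+1} + 1.
\]

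Second, I would construct an explicit family of hierarchical hub labelings that beat the algorithm. By Theorem~\ref{HHL_optimal_for_trees} we may restrict attention to hierarchical labelings, and a hierarchical labeling on a tree is specified by a recursive choice of top vertex. The key idea is that on $T_d$ we are not required to pick the balanced separator: if instead we pick a child $u$ of the root, the root $r$ becomes a pendant attached to $u$'s sibling $v$, so the three components of $T_d \setminus\{u\}$ are two copies of $T_{d-2}$ and an auxiliary instance $U_d$ of the form ``complete binary tree of depth $d-1$ with an extra pendant at its root''. Recursing inside $U_d$ we may again choose not to pick the balanced separator of $U_d$ but rather a grandchild of $v$, producing a further auxiliary instance ``two short path vertices plus a complete binary subtree'', and so on. This defines, for each integer $k\geq 1$, a strategy $S_k$ which peels off exactly $k$ layers using this ``pick a deeper descendant'' trick before falling back to the recursive optimum on the remaining complete binary subtrees.

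Third, I would solve the resulting cost recurrence. Denoting by $f_k(d)$ the cost of strategy $S_k$ on $T_d$, a direct calculation shows that $f_k(d)$ satisfies a linear recurrence of the form
\[
    f_k(d) \;=\; c_k \cdot 2^{d} \;+\; \sum_{i\geq 1} a_{k,i}\, f_k(d-i) \;+\; O(1),
\]
whose dominant solution is $f_k(d) = \alpha_k\, d\, 2^{d} + O(2^{d})$. Setting $k=1$ recovers the natural ``pick a child of the root'' idea and yields $\alpha_1 = 3/2$, already strictly better than the algorithm's coefficient of $2$; more importantly, a straightforward induction shows that $\alpha_k$ is strictly decreasing in $k$ and that $\alpha_k \to 4/3$ as $k\to\infty$. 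The ratio $\mathrm{ALG}(T_d)/f_k(d)$ therefore tends to $2/\alpha_k$, which approaches $2/(4/3) = 3/2$. For any $\varepsilon>0$, choosing $k$ and then $d$ sufficiently large produces an instance on which Algorithm~\ref{Tree-Algorithm} outputs a labeling of cost at least $(3/2 - \varepsilon)\cdot\mathrm{OPT}$, proving the lemma.

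The main obstacle in carrying out this plan will be the bookkeeping required for the auxiliary ``short path plus complete binary subtree'' instances $U_d, W_d, X_d,\ldots$ that appear as one peels additional layers. Each one must be optimized using its own nested strategy and relates recursively to $f_k$ evaluated at smaller depths; verifying that these recurrences combine to give the claimed coefficient $\alpha_k$, and that deeper strategies genuinely strictly improve on shallower ones for $d$ large enough, is the technical heart of the argument.
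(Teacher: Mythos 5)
Your proposal is correct and follows essentially the same route as the paper: complete binary trees as the hard instances, the computation $\mathrm{ALG}(T_h)=2h\cdot 2^h+1$, and an improved hierarchical labeling that picks a child of the root, producing ``complete binary tree with a tail at the root'' subinstances whose recurrence yields the coefficient $\tfrac{4}{3}h\cdot 2^h$ and hence the ratio $3/2$. The only difference is presentational: the paper analyzes the single fully recursive strategy on instances $(h,t)$ directly by induction (with correction terms absorbing the tail contributions), whereas you reach the same constant as a limit over a family of finite-depth strategies $S_k$ with $\alpha_k\to 4/3$.
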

\begin{proof}
We consider the complete binary tree of height $h$, whose size is $n_h = 2^{h + 1} - 1$ (a single vertex is considered to have height 0). The cost of the Tree Algorithm on a complete binary tree of height $h$, denoted by $ALG(h)$, can be written as
\begin{equation*}
ALG(h) =
\begin{cases}
      (2^{h + 1} - 1) + 2 \cdot ALG(h - 1), & h \geq 1, \\
      1 & h = 0.
   \end{cases}
\end{equation*}
It is easy to see that the above implies that $ALG(h) = 2 \cdot h \cdot 2^h + 1$, for all $h \geq 0$. To obtain a $3/2$ gap, we now present an algorithm that gives a hub labeling of size $(1 + o_h(1)) \cdot \frac{4}{3} \cdot h \cdot 2^h$ on complete binary trees (where the $o_h(1)$ term goes to $0$ as $h\to\infty$).

It will again be a recursive algorithm (i.e.~a hierarchical labeling), only this time the recursion handles complete binary trees that may have some ``tail" at the root. More formally, the algorithm operates on graphs that can be decomposed into two disjoint parts, a complete binary tree of height $h$, and a path of length $t$. The two components are connected with an edge between the root of the binary tree and an endpoint of the path. Such a graph can be fully described by a pair $(h, t)$, where $h$ is the height of the tree and $t$ is the length of the path attached to the root of the tree.

The proposed algorithm for complete binary trees works as follows. Let $p$ be the root of the tree. Assuming $h \geq 2$, let $l$ be the left child of $p$, and $r$ be the right child of $p$. The algorithm picks $l$ as the vertex with the highest rank, and then recurses on the children of $l$ and on $r$. Observe that on the recursive step for $r$, we have a rooted tree on $r$, and the original root $p$ is now considered part of the tail. For $h = 0$, we have a path of length $t + 1$, and for $h = 1$, we simply remove $p$ and then end up with a path of length $t$ and two single vertices.

Let $\Path(t)$ denote the optimal \HL1 cost for a path of length $t$. It is not hard to show that the Tree Algorithm performs optimally for paths, and a closed formula for $\Path$ is
\begin{equation*}
    \Path(t) = (t + 1) \lceil \log (t + 1) \rceil - 2^{\lceil \log (t + 1) \rceil} + 1, \quad t \geq 0.
\end{equation*}
So, at the base cases, the proposed algorithm uses the Tree Algorithm on paths. Let $P(h, t)$ be the cost of this algorithm. Putting everything together, we obtain the recursive formula
\begin{equation*}
P(h,t) =
\begin{cases}
      (2^{h + 1} - 1) + t + 2 \cdot P(h - 2, 0) + P(h - 1, t + 1), & h \geq 2, \;t \geq 0, \\
      5 + t + \Path(t) & h = 1, \;t \geq 0, \\
      \Path(t + 1), & h = 0, \;t \geq 0.
   \end{cases}
\end{equation*}

The cost of the solution we obtain from this algorithm is $P(h, 0)$. Let $f(n) = \Path(n + 1) + 5$, $n \geq 0$, and $g(h) = C / \sqrt{h}$, $h \geq 0$, for some appropriate constant $C$. We prove by induction on $h$ that
\begin{equation*}
P(h, t) \leq \frac{4}{3} \cdot h \cdot 2^h + g(h) \cdot h \cdot 2^h + f(h + t) + h \cdot t, \quad\forall h \geq 0, t \geq 0.
\end{equation*}

The cases with $h = 0$ and $h = 1$ are obvious. If $h \geq 2$, then
\begin{equation*}
P(h, t) = 2^{h + 1} - 1 + t + 2 \cdot P(h - 2, 0) + P(h - 1, t + 1).
\end{equation*}
By the induction hypothesis, we have that
\begin{align*}
2 \cdot P(h - 2, 0) & \leq \frac{1}{2} \cdot \frac{4}{3} h \cdot 2^h - \frac{4}{3} 2^h + \frac{1}{2} g(h - 2) \cdot (h - 2) 2^h + 2 \cdot f(h - 2), \quad \textrm{and}\\
P(h - 1, t + 1) & \leq \frac{1}{2} \cdot \frac{4}{3} h \cdot 2^h - \frac{2}{3} 2^h + \frac{1}{2} g(h-1) \cdot (h - 1) 2^h + f(h + t) + (h - 1)\cdot(t + 1).
\end{align*}

\noindent Thus, we obtain
\begin{multline*}
    P(h, t) \leq \frac{4}{3} \cdot h \cdot 2^h +\left(\frac{g(h-2) \cdot(h-2) + g(h - 1) \cdot (h - 1)}{2} + \frac{h + 2f(h-2) - 2}{2^h} \right) 2^h + \\ + f(h + t) + h \cdot t.
\end{multline*}

\noindent Choosing the right constant $C$, we can show that for all $h \geq 2$, we have
\begin{equation*}
\frac{g(h-2) \cdot(h-2) + g(h - 1) \cdot (h - 1)}{2} + \frac{h + 2f(h-2) - 2}{2^h} \leq h \cdot g(h),
\end{equation*}
and so the inductive step is true. This means that
\begin{equation*}
P(h,0) \leq \frac{4}{3} \cdot h \cdot 2^h \cdot \left(1 + \frac{3}{4} g(h) + \frac{3 f(h)}{4h \cdot 2^h} \right) = \left(1 + o_h(1)\right) \cdot  \frac{4}{3} \cdot h \cdot 2^h,
\end{equation*}
(where the $o_h(1)$ term goes to $0$ as $h\to\infty$)
and so, for any $\varepsilon > 0$ there are instances where $\frac{ALG}{OPT} \geq \frac{3}{2} - \varepsilon$.
\end{proof}

\paragraph{Performance of the Tree Algorithm for \HLp on trees.} The Tree Algorithm does not find a good approximation for the $\ell_p$-cost, when $p$ is large. Let $k>1$ be an integer. Consider a tree $T$ defined as follows: it consists of a path $a_1,\dots, a_k$ and leaf vertices connected to the vertices of the path; vertex $a_i$ is connected to $2^{k-i}-1$ leaves. The tree has $n=2^k-1$ vertices. It is easy to see that the Tree Algorithm will first choose vertex $a_1$, then it will process the subtree of $T$ that contains $a_k$ and will choose $a_2$, then $a_3$ and so on. Consequently, the hub set $H_{a_k}$ equals $\{a_1,\dots, a_k\}$ in the obtained hub labeling. The $\ell_p$-cost of this hub labeling is greater than $k$. However, there is a hub labeling $\widetilde H$ for the path $a_1,\dots, a_k$ with $|\widetilde H_{a_i}| \leq O(\log k)$, for all $i \in [k]$. This hub labeling can be extended to a hub labeling of $T$, by letting $\widetilde H_{l} = \widetilde H_{a_i} \cup \{l\}$ for each leaf $l$ adjacent to a vertex $a_i$. Then we still have $|\widetilde H_u| \leq O(\log k)$, for any vertex $u \in T$. The $\ell_p$-cost of this solution is $O(n^{1/p} \log k)$. Thus, for $k=p$, the gap between the solution $H$ and the optimal solution is at least $\Omega(p/\log p)$. For $p=\infty$, the gap is at least $\Omega(\log n/\log\log n)$, asymptotically.

%%%%%%%%%%%%%%%%%%%%%%%%%%%%%%%%%%%%%%%%%%%%%%%%%%%%%%%%%%%%%%%%%%%%%%%%%%%%%%%%%%%%%%%%%%%%%%%%%

\section{A PTAS for HL on trees}\label{sec:ptas-trees}

\subsection{A PTAS for \HL1 on trees}\label{sec:ptas-l1-trees}

We now present a polynomial-time approximation scheme (PTAS) for HL on trees. We first present the algorithm for \HL1, based on dynamic programming (DP), and then we slightly modify the DP and show that it can work for \HLp, for any $p \in [1, \infty]$.

Let $T = (V,E)$ be any tree. The starting point is Theorem~\ref{HHL_optimal_for_trees}, which shows that we can restrict our attention to hierarchical hub labelings. That is, we can find an optimal solution by choosing an appropriate vertex $r$, adding $r$ to every $H_u$, and then recursively solving HL on each connected component of $T - r$ (see Section~\ref{hhl-section}). Of course, we do not know what vertex $r$ we should choose, so to implement this approach, we use dynamic programming (DP). Let us first consider a very basic dynamic programming solution. We store a table $B$ with an entry $B[T']$ for every subtree $T'$ of $T$. Entry $B[T']$ equals the cost of the optimal hub labeling for tree $T'$. Now if $r$ is the common  hub of all vertices in $T'$, we have
\begin{equation*}
    B[T'] = |T'|+ \sum_{T'' \text{ is c.c. of } T'-r} B[T'']
\end{equation*}
(the term $|T'|$ captures the cost of adding $r$ to each $H_u$). Here, ``c.c." is an abbreviation for ``connected component". We obtain the following recurrence formula for the DP:
\begin{equation}\label{eq:DP-main}
B[T'] = |T'|+ \min_{r \in T'} \sum_{T'' \text{ is c.c. of } T'-r} B[T''].
\end{equation}
The problem with this approach, however, is that a tree may have exponentially many subtrees, which means that the size of the dynamic program and the running time may be exponential.

To work around this, we will instead store $B[T']$ only for some subtrees $T'$, specifically for subtrees with a ``small boundary''. For each subtree $T'$ of $T$, we define its boundary $\partial(T')$ as $\partial(T'): = \{v \notin T': \exists u \in T' \textrm{ with } (u,v) \in E\}$. Consider now a subtree $T'$ of $T$ and its boundary $S= \partial(T')$. Observe that if $|S| \geq 2$, then the set $S$ uniquely identifies the subtree $T'$: $T'$ is the unique connected component of $T-S$ that has all vertices from $S$ on its boundary (every other connected component of  $T-S$ has only one vertex from $S$ on its boundary). If $|S|=1$, that is, $S = \{u\}$ for some $u \in V$, then it is easy to see that $u$ can serve as a boundary point for $\deg(u)$ different subtrees.

Fix $\varepsilon < 1$. Let $k = 4 \cdot \lceil 1 / \varepsilon \rceil$. In our dynamic program, we only consider subtrees $T'$ with $|\partial(T')| \leq k$. Then, the total number of entries is upper bounded by $\sum_{i = 2}^k \binom{n}{i} + \sum_{u\in V} \deg(u) = O(n^k)$. Note that now we cannot simply use formula~(\ref{eq:DP-main}). In fact, if $|\partial(T')| < k$, formula~(\ref{eq:DP-main}) is well defined since each connected component $T''$ of $T' -r$ has boundary of size at most $|\partial(T')| + 1 \leq k$ for any choice of $r$ (since $\partial(T'') \subseteq \partial(T') \cup \{r\}$). However, if $|\partial(T')| =k$, it is possible that $|\partial(T'')| = k+1$, and formula~(\ref{eq:DP-main}) cannot be used. Accordingly, there is no clear way to find the optimal vertex $r$. Instead, we choose a vertex $r_0$ such that for every connected component $T''$ of $T'-r_0$, we have $|\partial(T'')|\leq k/2+1$. To prove that such a vertex exists, we consider the tree $T'$ with vertex weights $w(u) = \left|\{v\in \partial(T'): (u,v)\in E\} \right|$ and find a balanced separator vertex $r_0$ of $T'$ w.r.t.~weights $w(u)$ (see Definition~\ref{def:bal-sep-vertex}). Then, the weight $w$ of every connected component $T''$ of $T'-r_0$ is at most $k/2$. Thus, $|\partial(T'')| \leq k/2+1 < 3k/4 <k$ (we add 1 because $r_0\in \partial(T'')$).

The above description implies that the only cases where our algorithm does not perform ``optimally" are the subproblems $T'$ with $|\partial(T')| = k$. It is also clear that these subproblems cannot occur too often, and more precisely, we can have at most 1 every $k / 2$ steps during the recursive decomposition into subproblems. Thus, we will distribute the cost (amortization) of each such non-optimal step that the algorithm makes over the previous $k/4$ steps before it occurs, whenever it occurs, and then show that all subproblems with boundary of size at most $3k/4$ are solved ``almost" optimally (more precisely, the solution to such a subproblem is $(1 + 4 / k)$-approximately optimal). This implies that the final solution will also be $(1 + 4/k)$-approximately optimal, since its boundary size is 0.

We now describe our algorithm in more detail. We keep two tables $B[T']$ and $C[T']$. We will define their values so that we can find, using dynamic programming, a hub labeling for $T'$ of cost at most $B[T'] + C[T']$. Informally, the table $C$ can be viewed as some extra budget that we use in order to pay for all the recursive steps with $|\partial(T')| = k$. For every $T'$ with $|\partial(T')| \leq k$, we define $C[T']$ as follows:
\begin{equation*}
    C[T'] = \max \left\{0, \left(|\partial(T')| - 3k/4 \right) \cdot 4|T'| / k \right\}.
\end{equation*}
We define $B$ (for $|T'|\geq 3$) by the following recurrence (where $r_0$ is a balanced separator):
\begin{equation*}
    B[T'] = \begin{cases}
             (1 + 4 / k) \cdot|T'|+ \min_{r \in T'} \sum_{T'' \text{ is c.c. of } T'-r} B[T''], & \text{ if } |\partial(T')| < k,\\
            \sum_{T'' \text{ is c.c. of } T'-r_0} B[T''], & \text{ if } |\partial(T')| = k.
\end{cases}
\end{equation*}
The base cases of our recursive formulas are when the subtree $T'$ is of size 1 or 2. In this case, we simply set $B[T'] = 1$, if $|T'| = 1$, and $B[T'] = 3$, if $|T'| = 2$.

In order to fill in the table, we generate all possible subsets of size at most $k$ that are candidate boundary sets, and for each such set we look at the resulting subtree, if any, and compute its size. We process subtrees in increasing order of size, which can be easily done if the generated subtrees are kept in buckets according to their size. Overall, the running time will be $n^{O(k)}$.

We will now show that the algorithm has approximation factor $(1 + 4/k)$ for any $k = 4t$, $t \geq 1$.

\begin{theorem}
The algorithm is a polynomial-time approximation scheme (PTAS) for \HL1 on trees.
\end{theorem}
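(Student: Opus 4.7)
The plan is to prove the theorem via two inductive statements whose combination yields the top-level approximation: (A) for every subproblem $T'$ considered by the DP, the algorithm constructs a feasible hub labeling of cost $\mathrm{ALG}[T'] \leq B[T']+C[T']$; and (B) for every such $T'$, $B[T'] + C[T'] \leq (1+4/k)\,\mathrm{OPT}[T']$. Evaluated on the full tree $T$, where $|\partial(T)|=0$ gives $C[T]=0$, we conclude $\mathrm{ALG}[T] \leq B[T] \leq (1+4/k)\,\mathrm{OPT}[T]$; the running-time analysis (with $O(n^k)$ table entries each filled in $O(n)$ time) yields $n^{O(1/\varepsilon)}$ total time, establishing the PTAS since $k = 4\lceil 1/\varepsilon\rceil$ ensures $4/k \leq \varepsilon$.

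For (A) I would proceed by induction on $|T'|$. In case 1 ($|\partial(T')|<k$) one writes $\mathrm{ALG}[T'] = |T'| + \sum_{T''} \mathrm{ALG}[T''_r]$ for the DP's chosen $r$ and applies IH; the step reduces to verifying $\sum_{T''} C[T''_r] \leq (4/k)|T'| + C[T']$. Setting $|\partial(T')| = 3k/4 + j$ (with the convention that negative $j$ triggers positive-part truncation) and using $|\partial(T''_r)| \leq |\partial(T')|+1$ together with $\sum_{T''}|T''_r| \leq |T'|$, the LHS is at most $(j+1)\cdot 4|T'|/k$, matching the RHS exactly. In case 2 ($|\partial(T')|=k$), every child satisfies $|\partial(T''_{r_0})| \leq k/2+1 \leq 3k/4$, so $C[T''_{r_0}]=0$; together with $C[T']=|T'|$ and IH, this yields $\mathrm{ALG}[T'] \leq |T'| + \sum B[T''_{r_0}] = B[T']+C[T']$.

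For (B) I again induct on $|T'|$. In the clean regime $|\partial(T')| \leq 3k/4-1$, Theorem~\ref{HHL_optimal_for_trees} supplies an optimal hierarchical hub labeling with root $r^*$ satisfying $\mathrm{OPT}[T'] = |T'| + \sum \mathrm{OPT}[T''_{r^*}]$; since every child lies in the range $|\partial(T''_{r^*})| \leq 3k/4$ the IH applies cleanly, yielding $B[T'] \leq (1+4/k)|T'| + (1+4/k)\sum \mathrm{OPT}[T''_{r^*}] = (1+4/k)\,\mathrm{OPT}[T']$ with $C[T']=0$. The main obstacle is the intermediate regime $|\partial(T')| \in [3k/4,k]$, where two difficulties arise: in case 1, children may carry nonzero $C$-budget and the naive inequality $C[T'] \leq \sum_{T''} C[T''_{r^*}]$ can fail when $r^*$ splits $\partial(T')$ evenly among children; in case 2, the boundary-balanced separator $r_0$ may differ from $r^*$, potentially inflating $\sum \mathrm{OPT}[T''_{r_0}]$. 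The intended resolution is the global amortization the algorithm is designed around: at most one case-2 step can occur within any $k/4$ consecutive recursion levels (because case 2 forces the boundary to drop to $\leq k/2+1$), and the $(4/k)|T'_i|$ pre-charge accumulated across the $k/4$ case-1 ancestors (whose sizes satisfy $|T'_i| \geq |T'_{\text{case 2}}|$ by monotonicity along the recursion path) totals at least $|T'_{\text{case 2}}|$, exactly funding the rooting cost that $B$ omits in case 2. Making this rigorous—in particular exploiting the DP's freedom to pick an analysis-friendly witness $r$ in case 1 rather than $r^*$, and bounding $\sum \mathrm{OPT}[T''_{r_0}] - \sum \mathrm{OPT}[T''_{r^*}] = O(|T'|/k)$ via the boundary-balanced property of $r_0$—is the delicate part of the argument.
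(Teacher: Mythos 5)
Your overall architecture (the two tables $B$ and $C$, part (A) bounding the algorithm's cost by $B[T']+C[T']$, and the evaluation at the root where $C[T]=0$) is the paper's, and your part (A) is essentially identical to the paper's first inductive claim, including the absorption of the children's $C$-growth into the $(4/k)|T'|$ surcharge. The gap is in part (B). You have made the inductive invariant too strong by including $C[T']$: the statement $B[T']+C[T'] \leq (1+4/k)\,OPT_{T'}$ does not close inductively (as you observe, it would require $C[T'] \leq \sum_{T''} C[T'']$, which fails when the chosen root splits the boundary), and in the case $|\partial(T')|=k$, where $C[T']=|T'|$, it is simply false in general. The repair you propose — a cross-level amortization of the \emph{optimum}, together with a bound of the form $\sum OPT[T''_{r_0}] - \sum OPT[T''_{r^*}] = O(|T'|/k)$ — is not available: there is no reason a boundary-balanced separator should be nearly as good a root as $r^*$ in this sense, and you correctly flag that you cannot make it rigorous. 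The amortization you describe lives entirely in part (A); it has no role in the comparison against $OPT$.

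The missing idea is a one-line subadditivity lemma: for \emph{any} vertex $r_0$, restricting the optimal hub labeling of $T'$ to each connected component $T''$ of $T'-r_0$ yields a feasible hub labeling of $T''$, hence $\sum_{T''} OPT_{T''} \leq OPT_{T'}$. With this, prove only the weaker invariant $B[T'] \leq (1+4/k)\,OPT_{T'}$ (no $C$ term), by induction on $|T'|$. In the case $|\partial(T')|<k$, decompose $OPT_{T'} = |T'| + \sum_{T''} OPT_{T''}$ at the optimal root $r^*$ (Theorem~\ref{HHL_optimal_for_trees}); since every child satisfies $|\partial(T'')|\leq |\partial(T')|+1\leq k$, all entries $B[T'']$ are defined and the minimum over $r$ in the recurrence is at most the value at $r^*$, giving $B[T'] \leq (1+4/k)|T'| + (1+4/k)\sum OPT_{T''} = (1+4/k)OPT_{T'}$ uniformly, with no intermediate regime to handle. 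In the case $|\partial(T')|=k$, the recurrence $B[T'] = \sum_{T''} B[T'']$ combined with the subadditivity lemma applied to $r_0$ gives $B[T'] \leq (1+4/k)\sum OPT_{T''} \leq (1+4/k)OPT_{T'}$ directly — no comparison with $r^*$ and no amortization of $OPT$ is needed. Combining this with your part (A) at the root yields the theorem.
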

\begin{proof}
We first argue about the approximation guarantee. The argument consists of an induction that incorporates the amortized analysis that was described above. More specifically, we will show that for any subtree $T'$, with $|\partial(T')| \leq k$, the total cost of the algorithm's solution is at most $B[T'] + C[T']$, and $B[T'] \leq \left(1 + \frac{4}{k} \right) \cdot OPT_{T'}$. Then, the total cost of the solution to the original HL instance is at most $B[T] + C[T]$, and, since $C[T] = 0$, we get that the cost is at most $(1 + 4/k) \cdot OPT$.

The induction is on the size of the subtree $T'$. For $|T'| = 1$ or $|T'| = 2$, the hypothesis holds. Let's assume now that it holds for all trees of size at most $t \geq 2$. We will argue that it then holds for trees $T'$ of size $t + 1$. We distinguish between the cases where $|\partial(T')| < k$ and $|\partial(T')| = k$.

\medskip
\noindent\textbf{Case ${|\partial(T')| < k}$:} Let $u_0 \in T'$ be the vertex that the algorithm picks and removes. The vertex $u_0$ is the minimizer of the expression $\min_{r' \in T'} \sum_{T'' \textrm{ is c.c of } T'-r'} B[T'']$, and thus, using the induction hypothesis, we get that the total cost of the solution returned by the algorithm is at most:

\begin{equation*}
\begin{split}
    ALG(T') &\leq |T'| + \sum_{\substack{T'' \textrm{ is c.c.}\\\textrm{of }T' - u_0}} \Big( B[T''] + C[T''] \Big) \\
            &\leq |T'| + \sum_{\substack{T'' \textrm{ is c.c.}\\\textrm{of }T' - u_0}} B[T''] + \sum_{\substack{T'' \textrm{ is c.c.}\\\textrm{of }T' - u_0}} \max \left\{0, (|\partial(T')| + 1 - 3k/4) \cdot 4|T''|/k \right\} \\
            &= |T'| + \sum_{\substack{T'' \textrm{ is c.c.}\\\textrm{of }T' - u_0}} B[T''] + \max\{0, |\partial(T')| + 1 - 3k/4\} \cdot (4/k) \cdot \sum_{\substack{T'' \textrm{ is c.c.}\\\textrm{of }T' - u_0}}|T''| \\
            &\leq |T'| + \sum_{\substack{T'' \textrm{ is c.c.}\\\textrm{of }T' - u_0}} B[T''] + 4|T'|/k + \max\{0, |\partial(T')| - 3k/4 \} \cdot 4|T'| / k\\
            &\leq (1 + 4/k) \cdot |T'| + \left(\sum_{T'' \textrm{ is c.c.~of }T' - u_0} B[T''] \right) + C[T'] \\
            &= B[T'] + C[T'].
\end{split}
\end{equation*}

We proved the first part. We now have to show that $B[T'] \leq (1 + 4/k) OPT_{T'}$. Consider an optimal HL for $T'$. By Theorem~\ref{HHL_optimal_for_trees}, we may assume that it is a hierarchical labeling. Let $r \in T'$ be the vertex with the highest rank in this optimal solution. We have
\begin{equation*}
    OPT_{T'} = |T'| + \sum_{T'' \textrm{ is comp. of }T' - r} OPT_{T''}.
\end{equation*}
By definition, we have that
\begin{equation*}
\begin{split}
    B[T'] &= (1 + 4/k) \cdot |T'| + \min_{u \in T'} \sum_{T'' \textrm{ is c.c. of }T' -u} B[T''] \leq (1 + 4/k) \cdot |T'| + \sum_{T'' \textrm{ is c.c. of }T' -r} B[T''] \\
          &\stackrel{(ind.hyp.)}{\leq} (1 + 4/k) \cdot |T'| + (1 + 4/k) \cdot \sum_{T'' \textrm{ is c.c. of }T' -r} OPT_{T''} = (1 + 4/k) \cdot OPT_{T'}.
\end{split}
\end{equation*}

\smallskip
\noindent\textbf{Case $|\partial(T')| = k$:}
 %In this case, let $u_0 \in X$ be the minimizer of $\min_{u \in X} \sum_{T'' \textrm{ tree of forest }T' \setminus \{u\}} B[T'']$. Again,
 Using the induction hypothesis, we get that the total cost of the solution returned by the algorithm is at most:
\begin{equation*}
    ALG(T') \leq |T'| + \sum_{T'' \textrm{ is c.c. of }T' -r_0} B[T''] + \sum_{T'' \textrm{ is c.c. of }T' - r_0} C[T''].
\end{equation*}
By our choice of $r_0$, we have $|\partial(T'')| \leq 3k/4$, and so $C[T''] = 0$, for all trees $T''$ of the forest $T' - {r_0}$. Thus,
\begin{equation*}
    ALG(T') \leq |T'| + \sum_{T'' \textrm{ is c.c. of }T' - {r_0}} B[T''] = C[T'] + B[T'].
\end{equation*}
We now need to prove that $B[T'] \leq (1 + 4/k) \cdot OPT_{T'}$. We have
\begin{equation*}
    B[T'] = \sum_{T'' \textrm{ is c.c. of }T'- {r_0}} B [T''] \stackrel{(ind.hyp.)}{\leq} \sum_{T'' \textrm{ is c.c. of }T' -r_0} \left(1 + \frac{4}{k} \right) OPT_{T''} \leq \left(1 + \frac{4}{k} \right) OPT_{T'},
\end{equation*}
where in the last inequality we use that $\sum_{T''} OPT_{T''}\leq OPT_{T'}$, which can be proved as follows. We convert the optimal hub labeling $H'$ for $T'$ to a set of hub labelings for all subtrees $T''$ of $T' - r_0$: the hub labeling $H''$ for $T''$ is the restriction of $H'$ to $T''$; namely, $H''_v = H'_v \cap V(T'')$ for every vertex $v\in T''$; it is clear that the total number of hubs in labelings $H''$ for all subtrees $T''$ is at most the cost of $H'$. Also, the cost of each hub labeling $H''$ is at least $OPT_{T''}$. The inequality follows.
%(see also Claim~\ref{superadditive} with $p=1$ and $t=0$). 

We have considered both cases, $|S|<k$ and $|S|=k$, and thus shown that the hypothesis holds for any subtree $T'$ of $T$. In particular, it holds for $T$. Therefore, the algorithm finds a solution of cost at most $B[T] + C[T] = B[T] \leq \left(1 + \frac{4}{k} \right) OPT$.

Setting $k = 4 \cdot \lceil 1 / \varepsilon \rceil$, as already mentioned, we get a $(1 + \varepsilon)$-approximation, for any fixed $\varepsilon \in (0,1)$, and the running time of the algorithm is $n^{O(1 / \varepsilon)}$.
\end{proof}

%%%%%%%%%%%%%%%%%%%%%%%%%%%%%%%%%%%%%%%%%%%%%%%%%%%%%%%%%%%%%%%%%%%%%%%%%%%%%%%%%%%%%%%%%%%%%%%%%%%%%%%%%%%%%%%%%%%%%%%%%%%
\subsection{A PTAS for \HLp on trees}\label{sec:ptas_lp}

In this section, we describe a polynomial-time approximation scheme (PTAS) for \HLp for arbitrary $p \in [1,\infty)$. Our algorithm is a modification of the dynamic programming algorithm for \HL1. The main difficulty that we have to deal with is that the $\ell_p$-cost of an instance cannot be expressed in terms of the $\ell_p$-cost of the subproblems, since it might happen that suboptimal solutions for its subproblems give an optimal solution for the instance itself. Thus, it is not enough to store only the cost of the ``optimal" solution for each subproblem.

Let
\begin{equation*}
    OPT[T',t]^p = \min_{H\textrm{ is an HHL for } T'} \sum_{u \in T'} (|H_u| + t)^p.
\end{equation*}
 Clearly, $OPT[T,0]^p$ is the cost of an optimal \HLp solution for $T$, raised to the power $p$. Observe that $OPT[T', t]^p$ satisfies the following recurrence relation:
\begin{equation}\label{eq:OPT-lp-shifted}
OPT[T',t]^p = (1 + t)^p + \min_{r \in T'} \sum_{T'' \textrm{ is a c.c.~of }T' - r} OPT[T'', t + 1]^p.
\end{equation}
Indeed, let $\widetilde H$ be an HHL for $T'$ that minimizes $\sum_{u \in T'} (|\widetilde H_u| + t)^p$. Let $r'$ be the highest ranked vertex in $T'$ w.r.t.~the ordering defined by $\widetilde H$. For each tree $T''$ in the forest $T'-r'$, consider the hub labeling $\{\widetilde H_u \cap T''\}_{u\in T''} = \widetilde H_u - r'$. Since $|\widetilde H_u| = |\widetilde H_u \cap T''| +1$, we have
\begin{equation*}
    \sum_{u \in T''} (|\widetilde H_u| + t)^p = \sum_{u \in T''} (|\widetilde H_u \cap T''| + t+1)^p \geq OPT[T'', t+1]^p.
\end{equation*}
Also, $|\widetilde H_{r'} | =1$. Therefore,
\begin{equation*}
    OPT[T',t]^p = \sum_{u\in T'} (|\widetilde H_u|+t)^p = (|\widetilde H_{r'}|+t)^p + \sum_{T''}\sum_{u\in T''} (|\widetilde H_u|+t)^p \geq (1 + t)^p + \sum_{T''} OPT[T'', t + 1]^p.
\end{equation*}
The proof of the inequality in the other direction is similar. Consider $r$ that minimizes the expression on the right hand side of (\ref{eq:OPT-lp-shifted}) and optimal HHLs for subtrees $T''$ of $T'-r$. We combine these HHLs and obtain a feasible HHL $\widetilde H$. We get
\begin{equation*}
    OPT[T',t]^p \leq (1+t)^p + \sum_{u \in T'} (|\widetilde H_u| + t + 1)^p = (1 + t)^p + \sum_{T''} OPT[T'', t + 1]^p.
\end{equation*}
This concludes the proof of the recurrence.

If we were not concerned about the running time of the algorithm, we could have used this recursive formula for $OPT[T',t]^p$ to find the exact solution (the running time would be exponential).  In order to get a polynomial-time algorithm, we again consider only subtrees $T'$ with boundary of size  at most $k$. We consider the cases when $|\partial(T')| < k$ and when $|\partial(T')| = k$. In the former case, we use formula (\ref{eq:OPT-lp-shifted}). In the latter case, when $|\partial(T')| = k$, we perform the same step as the one performed in the algorithm for \HL1: we pick a weighted balanced separator vertex $r_0$ of $T'$ such that $|\partial(T'')| \leq k/2 +1$ for every subtree $T''$ of $T'-r_0$. Formally, we define a dynamic programming table $B[T',t]$ as follows:
\begin{equation*}
    B[T',t] = 
        \begin{cases}
            (1 + t)^p + \min_{r \in T'} \sum_{T'' \text{ is a c.c.~of } T'-r} B[T'',t+1], &\text{ if } |\partial(T')| < k,\\
            (1 + t)^p + \sum_{T'' \text{ is a c.c.~of } T'-r_0} B[T'',t], &\text{ if } |\partial(T')| = k.
        \end{cases}
\end{equation*}
The base cases of our recursive formulas are when the subtree $T'$ is of size 1 or 2. In this case, we simply set $B[T',t] = (1 + t)^p$, if $|T'| = 1$, and $B[T',t] = (1 + t)^p + (2 + t)^p$, if $|T'| = 2$. We will need the following two claims.

\begin{claim}\label{superadditive}
For any tree $T$ and a partition of $T$ into disjoint subtrees $\{T_1, ..., T_j\}$ such that $\bigcup_{i = 1}^j T_i = T$, we have
\begin{equation*}
\sum_{i = 1}^j OPT[T_i,t]^p \leq OPT[T,t]^p.
\end{equation*}
\end{claim}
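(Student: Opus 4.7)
The plan is to start from an optimal hierarchical hub labeling $H$ that realizes $OPT[T,t]^p$, and then simply restrict it to each piece of the partition. Concretely, for every $i \in [j]$ and every $u \in T_i$, I would define
\begin{equation*}
    H^{(i)}_u := H_u \cap V(T_i).
\end{equation*}
The whole proof reduces to two points: (a) each $H^{(i)}$ is a feasible hierarchical hub labeling of $T_i$; (b) the obvious bound $|H^{(i)}_u| \leq |H_u|$ allows one to sum up and conclude.

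The main step is (a). The key observation is that each $T_i$ is a \emph{subtree} of the tree $T$, hence convex with respect to shortest paths: for any $u, v \in V(T_i)$, the unique $u$-$v$ path $P_{uv}$ in $T$ lies entirely in $T_i$. Therefore, a common hub $w \in H_u \cap H_v \cap P_{uv}$ guaranteed by the hub property of $H$ automatically belongs to $V(T_i)$, and so $w \in H^{(i)}_u \cap H^{(i)}_v \cap P_{uv}$; thus $H^{(i)}$ satisfies the covering property on $T_i$. For the hierarchical part, the partial order $\preceq$ induced by $H$ (see Definition~\ref{def:hhl}) restricts to a partial order on $V(T_i)$, and this is exactly the order induced by $H^{(i)}$. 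So each $H^{(i)}$ is a feasible HHL of $T_i$.

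Once (a) is in place, (b) is a direct computation. Since $|H^{(i)}_u| \leq |H_u|$ for every $u \in T_i$, by definition of $OPT[T_i,t]^p$,
\begin{equation*}
    OPT[T_i, t]^p \;\leq\; \sum_{u \in T_i}\bigl(|H^{(i)}_u| + t\bigr)^p \;\leq\; \sum_{u \in T_i}\bigl(|H_u| + t\bigr)^p.
\end{equation*}
Summing over $i$ and using that $\{V(T_i)\}_{i=1}^j$ is a partition of $V(T)$,
\begin{equation*}
    \sum_{i=1}^j OPT[T_i, t]^p \;\leq\; \sum_{i=1}^j \sum_{u \in T_i}\bigl(|H_u| + t\bigr)^p \;=\; \sum_{u \in T}\bigl(|H_u| + t\bigr)^p \;=\; OPT[T, t]^p,
\end{equation*}
which is exactly the claim. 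The only nontrivial ingredient is the convexity of subtrees used to verify feasibility of the restricted labeling; the rest is bookkeeping and does not present any real obstacle.
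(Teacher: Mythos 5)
Your proposal is correct and follows essentially the same route as the paper: restrict an optimal (hierarchical) labeling $H$ to each subtree $T_i$ via $H^{(i)}_u = H_u \cap V(T_i)$, use the convexity of subtrees to verify feasibility, and sum the trivial bound $|H^{(i)}_u| \leq |H_u|$. Your explicit check that the restriction remains hierarchical is a small but worthwhile addition, since $OPT[T_i,t]^p$ is defined as a minimum over HHLs.
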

\begin{proof}
Consider an optimal hierarchical solution $H$ for the \HLp problem defined by $(T,t)$. Define $H^{(i)} = \{H_u \cap T_i: u \in T_i\}$. Observe that $\{H_u^{(i)}\}_{u \in T_i}$ is a feasible hub labeling for $T_i$, since the original instance is a tree. We have 
\begin{equation*}
    OPT[T,t]^p = \sum_{i = 1}^j \sum_{u \in T_i} (|H_u| + t)^p \geq \sum_{i = 1}^j \sum_{u \in T_i} (|H_u^{(i)}| + t)^p \geq \sum_{i = 1}^j OPT[T_i,t]^p.
\end{equation*}
\end{proof}

\begin{claim}\label{lower_bound_claim_ptas_p_norm}
For any $T'$ and $t \geq 0$, $B[T',t] \leq OPT[T',t]^p$.
\end{claim}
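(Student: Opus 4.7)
The plan is to prove the claim by induction on $|T'|$. The base cases $|T'| \in \{1, 2\}$ follow by direct verification against the definitions: for a single vertex, both sides equal $(1+t)^p$, and for a tree on two vertices the unique HHL structure forces hub set sizes $1$ and $2$, so both $B[T', t]$ and $OPT[T', t]^p$ evaluate to $(1+t)^p + (2+t)^p$.

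For the inductive step, I split on whether $|\partial(T')| < k$ or $|\partial(T')| = k$. In the first case, the defining recurrence for $B[T',t]$ mirrors equation~(\ref{eq:OPT-lp-shifted}). Let $r^{\ast}$ be the minimizer in the $OPT$ recurrence. Since the DP takes a minimum over all vertices of $T'$, I get
\begin{equation*}
    B[T',t] \leq (1+t)^p + \sum_{T''} B[T'', t+1],
\end{equation*}
where the sum ranges over the connected components $T''$ of $T' - r^{\ast}$. Each $T''$ is strictly smaller than $T'$, so by the inductive hypothesis $B[T'', t+1] \leq OPT[T'', t+1]^p$. Plugging these bounds into the $OPT$ recurrence at $r^{\ast}$ gives $B[T',t] \leq OPT[T',t]^p$, as required.

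For the second case, where $|\partial(T')| = k$, the recurrence for $B$ does not increment $t$, so a different argument is needed. The key observation is that $\{\{r_0\}\} \cup \{T'' : T'' \text{ is a c.c.~of } T'-r_0\}$ is a partition of $T'$ into disjoint subtrees. Applying Claim~\ref{superadditive} to this partition, and using that $OPT[\{r_0\}, t]^p = (1+t)^p$ for a single-vertex subtree, yields
\begin{equation*}
    (1+t)^p + \sum_{T''} OPT[T'', t]^p \;\leq\; OPT[T', t]^p.
\end{equation*}
Combining this with the inductive hypothesis $B[T'', t] \leq OPT[T'', t]^p$ applied to each strictly smaller $T''$, I conclude $B[T',t] \leq OPT[T',t]^p$.

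The only slightly subtle step is the second case, since the DP does not pay the $(1+t)^p$ term in a way that corresponds to any natural choice of a top hub in an HHL for $T'$. The superadditivity of $OPT$ (Claim~\ref{superadditive}) is the essential tool: it absorbs the extra $(1+t)^p$ contribution from the separator $r_0$ into $OPT[T', t]^p$ via the partition, letting the inductive hypothesis close the argument on the remaining subtrees.
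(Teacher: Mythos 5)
Your proof is correct and follows essentially the same route as the paper: induction on $|T'|$, with the case $|\partial(T')| < k$ handled by matching the $B$-recurrence against recurrence~(\ref{eq:OPT-lp-shifted}) at the optimal separator, and the case $|\partial(T')| = k$ handled by applying Claim~\ref{superadditive} to the partition $\{r_0\}$ together with the components of $T'-r_0$. Your closing remark correctly identifies why the superadditivity claim is the essential ingredient in the second case, which is exactly how the paper closes that step.
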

\begin{proof}
We do induction on the size of $T'$. For $|T'| \in \{1,2\}$, the claim holds trivially for all $t \geq 0$. Let us assume that it holds for all subtrees of size at most $s$ and for all $t \geq 0$. We will prove that it holds for all subtrees of size $s+ 1$ and for all $t \geq 0$. We  again consider two cases.

\medskip
\noindent\textbf{Case ${|\partial(T')| < k}$:}
\begin{equation*}
\begin{split}
    B[T', t] &= (1 + t)^p + \min_{r \in T'} \sum_{T'' \text{ is c.c.~of } T'-r} B[T'',t+1] \\
             &\leq (1 + t)^p + \min_{r \in T'} \sum_{T'' \text{ is c.c.~of } T'-r} OPT[T'', t + 1]^p \quad\quad (\textrm{by ind.~hyp.})\\
             &= OPT[T', t]^p.
\end{split}
\end{equation*}

\smallskip
\noindent\textbf{Case $|\partial(T')| = k$:}
\begin{equation*}
\begin{split}
    B[T', t] &= (1 + t)^p + \sum_{T'' \text{ is c.c.~of } T'-r_0} B[T'',t] \\
             &\leq (1 + t)^p + \sum_{T'' \text{ is c.c.~of } T'- r_0} OPT[T'', t]^p \quad\quad (\textrm{by ind.~hyp.})\\
             &= OPT[\{r_0\},t]^p + \sum_{T'' \text{ is c.c.~of } T'- r_0} OPT[T'', t]^p \\
             &\leq OPT[T', t]^p,
\end{split}
\end{equation*}
where the last inequality follows from Claim \ref{superadditive} and the fact that the connected components of $T'-r_0$ together with $\{r_0\}$ form a partition of $T'$.
\end{proof}

\begin{theorem}
There is a polynomial-time approximation scheme (PTAS) for \HLp for every $p \in [1,\infty)$. The algorithm finds a $(1+\varepsilon)$ approximate solution in time $n^{O(1/\varepsilon)}$; the running time does not depend on $p$.
\end{theorem}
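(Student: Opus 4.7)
The plan is to mirror the PTAS analysis for \HL1 from Section~\ref{sec:ptas-l1-trees}, with the modifications needed to handle the $\ell_p^p$ objective $\sum_u(|H_u|+t)^p$. The running time is immediate: the number of DP entries is at most $n^{O(k)}$ with $k=4\lceil 1/\varepsilon\rceil$, and each entry is computed in polynomial time, giving total time $n^{O(1/\varepsilon)}$ that does not depend on $p$.

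For correctness I would first introduce $\mathrm{ALG}[T',t]$, the $\ell_p^p$-cost of the HHL that the algorithm actually outputs on subtree $T'$ when called with shift $t$. Because the algorithm commits to a real hub labeling, $\mathrm{ALG}$ satisfies the recurrence $\mathrm{ALG}[T',t]=(1+t)^p+\sum_{T''}\mathrm{ALG}[T'',t+1]$ in \emph{both} the good ($|\partial(T')|<k$) and bad ($|\partial(T')|=k$) cases, with separator $r^\ast$ (the DP minimizer) or $r_0$ (the weighted balanced separator) respectively. In contrast, $B[T',t]$ recurses on $B[T'',t]$ in the bad case, without the $t\to t+1$ increment; this is exactly what allows $B[T,0]\leq OPT[T,0]^p$ in Claim~\ref{lower_bound_claim_ptas_p_norm}, but it is also the sole source of the gap between $B$ and $\mathrm{ALG}$.

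The heart of the proof is an induction on $|T'|$ establishing a bound of the form $\mathrm{ALG}[T',t]\leq B[T',t]+C[T',t]$, where the potential $C$ is (i) zero whenever $|\partial(T')|\leq 3k/4$, and (ii) large enough to absorb the $t$-shift gap at every forced (bad) step. A natural candidate is $C[T',t]=\max\{0,|\partial(T')|-3k/4\}\cdot (4/k)\cdot B[T',t]$. The weighted balanced separator guarantees that every child $T''$ of a bad step has $|\partial(T'')|\leq k/2+1\leq 3k/4$, so the potential resets to zero immediately after a forced step and must be re-accumulated over at least $k/4$ good recursion levels before the next forced step can occur. In the good case the algorithm picks the DP minimizer, so the inductive hypothesis for each component propagates cleanly through the recurrence. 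At the root, $|\partial(T)|=0$ forces $C[T,0]=0$, and tracking the potential through the induction yields $\mathrm{ALG}[T,0]\leq(1+O(1/k))^p\cdot B[T,0]\leq(1+O(1/k))^p\cdot OPT[T,0]^p$; taking $p$-th roots gives a $(1+O(1/k))$-approximation in $\ell_p$, and choosing $k=4\lceil 1/\varepsilon\rceil$ completes the proof.

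The main obstacle I expect is verifying that at each forced step the excess $\sum_{T''}(\mathrm{ALG}[T'',t+1]-B[T'',t])$ is indeed dominated by $(4/k)\cdot B[T',t]$. Unlike the $\ell_1$ case, where one additional ancestor hub contributes an additive $|T''|$, here the $p$-th power amplifies the gap: a single unit shift $t\to t+1$ on a subtree can inflate the relevant $B$ value by a factor that is not uniformly close to $1$ per vertex. Resolving this requires comparing the per-vertex shift inflation $(|H_u|+t+1)^p-(|H_u|+t)^p$ against the amortized per-vertex budget, using Claim~\ref{superadditive} (super-additivity of $OPT[\cdot,t]$) and the fact that across a forced step each component inherits exactly one additional ancestor hub. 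Once this shift inequality is in place, the inductive step for the bad case goes through, the good case is routine, and the overall analysis parallels the $\ell_1$ argument to yield the claimed bound.
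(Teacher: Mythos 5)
Your setup is right — the DP table $B[T',t]$, the observation that the bad case recurses on $B[T'',t]$ without the shift, and Claim~\ref{lower_bound_claim_ptas_p_norm} as the lower bound — and you have correctly located the crux: at a forced step the true cost shifts $t\to t+1$ while $B$ does not, and the $p$-th power amplifies this gap. But the resolution you sketch does not close. First, your invariant $\mathrm{ALG}[T',t]\leq B[T',t]+C[T',t]$ with $C[T,0]=0$ at the root would yield $\mathrm{ALG}[T,0]\leq B[T,0]\leq OPT[T,0]^p$, i.e.\ that the algorithm is exactly optimal, which contradicts the $(1+O(1/k))^p$ factor you then claim to extract; in the $\ell_1$ proof this is consistent only because $B$ itself carries the $(1+4/k)|T'|$ inflation in the good case, which your $B$ (and the paper's) does not. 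Second, the exchange rate $C[T',t]=\max\{0,|\partial(T')|-3k/4\}\cdot(4/k)\cdot B[T',t]$ cannot absorb the forced-step excess for large $p$: the per-vertex shift cost $(h_u+t+1)^p-(h_u+t)^p\approx p(h_u+t)^{p-1}$ compared against a budget of $(4/k)(h_u+t)^p$ requires $h_u+t\gtrsim pk$, whereas at a forced step one only has $t\geq k$. The correct amortization would have to charge a forced step against the comparable marginal costs of the surrounding $\Theta(k)$ good levels, but those marginals depend on the \emph{final} depth $h_u$ of each vertex, which is not a function of the DP state $(T',t)$ — so the induction as formulated does not go through.

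The paper avoids this entirely with a global, non-inductive argument. Let $X$ be the set of weighted balanced separators chosen at forced steps and $\widetilde H_u=(H_u\setminus X)\cup\{u\}$. One shows by an easy induction that $B[T,0]=\sum_u|\widetilde H_u|^p$ exactly. Then, for each vertex $u$ separately, the root-to-leaf path in the decomposition tree down to the subinstance where $u$ is selected has $|H_u|$ nodes, and nodes with boundary size exactly $k$ are at least $k/2$ apart along it (the boundary grows by at most one per level and resets below $3k/4$ after a forced step), so $|H_u\cap X|\leq 2|H_u|/k$ and hence $|H_u|\leq\bigl(1+\tfrac{2}{k-2}\bigr)|\widetilde H_u|$ pointwise. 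The $p$-th powers then cause no trouble at all, because the pointwise multiplicative bound passes through the $\ell_p$ norm by homogeneity: $\|H\|_p\leq\bigl(1+\tfrac{2}{k-2}\bigr)B[T,0]^{1/p}\leq\bigl(1+\tfrac{2}{k-2}\bigr)OPT[T,0]$. If you want to keep your amortized framing, you would need to replace the additive potential by one that tracks, per vertex, the \emph{fraction} of forced ancestors rather than a $p$-th-power budget — at which point you have essentially rediscovered the paper's counting argument.
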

\begin{proof}
Fix $\varepsilon < 1$, and set $k = 2 \cdot \lceil 4/\varepsilon \rceil$. Let $H$ be the solution returned by the dynamic programming algorithm presented in this section. Consider the set $X$ of all weighted balanced separators that the algorithm uses during its execution; that is, $X$ is the set of hubs $r_0$ that the algorithm adds when it processes trees $T'$ with $|\partial(T')| = k$.
 
Let $\widetilde H_u = (H_u \setminus X) \cup \{u\}$; the set $\widetilde H_u$ consists of the hubs added to $H_u$ during the steps when $\partial(T') < k$, with the exception that we include $u$ in $\widetilde H_u$ even if $u\in X$. 
It is easy to prove by induction (along the lines of the previous inductive proofs) that 
\begin{equation*}
    B[T',t] = \sum_{u \in T'} \left(|\widetilde H_u \cap T'| + t \right)^p.
\end{equation*}
Therefore, $B[T,0] = \sum_{u \in V} |\widetilde H_u |^p$.

Now, consider a vertex $u$ and its hub set $H_u$. We want to estimate the ratio $|H_u\cap X| / |H_u|$. We look at the decomposition tree implied by the algorithm and find the subinstance $T'$ in which the algorithm picked $u$ as the highest ranked vertex in $T'$. The path from the root of the decomposition tree to that particular subinstance $T'$ contains exactly $|H_u|$ nodes. Observe that in any such path, the nodes of the path that correspond to subinstances with boundary size exactly $k$ are at distance at least $k / 2$ from each other (since the size of the boundary increases by at most 1 when we move from one node to the consecutive node along the path). Thus, there can be at most $2|H_u|/k$ such nodes. This means that $|H_u \cap X| \leq 2|H_u| / k$, which gives  $|H_u| \leq (1 + \frac{2}{k - 2}) \cdot |H_u \setminus X|\leq (1 + \frac{2}{k - 2}) \cdot |\widetilde H_u|$. So, the $\ell_p$-cost of the hub labeling is
\begin{align*}
    \|H\|_p &= \Bigl(\sum_{u \in V} |H_u|^p \Bigr)^{1/p}\leq \Bigl(1 + \frac{2}{k - 2} \Bigr) \cdot \Bigl(\sum_{u \in V} |\widetilde H_u|^p \Bigr)^{1/p}\\
            &= \Bigl(1 + \frac{2}{k - 2} \Bigr) \cdot B[T,0]^{1/p} \leq \Bigl(1 + \frac{2}{k - 2} \Bigr) \cdot OPT[T,0],
\end{align*}
where the last inequality follows from Claim \ref{lower_bound_claim_ptas_p_norm}. We get that the algorithm finds a hub labeling of $\ell_p$-cost at most $(1 + \frac{2}{k - 2}) \cdot OPT$. The running time is $n^{O(k)} \cdot n = n^{O(k)}$.
\end{proof}

%%%%%%%%%%%%%%%%%%%%%%%%%%%%%%%%%%%%%%%%%%%%%%%%%%%%%%%%%%%%%%%%%%%%%%%%%%%%%%%%%%%%%%%%%%%%55
\subsection{A PTAS for \HL\infty on trees}\label{appendix_ptas_infty}

Our approach for \HL1 (see Section~\ref{sec:ptas-l1-trees}) works almost as is for \HL\infty as well. The only modifications that we need to make are the following:
\begin{itemize}
    \item $B[T']$ is now defined as
\begin{equation*}
    B[T'] =
        \begin{cases}
            (1 + 4/k) + \min_{r' \in T'} \;\;\max_{T'' \textrm{ is c.c.~of } T' - r'} B[T''],  &\textrm{if }|\partial(T')| < k, \\
            \max_{T'' \textrm{ is c.c.~of } T' - r_0} B[T''],  &\textrm{if }|\partial(T')| = k,
        \end{cases}
\end{equation*}
where $r_0$ is the weighted balanced separator vertex of $T'$, as defined in the description of the algorithm for \HL1.
    \item $C[T']$ is now equal to $C[T'] = \max \left\{ 0, \left(|\partial(T')| - 3k/4\right) \cdot \frac{4}{k} \right\}$.
\end{itemize}
We can again prove using induction (along the same lines as the proof for \HL1) that the total cost of the solution that the algorithm returns at any subinstance $T'$ is at most $B[T'] + C[T']$, and that it always holds that $B[T'] \leq (1 + 4/k) \cdot OPT_{T'}$. Thus, for $T' = T$ we have $C[T] = 0$, and so we obtain a solution of cost at most $(1 + 4/k) \cdot OPT$ in time $n^{O(k)}$.

%%%%%%%%%%%%%%%%%%%%%%%%%%%%%%%%%%%%%%%%%%%%%%%%%%%%%%%%%%%%%%%%%%%%%%%%%%%%%%%%%%%%%%%%%%%%%%%%%%%%%%
\section{Bounds on the size of the largest hub set in optimal solutions}\label{sec:bounds-on-size}

In this section, we give upper bounds on the size of the largest hub set in an optimal HHL solution in a tree. As Theorem~\ref{HHL_optimal_for_trees} guarantees that for any \HLp there is always an optimal solution that is hierarchical, such bounds translate to bounds on the $\ell_\infty$-norm of an optimal solution for \HLp. These will prove very useful for the design of exact algorithms for \HLp.

We start by making a very simple observation, namely that the Tree Algorithm (see Algorithm~\ref{Tree-Algorithm}) gives a feasible hub labeling $H$ that always satisfies $\|H\|_\infty = O(\log n)$, where $n$ is the size of the tree. More precisely, we get the following theorem.
\begin{theorem}\label{thm:hl-infinity-max-bound}
Let $T = (V,E)$, $|V| = n$, be an instance of \HL\infty, and let $H$ denote an optimal (w.r.t.~the $\ell_\infty$-cost) solution. Then, $\|H\|_\infty = \max_{u \in V} |H_u| \leq \log n + 1 \leq 2\log n = O(\log n)$.
\end{theorem}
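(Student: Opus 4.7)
The plan is to exhibit a feasible hub labeling of $\ell_\infty$-cost at most $\log n + 1$; since the optimal labeling $H$ is at least as good, the claim follows immediately. The natural candidate is the labeling $H'$ produced by the Tree Algorithm (Algorithm~\ref{Tree-Algorithm}), which was already shown to return a feasible hierarchical hub labeling.

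First, I would analyze how $|H'_u|$ depends on the recursion. Fix a vertex $u \in V$, and consider the chain of nested subtrees $T = T^{(0)} \supsetneq T^{(1)} \supsetneq \cdots \supsetneq T^{(d)}$ processed by the recursive calls that all contain $u$, where $T^{(d)}$ is the subtree at which $u$ itself is chosen as the balanced separator (in the worst case, $T^{(d)} = \{u\}$, which is trivially its own separator). At each level $i \in \{0,1,\dots,d\}$, the algorithm adds the balanced separator $r^{(i)}$ of $T^{(i)}$ to $H'_u$. These $r^{(i)}$ are pairwise distinct, because each $r^{(i)}$ is removed from $T^{(i)}$ before the recursion descends into $T^{(i+1)}$. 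Consequently, $|H'_u| = d+1$.

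Second, I would bound $d$ using the balanced separator property. By Definition~\ref{def:bal-sep-vertex}, every connected component of $T^{(i)} - r^{(i)}$ contains at most $|T^{(i)}|/2$ vertices; since $T^{(i+1)}$ is such a component, $|T^{(i+1)}| \leq |T^{(i)}|/2$. A straightforward induction then gives $|T^{(i)}| \leq n/2^i$, and combining this with $|T^{(d)}| \geq 1$ forces $d \leq \log_2 n$. Hence $|H'_u| \leq \log_2 n + 1$ uniformly in $u$, so $\|H\|_\infty \leq \|H'\|_\infty \leq \log n + 1$, as required.

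This is essentially the standard centroid-decomposition argument for trees, so no serious obstacle is anticipated. The only point requiring care is the inequality $|T^{(i+1)}| \leq |T^{(i)}|/2$, which is immediate from the definition of a balanced separator vertex.
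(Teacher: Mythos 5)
Your proposal is correct and is exactly the argument the paper intends (the paper omits the proof as "very straightforward," but the preceding paragraph makes clear it is the Tree Algorithm bound): the recursion depth along the chain of nested subtrees containing $u$ is at most $\log_2 n$ by the balanced-separator halving property, each level contributes one distinct hub, and the optimal $\ell_\infty$-cost is at most that of the Tree Algorithm's feasible labeling. No gaps; the only cosmetic caveat is that the final inequality $\log n + 1 \leq 2\log n$ in the statement needs $n \geq 2$, which is an artifact of the theorem statement rather than of your argument.
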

The proof is very straighforward, and thus omitted. Since the $\ell_\infty$-norm of an $n$-dimensional vector is within a constant factor to the $\ell_{\log n}$-norm, we also immediately conclude the following.
\begin{theorem}\label{thm:hl-logn-max-bound}
Fix some constant $\varepsilon > 0$. Let $T = (V,E)$, $|V| = n$, be an instance of \HLp, for $p \geq \varepsilon \log n$, and let $H$ denote an optimal (w.r.t.~the $\ell_p$-cost) solution. Then, $\|H\|_\infty = \max_{u \in V} |H_u| = O_{\varepsilon}(\log n)$.
\end{theorem}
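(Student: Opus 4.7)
The plan is to derive this directly from Theorem~\ref{thm:hl-infinity-max-bound} using the standard norm inequality $\|x\|_\infty \leq \|x\|_p \leq n^{1/p} \|x\|_\infty$ for any $n$-dimensional nonnegative vector $x$, and the observation that $n^{1/p}$ is bounded by a constant depending only on $\varepsilon$ whenever $p \geq \varepsilon \log n$.

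Concretely, let $H$ denote an optimal $\ell_p$-cost hub labeling for $T$, and let $H^{(\infty)}$ denote an optimal $\ell_\infty$-cost hub labeling for $T$. First I would use the trivial bound $\|H\|_\infty \leq \|H\|_p$, which holds for any nonnegative vector. Next, since $H$ minimizes the $\ell_p$-cost, $\|H\|_p \leq \|H^{(\infty)}\|_p$. Applying the second inequality in the norm chain, $\|H^{(\infty)}\|_p \leq n^{1/p} \cdot \|H^{(\infty)}\|_\infty$. Finally, by Theorem~\ref{thm:hl-infinity-max-bound}, $\|H^{(\infty)}\|_\infty \leq \log n + 1$.

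Putting the chain together gives
\begin{equation*}
    \|H\|_\infty \;\leq\; \|H\|_p \;\leq\; n^{1/p} \cdot (\log n + 1).
\end{equation*}
It remains to bound $n^{1/p}$. For $p \geq \varepsilon \log n$ we have
\begin{equation*}
    n^{1/p} \;=\; 2^{(\log n)/p} \;\leq\; 2^{(\log n)/(\varepsilon \log n)} \;=\; 2^{1/\varepsilon},
\end{equation*}
a constant depending only on $\varepsilon$. Substituting yields $\|H\|_\infty \leq 2^{1/\varepsilon} \cdot (\log n + 1) = O_\varepsilon(\log n)$, which is exactly the desired bound.

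There is no real obstacle here; every ingredient is already in the excerpt. The only subtlety worth double-checking is the direction of the optimality comparison: $H$ is optimal for $\ell_p$, not for $\ell_\infty$, so one cannot directly invoke Theorem~\ref{thm:hl-infinity-max-bound} for $H$ itself. The fix is precisely to go through the optimal $\ell_\infty$ solution $H^{(\infty)}$ as a witness feasible solution, which is legitimate because $H$ is assumed $\ell_p$-optimal and hence beats $H^{(\infty)}$ in $\ell_p$-cost.
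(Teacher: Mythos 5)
Your proof is correct and follows essentially the same argument as the paper: bound $\|H\|_\infty \leq \|H\|_p$, compare against a witness labeling of small $\ell_\infty$-cost via $\|\cdot\|_p \leq n^{1/p}\|\cdot\|_\infty$, and use $n^{1/p} \leq 2^{1/\varepsilon}$. The only cosmetic difference is that the paper uses the Tree Algorithm's solution directly as the witness, while you use the optimal $\ell_\infty$ solution guaranteed by Theorem~\ref{thm:hl-infinity-max-bound}; both yield the same $O_\varepsilon(\log n)$ bound.
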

\begin{proof}
We again consider the solution $H'$ that the Tree Algorithm (see Algorithm~\ref{Tree-Algorithm}) produces. We have $\|H'\|_p \leq n^{1/p} \cdot 2 \log n \leq 2^{1 + 1/\varepsilon} \cdot \log n$. Let $H$ denote an optimal (w.r.t.~the $\ell_p$-cost) solution for $T$. We have $\|H\|_\infty \leq \|H\|_p \leq \|H'\|_p \leq 2^{1 + 1/\varepsilon} \cdot \log n$. Thus, we conclude that for constant $\varepsilon > 0$, we always have $\|H\|_\infty = O(\log n)$.
\end{proof}

We now turn to the case of $p \in [1, \varepsilon \log n)$ and prove the following theorem.
\begin{theorem}\label{thm:hub-set-bound-small-p}
Fix some constant $\varepsilon < 0$. Let $T = (V,E)$, $|V| = n$, be an instance of \HLp, for some $p < \varepsilon \log n$. Let $H$ denote an optimal (w.r.t.~the $\ell_p$-cost) HHL solution. Then, $\|H\|_\infty = \max_{u \in V} |H_u| = O(p \cdot \log n)$. In particular, if $p$ is constant, then we have $\|H\|_\infty = O(\log n)$.
\end{theorem}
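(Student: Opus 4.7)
The plan is to combine an upper bound on the optimal $\ell_p$-cost (via the Tree Algorithm) with a cost lower bound that forces the maximum depth to be small. First I would invoke Theorem~\ref{thm:hl-infinity-max-bound} to deduce that the Tree Algorithm produces an HHL $H^{TA}$ with $\|H^{TA}\|_\infty \leq \log n + 1$, so $\|H^{TA}\|_p^p \leq n(\log n + 1)^p$. Since $H$ is $\ell_p$-optimal, this gives $\sum_v |H_v|^p \leq n(\log n + 1)^p$.

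Next I would exploit the hierarchical structure of $H$ (legitimate by Theorem~\ref{HHL_optimal_for_trees}). Set $D := \|H\|_\infty$ and let $u^* \in V$ attain it. Consider the recursion chain $u_1 \prec u_2 \prec \cdots \prec u_D = u^*$ and the corresponding nested subtrees $T_1 \supsetneq T_2 \supsetneq \cdots \supsetneq T_D$, with $n_i := |T_i|$. Two structural facts will be crucial: (a) because $u^*$ is at maximum depth, $T_D = \{u^*\}$ and more generally $n_i \geq D - i + 1$ for all $i$; and (b) because $H$ is an HHL, every $v \in T_i$ contains $u_1, \dots, u_{i-1}, v$ in $H_v$, so $|H_v| \geq i$. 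Combining via Abel summation,
\begin{equation*}
\sum_v |H_v|^p \;\geq\; \sum_{i=1}^D n_i \bigl(i^p - (i-1)^p\bigr) \;\geq\; \sum_{i=1}^D (D - i + 1)\bigl(i^p - (i-1)^p\bigr) \;=\; \sum_{k=1}^D k^p \;\geq\; \frac{D^{p+1}}{p+1}.
\end{equation*}
Matching this against the Tree Algorithm upper bound yields $D^{p+1} \leq (p+1)\,n\,(\log n + 1)^p$.

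The hard part will be extracting the claimed bound $D = O(p\log n)$ from this inequality for every $p < \varepsilon \log n$. The naive rearrangement gives only $D \leq ((p+1)n)^{1/(p+1)} (\log n + 1)^{p/(p+1)}$, which is $O(p\log n)$ when $p \gtrsim \log n / \log\log n$ but degrades for smaller $p$ because of the $n^{1/(p+1)}$ factor. To push it through for all $p \geq 1$, I would apply the same comparison locally at every level: the sub-HHL of $H$ on $T_i$ is itself optimal for the shifted \HLp problem on $T_i$ with shift $i-1$, so its shifted cost is bounded above by running the Tree Algorithm on $T_i$, giving
\begin{equation*}
D^p + (n_i - 1)\,i^p \;\leq\; n_i (i + \log n_i)^p
\end{equation*}
for every $i \in [1,D]$, where the left side uses that $u^* \in T_i$ has shifted depth exactly $D$ while all other vertices contribute at least $i^p$.

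Finally, I would choose $i$ judiciously -- picking $i$ of order $\log n$, so that $i + \log n_i \leq 2\log n$ and the right-hand side becomes at most $n_i (2\log n)^p$ -- and use the elementary estimate $(i + \log n_i)^p - i^p \leq p \log n_i \cdot (i + \log n_i)^{p-1}$ to separate the dominant term. Iterating this comparison along the chain trades off each factor of $n^{1/(p+1)}$ in the naive bound for an extra factor of $p$, culminating in $D = O(p \log n)$ uniformly for $p < \varepsilon \log n$; specializing to constant $p$ recovers the claimed $O(\log n)$ bound. The main technical obstacle is the careful bookkeeping to ensure that the local comparisons compose without incurring an extra $\log$ or $n^{o(1)}$ factor.
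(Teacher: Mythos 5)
There is a genuine gap, and it sits exactly where you flagged the ``hard part.'' Your global comparison (Tree Algorithm upper bound versus the chain lower bound $\sum_v |H_v|^p \geq \sum_{k=1}^{D} k^p$) is correct but, as you note, only gives $D \lesssim ((p+1)n)^{1/(p+1)}(\log n)^{p/(p+1)}$. The proposed repair — the level-$i$ inequality $D^p + (n_i-1)\,i^p \leq n_i\,(i+\log n_i)^p$, obtained by comparing the optimal shifted labeling of $T_i$ against the Tree Algorithm on $T_i$, followed by ``iterating along the chain'' — cannot close this gap, because each such inequality is already satisfied by the very configuration you need to exclude. Take $p=1$: the inequality reads $D \leq i + n_i\log n_i$, which is automatically implied by the trivial fact $n_i \geq D-i+1$ (a long bare chain hanging below level $i$). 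More generally, for constant or small $p$ and $n_i \approx D-i+1$, the slack term $n_i\bigl[(i+\log n_i)^p - i^p\bigr] \approx n_i\, p\,(\log n_i)\, i^{p-1}$ dominates $D^p$ for, say, $i \approx D/2$, so the inequality holds for $D$ as large as the global bound allows (polynomial in $n$). Since each inequality involves only $(i, n_i, D)$ and the $n_i$ are merely monotone, no choice of $i$ and no composition of these inequalities can force $D = O(p\log n)$: the family $n_i = D-i+1$ is a consistent ``certificate'' that your constraint system admits. In short, comparing the restriction on $T_i$ only against the Tree Algorithm run on that same (possibly tiny) subtree is too weak; a deep, thin decomposition of a small subtree is cheap in $\ell_p$ for small $p$ and is not locally contradicted this way.

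What is actually needed — and what the paper does — is an exchange argument against a \emph{modified} labeling of the same subtree, not against the Tree Algorithm. The paper looks at a subtree $T'$ at depth $p\log n$ and shows that if, for $l$ consecutive levels of the induced decomposition, the largest component retains more than half the vertices, then one can replace those $l$ chain vertices by $k=\lceil\log l\rceil$ weighted balanced separators (weights concentrated on the $l$ chain vertices): more than half the vertices of $T'$ each lose at least $l-k-1$ hubs while the rest gain at most $k+1$, and since every vertex of $T'$ already has at least $p\log n$ hubs, the marginal-cost estimates of Lemmas~\ref{lemma:increase-by-k} and~\ref{lemma:reduce-by-l} (both resting on Lemma~\ref{intermediate-lemma}) show the swap strictly decreases the $\ell_p$ cost for a suitable constant $l$, contradicting optimality. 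This forces the subproblem sizes to halve every $O(1)$ levels past depth $p\log n$, giving $\|H\|_\infty = O(p\log n)$; your plan contains no mechanism that forces such geometric shrinkage, so the ``iteration'' step as described would fail precisely in the regime $p = o(\log n/\log\log n)$ that the theorem is about.
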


Before proving the theorem, we first prove some useful intermediate lemmas.

\begin{lemma}\label{intermediate-lemma}
Let $T= (V, E)$, $|V| = n$, be an instance of \HLp, for some $p \geq 1$, and let $H$ be an optimal HHL solution. Let $T' = (V', E')$, $n' = |V'|$, be a subproblem occuring after $p \cdot \log n$ recursive steps, with $n' > 1$ (if any such problem exists). Then, $\sum_{u \in V'} |H_u|^p \leq n' \cdot (p + 2)^p \log^{p} n$.
\end{lemma}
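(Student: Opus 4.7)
The plan is to exploit the optimality of $H$ together with the $O(\log n)$ bound on the $\ell_\infty$-cost achieved by the Tree Algorithm (Algorithm~\ref{Tree-Algorithm} and Theorem~\ref{thm:hl-infinity-max-bound}), via an exchange argument localized on $T'$.

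First, I would characterize $H_u$ for $u \in V'$ in terms of the recursive decomposition of the canonical HHL $H$. Because $H$ is canonical, $v \in H_u$ iff $v$ is the highest-ranked vertex on $P_{uv}$. Let $v_1, \dots, v_k$ with $k = p\log n$ be the pivots chosen along the recursive path leading to $T'$. Each $v_i$ is the highest-ranked vertex in a subproblem containing $u$, hence the highest-ranked on $P_{u v_i}$, so $v_i \in H_u$. Conversely, any vertex of $V \setminus V'$ that is not one of these pivots is separated from $u$ by some $v_i$ on the path between them, which is strictly higher-ranked, so it cannot be the highest-ranked on that path and therefore is not in $H_u$. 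Consequently
\[
|H_u| \;=\; |\widehat H_u| + p\log n, \qquad \text{where } \widehat H_u := H_u \cap V'.
\]

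Next, I would set up the exchange argument. The restriction $\widehat H = \{\widehat H_u\}_{u \in V'}$ is a feasible HHL for $T'$, and replacing $\widehat H$ inside $H$ with any other feasible HHL $H^\star$ of $T'$ produces a feasible HHL for $T$: pairs internal to $V'$ are covered by $H^\star$, and every cross-pair still shares one of $v_1, \dots, v_k$ as a common hub. Since the hub sets of vertices outside $V'$ are unaffected by the swap, the optimality of $H$ yields
\[
\sum_{u \in V'}\bigl(|\widehat H_u| + p\log n\bigr)^p \;\leq\; \sum_{u \in V'}\bigl(|H^\star_u| + p\log n\bigr)^p
\]
for every feasible HHL $H^\star$ of $T'$.

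Finally, I would take $H^\star$ to be the labeling produced by the Tree Algorithm on $T'$; by Theorem~\ref{thm:hl-infinity-max-bound} it satisfies $|H^\star_u| \leq \log n' + 1 \leq \log n + 1$ for every $u \in V'$. Substituting gives
\[
\sum_{u \in V'} |H_u|^p \;=\; \sum_{u \in V'}\bigl(|\widehat H_u| + p\log n\bigr)^p \;\leq\; \sum_{u \in V'}\bigl((p+1)\log n + 1\bigr)^p \;\leq\; n'(p+2)^p \log^p n,
\]
where the last step uses $\log n \geq 1$ (the statement is vacuous for $n = 1$). The point requiring the most care is verifying that the swap indeed yields a valid HHL for $T$, which hinges on the pivots $v_1, \dots, v_k$ being higher-ranked than every vertex of $V'$ in both the old and new orderings; this is automatic, so no real obstacle arises.
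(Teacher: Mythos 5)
Your proof is correct and follows essentially the same route as the paper's: the paper also bounds $\sum_{u\in V'}|H_u|^p$ by swapping the labeling inside $T'$ for the Tree Algorithm's labeling (keeping the first $p\log n$ pivots) and invoking optimality of $H$, obtaining $n'(p\log n + 2\log n')^p \leq n'(p+2)^p\log^p n$. You merely spell out more explicitly the decomposition $|H_u| = |H_u\cap V'| + p\log n$ and the feasibility of the swap, which the paper leaves implicit.
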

\begin{proof}
If we modify $H$, and after the first $p \cdot \log n$ recursive steps we switch and use the Tree Algorithm (see Algorithm~\ref{Tree-Algorithm}) on the tree $T'$, we get a hub labeling where the total contribution of the vertices of $T'$ to the objective value (raised to the power $p$) is at most $n' \cdot (p  \log n + 2\log n')^p \leq n' \cdot (p + 2)^p \cdot \log^p n$. Thus, $\sum_{u \in V'} |H_u|^p \leq n' \cdot (p + 2)^p \log^p n$.
\end{proof}

\begin{lemma}\label{lemma:increase-by-k}
Let $T= (V, E)$, $|V| = n$, be an instance of \HLp, for some $p \geq 1$, and let $H$ be an optimal HHL solution. Let $T' = (V', E')$, $n' = |V'|$, be a subproblem occuring after $p \cdot \log n$ recursive steps, with $n' > 1$ (if any such problem exists). Let $S \subseteq V'$. Then, for every positive integer $k \leq \log n$, we have
\begin{equation*}
    \sum_{u \in S} (|H_u| + k)^p  - \sum_{u \in S} |H_u|^p \leq e \cdot (p + 2)^p \cdot n' \cdot k \cdot \log^{p-1} n.
\end{equation*}
\end{lemma}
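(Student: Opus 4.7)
The plan is to combine the moment bound from Lemma~\ref{intermediate-lemma} with Minkowski's inequality, and then linearize. Set $A := \bigl(\sum_{u \in S} |H_u|^p\bigr)^{1/p}$ and $\alpha := k|S|^{1/p}$. Since $S \subseteq V'$, Lemma~\ref{intermediate-lemma} yields $A^p \leq n'(p+2)^p \log^p n$, i.e.~$A \leq (n')^{1/p}(p+2)\log n$. Using $|S| \leq n'$ and $k \leq \log n$, we also obtain $\alpha \leq \log n \cdot (n')^{1/p}$.

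Next, I would apply Minkowski's inequality to the vectors $(|H_u|)_{u \in S}$ and $(k,\dots,k) \in \R^{|S|}$, which gives $\bigl(\sum_{u \in S}(|H_u|+k)^p\bigr)^{1/p} \leq A + \alpha$. Raising to the $p$-th power and subtracting $\sum_{u \in S}|H_u|^p = A^p$, the quantity we wish to bound is at most $(A+\alpha)^p - A^p$. Using the identity $(A+\alpha)^p - A^p = p \int_0^\alpha (A+t)^{p-1}\,dt$, this is in turn at most $p\alpha(A+\alpha)^{p-1}$.

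Plugging in the estimates gives $A + \alpha \leq (p+3)\log n \cdot (n')^{1/p}$, so $(A+\alpha)^{p-1} \leq (p+3)^{p-1}\log^{p-1} n \cdot (n')^{(p-1)/p}$. Combined with $p\alpha \leq pk(n')^{1/p}$, we obtain $p\alpha(A+\alpha)^{p-1} \leq p(p+3)^{p-1} \cdot n' \cdot k \cdot \log^{p-1} n$. It remains to verify the algebraic inequality $p(p+3)^{p-1} \leq e(p+2)^p$: rewrite as $\frac{p}{p+2}\bigl(1+\tfrac{1}{p+2}\bigr)^{p-1} \leq e$, which follows from $\bigl(1+\tfrac{1}{p+2}\bigr)^{p-1} \leq e^{(p-1)/(p+2)} \leq e$ and $p/(p+2) \leq 1$.

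The main obstacle I anticipate is obtaining a bound that is \emph{linear} in $k$ and carries only $\log^{p-1} n$ rather than $\log^p n$ or $k^p$. A naive expansion such as $(|H_u|+k)^p \leq 2^{p-1}(|H_u|^p + k^p)$ introduces a $2^{p-1}$ factor that defeats the target constant $(p+2)^p$ for large $p$, and produces a $k^p$ term instead of a $k\log^{p-1}n$ term. Using Minkowski allows one to fold the constant shift $k\vec{1}_S$ into a single scalar $\alpha$, so that the remaining gap $(A+\alpha)^p - A^p$ can be linearized by the derivative inequality, which is precisely what produces the claimed bound with the target constant.
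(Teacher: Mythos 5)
Your proof is correct, and it takes a genuinely different route from the paper's. The paper argues pointwise: for each $u$ it applies the mean value theorem to get $(|H_u|+k)^p - |H_u|^p \leq p k (|H_u|+k)^{p-1}$, then uses the lower bound $|H_u| \geq p\log n$ (which holds because the subproblem sits at depth $p\log n$) together with $k \leq \log n$ to control the ratio $\bigl(1+k/|H_u|\bigr)^{p-1} \leq e$, and finally bounds $\sum_{u\in S}|H_u|^{p-1}$ by dividing the moment bound of Lemma~\ref{intermediate-lemma} by $\min_u |H_u| \geq p\log n$. You instead aggregate first via Minkowski, reduce the whole sum to the scalar gap $(A+\alpha)^p - A^p$, and linearize that with the same derivative inequality. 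Your route is notable in that it never uses the per-vertex lower bound $|H_u| \geq p\log n$ at all --- only the aggregate upper bound from Lemma~\ref{intermediate-lemma} plus $|S|\leq n'$ and $k \leq \log n$ --- so it is in that sense slightly more economical in hypotheses. The trade-off is that the paper's per-vertex mean-value-theorem template is reused essentially verbatim for the companion lower bound in Lemma~\ref{lemma:reduce-by-l}, where Minkowski points in the wrong direction; your method is specific to the upper-bound direction. All your individual estimates check out, including the final calibration $p(p+3)^{p-1} \leq e(p+2)^p$.
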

\begin{proof}
Let $x, t \geq 1$. By the mean value theorem, for some $y \in (x, x + t)$ we have $\frac{(x+t)^p - x^p}{t} = p \cdot y^{p-1} \leq p (x + t)^{p-1}$. Moreover, we have $\frac{(x + t)^{p - 1}}{x^{p - 1}} = \left(1 + \frac{t}{x} \right)^{p - 1} \leq e^{t(p-1)/x}$. Thus, we conclude that $(x + t)^p - x^p \leq e^{t(p - 1) / x} \cdot tp \cdot x^{p - 1}$. This means that for every $u \in V'$ we have
\begin{align*}
    (|H_u| + k)^p - |H_u|^p &\leq e^{k(p - 1)/|H_u|} \cdot kp \cdot |H_u|^{p - 1} \leq e^{k(p - 1)/(p \cdot \log n)} \cdot kp \cdot |H_u|^{p - 1}\\
                           &\leq e^{k/\log n} \cdot kp \cdot |H_u|^{p - 1} \leq e \cdot kp \cdot |H_u|^{p - 1},
\end{align*}
since $|H_u| \geq p \cdot \log n$ and $k \leq \log n$. The above inequality and Lemma~\ref{intermediate-lemma} now imply that
\begin{align*}
    \sum_{u \in S} (|H_u| + k)^p  - \sum_{u \in S} |H_u|^p &\leq e k p \cdot \sum_{u \in S} |H_u|^{p-1} \leq e k p \cdot \sum_{u \in V'} \frac{|H_u|^p}{|H_u|} \\
                                                           &\leq e k p \cdot \frac{n' (p + 2)^p \log^p n}{\min_{u \in V'} |H_u|} \leq e (p + 2)^p \cdot n' k \cdot \log^{p-1} n.
\end{align*}
\end{proof}

\begin{lemma}\label{lemma:reduce-by-l}
Let $T= (V, E)$, $|V| = n$, be an instance of \HLp, for some $p \geq 1$, and let $H$ be an optimal HHL solution. Let $T' = (V', E')$, $n' = |V'|$, be a subproblem occuring after $p \cdot \log n$ recursive steps, with $n' > 1$ (if any such problem exists). Let $S \subseteq V'$. Then, for every positive integer $l \leq \log n$, we have
\begin{equation*}
    \sum_{u \in S} |H_u|^p  - \sum_{u \in S} (|H_u| - l)^p \geq \frac{l}{e} \cdot |S| \cdot p^p \log^{p - 1}n.
\end{equation*}
\end{lemma}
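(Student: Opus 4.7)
The plan is to prove the bound pointwise, vertex by vertex in $S$, and then sum. Fix $u \in S$. Since $T'$ arises after $p \log n$ recursive steps of a hierarchical hub labeling, every vertex $u \in V'$ must lie below at least $p \log n$ ancestors in the hierarchy, and all of these ancestors belong to $H_u$ (because in an HHL the hub set of $u$ contains every ancestor of $u$ in the hierarchical decomposition tree). Hence $|H_u| \geq p \log n$.

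Now I would isolate the single-vertex estimate: for $x \geq p \log n$ and $l \leq \log n$,
\begin{equation*}
x^p - (x-l)^p \;\geq\; \frac{l \cdot p^p \cdot \log^{p-1} n}{e}.
\end{equation*}
First, write $(x-l)^p = x^p(1-l/x)^p \leq x^p e^{-lp/x}$, using the elementary inequality $1-t\leq e^{-t}$. This gives
\begin{equation*}
x^p-(x-l)^p \;\geq\; x^p\bigl(1-e^{-lp/x}\bigr).
\end{equation*}
Next, observe that $lp/x \leq lp/(p\log n) = l/\log n \leq 1$, so the argument of the exponential lies in $[0,1]$. On this range one has $1 - e^{-y} \geq y/e$ (the function $(1-e^{-y})/y$ is decreasing in $y>0$ and equals $1-1/e>1/e$ at $y=1$). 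Applying this with $y = lp/x$ yields $1-e^{-lp/x} \geq lp/(ex)$, and hence
\begin{equation*}
x^p - (x-l)^p \;\geq\; x^p \cdot \frac{lp}{ex} \;=\; \frac{lp}{e}\, x^{p-1} \;\geq\; \frac{lp}{e}\,(p\log n)^{p-1} \;=\; \frac{l\, p^p \log^{p-1} n}{e},
\end{equation*}
using monotonicity of $t \mapsto t^{p-1}$ for $p \geq 1$ together with $x \geq p\log n$.

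Finally, I would apply this estimate with $x = |H_u|$ to each $u \in S$ and sum to obtain
\begin{equation*}
\sum_{u \in S} |H_u|^p - \sum_{u \in S}(|H_u|-l)^p \;\geq\; \frac{l}{e}\cdot |S|\cdot p^p \log^{p-1} n,
\end{equation*}
which is the claimed bound. The only subtle step is the choice of the inequality $1-e^{-y}\geq y/e$ on $[0,1]$ (rather than the cruder $1-e^{-y}\geq y/2$ or the linearization $1-e^{-y}\approx y$): this is what produces the matching factor $1/e$ in the statement, mirroring the factor $e$ that appeared in the upper bound of Lemma~\ref{lemma:increase-by-k}. No obstacle beyond this calibration is expected; the hypothesis $|H_u|\geq p\log n$ supplied by the recursion depth does all the real work.
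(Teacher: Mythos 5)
Your proposal is correct and follows essentially the same route as the paper: both reduce to the pointwise estimate $|H_u|^p - (|H_u|-l)^p \geq \frac{lp}{e}|H_u|^{p-1}$, use the recursion depth to get $|H_u| \geq p\log n$, and sum over $S$. The only difference is in the elementary calculus behind the single-vertex bound — the paper uses the mean value theorem together with $\bigl(1+\tfrac{l}{x-l}\bigr)^{p-1} \leq e^{(p-1)l/(x-l)} \leq e$, whereas you use $1-t \leq e^{-t}$ followed by $1-e^{-y} \geq y/e$ on $[0,1]$; both calibrations are valid and yield the same factor $1/e$.
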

\begin{proof}
For $x > l \geq 1$, by the mean value theorem, we get that $x^p - (x - l)^p \geq lp (x - l)^{p - 1}$. We also have $\frac{x^{p - 1}}{(x - l)^{p - 1}} = \left(1 + \frac{l}{x - l}\right)^{p - 1} \leq e^{(p - 1)l/(x - l)}$. Thus, for $u \in V'$, we have
\begin{equation*}
    \frac{|H_u|^{p - 1}}{(|H_u| - l)^{p - 1}} \leq e^{(p - 1)l/(|H_u| - l)} \leq e^{(p - 1)l / (p \log n - l)} \leq e^{l / \log n} \leq e,
\end{equation*}
since $l \leq \log n$. This implies that $|H_u|^p - (|H_u| - l)^p \geq \frac{lp}{e} \cdot |H_u|^{p - 1}$, and so we get
\begin{equation*}
    \sum_{u \in S} |H_u|^p  - \sum_{u \in S} (|H_u| - l)^p \geq \frac{lp}{e} \cdot \sum_{u \in S} |H_u|^{p - 1} \geq \frac{l}{e} \cdot |S| \cdot p^p \log^{p - 1}n.
\end{equation*}
\end{proof}

\begin{proof}[Proof of Theorem~\ref{thm:hub-set-bound-small-p}]
Let $H$ denote an optimal HHL (w.r.t.~the $\ell_p$-cost), and let's assume that there exists $u \in V$ such that $|H_u| > (p + l) \cdot \log n$, for some constant $l$ that will be specified later. We will transform $H$ into an HHL $H'$ such that $|H_u'| \leq (p + l) \cdot \log n$ for every $u \in V$ and $\|H'\|_p < \|H\|_p$.

Let $h_u = |H_u|$ for every $u \in V$. Let's consider the $(p \cdot \log n)$-th level of the decomposition tree, and let $T'$ be a connected subtree (i.e.~a subproblem) of $T$ at this level, with $|T'| > 1$ (by our assumption, such a tree exists). Each vertex of $T'$ has more than $p \cdot \log n$ hubs. We now consider the induced ordering of vertices in $T'$, and in particular, w.l.o.g.~we consider the ordering that assigns higher rank to the highest rank vertex of the largest connected component (at each level of the recursion). More precisely, let $q_1 \in T'$ be the first vertex of this ordering (according to $H$), $q_2$ be the first vertex of the largest connected component of $T' \setminus \{q_1\}$, and so on. We will prove that after a constant number of ``iterations", and in particular after $l$ iterations, the optimal solution will have split $T'$ intro subtrees of size at most $n'/2$ (where $n' = |T'|$), otherwise it would not be optimal. So, let's assume that this is not the case, i.e.~let's assume that after $l$ iterations (starting with the tree $T'$), the largest connected component has size strictly larger than $n'/2$. This means that the hub set of more than $n'/2$ vertices has size strictly larger than $p \log n + l$. We now intervene, modify the solution, and we will show that the resulting solution is strictly better than $H$, which will give a contradiction. Let $A = \{q_1, ..., q_l\}$ be the set of vertices that are picked in $l$ consecutive steps from the largest component (of each round), and let $S$ be the largest connected component of $T' \setminus A$. Note that the hub sets of the vertices of $S$ contain all vertices of the set $A$. Our assumption implies that $|S| > n' / 2$. We now modify $H$ as follows. For $k = \lceil\log l \rceil$ rounds, we pick balanced separators $s_1, ..., s_k$ and add them as hubs, one by one, in the corresponding subproblems, where the balancing is with respect to the set of vertices $A = \{q_1, ..., q_l\}$. More precisely, in each subproblem, we assign weight 0 to all vertices not in $A$, and weight 1 to the vertices of $A$. Then, $s_1$ is the weighted balanced separator of $T'$, $s_2$ is the weighted balanced separator of a connected component of $T' \setminus \{s_1\}$ and so on (the order in which the resulting connected components are processed does not matter). Observe that after these $k$ steps, no two vertices of $A$ belong to the same subproblem. After these $k$ steps, we can resume selecting hubs in the order induced by $H$, and observe now that after $k + l$ steps, for each vertex $u \in S$, this process will have added at most $k + 1$ hubs, and no more than that (since the vertices of $A$ have been distributed into different connected components).

Our goal now is to prove that this modified solution is strictly better than the original one, thus contradicting the optimality of $H$. We consider the difference (which we denote as $\Delta$) of the contribution (raised to the power $p$) of the vertices of $T'$ in the original solution and the contribution of the vertices of $T'$ in this modified solution. We will prove that $\Delta > 0$. We have
\begin{align*}
    \Delta &\geq \left( \sum_{u \in T' \setminus S} h_u^p + \sum_{u \in S} h_u^p \right) - \left( \sum_{u \in T' \setminus S} (h_u + k + 1)^p + \sum_{u \in S} (h_u + k + 1 - l)^p \right) \\
           &= \left(\sum_{u \in S} h_u^p - \sum_{u \in S} (h_u - (l - k - 1))^p \right)  - \left(\sum_{u \in T' \setminus S} (h_u + k + 1)^p - \sum_{u \in T' \setminus S} h_u^p \right).
\end{align*}
By Lemmas~\ref{lemma:reduce-by-l} and~\ref{lemma:increase-by-k} we get
\begin{align*}
    \Delta &\geq \frac{(l - k - 1)\cdot |S| \cdot p^p \log^{p-1} n}{e} - e (p + 2)^p \cdot n' \cdot (k + 1) \cdot \log^{p-1} n \\
                  &> \left(\frac{(l - k - 1)}{2e} - e \left(1 + \frac{2}{p} \right)^p \cdot (k + 1) \right) \cdot  p^p \cdot n' \cdot \log^{p-1} n \\
                  &\geq \left(\frac{(l - k - 1)}{2e} - e^3 \cdot (k + 1) \right) \cdot  p^p \cdot n' \cdot \log^{p-1} n.
\end{align*}
We now claim that for appropriately chosen constant $l$, we have $\frac{(l - k - 1)}{2e} - e^3 \cdot (k + 1) > 0$. For this to hold, it is sufficient to have $l > (2e^4 + 1) \cdot (k + 1)$. We remind the reader that $k = \lceil \log l \rceil$. Thus, it is sufficient to have $l > (2e^4 + 1) \cdot (\log l + 2)$, which holds for every $l \geq 980$. We conclude that if we have $|S| > n' / 2$, then we can get an improved solution, thus contradicting the optimality of $H$. This implies that $|S| \leq n' / 2$.

So far, we have proved that once the recursion reaches the $(p \cdot \log n)$-th level, the size of the subproblems after that level reduces by a constant factor every $l$ iterations. This implies that after $l \log n$ iterations, the size of the resulting subproblems will be at most 1. Thus, the total depth of the decomposition tree, or in other words, the $\ell_\infty$-cost of $H$ is at most $p \cdot \log n + l \cdot \log n = O(p \cdot \log n)$.
\end{proof}

Theorems~\ref{thm:hl-infinity-max-bound}, \ref{thm:hl-logn-max-bound} and \ref{thm:hub-set-bound-small-p} now imply the following general theorem for the size of the hub sets for any \HLp.
\begin{theorem}\label{thm:general-bound-for-hub-size}
Let $T = (V, E)$ be a tree with $n$ vertices, and let $H$ be an optimal HHL solution for \HLp, for any $p \in [1, \infty]$. Then, $\|H\|_\infty = O(\log^2 n)$.
\end{theorem}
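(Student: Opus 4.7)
The plan is to proceed by a simple case analysis on $p$, combining the three previously established theorems in this section. The natural threshold to split the range $p \in [1, \infty]$ is $p = \log n$: above this threshold, the $\ell_p$-norm is close enough to the $\ell_\infty$-norm that Theorem~\ref{thm:hl-logn-max-bound} already yields an $O(\log n)$ bound; below this threshold, the bound from Theorem~\ref{thm:hub-set-bound-small-p} is linear in $p$, but since $p < \log n$, this linear growth contributes only an extra $\log n$ factor.

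More concretely, for the regime $p \geq \log n$ (including $p = \infty$), I would invoke Theorem~\ref{thm:hl-logn-max-bound} with the fixed constant $\varepsilon = 1$, which immediately gives $\|H\|_\infty = O(\log n) = O(\log^2 n)$. For $p = \infty$, one can alternatively just appeal directly to Theorem~\ref{thm:hl-infinity-max-bound}, which gives the sharper bound $\|H\|_\infty \leq 2 \log n$.

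For the complementary regime $1 \leq p < \log n$, I would invoke Theorem~\ref{thm:hub-set-bound-small-p}, again with $\varepsilon = 1$, to obtain $\|H\|_\infty = O(p \log n)$. Substituting $p < \log n$ yields $\|H\|_\infty = O(\log^2 n)$, as desired. Since the two regimes together cover every $p \in [1, \infty]$, the uniform bound $\|H\|_\infty = O(\log^2 n)$ follows.

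There is no real obstacle here: the heavy lifting has been done in the three prerequisite theorems. The only point to double-check is that the hidden constants in the previous $O$-notations are truly absolute once $\varepsilon$ is fixed to $1$ (so they do not silently depend on $p$ or $n$); both Theorem~\ref{thm:hl-logn-max-bound} and Theorem~\ref{thm:hub-set-bound-small-p} explicitly state dependence only on $\varepsilon$, so fixing $\varepsilon = 1$ gives absolute constants and the combined bound $O(\log^2 n)$ is valid across the entire range of $p$.
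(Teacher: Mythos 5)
Your case analysis at the threshold $p = \log n$, invoking Theorem~\ref{thm:hl-logn-max-bound} with $\varepsilon = 1$ for $p \geq \log n$ and Theorem~\ref{thm:hub-set-bound-small-p} with $\varepsilon = 1$ for $p < \log n$, is precisely how the paper derives Theorem~\ref{thm:general-bound-for-hub-size} as an immediate consequence of the three preceding theorems, and your check that the hidden constants become absolute once $\varepsilon$ is pinned to $1$ is the right thing to verify. (The paper additionally offers a self-contained ``alternative proof'' that bypasses the heavier machinery of Theorem~\ref{thm:hub-set-bound-small-p} via a direct argument showing the subproblem size halves every $4\log n$ levels past depth $\log^2 n$, but your route is the one the text itself points to.)
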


We now present an alternative proof of the above statement, thus circumventing the more involved analysis of the proof of Theorem~\ref{thm:hub-set-bound-small-p}.
\begin{proof}[Alternative proof of Theorem~\ref{thm:general-bound-for-hub-size}]
The interesting case is when $p \leq \varepsilon \log n$. Let $H$ be an optimal HHL solution for \HLp, and let's assume that $\|H\|_\infty > \log^2 n$. We will show that after the $(\log^2 n)$-th level of the recursive decomposition, the size of the resulting subproblems reduces by a factor of 2 every $4\log n$ levels. Thus, the total depth, or in other words, the size of the largest hub set, will always be at most $5 \log ^2 n$. So, let $T'$ be a subproblem at the $(\log^2 n)$-th level of the decomposition, and let's assume that after $4\log n$ levels, the size of the largest resulting subproblem (coming from $T'$) is $x > n' / 2$, where $n' = |T'|$. This means that the contribution of the vertices of $T'$ to the objective value (raised to the power $p$) is at least $x \cdot (\log ^2 n + 4\log n)^p + (n' - x) (\log^2 n)^p$. 

We now modify the solution and run the Tree Algorithm (see Algorithm~\ref{Tree-Algorithm}) on $T'$. The contribution of the vertices of $T'$ to the objective value (raised to the power $p$) in this modified solution is at most $n' (\log^2 n + 2\log n)^p$. Since $H$ is optimal, we must have
\begin{equation*}
    n' (\log^2 n + 2\log n)^p \geq x \cdot (\log ^2 n + 4\log n)^p + (n' - x) (\log^2 n)^p.
\end{equation*}
This is equivalent to
\begin{equation*}
    (1 + 2/\log n)^p \geq \frac{x}{n'} \cdot (1 + 4/\log n)^p + \left(1 - \frac{x}{n'} \right).
\end{equation*}
We have $\frac{x}{n'} \cdot (1 + 4/\log n)^p + \left(1 - \frac{x}{n'} \right) > \frac{1}{2} \cdot (1 + 4/\log n)^p + \frac{1}{2}$. We will now prove that we always have $\frac{1}{2} \cdot (1 + 4/\log n)^p + \frac{1}{2} \geq (1 + 2/\log n)^p$. Let $f(x) = \frac{1}{2} \cdot (1 + 2x)^p + \frac{1}{2} - (1 + x)^p$. We have
\begin{equation*}
    f'(x) = p (1 + 2x)^{p - 1} - p (1 + x)^{p - 1}.
\end{equation*}
Note that $f'(x) > 0$ for every $x > 0$ (and $f'(0) = 0$). Since $f$ is continuous, this implies that $f(x) \geq f(0) = 0$ for every $x \geq 0$. We conclude that $f(2/\log n) \geq 0$, which implies a contradiction. This proves that after $4\log n$ iterations the size of the subproblems has reduced by a factor of 2. It is easy to see that this argument can be applied at any level of the recursion at depth at least $\log^2 n$, and so we conclude that indeed, after at most $5\log^2 n$ levels, the size of the subproblems will be at most 1. Thus, $\|H\|_\infty = O(\log^2 n)$.
\end{proof}

%%%%%%%%%%%%%%%%%%%%%%%%%%%%%%%%%%%%%%%%%%%%%%%%%%%%%%%%%%%%%%%%%%%%%%%%%%%%%%%%%%%%%%%%%%%%%%%%%%%%%%%%%%%%%%%%%%%%%%%%%%%
\section{Quasi-polynomial time algorithms for HL on trees}\label{quasi_sec}

The results of the previous section now imply that the DP techniques presented in Section~\ref{sec:ptas-trees} can be used to obtain quasi-polynomial-time algorithms for \HLp. This is based on the observation that the set of boundary vertices of a subtree is a subset of the hub set of every vertex in that subtree. Thus, Section~\ref{sec:bounds-on-size} suggests that by restricting our DP to subinstances with polylogarithmic boundary size, we obtain exact quasi-polynomial time-algorithms. In particular, we can easily now get the following theorem, as an immediate corollary of the results of the previous sections.
\begin{theorem}
There exist quasi-polynomial time exact algorithms for \HLp on trees, for every $p \in [1, \infty]$. For any fixed $\varepsilon > 0$, the corresponding running times of the algorithms on trees with $n$ vertices are:
\begin{enumerate}
    \item $n^{O(\log n)}$, when $p$ is either a constant or at least as large as $\varepsilon \cdot \log n$,
    \item $n^{O(\log^2 n)}$, when $p$ is superconstant smaller than $\varepsilon \log n$.
\end{enumerate}
\end{theorem}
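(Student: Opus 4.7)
The plan is to adapt the dynamic programming schemes from Sections~\ref{sec:ptas-l1-trees}, \ref{sec:ptas_lp}, and \ref{appendix_ptas_infty} so that they become \emph{exact}, by leveraging the structural bounds on $\|H^*\|_\infty$ for optimal hierarchical hub labelings proved in Section~\ref{sec:bounds-on-size}. The crucial observation tying the two ingredients together is the following: in any hierarchical hub labeling $H$, if $T'$ is a subtree arising at some level of the recursive decomposition induced by $H$, then for every $u \in T'$ the set $\partial(T')$ is contained in $H_u$. Indeed, $\partial(T')$ is precisely the set of higher-ranked vertices picked along the root-to-$T'$ path of the decomposition, and each such vertex is a hub of every $u \in T'$ by the definition of a canonical HHL. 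In particular, $|\partial(T')| \leq \|H\|_\infty - 1$.

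First I would invoke Theorem~\ref{HHL_optimal_for_trees} to restrict attention to HHLs, and then use Theorems~\ref{thm:hl-infinity-max-bound}, \ref{thm:hl-logn-max-bound}, \ref{thm:hub-set-bound-small-p}, and \ref{thm:general-bound-for-hub-size} to fix an explicit upper bound $L = L(n,p)$ on $\|H^*\|_\infty$ for an optimal HHL $H^*$: namely, $L = O(\log n)$ when $p$ is constant or $p \geq \varepsilon \log n$, and $L = O(\log^2 n)$ otherwise. By the preceding observation, only subtrees $T'$ with $|\partial(T')| \leq L$ can appear in the decomposition of $H^*$, and, as noted in Section~\ref{sec:ptas-l1-trees}, the total number of such subtrees is at most $\sum_{i=0}^{L}\binom{n}{i} + \sum_{u\in V}\deg(u) = n^{O(L)}$; moreover they can be enumerated efficiently via their boundary sets.

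Next I would set up an exact DP over these subtrees. For $p \in [1,\infty)$, maintain a table $B[T',t]$ for every admissible $T'$ and every integer $t \in \{0,1,\dots,L\}$, defined by the recurrence
\begin{equation*}
    B[T', t] = (1+t)^p + \min_{r \in T'} \sum_{T'' \text{ c.c.~of } T' - r} B[T'', t+1],
\end{equation*}
with the obvious base cases when $|T'| \leq 2$. This recurrence is identical to the one satisfied by $OPT[T',t]^p$ in equation~(\ref{eq:OPT-lp-shifted}); the key point is that, in contrast to the PTAS, we do \emph{not} need the balanced-separator shortcut, because every child subtree $T''$ produced by any choice of $r$ has $|\partial(T'')| \leq |\partial(T')|+1 \leq L+1$, and we only ever need the entries corresponding to subtrees that occur in some optimal decomposition (for which the bound $L$ holds). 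For $p = \infty$ the scheme is analogous, with $\max$ replacing the sum, following Section~\ref{appendix_ptas_infty}. The optimum value is then $B[T,0]^{1/p}$ (or $B[T,0]$ for $p=\infty$), and the actual labeling is recovered by standard DP traceback.

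For correctness, the inclusion $\partial(T') \subseteq H_u$ shows that the decomposition tree of $H^*$ uses only subtrees whose boundary size is at most $L - 1 \leq L$, so $B[T,0]$ is at most the $\ell_p$-cost of $H^*$ raised to the $p$-th power; the reverse inequality is immediate since any choice of $r$ in the DP yields, by concatenation with the recursively constructed optimal labelings for each component, a feasible HHL. The running time is $n^{O(L)} \cdot \poly(n,L)$, which is $n^{O(\log n)}$ in the first regime and $n^{O(\log^2 n)}$ in the second, as claimed. I do not anticipate any real obstacle beyond a careful verification of the correspondence between entries of the DP table and subtrees arising in HHL decompositions; the heart of the proof is the elementary inclusion $\partial(T') \subseteq H_u$, which converts the $\ell_\infty$ bounds of Section~\ref{sec:bounds-on-size} directly into a bound on the number of DP cells.
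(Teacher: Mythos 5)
Your proposal is correct and follows essentially the same route as the paper: the paper's own (very terse) argument is precisely the observation that $\partial(T')\subseteq H_u$ for every $u\in T'$ in an HHL decomposition, so the $\ell_\infty$ bounds of Section~\ref{sec:bounds-on-size} restrict the exact DP of Section~\ref{sec:ptas-trees} to the $n^{O(L)}$ subtrees with boundary at most $L$, with $L=O(\log n)$ or $O(\log^2 n)$ depending on $p$. Your fleshed-out version, including dropping the balanced-separator fallback and handling the boundary-$L+1$ children of boundary-$L$ cells, is a faithful and complete elaboration of that argument.
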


%%%%%%%%%%%%%%%%%%%%%%%%%%%%%%%% NEW PART %%%%%%%%%%%%%%%%%%%%%%%%%%%%%%%%%%%%
\newcommand{\PP}{\mathcal{P}}

\section{Hub Labeling on trees and the problem of searching on trees}

In this section, we discuss the equivalence of Hub Labeling on trees and the problem of searching for a node in trees, as first observed and communicated to us by Gawrychowski et al.~\cite{GawKMW17}. The uniform (i.e.~unweighted) vertex-query version of the problem can be described as follows. Let $T = (V, E)$ be a tree, and let $t \in V$ be a hidden marked vertex of the tree. The goal is to detect vertex $t$ by quering vertices of the tree. A query asks whether a vertex $u$ of the graph is the target vertex $t$ and if not, the response is the subtree of $T-\{u\}$ that contains $t$. A vertex $t$ is found when the algorithm queries a vertex $u$, and the response is that $u \equiv t$. Let $Q_t \subseteq V$ be the set of vertices queried until we find vertex $t$ (note that $t \in Q_t$). Our goal is to find a deterministic strategy that minimizes the following quantity:
\begin{equation*}
    \left(\sum_{t \in V} |Q_t|^p \right)^{1/p}.    
\end{equation*}

The above problem and related questions (such as variants with edge queries, weighted versions of the problem and generalizations to arbitrary graphs) have been posed in various works, as possible generalizations of the standard binary search (see e.g.~\cite{Iyer:1988:ONR:49320.49322, Schaffer:1989:ONR:71361.71368, Mozes:2008:FOT:1347082.1347202, DBLP:journals/siamcomp/Ben-AsherFN99, DBLP:conf/icalp/JacobsCLM10, DBLP:conf/stoc/Emamjomeh-Zadeh16, DBLP:conf/icalp/DereniowskiKUZ17}.

We will now formally define the problem and show its equivalence to \HLp on trees. A deterministic strategy can be defined (recursively) as follows.
\begin{definition}\label{def:deterministic-strategy}
Let $T=(V,E)$ be a tree. A deterministic search strategy $D$ for the tree $T$ is an ordering $\langle \pi_1, ..., \pi_n \rangle$ of the vertices of $T$ such that:
\begin{enumerate}
    \item If $V = \{u\}$, then $D = \langle u \rangle$.
    \item If $|V| > 1$, then $D_{T'} = \langle \pi_2, ..., \pi_n \rangle_{T'}$ is a deterministic strategy for $T'$, for every subtree $T'$ of $T - \{\pi_1\}$,
\end{enumerate}
(the notation $\langle \pi_2, ..., \pi_n \rangle_{T'}$ denotes the restriction of the ordering to the vertices contained in $T'$).

The cost $C(T,D,t)$ of a strategy $D = \langle \pi_1, ..., \pi_n \rangle$ for detecting a vertex $t$ in a tree $T$ is defined as
\begin{equation*}
    C(T,D,t) =  \begin{cases}
                    1, & \textrm{if }t = \pi_1,\\   
                    1 + C(T', D_{T'}, t), &\textrm{if }t \neq \pi_1 \textrm{ and } T' \textrm{ is the subtree of }T - \{\pi_1\} \textrm{ that contains }t.
                \end{cases}
\end{equation*}
Finally, the $\ell_p$-cost of strategy $D$ for the tree $T$ is defined as
\begin{equation*}
    C_p(T,D) = \left(\sum_{t \in V} C(T,D,t)^p \right)^{1/p}.
\end{equation*}
The $\ell_\infty$-cost of strategy $D$ for the tree $T$ is defined as $C_\infty(T,D) = \max_{t \in V} C(T,D,t)$.
\end{definition}

\begin{definition}[$\ell_p$-searching in trees]
Let $T = (V, E)$ be a tree and let $p \geq 1$. The $\ell_p$-searching in trees problem asks to compute a deterministic search strategy $D$ for $T$ that minimizes the cost $C_p(T, D) = \left(\sum_{t \in V} C(T,D,t)^p \right)^{1/p}$. The $\ell_\infty$-searching in trees problem asks to compute a deterministic search strategy $D$ for $T$ that minimizes the cost $C_\infty(T, D) = \max_{t \in V} C(T,D,t)$.
\end{definition}

By Theorem~\ref{HHL_optimal_for_trees}, we know that solving \HLp on trees is equivalent to finding the optimal hierarchical hub labeling. Using this result and Definitions~\ref{def:hhl} and \ref{def:deterministic-strategy}, it is now straighforward to prove the following theorem (and thus the proof is omitted).
\begin{theorem}
Let $T=(V,E)$ be a tree. Then, for every $p \in [1, \infty]$, the $\ell_p$-searching in trees problem for $T$ is equivalent to optimally solving \HLp for $T$.
\end{theorem}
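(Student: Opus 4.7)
The plan is to match hierarchical hub labelings on a tree with deterministic search strategies via their common recursive structure, and then argue that the per-vertex costs agree exactly. By Theorem~\ref{HHL_optimal_for_trees}, optimizing \HLp on a tree is equivalent to optimizing only over hierarchical hub labelings, so I may restrict attention to these. As discussed in Section~\ref{hhl-section}, any canonical hierarchical hub labeling on a tree $T$ is produced by the following process: choose a vertex $u_1$ and add it to every $H_u$, then remove $u_1$ and recurse on each connected component of $T-u_1$. Recording the order in which vertices are chosen yields a total order $\pi = \langle \pi_1,\dots,\pi_n \rangle$, and conversely any such total order determines a unique canonical HHL by running this recursive procedure (always choosing the highest-$\pi$-ranked vertex remaining in the current subtree). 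Definition~\ref{def:deterministic-strategy} shows that a deterministic search strategy is precisely the same object: a total order $\pi$ whose first element is queried and whose remaining elements are used recursively on the subtree of $T-\{\pi_1\}$ that contains the target.

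The heart of the argument is the identity $|H(\pi)_u| = C(T, D(\pi), u)$ for every vertex $u \in V$, where $H(\pi)$ is the canonical HHL and $D(\pi)$ is the search strategy associated with the same order $\pi$. I would prove this by induction on $|V|$. If $u = \pi_1$, then $|H(\pi)_u| = 1$ (we have $u \in H_u$, and no earlier-ranked vertex exists) and $C(T, D(\pi), u) = 1$. Otherwise, let $T'$ be the connected component of $T - \{\pi_1\}$ that contains $u$, and let $\pi|_{T'}$ be the induced order on $T'$. Every hub of $u$ other than $\pi_1$ lies on a path inside $T'$, so the hubs of $u$ in $H(\pi)$ are exactly $\{\pi_1\} \cup H(\pi|_{T'})_u$; hence $|H(\pi)_u| = 1 + |H(\pi|_{T'})_u|$. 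On the search side, $C(T, D(\pi), u) = 1 + C(T', D(\pi|_{T'}), u)$ by the recursion in Definition~\ref{def:deterministic-strategy}. The inductive hypothesis applied to $T'$ then gives the claim.

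Combining the per-vertex equality over all $u \in V$, I obtain
\begin{equation*}
\Bigl(\sum_{u \in V} |H(\pi)_u|^p \Bigr)^{1/p} = \Bigl(\sum_{u \in V} C(T, D(\pi), u)^p \Bigr)^{1/p}
\end{equation*}
for every $p \in [1,\infty)$, and analogously $\max_u |H(\pi)_u| = \max_u C(T, D(\pi), u)$ for $p = \infty$. Since the correspondence $\pi \mapsto (H(\pi), D(\pi))$ is a bijection between total orders of $V$ and (canonical HHLs, deterministic search strategies), minimizing $\bigl(\sum_u |H_u|^p\bigr)^{1/p}$ over HHLs is the same as minimizing $C_p(T,D)$ over deterministic strategies; together with Theorem~\ref{HHL_optimal_for_trees}, this yields the claimed equivalence between \HLp on $T$ and $\ell_p$-searching on $T$.

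The proof essentially just unwraps the recursive definitions, so I do not expect a genuine obstacle. The only subtle point is making the bijection precise: I must use Theorem~\ref{HHL_optimal_for_trees} to reduce to hierarchical solutions and the canonical-HHL description from Section~\ref{hhl-section} to pass from a hub labeling to a total order, which is why the bijection is between orders on one side and canonical HHLs (rather than arbitrary HHLs) on the other.
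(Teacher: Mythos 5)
Your proof is correct and is exactly the argument the paper has in mind: the paper omits the proof as "straightforward" given Theorem~\ref{HHL_optimal_for_trees} and the recursive definitions of canonical hierarchical hub labelings and deterministic search strategies, and your induction establishing $|H(\pi)_u| = C(T,D(\pi),u)$ vertex by vertex is precisely that unwrapping. The only cosmetic imprecision is calling $\pi \mapsto H(\pi)$ part of a bijection (several orders yield the same canonical HHL), but since the cost identity holds for every order and every canonical HHL arises from some order, the minimization argument is unaffected.
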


In~\cite{DBLP:conf/focs/OnakP06} a linear-time exact algorithm is given for the $\ell_\infty$-searching in trees problem. Thus, this results translates to an exact linear-time algorithm for \HL\infty on trees, improving upon the quasi-polynomial-time algorithm of the previous section. This is exactly the observation that was communicated to us by Gawrychowski et al.~\cite{GawKMW17}. After realizing that the two problems are intimately connected, we contacted Eduardo Laber and Marco Molinaro~\cite{LM17} and they suggested that a slight modification of the algorithm given in the work of Jacobs et al.~\cite{DBLP:conf/icalp/JacobsCLM10} ``should" work for the $\ell_1$-searching in trees problem (or equivalently, for \HL1 on trees), as the DP approach of~\cite{DBLP:conf/icalp/JacobsCLM10} gives an exact algorithm (for the edge-query variant of the problem) whose running time is $2^{O(h)} \poly(n)$, where $n$ is the number of vertices in the tree and $h$ is the height (depth) of an optimal search tree for the problem. Thus, if one establishes an $O(\log n)$ upper bound on the height of the search tree, then the algorithm runs in polynomial time. We remind the reader that we proved this very fact in Section~\ref{sec:bounds-on-size}, when $p$ is either a constant or $p = \Omega(\log n)$. Since the algorithm is not written down for the problem as defined above, we give a complete presentation and proof of correctness of the algorithm, thus establishing polynomial-time exact algorithms for \HLp on trees, when $p$ is either a constant or $p = \Omega(\log n)$.

\paragraph{Adapting the DP approach of Jacobs et al.~\cite{DBLP:conf/icalp/JacobsCLM10}.} In this section, we present an algorithm that optimally solves the $\ell_p$-searching in trees problem and runs in time $2^{O(h)} \poly(n)$ on a tree with $n$ vertices. Here, $h$ is the height of an optimal search tree for $T$ (or equivalently, the size of the largest hub set in an optimal solution). The bounds established in Section~\ref{sec:bounds-on-size} imply that for fixed $p$ and for $p \in [\varepsilon \log n, \infty]$ (for any fixed $\varepsilon > 0$), the algorithm runs in polynomial time.

Following the presentation of~\cite{DBLP:conf/icalp/JacobsCLM10}, it will be more convenient to describe a deterministic search strategy with a tree. To distinguish between the original tree $T$ and a search tree for $T$, we will use the term ``vertex" for a vertex of the tree $T$ and the term ``node" for a vertex of the search tree. We will also assume that the input tree $T=(V, E)$ is rooted at some arbitrary vertex $r$. Then, a deterministic search strategy can be respresented as a rooted tree $D$, where each node corresponds to a query (i.e.~a vertex) of the original tree $T$. 

A search tree for a rooted tree $T = (V,E)$ is a rooted tree $D = (N, E', A)$ where $N$ and $E'$ are the nodes and edges of the tree and $A: N \to V$ is an assignment. The nodes of the search tree correspond to queries. More precisely, a path from the root of $D$ to a node $u$ of $D$ indicates which queries should be made at each step to discover a particular vertex $A(u)$. The assignment $A$ describes exactly this correspondence between vertices of the original tree and nodes of the search tree. For each vertex $u \in V$, there is exactly one vertex $l$ of $D$ such that $A(l) = u$; in particular, we require a certificate that a vertex was found, and thus, even when we are left with one vertex in the resulting subtree, we still require that we make the query so as to discover the vertex (in that case, the corresponding node in the search tree will be a leaf). It is clear now that $|V| = |N|$. Moreover, we require the following property. For each inner (non-leaf) node $x \in N$, its children are partitioned into two classes, left and right. A node has at most one left child, but might have several right children. If node $x \in N$ has a left child, which we will denote as $y_0$, then for every $z \in N$ that is in the subtree hanging from $y_0$, $A(z)$ is not in any of the subtrees rooted at the children of $A(x)$. All other children of $x$ (if any) are right children and are denoted as $y_1, ..., y_k$. A node $z \in N$ is in the subtree hanging from $y_i$, $1 \leq i \leq k$, if $A(z)$ is in the subtree of $T$ rooted at $A(y_i)$.

Given a search tree $D$ for $T$, let $d(x,y)$ be the number of vertices from $x$ to $y$ in $D$ (note that we count the number of vertices instead of edges, since, technically, in our definition we ask to certify that a vertex is found, and thus, we will always query a vertex even if it is the only one remaining in the resulting tree). Then the $\ell_p$-cost of the search tree $D$ for tree $T$ is defined as
\begin{equation*}
    C_p(T, D) = \left(\sum_{z \in N} d\left(root(D), z \right)^p \right)^{1/p}.
\end{equation*}
It is easy to see that the above definition is equivalent to the definition of cost as defined previously in terms of the ordering of vertices that a strategy induces. So, from now on, we will use this tree description of a search strategy, and our goal is to compute the search tree of minimum cost. For that, we use a slightly modified version of the dynamic programming approach of~\cite{DBLP:conf/icalp/JacobsCLM10}. Before describing the algorithm, we introduce a variant of a search tree, similar to~\cite{DBLP:conf/icalp/JacobsCLM10}. Since we always consider rooted trees, we use the notation $T_u$ to denote the subtree hanging from a vertex $u$ of a rooted tree $T$.
\begin{definition}
Let $T = (V, E)$ be a rooted tree. An extended search tree (EST) for the tree $T$ is a triple $D = (N, E', A)$ where $N$ and $E'$ are the nodes and edges of a rooted tree and the assignment $A: N \to V \cup \{blocked, unassigned\}$ satisfies the following properties:
\begin{enumerate}
    \item For every vertex $u$ of $T$, $D$ contains exactly one vertex $x$ such that $A(x) = u$.
    \item Every non-leaf node $x \in N$ has at most one left child, but it might have several right children.
    \item For every $x,z \in N$ with $A(x), A(z) \in V$ and $x$ being a non-leaf node, the following holds: if $z$ is in the subtree of $D$ rooted at a right child $y_i$ of $x$ (for some $1 \leq i \leq k$), then $A(z) \in T_{A(y_i)}$, while if it is in the subtree rooted at the left child $y_0$, then $A(z)$ is not in any of the subtrees hanging from the children of $A(x)$.
    \item If $x \in N$ with $A(x) \in \{blocked, unassigned\}$, then $x$ has exactly one left child and no right children.
\end{enumerate}
The $\ell_p$-cost of an EST $D = (N, E', A)$ for a tree $T = (V,E)$ is defined as 
\begin{equation*}
    C_p(T, D) = \left(\sum_{z \in N: A(z) \in V} d(root(D), z)^p \right)^{1/p},
\end{equation*}
where again $d(u,v)$ denotes the number of nodes betwee $u$ and $v$ in $D$.
\end{definition}

It is straightforward to see that a search tree is also an EST, and that the cost of an optimal EST is at least as much as the cost of an optimal search tree. For the latter, we can convert any EST to a search tree by deleting any node $x$ (whose parent is $p$) with $A(x) \in \{block, unassigned\}$ and make the (unique) left child $y$ of $x$ the left child of $p$; we do this one node at a time, and the resulting tree is a search tree of cost at most the cost of the original EST (w.r.t.~any $\ell_p$-cost).

Before describing the algorithm, we need one more concept. A left path of a rooted search tree is the path obtained when we traverse the tree (starting from the root) by only going to the left child, until we reach a node that does not have a left child. A \emph{partial left path} (PLP) is a left path where every node is assigned (via a function $g$) to either \emph{blocked} or \emph{unassigned}. Let $D = (N, E', A)$ be an EST, and let $L = \{x_1, .., x_k\}$ be its left path. We say that $D$ is compatible with a PLP $P = \{p_1, ..., p_q\}$ if $k = q$ and $g(p_i) = blocked$ implies that $A(x_i) = blocked$.

We now introduce the subproblems that the DP will be solving. Let $T=(V,E)$ be the (rooted) input tree with root $r$. For a vertex $u \in V$, we denote as $c_1(u), ..., c_k(u)$ the children of $u$ (we denote the number of children of $u$ as $\delta(u)$; here we arbitrarily order the children of $u$). Let $T_{u, i}$, $1 \leq i \leq \delta(u)$, denote the subtree of $T$ containing $u$ and the subtrees hanging from its first $i$ children, i.e.~$T_{u,i} = \{u\} \cup \{T_{c_1(u)}\} \cup ... \cup \{T_{c_i(u)}\}$. A problem $\PP^B(T_{u,i}, P)$ consists of finding an EST for the tree $T_{u,i}$ with minimum $\ell_p$-cost among all EST's for $T_{u,i}$ that are compatible with $P$ and have height at most $B$. For simplicity of notation, the subproblem $\PP^B(T_{u,\delta(u)}, P)$ will also be denoted as $\PP^B(T_u, P)$. In order to recover the optimal search tree for $T$, in the end we will return $\PP^B(T_r, P)$ for sufficiently large $B$ (in particular, for $B$ equal to the bounds guaranteed by Theorems~\ref{thm:hl-infinity-max-bound}, \ref{thm:hl-logn-max-bound} and \ref{thm:hub-set-bound-small-p}), such that $P$ is of length $B$ and consists only of unassigned nodes.

The subproblems are computed bottom-up, and from left to right. We now describe how to optimally compute $\PP^B(T_{u,i}, P)$ for given $B$, $u$, $i$ and $P$. We will always denote as $g$ the function that assigns \emph{blocked} or \emph{unassigned} to the path $P$, and $A$ the assignment computed for the EST. The path $P$ will always be assumed to have $k$ vertices, i.e. $P = \{p_1, ..., p_k\}$, such that $k \leq |B|$; if $k > |B|$ we simply declare the subproblem ``not feasible". We will also store the cost of the optimal solution (w.r.t.~the $\ell_p$-cost) raised to the power $p$, when $p < \infty$, as this will turn out to be more convenient. If $p = \infty$, we will simply store the standard $\ell_\infty$-cost.

\paragraph{Base case: $T_u$ has only one vertex $u$.} Let $j$ be the smallest index, if any, such that $g(p_j) = unassigned$. If there is no such $j$, then the subproblem is ``not feasible". Otherwise, the EST is simply the path $P$ where we set $A(p_j) = u$. Its $\ell_p$-cost, for any $p \geq 1$, is equal to $j^p$. For $p = \infty$, the cost is $j$.

\paragraph{Case 1: $T_{u,1}$ (where $u$ is a non-leaf vertex with at least one child).} For simplicity of notation, let $v = c_1(u)$. We assume that we have already solved all subproblems of $T_v$. In order to compute an optimal solution for $\PP^B(T_{u,1}, P)$, for every $t \in [k]$ with $g(p_t) = unassigned$, we define $P_t = \{p_1, ..., p_k\}$ to be the path with assignment $g_t(p_j) = g(p_j)$ for $j < t$, $g_t(p_t) = blocked$ and $g_t(p_j) = unassigned$ for $j > t$. For each $t$, we then construct the EST $D_t$ as follows: we consider the optimal EST $D'$ (with corresponding assignment $A'$) for $\PP^B(T_{v}, P_t)$, and then set $A(p_t) = u$. We then look at the left child of $p_t$ in $D'$ and the tree hanging from it, we remove it and ``rehang" it as the unique right child of $p_t$. We finish by adding enough \emph{blocked} nodes in the left path of this modified EST so as to make it compatible with $P$. Let $D_t$ be the resulting EST (if $\PP^B(T_{v}, P_t)$ is ``not feasible", then we simply cannot construct $D_t$ and the corresponding value is ``not feasible"). It is easy to see that this is indeed an EST for $T_{u,1}$ whose depth is at most $B$ and is compatible with $P$. Its cost is $OPT^p(\PP^B(T_{v}, P_t)) + t^p$. Finally, among all (feasible) choices of $t$, we pick the one that minimizes $OPT^p(\PP^B(T_{v}, P_t)) + t^p$.

We claim now that this is indeed an optimal EST for $\PP^B(T_{u,1}, P)$. Let $D$ be an optimal EST for $\PP^B(T_{u,1}, P)$, whose cost (raised to the power $p$) is $OPT^p(\PP^B(T_{u,1}, P))$. We first observe that, by the definition of an EST, the node $x$ with $A(x) = u$ must be in the left path of $D$, i.e.~$A(p_t) = u$ for some $t \in [k]$ with $g(p_t) = unassigned$. We will now perform the ``reverse" operation compared to what we did in the previous paragraph. We define $P_t = \{p_1, ..., p_k\}$ to be the path with assignment $g_t(p_j) = g(p_j)$ for $j < t$, $g_t(p_t) = blocked$ and $g_t(p_j) = unassigned$ for $j > t$. We then modify $D$ by looking at the (unique) right child of the node $p_t$, and making the whole subtree hanging from that child the unique left child of the (blocked) node $p_t$. Let $D_t$ be the resulting EST. First, we observe that $D_t$ does not violate the assignment $g_t$, since every node after $p_t$ is an \emph{unassigned} node. It is also easy to see that the depth of $D_t$ is the same as the depth of $D$. The only property that is, potentially, violated, is  the length of the left path of $D_t$. As stated, we will eventually set $k = B$, and so, w.l.o.g. we will assume that this is the case, i.e.~$k = B$. We now look at the resulting left path of $D_t$. If it is at most $k$, then we can fill the left path of $D_t$ with blocked nodes so as to make it compatible with $P_t$. If its length is strictly larger than $k$, then, since $k = B$, this would imply that $D$'s height is strictly larger than $B$, which is impossible. Thus, we can always construct an EST $D_t$ such that $OPT^p(\PP^B(T_{u,1}, P)) = cost(D_t)^p + t^p$. It is now easy to see that, since our algorithm considers all possible values of $t$, it will return a solution of cost at most $OPT^p(\PP^B(T_{v}, P_t)) + t^p \leq cost(D_t)^p + t^p$, and thus, it will compute an optimal solution.

\paragraph{Case 2: $T_{u, i + 1}$ for some $i \geq 1$ (where $u$ is a non-leaf vertex with at least two children).} Again, for simplicity of notation, let $T_1 = T_{c_1(u)} \cup ... \cup T_{c_i(u)} \cup \{u\}$ and $T_2 = T_{c_{i + 1}(u)}$. Let $I$ be the set of indices corresponding to the unassigned nodes of $P$, i.e.~$I = \{j: g(p_j) = unassigned\}$. We consider all possible bipartitions of these nodes. For a bipartition  $(I_1, I_2)$ of $I$, let $P_1 = \{p_1, ..., p_k\}$ be the path with assignment $g_1$ such that $g_1(p_i) = unassigned$ for every $i \in I_1$ and $g_1(p_i) = blocked$, otherwise. We consider the EST $D_1$ of the problem $\PP^B(T_1, P_1)$ with corresponding assignment $A_1$. Let $p_t$ be the node of the left path such that $A(p_t) = u$. Let now $P_2 = \{p_1, ..., p_k\}$ be the path with assignment $g_2$ such that $g_2(p_i) = unassigned$ if $i \in I_2$ or $i > t$, and $g_2(p_i) = blocked$, otherwise. Let $D_2$ with assignment $A_2$ be the EST of the problem $\PP^B(T_2, P_2)$. We first take the ``union" $D''$ of the two trees, i.e.~we align their left paths. By construction, there is no conflict with the two assignments, since there is no assigned node on the left path after $p_t$ for $A_1$. In order to obtain a valid EST $D'$ now for $\PP^B(T_{u, i + 1}, P)$, we consider the left child of $p_t$ (which is part of $D_2$, as $D_1$ only has unassigned/blocked nodes at that part of the tree), remove the subtree hanging from there and rehang it as a right child of $p_t$. Note that the depth of the tree does not change, and, by adding blocked vertices on the left path of the tree after $p_t$ so as to make it compatible with $P$, we now obtain a valid EST $D'$ for $\PP^B(T_{u, i + 1}, P)$. Its cost is equal to $OPT^p(\PP^B(T_1, P_1)) + OPT^p(\PP^B(T_2, P_2))$. Finally, we pick the bipartition $(I_1, I_2)$ that minimizes this sum.

We will now show that this is indeed an optimal solution for $\PP^B(T_{u, i + 1}, P)$. Let $D$ be an optimal EST for $\PP^B(T_{u, i + 1}, P)$ with corresponding assignment $A$, and let $p_t$ be the node of the left path such that $A(p_t) = u$. Let $I_1 = \{i \leq t: A(p_i) \in T_1\}$ and $I_2 = \{i < t: A(p_i) \in T_2\}$. Clearly, $g(p_i) = unassigned$ for every $i \in I_1 \cup I_2$. It is easy to see that $A(p_i) \in \{unassigned, blocked\}$ for every $i > t$. Let $P_1 = \{p_1, ..., p_k\}$ with assignment $g_1$ such that $g_1(p_i) = unassigned$ if $i \in I_1$, and $g_1(p_i) = blocked$, otherwise. We now construct the EST $D_1$ (with assignment $A_1$) for subproblem $\PP^B(T_1, P_1)$ from $D$ as follows: we look at $p_i$ for $i \in I_2$ and we delete all the right children of $p_i$ and set $A_1(p_i) = blocked$. We also look at $p_t$ and delete its right child that corresponds to a subtree of $T_2$. The resulting tree $D_1$ is indeed an EST for $\PP^B(T_1, P_1)$ of depth at most $B$. Similarly, let $P_2 = \{p_1, ..., p_k\}$ with assignment $g_2$ such that $g_2(p_i) = unassigned$ if $i \in I_2$ or $i > t$, and $g_2(p_i) = blocked$, otherwise. We now construct an EST $D_2$ with assignment $A_2$ for $\PP^B(T_2, P_2)$ as follows: we consider $D$, and for every $i < t$ with $i \in I_1$, we delete the right children of $p_i$ and set $A_2(p_i) = blocked$. For $p_t$, we delete all the right children corresponding to subtrees of $T_1$, and we move the unique subtree corresponding to a subtree of $T_2$ and make it the left child of $p_t$. Note that this indeed results to a feasible solution of height at most $B$ for $\PP^B(T_2, P_2)$ (after, potentially, adding some blocked nodes in the left path so as to make it compatible with the length of $P_2$, which, as already mentioned, can be assumed to be $B$). It is easy to see that $OPT^p(\PP^B(T_{u, i + 1}, P)) = cost^p(D_1) + cost^p(D_2)$. By construction, our algorithm will consider the set $I_1$ and the corresponding path $P_1$, and will optimally solve the problem $\PP^B(T_1, P_1)$. Let $D'$ be an optimal EST for $\PP^B(T_1, P_1)$, with assignment $A'$. We must have $A'(p_i) = u$ for some $i \in I_1$. Thus, by construction, the path $P_2'$ that the algorithm will consider in this case is a path whose blocked vertices are a subset of the blocked vertices of $P_2$. Let $D''$ be an optimal solution for $\PP^B(T_2, P_2')$. The previous observation implies that $OPT^p(\PP^B(T_2, P_2')) \leq cost^p(D_2)$, since $D_2$ is a feasible solution for $\PP^B(T_2, P_2')$. Thus, we conclude that the algorithm returns a solution of cost at most $cost^p(D') + cost^p(D'') = OPT^p(\PP^B(T_1, P_1)) + OPT^p(\PP^B(T_2, P_2')) \leq cost^p(D_1) + cost^p(D_2) = OPT^p(\PP^B(T_{u, i + 1}, P))$. This shows that the algorithm indeed computes the optimal solution.

\begin{theorem}
For trees of size $n$, the above algorithm optimally solves the $\ell_p$-searching in trees problem, or equivalently the \HLp problem on trees, in time $2^{O(h)} \cdot \poly(n)$, where $h$ is the maximum number of queries in an optimal deterministic strategy, or equivalently, the size of the largest hub set in an optimal HHL solution. In particular, when $p$ is a constant or $p \in [\varepsilon \log n, \infty]$, the running time is $\poly(n)$ (since $h = O(\log n)$ in such cases); in all other cases, the running time is $n^{O(\log n)}$.
\end{theorem}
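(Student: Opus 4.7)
My plan is to verify two things: correctness of the dynamic program (which was sketched inline with the algorithm's description), and a careful bookkeeping of the running time in terms of the parameter $B$, which we will then instantiate using the bounds from Section~\ref{sec:bounds-on-size}. The correctness argument proceeds by induction on the size of $T_{u,i}$. For the base case (a single vertex), the unique optimal EST compatible with a given PLP $P$ places $u$ at the first unassigned position, which is precisely what the algorithm does. For Case 1, I would formalize the bijection, already sketched in the excerpt, between optimal ESTs for $\PP^B(T_{u,1},P)$ and pairs $(t, D')$, where $t$ is the position on $P$ at which $u$ is placed and $D'$ is an optimal EST for $\PP^B(T_v, P_t)$. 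The key observation is that ``hanging'' the subtree below $A(p_t)$ as a right child of $p_t$ in one direction, and as a left child in the other, is a depth-preserving, cost-preserving involution. For Case 2, correctness follows from the fact that any optimal EST for $\PP^B(T_{u,i+1},P)$ induces a canonical bipartition of the unassigned nodes of $P$ according to whether the assigned vertex lies in $T_1$ or $T_2$; the DP enumerates all such bipartitions.

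The running time analysis is the main technical content. I would count subproblems as follows. A subproblem is specified by a pair $(T_{u,i}, P)$: there are $\sum_{u} \delta(u) = O(n)$ choices of $(u,i)$, and, for a PLP $P$ of length $B$, there are $2^B$ choices of the blocked/unassigned assignment $g$. Hence the total number of subproblems is $O(n \cdot 2^B)$. The work per subproblem is dominated by Case 2, in which the algorithm enumerates all bipartitions $(I_1, I_2)$ of the unassigned positions in $P$; there are at most $2^B$ of these, and each requires only polynomial work to assemble from previously computed entries. Case 1 and the base case contribute only $O(B) + O(1)$ work. Summing, the total running time is bounded by $O(n \cdot 2^B) \cdot 2^{O(B)} \cdot \poly(n) = 2^{O(B)} \cdot \poly(n)$.

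To conclude, I would invoke the bounds from Section~\ref{sec:bounds-on-size}. By Theorems~\ref{thm:hl-infinity-max-bound}, \ref{thm:hl-logn-max-bound} and \ref{thm:hub-set-bound-small-p}, when $p$ is a fixed constant or $p \in [\varepsilon \log n, \infty]$, there is an optimal HHL with $\|H\|_\infty = O(\log n)$; equivalently, an optimal search tree has height $h = O(\log n)$. Setting $B := c \log n$ for the appropriate constant $c$ suffices, and we obtain running time $2^{O(\log n)} \poly(n) = \poly(n)$. In the remaining regime (superconstant $p$ below $\varepsilon \log n$), Theorem~\ref{thm:general-bound-for-hub-size} gives $h = O(\log^2 n)$, so setting $B = O(\log^2 n)$ yields running time $2^{O(\log^2 n)} = n^{O(\log n)}$.

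The only subtle point I expect to need care is the correctness of Case 2, specifically verifying that the reconstructed EST genuinely has height at most $B$ and that the length of its left path (after padding with blocked nodes) matches $|P|$; the argument in the excerpt effectively reduces this to the invariant $k = B$, which is benign because we only ever invoke the DP at the top level with $P$ of length exactly $B$, and all recursive calls preserve this length. Once this invariant is tracked throughout the induction, correctness and the claimed running time follow in a routine fashion.
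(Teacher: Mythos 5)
Your proposal is correct and follows the same route the paper takes: the paper's ``proof'' of this theorem is the preceding case-by-case description of the DP (base case, Case 1, Case 2) together with an implicit running-time count, and you reconstruct exactly that. Your accounting is the one the paper leaves implicit --- $O(n \cdot 2^B)$ subproblems $\PP^B(T_{u,i}, P)$, with the Case~2 bipartition enumeration contributing $2^{O(B)}$ work per subproblem, for a total of $2^{O(B)} \cdot \poly(n)$ --- and your instantiation of $B$ via the height bounds from Section~\ref{sec:bounds-on-size} (Theorems~\ref{thm:hl-infinity-max-bound}, \ref{thm:hl-logn-max-bound}, \ref{thm:hub-set-bound-small-p} for the $O(\log n)$ regimes, Theorem~\ref{thm:general-bound-for-hub-size} for the $O(\log^2 n)$ fallback) yields precisely the claimed $\poly(n)$ and $n^{O(\log n)}$ running times; the observation that all recursive calls preserve the PLP length $|P| = B$ is the right invariant to make Case~2 go through.
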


\chapter{Open problems from Part I}\label{chapter:open-problems-hl}

In this concluding chapter of Part I, we will state a few open problems that we believe are of interest. Regarding Hub Labeling, there are still quite a few open problems, such as the following:
\begin{enumerate}
    \item Is there a constant factor approximation algorithm for Hub Labeling on graphs with unique shortest paths? As of now, the hardness of approximation results seem to require that graphs have multiple shortest paths. A first attempt towards answering this question would be to construct integrality gap examples for the LP for \HL1, introduced in Figure~\ref{fig:lp1}.
    \item Is it $\Omega(\log n)$-hard to approximate \HLp on general graphs for the whole range of parameter $p$? One should expect this result to hold, so it would be nice to obtain a uniform $\Omega(\log n)$-hardness result for every $p \in [1, \infty]$ (we remind the reader that in Section~\ref{Hardness} we proved $\Omega(\log n)$-hardness for \HL1 and for \HLp when $p = [\log n, \infty]$).
    \item Can the $O_p(\log D)$-approximation algorithm for \HLp (presented in Section~\ref{sec:bounded}) be extended to work for every $p$, and in particular for $p = \infty$?
    \item A related question that seems to be of interest to the community is whether there exist $\poly(\log n)$-approximation algorithms for the Hierarchical Hub Labeling problem. The only known approximation algorithms give a polynomial approximation, since the standard approach is to compare their performance with the cost of the optimal hub labeling (and not just the optimal hierarchical one), and it is known that there are cases where the gap between the two optimums is polynomial~\cite{DBLP:conf/mfcs/GoldbergRS13}.
\end{enumerate}

As explained in Chapter~\ref{chapter:trees}, Hub Labeling on trees is equivalent with the problem of searching for a node in a tree. Using this equivalence, we showed how one can obtain exact polynomial-time algorithm for \HLp, when $p$ is a fixed constant or when $p \in [\log n, \infty]$ (with $n$ being the number of vertices on the tree). Some open problems about Hub Labeling and related search problem in trees are the following:
\begin{enumerate}
    \item Is the size of the largest hub set in an optimal solution for \HLp on trees always $O(\log n)$? In Section~\ref{sec:bounds-on-size} we proved that this is indeed the case for constant $p$ and for $p \in [\log n, \infty]$. It would be nice to obtain this upper bound for every $p \in [1, \infty]$. Such a result would imply polynomial-time algorithms for \HLp on trees for every $p \in [1, \infty]$.
    \item There are several generalizations of the problem of searching for a node in a tree, many of which are \NP-hard. A particular one that seems interesting is the generalization introduced by Dereniowski et al.~\cite{DBLP:conf/icalp/DereniowskiKUZ17}, in which they introduce vertex weights that correspond to the query time when a vertex is queried. In other words, if $T = (V,E)$ and a strategy makes queries $Q_t \subseteq V$ in order to discover vertex $t$, it pays $w(Q_t) = \sum_{v \in Q_t} w_v$ for that vertex. The objective they consider is the $\ell_\infty$-cost, i.e.~they minimize the worst-case query time, which corresponds to minimizing $\max_{t \in V} w(Q_t)$. They obtain a QPTAS and an $O(\sqrt{\log n})$-approximation algorithm (that runs in polynomial time). A natural question is whether one can obtain a constant factor approximation algorithm for the problem, and, ideally, a PTAS.
\end{enumerate}

%%%%%%%%%%%%%%%%%%%%%%%%%%%%%%%%%%%%%%%%%%%%

\part{Stability and perturbation resilience}\label{part:stability}

%%%%%%%%%%%%% STABILITY %%%%%%%%%%%%%%%%%
\chapter{Bilu-Linial stability and perturbation resilience} \label{chapter:stability}

\section{Introduction and definitions}

The notion of stability that we are interested in is the one defined by Bilu and Linial in 2010~\cite{DBLP:journals/cpc/BiluL12}. Informally, an instance of an optimization problem is \textit{stable} if it has a unique optimal solution, and this solution remains the unique optimal solution under small perturbations of the parameters of the input. As Max Cut was the first problem studied in this framework, we will use it as an example to illustrate the definition.

A Max Cut instance is defined by an edge-weighted undirected graph $G = (V,E,w)$, where $w: E \to \mathbb{R}_{>0}$, and the goal is to find a partition $(X, V\setminus X)$ of the vertex set such that the weight of the edges cut (i.e.~the edges whose endpoints end up in different sets of the partition) is maximized. In such an instance, the parameters are simply the edge weights. An instance of Max Cut is called $\gamma$-stable, for some $\gamma \geq 1$, if there is a unique optimal partition $(X^*, V \setminus X^*)$, and this partition remains the unique optimal solution for every instance $G' = (V,E, w')$ that satisfies $w_e \leq w_e' \leq \gamma \cdot w_e$ for every $e \in E$.

Given such an instance, the goal is to design an exact polynomial-time algorithm that recovers this unique optimal partition $(X^*, V \setminus X^*)$. Observe that as $\gamma$ grows, the restrictions imposed on the instance are stronger, and fewer instances satisfy the definition. For $\gamma = 1$, the definition is equivalent to the statement that the instance has a unique optimal solution. From these observations, it follows that the main goal in such a framework is to design exact algorithms that work for $\gamma$-stable instances, for as small value of $\gamma \geq 1$ as possible.

Before giving the formal definition of stability and an overview of previous results, we would like to give the motivation behind such a notion. It is a well-observed fact that in many real-life instances, the parameter values are merely approximations to the actual parameters, since they are obtained from inherently noisy measurements. Thus, it is reasonable to believe, given that the optimization problem we are solving is meaningful to begin with, that the objective function is not sensitive to small perturbations of these parameters. Moreover, in many cases, the objective simply serves as a proxy towards recovering an intended underlying solution, and, so, small perturbations should not really affect the ground truth. The case of clustering problems exemplify this belief, as in such problems the objective function is commonly used to guide us to the ground-truth solution, meaning that we do not really care about computing the exact optimal value per se, but rather we are interested in recovering the underlying ground-truth clustering, and we choose the right objective function to help us discover this clustering.

We will now give the formal definition of stability/perturbation resilience for optimization and clustering problems. The definition of \emph{stability} was first given in the context of graph optimization problems by Bilu and Linial~\cite{DBLP:journals/cpc/BiluL12}, was later extended to clustering problems by Awasthi et al.~\cite{DBLP:journals/ipl/AwasthiBS12} under the name \emph{perturbation resilience}, and finally, the \emph{metric} version of perturbation resilience was introduced by Makarychev and Makarychev~\cite{DBLP:journals/corr/MakarychevM16} and published, along with several other results, in a joint work with these two authors~\cite{DBLP:conf/stoc/AngelidakisMM17}.

\begin{definition}[stability and perturbation resilience~\cite{DBLP:journals/cpc/BiluL12, DBLP:journals/ipl/AwasthiBS12, DBLP:journals/corr/MakarychevM16, DBLP:conf/stoc/AngelidakisMM17}]\label{def:perturbation-stability}
Consider an instance ${\calI} = (G, w)$ of a graph optimization problem with a set of vertex or edge weights $w_i$. An instance $(G, w')$, with weights $w_i'$, is a \emph{$\gamma$-perturbation} ($\gamma \geq 1$) of $(G, w)$ if $w_i \leq w'_i \leq \gamma \cdot w_i$ for every vertex/edge $i$; that is, a $\gamma$-perturbation is an instance obtained from the original one by multiplying each weight by a number from $1$ to $\gamma$ (the number may depend on $i$).

Now, consider an instance ${\calI} = (\X, d)$ of a clustering problem, where $\X$ is a set of points and $d: \X \times \X \to \R_{\geq 0}$ is a metric on $\X$. An instance $(\X, d')$ is a \emph{$\gamma$-perturbation} of $(\X, d)$ if $d(u,v) \leq d'(u,v) \leq \gamma \cdot d(u,v)$ for every $u,v \in \X$; here, $d'$ does not have to be a metric. If, in addition, $d'$ is a metric, then $d'$ is a \emph{$\gamma$-metric perturbation} of $(\X,d)$.

An instance $\calI$ of a graph optimization or clustering problem is \emph{$\gamma$-stable} or \emph{$\gamma$-perturbation-resilient} if it has a unique optimal solution and every $\gamma$-perturbation of $\calI$ has the same unique optimal solution/clustering as $\calI$. We will refer to $\gamma$ as the stability or perturbation resilience parameter.

Adhering to the literature, we call $\gamma$-stable instances of graph partitioning problems ``$\gamma$-Bilu--Linial stable'' or simply ``$\gamma$-stable'' and $\gamma$-stable instances of clustering problems   ``$\gamma$-perturbation-resilient''.
\end{definition}

Note that, in principle, the problem of designing algorithms for stable/perturbation-resilient instances is a promise problem, meaning that a correct algorithm must solve every $\gamma$-stable instance, but, potentially, might return a suboptimal solution, in the case where an instance turns out not to be stable. To address this, Makarychev et al.~\cite{DBLP:conf/soda/MakarychevMV14} introduced the notion of \emph{robust} algorithms for stable instances.
\begin{definition}[robust algorithm~\cite{DBLP:conf/soda/MakarychevMV14}]\label{def:robust}
A robust algorithm for a $\gamma$-stable (or $\gamma$-perturbation-resilient) instance $\calI$  is a polynomial-time algorithm that behaves as follows:
\begin{itemize}
    \item if $\calI$ is $\gamma$-stable, then the algorithm always returns the unique optimal solution.
    \item if $\calI$ is not $\gamma$-stable, then the algorithm either returns an optimal solution or reports that the instance is not $\gamma$-stable.
\end{itemize}
\end{definition}
Observe that, in particular, a robust algorithm is not allowed to err. The robustness property is a very useful property to have, especially when using such algorithms for solving real-life instances, since we do not know whether they are indeed stable or not.

We will now describe the results of previous works in this framework. Bilu and Linial studied Max Cut in their original paper~\cite{DBLP:journals/cpc/BiluL12} and showed that one can solve $O(n)$-stable instances of Max Cut. The \emph{stability threshold}, as we call the (current best) upper bound on the stability parameter $\gamma$, for Max Cut on general graphs was later improved to $O(\sqrt{n})$ by Bilu et al.~\cite{DBLP:conf/stacs/BiluDLS13}, where they also showed that one can optimally solve $(1+\varepsilon)$-stable instances of (everywhere) dense Max Cut. The stability threshold for Max Cut was further improved to $O(\sqrt{\log n} \cdot \log \log n)$ by Makarychev, Makarychev \& Vijayaraghavan in 2014~\cite{DBLP:conf/soda/MakarychevMV14}, and moreover, their algorithm is robust. The latter paper also gave some indications that $\Omega(\sqrt{\log n})$ might be the right answer for the stability threshold for Max Cut. From now on, we will refer to the work of Makarychev et al.~\cite{DBLP:conf/soda/MakarychevMV14} as [MMV14].

One of the main contributions of [MMV14], apart from introducing the notion of robust algorithms, was the introduction of a general technique for designing algorithms for stable instances of optimization problems that we heavily rely on in this thesis. Roughly speaking, [MMV14] introduced some sufficient conditions under which convex relaxations of stable instances are integral. Their result is strong, because it allows for the design of robust algorithms for $\gamma$-stable instances that can be simply stated as follows: solve the convex relaxation, and if it is integral then report solution, otherwise report that the instance is not stable. Using this technique, they proved that the CKR linear programming relaxation (\cite{DBLP:journals/jcss/CalinescuKR00}) for another classic graph partitioning/optimization problem, the (Edge) Multiway Cut problem, is integral for 4-stable instances.

A major contribution towards extending the research agenda proposed by Bilu and Linial was done by Awasthi, Blum \& Sheffet in 2012~\cite{DBLP:journals/ipl/AwasthiBS12}. In that work, the authors extend the definition of stability to clustering problems, and, as mentioned above, in order to make the distinction between standard optimization problems and clustering problems, they use the term \textit{perturbation resilience} to refer to essentially the same notion of stability. The authors then proceed to show that one can solve $3$-perturbation-resilient instances of so-called ``separable center-based" objectives, such as $k$-median, $k$-means and $k$-center. A bit later, Balcan and Liang~\cite{DBLP:journals/siamcomp/BalcanL16} showed that the stability threshold for these problems can be improved to $1+\sqrt{2} \approx 2.414$, and in 2016, Balcan, Haghtalab and White~\cite{DBLP:conf/icalp/BalcanHW16} showed that one can solve $2$-perturbation-resilient instances of both symmetric and asymmetric $k$-center. Moreover, they showed that this threshold of 2 is tight for $k$-center, unless $\NP = \textrm{RP}$. Finally, Makarychev and Makarychev~\cite{DBLP:journals/corr/MakarychevM16} showed that one can also solve 2-perturbation-resilient instances of $k$-median, $k$-means and other ``natural center-based" objectives, a result that was then merged with our results for Multiway Cut and covering problems in a single paper~\cite{DBLP:conf/stoc/AngelidakisMM17}.

Before proceeding to describe our results, we also introduce here one slightly relaxed notion of stability. The definition of Bilu-Linial stability, as given in Definition~\ref{def:perturbation-stability}, is quite strong, in that it imposes a lot of contraints in an instance. For that, [MMV14] also introduced a relaxed notion of stability, that allows the optimal solution to slightly change in a $\gamma$-perturbation. More concretely, they introduced the notion of \emph{weak stability}. The optimal solution of every perturbed instance of a weakly stable instance is close to the optimal solution of the original instance but may not be exactly the same. This is arguably a more realistic assumption than $\gamma$-stability in practice, and, following the techniques of [MMV14], our results in many cases extend to this setting as well. We now give the formal definition of weak stability in the context of graph optimization problems.

\begin{definition}[weak stability~\textrm{[MMV14]}] \label{def:weak-stability}
Let $\mathcal{I} = (G, w)$ be an instance of a graph optimization problem with a set of vertex or edge weights $w_i$, and suppose that it has a unique optimal solution $X^*$. Let $\mathcal{N}$ be a set of feasible solutions that contains $X^*$. We say that the instance $\mathcal{I}$ is $(\gamma, \mathcal{N})$-weakly-stable if for every $\gamma$-perturbation $(G, w')$ and every solution $X' \notin \mathcal{N}$, the solution $X^*$ has a strictly better cost than $X'$ w.r.t.~$w'$.
\end{definition}

Given the above definition, it is clear now that the notion of weak stability indeed generalizes the notion of stability: an instance is $\gamma$-stable if and only if it is $(\gamma, \{X^*\})$-weakly-stable, where $X^*$ is the unique optimal solution. We can think of the set $\mathcal{N}$ in the definition as a neighborhood of the optimal solution $X^*$, i.e.~it contains feasible solutions that are ``close enough" to the optimal one. Intuitively, the definition requires that every solution that is sufficiently different from the optimal solution be significantly worse compared to the optimal solution, but does not impose any restrictions on the solutions that are close enough to the optimal one.

We note here that, when given a $(\gamma, \mathcal{N})$-stable instance, the main task is to recover a solution $X \in \mathcal{N}$ in polynomial time. An interesting fact about the algorithms of [MMV14] (and our algorithms as well) is that the algorithm does not need to know anything about $\mathcal{N}$.

\section{Our results}

\textit{Several results mentioned in this section are based on the following works:
\begin{itemize}
    \item {[AMM17]}: Haris Angelidakis, Konstantin Makarychev, and Yury Makarychev. Algorithms for stable and perturbation-resilient problems. Appeared in STOC 2017~(\cite{DBLP:conf/stoc/AngelidakisMM17}).
    \item {[AMMW18]}: Haris Angelidakis, Konstantin Makarychev, Yury Makarychev, and Colin White. Work in progress~(\cite{AMMW18}).
    \item {[AABCD18]}: Haris Angelidakis, Pranjal Awasthi, Avrim Blum, Vaggos Chatziafratis, and Chen Dan. Bilu-Linial stability and the Independent Set problem. Preprint~(\cite{AABCD18}).
\end{itemize}
}

\vspace{10pt}
In this section, we describe the results that we will present in the next few chapters of this thesis. Starting with the Edge Multiway Cut problem, in [AMM17] we improve the stability threshold of Multiway Cut to $2 - 2/k$, where $k$ is the number of terminals, and we also give a polynomial-time algorithm that, given a $(2 - 2/k + \delta , \mathcal{N})$-weakly-stable instance of Minimum Multiway Cut with integer weights, finds a solution $E' \in \mathcal{N}$ (for every $\delta \geq 1 / \poly(n) > 0$). Moreover, we show a lower bound of $\frac{4}{3 + \frac{1}{k-1}} - \varepsilon$ for the stability threshold for which our current approach fails. Finally, we give the first results for the Node Multiway Cut problem, a strict generalization of the Edge Multiway Cut problem. In particular, we give a robust algorithm for $(k-1)$-stable instances of Node Multiway Cut (and an algorithm for $(k - 1 + \delta, \mathcal{N})$-weakly-stable instances with integer weights). We also utilize a well-known approximation-preserving reduction from Vertex Cover to Node Multiway Cut that, combined with the results of the following chapters, implies strong lower bounds on the existence of robust algorithms for Node Multiway Cut. Detailed presentation and proofs of the results for Multiway Cut can be found in Chapter~\ref{chap:multiway-cut}.

We then turn to standard covering problems. In all the results that follow, $n$ denotes the number of vertices in the graph. In [AMM17], we prove that there are no robust algorithms for $n^{1-\varepsilon}$-stable instances of Vertex Cover (and Independent Set), Set Cover, Min 2-Horn Deletion and Multicut on Trees, unless $\P = \NP$. These hardness results can be found in Chapter~\ref{chap:hardness}. On the positive side, in [AABCD18] we give robust algorithms for $(k - 1)$-stable instances of Independent Set on $k$-colorable graphs, for $(\Delta - 1)$-stable instances of Independent Set on graphs of maximum degree $\Delta$ and for $(1 + \varepsilon)$-stable instances of Independent Set on planar graphs. The algorithm for planar graphs can also be extended to work for $(1 + \varepsilon, \mathcal{N})$-weakly-stable instances with integer vertex weights. We also give a non-robust algorithm for $(\varepsilon n)$-stable instances of Independent Set on general graphs that runs in time $n^{O(1 / \varepsilon)}$. We note here that all results for Independent Set can be applied to the Vertex Cover problem as well, since the two problems are equivalent with respect to exact solvability and the notion of Bilu-Linial stability. All aforementioned results can be found in Chapter~\ref{chap:independent-set}.

In Chapter~\ref{chap:clustering} we initiate the study of convex relaxations for perturbation-resilient clustering. We present a robust algorithm for 2-metric-perturbation-resilient instances of symmetric $k$-center, and also give some non-integrality examples of perturbation-resilient instances for the standard $k$-median LP relaxation.

Finally, in Chapter~\ref{chap:tsp}, inspired by the work of Mihal{\'a}k et al.~\cite{DBLP:conf/sofsem/MihalakSSW11}, we study stable instances of the symmetric Traveling Salesman problem (TSP). In particular, we analyze the ``subtour-elimination" relaxation of TSP and prove that its integrality gap is 1 for 1.8-stable instances of TSP, thus giving a robust analog of the (non-robust) greedy approach of~\cite{DBLP:conf/sofsem/MihalakSSW11} for $1.8$-stable instances of TSP. These results can be found in [AMMW18].

\chapter{Stability and the Multiway Cut problem}\label{chap:multiway-cut}

In this chapter, we present a robust algorithm for $(2 - 2/k)$-stable instances of the Edge Multiway Cut problem and a robust algorithm for $(k - 1)$-stable instances of the Node Multiway Cut problem, where $k$ is the number of terminals. Moreover, following the [MMV14] framework, we give algorithms for weakly stable instances of these problems, with similar guarantees.

\section{The Edge Multiway Cut problem}

We first define the problem.
\begin{definition}[Edge Multiway Cut]
Let $G = (V, E)$ be a connected undirected graph and let $T = \{s_1, ..., s_k\} \subseteq V$ be a set of terminals. In the Edge Multiway Cut problem, we are given a function $w: E \to \R_{>0}$ and the goal is to remove the minimum weight set of edges $E' \subseteq E$ such that in the graph $G' = (V, E \setminus E')$ there is no path between any of the terminals.

Equivalently, the goal is to compute a partition $P_1, ..., P_k$ of the set $V$ such that $s_i \in P_i$, for each $i \in [k]$, $P_i \cap P_j = \emptyset$ for $i \neq j$ and $\bigcup P_i = V$, so as to minimize the weight of cut edges (i.e.~the edges whose endpoints are in different sets of the partition).
\end{definition}

The Multiway Cut problem is one of the very well studied graph partitioning problems. For $k = 2$, the problem is simply the Minimum $s-t$ cut problem, which is solvable in polynomial time. For $k \geq 3$, Dahlhaus et al.~\cite{DBLP:journals/siamcomp/DahlhausJPSY94} showed that the problem is APX-hard and gave a combinatorial $(2 - 2/k)$-approximation algorithm. Major progress in terms of the approximability of the problem was made with the introduction of the so-called CKR linear programming relaxation by C{\u{a}}linescu, Karloff and Rabani~\cite{DBLP:journals/jcss/CalinescuKR00} (see Figure~\ref{fig:CKR-LP}).

\begin{figure}
\begin{align*}
    \text{min}:  \quad & \sum_{e \in E} w_e \cdot d(e)\\
    \text{s.t.}: \quad & d(u,v) = \frac{1}{2} \cdot \|\bar u - \bar v\|_1, &&\textrm{for all } u,v \in V,\\
                       & \sum_{i=1}^k u_i = 1, &&\text{for every } u \in V.\\
                       & \bar{s}_j = e_j, &&\text{for every } j\in [k],\\
                       & u_j \geq 0, &&\text{for every } u \in V \text{ and } j\in [k].
\end{align*}
\caption{The CKR relaxation for Multiway Cut.}
\label{fig:CKR-LP}
\end{figure}

C{\u{a}}linescu, Karloff and Rabani gave a rounding scheme for this LP that yields a $(3/2 - 1/k)$-approximation algorithm for Multiway Cut. Karger et al.~\cite{DBLP:journals/mor/KargerKSTY04} gave improved rounding schemes for the relaxation; for general $k$, they gave a $1.3438$-approximation algorithm, and also pinpointed the integrality gap when $k = 3$; in particular, they gave a $12/11$-approximation algorithm and proved that this is tight by constructing an integrality gap example of ratio $12/11 - \varepsilon$, for every $\varepsilon > 0$. The same result was also independently discovered by Cunningham and Tang~\cite{DBLP:conf/ipco/CunninghamT99}. More recently, Buchbinder et al.~\cite{DBLP:conf/stoc/BuchbinderNS13} gave an elegant $4/3$-approximation algorithm for general $k$ and additionally showed how to push the ratio down to $1.3239$. Their algorithm was later improved by Sharma and Vondr\'{a}k~\cite{DBLP:conf/stoc/SharmaV14} to get an approximation ratio of $1.2965$. This remains the state-of-the-art approximation for sufficiently large $k$. Since the Sharma-Vondr\'{a}k algorithm is quite complicated and requires a computer-assisted proof, Buchbinder et al.~\cite{DBLP:conf/soda/BuchbinderSW17} recently came up with a simplified algorithm and analytically showed that it yielded roughly the same approximation ratio as Sharma and Vondr\'{a}k's.

The CKR relaxation also has a remarkable consequence on the approximability of the problem. Manokaran et al.~\cite{DBLP:conf/stoc/ManokaranNRS08} proved that, assuming the Unique Games Conjecture (UGC), if there exists an instance of Multiway Cut with integrality gap $\tau$ for the CKR relaxation, then it is \NP-hard to approximate Multiway Cut to within a factor of $\tau - \varepsilon$ of the optimum, for every constant $\varepsilon > 0$. Roughly speaking, Manokaran et al.~'s result means that, if one believes in the UGC, the CKR relaxation achieves essentially the best approximation ratio one can hope to get in polynomial time for Multiway Cut. Despite this strong connection, few lower bounds for the CKR relaxation are known. Apart from the aforementioned $(12 / 11 - \varepsilon)$ integrality gap for $k = 3$ by Karger et al.~\cite{DBLP:journals/mor/KargerKSTY04} and Cunningham and Tang~\cite{DBLP:conf/ipco/CunninghamT99}, the only other known lower bound until recently was an $8/\left(7 + \frac{1}{k - 1} \right)$-integrality gap which was constructed by Freund and Karloff~\cite{DBLP:journals/ipl/FreundK00} not long after the introduction of the CKR relaxation. Recently, in a joint work with Yury Makarychev and Pasin Manurangsi~\cite{DBLP:conf/ipco/AngelidakisMM17}, we gave an improved lower bound of $6/ \left(5 + \frac{1}{k - 1} \right) - \varepsilon$, for every constant $\varepsilon > 0$.

In the Bilu-Linial stability framework, an instance $G = (V, E, w)$ of Multiway Cut with terminal set $T \subseteq V$ is $\gamma$-stable if the instance has a unique optimal partition $(P_1, ..., P_k)$ and every $\gamma$-perturbation has the same unique optimal partition. Equivalently, the instance is $\gamma$-stable if it has a unique optimal solution $E^* \subseteq E$ and every $\gamma$-perturbation has the same unique optimal solution $E^*$. As was proved in the original paper of Bilu and Linial for Max Cut, one can equivalently write the definition of stability as follows.
\begin{definition}[stability~\cite{DBLP:journals/cpc/BiluL12}]\label{def:equiv-stab}
Let $G = (V, E, w)$ be an instance of Multiway Cut with terminal set $T \subseteq V$. The instance is $\gamma$-stable if and only if it has a unique optimal solution $E^* \subseteq E$ and for any feasible solution $E' \neq E^*$ we have
\begin{equation*}
    \gamma \cdot w(E^* \setminus E') < w(E' \setminus E^*).
\end{equation*}
\end{definition}
It is easy to see that the above definition of stability is equivalent to the original definition, as it considers the ``worst-case" perturbation that increases the edges cut by the optimal solution by a factor of $\gamma$.

In [MMV14], the authors prove a very interesting structural result about convex relaxations and their performance on stable instances, one of them being the CKR relaxation, as given in Figure~\ref{fig:CKR-LP}. In particular, they prove the following theorem, which we present here in the context of Multiway Cut.
\begin{theorem}[\textrm{[MMV14]}]\label{thm:MMV-original}
Let $G = (V, E, w)$ be an instance of Multiway Cut with terminal set $T \subseteq V$. Suppose that we are given a convex relaxation for Multiway Cut that assigns length $d(e) \in [0,1]$ to every edge $e \in E$ and its objective function is $\sum_{e \in E} w_e \cdot d(e)$. Let $d$ an be optimal fractional solution, and suppose that there exists a randomized rounding scheme that, for some $\alpha, \beta \geq 1$, always returns a feasible solution $E' \subseteq E$ such that for each edge $e \in E$, the following two conditions hold:
\begin{enumerate}
    \item $\Pr[e \textrm{ is cut}] \leq \alpha \cdot d(e)$ \quad\quad\quad\quad\;\;\;\;\quad\quad(approximation condition)
    \item $\Pr[e \textrm{ is not cut}] \geq 1/\beta \cdot (1 - d(e))$ \quad\quad(co-approximation condition)
\end{enumerate}
Then, the relaxation is integral for $(\alpha\beta)$-stable instances; in particular the relaxation has a unique optimal solution that assigns length 1 to every edge that is cut in the unique optimal integral solution, and 0, otherwise.
\end{theorem}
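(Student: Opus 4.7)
The plan is to prove the contrapositive: if an LP solution $d$ differs from the $0/1$ indicator $d^*$ of the unique integer optimum $E^*$, then it must have strictly larger objective, so $d^*$ is the unique optimum. Fix any $d \neq d^*$ with $\sum_e w_e d(e) \leq \sum_e w_e d^*(e) = w(E^*)$, and let $E'$ denote the (random) output of the given rounding applied to $d$. The strategy is to upper bound $\mathbb{E}[w(E' \setminus E^*)]$ and lower bound $\mathbb{E}[w(E^* \setminus E')]$ using the two rounding hypotheses, stitch them together via the LP-value hypothesis, and then reach a contradiction via the stability definition (Definition~\ref{def:equiv-stab}).

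Executing this: on edges $e \notin E^*$ we have $d^*(e)=0$, so linearity together with the approximation condition gives
\begin{equation*}
\mathbb{E}[w(E' \setminus E^*)] \;=\; \sum_{e \notin E^*} w_e \Pr[e \in E'] \;\leq\; \alpha \sum_{e \notin E^*} w_e\, d(e),
\end{equation*}
while on $E^*$ we have $d^*(e)=1$, so the co-approximation condition yields
\begin{equation*}
\mathbb{E}[w(E^* \setminus E')] \;\geq\; \frac{1}{\beta}\sum_{e \in E^*} w_e \bigl(1 - d(e)\bigr).
\end{equation*}
Rearranging the LP-value hypothesis $\sum_e w_e d(e) \leq \sum_{e \in E^*} w_e$ produces $\sum_{e \notin E^*} w_e d(e) \leq \sum_{e \in E^*} w_e(1 - d(e))$, and combining all three inequalities gives the key bound
\begin{equation*}
\mathbb{E}[w(E' \setminus E^*)] \;\leq\; \alpha\beta \cdot \mathbb{E}[w(E^* \setminus E')].
\end{equation*}

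The finale splits into two cases. If $\Pr[E' \neq E^*] > 0$, condition on that event; Definition~\ref{def:equiv-stab} says that for every feasible $E' \neq E^*$ we have the strict pointwise inequality $w(E' \setminus E^*) > \alpha\beta \cdot w(E^* \setminus E')$, which integrates to a strict inequality on the conditional expectations and contradicts the displayed bound. If instead $\Pr[E' = E^*] = 1$, then $\mathbb{E}[w(E^* \setminus E')] = 0$, so the co-approximation bound forces $\sum_{e \in E^*} w_e(1 - d(e)) = 0$; positivity of weights gives $d(e) = 1$ on every $e \in E^*$, and then $\sum_{e \in E^*} w_e d(e) = w(E^*)$ together with the LP-value hypothesis forces $d(e) = 0$ on every $e \notin E^*$. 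Hence $d = d^*$, contradicting the assumption $d \neq d^*$. Either way a contradiction is reached, establishing that $d^*$ is the unique LP optimum and hence integral.

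The main obstacle is the degenerate case where the rounding happens to collapse onto $E^*$ deterministically for some non-integral $d$; this is what forces the second case above and requires using \emph{both} halves of the rounding guarantee together with LP-optimality to pin down $d$ coordinate by coordinate. The rest is linearity of expectation plus a direct averaging argument, and the strict inequality in the Bilu--Linial definition of stability is exactly what we need to rule out the generic case where $E'$ differs from $E^*$ with positive probability.
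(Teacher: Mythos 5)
Your proof is correct and follows essentially the same route as the paper's argument (the paper defers this exact statement to [MMV14] but reproduces the identical argument in proving the $\varepsilon$-local variant, Theorem~\ref{thm:stable-almost-integral}): linearity of expectation, the approximation and co-approximation conditions applied edge-by-edge on $E^*$ and its complement, the LP-value inequality to stitch them together, and the expectation form of the Bilu--Linial stability inequality to reach a contradiction. The only cosmetic difference is the handling of the degenerate case: the paper notes up front that non-integrality of $d$ together with LP-optimality forces some $e \in E^*$ with $d(e) < 1$, hence $\Pr[E' \neq E^*] > 0$ by the co-approximation condition, whereas you run a separate case analysis that pins down $d = d^*$ when the rounding is deterministic --- both are valid.
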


In [MMV14] it is observed  that it is highly non-trivial to satisfy both properties, and most rounding schemes for the CKR relaxation indeed do not satisfy both properties. However, they prove that the Kleinberg-Tardos rounding scheme~\cite{DBLP:journals/jacm/KleinbergT02} does satisfy both properties with $\alpha\beta = 4$. Thus, the CKR relaxation is integral for $4$-stable instances. We note here that the above theorem suggests a very simple algorithm: solve the relaxation, and if it is integral, then report the solution, otherwise report that the instance is not stable. In particular, the rounding scheme is not part of the algorithm but is only used in the analysis.

From now one, we call a rounding scheme that satisfies the above properties an \emph{$(\alpha, \beta)$-rounding}. Since the rounding scheme required by the above theorem is only needed in the analysis and is not part of the actual algorithm, in [AAM17] we observe that one can reprove the theorem by designing a rounding scheme only for fractional solutions that satisfy certain properties. In particular, we will consider ``almost integral" fractional solutions and we will design improved rounding schemes for such solutions. We reprove the theorem of [MMV14] here by incorporating the above observation. To do that, we first need two definitions.

\begin{definition}
Let $G = (V,E, w)$, $T \subseteq V$, be an instance of Multiway Cut. Fix $\varepsilon > 0$. We say that a feasible solution $\{(u_1, ..., u_k)\}_{u \in V}$ of the CKR LP relaxation for the instance $(G, T)$ is $\varepsilon$-close to an integral solution if $u_i \in [0, \varepsilon] \cup [1 - \varepsilon, 1]$ for every $u \in V$, $i \in [k]$.
\end{definition}

\begin{definition}[$\varepsilon$-local $(\alpha,\beta)$-rounding]
A randomized rounding scheme for $\varepsilon$-close solutions of the CKR relaxation is an $\varepsilon$-local $(\alpha,\beta)$-rounding, for some $\alpha,\beta \geq 1$, if, given a feasible solution $\{(u_1, ..., u_k)\}_{u \in V}$ with $u_i \in [0, \varepsilon] \cup [1 - \varepsilon, 1]$ for every $u \in V$, $i \in [k]$, it always returns a feasible solution $E' \subseteq E$ such that for each edge $e \in E$, the following two conditions hold:
\begin{enumerate}
    \item $\Pr[e \textrm{ is cut}] \leq \alpha \cdot d(e)$,
    \item $\Pr[e \textrm{ is not cut}] \geq 1/\beta \cdot (1 - d(e))$
\end{enumerate}
\end{definition}

\begin{theorem}\label{thm:stable-almost-integral}
Suppose that there exists an $\varepsilon$-local $(\alpha,\beta)$-rounding for the CKR LP relaxation, for some $\varepsilon > 0$. Then, the relaxation is integral for $(\alpha\beta)$-stable instances; in particular the relaxation has a unique optimal solution that assigns length 1 to every edge that is cut in the unique optimal integral solution, and 0, otherwise.
\end{theorem}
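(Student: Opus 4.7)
The plan is to mimic the [MMV14] proof of Theorem~\ref{thm:MMV-original}, but to apply the given $\varepsilon$-local rounding to a carefully chosen convex combination of an arbitrary fractional LP optimum with the integral optimum $\chi^*$ (corresponding to $E^*$), rather than to the fractional optimum itself. The key observation is that for small enough interpolation weight this combination is automatically $\varepsilon$-close to integral, so the local rounding applies; and the factor of $\lambda$ introduced by the interpolation luckily cancels on both sides of the main inequality.

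First I would fix an arbitrary optimal LP solution $d^*$ with labels $u^*$, let $c^*$ denote the integral labels corresponding to $E^*$ (so that $c^*_u = e_i$ iff $u \in P^*_i$), and define $\tilde{d}_\lambda$ to have labels $\tilde{u} = (1-\lambda)c^*_u + \lambda u^*$ with $\lambda \in (0,\varepsilon]$. A direct check shows that $\tilde{u}_i \geq 1-\varepsilon$ when $c^*_u = e_i$ and $\tilde{u}_j \leq \varepsilon$ otherwise, so $\tilde{u}$ is $\varepsilon$-close to integral; by convexity of the LP feasible region, $\tilde{d}_\lambda$ is feasible. By convexity of $\|\cdot\|_1$, the induced edge lengths satisfy $\tilde{d}_\lambda(e) \leq (1-\lambda)\chi^*(e) + \lambda d^*(e)$, with equality $\tilde{d}_\lambda(e) = \lambda d^*(e)$ for $e \notin E^*$ (since the endpoints share the same $c^*$-label).

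Next I would apply the $\varepsilon$-local $(\alpha,\beta)$-rounding to $\tilde{d}_\lambda$ to obtain a random cut $E'$. Plugging the above upper and lower bounds on $\tilde{d}_\lambda(e)$ into the approximation and co-approximation conditions, and using $\gamma = \alpha\beta$, a brief calculation yields
\[
\mathbb{E}\bigl[\gamma\, w(E^* \setminus E') - w(E' \setminus E^*)\bigr] \;\geq\; \alpha\lambda\bigl(w(E^*) - \mathrm{LP}(d^*)\bigr) \;\geq\; 0,
\]
where the last inequality holds because the LP is a relaxation. On the other hand, by Definition~\ref{def:equiv-stab} of $\gamma$-stability, the random variable inside this expectation is at most $0$ deterministically, with equality exactly when $E' = E^*$. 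The two bounds together force $\Pr[E' = E^*] = 1$.

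Finally I would extract integrality of $d^*$. For $e \in E^*$, conditioning on $\{E' = E^*\}$ the co-approximation condition gives $\tilde{d}_\lambda(e) = 1$, and then the convexity bound $\tilde{d}_\lambda(e) \leq (1-\lambda) + \lambda d^*(e)$ forces $d^*(e) = 1$. The remaining edges $e \notin E^*$ are handled via LP optimality: the chain $\mathrm{LP}(d^*) \leq w(E^*) = \sum_{e \in E^*} w_e d^*(e) \leq \mathrm{LP}(d^*)$ collapses, forcing $d^*(e) = 0$ for $e \notin E^*$. Since $d^*$ was arbitrary, this proves both the claimed integrality and the uniqueness of the LP optimum on edge lengths. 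The main obstacle is purely conceptual, namely finding a fractional point that is simultaneously (i) $\varepsilon$-close to integral (so the local rounding is available), (ii) useful enough to constrain $d^*$ at the end, and (iii) compatible with the [MMV14]-style cancellation; the convex combination with $\chi^*$ weighted by $\lambda \leq \varepsilon$ satisfies all three demands and the overall $\lambda$ factor drops out of the decisive inequality.
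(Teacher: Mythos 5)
Your proposal is correct and follows essentially the same approach as the paper: both interpolate the integral optimum with a fractional LP optimum using weight at most $\varepsilon$ to obtain an $\varepsilon$-close feasible point, apply the local $(\alpha,\beta)$-rounding, and play the approximation/co-approximation conditions against the stability inequality. The only (minor) difference is in the endgame — the paper assumes a non-integral optimum and derives a contradiction with LP optimality, whereas you argue directly that the rounding must output $E^*$ with probability $1$ and then read off $d^*(e)=1$ on $E^*$ via the co-approximation condition and $d^*(e)=0$ elsewhere via LP optimality — which is a perfectly valid, slightly more explicit way to reach the same conclusion.
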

\begin{proof}
Let $G = (V, E, w)$ be an $(\alpha\beta)$-stable instance of Multiway Cut with terminals $T \subseteq V$, and let $E^* \subseteq E$ be its unique optimal solution. We denote $w(E^*) = OPT$. Let $\{\bar{u}^{OPT}\}_{u \in V}$ be the CKR solution corresponding to $E^*$ and $d^{OPT}$ be the resulting distance function. Let's assume now that the relaxation is not integral for $(G, T)$, which means that there exists a non-integral optimal solution $\{\bar{u}^{FRAC}\}_{u \in V}$ with corresponding distance function $d^{FRAC}$ such that $OPT_{LP} = \sum_{e \in E} w_e \cdot d^{FRAC}(e) \leq w(E^*)$. We now define $\bar{u}^{(\varepsilon)} = (1 - \varepsilon) \cdot \bar{u}^{OPT} + \varepsilon \cdot \bar{u}^{FRAC}$, for each $u \in V$. Clearly, $\bar{u}^{(\varepsilon)}$ is also non-integral, and by convexity, is a feasible solution. Let $d^{(\varepsilon)}(u,v) = \frac{1}{2} \sum_{i = 1}^k \left\|\bar{u}^{(\varepsilon)} - \bar{v}^{(\varepsilon)} \right\|_1$, for every $u, v \in V$.

We first prove that $\sum_{e \in E} w_e \cdot d^{(\varepsilon)}(e) \leq OPT$. From the subbaditivity of the $\ell_1$ norm, we get
\begin{align*}
    d^{(\varepsilon)}(u,v) &= \frac{1}{2} \left\|\bar{u}^{(\varepsilon)} - \bar{v}^{(\varepsilon)} \right\|_1 \leq \frac{1 - \varepsilon}{2} \cdot \left\|\bar{u}^{OPT} - \bar{v}^{OPT} \right\|_1 + \frac{\varepsilon}{2} \cdot \left\|\bar{u}^{FRAC} - \bar{v}^{FRAC} \right\|_1 \\
                           &= (1 - \varepsilon) \cdot d^{OPT}(u,v) + \varepsilon \cdot d^{FRAC}(u,v).
\end{align*}
Thus, we get that
\begin{equation*}
    \sum_{e \in E} w_e \cdot d^{(\varepsilon)}(e) \leq (1 - \varepsilon) \cdot OPT + \varepsilon \cdot OPT_{LP} \leq OPT.
\end{equation*}

We now apply the $\varepsilon$-local $(\alpha, \beta)$-rounding to $\{\bar{u}^{(\varepsilon)}\}_{u \in V}$ and $d^{(\varepsilon)}$ and obtain a feasible solution $E' \subseteq E$. Observe that $d^{(\varepsilon)}$ is non-integral, and thus there exists at least one edge $e \in E^*$ such that $d'(e) < 1$ and an edge $e' \in E \setminus E^*$ such that $d'(e') > 0$. This implies that $\Pr[E' \neq E^*] > 0$.

By the definition of Bilu-Linial stability (see Definition~\ref{def:equiv-stab}), in the case where $E' \neq E^*$, we have $(\alpha \beta) \cdot w(E^* \setminus E') < w(E' \setminus E^*)$. By monotonicity of expectation, this implies that
\begin{equation*}
    \E[(\alpha\beta) \cdot w(E^* \setminus E')] < \E[w(E' \setminus E^*)].
\end{equation*}
We now expand each term. We have
\begin{equation*}
    \E[w(E' \setminus E^*)] = \sum_{e \in E \setminus E^*} w_e \Pr[e \in E'] \leq \alpha \sum_{e \in E \setminus E^*} w_e \cdot d^{(\varepsilon)}(e),
\end{equation*}
and
\begin{equation*}
    \E[(\alpha\beta) \cdot w(E^* \setminus E')] = (\alpha\beta) \cdot \sum_{e \in E^*} w_e \Pr[u \notin E'] \geq \alpha \cdot \sum_{u \in E^*} w_e (1 - d^{(\varepsilon)}(e)).
\end{equation*}
Putting things together, we get that
\begin{equation*}
    \sum_{u \in E^*} w_e (1 - d^{(\varepsilon)}(e)) < \sum_{e \in E \setminus E^*} w_e d^{(\varepsilon)}(e),
\end{equation*}
which gives $w(E^*) < \sum_{e \in E} w_e d^{(\varepsilon)}(e)$. Thus, we get a contradiction.
\end{proof}

The above theorem implies that it is sufficient to design a rounding scheme that satisfies the desired properties and works only for ``almost-integral" fractional solutions. We do this in the next section.

%%%%%%%%%%%%%%%%%%%%%%%%%%%%%%%%%%%%%%%%%%%%%%%%%%%%%%%%%%%%%%%%%%%%%%%%%%%%%%%
\subsection{An improved analysis of the CKR relaxation on stable instances}

Let $G = (V, E, w)$ be an instance of Multiway Cut with terminal set $T = \{s_1, ..., s_k\} \subseteq V$. We now present an $\varepsilon$-local $(\alpha, \beta)$-rounding with $\alpha \beta = 2 - 2 / k$ and $\varepsilon = 1 / (10 k)$. Since the LP solutions we consider are $\varepsilon$-close to an integral one, for every vertex $u \in V$ there exists a unique $j\in [k]$ such that $d(u, s_j) \leq \varepsilon$. We denote this $j$ by $j(u)$. Note that, in particular, $u_{j(u)} \geq 1- \varepsilon$ and $u_{j'} \leq \varepsilon$ for $j' \neq j(u)$. We now present our rounding scheme (see Algorithm~\ref{alg:epsilon-rounding-mc}).

\begin{algorithm}[h]
\begin{enumerate}
    \item Let $p = 1/k$, $\theta = 6/(5k) $ (note that $\theta > \varepsilon$).
    \item Choose $r\in (0,\theta)$ uniformly at random.\\
    \item Choose $i\in \{1,\dots, k\}$ uniformly at random.\\
    \item With probability $p$ apply rule \textbf{A} to every $u \in V$; with probability~$1-p$  apply rule \textbf{B}\\ to every $u \in V$:\\
            \quad\quad rule \textbf{A}: if $u_{j(u)} \geq 1 - r$, add $u$ to $P_{j(u)}$; otherwise, add $u$ to $P_i$\\
            \quad\quad rule \textbf{B}: if $u_{i} < r$, add $u$ to $P_{j(u)}$; otherwise, add $u$ to $P_i$
    \item Return partition $P=(P_1, \dots, P_k)$.
\end{enumerate}
\caption{An $(\alpha, \beta)$-rounding for ``almost integral" fractional solutions of Edge Multiway Cut}
\label{alg:epsilon-rounding-mc}
\end{algorithm}

\begin{theorem}\label{thm:rounding}
Algorithm~\ref{alg:epsilon-rounding-mc} is an $\varepsilon$-local $(\alpha, \beta)$-rounding for the CKR relaxation of Multiway Cut with $\alpha \beta = 2 -2/k$ and $\varepsilon = 1/(10 k)$. Given a solution $\varepsilon$-close to an integral one, the algorithm runs in polynomial time and generates a distribution of multiway cuts with a domain of polynomial size.
\end{theorem}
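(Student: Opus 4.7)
The plan is to analyze each edge $e=(u,v)\in E$ separately and verify the two $(\alpha,\beta)$-rounding conditions with $\alpha=5(k-1)/(3k)$ and $\beta=6/5$, so that $\alpha\beta=2(k-1)/k=2-2/k$ (the trivial case $k=2$, where the bound is $1$, is handled separately since two-terminal Multiway Cut is just minimum $s$-$t$ cut). The key structural observation driving everything is that because the LP solution is $\varepsilon$-close to integral, for every vertex $u\in V$ there is a unique index $j(u)\in[k]$ with $u_{j(u)}\geq 1-\varepsilon$ and $u_\ell\leq\varepsilon$ for $\ell\neq j(u)$. Every edge $(u,v)$ then falls into exactly one of two qualitatively different cases, which are handled separately.

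In the first case, $j(u)=j(v)=j$, so both endpoints are close to the same terminal and $d(u,v)\leq\varepsilon$. Writing $a=1-u_j$ and $b=1-v_j$, a direct enumeration shows that under rule A the edge is cut precisely when $r\in[\min(a,b),\max(a,b))$ \emph{and} the random index $i$ is different from $j$, while under rule B it is cut precisely when $i\neq j$ \emph{and} $r\in(\min(u_i,v_i),\max(u_i,v_i)]$ (using that $u_j,v_j\geq 1-\varepsilon>\theta>r$ forces $u,v$ to $P_j$ whenever $i=j$). Summing over $i$, combining the two rules with weights $p=1/k$ and $1-p$, and using $|a-b|+\sum_{\ell\neq j}|u_\ell-v_\ell|=2d(u,v)$, one obtains
\begin{equation*}
    \Pr[\text{cut}]\;=\;\frac{2(k-1)}{k^{2}\theta}\,d(u,v)\;=\;\frac{5(k-1)}{3k}\,d(u,v),
\end{equation*}
establishing the approximation condition in this case. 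For the co-approximation condition, since $d(u,v)\leq\varepsilon=1/(10k)$, the needed inequality $1-\tfrac{5(k-1)}{3k}d\geq\tfrac{5}{6}(1-d)$ reduces to $d\leq k/(5(k-2))$, which is trivially satisfied.

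In the second case, $j(u)\neq j(v)$; WLOG $j(u)=1$, $j(v)=2$, and $d(u,v)\geq 1-2\varepsilon$ is close to $1$. Enumerating over $i\in[k]$ under each rule, the cleanest piece of the computation will give
\begin{equation*}
    \Pr[\text{not cut under rule B}]\;=\;\frac{v_{1}+u_{2}+\sum_{\ell\geq 3}\min(u_{\ell},v_{\ell})}{k\theta}\;=\;\frac{1-d(u,v)}{k\theta},
\end{equation*}
the last equality following from the direct expansion of $d(u,v)=\tfrac{1}{2}\|\bar u-\bar v\|_{1}$. The analogous rule A calculation yields ``not cut'' probability $[a+b+(k-2)\min(a,b)]/(k\theta)$, where $a=\sum_{\ell\neq 1}u_\ell$ and $b=\sum_{\ell\neq 2}v_\ell$. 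Combining these with weights $p$ and $1-p$ and using the easy inequality $a+b\geq 1-d$, one obtains
\begin{equation*}
    \Pr[\text{not cut}]\;\geq\;\frac{1-d(u,v)}{k\theta}\;=\;\frac{5}{6}(1-d(u,v)),
\end{equation*}
establishing the co-approximation condition with $\beta=6/5$. The remaining ``cross'' condition --- the approximation condition in this case --- is handled by noting that $\Pr[\text{cut}]\leq 1$ while $\alpha d(u,v)\geq\tfrac{5(k-1)}{3k}(1-2\varepsilon)=\tfrac{(k-1)(5k-1)}{3k^{2}}$, which exceeds $1$ for every $k\geq 3$ (equivalently $2k^{2}-6k+1\geq 0$) with $\varepsilon=1/(10k)$.

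The main obstacle is precisely this balancing act: the parameters $\theta=6/(5k)$ and $\varepsilon=1/(10k)$ must be chosen so that simultaneously (i) $\Pr[\text{cut}]/d$ is small in Case 1, (ii) $\Pr[\text{not cut}]/(1-d)$ is large in Case 2, and (iii) the two boundary inequalities in the opposite direction hold using the extreme bounds $d\leq\varepsilon$ in Case 1 and $d\geq 1-2\varepsilon$ in Case 2. The rule B enumeration in the $j(u)\neq j(v)$ case is the most delicate step, since the behavior depends qualitatively on whether $i\in\{1,2\}$ or $i\geq 3$, and getting the clean identity $\Pr[\text{not cut under B}]=(1-d)/(k\theta)$ requires careful bookkeeping. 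Once these computations are done, the running-time and polynomial-support-size claims are immediate: the algorithm uses only one random rule choice, one random index $i\in[k]$, and one random real $r\in(0,\theta)$ which can be discretized to the polynomially many breakpoints among the values $\{u_\ell,v_\ell\}$.
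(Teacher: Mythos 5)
Your proposal is correct and follows essentially the same route as the paper's proof: the same case split on $j(u)=j(v)$ versus $j(u)\neq j(v)$, the same exact computation of the cut probability in the first case, the same rule-A/rule-B enumeration together with the bound $a+b\geq 1-d$ in the second, and the same trivial arguments ($d\leq O(\varepsilon)$, resp.\ $\alpha d\geq 1$) for the two remaining directions. The only nitpicks are that in Case 1 the triangle inequality gives $d(u,v)\leq 2\varepsilon$ rather than $\varepsilon$ (harmless, since $2\varepsilon$ still clears your threshold $k/(5(k-2))$), and that feasibility of the output --- each terminal $s_t$ always lands in $P_t$ under both rules --- should be checked explicitly.
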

\begin{proof}
First, we show that the algorithm returns a feasible solution. To this end, we prove that the algorithm always adds $u = s_t$ to $P_t$. Note that $j(u) = t$. If the algorithm uses rule \textbf{A}, then $u_{j(u)} = 1 > 1 - r$, and thus it adds $u$ to $P_{j(u)} = P_t$. If the algorithm uses rule \textbf{B}, then $u_{i} \geq r$ only when $i = j(u)$; thus the algorithm adds $u$ to $P_{j(u)}= P_t$, as required.

Let
\begin{equation*}
    \alpha = \frac{2 (k-1)}{k^2 \theta} = \frac{5}{3} \Bigl(1 - \frac{1}{k}\Bigr) \qquad  \text{and} \qquad \beta = k\theta = \frac{6}{5}.
\end{equation*}
We will show now that the rounding scheme satisfies the approximation and co-approximation conditions with parameters $\alpha$ and $\beta$. Consider two vertices $u$ and $v$. Let $\Delta = d(u,v)$. We verify that the approximation condition holds for $u$ and $v$. There are two possible cases: $j(u) = j(v)$ or $j(u) \neq j(v)$. Consider the former case first. Denote $j = j(u) = j(v)$. Note that $P(u) \neq P(v)$ if and only if one of the vertices is added to $P_{i}$ and the other to $P_{j}$, and $i\neq j$. Suppose first that rule \textbf{A} is applied. Then, $P(u) \neq P(v)$ exactly when $1-r \in (\min(u_j, v_j),\max(u_j, v_j)]$ and $i\neq j$.
The probability of this event (conditioned on the event that rule \textbf{A} is applied) is
\begin{align*}
    \Pr[i\neq j] \cdot \Pr \left[1 - r \in (\min(u_j, v_j),\max(u_j, v_j)] \right] &= \frac{k-1}{k} \cdot\frac{\max(u_j, v_j) -\min(u_j, v_j)}{\theta}\\
                                                                                   &= \frac{k-1}{k} \cdot\frac{|u_j - v_j|}{\theta}.
\end{align*}
(here we used that $\max(u_j, v_j) \geq 1 - \varepsilon > 1- \theta$). Now suppose that rule \textbf{B} is applied. Then, we have $P(u) \neq P(v)$ exactly when
$r\in (\min(u_i, v_i),\max(u_i, v_i)]$ and $i \neq j$. The probability of this event (conditioned on the event that rule \textbf{B} is used) is
\begin{equation*}
\frac{1}{k}\sum_{i:i\neq j} \Pr[r\in (\min(u_i, v_i),\max(u_i, v_i)]] = \frac{1}{k}\sum_{i:i\neq j} \frac{|u_i - v_i|}{\theta},
\end{equation*}
where again we use the fact that $\varepsilon < \theta$. Thus,
\begin{align*}
    \Pr[P(u) \neq P(v)] &= p \cdot \frac{k-1}{k} \cdot \frac{|u_j - v_j|}{\theta}  + (1-p) \cdot \frac{1}{k} \cdot \sum_{i:i\neq j} \frac{|u_i - v_i|}{\theta}\\
                        &= \frac{k-1}{k^2 \theta} \sum_{i \in [k]} |u_i - v_i| = \frac{2(k-1)}{k^2 \theta} \Delta = \alpha \Delta.
\end{align*}

Now consider the case when $j(u) \neq j(v)$. Then, the approximation condition holds simply because $\Pr[P(u) \neq P(v)] \leq 1$ and $\alpha \Delta \geq 1$. Namely, we have
\begin{equation*}
    \Delta = d(u, v) \geq d(s_{j(u)}, s_{j(v)}) - d(u, s_{j(u)}) - d(v, s_{j(v)}) \geq 1 - 2\varepsilon \geq 1 - 2/30 = 14/15,
\end{equation*}
and $\alpha \geq \frac{5}{3}\left(1 - \frac{1}{3}\right) = 10/9$; thus, $\alpha \Delta \geq (10/9) \times (14/15) > 1$.

Let us verify that the co-approximation condition holds for $u$ and $v$. Assume first that $j(u) = j(v)$. Let $j= j(u) = j(v)$. Then, $\Delta= d(u,v) \leq d(u, s_j) + d(v, s_j) \leq 2 \varepsilon \leq 1/15$. As we showed, $\Pr[P(u) \neq P(v)] \leq \alpha \Delta$. This implies that $\Pr[P(u) = P(v)] \geq 1 - \alpha \Delta \geq \beta^{-1} (1 - \Delta)$, where the last bound follows from  the following inequality $\frac{1-\beta^{-1}}{\alpha -\beta^{-1}} \geq \frac{1/6}{5/3 - 5/6} = \frac{1}{5} \geq \Delta$.

Assume now that $j(u) \neq j(v)$. Without loss of generality, we assume that $u_{j(u)} \leq v_{j(v)}$. Suppose that rule \textbf{A} is applied. Event $P(u) = P(v)$ happens in the following disjoint cases:
\begin{enumerate}
    \item $u_{j(u)} \leq v_{j(v)} < 1 - r$ (then both $u$ and $v$ are added to $P_i$);
    \item $u_{j(u)} < 1 - r \leq v_{j(v)}$ and $i = j(v)$.
\end{enumerate}
The probabilities that the above happen are $(1 - v_{j(v)})/\theta$ and $(v_{j(v)} - u_{j(u)})/\theta \times (1/k)$, respectively. Note that $d_u \equiv d(u, s_{j(u)}) = \frac{1}{2}\left(1 - u_{j(u)} + \sum_{t:t\neq j(u)} u_t\right) = 1 - u_{j(u)}$, since we have $\sum_{t:t\neq j(u)} u_t = 1 - u_{j(u)}$. Similarly, $d_v \equiv d(v, s_{j(v)}) =  1 - v_{j(v)}$.
We express the total probability that one of the two cases happens in terms of $d_u$ and $d_v$ (using that $\Delta \geq d(s_{j(u)}, s_{j(v)}) - d_u - d_v = 1 - d_u - d_v$):
\begin{equation*}
    \left(d_v + \frac{d_u - d_v}{k}\right) \cdot \frac{1}{\theta} = \frac{(k-1)d_v + d_u}{\theta k} \geq \frac{d_u + d_v}{\theta k} \geq \frac{1- \Delta}{\theta k} = \beta^{-1} (1-\Delta).
\end{equation*}

Now, suppose that rule \textbf{B} is applied. Note that if $u_i \geq r$ and $v_i \geq r$, then both $u$ and $v$ are added to $P_i$, and thus $P(u) = P(v)$. Therefore,
\begin{align*}
    \Pr[P(u) = P(v) | \text{ rule \textbf{B}}] &\geq \Pr[u_i \geq r ,\  v_i \geq r] =\frac{1}{k} \sum_{i=1}^k \frac{\min(u_i, v_i)}{\theta} = \frac{1}{k\theta} \sum_{i=1}^k \frac{u_i+v_i - |u_i - v_i|}{2} \\
                                               &= \frac{1}{k\theta} (1 - \Delta) = \beta^{-1} (1 - \Delta).
\end{align*}
We conclude that
\begin{equation*}
    \Pr[P(u) = P(v)] \geq p \cdot \beta^{-1} (1 - \Delta) + (1-p) \cdot \beta^{-1} (1 - \Delta) = \beta^{-1} (1 - \Delta).
\end{equation*}
We have verified that both conditions hold for $\alpha = 2 (k-1)/(k^2 \theta)$ and $\beta = k\theta$. As required, $\alpha \beta = 2 - 2/k$.

The algorithm clearly runs in polynomial-time. Since the algorithm generates only two random variables $i$ and $r$, and additionally makes only one random decision, the size of the distribution of $P$ is at most $2\times k \times (nk) = 2k^2 n$.
\end{proof}

From Theorems~\ref{thm:stable-almost-integral} and~\ref{thm:rounding} we get the main theorem of this section.
\begin{theorem}
The optimal LP solution for a $(2-2/k)$-stable instance of Minimum Multiway Cut is integral. Consequently, there is a robust polynomial-time algorithm for solving $(2 - 2 / k)$-stable instances.
\end{theorem}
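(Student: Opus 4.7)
The plan is to combine Theorems~\ref{thm:stable-almost-integral} and~\ref{thm:rounding} in a straightforward manner. Theorem~\ref{thm:rounding} supplies an $\varepsilon$-local $(\alpha,\beta)$-rounding with parameters $\varepsilon = 1/(10k)$ and $\alpha\beta = 2-2/k$. Theorem~\ref{thm:stable-almost-integral} then states that the existence of such a rounding forces every optimal CKR solution on an $(\alpha\beta)$-stable instance to be integral, with length~$1$ on precisely the edges cut by the unique integral optimum and length~$0$ elsewhere. Chaining these two facts yields the first sentence of the theorem.

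For the algorithmic consequence, I would describe the following robust procedure: given an instance $(G,w,T)$, solve the CKR LP in Figure~\ref{fig:CKR-LP} in polynomial time (e.g.\ via the equivalent LP formulation one obtains by linearising the $\ell_1$ distances). Inspect the returned optimal solution $\{\bar u\}_{u \in V}$. If it is integral, i.e.\ $u_i \in \{0,1\}$ for all $u,i$, read off the implied partition $(P_1,\dots,P_k)$ defined by $P_j = \{u : u_j = 1\}$ and output the corresponding multiway cut $E' \subseteq E$. Otherwise, declare that the instance is not $(2-2/k)$-stable.

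The reason this is robust in the sense of Definition~\ref{def:robust} is twofold. If the instance really is $(2-2/k)$-stable, then by the integrality statement just proved the LP optimum is integral and coincides with the unique optimal multiway cut $E^*$, so the algorithm returns $E^*$. If the instance is not stable, then the LP may be fractional; in that case the algorithm reports non-stability, which is allowed. The only remaining case is that the instance is not stable yet the LP happens to be integral: then the integral LP solution is a feasible multiway cut whose weight equals the LP optimum, hence it matches the integer optimum of the instance, so the algorithm outputs a genuinely optimal solution — again consistent with robustness, since in this case the algorithm is allowed to return any optimal solution.

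There is essentially no obstacle beyond assembling the two cited theorems; the only care point is the last case above, where one must observe that the LP value is always a lower bound on the integer optimum, so an integer LP solution is automatically an optimal multiway cut regardless of stability. This verifies correctness of the robust algorithm and completes the argument.
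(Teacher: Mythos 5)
Your proposal is correct and follows exactly the paper's route: the theorem is obtained by chaining Theorem~\ref{thm:stable-almost-integral} with the $\varepsilon$-local $(2(k-1)/(k^2\theta), k\theta)$-rounding of Theorem~\ref{thm:rounding}, and the robust algorithm is precisely ``solve the CKR LP, output the solution if integral, otherwise report non-stability.'' Your additional observation about the case of a non-stable instance with an integral LP optimum is a correct and worthwhile detail that the paper leaves implicit.
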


%%%%%%%%%%%%%%%%%%%%%%%%%%%%%%%%%%%%%%%%%%%%%%%%%%%%%%%%%%%%%%%%%%%%%%%%%%%%%%%%%%%%%%%%%%%%%%%%%%%%%%%%%%%%%%%%%%%%%%%%

\subsection{An improved algorithm for weakly stable instances of Edge Multiway Cut}\label{sec:weakly-stable-MC}

In this section, we show that the improved analysis of the previous section can be applied to weakly stable instances, following closely the techniques of [MMV14]. As a reminder, [MMV14] gives an algorithm for $(4 + \delta, \mathcal{N})$-weakly-stable instances of Multiway Cut. In this section, we present an algorithm for $(2 - 2/k + \delta, \mathcal{N})$-weakly-stable instances of Multiway Cut. To do so, we prove the following theorem.

\begin{theorem}\label{thm:multiway-alphabeta-weak}
Assume that there is a polynomial-time $\varepsilon$-local $(\alpha, \beta)$-rounding for the CKR relaxation, for some $\varepsilon = \varepsilon(n, k) > 1/\poly(n)$; here $n$ is the number of vertices and $k$ is the number of terminals in the instance. Moreover, assume that the support of the distribution of multiway cuts generated by the rounding has polynomial size\footnote{If we do not make this assumption, we can still get a \textit{randomized} algorithm for $(\alpha \beta + \delta, N)$-weakly stable instances.}. Let $\delta > 1/\poly(n) > 0$. Then, there is a polynomial-time algorithm for $(\alpha \beta + \delta, \mathcal{N})$-weakly-stable instances of Minimum Multiway Cut with integer weights. Given an  $(\alpha \beta + \delta, \mathcal{N})$-weakly-stable instance, the algorithm finds a solution $E' \in \mathcal{N}$ (the algorithm does not know the set $\mathcal{N}$).
\end{theorem}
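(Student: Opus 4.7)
The plan is to extend the LP-integrality argument of Theorem~\ref{thm:stable-almost-integral} into an explicit algorithm for weakly stable instances, in the spirit of the [MMV14] framework. First, I would solve the CKR relaxation to obtain a fractional $x^*$ with value $C^*\leq w(E^*)$. Since the $\varepsilon$-local $(\alpha,\beta)$-rounding is defined only on $\varepsilon$-close inputs, I would manufacture such inputs by mixing $x^*$ with integral targets: for each integral multiway cut $E$ in a candidate pool $\mathcal C$, form
\[
y^{(E)} \;=\; (1-\varepsilon)\,x^E + \varepsilon\,x^*,
\]
which is automatically $\varepsilon$-close (each coordinate lies in $[0,\varepsilon]\cup[1-\varepsilon,1]$ because $x^E$ is integral). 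Run the rounding on every $y^{(E)}$, fully enumerate its polynomial-size support, and return the lightest output found.

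The heart of the analysis is the ``ideal'' choice $E=E^*$ (used only in the proof, not in the algorithm). Under this choice, for $e\notin E^*$ one has $d_{y^{(E^*)}}(e)=\varepsilon\,d^*(e)$, and for $e\in E^*$ one has $d_{y^{(E^*)}}(e)=1-\varepsilon+\varepsilon\,d^*(e)$. Plugging these into the approximation and co-approximation conditions and exploiting $w(E^*)\geq C^*$ (which gives $\sum_{e\in E^*}w_e(1-d^*(e))\geq\sum_{e\notin E^*}w_e\,d^*(e)$) yields
\[
\alpha\beta\cdot\E[w(E^*\setminus E')]\;\geq\;\E[w(E'\setminus E^*)].
\]
By averaging, some $E'$ in the support of the rounding of $y^{(E^*)}$ satisfies $w(E'\setminus E^*)\leq\alpha\beta\cdot w(E^*\setminus E')$. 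If $E'\notin\mathcal N$ and $E'\neq E^*$, the worst-case $(\alpha\beta+\delta)$-perturbation that scales every edge in $E^*\setminus E'$ by $\alpha\beta+\delta$ and leaves other weights unchanged gives $(\alpha\beta+\delta)w(E^*\setminus E')<w(E'\setminus E^*)\leq\alpha\beta\cdot w(E^*\setminus E')$, forcing $w(E^*\setminus E')<0$, a contradiction. Hence $E'\in\mathcal N$ or $E'=E^*$, and in both cases $E'\in\mathcal N$.

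The main obstacle is algorithmic: $E^*$ is unknown, so $y^{(E^*)}$ cannot be formed directly. I would address this by iteratively growing the candidate pool $\mathcal C$: seed $\mathcal C$ with an arbitrary feasible cut, then in each round enlarge it by running the rounding on $y^{(E)}$ for every $E\in\mathcal C$ and adding all outputs. Because the support of each rounding is polynomial and the integer-weight assumption together with $\delta>1/\poly(n)$ provides a $1/\poly(n)$ quantitative gap between costs of solutions in $\mathcal N$ and outside it, I would argue that this process stabilizes in polynomially many rounds and that at stabilization $\mathcal C$ necessarily contains some member of $\mathcal N$ (otherwise the averaging/stability contradiction above would apply to the minimum-weight element of $\mathcal C$, which rules out indefinite cost decrease). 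Making this iterative termination and verification step rigorous is the delicate piece; the rest of the argument mirrors Theorem~\ref{thm:stable-almost-integral} with the convex-combination substitution described above.
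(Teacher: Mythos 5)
Your analysis of the ``ideal'' case $E=E^*$ is correct: mixing $x^{E^*}$ with the optimal fractional solution $x^*$ of the \emph{original} LP and applying the $\varepsilon$-local rounding does yield $\alpha\beta\,\E[w(E^*\setminus E')]\geq \E[w(E'\setminus E^*)]$, and the averaging-plus-weak-stability argument then places some element of the support in $\mathcal{N}$. But the step you flag as delicate is where the proof actually breaks, and it breaks for a structural reason, not just a technical one. For a general candidate $E^\circ\neq E^*$, mixing with $x^*$ gives only
$\E[w(E^\circ)-w(E')]\geq \tfrac{\varepsilon}{\beta}\bigl(\sum_{e\in E^\circ}w_e(1-d^*(e))-\alpha\beta\sum_{e\notin E^\circ}w_e\,d^*(e)\bigr)$,
and this quantity can be negative (e.g.\ when the two sums are comparable, the bracket is roughly $(\tfrac{1}{\beta}-\alpha)$ times a positive quantity). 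So your pool-growing iteration has no provable progress measure: it can stabilize on a pool containing no member of $\mathcal{N}$, and stabilization certifies nothing. The missing idea is the \emph{reweighting} used in Lemma~\ref{lem:improve-multiway}: at each step one re-solves the CKR LP with weights $w'_e=w_e$ for $e\in E^\circ$ and $w'_e=\alpha\beta\,w_e$ for $e\notin E^\circ$, and mixes $x^{E^\circ}$ with \emph{that} LP optimum. Then $\sum_e w'_e d(e)\leq w'(E^*)$ turns the bracket into $w(E^\circ\setminus E^*)-\alpha\beta\,w(E^*\setminus E^\circ)$, which weak stability (Lemma~\ref{lem:alt-def-weak-stability-multiway}) makes at least $\tfrac{\delta}{\alpha\beta+\delta}\,(w(E^\circ)-w(E^*))$ whenever $E^\circ\notin\mathcal{N}$; this is what gives the geometric decrease with ratio $1-\tau$, $\tau=\tfrac{\varepsilon\delta}{\beta(\alpha\beta+\delta)}$.

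There is a second gap in how you produce the final answer: returning the lightest cut found does not certify membership in $\mathcal{N}$, since a cut outside $\mathcal{N}$ can be lighter than every non-optimal member of $\mathcal{N}$ (weak stability only forces cuts outside $\mathcal{N}$ to be heavier than $E^*$ itself, and you have no guarantee that $E^*$ enters the pool). The paper's certification works by contraposition on the progress guarantee: run the improvement step $T=O(\mathrm{poly})$ times; because the weights are integers the strict decreases $C^{(0)}>C^{(1)}>\cdots$ cannot all satisfy $C^{(i+1)}-C^{(T)}\leq(1-\tau)(C^{(i)}-C^{(T)})$, so some iteration fails the guaranteed multiplicative progress, and for that $i$ the contrapositive of Lemma~\ref{lem:improve-multiway} forces $E^{(i)}\in\mathcal{N}$. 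You would need both the reweighted LP at each step and this failure-to-improve certificate to complete the argument.
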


For the proof of the above theorem, we will need two lemmas. We will use the following lemma from [MMV14].
\begin{lemma}[\textrm{[MMV14]}]\label{lem:alt-def-weak-stability-multiway}
Consider a $(\gamma, \mathcal{N})$-weakly-stable instance of Minimum Multiway Cut. Let $E^*$ be its unique optimal solution. Then for every multiway cut $E' \notin \mathcal{N}$, we have
\begin{equation*}
    \gamma \cdot w(E^* \setminus E') < w(E' \setminus E^*).
\end{equation*}
\end{lemma}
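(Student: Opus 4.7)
The plan is to prove this equivalence by exhibiting a specific adversarial $\gamma$-perturbation that encodes the desired inequality. This mirrors how one derives the standard equivalent formulation of Bilu-Linial stability (Definition~\ref{def:equiv-stab}) from the original definition, but here we must be careful to stay within the ``weak'' relaxation, where the conclusion only applies to solutions outside the neighborhood $\mathcal{N}$.

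Fix an arbitrary multiway cut $E' \notin \mathcal{N}$. The key step is to define a weight function $w'$ on $E$ by setting $w'(e) = \gamma \cdot w(e)$ for $e \in E^* \setminus E'$ and $w'(e) = w(e)$ for all other edges. Since $w(e) \leq w'(e) \leq \gamma \cdot w(e)$ for every $e \in E$, the instance $(G, w')$ is a legitimate $\gamma$-perturbation of $(G, w)$. Because the original instance is $(\gamma, \mathcal{N})$-weakly-stable and $E' \notin \mathcal{N}$, the definition of weak stability (Definition~\ref{def:weak-stability}) guarantees that $w'(E^*) < w'(E')$.

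It then remains to expand both sides along the partition of $E^* \cup E'$ into the three disjoint sets $E^* \setminus E'$, $E^* \cap E'$, and $E' \setminus E^*$, and cancel the common contribution from $E^* \cap E'$. Concretely, $w'(E^*) = \gamma \cdot w(E^* \setminus E') + w(E^* \cap E')$, while $w'(E') = w(E' \setminus E^*) + w(E^* \cap E')$, since edges in $E' \setminus E^*$ keep their original weight under $w'$. Substituting into the strict inequality $w'(E^*) < w'(E')$ and cancelling the $w(E^* \cap E')$ term on both sides yields exactly $\gamma \cdot w(E^* \setminus E') < w(E' \setminus E^*)$, as desired.

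I do not foresee a serious obstacle here: the proof is essentially a clean adaptation of the argument used to reformulate stability in Definition~\ref{def:equiv-stab}, and the only subtlety is to make sure the perturbation $w'$ remains within $[w(e), \gamma w(e)]$ on every edge (which it does by construction, since we only scale up edges in $E^* \setminus E'$ by exactly $\gamma$). The role of the set $\mathcal{N}$ is exclusively to guarantee that the strict inequality $w'(E^*) < w'(E')$ actually applies to the chosen $E'$; apart from this restriction, the argument is identical to the stability case.
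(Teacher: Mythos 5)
Your proof is correct: constructing the worst-case $\gamma$-perturbation that scales up exactly the edges of $E^* \setminus E'$ and then applying Definition~\ref{def:weak-stability} to the pair $(E^*, E')$ is precisely the argument behind this lemma, and it matches the equivalence the paper itself sketches for ordinary stability after Definition~\ref{def:equiv-stab}. The thesis simply cites [MMV14] and omits the proof, so your write-up fills that gap with the intended (and only natural) argument.
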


We also prove the following lemma, similar in spirit to [MMV14].
\begin{lemma}\label{lem:improve-multiway}
Suppose that there is a polynomial-time $\varepsilon$-local $(\alpha,\beta)$-rounding for the CKR relaxation, where $\varepsilon \geq 1 / \poly(n) > 0$. Let $\delta \geq 1/\mathrm{poly}(n)>0$. Then there is a polynomial-time algorithm that, given an $(\alpha \beta + \delta, \mathcal{N})$-weakly-stable instance of Minimum Multiway Cut and a feasible multiway cut $E^{\circ}$, does the following:
\begin{itemize}
    \item if $E^{\circ}\notin \mathcal{N}$, it finds a multiway cut  $E'$ such that
        \begin{equation*}
            w(E') - w(E^*) \leq (1 - \tau) \left(w(E^{\circ}) - w(E^*)\right),
        \end{equation*}
        where $E^*$ is the minimum multiway cut, and $\tau = \frac{\varepsilon \delta}{\beta(\alpha\beta + \delta)}\geq \frac{1}{\poly(n)} > 0$.
    \item if $E^{\circ}\in \mathcal{N}$, it either returns a multiway cut $E'$ better than $E^{\circ}$ or certifies that $E^{\circ} \in \mathcal{N}$.
\end{itemize}
\end{lemma}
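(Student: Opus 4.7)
The plan is to solve a modified CKR linear program whose objective penalizes edges outside $E^\circ$ by a factor of $\alpha\beta$, to form a convex combination of its optimum with the integral solution $\bar u^\circ$ corresponding to $E^\circ$, to apply the $\varepsilon$-local $(\alpha,\beta)$-rounding to that (automatically $\varepsilon$-close) combination, and to output either the smallest-weight cut in the resulting polynomial-size support or a certificate that $E^\circ\in\mathcal{N}$.

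Concretely, I would let $\bar u^{LP}$ minimize
\begin{equation*}
\sum_{e\in E^\circ}w_e\,d(e)\;+\;\alpha\beta\sum_{e\notin E^\circ}w_e\,d(e)
\end{equation*}
over the CKR feasibility region (solvable in polynomial time), set $\bar u^{(\varepsilon)}=(1-\varepsilon)\bar u^\circ+\varepsilon\bar u^{LP}$, apply the $\varepsilon$-local rounding, and enumerate the polynomially many cuts in its support, returning the smallest-weight one, or $E^\circ$ with a certificate if no cut in the support strictly beats it. First I would verify that each coordinate of $\bar u^{(\varepsilon)}$ lies in $[0,\varepsilon]\cup[1-\varepsilon,1]$, so that the rounding is applicable.

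Next, I would estimate the expected weight of the random cut $E'$ produced by the rounding. The approximation property on edges $e\notin E^\circ$ (where $d^{(\varepsilon)}(e)=\varepsilon\,d^{LP}(e)$) and the co-approximation property on edges $e\in E^\circ$ (where $1-d^{(\varepsilon)}(e)=\varepsilon\bigl(1-d^{LP}(e)\bigr)$) give, after a direct calculation,
\begin{equation*}
\E[w(E')]\;\le\; w(E^\circ)\;-\;\varepsilon\!\left(\tfrac{A}{\beta}-\alpha B\right),
\end{equation*}
where $A=\sum_{e\in E^\circ}w_e(1-d^{LP}(e))$ and $B=\sum_{e\notin E^\circ}w_e\,d^{LP}(e)$. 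The modified LP's objective equals $w(E^\circ)-A+\alpha\beta B$, so minimizing it maximizes $A-\alpha\beta B$. The integral point $\bar u^*$ corresponding to $E^*$ is feasible for this LP and attains modified-objective value $w(E^*)+(\alpha\beta-1)\,w(E^*\setminus E^\circ)$, yielding $A-\alpha\beta B\ge \Delta^\circ-(\alpha\beta-1)\,w(E^*\setminus E^\circ)$, where $\Delta^\circ:=w(E^\circ)-w(E^*)$.

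Finally, assuming $E^\circ\notin\mathcal{N}$, I would invoke Lemma~\ref{lem:alt-def-weak-stability-multiway} with $\gamma=\alpha\beta+\delta$ to deduce $(\gamma-1)\,w(E^*\setminus E^\circ)<\Delta^\circ$, whence $A-\alpha\beta B>\delta\Delta^\circ/(\gamma-1)$. Dividing by $\beta$ and using $\gamma\ge\gamma-1$ gives $\tfrac{A}{\beta}-\alpha B\ge \tau\Delta^\circ/\varepsilon$, so $\E[w(E')]<w(E^\circ)-\tau\Delta^\circ$; consequently the minimum-weight cut in the support witnesses $w(E')-w(E^*)\le(1-\tau)\Delta^\circ$. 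Taking the contrapositive, if no cut in the support strictly improves on $E^\circ$, we may safely certify $E^\circ\in\mathcal{N}$. The main delicate point is the choice of penalty factor $\alpha\beta$ in the modified LP: this is precisely the factor that aligns the LP's objective with the quantity $A-\alpha\beta B$ appearing in the rounding analysis, and it is the smallest factor for which the weak-stability assumption still forces $\bar u^*$ to strictly beat $\bar u^\circ$; the naive choice of solving the unmodified CKR LP gives control only on $A-B$, which is insufficient when the LP optimum has a large $B$.
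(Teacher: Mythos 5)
Your proposal is correct and follows essentially the same route as the paper: the penalized objective with factor $\alpha\beta$ on edges outside $E^\circ$ is exactly the paper's reweighting $w'_e$, and the convex combination, $\varepsilon$-local rounding, comparison against the integral point $\bar u^*$, and invocation of Lemma~\ref{lem:alt-def-weak-stability-multiway} all match the paper's argument (the paper merely phrases the LP bound via $\sum_e w'_e d(e)\le w'(E^*)$ rather than via your $A$ and $B$). The only nitpick is that your parenthetical identities $d^{(\varepsilon)}(e)=\varepsilon\,d^{LP}(e)$ and $1-d^{(\varepsilon)}(e)=\varepsilon(1-d^{LP}(e))$ are really one-sided inequalities coming from $\ell_1$ subadditivity, but they point in the direction your estimate needs.
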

\begin{proof}
We define edge weights $w_e'$ by
\begin{equation*}
w_e' =
    \begin{cases}
        w_e, & \textrm{if } e \in E^\circ,\\
        (\alpha \beta) \cdot w_e, & \text{otherwise}.
    \end{cases}
\end{equation*}
We solve the CKR LP relaxation for Minimum Multiway Cut with weights $\{w_e'\}_{e \in E}$. If we get an integral solution, then we report the solution and we are done (since in this case we have a solution $E'$ that satisfies $w'(E') \leq w'(E^*)$ for some $\gamma$-perturbation $w'$, which implies that $E' \in \mathcal{N}$). So, suppose that we get a non-integral optimal solution $\{\bar{u}\}_{u \in V}$ with corresponding distance function $d$. Let $OPT'$ denote the optimal integral value of the instance $G' = (V,E, w')$. We have $\sum_{e \in E} w_e' d(e) \leq OPT'$. Let $\{\bar{u}^\circ\}_{u \in V}$, $d^\circ$, denote the CKR solution corresponding to solution $E^\circ$. For each $u \in V$, we define $\bar{u}^{(\varepsilon)} = (1 - \varepsilon) \bar{u}^\circ + \varepsilon \bar{u}$. From the subadditivity of the $\ell_1$ norm we again get that for every $u, v \in V$
\begin{equation*}
    d^{(\varepsilon)}(u,v) \leq (1 - \varepsilon) d^\circ(u,v) + \varepsilon d(u,v).
\end{equation*}

We now apply the $\varepsilon$-local $(\alpha, \beta)$-rounding to this solution and get a feasible solution $E'$. We have

\begin{align*}
    \E[w(E^{\circ}) - w(E')] &= \E[w(E^{\circ} \setminus E')] - \E[w(E' \setminus E^{\circ})] \\
                             &= \sum_{e \in E^{\circ}} w_e \Pr[e \textrm{ is not cut}] - \sum_{e \in E \setminus E^{\circ}} w_e \Pr[e \textrm{ is cut}]\\
                             &\geq \frac{1}{\beta} \sum_{e \in E^{\circ}} w_e \cdot \left(1 - d^{(\varepsilon)}(e) \right) - \alpha \sum_{e \in E \setminus E^{\circ}} w_e \cdot d^{(\varepsilon)}(e)\\
                             &= \frac{1}{\beta} \left( w(E^{\circ}) - \sum_{e \in E} w_e' \cdot d^{(\varepsilon)}(e) \right)\\
                             &\geq \frac{\varepsilon}{\beta} \left( w(E^{\circ}) - OPT' \right)\\
                             &\geq \frac{\varepsilon}{\beta} \left( w(E^{\circ}) - w'(E^*) \right)\\
                             &= \frac{\varepsilon}{\beta} \left( w(E^{\circ} \setminus E^*) - \alpha\beta \cdot w(E^* \setminus E^{\circ}) \right).
\end{align*}
Let's assume now that $E^{\circ} \notin \mathcal{N}$. Using Lemma~\ref{lem:alt-def-weak-stability-multiway}, we get that $w(E^* \setminus E^{\circ}) < \frac{1}{\alpha\beta + \delta} \cdot w(E^{\circ} \setminus E^*)$. Thus, we conclude that
\begin{equation*}
    \E[w(E^{\circ}) - w(E')] > \frac{\varepsilon\delta}{\beta(\alpha\beta + \delta)} \cdot w(E^{\circ} \setminus E^*) \geq \frac{\varepsilon\delta}{\beta(\alpha\beta + \delta)} \cdot \left(w(E^{\circ}) - w(E^*) \right),
\end{equation*}
which implies that
\begin{equation*}
    \E[w(E') - w(E^*)] < (1 - \tau) \cdot (w(E^\circ) - w(E^*)).
\end{equation*}
This further implies that there exists at least one multiway cut $E'$ in the distribution that the rounding scheme produces such that $w(E') - w(E^*) < (1 - \tau) \cdot (w(E^\circ) - w(E^*))$. Since the distribution has polynomial-sized support,  we can efficiently identify this $E'$.

Note that the algorithm does not know whether $E^\circ \in \mathcal{N}$ or not; it tries all multiway cuts $E'$ and finds the best one $E''$. If $E''$ is better than $E^\circ$, the algorithm returns $E''$; otherwise, it certifies that $E^\circ \in \mathcal{N}$.
\end{proof}

\begin{proof}[Proof of Theorem~\ref{thm:multiway-alphabeta-weak}]
We assume that all edge costs are integers between $1$ and some $W$. Let $C^*$ be the cost of the optimal solution. We start with an arbitrary feasible multiway cut $E^{(0)}$. Denote its cost by $C^{(0)}$. Let $T  = \lceil \log_{1/(1-\tau)} C^{(0)} \rceil + 2= O(n^2 \tau \log W)$ (note that $T$ is polynomial in the size of the input). We iteratively apply the algorithm from Lemma~\ref{lem:improve-multiway} $T$ times: first we get a multiway cut $E^{(1)}$ from $E^{(0)}$, then $E^{(2)}$ from $E^{(1)}$, and so on. Finally, we get a multiway cut $E^{(T)}$. If at some point the algorithm does not return a multiway cut, but certifies that the current
multiway cut $E^{(i)}$ is in $\mathcal{N}$, we output $E^{(i)}$ and terminate the algorithm.

So, let's assume now that the algorithm does $T$ iterations, and we get multiway cuts $E^{(0)}, ..., E^{(T)}$. Denote the cost of $E^{(i)}$ by $C^{(i)}$. Note that $C^{(0)} > C^{(1)} > ... > C^{(T)}\geq C^*$. Further, if $E^{(i)} \notin \mathcal{N}$ then $C^{(i+1)} - C^* \leq (1-\tau) (C^{(i)} - C^*)$ and thus $C^{(i+1)} - C^{(T)} \leq (1-\tau) (C^{(i)} - C^{(T)})$. Observe that we cannot have $C^{(i+1)} - C^{(T)} \leq (1-\tau) (C^{(i)} - C^{(T)})$ for every $i$, because then we would have
\begin{equation*}
    C^{(T-1)} - C^{(T)} \leq (1 - \tau)^{T-1} (C^{(0)} - C^{(T)}) \leq (1-\tau)^{T-1} C^{(0)} < 1,
\end{equation*}
which contradicts to our assumption that all edge weights are integral (and, consequently, that $C^{(T-1)} - C^{(T)}$ is a positive integer number). We find an $i$ such that $C^{(i+1)} - C^{(T)} > (1-\tau) (C^{(i)} - C^{(T)})$ and output $E^{(i)}$. We are guaranteed that $P^{(i)} \in N$.
\end{proof}

Combining Theorems~\ref{thm:rounding} and~\ref{thm:multiway-alphabeta-weak}, we obtain the main result of this section.
\begin{theorem}
There is a polynomial-time algorithm that given a $(2 - 2/k + \delta, \mathcal{N})$-weakly-stable instance of Minimum Multiway Cut with integer weights, finds a solution $E' \in \mathcal{N}$ (for every $\delta \geq 1/\poly(n) > 0$).
\end{theorem}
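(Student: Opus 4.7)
The final theorem is an immediate corollary of the two results just established: Theorem~\ref{thm:rounding}, which exhibits an explicit $\varepsilon$-local $(\alpha,\beta)$-rounding for the CKR relaxation with $\alpha\beta = 2-2/k$ and $\varepsilon = 1/(10k)$, running in polynomial time and generating a distribution of cuts with polynomially bounded support; and Theorem~\ref{thm:multiway-alphabeta-weak}, which converts any such rounding into a polynomial-time algorithm for $(\alpha\beta + \delta, \mathcal{N})$-weakly-stable instances with integer weights. The plan, therefore, is simply to verify that the hypotheses of Theorem~\ref{thm:multiway-alphabeta-weak} are satisfied by the rounding of Algorithm~\ref{alg:epsilon-rounding-mc}, and then invoke it.

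More concretely, I would proceed in three short steps. First, I would point out that by Theorem~\ref{thm:rounding}, Algorithm~\ref{alg:epsilon-rounding-mc} is an $\varepsilon$-local $(\alpha,\beta)$-rounding with $\varepsilon = 1/(10k) \geq 1/\poly(n)$, and with $\alpha\beta = 2 - 2/k$. Second, I would note that the same theorem guarantees polynomial running time and that the distribution of multiway cuts produced has support of size at most $2k^2 n$, which is polynomial in the input size. This is precisely the pair of conditions required by Theorem~\ref{thm:multiway-alphabeta-weak}. Third, I would plug these parameters into Theorem~\ref{thm:multiway-alphabeta-weak} to conclude that for any $\delta \geq 1/\poly(n) > 0$, there is a polynomial-time algorithm that, given a $(\alpha\beta + \delta, \mathcal{N})=(2 - 2/k + \delta, \mathcal{N})$-weakly-stable instance of Minimum Multiway Cut with integer weights, returns a solution $E' \in \mathcal{N}$.

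There is essentially no obstacle here: the heavy lifting has already been done in Theorems~\ref{thm:rounding} and~\ref{thm:multiway-alphabeta-weak}. The only point worth a sentence is the bookkeeping that the parameter $\tau = \varepsilon\delta / (\beta(\alpha\beta+\delta))$ appearing in Lemma~\ref{lem:improve-multiway} is at least $1/\poly(n)$, so that the iterative procedure in the proof of Theorem~\ref{thm:multiway-alphabeta-weak} terminates after a polynomial number $T = O(n^2 \tau \log W)$ of rounds; this is automatic because $\varepsilon = 1/(10k)$, $\beta = 6/5$, $\alpha\beta = 2-2/k$ are all at most a polynomial (indeed constant) in $n,k$, and $\delta \geq 1/\poly(n)$ by assumption. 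Having ensured this, the conclusion follows directly, completing the proof.
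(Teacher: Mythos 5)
Your proposal is correct and matches the paper's own proof exactly: the paper derives this theorem by combining Theorem~\ref{thm:rounding} with Theorem~\ref{thm:multiway-alphabeta-weak}, precisely as you do. Your additional bookkeeping on $\tau$ and the polynomial support size is a sound (if unnecessary) elaboration of hypotheses the paper already verifies in those two theorems.
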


%%%%%%%%%%%%%%%%%%%%%%%%%%%%%%%%%%%%%%%%%%%%%%%%%%%%%%%%%%%%%%%%%%%%%%%%%%%%%%%%%%%%%%%%%%%%%%%%%%%%%%%%%%%%%%%%%%%%%%%%%%

\subsection{Lower bounds for the CKR relaxation on stable instances}

In this section, we present a lower bound for integrality of stable instances for the CKR relaxation for Minimum Multiway Cut. For that, we first make two claims regarding the construction of stable instances and the use of integrality gap examples as lower bounds for integrality of stable instances. We state both claims in the setting of Minimum Multiway Cut, but they can be easily applied to other partitioning problems as well.

\begin{claim}\label{MC_claim1}
Given an instance $G = (V, E, w)$, $w: E \to \mathbb{R}_{\geq 0}$, of Minimum Multiway Cut with terminals $T = \{s_1, ..., s_k\}$, and an optimal solution $E^* \subseteq E$, for every $\gamma > 1$ and every $\varepsilon \in (0, \gamma - 1)$, the instance $G^{(E^*,\gamma)} = (V,E, w^{(E^*, \gamma)})$, where $w_e^{(E^*, \gamma)} = w_e / \gamma$ for $e \in E^*$, and $w_e^{(E^*, \gamma)} = w_e$ for $e \in E \setminus E^*$, is a $(\gamma - \varepsilon)$-stable instance (whose unique optimal solution is $E^*$).
\end{claim}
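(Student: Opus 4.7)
The plan is to use the equivalent definition of Bilu--Linial stability given in Definition~\ref{def:equiv-stab}: the instance $G^{(E^*,\gamma)}$ is $(\gamma - \varepsilon)$-stable if and only if $E^*$ is its unique optimal multiway cut and for every feasible multiway cut $E' \neq E^*$ we have $(\gamma - \varepsilon) \cdot w^{(E^*,\gamma)}(E^* \setminus E') < w^{(E^*,\gamma)}(E' \setminus E^*)$. Two immediate observations about the modified weights are all that is needed: since $E^* \setminus E' \subseteq E^*$, every edge in $E^* \setminus E'$ has modified weight $w_e / \gamma$; and since $E' \setminus E^* \subseteq E \setminus E^*$, every edge in $E' \setminus E^*$ has unchanged weight $w_e$. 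Consequently
\begin{equation*}
    w^{(E^*,\gamma)}(E^* \setminus E') = w(E^* \setminus E')/\gamma, \qquad w^{(E^*,\gamma)}(E' \setminus E^*) = w(E' \setminus E^*).
\end{equation*}
(I will assume $w_e > 0$ throughout, which is the standard convention for this problem; the argument is essentially unaffected by isolated zero-weight edges since they do not change under any perturbation.)

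First I would establish uniqueness and optimality of $E^*$ for $G^{(E^*,\gamma)}$. Fix any feasible cut $E' \neq E^*$. Optimality of $E^*$ in $(G,w)$ gives $w(E^*) \leq w(E')$, which, after cancelling $w(E^* \cap E')$, becomes $w(E^* \setminus E') \leq w(E' \setminus E^*)$. Therefore
\begin{equation*}
    w^{(E^*,\gamma)}(E') - w^{(E^*,\gamma)}(E^*) = w(E' \setminus E^*) - w(E^* \setminus E')/\gamma \geq \left(1 - \tfrac{1}{\gamma}\right) w(E^* \setminus E') \geq 0.
\end{equation*}
If $E^* \setminus E' \neq \emptyset$, this is strictly positive because $\gamma > 1$ and weights are positive. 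If $E^* \setminus E' = \emptyset$ instead, then $E^* \subsetneq E'$ so $E' \setminus E^* \neq \emptyset$, and we have $w^{(E^*,\gamma)}(E') - w^{(E^*,\gamma)}(E^*) = w(E' \setminus E^*) > 0$. Either way $E^*$ is strictly cheaper, so it is the unique minimum.

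For the stability inequality, the same substitutions reduce the target to
\begin{equation*}
    (\gamma - \varepsilon) \cdot w^{(E^*,\gamma)}(E^* \setminus E') = \left(1 - \tfrac{\varepsilon}{\gamma}\right) w(E^* \setminus E') \;\stackrel{?}{<}\; w(E' \setminus E^*).
\end{equation*}
Using $w(E^* \setminus E') \leq w(E' \setminus E^*)$ and $(1 - \varepsilon/\gamma) \in (0,1)$, we obtain
\begin{equation*}
    \left(1 - \tfrac{\varepsilon}{\gamma}\right) w(E^* \setminus E') \leq \left(1 - \tfrac{\varepsilon}{\gamma}\right) w(E' \setminus E^*) < w(E' \setminus E^*),
\end{equation*}
provided $w(E' \setminus E^*) > 0$; and this indeed holds whenever $E' \neq E^*$, since otherwise $E' \subsetneq E^*$ would contradict optimality of $E^*$ in $(G,w)$ (with positive weights). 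The main (and really only) obstacle is this boundary bookkeeping with empty symmetric differences, and it is handled by the same two-case split used for uniqueness above.
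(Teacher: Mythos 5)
Your proof is correct and takes essentially the same route as the paper's: both arguments reduce to the two facts that optimality of $E^*$ in $(G,w)$ gives $w(E^* \setminus E') \leq w(E' \setminus E^*)$ and that the scaling factor $(\gamma-\varepsilon)/\gamma < 1$ supplies the strict inequality. The only difference is presentational — you verify the single worst-case inequality of Definition~\ref{def:equiv-stab}, whereas the paper quantifies explicitly over all $(\gamma-\varepsilon)$-perturbations $w'_e = f(e)\,w_e^{(E^*,\gamma)}$ — and your explicit case split on empty set differences is, if anything, slightly more careful than the paper's treatment of the same boundary issue.
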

\begin{proof}
First, it is easy to see that for every $\gamma > 1$, $E^*$ is the unique optimal solution for $G^{(E^*, \gamma)}$. We will now prove that $G^{(E^*, \gamma)}$ is ($\gamma - \varepsilon)$-stable, for every $\varepsilon \in (0, \gamma - 1)$. For that, we consider any $(\gamma - \varepsilon)$-perturbation of $G^{(E^*, \gamma)}$. More formally, this is a graph $G' = (V, E, w')$, where $w_e' = f(e) \cdot w_e^{(E^*, \gamma)}$, and $f(e) \in [1, \gamma - \varepsilon]$ for all $e \in E$. Let $\bar{E} \neq E^*$ be any feasible solution of $(G', T)$. We have
\begin{align*}
    w'(\bar{E}) &= \sum_{e \in E^*} w_e' - \sum_{e \in E^* \setminus \bar{E}} w_e' + \sum_{e \in \bar{E} \setminus E^*} w_e'\\
                &= w'(E^*) - \sum_{e \in E^* \setminus \bar{E}} f(e) w_e^{(E^*, \gamma)} + \sum_{e \in \bar{E} \setminus E^*} f(e) w_e^{(E^*, \gamma)} \\
                &\geq w'(E^*) - (\gamma - \varepsilon) \sum_{e \in E^* \setminus \bar{E}} w_e^{(E^*, \gamma)} + \sum_{e \in \bar{E} \setminus E^*} w_e^{(E^*, \gamma)} \\
                &= w'(E^*) - \frac{\gamma - \varepsilon}{\gamma} \sum_{e \in E^* \setminus \bar{E}} w_e + \sum_{e \in \bar{E} \setminus E^*} w_e \\
                &> w'(E^*) - \sum_{e \in E^* \setminus \bar{E}} w_e + \sum_{e \in \bar{E} \setminus E^*} w_e \\
                &= w'(E^*) - \sum_{e \in E^*} w_e + \sum_{e \in \bar{E}} w_e \\
                &\geq w'(E^*),
\end{align*}
where the last inequality holds because $\bar{E}$ is a feasible solution for the original instance $(G, T)$ while $E^*$ is an optimal solution for $(G, T)$. Thus, $w'(\bar{E}) > w'(E^*)$, and so $E^*$ is the unique optimal solution for every $(\gamma - \varepsilon)$-perturbation of $G^{(E^*,\gamma)}$. We conclude that $G^{(E^*,\gamma)}$ is $(\gamma - \varepsilon)$-stable.
\end{proof}

We will now use the above claim to show how an integrality gap example for Minimum Multiway Cut can be converted to a certificate of non-integrality of stable instances.
\begin{claim}\label{IG_stability_claim}
Let $(G, T)$ be an instance of Minimum Multiway Cut, such that $OPT/OPT_{LP} = \alpha > 1$, where $OPT$ is the value of an optimal integral Multiway Cut, and $OPT_{LP}$ is the value of an optimal fractional solution for the CKR relaxation. Then, for every $\varepsilon \in (0, \alpha - 1)$, we can construct an $(\alpha - \varepsilon)$-stable instance such that the CKR relaxation is not integral for that instance.
\end{claim}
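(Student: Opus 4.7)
The plan is to leverage Claim~\ref{MC_claim1} to convert an integrality gap instance into a stable instance by down-scaling the weights of edges in an optimal integral multiway cut. Given the instance $(G, T)$ with $OPT/OPT_{LP} = \alpha$ and $\varepsilon \in (0, \alpha - 1)$, I would fix any optimal integral solution $E^*$, set $\gamma = \alpha - \varepsilon/2 > 1$, and consider the instance $G^{(E^*, \gamma)}$ obtained by dividing the weights of the edges in $E^*$ by $\gamma$. Applying Claim~\ref{MC_claim1} with perturbation parameter $\varepsilon/2 \in (0, \gamma - 1)$ (which is a valid range because $\varepsilon < \alpha - 1$ implies $\varepsilon/2 < \gamma - 1$), the instance $G^{(E^*, \gamma)}$ is $(\gamma - \varepsilon/2) = (\alpha - \varepsilon)$-stable and has $E^*$ as its unique optimal integral solution, of cost $OPT/\gamma$.

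It remains to exhibit a fractional feasible LP solution on $G^{(E^*, \gamma)}$ of objective value strictly smaller than $OPT/\gamma$. For this, take any optimal fractional solution for the original CKR relaxation of $G$, with induced edge lengths $d(\cdot)$, and evaluate its objective on $G^{(E^*, \gamma)}$. Writing $A = \sum_{e \in E^*} w_e \, d(e)$ and $B = \sum_{e \notin E^*} w_e \, d(e)$, the new LP objective equals $A/\gamma + B$, whereas the integral optimum is $OPT/\gamma$. Using the identity $A + B = OPT_{LP} = OPT/\alpha$, the inequality $A/\gamma + B < OPT/\gamma$ rearranges to
\begin{equation*}
(\alpha - 1)A + (\alpha - \gamma)B > 0,
\end{equation*}
which holds because $\alpha > 1$, $\alpha - \gamma = \varepsilon/2 > 0$, $A, B \geq 0$, and $A + B = OPT_{LP} > 0$, so at least one of the two non-negative terms is strictly positive. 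This shows that the optimal CKR value on $G^{(E^*, \gamma)}$ is strictly below $OPT/\gamma$, hence the relaxation is non-integral on this $(\alpha - \varepsilon)$-stable instance.

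The main technical content is the algebraic verification above; I do not foresee a genuine obstacle beyond carefully tracking how the weights, and consequently both the integral and the fractional objectives, transform under the $E^*$-scaling. The only subtle design choice is introducing the slack $\varepsilon/2$ between $\gamma$ and $\alpha$: this slack is what allows Claim~\ref{MC_claim1} to deliver $(\alpha - \varepsilon)$-stability while simultaneously guaranteeing $\alpha - \gamma > 0$, so that the strict inequality $(\alpha - \gamma)B > 0$ survives without a separate case analysis on whether $A > 0$.
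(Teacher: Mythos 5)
Your proposal is correct and follows essentially the same route as the paper: scale down the weights on an optimal integral cut $E^*$ by $\gamma = \alpha - \varepsilon/2$, invoke Claim~\ref{MC_claim1} with slack $\varepsilon/2$ to get $(\alpha-\varepsilon)$-stability, and witness non-integrality with the original optimal fractional solution. Your explicit inequality $(\alpha-1)A + (\alpha-\gamma)B > 0$ is just a slightly more granular rendering of the paper's chain $OPT/\gamma > OPT_{LP} \geq \sum_e w_e^{(E^*,\gamma)} d(e)$.
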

\begin{proof}
Let $G = (V, E, w)$, $T \subseteq V$, be an instance of Minimum Multiway Cut such that $OPT/OPT_{LP} = \alpha > 1$. Let $\gamma = \alpha - \delta$, for any fixed $\delta \in (0, \alpha - 1)$.  Let $E^*$ be an optimal integral solution, i.e.~$OPT = \sum_{e \in E^*} w_e$. By Claim \ref{MC_claim1}, for every $\varepsilon' \in (0, \gamma - 1)$, $G^{(E^*, \gamma)}$ is a $(\gamma - \varepsilon')$-stable instance whose unique optimal solution is $E^*$. Let $\{\bar{u}\}_{u \in V}$ be an optimal LP solution for $G$. We define $d(u,v) = \frac{1}{2} \|\bar{u} - \bar{v}\|_1$, for every $u, v \in V$, and we have $OPT_{LP} = \sum_{e \in E} w_e d(e)$. Note that $\{\bar{u}\}_{u \in V}$ is a feasible fractional solution for $G^{(E^*, \gamma)}$, and we claim that its cost for $G^{(E^*, \gamma)}$ is strictly smaller than the (integral) cost of the optimal solution $E^*$ of $G^{(E^*, \gamma)}$. For that, we have
\begin{equation*}
\begin{split}
    w^{(E^*, \gamma)}(E^*) &= \sum_{e \in E^*} w_e^{(E^*, \gamma)} = \frac{1}{\gamma} \sum_{e \in E^*} w_e = \frac{\alpha}{\alpha - \delta} \sum_{e \in E} w_e d(e) \\
                           &> \sum_{e \in E} w_e d(e) \geq \sum_{e \in E} w_e^{(E^*, \gamma)} d(e),
\end{split}
\end{equation*}
which implies that the LP is not integral for the instance $G^{(E^*, \gamma)}$. Setting $\delta = \varepsilon' = \varepsilon / 2$ finishes the proof.
\end{proof}
Claim \ref{IG_stability_claim} allows us to convert any integrality gap result for the CKR relaxation into a lower bound for non-integrality. Thus, by using the Freund-Karloff integrality gap construction \cite{DBLP:journals/ipl/FreundK00}, we can deduce that there are $\left(\frac{8}{7 + \frac{1}{k-1}} - \varepsilon \right)$-stable instances of Minimum Multiway Cut for which the CKR relaxation is not integral. An improved integrality gap construction given in a joint work with Yury Makarychev and Pasin Manurangsi~\cite{DBLP:conf/ipco/AngelidakisMM17} (as mentioned in the beginning of the chapter) also implies that there are $\left(\frac{6}{5 + \frac{1}{k-1}} - \varepsilon \right)$-stable instances of Minimum Multiway Cut for which the CKR relaxation is not integral. But, with a more careful analysis, we can obtain a stronger lower bound. More formally, we prove the following theorem.
% for the C{\u{a}}linescu-Karloff-Rabani relaxation for Multiway Cut \cite{CKR} %\cite{}
%(which we will denote as the CKR relaxation from now on) using the integrality gap construction of Freund and Karloff.

\begin{theorem}
For every $\varepsilon > 0$ and $k \geq 3$, there exist $\left(\frac{4}{3 + \frac{1}{k-1}} - \varepsilon \right)$-stable instances of Minimum Multiway Cut with $k$ terminals for which the CKR relaxation is not integral.
\end{theorem}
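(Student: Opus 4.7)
The plan is to start from a known CKR integrality-gap construction and combine the weight-scaling perturbation from Claim~\ref{MC_claim1} with a tighter non-integrality analysis than the one used in Claim~\ref{IG_stability_claim}. Let $(G,T,w)$ be an instance with unique minimum multiway cut $E^*$ of weight $OPT$, and let $d$ be any LP-feasible distance function of value $A+B$, where
\begin{equation*}
    A \;=\; \sum_{e\in E^*} w_e\,d(e),\qquad B \;=\; \sum_{e\notin E^*} w_e\,d(e).
\end{equation*}
For $\gamma>1$, Claim~\ref{MC_claim1} produces the $(\gamma-\varepsilon')$-stable instance $G^{(E^*,\gamma)}$, whose integral optimum equals $OPT/\gamma$ and on which $d$ is still feasible with LP value $B + A/\gamma$. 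The modified LP is therefore strictly non-integral as soon as $B + A/\gamma < OPT/\gamma$, i.e.\
\begin{equation*}
    \gamma \;<\; \frac{OPT - A}{B} \;=\; \alpha \;+\; (\alpha - 1)\cdot \frac{A}{B},
\end{equation*}
where $\alpha = OPT/OPT_{LP}$ is the integrality gap. This strictly improves on the bound $\alpha$ used in the proof of Claim~\ref{IG_stability_claim} by an additive term controlled by the fractional mass the LP places on $E^*$; Claim~\ref{IG_stability_claim} exactly corresponds to the crude case $A=0$.

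To realize the target stability $4/(3+1/(k-1))$, I would instantiate this scheme with the Freund--Karloff integrality-gap construction~\cite{DBLP:journals/ipl/FreundK00}, which achieves $\alpha = 8/(7+1/(k-1))$ via an explicit simplex embedding. The remaining task is an explicit computation of $A$ and $B$ for that LP solution on that instance: identify the unique minimum multiway cut $E^*$ of the Freund--Karloff gadget, read off the LP distance $d(u,v)=\tfrac{1}{2}\|\bar u - \bar v\|_1$ on each edge, and sum $w_e\,d(e)$ separately over $E^*$ and over $E\setminus E^*$. Plugging the resulting ratio $A/B = 4/(3+1/(k-1))$ into the display above gives
\begin{equation*}
    \frac{OPT - A}{B} \;=\; \alpha + (\alpha-1)\cdot\frac{A}{B} \;=\; \frac{4}{3 + 1/(k-1)}.
\end{equation*}
Given $\varepsilon>0$, taking $\gamma = 4/(3+1/(k-1)) - \varepsilon/2$ and $\varepsilon' = \varepsilon/2$ in Claim~\ref{MC_claim1} then produces a $(4/(3+1/(k-1))-\varepsilon)$-stable instance whose CKR relaxation is not integral, as required.

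The main obstacle is the concrete calculation of $A$ and $B$ inside the Freund--Karloff construction: one has to unpack the gadget, verify that the proposed cut $E^*$ is indeed the unique integral optimum, and carry out the weighted $\ell_1$ summation over the two edge classes to confirm that $A/B$ equals $4/(3+1/(k-1))$ rather than something smaller. A secondary subtlety is that the improvement over Claim~\ref{IG_stability_claim} is tightest when $d$ is chosen to maximize $A/B$; if the canonical Freund--Karloff LP optimum is not unique, I would either select an optimum that maximizes $A$ or absorb a vanishing tie-breaking perturbation of the weights outside $E^*$ into the $\varepsilon$ slack, neither of which affects the final ratio.
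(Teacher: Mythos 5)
Your proposal is correct and follows essentially the same route as the paper: scale down the weights on $E^*$ via Claim~\ref{MC_claim1}, reuse the Freund--Karloff fractional solution, and observe that the fractional mass on $E^*$ is also scaled by $1/\gamma$, which yields the non-integrality condition $\gamma < (OPT-A)/B$ rather than the cruder $\gamma<\alpha$ of Claim~\ref{IG_stability_claim}. The computation you defer checks out exactly as the paper does it: since $d(e)=\tfrac12$ on every edge of the Freund--Karloff gadget, $(OPT-A)/B = w(E^*)/w(E\setminus E^*) = 4(k-1)/(3k-2) = 4/(3+\tfrac{1}{k-1})$.
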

\begin{proof}
We use the Freund-Karloff construction \cite{DBLP:journals/ipl/FreundK00}, that is, for any $k$, we construct the graph $G = (V, E, w)$, where the set of vertices is $V = \{1, ..., k\} \cup \{(i,j): 1 \leq i < j \leq k\}$, and the set of edges is $E = E_1 \cup E_2$, $E_1 = \left\{ [i, (i,j)], [j, (i,j)]: 1 \leq i < j \leq k \right\}$ and $E_2 = \left\{ [(i,j), (i',j')]: i < j, i' < j', |\{i,i',j,j'\}| = 3 \right\}$. Here, we use the notation $[u,v]$ to denote an edge, instead of the standard $(u,v)$, so as to avoid confusion with the tuples used to describe the vertices. The set of terminals is $T = \{1, .., k\} \subset V$. The weights are set in the same way as in the Freund and Karloff construction, i.e.~the edges in $E_1$ all have weight 1 and the edges in $E_2$ all have weight $w = \frac{3}{2k}$. Freund and Karloff proved that by setting the weights in this way, the graph has an optimal solution that assigns every vertex $(i,j), i < j$, to terminal $i$. Let $E^* \subseteq E$ be the edges cut by this solution. We have $OPT = w(E^*) = \binom{k}{2} + \frac{3}{2k} \cdot 2 \binom{k}{3} = (k - 1)^2$. They also proved that an optimal fractional solution assigns each vertex $(i, j)$ to the vector $(e_i + e_j) / 2$, and, thus, the (fractional) length of each edge $e \in E$ is $d(e) = \frac{1}{2}$. This implies that $OPT_{LP} = \frac{1}{2} \sum_{e \in E} w_e = \frac{1}{2} \cdot \left(2 \binom{k}{2} + \frac{3}{2k} \cdot 3 \binom{k}{3} \right) = OPT / \frac{8}{7 + \frac{1}{k-1}}$.

We now scale the weights of all edges in $E^*$ down by a factor $\gamma > 1$, and, by Claim \ref{MC_claim1}, obtain a $(\gamma - \varepsilon)$-stable instance $G^{(E^*, \gamma)}$, whose unique optimal solution is $E^*$. The cost of this optimal solution is $OPT_\gamma = \frac{1}{\gamma} \cdot OPT$. We consider the same fractional solution that assigns every node $(i,j)$ to the vector $(e_i + e_j) / 2$. The fractional cost now is:
\begin{equation*}
    X^{(E^*,\gamma)} = \frac{1}{2} \left[\frac{1}{\gamma} \cdot \binom{k}{2} + \frac{3}{2\gamma k} \cdot 2\binom{k}{3} \right] + \frac{1}{2}\left[ \binom{k}{2} + \frac{3}{2k} \binom{k}{3} \right].
\end{equation*}
We want to maintain non-integrality, i.e.~we want $OPT_\gamma > X^{(E^*, \gamma)}$. Thus, we must have
\begin{equation*}
\begin{split}
    \frac{1}{2\gamma} (k - 1)^2 > \frac{1}{8} (k - 1) (3k - 2), \;\;\textrm{ which gives}\;\; \gamma < \frac{4(k - 1)}{3k - 2}.
\end{split}
\end{equation*}
This implies that, for every $\varepsilon > 0$, there exist $\left(\frac{4}{3 + \frac{1}{k-1}} - \varepsilon \right)$-stable instances of Minimum Multiway Cut with $k$ terminals that are not integral with respect to the CKR relaxation.
\end{proof}

%%%%%%%%%%%%%%%%%%%%%%%%%%%%%%%%%%%%%%%%%%%%%%%%%%%%%%%%%%%%%%%%%%%%%%%%%%%%%%%%%%%%%%%%%%%%%%%%%%%%%%%%%%%%%

\section{The Node Multiway Cut problem}\label{sec:node-mc}

We first define the problem.
\begin{definition}[Node Multiway Cut]
Let $G = (V, E)$ be a connected undirected graph and let $T = \{s_1, ..., s_k\} \subseteq V$ be a set of terminals such that for every $i \neq j$, $(s_i, s_j) \notin E$. In the Node Multiway Cut problem, we are given a function $w: V \to \R_{>0}$ and the goal is to remove the minimum weight set of vertices $V' \subseteq V \setminus T$ such that in the induced graph $G' = G[V \setminus V']$, there is no path between any of the terminals.
\end{definition}

The Node Multiway Cut problem is a harder problem than the Edge Multiway Cut problem. In particular, the Edge Multiway Cut problem reduces in an approximation preserving fashion to the Node Multiway Cut problem~\cite{DBLP:journals/jal/GargVY04}. The problem is polynomially solvable for $k = 2$ and APX-hard for $k \geq 3$. For every $k \geq 3$, for Node Multiway Cut, a $2(1 - 1/k)$-approximation algorithm is known~\cite{DBLP:journals/jal/GargVY04}, and the same work also proves that the standard LP relaxation (see Figure~\ref{fig:MC-LP}) always has a half-integral optimal solution. Finally, in~\cite{DBLP:journals/jal/GargVY04} it is shown that there is an approximation-preserving reduction from Minimum Vertex Cover to the Minimum Node Multiway Cut problem, which implies that, assuming $\P \neq \NP$, there is no $(\sqrt{2} - \varepsilon)$-approximation algorithm for Node Multiway Cut~\cite{DBLP:journals/eccc/KhotMS18}, and assuming UGC, there is no $(2 - \varepsilon)$-approximation algorithm~\cite{DBLP:journals/jcss/KhotR08}.

Regarding stability, we first observe that it is straightforward to reprove the theorem of [MMV14] (see Theorem~\ref{thm:MMV-original}) in the setting of Node Multiway Cut, and in particular, one can easily prove that it suffices to obtain an $(\alpha, \beta)$-rounding for a half-integral optimal solution, since such a solution always exists. We now give such a rounding for the standard LP relaxation for Node Multiway Cut (see Figure~\ref{fig:MC-LP}) that satisfies $\alpha \beta = k - 1$, where $k$ is the number of terminals.

Let $G = (V, E, w)$, $T  = \{s_1, ..., s_k\} \subseteq V$, be an instance of Node Multiway Cut. The standard LP relaxation is given in Figure~\ref{fig:MC-LP}. The LP has one indicator variable for each vertex $u \in V$. For each pair of terminals $s_i$ and $s_j$, $i < j$, let $\mathcal{P}_{ij}$ denote the set of all paths between $s_i$ and $s_j$. Let $\mathcal{P} = \bigcup_{i < j} \mathcal{P}_{ij}$.

\begin{figure}
\begin{align*}
    \min: \quad          & \sum_{u \in V \setminus T} w_u x_u \\
    \textrm{s.t.:} \quad & \sum_{u \in P} x_u \geq 1, && \textrm{for all } P \in \mathcal{P},\\
                         & x_{s_{i}} = 0,                         && \textrm{for all } i \in [k],\\
                         & x_u \in [0,1],                         && \textrm{for all } u \in V.
\end{align*}
\caption{The standard LP relaxation for Node Multiway Cut.}
\label{fig:MC-LP}
\end{figure}

We now present a rounding scheme for the LP (Algorithm~\ref{alg:half-integral-rounding-mc}) that only works for half-integral solutions. Let $\{x_u\}_{u \in V}$ be a half-integral optimal solution for the LP of Figure~\ref{fig:MC-LP}. Let $V_0 = \{u \in V: x_u = 0\}$, $V_{1/2} = \{u \in V: x_u = 1/2\}$ and $V_1 = \{u \in V: x_u = 1\}$. Since $x$ is half-integral, we have $V = V_0 \cup V_{1/2} \cup V_1$. For a path $P$, let $len(P) = \sum_{u \in P} x_u$. Let $\mathcal{P}_{uv}$ denote the set of all paths between two vertices $u$ and $v$. We define $d(u,v) = \min_{P \in \mathcal{P}_{uv}} len(P)$; we note that this function is not an actual metric, since we always have some $u \in V$ with $d(u,u) > 0$. We consider the following rounding scheme (see Algorithm~\ref{alg:half-integral-rounding-mc}).

\begin{algorithm}[h]
\begin{enumerate}
    \item Let $G' = G[V_0 \cup V_{1/2}]$  (if graph $G'$ has more than one connected component, we\\ apply the rounding scheme on each connected component, separately).
    \item For each $i \in [k]$, let $B_i = \{u \in V_0: d(s_i, u) = 0\}$ and $\delta(B_i) = \{u \in V_{1/2}: \exists v \in B_i \textrm{ such that }(u,v) \in E\}$ \\(we note that the function $d$ is computed separately in each connected component\\ of $G'$).
    \item Pick uniformly random $j^* \in [k]$.
    \item Return $X: = V_1 \cup (\bigcup_{i \neq j^*} \delta(B_i))$.
\end{enumerate}
\caption{An $(\alpha, \beta)$-rounding for half-integral solutions for Node Multiway Cut.}
\label{alg:half-integral-rounding-mc}
\end{algorithm}

\begin{theorem}
Algorithm~\ref{alg:half-integral-rounding-mc} is an $(\alpha, \beta)$-rounding for Minimum Node Multiway Cut for half-integral optimal solutions, for some $\alpha$ and $\beta$, with $\alpha \beta = k - 1$. More precisely, given an optimal half-integral solution $\{x_u\}_{u \in V}$, it always returns a feasible solution $X \subseteq V \setminus T$ such that for each vertex $u \in V \setminus T$, the following two conditions are satisfied:
\begin{enumerate}
    \item $\Pr[u \in X] \leq \alpha \cdot x_u$,
    \item $\Pr[u \notin X] \geq \frac{1}{\beta} \cdot (1 - x_u)$,
\end{enumerate}
with $\alpha = \frac{2(k - 1)}{k}$ and $\beta = \frac{k}{2}$.
\end{theorem}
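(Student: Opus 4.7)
The core of the argument will rest on two structural consequences of LP feasibility and half-integrality: (i) the balls $B_1,\dots,B_k$ are pairwise disjoint, and (ii) every vertex $u\in V_{1/2}$ belongs to \emph{at most one} boundary $\delta(B_i)$. Both facts follow from the path constraints in the LP. Indeed, if $u\in B_i\cap B_j$ for $i\neq j$, then concatenating the length-$0$ paths from $s_i$ to $u$ and from $u$ to $s_j$ (all of whose vertices lie in $V_0$) gives a path $P\in\mathcal P_{ij}$ with $\operatorname{len}(P)=0$, contradicting $\sum_{v\in P} x_v\geq 1$. Similarly, if $u\in\delta(B_i)\cap\delta(B_j)$, then $u\in V_{1/2}$ is adjacent to vertices $v_i\in B_i$ and $v_j\in B_j$; concatenating a zero-length path in $V_0$ from $s_i$ to $v_i$, the edge $(v_i,u)$, the edge $(u,v_j)$ and a zero-length path from $v_j$ to $s_j$ yields a path in $\mathcal{P}_{ij}$ of LP-length $\tfrac12$, again a contradiction.

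Next I will establish feasibility of $X = V_1\cup\bigcup_{i\neq j^*}\delta(B_i)$. Fix two terminals $s_a,s_b$ and any $s_a$–$s_b$ path $P$ in $G$. If $P$ uses a vertex of $V_1$, it is cut. Otherwise all vertices of $P$ lie in $V_0\cup V_{1/2}$, i.e.~in $G'$. Since $s_a\in B_a$ and $s_b\in B_b$, and since any vertex of $V_0$ reachable from $B_a$ through a $V_0$-only sub-path is itself in $B_a$ (by the definition of $B_a$), the first vertex of $P$ lying outside $B_a$ must be a vertex of $V_{1/2}$ adjacent to $B_a$, i.e.~a vertex of $\delta(B_a)$. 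Symmetrically, $P$ must enter $B_b$ through $\delta(B_b)$. Since $a\neq b$, at least one of $a,b$ differs from $j^*$, so at least one of $\delta(B_a),\delta(B_b)$ is contained in $X$, and $P$ is cut.

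Finally I verify the two probabilistic conditions by a case analysis on $x_u$. For $u\in V_1$: $\Pr[u\in X]=1\leq \tfrac{2(k-1)}{k}=\alpha x_u$ (using $k\geq 2$), and the co-approximation condition is trivial since $1-x_u=0$. For $u\in V_0$: $u$ never enters $X$, so $\Pr[u\in X]=0=\alpha x_u$ and $\Pr[u\notin X]=1\geq \tfrac{2}{k}=\tfrac{1}{\beta}(1-x_u)$. For $u\in V_{1/2}$: by structural fact (ii), $u$ lies in $\delta(B_i)$ for at most one index $i$. If no such index exists, then $u\notin X$ deterministically, and both conditions hold easily. If there is a unique such $i$, then $u\in X$ iff $j^*\neq i$, hence
\[
\Pr[u\in X]=\tfrac{k-1}{k}=\tfrac{2(k-1)}{k}\cdot\tfrac12=\alpha x_u,\qquad \Pr[u\notin X]=\tfrac{1}{k}=\tfrac{2}{k}\cdot\tfrac12=\tfrac{1}{\beta}(1-x_u),
\]
both with equality. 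Combining these estimates gives $\alpha\beta=k-1$, as claimed.

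\textbf{Main obstacle.} The only delicate point is the structural claim (ii): that a half-integral vertex cannot lie on the boundary of two different balls. Everything else is then a direct case analysis. The argument for (ii) is short but crucially uses both half-integrality (so that $\delta(B_i)\subset V_{1/2}$ is well defined) and the path-covering constraints (to forbid a length-$\tfrac12$ terminal-to-terminal path). I also want to be careful about a minor technical point: the function $d$ and the balls $B_i$ should be defined within each connected component of $G'$ separately (as the algorithm specifies), since otherwise terminals in different components of $G'$ need not lie in well-defined balls; but this causes no trouble because different components are already disconnected after removing $V_1$, and the feasibility argument above handles them component-wise.
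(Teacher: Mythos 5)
Your proof is correct and follows essentially the same route as the paper's: the same partition into $V_0$, $V_{1/2}$, $V_1$, the same boundary-based feasibility argument, and the same case analysis for the two probabilistic conditions. The only difference is minor: the paper additionally shows, using optimality of the LP solution, that every vertex of $V_{1/2}$ lies in \emph{exactly} one $\delta(B_i)$, whereas you prove only the ``at most one'' direction and observe that the ``none'' case satisfies both conditions trivially --- a small simplification which in fact shows the rounding works for any feasible (not necessarily optimal) half-integral solution.
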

\begin{proof}
We first show that $X$ is always a feasible Multiway Cut. It is easy to see that $s_i \notin X$ for every $i \in [k]$. Let's fix now a path $P$ between $s_i$ and $s_j$. If there exists a vertex $u \in P$ such that $x_u = 1$, then clearly the algorithm ``cuts" this path, since $X$ contains all vertices whose LP value is 1. So, let's assume that for every $u \in P$ we have $x_u \in \{0,1/2\}$. Observe that the whole path $P$ is contained in the graph $G'$. Since $x_{s_t} = 0$ for every $t \in [k]$, we have $s_i \in B_i$ and $s_j \in B_j$ and we know that at least one of the sets $\delta(B_i)$ or $\delta(B_j)$ will be included in the solution. The LP constraints imply that $\sum_{q \in P} x_q \geq 1$. Thus, there are at least 2 vertices in $P$ whose LP value is exactly $1/2$. So, we start moving along the path $P$ from $s_i$ to $s_j$, and let $q_1 \in P$ be the first vertex with $x_{q_1} = 1 / 2$. Similarly, we start moving along the path from $s_j$ to $s_i$, and let $q_2 \in P$ be the first vertex with $x_{q_2} = 1 / 2$. Our assumption implies that $q_1 \neq q_2$. Clearly, $d(s_i, q_1) = d(s_j, q_2) = 1/2$, and it is easy to see that $q_1 \in \delta(B_i)$ and $q_2 \in \delta(B_j)$. Thus, at least one of the vertices $q_1$ or $q_2$ will be included in the final solution $X$. We conclude that the algorithm always returns a feasible solution.  

We will now show that the desired properties of the rounding scheme are satisfied with $\alpha\beta = k - 1$. For that, we first prove that $\bigcup_{i \in [k]} \delta(B_i) = V_{1/2}$, and moreover, each $u \in V_{1/2}$ belongs to exactly one set $\delta(B_i)$. By definition $\bigcup_{i \in [k]} \delta(B_i) \subseteq V_{1/2}$. Let $u \in V_{1/2}$. It is easy to see that there must exist at least one path $P$ between two terminals such that $u \in P$ and $x_v < 1$ for every $v \in P$, since otherwise we could simply set $x_u = 0$ and still get a feasible solution with lower cost. Let's assume now that $u \notin \bigcup_{i \in [k]} \delta(B_i)$. This means that for any path $P$ between two terminals $s_i$ and $s_j$ such that $u \in P$ and $x_v < 1$ for every $v \in P$, if we start moving from $s_i$ to $s_j$, we will encounter at least one vertex $q_1 \neq u$ with $x_{q_1} = 1/2$, and similarly, if we start moving from $s_j$ to $s_i$, we will encounter at least one vertex $q_2 \neq u$ with $x_{q_2} = 1/2$. Since this holds for any two terminals $s_i$ and $s_j$, it is easy to see that we can set $x_u = 0$ and get a feasible solution with a smaller cost. Thus, we get a contradiction. This shows that $\bigcup_{i \in [k]} \delta(B_i) = V_{1/2}$. We will now prove that for every $u \in V_{1/2}$ there exists a unique $i \in [k]$ such that $u \in \delta(B_i)$. Suppose that $u \in \delta(B_i) \cap \delta(B_j)$, for some $i \neq j$. Let $q_1 \in B_i$ such that $(u, q_1) \in E$, and let $q_2 \in B_j$ such that $(u, q_j) \in E$. Let $P_1$ be a shortest path between $s_i$ and $q_1$, and let $P_2$ be a shortest path between $s_j$ and $q_2$. We now consider the path $P' = P_1 \cup \{u\} \cup P_2$. This is indeed a valid path in $G'$ between $s_i$ and $s_j$. It is easy to see that $\sum_{v \in P'} x_v = 1/2$, and so an LP constraint is violated. Again, we get a contradiction, and thus, we conclude that for each $u \in V_{1/2}$ there exists exactly one $i \in [k]$ such that  $u \in \delta(B_i)$.

We are almost done. We will now verify that the two conditions of the rounding scheme are satisfied. Let $u \in V \setminus T$. If $x_u = 1$, then $u$ is always picked and we have $\Pr[u \textrm{ is picked}] = 1 = x_u$ and $\Pr[u \textrm{ is not picked}] = 0 = 1 - x_u$. If $x_u = 0$, then the vertex $u$ will never be picked, and so $\Pr[u \textrm{ is picked}] = 0 = x_u$ and $\Pr[u \textrm{ is not picked}] = 1 = 1 - x_u$. So, let's assume now that $x_u = 1/2$. By the previous discussion, $u \in \delta(B_i)$ for some unique $i \in [k]$. Since each set $\delta(B_i)$ is not included in the solution with probability $1/k$, we get that
\begin{equation*}
    \Pr[u \textrm{ is not picked}] = \frac{1}{k} = \frac{2}{k} \cdot \frac{1}{2} = \frac{2}{k} \cdot (1 - x_u),
\end{equation*}
and
\begin{equation*}
    \Pr[u \textrm{ is picked}] = \frac{k - 1}{k} = \frac{2(k-1)}{k} \cdot \frac{1}{2} = \frac{2(k-1)}{k} \cdot x_u.
\end{equation*}
Thus, the rounding scheme satisfies the desired properties with $\alpha\beta = \frac{2(k-1)}{k} \cdot \frac{k}{2} = k-1$.
\end{proof}

The above theorem, combined with the adaptation of Theorem~\ref{thm:MMV-original} for the problem directly gives the following result.
\begin{theorem}
The standard LP relaxation for Node Multiway Cut is integral for $(k - 1)$-stable instances, where $k$ is the number of terminals.
\end{theorem}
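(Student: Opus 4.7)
The plan is to follow the framework of Theorem~\ref{thm:stable-almost-integral}, adapted to the half-integral setting of Node Multiway Cut, combining Algorithm~\ref{alg:half-integral-rounding-mc} (an $(\alpha,\beta)$-rounding with $\alpha\beta = k-1$ for half-integral LP solutions) with the half-integrality-of-extreme-points result of Garg--Vazirani--Yannakakis~\cite{DBLP:journals/jal/GargVY04}: every vertex of the LP polytope, and hence every vertex of the optimal face, is half-integral.

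Let $\mathcal{I}$ be a $(k-1)$-stable instance with unique optimal cut $X^*$. I would argue by contradiction: suppose the LP admits some non-integral optimum $y$. Expressing $y$ as a convex combination of vertices of the optimal face (all of which are half-integral), at least one such vertex $x$ must be non-integral---otherwise every vertex would correspond to an integral optimum and, by uniqueness of $X^*$ guaranteed by stability (Definition~\ref{def:equiv-stab}), would equal $\mathbf{1}_{X^*}$, forcing $y = \mathbf{1}_{X^*}$ and contradicting non-integrality of $y$. Thus there exists a half-integral non-integral LP optimum $x$, with $\sum_u w_u x_u = OPT_{LP} \leq w(X^*)$. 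I then apply Algorithm~\ref{alg:half-integral-rounding-mc} to $x$ to produce a random feasible cut $X'$; since $x$ has at least one coordinate equal to $1/2$ and disagrees with $\mathbf{1}_{X^*}$, a brief check shows $\Pr[X' \neq X^*] > 0$. Bilu--Linial stability gives the strict inequality $(k-1)\cdot w(X^* \setminus X') < w(X' \setminus X^*)$ whenever $X' \neq X^*$, so taking expectations yields $(k-1)\cdot \E[w(X^* \setminus X')] < \E[w(X' \setminus X^*)]$. Expanding the left side via the co-approximation guarantee and the right side via the approximation guarantee, using $(k-1)/\beta = \alpha$, and canceling $\alpha$, this reduces to $\sum_{u \in X^*} w_u(1 - x_u) < \sum_{u \in (V\setminus T)\setminus X^*} w_u x_u$, which rearranges to $w(X^*) < OPT_{LP}$, contradicting $OPT_{LP} \leq w(X^*)$.

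The main obstacle relative to the Edge Multiway Cut case is that one cannot use the convex-combination trick $(1-\varepsilon)\mathbf{1}_{X^*} + \varepsilon \bar{u}^{FRAC}$ employed in the proof of Theorem~\ref{thm:stable-almost-integral}, because Algorithm~\ref{alg:half-integral-rounding-mc} is defined only for half-integral inputs. Half-integrality of extreme points provides the necessary replacement: it guarantees that the optimal face itself contains a half-integral non-integral vertex whenever the LP is not integral, so one can apply the rounding directly to such a vertex. The only additional care-point is verifying $\Pr[X' \neq X^*] > 0$, which follows from the deterministic behavior of the rounding on coordinates with $x_u \in \{0,1\}$ together with the $1/k$ versus $(k-1)/k$ randomness on coordinates with $x_u = 1/2$.
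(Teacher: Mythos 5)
Your proposal is correct and follows the same route the paper takes: it invokes the MMV-style expectation argument with the half-integral $(\alpha,\beta)$-rounding of Algorithm~\ref{alg:half-integral-rounding-mc}, using half-integrality in place of the $\varepsilon$-close convex-combination trick of Theorem~\ref{thm:stable-almost-integral}. The paper only asserts that ``it suffices to obtain an $(\alpha,\beta)$-rounding for a half-integral optimal solution, since such a solution always exists''; you correctly supply the one detail this glosses over, namely that non-integrality of the LP together with uniqueness of $X^*$ forces the optimal face to have a half-integral \emph{non-integral} vertex, and you verify $\Pr[X'\neq X^*]>0$, so the argument is complete.
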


Mimicking the techniques of [MMV14] (or Section~\ref{sec:weakly-stable-MC}), we can also prove the following theorem about weakly stable instances.
\begin{theorem}
There is a polynomial-time algorithm that, given a $(k - 1 + \delta, \mathcal{N})$-weakly-stable instance of Minimum Node Multiway Cut with $n$ vertices, $k$ terminals and integer weights, finds a solution $X' \in \mathcal{N}$ (for every $\delta \geq 1/\poly(n) > 0$).
\end{theorem}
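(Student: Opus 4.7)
The proof will closely parallel the proof of Theorem~\ref{thm:multiway-alphabeta-weak} for weakly stable Edge Multiway Cut, with the half-integrality of the node LP of Figure~\ref{fig:MC-LP} (established in~\cite{DBLP:journals/jal/GargVY04}) taking the role of the $\varepsilon$-local rounding used in the edge setting. I would start by establishing the direct analog of Lemma~\ref{lem:alt-def-weak-stability-multiway}: for any $(\gamma, \mathcal{N})$-weakly-stable instance with unique optimal node cut $X^*$, every feasible $X' \notin \mathcal{N}$ satisfies $\gamma \cdot w(X^* \setminus X') < w(X' \setminus X^*)$. The same adversarial perturbation (multiplying $w_v$ by $\gamma$ for $v \in X^* \setminus X'$ and leaving other weights unchanged) carries over verbatim from edge weights to vertex weights.

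Next I would prove an iterative-improvement lemma analogous to Lemma~\ref{lem:improve-multiway}. Given an arbitrary feasible cut $X^\circ$, define modified weights $w_v' = w_v$ for $v \in X^\circ$ and $w_v' = (k-1) w_v$ otherwise, solve the LP of Figure~\ref{fig:MC-LP} with these weights, and obtain (by half-integrality) an optimum $\{x_v\}$ with values in $\{0, 1/2, 1\}$. This solution can be fed directly into Algorithm~\ref{alg:half-integral-rounding-mc}, producing a polynomial-support distribution of feasible cuts $X'$. Using $\Pr[v \in X'] \leq \alpha x_v$ and $\Pr[v \notin X'] \geq \beta^{-1}(1-x_v)$ with $\alpha\beta = k-1$, a direct calculation yields
\begin{equation*}
    \mathbb{E}[w(X^\circ) - w(X')] \geq \tfrac{1}{\beta}\bigl(w(X^\circ) - OPT'_{LP}\bigr).
\end{equation*}
Combining $OPT'_{LP} \leq w'(X^*) = w(X^\circ \cap X^*) + (k-1) w(X^* \setminus X^\circ)$ with the weak stability inequality above then gives $\mathbb{E}[w(X^\circ) - w(X')] \geq \tau \cdot (w(X^\circ) - w(X^*))$ for $\tau = \delta/(\beta(k-1+\delta)) \geq 1/\poly(n)$. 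Since the support of the rounding has size $O(k)$, we can enumerate it and deterministically identify the best $X'$.

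The full algorithm starts from any feasible cut $X^{(0)}$ and iterates this step $T = O(\tau^{-1} \log(nW))$ times; at each iteration it either returns a strictly cheaper cut or declares the current cut to be in $\mathcal{N}$. The telescoping argument from the proof of Theorem~\ref{thm:multiway-alphabeta-weak}, together with integrality of weights (which forces any strict improvement to be by at least $1$ while the $(1-\tau)^T$ contraction would shrink the gap below $1$), guarantees that the stopping condition is triggered at some iteration and that the corresponding output lies in $\mathcal{N}$.

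The main subtlety relative to the edge case is that Algorithm~\ref{alg:half-integral-rounding-mc} has no ``$\varepsilon$-local'' variant, so we cannot perturb $X^\circ$ by a tiny fractional amount and round that perturbation as in Lemma~\ref{lem:improve-multiway}; instead the rounding is applied directly to the modified-LP optimum. This forces a separate case analysis for when the LP optimum coincides with $X^\circ$ (so the rounding cannot strictly improve it). Here $w'(X^\circ) \leq w'(X^*)$ gives $w(X^\circ \setminus X^*) \leq (k-1) w(X^* \setminus X^\circ)$, which together with the strict weak stability inequality $w(X^\circ \setminus X^*) > (k-1+\delta) w(X^* \setminus X^\circ)$ forces $w(X^* \setminus X^\circ) = 0$, hence $X^* \subseteq X^\circ$, and then (positive integer weights) $X^\circ = X^* \in \mathcal{N}$. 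Verifying this case, together with checking that the $(\alpha,\beta)$ guarantees of Algorithm~\ref{alg:half-integral-rounding-mc} hold on \emph{any} half-integral feasible solution (rather than only the optimum of the original LP), is the only real obstacle.
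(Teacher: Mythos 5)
Your proposal is correct and is essentially the proof the paper intends (the paper itself only says to mimic the edge-case argument of Section~\ref{sec:weakly-stable-MC}); dropping the $\varepsilon$-interpolation and rounding the half-integral optimum of the reweighted LP directly is exactly the right simplification, and it even yields the better contraction factor $\tau = \delta/(\beta(\alpha\beta+\delta))$ without the $\varepsilon$. The ``main subtlety'' you flag at the end is not actually an obstacle: when $X^\circ \notin \mathcal{N}$ the chain of inequalities already forces $OPT'_{LP} < w'(X^\circ)$, so the degenerate case where the modified-LP optimum coincides with $X^\circ$ cannot occur, and the $(\alpha,\beta)$ guarantees of Algorithm~\ref{alg:half-integral-rounding-mc} do hold here because the only place its analysis uses optimality is with respect to whichever (strictly positive) weights the LP was solved for --- which is satisfied by the optimum of the reweighted LP.
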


We now prove that the above analysis is tight, i.e.~there are $(k - 1 - \varepsilon)$-stable instances for which the LP is not integral.
\begin{theorem}
For every $\varepsilon > 0$, there exist $(k - 1-\varepsilon)$-stable instances of the Node Multiway Cut problem with $k$ terminals for which the LP of Figure~\ref{fig:MC-LP} is not integral.
\end{theorem}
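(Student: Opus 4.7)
The plan is to exhibit an explicit family of $(k-1-\varepsilon)$-stable instances whose LP optimum is strictly fractional. I will use the ``clique gadget'': take $k$ terminals $s_1,\dots,s_k$ together with $k$ non-terminals $v_1,\dots,v_k$ that form a clique, and add an edge $(s_i,v_i)$ for each $i\in[k]$. Assign weights $w(v_1)=k-1-\varepsilon/2$ and $w(v_i)=1$ for $i\geq 2$ (weights on the terminals are irrelevant, as they cannot be cut).

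First I will characterize all feasible Node Multiway Cuts $X\subseteq\{v_1,\dots,v_k\}$. Since $s_i$'s only non-terminal neighbour is $v_i$ and the $v_i$'s form a clique, retaining two non-terminals $v_i,v_j$ leaves a surviving path $s_i{-}v_i{-}v_j{-}s_j$. Hence $X$ must contain at least $k-1$ of the $v_i$'s, so $X$ is either $X_0=\{v_1,\dots,v_k\}$ or $X_j=\{v_1,\dots,v_k\}\setminus\{v_j\}$ for some $j\in[k]$. A direct computation shows the unique cheapest feasible cut is $X^{\ast}:=X_1$, of weight $k-1$.

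Next I will verify $(k-1-\varepsilon)$-stability using the equivalent formulation $\gamma\cdot w(X^{\ast}\setminus X') < w(X'\setminus X^{\ast})$ for every feasible $X'\neq X^{\ast}$, which is the analogue of Definition~\ref{def:equiv-stab} for vertex weights. Against $X_j$ with $j\geq 2$, the symmetric differences are $\{v_j\}$ and $\{v_1\}$, so the condition becomes $(k-1-\varepsilon)\cdot 1 < k-1-\varepsilon/2$, which holds strictly. Against $X_0$ the condition is trivial since $X^{\ast}\setminus X_0=\emptyset$. Thus the instance is $(k-1-\varepsilon)$-stable.

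Finally I will exhibit a fractional LP solution beating the integer optimum: set $x_{v_i}=1/2$ for every $i$. Because $\deg(s_i)=1$ for each terminal, every path between two terminals $s_i$ and $s_j$ necessarily traverses both $v_i$ and $v_j$, so each path constraint is satisfied with LP-length at least $1$. The value of this assignment is $\tfrac12\bigl(w(v_1)+\sum_{i\geq 2}w(v_i)\bigr)=k-1-\varepsilon/4$, strictly less than the integer optimum $k-1$; hence no integer solution can be LP-optimal. The main obstacle, relative to the Edge Multiway Cut case, is that the scaling trick of Claim~\ref{IG_stability_claim} is unavailable here: the integrality gap of the Node Multiway Cut LP is only $2(1-1/k)$, so no gap-based reduction can yield $(k-1-\varepsilon)$-stable non-integral instances. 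The insight exploited above is that Bilu--Linial stability needs only defeat alternative \emph{integer} cuts, and in the clique gadget those alternatives differ from $X^{\ast}$ only by a single swap of which clique vertex is retained; tuning $w(v_1)$ just below $k-1$ makes every such swap $(k-1-\varepsilon)$-times more expensive, while still leaving enough slack for the symmetric half-integral assignment to undercut the integer optimum.
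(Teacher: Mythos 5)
Your proof is correct and follows essentially the same approach as the paper: one non-terminal gate per terminal, with $k-1$ gates of weight $1$ and one of weight $k-1-\varepsilon/2$, so that the unique optimum is the $k-1$ unit gates, every alternative integral cut swaps in the heavy gate (giving exactly the $(k-1-\varepsilon)$ stability margin), and the all-$\tfrac12$ fractional solution has value $k-1-\varepsilon/4 < k-1$. The only difference is cosmetic: the paper connects the gates through a central hub $c$ of huge weight $k^3$ (and sets $x_c=0$ in the fractional solution, plus one extra case ruling out the cut $\{c\}$), whereas your clique on the non-terminals achieves the same connectivity without the hub and hence without that extra case.
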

\begin{proof}
We consider a variation of the star graph, as shown in Figure~\ref{fig:mc-bad-example}. The graph $G = (V, E, w)$ is defined as follows:
\begin{enumerate}
    \item $V = \{s_1, ..., s_k\} \cup \{u_1, ..., u_k\} \cup \{c\}$, with $T = \{s_1, ..., s_k\}$ being the set of terminals. Observe that $|V| = 2k + 1$.
    \item $E = \{(c,u_i): i \in [k]\} \cup \{(s_i, u_i): i\in [k]\}$.
    \item For each $i \in \{1, ..., k - 1\}$, we have $w_{u_i} = 1$. We also have $w_{u_k} = k - 1 - \frac{\varepsilon}{2}$ and $w_c = k^3$.
\end{enumerate}
\begin{figure}[h]
\begin{center}
\scalebox{0.7}{\input{./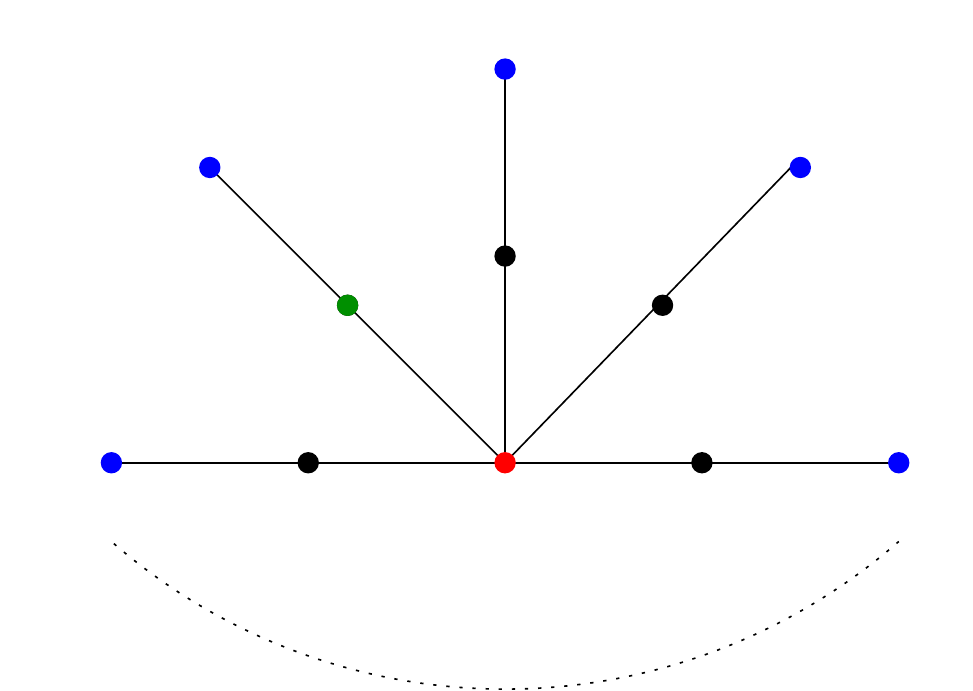_t}}
\end{center}
\caption{An integrality gap example of a stable instance of Node Multiway Cut.}
\label{fig:mc-bad-example}
\end{figure}

It is easy to see that there is unique optimal integral solution $X^* = \{u_i : 1 \leq i \leq k - 1\}$ of cost $OPT = k - 1$. It is also clear that any feasible solution must either remove vertex $c$ or must remove at least $k - 1$ vertices from the set $\{u_1, ..., u_k\}$. A minimal solution that contains $c$ is $X_c = \{c\}$. We have $(k - 1 - \varepsilon) w(X^* \setminus X_c) < (k - 1)^2$ and $w(X_c \setminus X^*) = k^3$. Let's consider now a solution that does not contain $c$. By the previous observations, we only have to consider the solutions $Y_i = \{u_1, ..., u_k\} \setminus \{u_i\}$, $1 \leq i \leq k - 1$, and $Y_0 = \{u_1, ..., u_k\}$. For any $Y_i$, $1 \leq i \leq k - 1$, we have $(k - 1 - \varepsilon) \cdot w(X^* \setminus X_i) = (k - 1 - \varepsilon) \cdot w_{u_i} = k - 1 - \varepsilon$ and $w(Y_i \setminus X^*) = w_{u_k} = k - 1 - \varepsilon / 2$. For $Y_0$ we have $(k - 1 - \varepsilon) \cdot w(X^* \setminus Y_0) = 0$ and $w(Y_i \setminus X^*) = w(u_k) = k - 1 - \varepsilon/2$. Thus, in all cases, the stability condition is satisfied with $\gamma = k - 1 - \varepsilon$.

We now look at the LP. Let $x_{u_i} = 1/2$ for every $i \in [k]$ and let $x_c = 0$. We also set $x_{s_i} = 0$ for every $i \in [k]$. Observe that this is a feasible solution. The objective function is equal to
\begin{equation*}
    \frac{k-1}{2} + \frac{k -1 - (\varepsilon/2)}{2} = k - 1 - (\varepsilon/4) < k - 1 = OPT.
\end{equation*}
Thus, the integrality gap is strictly greater than 1, and thus, the LP is not integral.
\end{proof}

Finally, we show that if there exists an algorithm for $\gamma$-stable instances of Node Multiway Cut, then there exists an algorithm for $\gamma$-stable instances of Vertex Cover. This reduction, combined with the results of the next chapter, implies very strong lower bounds on the existence of \textit{robust} algorithms for Node Multiway Cut.
\begin{theorem}
Let $\mathcal{A}$ be an algorithm for $\gamma$-stable instances of Minimum Node Multiway Cut. Then, there exists an algorithm $\mathcal{B}$ for $\gamma$-stable instances of Minimum Vertex Cover. Moreover, if $\mathcal{A}$ is robust, then $\mathcal{B}$ is robust.
\end{theorem}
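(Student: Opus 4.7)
The plan is to use the standard approximation-preserving reduction from Vertex Cover to Node Multiway Cut and verify that it preserves Bilu-Linial stability exactly. Given an instance $(G, w)$ of Minimum Vertex Cover with $G = (V, E)$ and $w: V \to \mathbb{R}_{>0}$, I will construct an instance $(G', w', T)$ of Minimum Node Multiway Cut as follows. For each edge $e = (u, v) \in E$, introduce two fresh terminal vertices $s_e$ and $t_e$, and add edges $(s_e, u)$ and $(v, t_e)$; retain all original edges of $G$. Formally, $V' = V \cup \{s_e, t_e : e \in E\}$, $T = \{s_e, t_e : e \in E\}$, and $E' = E \cup \{(s_e, u), (v, t_e) : e = (u,v) \in E\}$. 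Set $w'_v = w_v$ for $v \in V$ and assign terminals arbitrary positive weights (they cannot be removed). Note that no two terminals are adjacent, so $(G', w', T)$ is a legal Node Multiway Cut instance.

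Next, I would establish a weight-preserving bijection between feasible solutions of the two instances, restricted to subsets of $V$. For any edge $e = (u,v) \in E$, the only non-terminal vertices on any $s_e$-$t_e$ path are $u$ and $v$ (since the only neighbors of $s_e$ and $t_e$ in $G'$ are $u$ and $v$ respectively, and these go through the single edge $(u,v)$). Hence $C \subseteq V$ is a feasible Node Multiway Cut for $G'$ if and only if $C$ contains at least one endpoint of every edge $e \in E$, i.e.\ $C$ is a vertex cover of $G$; and $w'(C) = w(C)$. Moreover, any feasible Node Multiway Cut of $G'$ that uses a non-terminal vertex not in $V$ is impossible since $V' \setminus T = V$; thus every feasible solution is actually a subset of $V$. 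This yields an exact cost-preserving correspondence between feasible solutions, so unique optima correspond.

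For stability, observe that a $\gamma$-perturbation of the Node Multiway Cut instance is exactly a reweighting $w'': V \to \mathbb{R}_{>0}$ with $w''_v \in [w_v, \gamma \cdot w_v]$ for $v \in V$ (terminals contribute nothing to any feasible solution's cost), and the same object is a $\gamma$-perturbation of the Vertex Cover instance. Under the bijection above, the unique optimum of the perturbed Node Multiway Cut instance coincides with the unique optimum of the perturbed Vertex Cover instance. Therefore $(G, w)$ is a $\gamma$-stable Vertex Cover instance if and only if $(G', w', T)$ is a $\gamma$-stable Node Multiway Cut instance. Algorithm $\mathcal{B}$ now simply constructs $G'$ from $G$, runs $\mathcal{A}$, and if $\mathcal{A}$ returns a solution $C \subseteq V$, outputs $C$ as the vertex cover.

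For the robustness claim, suppose $\mathcal{A}$ is robust. If $\mathcal{A}$ returns a multiway cut $C$, then $C$ is an optimal multiway cut and hence (by the bijection) $C$ is an optimal vertex cover of $G$, which $\mathcal{B}$ outputs. If $\mathcal{A}$ instead reports that the Node Multiway Cut instance is not $\gamma$-stable, then by the equivalence of stability the Vertex Cover instance is not $\gamma$-stable either, so $\mathcal{B}$ can safely report that $(G, w)$ is not $\gamma$-stable. Thus $\mathcal{B}$ is robust. The argument is essentially bookkeeping once the reduction is set up correctly; the only subtlety to double-check is that every feasible Node Multiway Cut of $G'$ is contained in $V$, which follows immediately from the fact that $V' \setminus T = V$ in the construction.
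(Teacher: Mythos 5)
There is a genuine gap in your construction: the per-edge terminals break the claimed correspondence between multiway cuts and vertex covers. In your instance, every vertex $v$ of degree at least $2$ in $G$ has at least two terminals attached to it (one for each incident edge), and any two such terminals are joined by a length-two path whose only internal vertex is $v$. Hence \emph{every} feasible Node Multiway Cut must contain every vertex of degree $\geq 2$, regardless of the vertex-cover structure of $G$. Concretely, for the path $a\text{--}b\text{--}c$, the set $\{a,c\}$ is a vertex cover, but it is not a multiway cut in your $G'$ because $t_{(a,b)}$ and $s_{(b,c)}$ remain connected through $b$. So the ``if and only if'' in your second paragraph fails in the direction ``vertex cover $\Rightarrow$ multiway cut,'' the optima of the two instances need not coincide, and the stability and robustness claims that rest on the bijection collapse. (Your verification only considers paths between the two terminals $s_e, t_e$ of the \emph{same} edge; the problem is the pairs of terminals hanging off a common original vertex.)

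The fix is the reduction the paper actually uses (due to Garg, Vazirani and Yannakakis): attach exactly one terminal $s_i$ to each \emph{vertex} $u_i$ of $G$, with terminal set $T=\{s_1,\dots,s_n\}$. Then a set $X\subseteq V$ separates all terminal pairs if and only if no two surviving vertices of $G[V\setminus X]$ are connected by a path, which for the pair $(s_i,s_j)$ with $(u_i,u_j)\in E$ forces $X$ to cover that edge, and conversely a vertex cover leaves an independent set, so no two terminals can communicate. With that cost-preserving correspondence in place, the remainder of your argument --- that $\gamma$-perturbations of the two instances are the same objects on $V$, that stability transfers in both directions, and that robustness of $\mathcal{A}$ yields robustness of $\mathcal{B}$ --- is exactly the paper's argument and goes through verbatim.
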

\begin{proof}
We use the straightforward approximation-preserving reduction of Garg et al.~\cite{DBLP:journals/jal/GargVY04}. Let $G = (V, E, w)$ be a $\gamma$-stable instance of Minimum Vertex Cover, with $V = \{u_1, ..., u_n\}$. We construct $G' = (V', E', w')$, where $G'$ contains the whole graph $G$, and moreover, for each vertex $u_i \in V$, we create a terminal vertex $s_i$ and we connect it to $u_i$ with an edge $(s_i, u_i) \in E'$. As implied, the set of terminals is $T = \{s_1, ..., s_n\}$. The weights of non-terminal vertices remain unchanged. This is clearly a polynomial-time reduction. We will now prove that each feasible vertex cover $X$ of $G$ corresponds to a feasible Mulitway Cut of $G'$ of the same cost, and vice versa. To see this, let $X$ be a feasible vertex cover of $G$, and let's assume that there is a path between two terminals $s_i$ and $s_j$ in $G'[V' \setminus X]$. By construction, this means that there is a path between $u_i$ and $u_j$ in $G'[V' \setminus X]$, which implies that there is at least one edge in this path that is not covered. Thus, we get a contradiction. Since the weight function is unchanged, we also conclude that $w(X) = w'(X)$. Let now $X'$ be a feasible Multiway Cut for $G'$, and let's assume that $X'$ is not a vertex cover in $G$. This means that there is an edge $(u_i, u_j) \in E$ such that $\{u_i, u_j\} \cap X' = \emptyset$. This means that the induced graph $G'[V' \setminus X']$ contains the path $s_i - u_i - u_j - s_j$, and so we get a contradiction, since we assumed that $X'$ is a feasible Node Multiway Cut. Again, the cost is clearly the same, and thus, we conclude that there is a one-to-one correspondence between vertex covers of $G$ and multiway cuts of $G'$.

Since the cost function is exactly the same, it is now easy to prove that a $\gamma$-stable instance $G$ of Vertex Cover implies that $G'$ is a $\gamma$-stable instance of Multiway Cut, and moreover, if $G'$ is not $\gamma$-stable, then $G$ cannot be $\gamma$-stable to begin with. Thus, we can run algorithm $\mathcal{A}$ on instance $G'$, and return its output as the output of algorithm $\mathcal{B}$. By the previous discussion, this is a $\gamma$-stable algorithm for Vertex Cover, and, if $\mathcal{A}$ is robust, then so is $\mathcal{B}$.
\end{proof}

The above result, combined with the result of the next chapter (see Theorem~\ref{VC_stable_lower_bound_theorem}), implies the following theorem.
\begin{theorem}
\hfill
\begin{enumerate}
\itemsep0em
    \item For every constant $\varepsilon > 0$, there is no robust algorithm for $\gamma$-stable instances of Minimum Node Multiway Cut, for $\gamma = n^{1 - \varepsilon}$, assuming that $P \neq NP$.
    \item For every constant $\varepsilon > 0$, there is no robust algorithm for $\gamma$-stable instances of Minimum Node Multiway Cut, for $\gamma = \frac{n}{2^{(\log n)^{3/4+\varepsilon}}}$, assuming that $NP \not \subseteq \mathtt{BPTIME}\left(2^{(\log n)^{O(1)}}\right)$.
\end{enumerate}
\end{theorem}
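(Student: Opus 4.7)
The plan is to derive both parts as direct corollaries of the reduction from Vertex Cover to Node Multiway Cut proved in the preceding theorem, combined with Theorem~\ref{VC_stable_lower_bound_theorem} of Chapter~\ref{chap:hardness}, which (under the matching complexity assumptions) rules out robust algorithms for $n^{1-\varepsilon}$-stable and for $n / 2^{(\log n)^{3/4+\varepsilon}}$-stable instances of Minimum Vertex Cover. The first thing to do is to record the exact parameters of the reduction: the transformation sends a Vertex Cover instance on $n$ vertices to a Node Multiway Cut instance on $N=2n$ vertices (one extra terminal $s_i$ attached by an edge to each original vertex $u_i$), preserves feasibility and cost, and — by the preceding theorem — maps $\gamma$-stable instances to $\gamma$-stable instances while preserving robustness of any algorithm used downstream.

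For Part 1, I would argue by contrapositive. Assume there is a robust algorithm $\mathcal{B}$ for $N^{1-\varepsilon}$-stable Minimum Node Multiway Cut. Given a Vertex Cover instance on $n$ vertices, reduce it and run $\mathcal{B}$ on the resulting $2n$-vertex graph; the reduction guarantees that $\mathcal{B}$'s correctness applies whenever the original instance is $(2n)^{1-\varepsilon}$-stable. Since $(2n)^{1-\varepsilon} = 2^{1-\varepsilon} n^{1-\varepsilon} \le n^{1-\varepsilon/2}$ for all sufficiently large $n$, this yields a robust algorithm for $n^{1-\varepsilon/2}$-stable Minimum Vertex Cover, contradicting Part 1 of Theorem~\ref{VC_stable_lower_bound_theorem} unless $\P = \NP$.

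For Part 2, the argument is identical, with the stability parameter replaced by $\gamma = N / 2^{(\log N)^{3/4+\varepsilon}}$. The same reduction gives a robust algorithm for $\gamma$-stable Vertex Cover instances on $n$ vertices with $\gamma = 2n / 2^{(\log 2n)^{3/4+\varepsilon}}$. I would then pick any $\varepsilon' \in (0,\varepsilon)$ and verify that
\begin{equation*}
    \frac{2n}{2^{(\log 2n)^{3/4+\varepsilon}}} \;\le\; \frac{n}{2^{(\log n)^{3/4+\varepsilon'}}}
\end{equation*}
for all sufficiently large $n$; this reduces to the asymptotic inequality $(\log 2n)^{3/4+\varepsilon} \ge 1 + (\log n)^{3/4+\varepsilon'}$, which holds because $(\log n)^{3/4+\varepsilon} - (\log n)^{3/4+\varepsilon'} = (\log n)^{3/4+\varepsilon'}\bigl((\log n)^{\varepsilon-\varepsilon'} - 1\bigr) \to \infty$. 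Hence we obtain a robust algorithm for $n / 2^{(\log n)^{3/4+\varepsilon'}}$-stable Minimum Vertex Cover, contradicting Part 2 of Theorem~\ref{VC_stable_lower_bound_theorem} under the assumption $\NP \not\subseteq \mathtt{BPTIME}(2^{(\log n)^{O(1)}})$.

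The only place that requires real care is the parameter translation in Part 2: one must verify that the extra multiplicative factor of $2$ introduced by doubling the vertex count is indeed absorbed by the super-polylogarithmic growth of $2^{(\log n)^{3/4+\varepsilon}}$, which is precisely what justifies the slight relaxation of $\varepsilon$ to $\varepsilon'$. Everything else (feasibility of the reduction, robustness transfer, and the Part 1 calculation) is immediate from the preceding theorem and the elementary bound $(2n)^{1-\varepsilon} \le n^{1-\varepsilon/2}$.
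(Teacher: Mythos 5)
Your proposal is correct and follows exactly the route the paper takes, namely composing the Vertex-Cover-to-Node-Multiway-Cut reduction of Section~\ref{sec:node-mc} with Theorem~\ref{VC_stable_lower_bound_theorem}; the paper states this as an immediate corollary without writing out the details. Your explicit handling of the vertex-count blowup from $n$ to $2n$ — absorbing the factor $2^{1-\varepsilon}$ into $n^{1-\varepsilon/2}$ in Part 1 and relaxing $\varepsilon$ to $\varepsilon'$ in Part 2 — is precisely the bookkeeping the paper leaves implicit, and it is carried out correctly.
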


%%%%%%%%%%%%%%%%%%%%%%%%%%%%%%%%%%%%%%%%%%%%%%%%%%%%%%%%%%%%%%%%%%%%%%%%%%%%%%%%%%%%%%%%%%%%%%%%%%%%%%%%%%%%%%%%%%%%%%%%%%%%%%%%%%

\chapter{Hardness results for robust algorithms}\label{chap:hardness}

In this chapter, we prove a very strong negative result about robust algorithms for the Minimum Vertex Cover problem, and its equivalent, in terms of exact solvability, problem, namely the Maximum Independent Set problem. We then extend the result to the Min 2-Horn Deletion problem and the Minimum Multicut on Trees.

\section{Lower bounds for Vertex Cover}\label{sec:hardness-VC}

We prove that, under standard complexity assumptions, no robust algorithms (as defined in Definition \ref{def:robust}) exist for $\gamma$-stable instances of Minimum Vertex Cover, even when $\gamma$ is very large (we precisely quantify this later in this section). Before presenting our results, it is worth noting that robustness is a very desirable property of algorithms, since it guarantees that the output is always correct, even when the instance is not stable (and it is usually the case that we do not know whether the input is stable or not). Furthermore, proving that no robust algorithm exists for $\gamma$-stable instances of a given problem implies that no LP/SDP or other convex relaxation that is solvable in polynomial time can be integral for $\gamma$-stable instances of the problem, thus ruling out the possibility of having an algorithm that solves $\gamma$-stable instances by solving the corresponding relaxation. We now turn our attention to Minimum Vertex Cover, which from now on we sometimes denote as \MVC{}.

An \MVC{} instance $G = (V, E, w)$, $w: V \to \mathbb{R}_{\geq 0}$, is called $\gamma$-stable, for $\gamma \geq 1$, if it has a unique optimal solution $X^* \subseteq V$, and for every $\gamma$-perturbation (i.e.~for every instance $G' = (V, E, w')$ that satisfies $w_u \leq w_u' \leq \gamma \cdot w_u$ for every $u \in V$), the solution $X^*$ remains the unique optimal solution. In order to prove our impossibility result for Vertex Cover, we need the following definition.

\begin{definition}[GAP-IS]
For any $0 < \alpha < \beta$, the $(\alpha, \beta)$-GAP-IS problem is a promise problem that takes as input a (vertex-weighted) graph $G$ whose independent set is either strictly larger than $\beta$ or at most $\alpha$ and asks to distinguish between the two cases, i.e.~decide whether $G$ has an independent set of weight
\begin{itemize}
    \item strictly larger than $\beta$ (i.e.~$OPT > \beta$; YES instance)
    \item at most $\alpha$ (i.e.~$OPT \leq \alpha$; NO instance)
\end{itemize}
\end{definition}

We will prove that the existence of a robust algorithm for $\gamma$-stable instances of \MVC{} would allow us to solve $(\beta / \gamma - \delta, \beta)$-GAP-IS, for every $\beta > 0$ and arbitrarily small $\delta > 0$.
\begin{lemma}\label{stable_VC_gap_IS_lemma}
Given a robust algorithm for $\gamma$-stable instances of Minimum Vertex Cover, for some $\gamma > 1$, there exists an algorithm that can be used to efficiently solve $(\beta / \gamma - \delta, \beta)$-GAP-IS, for every $\beta > 0$ and every $\delta \in (0, \beta/ \gamma)$.
\end{lemma}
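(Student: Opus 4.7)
The plan is to give a polynomial-time algorithm $\mathcal{B}$ that solves the $(\beta/\gamma-\delta,\beta)$-GAP-IS problem by making a single call to the robust algorithm $\mathcal{A}$. Given the input $G=(V,E,w)$ of GAP-IS, I will construct an \MVC{} instance $G'=(V',E',w')$ by adjoining a fresh vertex $r$, adding an edge from $r$ to every $u\in V$, keeping all edges of $G$ and all weights $w'_u=w_u$, and setting $w'_r=\beta$. Let $W=w(V)$ and let $\alpha$ denote the weight of a maximum independent set of $G$. Every vertex cover of $G'$ is either of the form $\{r\}\cup\tilde X$, where $\tilde X$ is a vertex cover of $G$ (cost $\beta+w(\tilde X)\ge\beta+W-\alpha$), or it excludes $r$ and is hence forced to equal $V$ (cost $W$). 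Therefore the optimum of $G'$ contains $r$ precisely when $\alpha>\beta$, which is the YES case of GAP-IS.

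The heart of the proof is verifying that in the NO case ($\alpha\le\beta/\gamma-\delta$) the instance $G'$ is $\gamma$-stable with unique optimum $X^{*}=V$. Uniqueness is immediate from $\beta>\alpha$. For $\gamma$-stability I will use the combinatorial reformulation (analogous to Definition~\ref{def:equiv-stab} applied to \MVC{}): for every vertex cover $X\neq X^{*}$ the inequality $\gamma\cdot w(X^{*}\setminus X)<w(X\setminus X^{*})$ must hold. Any such $X$ has the form $\{r\}\cup\tilde X$, so $X^{*}\setminus X=V\setminus\tilde X$ is an independent set of $G$ (hence of weight at most $\alpha$) and $X\setminus X^{*}=\{r\}$, reducing the stability condition to
\begin{equation*}
\gamma\cdot w(V\setminus\tilde X)\;\le\;\gamma\alpha\;\le\;\beta-\gamma\delta\;<\;\beta\;=\;w'_r,
\end{equation*}
which holds thanks to the NO-case gap assumption. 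This stability calculation is the only nontrivial step and is precisely what the parameters $\beta$, $\gamma$, $\delta$ in the claimed gap are engineered to enforce.

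Algorithm $\mathcal{B}$ then simply runs $\mathcal{A}(G')$ and outputs NO iff $\mathcal{A}$ returns a vertex cover not containing $r$ (necessarily $V$), and YES in every other case (either $\mathcal{A}$ returns a cover containing $r$, or it certifies that $G'$ is not $\gamma$-stable). Correctness follows from the two case analyses: in the NO case, $\gamma$-stability of $G'$ forces the robust algorithm to return the unique optimum $V$, so $\mathcal{B}$ correctly outputs NO. In the YES case I need no stability claim; $\alpha>\beta$ is enough to ensure that every optimum of $G'$ contains $r$, so whatever $\mathcal{A}$ returns---either an optimum (automatically containing $r$) or the non-stability certificate---leads $\mathcal{B}$ to output YES. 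The reduction makes a single call to $\mathcal{A}$ on an instance of linear size and therefore runs in polynomial time, completing the proof.
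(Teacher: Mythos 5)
Your proof is correct and uses the same reduction as the paper (adjoin a fresh vertex of weight $\beta$ adjacent to all of $V$) together with the same quantitative stability calculation. You package the correctness argument slightly differently---by first establishing $\gamma$-stability of $G'$ in the NO case and then invoking the robust-algorithm contract, rather than running the paper's explicit three-way case analysis on the output of $\mathcal{A}$---but the underlying content is identical.
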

\begin{proof}
Given a $(\beta / \gamma - \delta, \beta)$-GAP-IS instance $G = (V,E,w)$, $w: V \to \mathbb{R}_{\geq 0}$, we construct the graph $G' = (V', E', w')$, where $V' = V \cup \{s\}$, $E' = E \cup \{(v,s): v \in V\}$, $w_u' = w_u$ for all $u \in V$ and $w_s' = \beta$. Every vertex cover $X \subseteq V'$ of $G'$ is of one of the following forms:
\begin{itemize}
    \item $X = V$, with cost $w'(X) = w(V)$.
    \item $X = (V \setminus I) \cup \{s\}$, where $I$ is an independent set of the original graph $G$. The cost of $X$ in this case is $w'(X) = w(V) - w(I) + \beta$.
\end{itemize}
Let $I^* \subseteq V$ denote a maximum independent set of $G$ and $OPT_{IS(G)} = w(I^*)$ denote its cost. Then, an optimal vertex cover is either $V$ or $(V \setminus I^*) \cup \{s\}$. Observe that we can never have $w(V) = w((V \setminus I^*) \cup \{s\})$, since this would imply that $OPT_{IS(G)} = \beta$, and this is impossible, given that $G$ is a $(\beta / \gamma - \delta, \beta)$-GAP-IS instance.

We now run the robust algorithm for $\gamma$-stable instances of \MVC{} on $G'$, and depending on the output $Y$, we make the following decision:
\begin{itemize}
    \item $Y = V$: $V$ is the optimal vertex cover of $G'$, and so $w(V) \leq w(V) - w(I) + \beta$ for all independent sets $I$ of $G$. This implies that $w(I^*) \leq \beta$, and, since the instance is a $(\beta / \gamma - \delta, \beta)$-GAP-IS instance, we must have $w(I^*) \leq \beta / \gamma - \delta$. We output \textbf{NO}.
    \item $Y = (V \setminus I^*) \cup \{s\}$ for some (maximum) independent set $I^*$: In this case, we have $w(V) \geq w(V) - w(I^*) + \beta$, and so $w(I^*) \geq \beta$. From the above discussion, this implies that $w(I^*) > \beta$, and so we output \textbf{YES}.
    \item $Y = \mathbf{not\; stable}$: Since the instance is not $\gamma$-stable, it is not hard to see that there must exist an independent set $I$ of $G$, such that $w(V \setminus I) + \gamma w(I) \geq w(V \setminus I) + \beta$ (since otherwise the instance would be $\gamma$-stable with $V$ being the optimal vertex cover), which implies that $w(I) \geq \beta / \gamma$. Thus, $w(I^*) > \beta$, and so we output \textbf{YES}.
\end{itemize}
We designed an algorithm that uses a robust algorithm for $\gamma$-stable instances of \MVC{} as a black-box and solves the $(\beta / \gamma - \delta, \beta)$-GAP-IS problem, for every $\beta > 0$ and arbitrarily small $\delta > 0$.
\end{proof}

We now use the known inapproximability results for Maximum Independent Set in conjunction with Lemma \ref{stable_VC_gap_IS_lemma}. In particular, we need the following two theorems, the first proved by Zuckerman~\cite{DBLP:journals/toc/Zuckerman07} (also proved earlier by H\aa stad in~\cite{DBLP:conf/focs/Hastad96} under the complexity assumption that $\NP \not\subseteq \ZPP$), and the second by  Khot and Ponnuswami \cite{KP06}.

\begin{theorem}[Zuckerman~\cite{DBLP:journals/toc/Zuckerman07}]\label{IS_hardness}
It is $\NP$-hard to approximate the Maximum Independent Set to within $n^{1 - \varepsilon}$, for every constant $\varepsilon > 0$. Equivalently, it is $\NP$-hard to solve $(\alpha, \beta)$-GAP-IS, for $\beta / \alpha = n^{1 - \varepsilon}$, for every constant $\varepsilon > 0$.
\end{theorem}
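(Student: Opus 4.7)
The plan is to derive this strong inapproximability via the standard ``PCP $\to$ FGLSS $\to$ amplification'' pipeline, using a randomness-efficient product construction to push the hardness factor all the way up to $n^{1-\varepsilon}$. The starting point will be H\aa stad's optimal 3-bit PCP (or, equivalently, his tight hardness for Max-E3-LIN-2 and for Max-3-SAT): there exists an absolute constant $\eta > 0$ such that, given a 3-SAT formula, it is NP-hard to distinguish the case in which the formula is satisfiable from the case in which at most a $(7/8+\eta)$ fraction of its clauses can be satisfied simultaneously. Equivalently, one obtains a gap constraint-satisfaction problem whose completeness/soundness gap is a fixed constant bounded away from $1$.

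Next I would apply the FGLSS reduction of Feige--Goldwasser--Lov\'asz--Safra--Szegedy to convert this gap CSP into a graph $G_0$ with vertex set indexed by (test, accepting assignment to the queried bits) pairs, connecting two vertices whenever their associated partial assignments are inconsistent. Independent sets in $G_0$ correspond to globally consistent partial assignments, so the maximum weight of an independent set (with appropriate weights/uniformization) encodes the fraction of satisfiable constraints; hence the PCP gap translates into a constant-factor gap for Maximum Independent Set on $G_0$. This already gives constant-factor NP-hardness and serves as the ``seed'' instance for the amplification step.

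The core step is amplifying this constant gap to $n^{1-\varepsilon}$. The naive tensor-product $G_0^{\otimes k}$ squares the number of vertices per level while also squaring the independence number, so after $k$ levels the gap becomes $c^k$ at size $N = n_0^k$; this yields hardness factor $N^{\Omega(1)}$ but with a loss that keeps $\varepsilon$ bounded away from $0$. To push the exponent arbitrarily close to $1$, I would replace the ordinary product by a \emph{randomness-efficient / derandomized graph product} in the style of Zuckerman: vertices are indexed by walks of length $k$ on an explicit expander (or equivalently by outputs of a disperser/extractor) over the vertex set of $G_0$, and two such walks are adjacent in the product graph iff some pair of coordinates is adjacent in $G_0$. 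By choosing parameters of the disperser appropriately, one simultaneously blows up the independence-number gap nearly multiplicatively while keeping the total number of vertices only mildly larger than $n_0^{(1+o(1))k/(1-\varepsilon')}$; a clean calculation then gives NP-hardness of $N^{1-\varepsilon}$ approximation for every constant $\varepsilon>0$. Finally, the equivalence with $(\alpha,\beta)$-GAP-IS at ratio $\beta/\alpha = n^{1-\varepsilon}$ is immediate from the definitions once one rescales weights.

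The hard part, and the single place where real work is hidden, is the derandomization: H\aa stad's original argument used randomized graph products and therefore only obtained hardness under $\mathrm{NP}\not\subseteq\mathrm{ZPP}$; Zuckerman's contribution, which I would have to reproduce, is to plug in an explicit disperser with near-optimal entropy loss and seed length so that the product construction is deterministic, polynomial-time computable, and loses only a $(1+o(1))$ factor in the exponent. Everything else (the PCP, the FGLSS reduction, and the bookkeeping that turns the amplified gap into the statement for $(\alpha,\beta)$-GAP-IS) is by now standard; the explicit disperser and the careful accounting of its parameters are the real technical bottleneck.
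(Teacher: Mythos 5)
This statement is not proved in the thesis at all: it is imported verbatim as a black-box citation of Zuckerman's theorem (itself building on H{\aa}stad), and is then used downstream in Lemma~\ref{stable_VC_gap_IS_lemma} and Theorem~\ref{VC_stable_lower_bound_theorem}. So there is no ``paper's own proof'' to compare against; the only question is whether your reconstruction of the literature proof is sound.

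Your outline has the right overall shape (PCP $\to$ FGLSS $\to$ randomness-efficient amplification via explicit dispersers), but the first two stages as you describe them cannot deliver the exponent $1-\varepsilon$ for \emph{every} constant $\varepsilon>0$. If the seed is H{\aa}stad's 3-query PCP for Max-E3-LIN-2, the FGLSS graph has roughly $2^{r+f}$ vertices with gap $c/s$ per repetition, and what governs the achievable exponent is the \emph{amortized free-bit complexity} $\bar f = f/\log_2(c/s)$: the best exponent obtainable by any amount of (even perfectly randomness-efficient) repetition or graph-product amplification of that seed is $1/(1+\bar f)$, which for the 3-LIN test is a fixed constant strictly less than $1$ (about $1/3$). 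Amplification cannot manufacture a better exponent than the seed's free-bit parameters allow; to reach $1-\varepsilon$ for all $\varepsilon$ one must instead use H{\aa}stad's PCPs with amortized free-bit complexity tending to $0$ (the iterated long-code tests), and Zuckerman's contribution is to replace the randomized recycling of the verifier's coins in that construction by explicit dispersers with near-optimal entropy loss, turning the $\mathrm{NP}\not\subseteq\mathrm{ZPP}$ conclusion into genuine $\NP$-hardness. So the ``hard part'' is not only the disperser, as you say, but also the choice of seed PCP; with the 3-bit PCP as the seed the calculation in your third paragraph does not close. The final rescaling to the $(\alpha,\beta)$-GAP-IS formulation is indeed immediate.
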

\begin{theorem}[Khot and Ponnuswami \cite{KP06}]
For every constant $\varepsilon > 0$, there is no polynomial time algorithm that approximates the Maximum Independent Set to within $n / 2^{(\log n)^{3/4+\varepsilon}}$, assuming that $\NP \not\subseteq BPTIME\left(2^{(\log n)^{O(1)}}\right)$.
\end{theorem}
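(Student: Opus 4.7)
The plan is to combine three classical ingredients from PCP-based inapproximability: a strong starting gap problem, an FGLSS-style reduction to Independent Set, and a derandomized graph product for gap amplification. Since the bound $n/2^{(\log n)^{3/4+\varepsilon}}$ sits only slightly below the $n^{1-\varepsilon}$ threshold achievable under $\P \neq \NP$, the extra savings must come from pushing the PCP parameters and amplifying the gap beyond what plain NP-hardness yields; this is precisely why a quasi-polynomial complexity assumption appears instead of $\P \neq \NP$, and why the assumption is a randomized one.

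The first step would be to obtain a hard two-prover one-round (Label Cover) instance with alphabet size $k$ and soundness gap as small as $\delta = 2^{-\log^{1-\eta} n}$ for some small $\eta > 0$; such gaps follow from the standard NP-hardness of Label Cover combined with Raz's parallel repetition theorem, with the number of repetitions tuned to the desired final parameters. The second step is an FGLSS-type reduction, turning each candidate partial labeling of a vertex into a vertex of a new graph $H$, with edges placed between mutually inconsistent labelings. The maximum independent set in $H$ then essentially encodes the value of the Label Cover instance, yielding an initial inapproximability ratio for Max Independent Set proportional to $1/\delta$ on graphs of size roughly $|V(H)| = \mathtt{poly}(n) \cdot k$.

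The main technical ingredient, and the step I expect to be the main obstacle, is a \emph{derandomized} graph product for gap amplification, along the lines of Alon--Feige--Wigderson--Zuckerman and its refinements in the work of Khot. The idea would be to fix an expander on the vertex set of $H$ and define a new graph whose vertices are short random walks of length $t$ on that expander, with edges reflecting inconsistency along any coordinate of the walk. This amplifies the independent-set gap from $1/\delta$ to roughly $1/\delta^{\Omega(t)}$ while inflating the vertex count only by a factor $c^t$ (from the expander degree), as opposed to the $n^{\Omega(t)}$ blow-up of a naive tensor product. Choosing $t$ of order $(\log n)^{O(1)}$ and carefully balancing $t$ against $\delta$ should drive the inapproximability ratio up to $n/2^{(\log n)^{3/4+\varepsilon}}$, at the cost of a reduction that runs in time $2^{(\log n)^{O(1)}}$ and uses a controlled amount of randomness. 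The appearance of randomness in the expander walks (and in auxiliary averaging arguments) is what ultimately forces the assumption $\NP \not\subseteq \mathtt{BPTIME}\bigl(2^{(\log n)^{O(1)}}\bigr)$ rather than the deterministic analogue; removing this randomness would amount to a full derandomization of the construction, which is not available with current techniques and is the principal barrier that any refinement of the proof would have to overcome.
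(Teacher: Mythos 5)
There is an important mismatch of expectations here: the thesis does not prove this statement at all. It is imported verbatim as a citation to Khot and Ponnuswami [KP06] and used as a black box (alongside Zuckerman's $n^{1-\varepsilon}$ NP-hardness) to rule out robust algorithms for stable Vertex Cover/Independent Set. So there is no internal proof to compare yours against; within the thesis, the correct ``proof'' is the citation itself, and any genuine proof would have to reproduce the PCP machinery of [KP06].

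Judged as a reconstruction of that machinery, your sketch names the right toolbox (FGLSS-style graphs plus derandomized graph products in the Alon--Feige--Wigderson--Zuckerman/Khot tradition) but has a real gap where the theorem's content lives. The gap-versus-size behaviour of an FGLSS graph is governed by the randomness and \emph{free-bit} complexity of the underlying proof system, not by soundness alone: starting from standard Label Cover and Raz repetition and tracking only the soundness $\delta$, the instance size grows with the number of repetitions at a rate that swallows the gap, and composing with expander-walk products amplifies gap and size at correlated rates; no tuning of $t$ against $\delta$ in your accounting produces a factor as large as $n/2^{(\log n)^{3/4+\varepsilon}}$, and in particular the exponent $3/4$ never emerges from it. The actual work in [KP06] is a new PCP construction (Hadamard-code-based, with very low amortized free-bit/query complexity in the quasi-polynomial soundness regime, building on Khot's earlier derandomized products), and $3/4+\varepsilon$ is the outcome of optimizing that construction's trade-offs; the graph-product step you single out as the main obstacle is comparatively routine. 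Your explanation of the hypothesis is also off: in a derandomized product the walks are enumerated deterministically, so they are not what forces $\NP \not\subseteq \mathtt{BPTIME}\left(2^{(\log n)^{O(1)}}\right)$ --- that assumption reflects the two-sided-error randomness of the quasi-polynomial-time reduction itself. If you need the theorem, cite it; if you intend to prove it, the missing piece is precisely the PCP construction your outline treats as given.
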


Combining Lemma \ref{stable_VC_gap_IS_lemma} with the above two theorems, we obtain the following theorem.
\begin{theorem}\label{VC_stable_lower_bound_theorem}
\hfill
\begin{enumerate}
\itemsep0em
    \item For every constant $\varepsilon > 0$, there is no robust algorithm for $\gamma$-stable instances of Minimum Vertex Cover (and Maximum Independent Set), for $\gamma = n^{1 - \varepsilon}$, assuming that $\P \neq \NP$.
    \item For every constant $\varepsilon > 0$, there is no robust algorithm for $\gamma$-stable instances of Minimum Vertex Cover (and Maximum Independent Set), for $\gamma = \frac{n}{2^{(\log n)^{3/4+\varepsilon}}}$, assuming that $\NP \not \subseteq \mathtt{BPTIME}\left(2^{(\log n)^{O(1)}}\right)$.
\end{enumerate}
\end{theorem}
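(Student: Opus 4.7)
The plan is to derive both parts of Theorem \ref{VC_stable_lower_bound_theorem} by a direct contrapositive argument that composes Lemma \ref{stable_VC_gap_IS_lemma} with the two cited hardness-of-approximation results for Maximum Independent Set. For each part, I fix the stated value of $\gamma$ (either $n^{1-\varepsilon}$ or $n/2^{(\log n)^{3/4+\varepsilon}}$), suppose for contradiction that a robust algorithm $\mathcal{A}$ for $\gamma$-stable instances of Minimum Vertex Cover exists, feed it through Lemma \ref{stable_VC_gap_IS_lemma}, and argue that the resulting polynomial-time algorithm for $(\alpha, \beta)$-GAP-IS achieves a gap ratio that is strictly forbidden by the cited inapproximability theorem.

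The first step is to instantiate Lemma \ref{stable_VC_gap_IS_lemma} with a carefully chosen $\delta$. The lemma provides, for any $\beta > 0$ and any $\delta \in (0, \beta/\gamma)$, a polynomial-time algorithm for $(\beta/\gamma - \delta, \beta)$-GAP-IS. The corresponding gap ratio $\beta/(\beta/\gamma - \delta)$ is a continuous decreasing-to-$\gamma$ function of $\delta$; concretely, setting $\delta = \beta/(2\gamma)$ yields ratio exactly $2\gamma$. Thus the mere existence of $\mathcal{A}$ implies a polynomial-time algorithm for GAP-IS with gap ratio at most $2\gamma$ on every input size on which $\mathcal{A}$ runs.

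The second step is to show this contradicts the appropriate hardness theorem. For part~1, with $\gamma = n^{1-\varepsilon}$, one checks that $2\gamma = 2n^{1-\varepsilon} \leq n^{1-\varepsilon/2}$ for all $n \geq 2^{2/\varepsilon}$, so (after a trivial padding to make sure we are in the asymptotic regime) we obtain a polynomial-time algorithm distinguishing $OPT > \beta$ from $OPT \leq \beta / n^{1-\varepsilon/2}$ for Maximum Independent Set, contradicting Theorem \ref{IS_hardness} instantiated with $\varepsilon' = \varepsilon/2$ under the assumption $\P \neq \NP$. For part~2, the analogous manipulation $2n/2^{(\log n)^{3/4+\varepsilon}} \leq n/2^{(\log n)^{3/4+\varepsilon/2}}$ for large $n$ reduces to the Khot--Ponnuswami theorem, yielding the conclusion under the weaker assumption $\NP \not\subseteq \mathtt{BPTIME}(2^{(\log n)^{O(1)}})$.

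I do not anticipate any real obstacle; the entire argument is a one-line composition of a polynomial-time reduction with known hardness. The only small care required is bookkeeping: keeping $\delta$ strictly positive so that Lemma \ref{stable_VC_gap_IS_lemma} applies, keeping $\alpha = \beta/\gamma - \delta$ strictly positive so that the GAP-IS instance is well-defined, and pushing the achieved ratio $2\gamma$ strictly below the forbidden thresholds $n^{1-\varepsilon'}$ and $n/2^{(\log n)^{3/4+\varepsilon'}}$ by replacing $\varepsilon$ with a slightly smaller constant $\varepsilon' = \varepsilon/2$. Finally, the statement for Maximum Independent Set requires no additional work because, as already remarked in the thesis, Minimum Vertex Cover and Maximum Independent Set are equivalent for exact solvability, and the notion of Bilu--Linial stability (which only refers to the identity of the optimal solution, not its objective value) transfers verbatim between the two formulations.
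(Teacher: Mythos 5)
Your proposal is correct and follows exactly the paper's route: the paper proves this theorem in one line by combining Lemma \ref{stable_VC_gap_IS_lemma} with Theorems \ref{IS_hardness} and the Khot--Ponnuswami bound, which is precisely the composition you carry out (your explicit choice $\delta = \beta/(2\gamma)$ and the absorption of the resulting factor $2$ into the constant $\varepsilon' = \varepsilon/2$ is the bookkeeping the paper leaves implicit). No gaps.
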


As an immediate corollary, we get the same lower bounds for stability for Set Cover, since Minimum Vertex Cover can be formulated as a Set Cover instance.
\begin{corollary}
\hfill
\begin{enumerate}
\itemsep0em
    \item For every constant $\varepsilon > 0$, there is no robust algorithm for $\gamma$-stable instances of Set Cover, for $\gamma = n^{1 - \varepsilon}$, assuming that $\P \neq \NP$.
    \item For every constant $\varepsilon > 0$, there is no robust algorithm for $\gamma$-stable instances of Set Cover, for $\gamma = \frac{n}{2^{(\log n)^{3/4+\varepsilon}}}$, assuming that $\NP \not \subseteq \mathtt{BPTIME}\left(2^{(\log n)^{O(1)}}\right)$.
\end{enumerate}
\end{corollary}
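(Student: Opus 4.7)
The plan is to exhibit a weight-preserving, bijection-inducing reduction from Minimum Vertex Cover to Set Cover and argue that this reduction preserves $\gamma$-stability in both directions, so that a robust algorithm for $\gamma$-stable Set Cover would yield a robust algorithm for $\gamma$-stable Minimum Vertex Cover. Combined with Theorem~\ref{VC_stable_lower_bound_theorem}, this gives the claimed lower bounds.

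First, I would recall the standard reduction: given a Minimum Vertex Cover instance $G = (V, E, w)$ with $w : V \to \mathbb{R}_{\geq 0}$, construct a Set Cover instance $\mathcal{I}(G) = (\mathcal{U}, \mathcal{S}, c)$ by setting the universe $\mathcal{U} = E$, introducing for every vertex $v \in V$ a set $S_v = \{e \in E : v \in e\}$ with cost $c(S_v) = w_v$, and letting $\mathcal{S} = \{S_v : v \in V\}$. The map $X \mapsto \{S_v : v \in X\}$ is a bijection between vertex covers of $G$ and set covers of $\mathcal{U}$ and preserves cost: $w(X) = \sum_{v \in X} w_v = \sum_{v \in X} c(S_v)$. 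Moreover, this bijection also matches perturbations: any $\gamma$-perturbation $c'$ of $c$ corresponds to a $\gamma$-perturbation $w'_v = c'(S_v)$ of $w$, and vice versa.

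Next I would use these two properties to show that $\mathcal{I}(G)$ is $\gamma$-stable as a Set Cover instance if and only if $G$ is $\gamma$-stable as a Vertex Cover instance. Indeed, the unique optimum on one side maps to the unique optimum on the other side under every matching perturbation, because costs and feasibility are preserved under the bijection. This is the main (though quite short) observation in the proof.

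Finally, given a hypothetical robust algorithm $\mathcal{A}$ for $\gamma$-stable Set Cover, I would define a robust algorithm $\mathcal{B}$ for $\gamma$-stable Vertex Cover as follows: on input $G$, build $\mathcal{I}(G)$, run $\mathcal{A}$, and translate the output back via the bijection (returning ``not $\gamma$-stable'' if $\mathcal{A}$ does). Correctness is immediate from the equivalence above: if $G$ is $\gamma$-stable then so is $\mathcal{I}(G)$, so $\mathcal{A}$ returns the optimum, which maps to the unique optimal vertex cover; if $G$ is not $\gamma$-stable then $\mathcal{A}$ either returns an optimal set cover (yielding an optimal vertex cover) or declares non-stability (which is correct for $G$ as well). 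Applying Theorem~\ref{VC_stable_lower_bound_theorem}, both lower bounds carry over verbatim: under $\mathrm{P} \neq \mathrm{NP}$ no robust algorithm exists for $\gamma = n^{1-\varepsilon}$, and under $\mathrm{NP} \not\subseteq \mathtt{BPTIME}(2^{(\log n)^{O(1)}})$ none exists for $\gamma = n / 2^{(\log n)^{3/4 + \varepsilon}}$. There is no real obstacle; the only thing to be careful about is making the correspondence between perturbations on the Set Cover side and on the Vertex Cover side completely symmetric, so that both directions of stability transfer.
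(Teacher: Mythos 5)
Your proposal is correct and matches the paper's intent exactly: the paper dismisses this corollary in one line by noting that Minimum Vertex Cover can be formulated as a Set Cover instance, and your write-up simply spells out that standard cost-preserving, perturbation-matching reduction and the resulting transfer of robustness. No gaps.
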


%%%%%%%%%%%%%%%%%%%%%%%%%%%%%%%%%%%%%%%%%%%%%%%%%%%%%%%%%%%%%%%%%%%%%%%%%%%%%%%%%%%%%%%%%%%%%%%%%%%%%%%%%%%%%
\section{Lower bounds for Min 2-Horn Deletion}

In this section, we focus on Min 2-Horn Deletion, and prove that the lower bound for robust algorithms for \MVC{} can be extended to this problem as well, since \MVC{} can be formulated as a Min 2-Horn Deletion problem in a convenient way. We start with the definition of Min 2-Horn Deletion and then state and prove the main theorem of this section.

\begin{definition}[Min 2-Horn Deletion]
Let $\{x_i\}_{i \in [n]}$ be a set of boolean variables and let $\mathcal{F} = \{C_j\}_{j \in [m]}$ be a set of clauses on these variables, where each $C \in \mathcal{F}$ has one of the following forms: $x_i$, $\bar{x}_i$, $\bar{x}_i \vee x_j $, or $\bar{x}_i \vee \bar{x}_j $. In words, each clause has at most two literals and is allowed to have at most one positive literal. We are also given a weight function $w : \mathcal{F} \to \mathbb{R}_{\geq 0}$, and the goal is to find an assignment $f: \{x_1, ..., x_n\} \to \{true, false\}$ such that the weight of the unsatisfied clauses is minimized.
\end{definition}

It will be convenient to work with the dual Min 2-Horn Deletion, in which each clause contains at most one negated literal. Observe that the two problems are equivalent, since, given a Min 2-Horn Deletion instance with variables $\{x_i\}_{i \in [n]}$, we can define the variables $y_i = \bar{x}_i$, $i \in [n]$, and substitute them in $\mathcal{F}$, thus obtaining a dual Min 2-Horn Deletion with the exact same value. We now prove the following theorem.

\begin{theorem}
\hfill
\begin{enumerate}
\itemsep0em
    \item For every constant $\varepsilon > 0$, there is no robust algorithm for $\gamma$-stable instances of Min 2-Horn Deletion, for $\gamma = n^{1 - \varepsilon}$, assuming that $P \neq NP$.
    \item For every constant $\varepsilon > 0$, there is no robust algorithm for $\gamma$-stable instances of Min 2-Horn Deletion for $\gamma = \frac{n}{2^{(\log n)^{3/4+\varepsilon}}}$, assuming that $NP \not \subseteq \mathtt{BPTIME}\left(2^{(\log n)^{O(1)}}\right)$.
\end{enumerate}
\end{theorem}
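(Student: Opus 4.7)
My plan is to reduce Minimum Vertex Cover to Min 2-Horn Deletion in a stability-preserving, polynomial-time fashion, and then apply Theorem~\ref{VC_stable_lower_bound_theorem} directly. Given an \MVC{} instance $G=(V,E,w)$ with $n=|V|$, I will build a dual Min 2-Horn Deletion instance on variables $\{x_u : u \in V\}$ consisting of a unit clause $\bar{x}_u$ of weight $w_u$ for every $u \in V$ and a clause $x_u \vee x_v$ of weight $M$ for every edge $(u,v) \in E$, where $M$ is a polynomially-bounded integer chosen so that $M > \gamma \cdot w(V) + 1$. Both clause shapes are valid in the dual formulation (each has at most one negated literal). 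The resulting instance has exactly $n$ variables, so a hardness bound of the form $\gamma = n^{1-\varepsilon}$ transfers directly; even if ``$n$'' for Horn is taken to be total instance size, the blow-up is $\poly(n)$ and can be absorbed into $\varepsilon$ in the exponent.

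I will identify each truth assignment $f$ with the set $X_f = \{u : f(x_u)=\text{true}\} \subseteq V$. Under this identification, the unsatisfied unit clauses are exactly $\{\bar{x}_u : u \in X_f\}$, while an edge clause $x_u \vee x_v$ is unsatisfied precisely when $u,v \in V \setminus X_f$, i.e.\ when $X_f$ fails to cover the edge $(u,v)$. Thus, for any $\gamma$-perturbation $(w', \{M'_e\})$ of the Horn weights, the cost of $f$ equals $w'(X_f)$ plus the $M'_e$-weighted count of edges left uncovered by $X_f$. The choice $M > \gamma \cdot w(V) + 1$ guarantees that any assignment corresponding to a non-cover costs strictly more than the assignment $X_f = V$ (whose cost is at most $\gamma \cdot w(V)$). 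Therefore every optimal Horn solution in every $\gamma$-perturbation corresponds to a genuine vertex cover of $G$, with Horn cost equal to the perturbed \MVC{} cost of the same cover.

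It now follows that $\gamma$-stability transfers: the unit-clause weights of any $\gamma$-perturbation of the Horn instance form a $\gamma$-perturbation of $w$ on $V$, and, by the previous paragraph, perturbed Horn optima coincide as sets with perturbed \MVC{} optima. Hence if $G$ is $\gamma$-stable with unique optimum $X^*$, the Horn instance is $\gamma$-stable with unique optimum corresponding to $X^*$; equivalently, non-stability of the Horn instance implies non-stability of $G$. Given any hypothetical robust algorithm $\mathcal{A}$ for $\gamma$-stable Min 2-Horn Deletion, I obtain a robust algorithm for $\gamma$-stable \MVC{} by constructing the Horn instance, running $\mathcal{A}$, and translating: any assignment returned by $\mathcal{A}$ must be Horn-optimal (robustness forbids errors) and therefore corresponds to an \MVC{} optimum, while a ``not stable'' answer is safely passed through. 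Applying Theorem~\ref{VC_stable_lower_bound_theorem} then yields both parts of the theorem. The only subtlety, and not really an obstacle, is keeping $M$ polynomially bounded; since $\gamma \leq n^{1-\varepsilon}$ and the $w_u$ are given in the input, taking $M = \gamma \cdot w(V) + 2$ suffices and preserves integer weights after standard rescaling.
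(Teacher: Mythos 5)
Your proposal is correct and follows essentially the same route as the paper's proof: the paper also reduces from Minimum Vertex Cover by introducing a unit clause $\bar{x}_u$ of weight $w_u$ per vertex and an edge clause $x_u \vee x_v$ of weight $1+\gamma\cdot\sum_{q\in V} w_q$ per edge (your $M$), then argues that optima in every $\gamma$-perturbation correspond to vertex covers, that non-stability transfers back to $G$, and concludes via Theorem~\ref{VC_stable_lower_bound_theorem}. The only cosmetic difference is your explicit remark about the number of variables being exactly $n$, which the paper leaves implicit.
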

\begin{proof}

Let us assume that there exists a robust algorithm for $\gamma$-stable instances of Min 2-Horn Deletion, for some $\gamma > 1$. We will prove that this would give a robust algorithm for $\gamma$-stable instances of \MVC{}. For that, we consider any \MVC{} instance $G = (V, E, w)$, $w: V \to \mathbb{R}_{\geq 0}$, and construct an instance $F(G)$ of Min 2-Horn Deletion as follows (for convenience, as explained above, we assume that each clause contains at most one negation, i.e.~we construct a dual Min 2-Horn Deletion formula). We introduce variables $\{x_u\}_{u \in V}$ and $|V| + |E|$ clauses, with $C_u := \bar{x}_u$, for every $u \in V$, and $C_{(u,v)} := x_u \vee x_v$, for every $(u,v) \in E$. We also assign weights $w'$, with $w'(C_u) = w_u$, $u \in V$, and $w'(C_{(u,v)}) = 1 + \gamma \cdot \sum_{q \in V} w_q$, for every $(u,v) \in E$.

Observe that an immediate upper bound for the cost of the optimal assignment of $F(G)$ is $\sum_{u \in V} w_u$, since we can always delete all the clauses $C_u$ and set all variables to $true$. Thus, an optimal assignment never violates a clause $C_{(u,v)}$, $(u,v) \in E$. This means that in an optimal assignment $f^*$, for every $(u,v) \in E$, either $f^*(x_u) = true$ or $f^*(x_v) = true$. This implies that the set $X(f^*) = \{u \in V: f^*(x_u) = true\}$ is a feasible vertex cover of $G$. It also means that the cost of an optimal assignment is $\sum_{u \in V: f^*(x_u) = true} w_u = w(X(f^*))$.  We will now show that $X(f^*)$ is in fact an optimal vertex cover of $G$. First, note that the cost of any assignment $g$ (not necessarily optimal) that does not violate any of the clauses $C_{(u,v)}$, $(u,v) \in E$, is $\sum_{u \in V: g(x_u) = true} w_u$. Suppose now that there exists a vertex cover $X' \neq X(f^*)$ with cost $w(X') < w(X(f^*))$. Let $g(x_u) = true$ if $u \in X'$, and $g(x_u) = false$ if $u \notin X'$. It is easy to see that $g$ does not violate any of the clauses $C_{(u,v)}$, $(u,v) \in E$. Thus, the cost of the assignment $g$ is equal to $\sum_{u \in V: g(x_u) = true} w_u = w(X') < w(X(f^*))$, which contradicts the optimality of $f^*$, and, so, we conclude that the set $X(f^*)$ is an optimal vertex cover of $G$.

We will now show that if $F(G)$ is not $\gamma$-stable, then $G$ cannot be $\gamma$-stable. First, observe that any $\gamma$-perturbation of $F(G)$ has an optimal solution of cost at most $\gamma \cdot \sum_{u \in V} w_u$, implying that in every $\gamma$-perturbation of $F(G)$, an optimal solution only deletes clauses of the form $C_u = x_u$, for $u \in V$. In other words, in every $\gamma$-perturbation of $F(G)$, an optimal assignment $g$ defines a feasible vertex cover $X = \{u \in V: g(u) = true\}$. This also implies that the perturbation of the weights $w(C_{(u,v)})$ cannot change the optimal assignment, and so, the weights of the clauses $C_u$, $u\in V$, completely specify the optimal value. Moreover, if $\tilde{w}$ is the weight function for a $\gamma$-perturbation of $F(G)$ (whose optimal assignment defines the set $X$ as before), we can use the observation of the previous paragraph to conclude that the vertex cover X is optimal for the instance $G' = (V, E, w')$, in which $w_u' = \tilde{w}(C_u)$ for all $u \in V$. Note that $G'$ is a $\gamma$-perturbation of $G$. Suppose now that $F(G)$ is not $\gamma$-stable. Thus, there exists a subset $X \subseteq V$ such that an optimal assignment for $F(G)$ deletes the clauses $\{C_u: u \in X\}$ (i.e.~$f(x_u) = true$ iff $u \in X$) while there exists a $\gamma$-perturbation $F'(G)$ of $F(G)$ such that an optimal assignment for $F'(G)$ deletes the clauses $\{C_u: u \in X'\}$ for some $X' \neq X$. As argued, $X$ is an optimal vertex cover for $G$ and $X'$ is an optimal vertex cover for some $\gamma$-perturbation of $G$. Since $X \neq X'$, the instance $G$ is not $\gamma$-stable.

We are ready to present our robust algorithm for $\gamma$-stable instances of \MVC{}. We use the robust algorithm for $\gamma$-stable instances of Min 2-Horn Deletion on $F(G)$. Let $Y$ be the output of the algorithm, when ran on the instance $F(G)$:
\begin{itemize}
    \item $Y = f$, where $f: \{x_u\}_{u \in V} \to \{true, false\}$: As discussed previously, the set $X = \{u \in V: f(x_u) = true\}$ is an optimal vertex cover for $G$, and so we output $X$.
    \item $Y = \mathbf{not\;stable}$: We output \textbf{``not stable"}, since, by the previous discussion, the \MVC{} instance cannot be $\gamma$-stable.
\end{itemize}
Plugging in the bounds of Theorem~\ref{VC_stable_lower_bound_theorem}, we obtain the desired lower bounds.
\end{proof}

%%%%%%%%%%%%%%%%%%%%%%%%%%%%%%%%%%%%%%%%%%%%%%%%%%%%%%%%%%%%%%%%%%%%%%%%%%%%%%%%%%%%%%%%%%%%%%%%

%%%%%%%%%%%%%%%%%%%%%%%%%%%%%%%%%%%%%%%%%%%%%%%%%%%%%%%%%%%%%%%%%%%%%%%%%%%%%%%%%%%%%%%%%%%%%%%%%%%%%%%%%

\section{Lower bounds for Multicut on Trees}

In this section, we combine the result of Section~\ref{sec:hardness-VC} with the straightforward approximation-preserving reduction from Minimum Vertex Cover to Minimum Multicut on Trees, introduced by Garg et al.~\cite{DBLP:journals/algorithmica/GargVY97}. More precisely, we prove the following theorem.

\begin{theorem}\label{thm:multicut}
Assuming $\P \neq \NP$, there are no robust algorithms for $n^{1 - \varepsilon}$-stable instances of Multicut on Trees.
\end{theorem}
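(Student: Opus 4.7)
The plan is to exhibit a reduction that preserves both cost and stability, from $\gamma$-stable Vertex Cover to $\gamma$-stable Multicut on Trees, and then invoke Theorem~\ref{VC_stable_lower_bound_theorem}. The reduction I have in mind is the classical one of Garg, Vazirani, and Yannakakis~\cite{DBLP:journals/algorithmica/GargVY97}: given a vertex-weighted Vertex Cover instance $G=(V,E,w)$ with $|V|=n$, build a star $T$ with a fresh center $c$ and one leaf per $u\in V$; put weight $w_u$ on the tree edge $e_u=(c,u)$; and for each graph edge $(u,v)\in E$ introduce a demand pair $(u,v)$ in $T$. Removing $e_u$ corresponds to placing $u$ into the cover, so every feasible multicut $M\subseteq E(T)$ induces, via $\Phi(M)=\{u:e_u\in M\}$, a vertex cover of $G$ of exactly the same weight, and $\Phi$ is a cost-preserving bijection between feasible multicuts of $T$ and vertex covers of $G$.

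Next I would argue that $\gamma$-stability is preserved identically by $\Phi$. A $\gamma$-perturbation of the edge weights of $T$ is, through the identification $e_u\leftrightarrow u$, exactly a $\gamma$-perturbation of the vertex weights of $G$, and the cost of $M$ in the perturbed tree instance equals the cost of $\Phi(M)$ in the perturbed Vertex Cover instance. Consequently, $G$ has a unique optimal vertex cover $X^{*}$ that survives every $\gamma$-perturbation if and only if $T$ has a unique optimal multicut $M^{*}=\Phi^{-1}(X^{*})$ that survives every $\gamma$-perturbation. In particular, $G$ is $\gamma$-stable if and only if $T$ is $\gamma$-stable, and their optima correspond under $\Phi$.

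Now suppose, toward a contradiction, that there is a robust polynomial-time algorithm $\mathcal{A}$ for $N^{1-\varepsilon}$-stable Multicut on Trees (with $N$ the number of tree vertices). Given a Vertex Cover instance $G$ on $n$ vertices, build $T$ as above; it has $N=n+1$ vertices, computable in polynomial time. Run $\mathcal{A}$ on $T$ with stability parameter $\gamma=N^{1-\varepsilon}$. If $\mathcal{A}$ returns a multicut $M$, output $\Phi(M)$; by robustness $M$ is optimal for $T$, so $\Phi(M)$ is optimal for $G$. If $\mathcal{A}$ reports ``not stable'', then $T$ is not $\gamma$-stable; by the previous paragraph $G$ is not $\gamma$-stable either, so we may safely report ``not stable'' for $G$. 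This yields a robust algorithm for $\gamma$-stable Vertex Cover with $\gamma=(n+1)^{1-\varepsilon}\geq n^{1-\varepsilon'}$ for any $\varepsilon'>\varepsilon$ and $n$ sufficiently large, contradicting Theorem~\ref{VC_stable_lower_bound_theorem} under $\mathrm{P}\neq\mathrm{NP}$.

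There is no real obstacle here; the only thing I would be careful about is confirming that no hidden scaling is introduced, i.e.~that the stability parameter transfers without loss. This holds because the reduction simply renames vertex weights of $G$ as edge weights of $T$ (no multiplicative factors) and the correspondence $\Phi$ is a bijection between feasible solutions, so the inequality $\gamma\cdot w(X^{*}\setminus X')<w(X'\setminus X^{*})$ from Definition~\ref{def:equiv-stab} (stated there for Multiway Cut but equally valid here) translates verbatim between the two instances. The analogous extension to Khot--Ponnuswami-style hardness is immediate by plugging in the second bullet of Theorem~\ref{VC_stable_lower_bound_theorem}, if desired.
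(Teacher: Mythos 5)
Your proposal is correct and follows essentially the same route as the paper: the Garg--Vazirani--Yannakakis star reduction with the cost-preserving bijection between vertex covers and multicuts, the observation that $\gamma$-perturbations of the tree's edge weights are exactly $\gamma$-perturbations of the graph's vertex weights, and the contradiction with Theorem~\ref{VC_stable_lower_bound_theorem}. Your extra care about the $N=n+1$ versus $n$ discrepancy in the stability parameter is a minor refinement the paper elides, but the argument is the same.
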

Before we give the proof of the theorem, we formally define the problem.
\begin{definition}[Multicut on Trees]
Let $T = (V, E)$ be an edge-weighted tree, with non-negative weights $c: E \to \R_{\geq 0}$. Let $\{(s_1, t_1 ), ..., (s_k , t_k)\}$ be a specified set of pairs of vertices, where each pair is distinct, but vertices in different pairs are not required to be distinct. A multicut is a set of edges whose removal separates each of the pairs. The problem is to find a minimum weight multicut in $T$.
\end{definition}
The Minimum Multicut problem is \NP-hard even when restricted to trees of height 1 and unit weight edges. We are now ready to prove Theorem~\ref{thm:multicut}.

\begin{proof}[Proof of Theorem~\ref{thm:multicut}].
Let $G = (V,E,w)$, $w: V \to \R_{>0}$, be a $\gamma$-stable instance of \MVC{}, with $V = \{u_1, ..., u_n\}$. We consider the following Multicut instance. We construct the star graph $T = (V_T, E_T, w_T)$, where $V_T = \{c\} \cup V$, $E_T = \{(u_i,c)\}_{i = 1, ..., n}$ and $w_T: E_T \to \R_{>0}$ with $w_T(u_i,c) = w_{u_i}$, for every $i \in [n]$. The set of demand pairs is $D = \{(u_i,u_j): (u_i, u_j) \in E\}$.

It is easy to see that any feasible vertex cover $X \subseteq V$ of $G$ corresponds to a feasible multicut $X_T = \{(u,c): u \in X\}$ and vice versa, and moreover, $w(X) = w_T(X_T)$. We will now show that $T$ is also a $\gamma$-stable instance, given that $G$ is a $\gamma$-stable instance, and vice versa. Let $X^* \subseteq V$ be the unique optimal Vertex Cover for $G$. By the previous observations, it is easy to see that $X_T^* = \{(u,c): u \in X^*\}$ is the unique optimal multicut for $T$. Let $T'=(V_T,E_T, w_T')$ be any $\gamma$-perturbation of $T$, i.e.~for every $(u,c) \in E_T$, $w_T(u,c) \leq w_T'(u,c) \leq \gamma \cdot w_T(u,c)$. We want to prove that $T'$ has the same optimal solution as $T$. Suppose that $T'$ has an optimal solution $X_{T'} \neq X_T^*$. This immediately implies that the graph $G'=(V,E,w')$, with $w_u' = w_T'(u,c)$ for every $u \in V$, has an optimal vertex cover $X' = \{u \in V: (u,c) \in X_T'\}$ that is not equal to $X^*$. Note that $G'$ is a $\gamma$-perturbation of $G$, and thus we get a contradiction. So, $T$ is indeed a $\gamma$-stable instance of Multicut. The other direction is proved similarly, namely, if the constructed $T$ is $\gamma$-stable, then $G$ has to be $\gamma$-stable.

Suppose now that there exists a robust algorithm for $n^{1 - \varepsilon}$-stable instances of Multicut on Trees. Then, by using the above reduction, one can use such an algorithm to construct a robust algorithm for \MVC{}, and this is impossible, assuming $\P \neq \NP$, as shown in Theorem~\ref{VC_stable_lower_bound_theorem}.
\end{proof}

\chapter{Stability and the Independent Set problem}\label{chap:independent-set}

\section{Introduction}
The Maximum Independent Set problem is a central problem in theoretical computer science and has been the subject of numerous works over the last few decades. As a result we now have a thorough understanding of the \textit{worst-case} behavior of the problem. In general graphs, the problem is $n^{1-\varepsilon}$-hard to approximate, assuming that $\P \neq \NP$~\cite{DBLP:conf/focs/Hastad96, DBLP:journals/toc/Zuckerman07}, and $n/2^{(\log n)^{3/4 + \varepsilon}}$-hard to approximate, assuming that $\NP \not\subseteq \texttt{BPTIME}(2^{ (\log n)^{O(1)} })$. On the positive side, the current best algorithm is due to Feige~\cite{DBLP:journals/siamdm/Feige04} achieving a $\widetilde{O}(n / \log^3 n)$-approximation\footnote{The notation $\widetilde{O}$ hides some $\mathrm{poly}(\log \log n)$ factors.}. In order to circumvent the strong lower bounds, many works have focused on special classes of graphs, such as bounded-degree graphs (see e.g.~\cite{DBLP:journals/njc/HalldorssonR94, DBLP:journals/mp/AlonK98, DBLP:journals/jgaa/Halldorsson00, DBLP:journals/siamcomp/Halperin02, DBLP:conf/soda/Bansal15, DBLP:conf/stoc/BansalGG15, DBLP:journals/toc/AustrinKS11, DBLP:journals/jacm/Chan16}), planar graphs~\cite{DBLP:journals/jacm/Baker94} etc. In this chapter, we continue this long line of research and study the Maximum Independent Set problem  (which, from now on, we denote as \MIS{}) within the beyond worst-case analysis framework introduced by Bilu and Linial.

In this chapter, our focus is on understanding the complexity of stable instances of \MIS{}, with an emphasis on designing robust algorithms. From a practical point of view, designing algorithms for $\gamma$-stable instances for small values of $\gamma$ is highly desirable. Unfortunately, for general stable instances of \MIS{} this is not always possible. In Chapter~\ref{chap:hardness}, we proved that there is no robust algorithm for $n^{1 - \varepsilon}$-stable instances of \MIS{} on general graphs (unbounded degree), assuming that $\P \neq \NP$ (see Theorem~\ref{VC_stable_lower_bound_theorem}). As a result, our focus is on special classes of graphs, such as bounded-degree graphs and planar graphs,  where we prove that one can indeed handle small values of the stability parameter. Nevertheless, we do provide an algorithm for stable instances of \MIS{} on general graphs as well.

We now restate the definition of stability in the context of the Independent Set problem. As in the case of Multiway Cut, we use the following equivalent (and more convenient) definition.
\begin{definition}[$\gamma$-stable instance of \MIS{}]
Let $G = (V,E,w)$, $w: V \to \R_{>0}$, and let $I^*$ be a maximum independent set of $G$. The instance $G$ is $\gamma$-stable, for some parameter $\gamma \geq 1$, iff $w(I^* \setminus S) > \gamma \cdot w(S \setminus I^*)$ for every feasible independent set $S \neq I^*$.
\end{definition}

\paragraph{Related Work.} As mentioned, there have been many works about the worst-case complexity of \MIS{} and the best known approximation algorithm due to Feige~\cite{DBLP:journals/siamdm/Feige04} achieves a factor of $\widetilde{O}(n / \log^3 n)$. For degree-$\Delta$ graphs, Halperin~\cite{DBLP:journals/siamcomp/Halperin02} designed an ${O}(\frac{\Delta \log \log \Delta}{\log \Delta})$-approximation algorithm.  The \MIS{} problem has also been studied from the lens of beyond worst-case analysis. In the case of random graphs with a planted independent set, the problem is equivalent to the classic planted clique problem. Inspired by semi-random models of~\cite{blum1995coloring}, Feige and Killian~\cite{feige2001heuristics} designed SDP-based algorithms for computing large independent sets in semi-random graphs.

In the Bilu-Linial stability framework, Bilu~\cite{Bilu} analyzed the greedy algorithm and showed that it recovers the optimal solution for $\Delta$-stable instances of graphs of maximum degree $\Delta$. The same result is also a corollary of a general theorem about the greedy algorithm and $p$-extendible independence systems proved by Chatziafratis et al.~\cite{DBLP:conf/esa/ChatziafratisRV17}. Finally, we would like to mention that there has also been work on studying \MIS{} under adversarial perturbations to the graph~\cite{DBLP:conf/approx/MagenM09, chan2012approximation, bansal2017lp}.

\paragraph{Our results.} In this chapter, we explore the notion of stability in the context of \MIS{} and significantly improve our understanding of the problem's behavior on stable instances. In particular, using both combinatorial and LP-based methods, we design algorithms for stable instances of \MIS{} for different classes of graphs. More concretely, we obtain the following results.
\begin{itemize}
    \item \textbf{Independent set on planar graphs:} We show that on planar graphs, any constant stability suffices to solve the problem exactly in polynomial time. More precisely, we provide robust algorithms for $(1+\varepsilon)$-stable instances of planar \MIS{}, for any fixed $\varepsilon > 0$.

\item \textbf{Independent set on graphs of bounded degree or low chromatic number:} We provide a robust algorithm for solving $(k-1)$-stable instances of \MIS{} on $k$-colorable graphs, and $(\Delta - 1)$-stable instances of \MIS{} on graphs of maximum degree $\Delta$.

\item \textbf{Independent set on general graphs:} For general graphs, we present an algorithm for $(\varepsilon n)$-stable instances of \MIS{} on $n$ vertices whose running time is $n^{O(1/\varepsilon)}$.

\item \textbf{Convex relaxations and stability:} We present a structural result about the integrality gap of convex relaxations of several maximization problems on stable instances: if the integrality gap is at most $\alpha$, then it is at most $\min \left\{\alpha, 1 + \frac{1}{\beta - 1} \right\}$ for $(\alpha \beta)$-stable instances, for any $\beta > 1$.
 \end{itemize}

\paragraph{Organization of material.} In Section~\ref{sec:stable}, we present robust algorithms for stable instances of \MIS{} on special classes of graphs, such as bounded-degree graphs, planar graphs, and graphs with small chromatic number. For general, graphs we give a (non-robust) algorithm for stable instances in Section~\ref{sec:upper-bound-general}.

\section{Robust algorithms for stable instances of Independent Set}\label{sec:stable}

In the next few sections, we obtain \emph{robust} algorithms for stable instances of \MIS{} by using the standard LP relaxation and the Sherali-Adams hierarchy. Since there are strong lower bounds for robust algorithms on general graphs (see Chapter~\ref{chap:hardness}), we focus on special classes of graphs, such as bounded-degree graphs and planar graphs.

%%%%%%%%%%%%%%%%%%%%%%%%%%%%%%%%%%%%%%%%%%%%%%%%%%%%%%%%%%%%%%%%%%%%%%%%%%%%%%%%%%%%%%%%%%%%%%%%%%%%%%%%%%%%

\subsection{Convex relaxations and robust algorithms}

In order to design robust algorithms, we use convex relaxations of \MIS{}. An important component is the structural result of [MMV14] that we extensively used in Chapter~\ref{chap:multiway-cut}. We restate the theorem here, in the context of \MIS{}, for the readers that skipped Chapter~\ref{chap:multiway-cut}. We first introduce a definition and then restate their theorem in the setting of \MIS{}.

\begin{definition}[$(\alpha,\beta)$-rounding]
Let $x: V \to [0,1]$ be a feasible fractional solution of a convex relaxation of \MIS{} whose objective value for an instance $G = (V, E,w)$ is $\sum_{u \in V} w_u x_u$. A randomized rounding scheme for $x$ is an $(\alpha,\beta)$-rounding, for some parameters $\alpha, \beta \geq 1$, if it always returns a feasible independent set $S$, such that for every vertex $u \in V$,
\begin{enumerate}
    \item $\Pr[u \in S] \geq \frac{1}{\alpha} \cdot x_u$,
    \item $\Pr[u \notin S] \leq \beta \cdot (1 - x_u)$.
\end{enumerate}
\end{definition}

\begin{theorem}[\textrm{[MMV14]}]\label{thm:MMV}
Let $x: V \to [0,1]$ be an optimal (fractional) solution of a convex relaxation of \MIS{} whose objective value for an instance $G = (V, E,w)$ is $\sum_{u \in V} w_u x_u$. Suppose that there exists an $(\alpha,\beta)$-rounding for $x$, for some $\alpha,\beta \geq 1$. Then, $x$ is integral for $(\alpha \beta)$-stable instances; in particular, $x_u \in \{0,1\}$ for every $u \in V$.
\end{theorem}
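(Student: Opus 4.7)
My plan is to adapt the convex-combination argument used in Theorem~\ref{thm:stable-almost-integral} for Multiway Cut, rewriting it in the ``reverse'' direction appropriate for a maximization problem. Let $I^*$ denote the unique optimal independent set of the given $(\alpha\beta)$-stable instance $(G,w)$ and let $x^{OPT}$ be its indicator vector. Assume, for contradiction, that some optimal fractional solution $x^{FRAC}$ is non-integral; since the program is a relaxation of a maximization problem, $OPT_{LP}:=\sum_u w_u x_u^{FRAC} \geq w(I^*)$. For a small $\varepsilon \in (0,1)$, define the convex combination $x^{(\varepsilon)} = (1-\varepsilon)\,x^{OPT} + \varepsilon\, x^{FRAC}$, which is feasible by convexity of the relaxation and whose objective value $(1-\varepsilon)w(I^*) + \varepsilon \cdot OPT_{LP}$ is at least $w(I^*)$. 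I then apply the hypothesized $(\alpha,\beta)$-rounding to $x^{(\varepsilon)}$ to produce a random independent set $S$.

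The core computation mirrors the minimization case with the roles of ``inside'' and ``outside the optimum'' swapped. Using the co-approximation condition to bound $\E[w(I^*\setminus S)] = \sum_{u \in I^*} w_u \Pr[u \notin S] \leq \beta \sum_{u \in I^*} w_u (1-x_u^{(\varepsilon)})$ from above, the approximation condition to bound $\E[w(S\setminus I^*)] = \sum_{u \notin I^*} w_u \Pr[u \in S] \geq (1/\alpha) \sum_{u \notin I^*} w_u x_u^{(\varepsilon)}$ from below, and the stability inequality $w(I^* \setminus S) > (\alpha\beta)\cdot w(S \setminus I^*)$ (valid whenever $S \neq I^*$; on $S = I^*$ both sides vanish), monotonicity of expectation yields the strict inequality
\begin{equation*}
    \beta \sum_{u \in I^*} w_u (1 - x_u^{(\varepsilon)}) \;>\; \beta \sum_{u \notin I^*} w_u x_u^{(\varepsilon)},
\end{equation*}
which rearranges to $w(I^*) > \sum_u w_u x_u^{(\varepsilon)}$, contradicting the lower bound established above.

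The main point that requires care is justifying the strict inequality in expectation, which needs $\Pr[S \neq I^*] > 0$. Since the rounding guarantees are one-sided, I would obtain this by exhibiting a vertex $u \notin I^*$ with $x_u^{(\varepsilon)} > 0$, so that the approximation condition gives $\Pr[u \in S] > 0$ and forces $S \neq I^*$ on that event. To locate such a vertex I would use the non-integrality of $x^{FRAC}$: if instead $x_u^{FRAC} = 0$ for every $u \notin I^*$, then $OPT_{LP} \leq \sum_{u \in I^*} w_u x_u^{FRAC} \leq w(I^*)$, and combining with $OPT_{LP} \geq w(I^*)$ forces $x_u^{FRAC}=1$ for every $u \in I^*$ (assuming, without loss of generality, that all weights are positive), contradicting non-integrality. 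This dichotomy is the only subtlety; the remainder is bookkeeping and is the direct maximization analogue of the Multiway Cut calculation.
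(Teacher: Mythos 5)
Your argument is correct and is essentially the proof of [MMV14] that the paper itself defers to without reproducing (the paper only writes out the analogous argument for the minimization case, Theorem~\ref{thm:stable-almost-integral}); the reversal of the two rounding conditions, the use of the stability inequality $w(I^*\setminus S) > (\alpha\beta)\,w(S\setminus I^*)$, and the dichotomy establishing $\Pr[S\neq I^*]>0$ via a vertex $u\notin I^*$ with positive fractional value are all exactly right. One remark: the convex-combination step is superfluous here and, read strictly, exceeds the hypothesis. The theorem only assumes an $(\alpha,\beta)$-rounding \emph{for the optimal solution $x$}, whereas you apply it to $x^{(\varepsilon)}$; that trick is needed only in the Multiway Cut version because there the rounding is guaranteed merely for $\varepsilon$-close solutions. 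Since your final contradiction is $w(I^*) > \sum_u w_u x_u^{(\varepsilon)}$ against $\sum_u w_u x_u^{(\varepsilon)} \geq w(I^*)$, and both hold already at $\varepsilon=1$, you should simply take $S$ to be the output of the rounding applied directly to $x$ itself; everything else in your write-up goes through unchanged. (Also, positivity of the weights is part of the paper's definition of an \MIS{} instance, so no ``without loss of generality'' is needed.)
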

The proof is identical to the proof given in~\cite{DBLP:conf/soda/MakarychevMV14}, and thus, is not repeated here. As already explained, the theorem suggests a simple robust algorithm: given a convex relaxation for which we have shown the existence of such a scheme, we solve it, and if the solution is integral, we report it, otherwise we report that the instance is not stable (observe that the rounding scheme is used only in the analysis).

In the next section, we study a rounding scheme for the standard LP for \MIS{}, and prove that it satisfies the properties of the theorem. The standard LP for \MIS{} for a graph $G = (V,E,w)$ has an indicator variable $x_u$ for every vertex $u\in V$, and is given in Figure~1.
\begin{figure}
\centering
\begin{align*}
    \max: &\quad \sum_{u \in V} w_u x_u \\
    \textrm{s.t.:}  &\quad x_u + x_v \leq 1, \quad\; \forall (u,v) \in E,\\
                    &\quad x_u \in [0,1], \,\quad\quad \forall u \in V.
\end{align*}
\caption{The standard LP relaxation for Independent Set.}
\label{fig:lp-mis}
\end{figure}
It is a well-known fact that the vertices of this polytope are half-integral~\cite{Nemhauser1975}, and thus there always exists an optimal solution $x$ that satisfies $x_u \in \left\{0,\frac{1}{2},1 \right\}$ for every vertex $u \in V$; moreover, such solution can be computed in polynomial time. This fact will prove very useful in the design of $(\alpha,\beta)$-rounding schemes (as was already shown in Section~\ref{sec:node-mc} for Node Multiway Cut), since it essentially allows us to consider randomized combinatorial algorithms and present them as rounding schemes, as long as they ``preserve" the integral part of the LP (i.e. they never pick a vertex $u$ if $x_u = 0$ and they always pick a vertex if $x_u = 1$).

%%%%%%%%%%%%%%%%%%%%%%%%%%%%%%%%%%%%%%%%%%%%%%%%%%%%%%%%%%%%%%%%%%%%%%%%%%%%%%%%%%%%%%%%%%%%%%%
\subsection{A robust algorithm for $(k-1)$-stable instances of Independent Set on $k$-colorable graphs}

In this section, we give a robust algorithm for $(k-1)$-stable instances of \MIS{} on $k$-colorable graphs. The crucial observation that we make is that, since the rounding scheme in Theorem~\ref{thm:MMV} is only used in the analysis and not in the algorithm, it can be an exponential-time scheme.

Let $G = (V,E,w)$ be a $k$-colorable graph, and let $x$ be an optimal half-integral solution. Let $V_0 = \{u \in V: x_u = 0\}$, $V_{1/2} = \{u \in V: x_u = 1/2\}$ and $V_1 = \{u \in V: x_u = 1\}$. We consider the following rounding scheme of Hochbaum~\cite{HOCHBAUM1983243} (see Algorithm~\ref{alg:k-colorable-rounding}).

\begin{algorithm}[h]
\begin{enumerate}
    \item Let $G_{1/2} = G[V_{1/2}]$ be the induced graph on the set $V_{1/2}$.
    \item Compute a $k$-coloring $f: V_{1/2} \to [k]$ of $G_{1/2}$.
    \item Pick $j$ uniformly at random from the set $[k]$, and set $V_{1/2}^{(j)} := \{u \in V_{1/2}: f(u) = j\}$.
    \item Return $S:= V_{1/2}^{(j)} \cup V_1$.
\end{enumerate}
\caption{Hochbaum's $k$-colorable rounding scheme}
\label{alg:k-colorable-rounding}
\end{algorithm}

\begin{theorem}\label{thm:color-rounding}
Let $G = (V,E,w)$ be a $k$-colorable graph. Given an optimal half-integral solution $x$, the above rounding scheme is a $\left(\frac{k}{2}, \frac{2(k-1)}{k}\right)$-rounding for $x$.
\end{theorem}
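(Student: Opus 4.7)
The plan is to verify directly the two conditions in the definition of an $(\alpha,\beta)$-rounding with $\alpha = k/2$ and $\beta = 2(k-1)/k$, together with feasibility of $S$. First I would argue that the $k$-coloring in step~2 exists: since $G$ is $k$-colorable, so is any induced subgraph, in particular $G_{1/2}$. Note that we do not need this coloring to be computable in polynomial time, because by Theorem~\ref{thm:MMV} the rounding scheme is only used in the analysis; the algorithm merely solves the LP and checks integrality. This is the key conceptual point that makes the scheme useful despite the NP-hardness of $k$-coloring a $k$-colorable graph.

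Next I would establish feasibility of $S = V_{1/2}^{(j)} \cup V_1$. The set $V_1$ is an independent set because any edge $(u,v)$ with $u,v \in V_1$ would violate the LP constraint $x_u + x_v \leq 1$. The set $V_{1/2}^{(j)}$ is an independent set because it is a color class of a proper $k$-coloring. Finally, there are no edges between $V_1$ and $V_{1/2}$: such an edge $(u,v)$ with $x_u = 1$ and $x_v = 1/2$ would give $x_u + x_v = 3/2 > 1$, again violating the LP constraint.

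I would then verify both probabilistic conditions by a straightforward case analysis on which of $V_0$, $V_{1/2}$, $V_1$ the vertex $u$ belongs to. For $u \in V_1$ (i.e.~$x_u = 1$), $u$ is always in $S$, so $\Pr[u \in S] = 1 \geq (2/k) \cdot x_u$ and $\Pr[u \notin S] = 0 = (2(k-1)/k)(1 - x_u)$. For $u \in V_0$ (i.e.~$x_u = 0$), $u \notin V_{1/2} \cup V_1$ so $u$ is never in $S$, giving $\Pr[u \in S] = 0 = (2/k) x_u$ and $\Pr[u \notin S] = 1 \leq (2(k-1)/k)(1 - x_u)$ since $2(k-1)/k \geq 1$ for $k \geq 2$. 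For $u \in V_{1/2}$ (i.e.~$x_u = 1/2$), $u$ is included in $S$ exactly when the randomly chosen color $j$ equals $f(u)$, an event of probability $1/k$. Hence $\Pr[u \in S] = 1/k = (2/k)\cdot(1/2) = (2/k) x_u$ and $\Pr[u \notin S] = (k-1)/k = (2(k-1)/k)\cdot(1/2) = (2(k-1)/k)(1 - x_u)$, giving equality in both bounds.

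The main obstacle is essentially conceptual rather than technical: one must recognize that the inefficiency of constructing the $k$-coloring does not matter because the rounding is an analytical device only. The actual calculations are immediate once feasibility is noted. Assembling these observations finishes the proof.
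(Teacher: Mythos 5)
Your proof is correct and follows essentially the same route as the paper's: a case analysis on whether $x_u$ is $0$, $1/2$, or $1$, with the only nontrivial case being $x_u = 1/2$ where the probability is exactly $1/k$. You are more explicit about feasibility and about why the NP-hardness of $k$-coloring is harmless (a point the paper makes in the surrounding text rather than inside the proof), but these are elaborations of the same argument, not a different one.
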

\begin{proof}
It is easy to see that the rounding scheme always returns a feasible solution. For $u \in V_0 \cup V_1$, the properties are trivially satisfied. Let $u \in V_{1/2}$. We have $\Pr[u \in S] \geq \frac{1}{k} = \frac{2}{k} \cdot \frac{1}{2} = \frac{2}{k} \cdot x_u$. We also have $\Pr[u \notin S] \leq 1 - \frac{1}{k} = \frac{k-1}{k} = \frac{2(k-1)}{k} \cdot \frac{1}{2} = \frac{2(k-1)}{k} \cdot (1-x_u)$. 
\end{proof}

Combining Theorems~\ref{thm:MMV} and~\ref{thm:color-rounding}, we get the following result.
\begin{theorem}\label{thm:colorable-robust}
The standard LP for \MIS{} is integral for $(k-1)$-stable instances of $k$-colorable graphs.
\end{theorem}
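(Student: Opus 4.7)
The plan is to combine Theorems~\ref{thm:MMV} and~\ref{thm:color-rounding} in a direct way. Given a $(k-1)$-stable $k$-colorable instance $G = (V,E,w)$, I would start by invoking the classical half-integrality of the standard LP polytope for \MIS{} (Nemhauser and Trotter, cited above in the excerpt), which guarantees that the LP admits an optimal solution $x^{\star}: V \to \{0, 1/2, 1\}$ that can be computed in polynomial time. The goal is to show that any such half-integral optimum is forced to be integral on a $(k-1)$-stable instance, which gives the desired conclusion that the LP itself is integral.

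Next I would apply Theorem~\ref{thm:color-rounding} to $x^{\star}$. This theorem exhibits an explicit $\left(\frac{k}{2}, \frac{2(k-1)}{k}\right)$-rounding of $x^{\star}$, since $G$ is $k$-colorable and $x^{\star}$ is half-integral. Observe that the product of the two parameters is $\alpha \cdot \beta = \frac{k}{2}\cdot\frac{2(k-1)}{k} = k-1$, matching the stability threshold in the statement. With the rounding scheme in hand, I would invoke Theorem~\ref{thm:MMV}, which says that any optimal fractional solution admitting an $(\alpha,\beta)$-rounding must be integral on $(\alpha\beta)$-stable instances. Applied to $x^{\star}$ with $\alpha\beta = k-1$, this forces $x^{\star}_u \in \{0,1\}$ for every $u \in V$, so the LP has an integral optimum, and therefore is integral on this instance, as claimed.

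The only subtlety — and the main conceptual point — is that Hochbaum's rounding requires a $k$-coloring of the support $G_{1/2}$, and computing such a coloring is itself \NP-hard for $k\geq 3$. This is not a real obstacle, however, because the $(\alpha,\beta)$-rounding is used purely inside the analysis of Theorem~\ref{thm:MMV} (to derive a contradiction with $(k-1)$-stability from a hypothetical non-integral optimum), not inside the robust algorithm that actually solves the instance. The algorithm itself simply solves the LP, checks whether the half-integral vertex solution returned is integral, and either outputs it or reports that the instance is not stable; the existence — not the polynomial-time construction — of the coloring-based rounding is what certifies correctness. This observation, highlighted just before the statement of the theorem, is the key reason the argument goes through and is what I expect to flag explicitly in the write-up.
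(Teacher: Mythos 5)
Your proposal is correct and follows exactly the paper's route: take an optimal half-integral vertex solution, observe that Hochbaum's coloring-based scheme is a $\bigl(\tfrac{k}{2},\tfrac{2(k-1)}{k}\bigr)$-rounding with product $k-1$, and invoke Theorem~\ref{thm:MMV} to force integrality. The subtlety you flag — that the $k$-coloring need not be computable in polynomial time because the rounding lives only in the analysis — is precisely the observation the paper highlights before stating the theorem.
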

It is easy to see that the above result is tight. For that, we fix some small $\varepsilon > 0$. For any $k \leq n$, we consider a clique of $k$ vertices $\{u_1, ..., u_k\}$, and $n - k$ vertices $\{q_1, ..., q_{n - k}\}$ that are of degree 1, and whose only neighbor is $u_k$. We set $w_{u_i} = 1$, for every $i \in \{1, ..., k-1\}$, $w_{u_k} = k - 1 - \varepsilon/2$ and $w_{q_i} = \frac{\varepsilon}{4(n - k)}$ for every $i \in [n - k]$. It is easy to see that unique optimal solution is $X^* = \{u_k\}$ with cost $w(X^*) = k - 1 - \varepsilon/2$. Let $X_i = \{u_i\} \cup \{q_1, ..., q_{n - k}\}$, for $i \in [k - 1]$. We have
\begin{align*}
    (k - 1 - \varepsilon) \cdot w(X_i \setminus X^*) &= (k - 1 - \varepsilon) \cdot w(X_i) = (k - 1 - \varepsilon) \cdot 1 + (\varepsilon/4)\\
                                                     &= k - 1 - 3\varepsilon/4 < k - 1 - \varepsilon / 2 = w(X^*)\\
                                                     &= w(X^* \setminus X_i).
\end{align*}
It is easy to verify now that this covers all interesting solutions (i.e. maximal), and so the instance is indeed $(k - 1 - \varepsilon)$-stable. If we now consider the fractional solution that assigns $1/2$ to every vertex, we get a solution of cost $k - 1 - \varepsilon / 4 + \varepsilon / 8 > w(X^*)$, and thus, the integrality gap of the LP is strictly larger than 1.

It is a well-known fact that the chromatic number of graph of maximum degree $\Delta$ is at most $\Delta + 1$. Thus, the above result implies a robust algorithm for $\Delta$-stable instances of graphs of maximum degree $\Delta$. This gives a robust analog of the result of Bilu~\cite{Bilu}. And, although the above example seems to suggest that the result is tight, we will now see how we can slightly improve upon it by using Theorem~\ref{thm:colorable-robust} and Brook's theorem~\cite{brooks_1941}.
\begin{theorem}[Brook's theorem~\cite{brooks_1941}]
The chromatic number of a graph is at most the maximum degree $\Delta$, unless the graph is complete or an odd cycle, in which case it is $\Delta + 1$.
\end{theorem}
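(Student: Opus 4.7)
The plan is to prove Brooks' theorem by producing, in every non-exceptional case, a linear vertex ordering that lets greedy coloring use only $\Delta$ colors; the exceptional cases (complete graphs and odd cycles) will emerge naturally as the graphs for which no such ordering exists. First I would dispose of the low-degree cases. If $\Delta \leq 1$ the result is immediate, and if $\Delta = 2$ the connected components of $G$ are paths and cycles, which are $2$-colorable except for odd cycles (matching the stated exception). I may henceforth assume $G$ is connected, $\Delta \geq 3$, and $G$ is not a complete graph; the goal is $\chi(G) \leq \Delta$.

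The central tool is the following well-known observation: if I order the vertices as $v_1, \dots, v_n$ so that each $v_i$ with $i > 1$ has at least one neighbor appearing later in the ordering, then $v_i$ has at most $\Delta - 1$ neighbors among $v_1, \dots, v_{i-1}$, so greedy coloring in the reverse order $v_n, v_{n-1}, \dots, v_1$ always has a color in $\{1,\dots,\Delta\}$ free when it assigns a color to $v_i$ for $i > 1$. The only issue is $v_1$, which may have all $\Delta$ neighbors already colored. If $G$ contains a vertex $v$ of degree strictly less than $\Delta$, I set $v_1 := v$ and order the remaining vertices by reverse BFS distance from $v$; then $v$ itself is safe (fewer than $\Delta$ neighbors), and every other vertex has its BFS-parent later in the ordering, so greedy coloring uses at most $\Delta$ colors.

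The remaining case is $G$ being $\Delta$-regular (and connected, not complete, $\Delta \geq 3$). Here the plan is to find three vertices $x, y, z$ with $xy, xz \in E$, $yz \notin E$, and $G \setminus \{y, z\}$ connected. Given such a triple, I start the ordering with $y, z$ and force them to share a color, continue by reverse BFS distance from $x$ inside $G \setminus \{y,z\}$, and place $x$ last. Each intermediate vertex has a later-indexed BFS neighbor, so greedy coloring succeeds on them; when it reaches $x$ last, two of its $\Delta$ neighbors (namely $y$ and $z$) have been assigned the same color, so only $\Delta - 1$ distinct colors appear in $N(x)$ and one color in $\{1,\dots,\Delta\}$ remains available for $x$.

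The main obstacle is producing the triple $(x,y,z)$, which is the step that genuinely uses the hypotheses that $G$ is neither complete nor an odd cycle. I would split on connectivity. If $G$ has a cut vertex $x$, I plan to choose $y$ and $z$ as neighbors of $x$ in two different blocks of $G$ at $x$; then $G \setminus \{y,z\}$ remains connected because each block stays connected (using $\Delta \geq 3$ to ensure each block has enough vertices) and they are all glued through $x$. If $G$ is $2$-connected, I would pick any vertex $u$ whose neighborhood is not a clique (such $u$ exists because $G$ is not complete, using a short argument on a shortest pair of non-adjacent vertices), take two non-adjacent neighbors $y, z$ of $u$, and appeal to the fact that a $2$-connected graph of minimum degree $\geq 3$ remains connected after deleting two carefully chosen vertices at distance $2$; an ear-decomposition or a direct Menger-style argument should handle this cleanly. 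Verifying that complete graphs and odd cycles are precisely the obstructions to finding such a triple is the delicate part of the argument, and I expect this $(x,y,z)$-existence lemma to be the main technical hurdle of the proof.
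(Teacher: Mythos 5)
The paper does not prove this statement at all: Brooks' theorem is quoted as a classical result with a citation to Brooks (1941) and used as a black box (to get the $(\Delta-1)$-stability bound via $\Delta$-colorability of non-complete, non-odd-cycle graphs), so there is no proof in the paper to compare against. Your sketch is the standard Lov\'asz-style proof by greedy coloring along a carefully chosen vertex ordering, and the ordering/greedy machinery you describe (reverse BFS ordering when some vertex has degree $<\Delta$; pre-coloring two non-adjacent neighbors $y,z$ of $x$ with the same color in the regular case) is correct.

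However, two points keep this from being a complete proof. First, your cut-vertex case would fail as stated: if $x$ is a cut vertex and you pick neighbors $y,z$ of $x$ in different blocks, $G\setminus\{y,z\}$ need not be connected, because $y$ may itself be a cut vertex of $G$ separating $x$ from part of its component of $G-x$ (the block containing the edge $xy$ may be just that edge). The standard and cleaner route here is not to hunt for the triple but to color each block separately --- every block contains a cut vertex of $G$ whose degree \emph{within the block} is $<\Delta$, so your non-regular case applies --- and then merge the colorings by permuting colors at the cut vertices. Second, in the $2$-connected regular case you explicitly defer the existence of the triple $(x,y,z)$ with $G\setminus\{y,z\}$ connected to ``an ear-decomposition or Menger-style argument''; this lemma (split into the $3$-connected case, where any common neighbor of two vertices at distance $2$ works, and the case of a $2$-cut, where one takes $y,z$ in distinct end-blocks of $G-x$) is precisely where the hypotheses ``not complete, not an odd cycle'' are consumed, i.e.\ it is the mathematical content of the theorem rather than a routine verification. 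As written, the proposal is a correct outline of the standard proof with its hardest step left open and one case argued incorrectly.
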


\MIS{} is easy to compute on cliques and cycles. Thus, by Brook's theorem, every interesting instance of maximum degree $\Delta$ is $\Delta$-colorable. More formally, we obtain the following theorem.
\begin{theorem}\label{thm:delta_minus_one_robust}
There exists a robust algorithm for $(\Delta - 1)$-stable instances of \MIS{}, where $\Delta$ is the maximum degree.
\end{theorem}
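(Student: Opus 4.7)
The plan is to reduce $(\Delta-1)$-stable MIS on an arbitrary graph to the setting of Theorem~\ref{thm:colorable-robust} via Brook's theorem, handling the two exceptional cases (complete graphs and odd cycles) by direct combinatorial computation. Since MIS (and the standard LP for it) decomposes additively across connected components, I would process each connected component $G_i$ of $G$ independently and then union the per-component solutions.

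Concretely, for each component $G_i$ with maximum degree $\Delta_i \leq \Delta$, the robust algorithm proceeds as follows. If $G_i$ is the clique $K_{\Delta_i+1}$, the optimal independent set is the heaviest vertex, returned directly. If $G_i$ is an odd cycle (necessarily $\Delta_i = 2$), we compute MIS on it exactly in polynomial time by a straightforward dynamic program. In every other case, Brook's theorem guarantees $\chi(G_i) \leq \Delta_i \leq \Delta$, so we solve the standard LP on $G_i$; if the optimum is integral, we take it as the contribution of $G_i$, otherwise we report ``not $(\Delta-1)$-stable'' for the entire instance.

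For correctness of the robust algorithm, I would argue in two directions. First, suppose $G$ is $(\Delta-1)$-stable with unique optimum $I^*$. A standard decoupling argument shows that each $G_i$ inherits $(\Delta-1)$-stability, since any strictly better perturbation of $G_i$ alone (leaving the other components' weights fixed) would give a strictly better perturbation of $G$. For exceptional components the direct solver returns the unique optimum. For non-exceptional components, $\chi(G_i) \leq \Delta_i$, and $(\Delta-1)$-stability is at least as strong as $(\Delta_i - 1)$-stability (by monotonicity in the stability parameter, and since $\Delta_i \leq \Delta$); thus Theorem~\ref{thm:colorable-robust} applied with $k = \Delta_i$ forces the LP optimum to be integral and equal to the unique MIS of $G_i$. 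Unioning the per-component optima yields $I^*$. Conversely, whenever the algorithm returns a set rather than ``not stable'', each component's contribution is an optimal MIS for that component (direct computation is exact, and an integral LP solution is optimal because the LP is a valid relaxation), so the global output is a true MIS; otherwise the algorithm reports ``not stable'', which is permitted by Definition~\ref{def:robust}.

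The main subtlety, and the step that most needs care, is identifying precisely which components to route to the LP and which to solve directly: Brook's exceptions $K_{\Delta_i+1}$ and odd cycles are exactly the components with $\chi(G_i) = \Delta_i + 1$, where Theorem~\ref{thm:colorable-robust} would demand $\Delta_i$-stability (i.e.\ a $\Delta$-stability condition for a $K_{\Delta+1}$ component) rather than the weaker $(\Delta-1)$-stability we are given. By peeling off exactly these components and solving them combinatorially, the LP step is only invoked when the hypothesis of Theorem~\ref{thm:colorable-robust} is genuinely available.
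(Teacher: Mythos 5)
Your proposal is correct and follows essentially the same route as the paper: handle the Brook's-theorem exceptions (complete components and, when $\Delta\leq 2$, cycles) by direct combinatorial computation, and apply Theorem~\ref{thm:colorable-robust} via Brook's theorem to the rest, using the fact that components inherit the stability of the whole instance. The only difference is presentational — you run the LP per component with $k=\Delta_i$, whereas the paper peels off the $K_{\Delta+1}$ components and solves a single LP on the remaining ($\Delta$-colorable) graph — and both variants are valid.
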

\begin{proof}
The algorithm is very simple. If $\Delta \leq 2$, then the graph is a collection of paths and cycles, and we can find the optimal solution in polynomial time. So, let's assume that $\Delta > 2$. In that case, we first separately solve all $K_{\Delta + 1}$ disjoint components (we pick the heaviest vertex of each $K_{\Delta + 1}$), if any, and then we solve the standard LP on the remaining graph (whose stability is the same as the stability of the whole graph). The remaining graph, as implied by Brook's theorem, is $\Delta$-colorable. If the LP is integral, we return the solution (for the whole graph), otherwise we report that the instance is not stable.
\end{proof}

%%%%%%%%%%%%%%%%%%%%%%%%%%%%%%%%%%%%%%%%%%%%%%%%%%%%%%%%%%%%%%%%%
\subsection{Robust algorithms for $(1+\varepsilon)$-stable instances of Independent Set on planar graphs}

In this section, we design a robust algorithm for $(1+\varepsilon)$-stable instances of \MIS{} on planar graphs. We note that Theorem~\ref{thm:colorable-robust} already implies a robust algorithm for $3$-stable instances of planar \MIS{}, but we will use the Sherali-Adams hierarchy (which we denote as SA from now on) to reduce this threshold down to $1 + \varepsilon$, for any fixed $\varepsilon > 0$. In particular, we show that $O(1/\varepsilon)$ rounds of SA suffice to optimally solve $(1+\varepsilon)$-stable instances of \MIS{} on planar graphs. We will not introduce the SA hierarchy formally, and we refer the reader to the many available surveys about LP/SDP hierarchies (see e.g.~\cite{Chlamtac2012}). The $t$-th level of the SA relaxation for \MIS{} has a variable $Y_S$ for every subset $S \subseteq V$ of vertices of size at most $|S| \leq t + 1$, whose intended value is $Y_S = \prod_{u \in S} x_u$, where $x_u$ is the indicator variable of whether $u$ belongs to the independent set. The relaxation has size $n^{O(t)}$, and thus can be solved in time $n^{O(t)}$. For completeness, we give the relaxation in Figure~\ref{fig:SA-LP}.
\begin{figure}[ht]
\begin{align*}
    \max: &\quad \sum_{u \in V} w_u Y_{\{u\}} \\
    \textrm{s.t.:}  &\quad \sum_{T' \subseteq T} (-1)^{|T'|} \cdot \left(Y_{S \cup T' \cup \{u\}} + Y_{S \cup T' \cup \{v\}} - Y_{S \cup T'} \right) \leq 0, \:\:\:\:\: \forall (u,v) \in E, |S| + |T| \leq t,\\
                    &\quad 0 \leq \sum_{T' \subseteq T} (-1)^{|T'|} \cdot Y_{S \cup T' \cup \{u\}} \leq \sum_{T' \subseteq T} (-1)^{|T'|} \cdot Y_{S \cup T'}, \:\:\quad \forall u \in V, |S| + |T| \leq t,\\
                    &\quad Y_{\emptyset} = 1,\\
                    &\quad Y_{S} \in [0,1], \quad \quad\quad \quad\quad \quad\quad \quad\quad \quad\quad \quad\quad \quad\quad \quad\quad \quad\quad\;\; \forall S \subseteq V, |S| \leq t + 1.
\end{align*}
\caption{The Sherali-Adams relaxation for Independent Set.}
\label{fig:SA-LP}
\end{figure}

Our starting point is the work of Magen and Moharrami~\cite{DBLP:conf/approx/MagenM09}, which gives a SA-based PTAS for \MIS{} on planar graphs, inspired by Baker's technique~\cite{DBLP:journals/jacm/Baker94}. In particular, \cite{DBLP:conf/approx/MagenM09} gives a rounding scheme for the $O(t)$-th round of SA that returns a $(1 + O(1/t))$-approximation. In this section, we slightly modify and analyze their rounding scheme, and prove that it satisfies the conditions of Theorem~\ref{thm:MMV}. For that, we need a theorem of Bienstock and Ozbay~\cite{DBLP:journals/disopt/BienstockO04}. For any subgraph $H$ of a graph $G = (V,E)$, let $V(H)$ denote the set of vertices contained in $H$.
\begin{theorem}[\cite{DBLP:journals/disopt/BienstockO04}]\label{thm:SA-treewidth}
Let $t \geq 1$ and $Y$ be a feasible vector for the $t$-th level SA relaxation of the standard Independent Set LP for a graph $G$. Then, for any subgraph $H$ of $G$ of treewidth at most $t$, the vector $(Y_{\{u\}})_{u \in V(H)}$ is a convex combination of independent sets of $H$.
\end{theorem}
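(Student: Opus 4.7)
The plan is to use a tree decomposition of $H$ to assemble the local distributions implicit in the Sherali--Adams vector $Y$ into a single global distribution over independent sets of $H$ whose vertex marginals equal $Y_{\{u\}}$. Fix a tree decomposition $(\mathcal{T},\{B_i\}_{i\in V(\mathcal{T})})$ of $H$ with every bag of size at most $t+1$, root $\mathcal{T}$ arbitrarily, and order the bags as $B_{i_1},B_{i_2},\dots$ so that every bag other than the root has a parent bag $B_{\mathrm{par}(i)}$ appearing earlier in the order. For each bag $B$ and each subset $A\subseteq B$, define
\begin{equation*}
\mu_B(A) \;=\; \sum_{A\subseteq S\subseteq B}(-1)^{|S\setminus A|}\,Y_S,
\end{equation*}
which is the Möbius inversion of the SA ``moment'' variables restricted to $B$.

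The first step is to verify that $\mu_B$ is a probability distribution on subsets of $B$ and that it is supported on independent sets of $H[B]$. Nonnegativity $\mu_B(A)\ge 0$ is exactly the nonnegativity-of-marginals SA constraint (the inequality $0\le\sum_{T'\subseteq T}(-1)^{|T'|}Y_{S\cup T'\cup\{u\}}$ and its analogues), which is available because $|B|\le t+1$ so every $Y_S$ invoked has $|S|\le t+1$. Normalization $\sum_A \mu_B(A)=Y_\emptyset=1$ follows from inclusion--exclusion. Finally, the edge constraints of SA imply that $\mu_B(A)=0$ whenever $A$ contains both endpoints of an edge of $H[B]$, so $\mu_B$ is supported on independent sets of $H[B]$.

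The second step is the consistency of the $\mu_B$'s on bag intersections: for adjacent bags $B,B'$ and any $A\subseteq B\cap B'$, the marginal $\sum_{A\subseteq C\subseteq B}\mu_B(C)$ equals $\sum_{A\subseteq C\subseteq B'}\mu_{B'}(C)$, because by Möbius inversion both sides telescope to $\sum_{A\subseteq S\subseteq B\cap B'}(-1)^{|S\setminus A|}Y_S$, which depends only on $B\cap B'$. Now glue: process the bags in the chosen order and maintain a distribution $\nu_k$ on independent sets of $H\bigl[\bigcup_{j\le k}B_{i_j}\bigr]$. Set $\nu_1=\mu_{B_{i_1}}$. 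Given $\nu_k$, extend to $\nu_{k+1}$ by assigning, for every partial independent set $I$ in the support of $\nu_k$ and every $A\subseteq B_{i_{k+1}}$ consistent with $I$ on $B_{i_{k+1}}\cap B_{\mathrm{par}(i_{k+1})}$, the conditional probability $\mu_{B_{i_{k+1}}}(A)/\bigl(\text{marginal of }\mu_{B_{i_{k+1}}}\text{ on }B_{i_{k+1}}\cap B_{\mathrm{par}(i_{k+1})}\bigr)$ (with the standard convention that $0/0=0$). The running intersection property of tree decompositions ensures that this extension does not conflict with earlier bags, and the consistency of marginals from the previous step ensures that $\nu_{k+1}$ remains a probability distribution with the correct one-bag marginals $\mu_B$.

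The third step is to verify the conclusion. Since every edge of $H$ is contained in some bag and each $\mu_B$ is supported on independent sets of $H[B]$, the final distribution $\nu$ is supported on independent sets of $H$. For every $u\in V(H)$, picking any bag $B\ni u$, the one-vertex marginal is $\sum_{A\ni u}\mu_B(A)=Y_{\{u\}}$ by construction. Hence $(Y_{\{u\}})_{u\in V(H)}$ is a convex combination of indicator vectors of independent sets of $H$, as claimed. The only delicate step is the gluing: the key technicality is to check that the conditional-probability extension is well defined (requires consistency of marginals on $B_{i_{k+1}}\cap B_{\mathrm{par}(i_{k+1})}$, which has size $\le t$ and is therefore covered by the SA variables) and that running intersection prevents any accumulated inconsistency with bags processed earlier.
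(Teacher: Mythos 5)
The thesis does not prove this theorem; it is stated as a cited result of Bienstock and Ozbay, so there is no ``paper's own proof'' to compare against. Your argument is a correct reconstruction of the Bienstock--Ozbay proof: M\"obius-invert $Y$ on each bag to get a local distribution supported on independent sets of the bag, check consistency of marginals on bag intersections, and glue along the tree by a chain-rule/conditional-probability argument using the running-intersection property. This is essentially the only known approach, so the proposal and the cited reference are doing the same thing.

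Two small technical points worth tightening. First, when you assert that nonnegativity of $\mu_B$ ``is exactly'' the SA marginal-nonnegativity constraint: the constraints in the paper's formulation are only stated for $|S|+|T|\le t$, while a bag may have $|B|=t+1$. One gets $\mu_B(A)\ge 0$ by peeling off a single vertex $u\in B$ and writing $\mu_B(A)$ either as $\sum_{T'\subseteq B\setminus A}(-1)^{|T'|}Y_{(A\setminus\{u\})\cup T'\cup\{u\}}$ (if $u\in A$) or as a difference of two such sums (if $u\notin A$), so that the invoked constraint has $|S|+|T|=|B|-1\le t$; the same trick with an endpoint of an edge inside $A$ gives $\mu_B(A)\le 0$, hence $\mu_B(A)=0$, when $A$ is not independent. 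This is routine but is not ``exactly'' a single written SA constraint. Second, in the consistency step, the quantity $\sum_{A\subseteq C\subseteq B}\mu_B(C)$ telescopes to $Y_A$, not to $\sum_{A\subseteq S\subseteq B\cap B'}(-1)^{|S\setminus A|}Y_S$ (the latter is $\mu_{B\cap B'}(A)$). The intended consistency fact is either that both ``containment'' marginals equal $Y_A$ (which you can then M\"obius-invert over $B\cap B'$), or that the exact-intersection marginal $\sum_{C\subseteq B:\, C\cap(B\cap B')=A}\mu_B(C)$ equals $\mu_{B\cap B'}(A)$ for both bags; either version suffices and your conclusion stands, but the displayed identity as written is not the one you mean.
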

The above theorem implies that the $t$-th level SA polytope is equal to the convex hull of all independent sets of the graph, when the graph has treewidth at most $t$.

\paragraph{The rounding scheme of Magen and Moharrami~\cite{DBLP:conf/approx/MagenM09}.}
Let $G = (V,E,w)$ be a planar graph and $\{Y_S\}_{S \subseteq V: |S| \leq t + 1}$ be an optimal $t$-th level solution of SA. We denote $Y_{\{u\}}$ as $y_u$, for any $u \in V$. We first fix a planar embedding of $G$. The vertex set $V$ can then be naturally partitioned into sets $V_0, V_1, ..., V_L$, for some $L \in \{0, ..., n-1\}$, where $V_0$ is the set of vertices in the boundary of the outer face, $V_1$ is the set of vertices in the boundary of the outer face after $V_0$ is removed, and so on. Note that for any edge $(u,v) \in E$, we have $u \in V_i$ and $v \in V_j$ with $|i - j| \leq 1$. We will assume that $L \geq 4$, since, otherwise, the input graph is at most $4$-outerplanar and, in such cases the problem can be solved optimally~\cite{DBLP:journals/jacm/Baker94}.

Following~\cite{DBLP:journals/jacm/Baker94}, we fix a parameter $k \in \{1, ...,L\}$, and for every $i \in \{0, ..., k-1\}$, we define $B(i) = \bigcup_{j \equiv i (\textrm{mod } k)} V_j$. We now pick an index $j \in \{0, ..., k-1\}$ uniformly at random. Let $G_0 = G[V_0 \cup V_1 ... \cup V_j]$, and for $i \geq 1$, $G_i = G[\bigcup_{q = (i-1)k + j}^{ik + j} V_q]$, where for a subset $X \subseteq V$, $G[X]$ is the induced subgraph on $X$. Observe that every edge and vertex of $G$ appears in one or two of the subgraphs $\{G_i\}$, and every vertex $u \in V \setminus B(j)$ appears in exactly one $G_i$.

Magen and Moharrami observe that for every subgraph $G_i = (V(G_i), E(G_i))$, the set of vectors $\{Y_{S}\}_{S \subseteq V(G_i): |S| \leq t + 1}$ is a feasible solution for the $t$-th level SA relaxation of the graph $G_i$. This is easy to see, as the Independent Set LP associated with $G_i$ is weaker than the LP associated with $G$ (on all common variables), since $G_i$ is a subgraph of $G$, and this extends to SA as well. We need one more observation. In~\cite{BODLAENDER19981}, it is proved that a $k$-outerplanar graph has treewidth at most $3k-1$. By construction, each graph $G_i$ is a $(k+1)$-outerplanar graph. Thus, by setting $t = 3k+2$, Theorem~\ref{thm:SA-treewidth} implies that the vector $\{y_u\}_{u \in V(G_i)}$  (we remind the reader that $y_u = Y_{\{u\}}$) can be written as a convex combination of independent sets of $G_i$.

Let $p_i$ be the corresponding distribution of independent sets of $G_i$, implied by the fractional solution $\{y_u\}_{u \in V(G_i)}$. We now consider the following rounding scheme, which always returns a feasible independent set $S$ of the whole graph. For each $G_i$, we (independently) sample an independent set $S_i$ of $G_i$ according to the distribution $p_i$. Each vertex $u \in V \setminus B(j)$ belongs to exactly on graph $G_i$ and is included in the final independent set $S$ if $u \in S_i$. A vertex $u \in B(j)$ might belong to two different graphs $G_i, G_{i+1}$, and so, it is included in the final independent set $S$ only if $u \in S_i \cap S_{i+1}$. The algorithm then returns $S$.

Before we analyze the algorithm, we note that standard arguments that use the tree decomposition of the graph show that the above rounding scheme is constructive (i.e. polynomial-time; this fact is not needed for the proof of integrality of SA for stable instances of planar \MIS{}, but it will be used when designing algorithms for weakly stable instances).
\begin{theorem}\label{thm:planar_rounding}
The above randomized rounding scheme always returns a feasible independent set $S$, such that for every vertex $u \in V$,
\begin{enumerate}
    \item $\Pr[u \in S] \geq \frac{k-1}{k} \cdot y_u + \frac{1}{k} \cdot y_u^2$,
    \item $\Pr[u \notin S] \leq \left(1 + \frac{1}{k} \right) \cdot (1 - y_u)$.
\end{enumerate}
\end{theorem}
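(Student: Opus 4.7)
The plan is to prove both properties via a direct probability computation, after establishing feasibility and the key identity $\Pr[u \in B(j)] = 1/k$.

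First I would verify feasibility. Take any edge $(u,v) \in E$ with $u \in V_a$ and $v \in V_b$, $|a-b|\leq 1$. By construction of $G_0, G_1, \ldots$, there is some index $i$ such that both $u$ and $v$ lie in $V(G_i)$; the sampled set $S_i$ is an independent set of $G_i$, so at most one of $\{u,v\}$ belongs to $S_i$. Since membership in the returned set $S$ requires (in particular) membership in $S_i$ whenever a vertex lies in $G_i$, at most one of $u, v$ is in $S$.

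Next I would compute $\Pr[u \in B(j)]$. Since the sets $B(0), B(1), \ldots, B(k-1)$ partition $V$ (each vertex of $V_\ell$ lies in $B(\ell \bmod k)$), and $j$ is uniform on $\{0,\ldots,k-1\}$, we get $\Pr[u \in B(j)] = 1/k$, regardless of the particular vertex $u$. Now conditioning on the outcome of $j$: if $u \notin B(j)$, then $u$ lies in exactly one subgraph $G_i$, and by Theorem~\ref{thm:SA-treewidth} (together with the fact that $\{Y_S\}_{S \subseteq V(G_i): |S|\le t+1}$ is a feasible $t$-th level SA solution for $G_i$, and $G_i$ has treewidth at most $3k+2 = t$), the vector $\{y_v\}_{v \in V(G_i)}$ is a convex combination of independent sets of $G_i$, and $p_i$ is sampled from this distribution, so $\Pr[u \in S_i] = y_u$. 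If $u \in B(j)$, then $u$ lies in exactly two subgraphs $G_i, G_{i+1}$, sampled independently; hence $\Pr[u \in S] = \Pr[u \in S_i]\cdot \Pr[u \in S_{i+1}] = y_u^2$.

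Putting these together,
\begin{equation*}
\Pr[u \in S] \;=\; \tfrac{k-1}{k}\, y_u + \tfrac{1}{k}\, y_u^2,
\end{equation*}
which is exactly Property 1 (with equality). For Property 2,
\begin{align*}
\Pr[u \notin S] &= \tfrac{k-1}{k}(1-y_u) + \tfrac{1}{k}(1-y_u^2) \\
                &= (1-y_u)\Bigl(\tfrac{k-1}{k} + \tfrac{1+y_u}{k}\Bigr) \;=\; (1-y_u)\cdot \tfrac{k+y_u}{k} \;\leq\; \Bigl(1+\tfrac{1}{k}\Bigr)(1-y_u),
\end{align*}
using $y_u \leq 1$ in the final inequality. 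This concludes both bounds.

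I do not expect any real obstacle here: the structural work (treewidth bound for $(k+1)$-outerplanar graphs, SA being preserved under taking induced subgraphs, and the Bienstock--Ozbay theorem) has already been set up in the prose preceding the statement, so all that remains is the two-line probability calculation above. The only subtle point worth double-checking is feasibility across the ``boundary'' layers $B(j)$: a vertex $u \in B(j)$ is in $S$ only if both $S_i$ and $S_{i+1}$ contain it, which is strictly stronger than being in the sample of one subgraph — this strengthening is precisely what allows independence of an edge crossing $V_{\ell-1}$--$V_\ell$ for $\ell \equiv j \pmod{k}$ to be preserved, and it is also what forces the $y_u^2$ term (rather than $y_u$) in the lower bound of Property~1.
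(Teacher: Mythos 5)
Your proof is correct and follows exactly the same route as the paper's: $\Pr[u\in B(j)]=1/k$, marginal equals $y_u$ in the unique subgraph when $u\notin B(j)$, and the product bound when $u\in B(j)$, followed by the same two-line algebra for Property~2. One minor imprecision: you assert $\Pr[u\in S]=\tfrac{k-1}{k}y_u+\tfrac1k y_u^2$ with equality, but a vertex in $B(j)$ need not lie in \emph{exactly} two subgraphs (the paper carefully writes ``might belong to two'' and accordingly uses $\Pr[u\in S]\geq y_u^2$ there); this does not affect the stated inequalities.
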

\begin{proof}
First, it is easy to see that $S$ is always a feasible independent set. We now compute the corresponding probabilities. Since the marginal probability of $p_i$ on a vertex $u \in G_i$ is $y_u$, we get that, for any fixed $j$, for every vertex $u \in V \setminus B(j)$, we have $\Pr[u \in S] = y_u$, and for every vertex $u \in B(j)$, we have $\Pr[u \in S] \geq y_u^2$. Since $j$ is picked uniformly at random, each vertex $u \in V$ belongs to $B(j)$ with probability exactly equal to $\frac{1}{k}$. Thus, we conclude that for every vertex $u \in V$, we have
\begin{equation*}
    \Pr[u \in S] \geq \frac{k-1}{k} \cdot y_u + \frac{1}{k} \cdot y_u^2 \geq \frac{k - 1}{k} \cdot y_u,
\end{equation*}
and
\begin{equation*}
    \Pr[u \notin S] \leq 1 - \left(\frac{k-1}{k} \cdot y_u + \frac{1}{k} \cdot y_u^2 \right) = 1 - y_u + \frac{y_u}{k} \cdot (1 - y_u) \leq \left(1 + \frac{1}{k} \right) \cdot (1 - y_u).
\end{equation*}
\end{proof}
The above theorem implies that the rounding scheme is a $\left(\frac{k}{k-1}, \frac{k+1}{k} \right)$-rounding. It is easy now to prove the following theorem.
\begin{theorem}\label{thm:planar-stable-SA}
For every $\varepsilon > 0$, the SA relaxation of $\left(3\left\lceil \frac{2}{\varepsilon} \right\rceil + 5 \right) = O(1 / \varepsilon)$ rounds is integral for $(1 + \varepsilon)$-stable instances of \MIS{} on planar graphs.
\end{theorem}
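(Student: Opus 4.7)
The plan is to combine the rounding guarantee of Theorem~\ref{thm:planar_rounding} with the [MMV14]-style integrality criterion of Theorem~\ref{thm:MMV}. Concretely, Theorem~\ref{thm:planar_rounding} shows that, for each integer $k \geq 1$, the rounding scheme based on Baker-style decomposition and Bienstock--Ozbay's convex-combination theorem is a valid $(\alpha,\beta)$-rounding for the Sherali-Adams relaxation at level $t = 3k+2$, with parameters $\alpha = \tfrac{k}{k-1}$ and $\beta = \tfrac{k+1}{k}$. Hence the product $\alpha\beta = \tfrac{k+1}{k-1}$ is the only quantity that matters for invoking Theorem~\ref{thm:MMV}.

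First I would choose $k$ so that $\alpha\beta \leq 1+\varepsilon$. Solving $\tfrac{k+1}{k-1} \leq 1 + \varepsilon$ gives $k \geq 1 + \tfrac{2}{\varepsilon}$, so it suffices to take $k = \lceil 2/\varepsilon \rceil + 1$. With this choice, the Sherali-Adams level required by Theorem~\ref{thm:planar_rounding} is exactly $t = 3k+2 = 3\lceil 2/\varepsilon \rceil + 5 = O(1/\varepsilon)$, matching the bound claimed in the theorem statement.

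Next I would verify the applicability of Theorem~\ref{thm:MMV}. Given a $(1+\varepsilon)$-stable instance on a planar graph $G$ and an optimal solution $\{Y_S\}$ of the $t$-th level Sherali-Adams relaxation, the rounding scheme from Theorem~\ref{thm:planar_rounding} produces a feasible independent set $S$ satisfying
\begin{equation*}
  \Pr[u \in S] \geq \frac{1}{\alpha}\, y_u
  \qquad\text{and}\qquad
  \Pr[u \notin S] \leq \beta\,(1 - y_u)
\end{equation*}
for every $u \in V$, where $y_u = Y_{\{u\}}$ and $\alpha\beta \leq 1+\varepsilon$. This is precisely the hypothesis of Theorem~\ref{thm:MMV} in the maximization setting. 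Therefore, the optimal fractional solution $\{y_u\}_{u\in V}$ must be integral: every $y_u$ lies in $\{0,1\}$, which, combined with the Sherali-Adams constraints, forces the entire vector $\{Y_S\}$ to encode the unique optimal independent set.

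The only subtlety, which I would address carefully in the write-up, is that Theorem~\ref{thm:planar_rounding} was stated for the ``reduced'' vector $(y_u)_{u\in V}$ obtained by restricting the SA solution to singletons in each Baker layer $G_i$; we must check that this restriction is a feasible fractional independent-set vector on each $G_i$ and that the convex combination produced by Theorem~\ref{thm:SA-treewidth} (at level $t = 3k+2 \geq 3(k+1)-1$, since each $G_i$ is $(k+1)$-outerplanar) is well defined. This is essentially the observation already made by Magen and Moharrami and explained in the paragraph preceding Theorem~\ref{thm:planar_rounding}, so no new idea is required. Once this is in place, Theorem~\ref{thm:MMV} delivers integrality and, consequently, a robust polynomial-time algorithm: solve the $O(1/\varepsilon)$-level Sherali-Adams relaxation, return the unique integral optimum if the solution is integral, and otherwise report that the instance is not $(1+\varepsilon)$-stable.
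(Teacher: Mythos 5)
Your proposal is correct and follows essentially the same route as the paper: it combines Theorem~\ref{thm:planar_rounding} (which gives $\alpha = \tfrac{k}{k-1}$, $\beta = \tfrac{k+1}{k}$, hence $\alpha\beta = \tfrac{k+1}{k-1} = 1 + \tfrac{2}{k-1}$) with Theorem~\ref{thm:MMV}, and sets $k = \lceil 2/\varepsilon\rceil + 1$ to get level $3k+2 = 3\lceil 2/\varepsilon\rceil + 5$. The "subtlety" you flag about restricting the SA solution to each Baker layer is exactly the Magen--Moharrami observation already handled in the paper's setup of the rounding scheme, so nothing further is needed.
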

\begin{proof}
The theorem is a direct consequence of Theorem~\ref{thm:MMV} and Theorem~\ref{thm:planar_rounding}. For any given $k \geq 2$, by Theorem~\ref{thm:planar_rounding}, the rounding scheme always returns a feasible independent set $S$ of $G$ that satisfies $\Pr[u \in S] \geq \frac{k-1}{k} \cdot y_u$ and $\Pr[u \notin S] \leq \left(1 + \frac{1}{k} \right) \cdot (1 - y_u)$ for every vertex $u \in V$. By Theorem~\ref{thm:MMV}, this means that $\{y_u\}_{u \in V}$ must be integral for $(1 + \frac{2}{k-1})$-stable instances. For any fixed $\varepsilon > 0$, by setting $k = \left\lceil \frac{2}{\varepsilon} \right\rceil + 1$, we get that $3\left\lceil \frac{2}{\varepsilon} \right\rceil + 5 = O(1 / \varepsilon)$ rounds of Sherali-Adams return an integral solution for $(1+\varepsilon)$-stable instances of \MIS{} on planar graphs.
\end{proof}

Again, by using the techniques of [MMV14] combined with Theorem~\ref{thm:planar_rounding}, we also get the following result.
\begin{theorem}
For every fixed $\varepsilon > 0$, there is a polynomial-time algorithm that, given a $(1 + \varepsilon, \mathcal{N})$-weakly-stable instance of planar \MIS{} with integer weights, finds a solution $S \in \mathcal{N}$.
\end{theorem}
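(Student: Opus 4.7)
The plan is to mimic the iterative improvement framework developed for weakly stable Edge Multiway Cut in Theorem~\ref{thm:multiway-alphabeta-weak}, substituting the CKR relaxation and its $\varepsilon$-local rounding with the Sherali--Adams relaxation used in Theorem~\ref{thm:planar-stable-SA} and the Magen--Moharrami rounding of Theorem~\ref{thm:planar_rounding}. Fix $\varepsilon > 0$ and choose $k = \lceil 4/\varepsilon \rceil + 1$, so that the $(3k+2)$-level SA relaxation admits an $(\alpha,\beta)$-rounding with $\alpha\beta = (k+1)/(k-1) \leq 1 + \varepsilon/2$, leaving a slack of $\delta := \varepsilon/2$ between $\alpha\beta$ and the stability parameter $1+\varepsilon$.

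The core step would be an ``improve-or-certify'' subroutine that, given a feasible independent set $S^\circ$, either returns an independent set $S'$ with $w(S') > w(S^\circ)$ or certifies $S^\circ \in \mathcal{N}$. To construct it, I would define perturbed weights $w'_u := \alpha\beta \cdot w_u$ for $u \in S^\circ$ and $w'_u := w_u$ otherwise (which is a legal $(1+\varepsilon/2)$-perturbation), solve the $(3k+2)$-level SA relaxation under $w'$ to obtain an optimal fractional solution $Y^*$ with marginals $y^*$, and apply the rounding of Theorem~\ref{thm:planar_rounding} to $Y^*$ to generate a distribution over independent sets $S'$. A short computation (using $w'_u/w_u = \alpha\beta$ on $S^\circ$ to absorb the $1/\alpha$ and $\beta$ factors into a common $1/\alpha$ term) yields
\[
\E[w(S') - w(S^\circ)] \,\geq\, \tfrac{1}{\alpha}\bigl(\mathrm{OPT}_{w',\mathrm{SA}} - w'(S^\circ)\bigr).
\]
If $S^\circ \notin \mathcal{N}$, weak stability gives $w(I^* \setminus S^\circ) > (1+\varepsilon)\, w(S^\circ \setminus I^*)$, which combined with $w'(I^*) - w'(S^\circ) = w(I^* \setminus S^\circ) - \alpha\beta\cdot w(S^\circ \setminus I^*)$ and $\mathrm{OPT}_{w',\mathrm{SA}} \geq w'(I^*)$ implies $\mathrm{OPT}_{w',\mathrm{SA}} - w'(S^\circ) \geq \tfrac{1}{2}(w(I^*) - w(S^\circ))$, hence a multiplicative contraction $\E[w(I^*) - w(S')] \leq (1 - \tau)(w(I^*) - w(S^\circ))$ with $\tau = \Omega(1/\alpha) \geq 1/\poly(n)$. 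Enumerating the support of the rounding distribution then lets us either pick an $S'$ that beats $S^\circ$ or certify that $S^\circ \in \mathcal{N}$ when no improvement is possible.

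With this subroutine in hand, the outer loop is identical to the one in the proof of Theorem~\ref{thm:multiway-alphabeta-weak}: start from any feasible $S^{(0)}$ (say $S^{(0)} = \emptyset$) and iterate the subroutine $T = O(\log(nW)/\tau)$ times, where $W$ is the maximum weight. Because weights are integers, the gap $w(I^*) - w(S^{(i)})$ cannot contract by a factor $(1-\tau)$ indefinitely, so the subroutine must certify $S^{(i)} \in \mathcal{N}$ within $T$ iterations, at which point the algorithm returns $S^{(i)}$.

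The main obstacle I anticipate is derandomizing the rounding by producing an explicit polynomial-size support for the output distribution: Theorem~\ref{thm:planar_rounding} is stated probabilistically, and its guarantee rests on the existential Bienstock--Ozbay claim that the SA marginals on each $(k+1)$-outerplanar slice $G_i$ lie in the integral independent-set polytope of $G_i$. To enumerate and select the best rounded solution, I would implement Bienstock--Ozbay constructively via a tree-decomposition dynamic program on $G_i$ (whose treewidth is $O(1/\varepsilon) = O(1)$) and use it to extract an explicit Carath\'eodory decomposition of $(y^*_u)_{u \in V(G_i)}$ into at most $|V(G_i)| + 1$ independent sets. Combining these choices across the $k$ slices gives a distribution over $S'$ of support size $\poly(n)$, which can be enumerated; failing that, a randomized polynomial-time variant of the algorithm is immediate. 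Everything else is a routine adaptation of the Multiway Cut proof from minimization to maximization.
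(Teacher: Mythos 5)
Your proposal is essentially the paper's intended proof: the thesis establishes this theorem only by pointing to the [MMV14] iterative-improvement technique combined with Theorem~\ref{thm:planar_rounding}, and your improve-or-certify subroutine (perturbing $w'_u=\alpha\beta\, w_u$ on $S^\circ$, solving SA, rounding, and invoking weak stability to get a $(1-\tau)$ contraction of the gap) together with the outer loop from Theorem~\ref{thm:multiway-alphabeta-weak} is exactly that adaptation, and it is correct in substance. Two small slips worth fixing: the bound $\mathrm{OPT}_{w',\mathrm{SA}}-w'(S^\circ)\geq\frac{1}{2}\left(w(I^*)-w(S^\circ)\right)$ should have a $\Theta(\varepsilon)$ constant rather than $\frac{1}{2}$ (harmless, since you only need $\tau\geq 1/\poly(n)$); and the Baker decomposition produces $\Theta(n/k)$ slices, not $k$, so the product of the per-slice Carath\'eodory supports is super-polynomial and cannot be enumerated directly --- instead select one support element per slice by a dynamic program or the method of conditional expectations along the chain of slices (consecutive slices interact only through the shared boundary layer $B(j)$), or use your randomized fallback, which already suffices for the theorem as stated.
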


\section{Stability and integrality gaps of convex relaxations}

In this section, we state a general theorem about the integrality gap of convex relaxations of maximization problems on stable instances. As already stated, [MMV14] was the first work that analyzed the performance of convex relaxations on stable instances, and gave sufficient conditions for a relaxation to be integral (see Theorem~\ref{thm:MMV}). Here, we show that, even if the conditions of Theorem~\ref{thm:MMV} are not satisfied, the integrality gap still significantly decreases as stability increases.
\begin{theorem}\label{thm:ig}
Consider a convex relaxation for \MIS{} that assigns a value $x_u \in [0,1]$ to every vertex $u$ of a graph $G=(V,E,w)$, such that its objective function is $\sum_{u \in V} w_u x_u$. Let $\alpha$ be its integrality gap, for some $\alpha > 1$. Then, the relaxation has integrality gap at most $\min\left\{\alpha, 1 + \frac{1}{\beta - 1} \right\}$ for $(\alpha \beta)$-stable instances, for any $\beta > 1$.
\end{theorem}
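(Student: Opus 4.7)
The plan is to follow the adversarial-perturbation paradigm of [MMV14]. Let $I^*$ denote the unique optimal independent set of the $(\alpha\beta)$-stable instance $(G, w)$, write $OPT := w(I^*)$, and let $x$ be an optimal fractional solution with value $V := \sum_u w_u x_u$. The first bound $V \leq \alpha \cdot OPT$ is immediate from the integrality gap hypothesis applied to $(G, w)$ itself, so only the bound $V \leq (1 + \tfrac{1}{\beta-1}) \cdot OPT$ needs work, and this is non-trivial only when $\beta > 1/(\alpha-1)$.

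Next, I would construct the perturbation $w'$ defined by $w'_u := \alpha\beta \cdot w_u$ for $u \notin I^*$ and $w'_u := w_u$ for $u \in I^*$. Since $w_u \leq w'_u \leq \alpha\beta \cdot w_u$ pointwise, this is a valid $(\alpha\beta)$-perturbation. By the equivalent characterization of stability for MIS, $w(I^* \setminus S) > \alpha\beta \cdot w(S \setminus I^*)$ for every IS $S \neq I^*$, and this rearranges exactly to $w'(I^*) > w'(S)$. Hence $I^*$ remains the unique optimal integral solution for $(G, w')$, so $OPT(w') = OPT$. Because the constraints of the relaxation do not involve the weights, $x$ is still feasible on $(G, w')$, and applying the integrality gap hypothesis to the perturbed instance yields $\sum_u w'_u x_u \leq LP(w') \leq \alpha \cdot OPT(w') = \alpha \cdot OPT$.

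Finally, decomposing $V = A + B$ with $A := \sum_{u \in I^*} w_u x_u$ and $B := \sum_{u \notin I^*} w_u x_u$, the inequality above becomes $A + \alpha\beta \cdot B \leq \alpha \cdot OPT$. Combined with the trivial bound $A \leq OPT$ (from $x_u \leq 1$ and $\sum_{u \in I^*} w_u = OPT$), one obtains $V \leq OPT \cdot (1 + \tfrac{\alpha - 1}{\alpha\beta})$; and a one-line check shows $\tfrac{\alpha-1}{\alpha\beta} \leq \tfrac{1}{\beta-1}$, equivalent to $(\alpha-1)(\beta-1) \leq \alpha\beta$, which is obvious. This delivers the claimed bound. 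The only conceptual point is the choice of perturbation: by boosting the non-$I^*$ vertices to the maximum weight stability permits, the fractional solution $x$ is made ``expensive'' under $w'$, so the integrality gap bound on $(G, w')$ forces $x$ to place only little mass outside $I^*$. No real obstacle arises beyond identifying this perturbation; the argument is a direct application of the [MMV14] template, and indeed I expect the same proof to extend verbatim to any maximization problem whose feasible region is independent of the weights.
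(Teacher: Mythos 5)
Your proof is correct, and it takes a genuinely different route from the paper's. The paper argues by contradiction: assuming $\mathrm{OPT}_{\mathrm{REL}} > \bigl(1 + \tfrac{1}{\beta-1}\bigr)\cdot \mathrm{OPT}$ forces the fractional solution to place mass more than $\mathrm{OPT}_{\mathrm{REL}}/\beta$ outside $I^*$; it then restricts the fractional solution to the induced subgraph $H = G[V\setminus I^*]$, applies the integrality-gap bound to $H$ to extract an independent set $S \subseteq V\setminus I^*$ of weight at least $\frac{1}{\alpha}\sum_{u\notin I^*} w_u x_u$, and invokes stability against this $S$ to get $\mathrm{OPT} > \mathrm{OPT}_{\mathrm{REL}}$, a contradiction. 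Your argument instead stays on the graph $G$ and works directly: you plug the single extremal perturbation $w'$ (boosting all non-$I^*$ weights by $\alpha\beta$) into the integrality-gap hypothesis, obtaining $A + \alpha\beta B \leq \alpha\,\mathrm{OPT}$, and then close with $A \leq \mathrm{OPT}$. This has two concrete advantages. First, it yields the slightly sharper constant $1 + \frac{\alpha-1}{\alpha\beta}$, from which the stated $1 + \frac{1}{\beta-1}$ follows since $(\alpha-1)(\beta-1)\leq\alpha\beta$. Second, the paper's route implicitly assumes that restricting a feasible fractional solution to an induced subgraph stays feasible for the relaxation on that subgraph (true for the standard LP and the usual hierarchies, but an extra structural requirement), whereas your argument only needs the feasible region of the relaxation to be independent of the weights — the very assumption the theorem already spells out. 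The only blemish is the throwaway remark that the nontrivial case is $\beta > 1/(\alpha-1)$; the correct threshold is $\beta > 1 + \frac{1}{\alpha-1}$, but since your derivation is unconditional in $\beta > 1$ this is immaterial.
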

\begin{proof}
Let $G = (V,E,w)$ be an $(\alpha\beta)$-stable instance, let $I^*$ denote its (unique) optimal independent set and $\mathrm{OPT} = w(I^*)$ be its cost. We assume that $\beta$ is such that $1 + \frac{1}{\beta - 1} < \alpha$ (otherwise the statement is trivial), which holds for $\beta > 1 + \frac{1}{\alpha - 1}$.

Let $\mathrm{OPT}_{\mathrm{REL}}$ be the optimal value of the relaxation, and let's assume that $\frac{\mathrm{OPT}_{\mathrm{REL}}}{\mathrm{OPT}} > 1 + \frac{1}{\beta - 1}$. We now claim that $\sum_{u \in I^*} w_u x_u < (\beta - 1) \cdot \sum_{u \in V \setminus I^*} w_u x_u$. To see this, suppose that $\sum_{u \in I^*} w_u x_u \geq (\beta - 1) \cdot \sum_{u \in V \setminus I^*} w_u x_u$. We have $\mathrm{OPT}_{\mathrm{REL}} = \sum_{u \in I^*} w_u x_u + \sum_{u \in V \setminus I^*} w_u x_u \leq \left(1 + \frac{1}{\beta-1}\right) \cdot \sum_{u \in I^*} w_u x_u \leq \left(1 + \frac{1}{\beta-1}\right) \cdot \mathrm{OPT}$, which implies that $\frac{\mathrm{OPT}_{\mathrm{REL}}}{\mathrm{OPT}} \leq 1 + \frac{1}{\beta-1}$. This contradicts our assumption, and so we conclude that $\sum_{u \in I^*} w_u x_u < (\beta - 1) \cdot \sum_{u \in V \setminus I^*} w_u x_u$. This  implies that $\sum_{u \in V \setminus I^*} w_u x_u > \frac{\mathrm{OPT}_{\mathrm{REL}}}{\beta}$.

We now consider the induced graph $H = G[V \setminus I^*]$. Let $S \subseteq V \setminus I^*$ be an optimal independent set of $H$. We observe that the restriction of the fractional solution to the vertices of graph $H$ is a feasible solution for the corresponding relaxation for $H$. Thus, since the integrality gap is always at most $\alpha$, we have $w(S) \geq \frac{1}{\alpha} \cdot \sum_{u \in V \setminus I^*} w_u x_u$. Finally, we observe that $S$ is a feasible independent set of the graph $G$. The definition of stability now implies that $w(I^* \setminus S) > (\alpha\beta) \cdot w(S \setminus I^*)$, which gives $w(I^*) > (\alpha\beta) \cdot w(S)$. Combining the above inequalities, we conclude that $\mathrm{OPT} = w(I^*) > \mathrm{OPT}_{\mathrm{REL}}$. Thus, we get a contradiction.
\end{proof}

The above result is inherently non-constructive. Nevertheless, it suggests approximation estimation algorithms for stable instances of \MIS{}, such as the following.
\begin{corollary}[Bansal et al.~\cite{DBLP:conf/stoc/BansalGG15} + Theorem~\ref{thm:ig}]
For any fixed $\varepsilon > 0$, the Lovasz  $\theta$-function SDP relaxation has integrality gap at most $1 + \varepsilon$ on $\widetilde{O}\left(\frac{1}{\varepsilon} \cdot \frac{\Delta}{\log^{3/2} \Delta} \right)$-stable instances of \MIS{} of maximum degree $\Delta$, where the notation $\widetilde{O}$ hides some $\mathrm{poly}(\log \log \Delta)$ factors.
\end{corollary}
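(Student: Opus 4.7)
The plan is to combine the two ingredients essentially mechanically. First I would invoke the Bansal--Gupta--Guruswami result~\cite{DBLP:conf/stoc/BansalGG15}, which shows that the Lov\'asz $\vartheta$-function SDP relaxation of \MIS{} has integrality gap at most $\alpha = \widetilde{O}(\Delta/\log^{3/2} \Delta)$ on graphs of maximum degree $\Delta$. Before applying Theorem~\ref{thm:ig}, I would verify that the Lov\'asz $\vartheta$-function SDP fits the hypothesis of Theorem~\ref{thm:ig}: it assigns to each vertex $u$ a value $x_u \in [0,1]$ (e.g., $x_u = \langle v_u, v_0\rangle$ in the standard vector formulation) and has objective $\sum_{u\in V} w_u x_u$, so the setup of Theorem~\ref{thm:ig} applies verbatim.

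Next I would choose $\beta$ so that the bound $1 + \frac{1}{\beta - 1}$ is at most $1+\varepsilon$; that is, $\beta = 1 + 1/\varepsilon$. Applying Theorem~\ref{thm:ig} with this $\beta$ and with $\alpha = \widetilde{O}(\Delta/\log^{3/2} \Delta)$, the integrality gap of the SDP on $(\alpha\beta)$-stable instances is at most $\min\{\alpha, 1+\varepsilon\} = 1+\varepsilon$, provided $\varepsilon$ is small enough that $1+\varepsilon < \alpha$ (otherwise the claim is vacuous). The stability threshold is $\alpha\beta = \widetilde{O}\!\left(\frac{1}{\varepsilon} \cdot \frac{\Delta}{\log^{3/2}\Delta}\right)$, exactly as in the statement.

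There is essentially no obstacle beyond confirming the two bookkeeping checks above (that the $\vartheta$ SDP is in the class of relaxations covered by Theorem~\ref{thm:ig}, and that the Bansal et al.\ guarantee is stated as a worst-case integrality gap bound against the fractional value $\sum w_u x_u$, not merely as an approximation ratio against the integral optimum). Both are true since Bansal et al.'s rounding actually produces an independent set whose expected weight is at least $\Omega(\log^{3/2}\Delta/\Delta)$ times the SDP value. With these confirmed, the corollary follows by direct substitution, and no further calculation is required.
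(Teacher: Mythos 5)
Your proof is correct and is exactly the route the paper intends: the corollary is a direct substitution of the Bansal--Gupta--Guruswami integrality-gap bound $\alpha = \widetilde{O}(\Delta/\log^{3/2}\Delta)$ into Theorem~\ref{thm:ig} with $\beta = 1 + 1/\varepsilon$. The paper itself states the corollary without a separate proof, and your two bookkeeping checks (that the $\vartheta$-SDP is of the required form and that the BGG guarantee is an integrality-gap bound) are the only substantive points one would need to spell out.
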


We note that the theorem naturally extends to many other maximization graph problems, and is particularly interesting for relaxations that require super-constant stability for the recovery of the optimal solution (e.g. the Max Cut SDP with $\ell_2^2$ triangle inequalities has integrality gap $1+ \varepsilon$ for $\left(\frac{2}{\varepsilon}\right)$-stable instances although the integrality gap drops to exactly 1 for $\Omega(\sqrt{\log n})$-stable instances).

In general, such a theorem is not expected to hold for minimization problems, but, in our case of study, \MIS{} gives rise to its complementary minimization problem, the minimum Vertex Cover problem, and it turns out that we can prove a very similar result for Vertex Cover as well. More precisely, we prove the following.
\begin{theorem}\label{thm:VC-estimation}
Suppose that there exists a convex relaxation for Independent Set whose objective function is $\sum_{u \in V} w_u x_u$ and its integrality gap, w.r.t.~Independent Set, is $\alpha$. Then, there exists a $\min \left\{2, 1+ \frac{1}{\beta-2} \right\}$-estimation approximation algorithm for $(\alpha\beta)$-stable instances of Vertex Cover, for any $\beta > 2$.
\end{theorem}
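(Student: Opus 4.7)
My plan is to leverage the complementarity between Vertex Cover and Independent Set: since the optimal VC $X^*$ and the optimal IS $I^*=V\setminus X^*$ are complementary, and the Bilu--Linial condition $\gamma\, w(X^*\setminus Y)<w(Y\setminus X^*)$ for every VC $Y\neq X^*$ is equivalent, under the bijection $Y\mapsto V\setminus Y$, to $\gamma\, w(I\setminus I^*)<w(I^*\setminus I)$ for every IS $I\neq I^*$, the hypothesis that the instance is $(\alpha\beta)$-stable for Vertex Cover is equivalent to $(\alpha\beta)$-stability for Independent Set on the same graph. I can therefore apply Theorem~\ref{thm:ig} to the given IS relaxation (whose integrality gap is $\alpha$) and conclude that its gap shrinks to $\min\{\alpha,\,1+\tfrac{1}{\beta-1}\}$; in particular $OPT_{IS}\leq OPT_{\mathrm{REL}}\leq \tfrac{\beta}{\beta-1}\cdot OPT_{IS}$.

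The algorithm I would design solves the given IS relaxation and, in parallel, solves the standard half-integral VC LP (whose worst-case integrality gap is $2$). It then outputs
$$A \;=\; \max\!\Bigl(w(V)-OPT_{\mathrm{REL}},\ OPT_{\mathrm{VC\text{-}LP}}\Bigr)$$
as an estimate of $OPT_{VC}$. The upper inequality $A\leq OPT_{VC}$ is immediate, since $OPT_{IS}\leq OPT_{\mathrm{REL}}$ and the VC LP is a relaxation. For the lower bound, the VC LP alone gives $A\geq OPT_{VC}/2$ unconditionally. The stability-enhanced IS gap yields
$$w(V)-OPT_{\mathrm{REL}}\;\geq\;OPT_{VC}-\frac{OPT_{IS}}{\beta-1},$$
which, in the regime $OPT_{IS}\leq OPT_{VC}$, simplifies to $A\geq OPT_{VC}\cdot\tfrac{\beta-2}{\beta-1}=OPT_{VC}/c_1$ with $c_1=1+\tfrac{1}{\beta-2}$. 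Combining these two lower bounds with the universal upper bound yields the claimed $\min\{2,c_1\}$-estimation in this regime.

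The main obstacle will be the complementary regime $OPT_{IS}>OPT_{VC}$ (for instance, a weighted star with a light center), where the IS-based estimator $w(V)-OPT_{\mathrm{REL}}$ can degrade catastrophically and even become negative, so the VC LP term alone survives and one is tempted to settle for the factor $2$. To rescue the target $\min\{2,c_1\}$ here, I would prove a dual version of Theorem~\ref{thm:ig} on the VC side: if the standard VC LP had integrality gap strictly larger than $c_1$ on an $(\alpha\beta)$-stable instance, then a Nemhauser--Trotter-style rounding of its half-integral optimum, combined with the Kleinberg--Tardos-type mixing argument used in [MMV14], would produce an alternative vertex cover whose weight is close to $w(X^*)$, contradicting $(\alpha\beta)$-stability. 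Establishing this dual gap-reduction lemma for the VC LP, and checking that the parameters assemble to exactly $c_1=1+\tfrac{1}{\beta-2}$, is the delicate technical step; once it is in place, taking $\max$ of the two lower estimators delivers a $\min\{2,1+\tfrac{1}{\beta-2}\}$-estimation in every case.
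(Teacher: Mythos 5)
Your first regime is handled correctly and in essentially the same spirit as the paper: combining Theorem~\ref{thm:ig} with $w(V)=OPT_{IS}+OPT_{VC}$ gives $w(V)-OPT_{\mathrm{REL}}\geq \frac{\beta-2}{\beta-1}\,OPT_{VC}$ whenever $OPT_{IS}\leq OPT_{VC}$, and taking the better of this and a $2$-approximation yields the claimed factor there. But the complementary regime $OPT_{IS}>OPT_{VC}$ is a genuine gap in your argument, and the fix you sketch --- a ``dual gap-reduction lemma'' for the minimization LP via an [MMV14]-style mixing argument --- is not carried out and is unlikely to assemble to the stated constant: the paper itself remarks that Theorem~\ref{thm:ig} is \emph{not} expected to have a minimization analogue, and an $(\alpha,\beta)$-rounding argument for the VC LP would at best certify integrality at some fixed stability threshold (on the order of the degree or chromatic number), not a $\bigl(1+\frac{1}{\beta-2}\bigr)$ gap bound at stability $\alpha\beta$ for arbitrary $\alpha$.

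The missing idea is much simpler and entirely sidesteps the bad regime. Solve the half-integral Independent Set LP and pass to the Nemhauser--Trotter kernel $G[V_{1/2}]$: persistency gives $V_1\subseteq I^*$, $V_0\cap I^*=\emptyset$, so $X^*=V_0\cup X^*_{1/2}$, and Lemma~\ref{lem:delete-points-inside} (applied iteratively to the vertices of $V_1$, whose neighborhoods exhaust $V_0$) shows the kernel inherits $(\alpha\beta)$-stability. On the kernel the all-$\tfrac12$ vector is LP-optimal, which forces $w(I^*_{1/2})\leq \tfrac12 w(V_{1/2})\leq w(X^*_{1/2})$ --- i.e.\ the problematic regime cannot occur there. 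Running your estimator on the kernel (with $A=\min\{\alpha,\beta/(\beta-1)\}$ from Theorem~\ref{thm:ig}) then gives $(2-A)\,w(X^*_{1/2})\leq w(V_{1/2})-\mathrm{FRAC}\leq w(X^*_{1/2})\cdot(2-A)^{-1}\cdot(2-A)$, and adding back $w(V_0)$ yields an estimate of $w(X^*)$ within factor $\frac{1}{2-A}\leq 1+\frac{1}{\beta-2}$; combining with any $2$-approximation finishes the proof. Without this kernelization step your proof as written establishes the theorem only under the extra hypothesis $OPT_{IS}\leq OPT_{VC}$.
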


Before we prove the Theorem, we will need the following Lemma.
\begin{lemma}\label{lem:delete-points-inside}
Let $G = (V,E,w)$ be a $\gamma$-stable instance of \MIS{} whose optimal independent set is $I^*$. Let $v \in I^*$. Then, the instance $\widetilde{G} = G[V\setminus (\{v\} \cup N(v))]$ is also $\gamma$-stable, and its maximum independent set is $I^* \setminus \{v\}$.
\end{lemma}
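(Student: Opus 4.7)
The plan is to reduce the stability of $\widetilde{G}$ to the stability of $G$ via a simple extension argument: any feasible independent set $\widetilde{I}$ of $\widetilde{G}$ can be turned into a feasible independent set $\widetilde{I}\cup\{v\}$ of $G$ (since $\widetilde{I}$ lies entirely in $V\setminus(\{v\}\cup N(v))$), and any $\gamma$-perturbation of $\widetilde{G}$ can be extended to a $\gamma$-perturbation of $G$ by leaving the weights of $v$ and its neighbors unchanged.

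First I would observe that $I^*\setminus\{v\}$ is feasible (and hence a candidate optimum) in $\widetilde{G}$: since $I^*$ is independent, no vertex of $I^*\setminus\{v\}$ belongs to $N(v)$, so $I^*\setminus\{v\}\subseteq V(\widetilde{G})$ and it remains independent in the induced subgraph.

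Next, let $\widetilde{w}'$ be an arbitrary $\gamma$-perturbation of the weight function on $\widetilde{G}$ (which itself is the restriction of $w$ to $V(\widetilde{G})$). Define $w':V\to\mathbb{R}_{>0}$ by $w'(u)=\widetilde{w}'(u)$ for $u\in V(\widetilde{G})$ and $w'(u)=w(u)$ otherwise. Then $w_u\le w'_u\le\gamma w_u$ for every $u\in V$, so $w'$ is a $\gamma$-perturbation of $w$. By the $\gamma$-stability of $G$, the unique optimum in $(G,w')$ is still $I^*$. For any feasible independent set $\widetilde{I}\neq I^*\setminus\{v\}$ in $\widetilde{G}$, the set $\widetilde{I}\cup\{v\}$ is a feasible independent set of $G$ distinct from $I^*$, so
\[
w'(I^*) > w'(\widetilde{I}\cup\{v\}).
\]
Subtracting $w'(v)=w(v)$ from both sides and using that $w'$ agrees with $\widetilde{w}'$ on $V(\widetilde{G})$ yields $\widetilde{w}'(I^*\setminus\{v\}) > \widetilde{w}'(\widetilde{I})$. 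Thus $I^*\setminus\{v\}$ is the unique optimum of $\widetilde{G}$ under every $\gamma$-perturbation $\widetilde{w}'$, which is exactly the definition of $\gamma$-stability for $\widetilde{G}$ with optimum $I^*\setminus\{v\}$.

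There is no real obstacle here; the only point requiring care is the bookkeeping that the extended weight function $w'$ is genuinely a $\gamma$-perturbation of $w$ (immediate since multiplicative factors of $1$ lie in $[1,\gamma]$) and that $\widetilde{I}\cup\{v\}$ is indeed independent in $G$ (immediate since $\widetilde{I}$ avoids $N(v)$). The proof is essentially two lines once the extension trick is set up.
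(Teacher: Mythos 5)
Your proof is correct and follows essentially the same route as the paper's: extend an arbitrary $\gamma$-perturbation of $\widetilde{G}$ to one of $G$ by leaving the weights on $\{v\}\cup N(v)$ unchanged, and compare $I^*$ against $\widetilde{I}\cup\{v\}$ (the paper phrases this as a contradiction, you argue directly, but the content is identical). No gaps.
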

\begin{proof}
It is easy to see that $I^* \setminus \{v\}$ is a maximum independent set of $\widetilde{G}$. We will now prove that the instance is $\gamma$-stable. Let's assume that there exists a perturbation $w'$ of $\widetilde{G}$ such that $I' \neq (I^* \setminus \{v\})$ is a maximum independent set of $\widetilde{G}$. This means that $w'(I') \geq w'(I^* \setminus {v})$. We now extend $w'$ to the whole vertex set $V$ by setting $w_u' = w_u$ for every $u \in \{v\} \cup N(v)$. It is easy to verify that $w'$ is now a $\gamma$-perturbation for $G$. Observe that $I' \cup \{v\}$ is a feasible independent set of $G$, and we now have $w'(I' \cup \{v\})  = w'(I') + w_v' \geq w'(I^* \setminus {v}) + w_v' = w'(I^*)$. Thus, we get a contradiction.
\end{proof}

\begin{proof}[Proof of Theorem~\ref{thm:VC-estimation}]
In this proof, we use a standard trick that is used for turning any good approximation algorithm for Maximum Independent Set to a good approximation algorithm for Minimum Vertex Cover. The trick is based on the fact that, if we solve the standard LP for Independent Set and look at the vertices that are half-integral, then in the induced graph on these vertices, the largest independent set is at most the size of the minimum vertex cover, and thus, any good approximate solution to Independent Set would directly translate to a good approximate solution to Vertex Cover.
	
Let $G=(V,E,w)$ be an $(\alpha\beta)$-stable instance of Vertex Cover and let $X^* \subseteq V$ be its (unique) optimal vertex cover, and $I^* = V \setminus X^*$ be its (unique) optimal independent set. We first solve the standard LP relaxation for \MIS{} (see Figure~\ref{fig:lp-mis}) and compute an optimal half-integral solution $x$. The solution $x$ naturally partitions the vertex set into three sets, $V_0 = \{u: x_u = 0\}$, $V_{1/2} = \{u: x_u = 1/2\}$ and $V_1 = \{u: x_u = 1\}$. It is well known (see~\cite{Nemhauser1975}) that $V_1 \subseteq I^*$ and $V_0 \cap I^* = \emptyset$. Thus, it is easy to see that $I^* = V_1 \cup I_{1/2}^*$, where $I_{1/2}^*$ is an optimal independent set of the induced graph $G[V_{1/2}]$ (similarly, $X^* = V_0 \cup (V_{1/2} \setminus I_{1/2}^*)$).
	
We now use the simple fact that $N(V_1) = V_0$. By iteratively applying Lemma~\ref{lem:delete-points-inside} for the vertices of $V_1$, we get that $G[V_{1/2}]$ is $(\alpha\beta)$-stable, and so it has a unique optimal independent set $I_{1/2}^*$. Let $X_{1/2}^* = V_{1/2} \setminus I_{1/2}^*$ be the unique optimal vertex cover of $G[V_{1/2}]$. It is easy to see that solution $\{x_u\}_{u \in V_{1/2}}$ (i.e. the solution that assigns value $1/2$ to every vertex) is an optimal fractional solution for $G[V_{1/2}]$. This implies that $w(I_{1/2}^*) \leq \frac{w(V_{1/2})}{2} \leq w(X_{1/2}^*)$.
	
Since $G[V_{1/2}]$ is $(\alpha\beta)$-stable, by Theorem~\ref{thm:ig} we know that the integrality gap of a convex relaxation relaxation for $G[V_{1/2}]$ is at most $\min\{\alpha, \beta / (\beta - 1)\}$. Let $A = \min\{\alpha, \beta / (\beta - 1)\}$, and let $\textrm{FRAC}$ be the optimal fractional cost of the relaxation for $G[V_{1/2}]$, w.r.t.~\MIS{}. Thus, we get that $w(I_{1/2}^*) \geq \frac{1}{A} \cdot \textrm{FRAC}$. From now on, we assume that $\beta > 2$, which implies that $1 \leq A < 2$. We now have
\begin{align*}
    w(V_{1/2}) - \textrm{FRAC} &\geq w(V_{1/2}) - A \cdot w(I_{1/2}^*) = w(V_{1/2}) - w(I_{1/2}^*) - (A - 1) \cdot w(I_{1/2}^*)\\
                               &\geq w(X_{1/2}^*) - (A - 1) \cdot w(X_{1/2}^*) = (2 - A) \cdot w(X_{1/2}^*).
\end{align*}
We conclude that $w(X_{1/2}^*) \leq \frac{1}{2 - A} \cdot (w(V_{1/2}) - \textrm{FRAC})$. Thus, for any $\beta > 2$,
\begin{align*}
    w(V_0) + (w(V_{1/2}) - \textrm{FRAC})  &\geq w(V_0) + (2 - A) w(X_{1/2}^*) \geq (2 - A) (w(V_0) + w(X_{1/2}^*)\\
                                           &= (2 - A) w(X^*).
\end{align*}
Since $\frac{1}{2 - A} \leq \frac{\beta - 1}{\beta - 2}$, we get that we have a $\left(1 + \frac{1}{\beta - 2} \right)$-estimation approximation algorithm for Vertex Cover on $(\alpha\beta)$-stable instances. We now combine this algorithm with any 2-approximation algorithm for Vertex Cover, and always return the minimum of the two algorithms. This concludes the proof.
\end{proof}

\begin{corollary}[Bansal et al.~\cite{DBLP:conf/stoc/BansalGG15} + Theorem~\ref{thm:VC-estimation}]
For every fixed $\varepsilon > 0$, there exists a $(1 + \varepsilon)$-estimation approximation algorithm for $\widetilde{O}\left(\frac{1}{\varepsilon} \cdot \frac{\Delta}{\log^{3/2} \Delta} \right)$-stable instances of Minimum Vertex Cover of maximum degree $\Delta$, where the notation $\widetilde{O}$ hides some $\mathrm{poly}(\log \log \Delta)$ factors.
\end{corollary}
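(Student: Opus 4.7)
The plan is to simply combine Theorem~\ref{thm:VC-estimation} with the integrality gap bound of Bansal, Gupta and Guruganesh~\cite{DBLP:conf/stoc/BansalGG15} for the Lov\'asz $\theta$-function SDP relaxation of Maximum Independent Set on graphs of bounded degree. Recall that~\cite{DBLP:conf/stoc/BansalGG15} shows that on graphs of maximum degree $\Delta$, the $\theta$-function SDP has integrality gap at most $\alpha = \widetilde{O}(\Delta/\log^{3/2}\Delta)$, where the $\widetilde{O}$ hides $\mathop{\mathrm{poly}}(\log\log\Delta)$ factors. Since this SDP has an objective function of the form $\sum_{u\in V} w_u x_u$ (with $x_u\in[0,1]$), it falls within the scope of Theorem~\ref{thm:VC-estimation}.

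Given a fixed $\varepsilon > 0$, I would set $\beta = 2 + 1/\varepsilon$, which is strictly greater than $2$ as required by the hypothesis of Theorem~\ref{thm:VC-estimation}. With this choice,
\begin{equation*}
1 + \frac{1}{\beta - 2} \;=\; 1 + \varepsilon,
\end{equation*}
so Theorem~\ref{thm:VC-estimation} immediately gives a $\min\{2, 1+\varepsilon\}$-estimation approximation algorithm (which is $(1+\varepsilon)$ for $\varepsilon < 1$; for $\varepsilon \geq 1$ any trivial $2$-approximation already suffices and the claim is vacuous) for $(\alpha\beta)$-stable instances of Minimum Vertex Cover on graphs of maximum degree $\Delta$.

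It remains to check the resulting stability threshold. We have
\begin{equation*}
\alpha \cdot \beta \;=\; \widetilde{O}\!\left(\frac{\Delta}{\log^{3/2}\Delta}\right) \cdot \left(2 + \frac{1}{\varepsilon}\right) \;=\; \widetilde{O}\!\left(\frac{1}{\varepsilon}\cdot\frac{\Delta}{\log^{3/2}\Delta}\right),
\end{equation*}
which matches the stated bound. There is essentially no obstacle: the whole argument is bookkeeping of parameters, since the hard work (the integrality gap bound for the $\theta$-function SDP on bounded-degree graphs, and the $\alpha \mapsto 1 + 1/(\beta-2)$ reduction of integrality gap under stability) is already encapsulated in the two black boxes being combined.
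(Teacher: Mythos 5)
Your proposal is correct and is exactly the intended derivation: the paper states this as an immediate corollary of Theorem~\ref{thm:VC-estimation} applied to the Lov\'asz $\theta$-function SDP with its $\widetilde{O}(\Delta/\log^{3/2}\Delta)$ integrality gap bound from Bansal et al., and your choice $\beta = 2 + 1/\varepsilon$ is the standard parameter setting that makes $1 + \frac{1}{\beta-2} = 1+\varepsilon$ while keeping $\alpha\beta = \widetilde{O}\left(\frac{1}{\varepsilon}\cdot\frac{\Delta}{\log^{3/2}\Delta}\right)$. Nothing is missing.
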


\section[Stable Independent Set on general graphs]{Combinatorial algorithms for stable instances of Independent Set on general graphs}\label{sec:upper-bound-general}

In this section, we use our algorithm for $(k-1)$-stable instances of $k$-colorable graphs and the standard greedy algorithm as subroutines to solve $(\varepsilon \cdot n)$-stable instances on graphs of $n$ vertices, in time $n^{O(1/\varepsilon)}$. Thus, from now on we assume that $\varepsilon > 0$ is a fixed constant. Before presenting our algorithm, we will prove a few lemmas. First, we need the following standard fact about the chromatic number of a graph. For any graph $G$, let $\chi(G)$ be its chromatic number. We also denote the neighborhood of a vertex $u$ as $N(u) = \{v: (u,v) \in E\}$.
\begin{lemma}[Welsh-Powell coloring]\label{lemma:welsh-powell}
Let $G(V,E)$ be a graph, where $n = |V|$, and let $d_1 \geq d_2 \geq ... \geq d_n$ be the sequence of its degrees in decreasing order. Then, $\chi(G) \leq \max_i \min \{d_i + 1, i\}$.
\end{lemma}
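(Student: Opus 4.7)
The plan is to establish the bound via a greedy coloring argument applied to the vertices in decreasing order of degree, which is exactly the strategy Welsh and Powell originally used. First, I would order the vertices $v_1, v_2, \ldots, v_n$ so that $\deg(v_i) = d_i$ and $d_1 \geq d_2 \geq \cdots \geq d_n$. Then I would run the standard greedy coloring procedure: process vertices in the order $v_1, v_2, \ldots, v_n$, and assign to each $v_i$ the smallest positive integer (color) that is not already used by any previously colored neighbor of $v_i$.

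The key step is to bound the number of colors used on vertex $v_i$. The crucial observation is twofold: (i) $v_i$ has at most $d_i$ neighbors in the entire graph, so the number of already-colored neighbors of $v_i$ is at most $d_i$; (ii) the previously colored vertices are exactly $v_1, \ldots, v_{i-1}$, so the number of already-colored neighbors of $v_i$ is also at most $i - 1$. Combining these two bounds, $v_i$ has at most $\min\{d_i, i-1\}$ already-colored neighbors, and hence the greedy rule assigns it a color from the set $\{1, 2, \ldots, \min\{d_i, i-1\} + 1\}$, i.e., a color of value at most $\min\{d_i + 1, i\}$.

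Taking the maximum over all $i \in [n]$, the greedy procedure uses at most $\max_i \min\{d_i + 1, i\}$ colors, which produces a proper coloring of $G$ with that many colors and therefore $\chi(G) \leq \max_i \min\{d_i + 1, i\}$, as claimed. The proof involves no real obstacle; the only subtlety worth emphasizing is that both upper bounds on the number of already-colored neighbors (the degree bound $d_i$ and the trivial bound $i-1$ coming from the processing order) must be used simultaneously, since using only the degree bound would give the weaker Brooks-style estimate $\Delta + 1$, while using only the ordering bound would give the trivial estimate $n$.
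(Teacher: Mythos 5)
Your proof is correct and follows exactly the same route as the paper: greedy coloring in decreasing order of degree, bounding the color of $v_i$ by both $d_i+1$ (degree bound) and $i$ (ordering bound). Nothing further is needed.
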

\begin{proof}
The lemma is based on a simple observation. We consider the following greedy algorithm for coloring. Suppose that $u_1, ..., u_n$ are the vertices of the graph, with corresponding degrees $d_1 \geq d_2 \geq ... \geq d_n$. The colors are represented with the numbers $\{1, ..., n\}$. The greedy coloring algorithm colors one vertex at a time, starting from $u_1$ and concluding with $u_n$, and for each such vertex $u_i$, it picks the ``smallest" available color. It is easy to see that for each vertex $u_i$, the color that the algorithm picks is at most ``$i$". Since the algorithm picks the smallest available color, and since the vertex $u_i$ has $d_i$ neighbors, we observe that the color picked will also be at most ``$d_i + 1$". Thus, the color of vertex $u_i$ is at most $\min\{d_i + 1, i\}$. It is easy to see now that when the algorithm terminates, it will have used at most $\max_i \min\{d_i + 1, i\}$ colors, and thus $\chi(G) \leq \max_i \min\{d_1 + 1, i\}$.
\end{proof}

We can now prove the following useful fact.
\begin{lemma}\label{lemma:colors-degrees}
Let $G=(V,E)$ be a graph, with $n = |V|$. Then, for any natural number $k \geq 1$, one of the following two properties is true:
\begin{enumerate}
    \item $\chi(G) \leq \left\lceil \frac{n}{k} \right\rceil$, or
    \item there are at least $\left\lceil \frac{n}{k} \right\rceil + 1$ vertices in $G$ whose degree is at least $\left\lceil \frac{n}{k} \right\rceil$.
\end{enumerate}
\end{lemma}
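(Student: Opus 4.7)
Set $t = \lceil n/k \rceil$. The plan is to show that if property (2) fails, then property (1) must hold, which suffices. So I will assume that there are at most $t$ vertices in $G$ of degree at least $t$, and derive $\chi(G) \leq t$ from this assumption, using the Welsh--Powell bound (Lemma~\ref{lemma:welsh-powell}).

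Order the vertices so that their degrees satisfy $d_1 \geq d_2 \geq \cdots \geq d_n$. The assumption that at most $t$ vertices have degree $\geq t$ immediately translates, in this ordering, to the statement that $d_i \leq t - 1$ for every $i \geq t+1$. I will then bound $\min\{d_i + 1, i\}$ separately in the two regimes: for $i \leq t$ we have $\min\{d_i+1, i\} \leq i \leq t$, while for $i \geq t+1$ we have $\min\{d_i+1, i\} \leq d_i + 1 \leq t$. Combining the two cases gives $\max_i \min\{d_i + 1, i\} \leq t$, and hence, by Lemma~\ref{lemma:welsh-powell}, $\chi(G) \leq t = \lceil n/k \rceil$, which is property~(1).

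There is no real obstacle here; the only care needed is in the strict inequality bookkeeping (``at most $t$ vertices of degree $\geq t$'' versus ``at least $t+1$ vertices of degree $\geq t$''), which matches exactly the threshold appearing in Lemma~\ref{lemma:welsh-powell}. The argument is entirely a direct application of Welsh--Powell to the contrapositive of the dichotomy.
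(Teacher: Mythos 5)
Your proof is correct and is essentially identical to the paper's: the paper assumes $\chi(G) > \lceil n/k \rceil$ and deduces $d_{\lceil n/k\rceil+1} \geq \lceil n/k\rceil$, while you take the contrapositive in the other direction, but both arguments split the Welsh--Powell maximum at index $\lceil n/k \rceil$ and bound the two ranges in the same way. No issues.
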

\begin{proof}
Suppose that $\chi(G) > \left\lceil \frac{n}{k} \right\rceil$. Let $u_1, ..., u_n$ be the vertices of $G$, with corresponding degrees $d_1 \geq d_2 \geq ... \geq d_n$. It is easy to see that $\max_{1 \leq i \leq \left\lceil \frac{n}{k} \right\rceil} \min \{d_i + 1, i\} \leq \left\lceil \frac{n}{k} \right\rceil$. We now observe that if $d_{\left\lceil \frac{n}{k} \right\rceil + 1} < \left\lceil \frac{n}{k} \right\rceil$, then we would have $\max_{\left\lceil \frac{n}{k} \right\rceil + 1 \leq i \leq n} \min\{d_i + 1, i\} \leq \left\lceil \frac{n}{k} \right\rceil$, and thus, by Lemma~\ref{lemma:welsh-powell}, we would get that $\chi(G) \leq \left\lceil \frac{n}{k} \right\rceil$, which is a contradiction. We conclude that we must have $d_{\left\lceil \frac{n}{k} \right\rceil + 1} \geq \left\lceil \frac{n}{k} \right\rceil$, which, since the vertices are ordered in decreasing order of their degrees, implies that there are at least $\left\lceil \frac{n}{k} \right\rceil + 1$ vertices whose degree is at least $\left\lceil \frac{n}{k} \right\rceil$.
\end{proof}

We will also need the following lemma.
\begin{lemma}\label{lem:stable-delete-point-outside}
Let $G = (V,E, w)$ be $\gamma$-stable instance of \MIS{} whose optimal independent set is $I^*$. Then, $\widetilde{G} = G[V \setminus X]$ is $\gamma$-stable, for any set $X \subseteq V \setminus I^*$.
\end{lemma}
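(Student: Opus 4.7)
The plan is to reduce stability of $\widetilde{G}$ directly to stability of $G$ by extending any perturbation of $\widetilde{G}$ back to a perturbation of $G$ using the original weights on the deleted set $X$. First I would observe that because $X \subseteq V \setminus I^*$, we have $I^* \subseteq V \setminus X$, so $I^*$ is a feasible independent set of $\widetilde{G}$. Moreover, every independent set of $\widetilde{G}$ is, trivially, an independent set of $G$, a fact I will use repeatedly.

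Next, I would check that $I^*$ is the unique maximum independent set of $\widetilde{G}$ under $w$. Suppose $I'$ is an independent set of $\widetilde{G}$ with $I' \neq I^*$. Then $I'$ is also an independent set of $G$, and by $\gamma$-stability of $G$ (applied with the trivial $\gamma$-perturbation $w' = w$, which actually only gives the uniqueness half), we get $w(I^* \setminus I') > \gamma \cdot w(I' \setminus I^*) \geq w(I' \setminus I^*)$, hence $w(I') < w(I^*)$.

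The main step is to verify stability of $\widetilde G$. Let $w'$ be a $\gamma$-perturbation of the weights on $V \setminus X$, i.e., $w_u \leq w'_u \leq \gamma \cdot w_u$ for every $u \in V \setminus X$. Extend $w'$ to all of $V$ by defining $w'_u := w_u$ for $u \in X$; clearly $\widetilde w'$ defined this way is a $\gamma$-perturbation of $G$. Now let $I'$ be any independent set of $\widetilde{G}$ with $I' \neq I^*$. Since both $I'$ and $I^*$ lie in $V \setminus X$, we have $w'(I^* \setminus I') = \widetilde w'(I^* \setminus I')$ and $w'(I' \setminus I^*) = \widetilde w'(I' \setminus I^*)$. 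Applying $\gamma$-stability of $G$ to the perturbation $\widetilde w'$ and the independent set $I'$ (which is feasible in $G$), we obtain
\begin{equation*}
    w'(I^* \setminus I') = \widetilde w'(I^* \setminus I') > \gamma \cdot \widetilde w'(I' \setminus I^*) = \gamma \cdot w'(I' \setminus I^*),
\end{equation*}
which shows that $I^*$ remains the unique optimal independent set of $\widetilde G$ under $w'$. This completes the proof.

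I do not expect any genuine obstacle here: the argument is purely structural and relies on the fact that stability is preserved when one shrinks the feasible set by removing vertices outside the optimum, since every candidate competitor in the smaller instance was already a candidate competitor in the larger one, and perturbations lift trivially by using the original weights on the removed set.
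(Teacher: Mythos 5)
Your proof is correct and follows essentially the same route as the paper: extend any $\gamma$-perturbation of $\widetilde{G}$ to a $\gamma$-perturbation of $G$ by leaving the weights on $X$ unchanged, observe that every independent set of $\widetilde{G}$ is an independent set of $G$ containing no vertex of $X$, and invoke stability of $G$. One small quibble: the displayed inequality $\widetilde w'(I^*\setminus I') > \gamma\cdot \widetilde w'(I'\setminus I^*)$ does not actually follow from $\gamma$-stability of $G$ (the multiplicative $\gamma$-gap is guaranteed only for the \emph{original} weights $w$); what stability gives for the perturbed weights $\widetilde w'$ is just $\widetilde w'(I^*) > \widetilde w'(I')$, which is exactly what you need to conclude that $I^*$ remains the unique optimum of $\widetilde{G}$ under $w'$, so the argument goes through.
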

\begin{proof}
Fix a subset $X \subseteq V \setminus I^*$. It is easy to see that any independent set of $\widetilde{G} = G[V \setminus X]$ is an independent set of the original graph $G$. Let's assume that $\widetilde{G}$ is not $\gamma$-stable, i.e. there exists a $\gamma$-perturbation $w'$ such that $I' \neq I^*$ is a maximum independent set of $\widetilde{G}$. This means that $w'(I') \geq w'(I^*)$. By extending the perturbation $w'$ to the whole vertex set $V$ (simply by not perturbing the weights of the vertices of $X$), we get a valid $\gamma$-perturbation for the original graph $G$ such that $I'$ is at least as large as $I^*$. Thus, we get a contradiction.
\end{proof}

Since we will need the standard greedy algorithm as a subroutine, we explicitly state the algorithm here (see Algorithm~\ref{alg:greedy}).
\begin{algorithm}[h]
\begin{enumerate}
    \item Let $S := \emptyset$ and $X := V$.
    \item while $(X \neq \emptyset)$:\\
            \hspace*{20pt}Pick $u := \arg\max_{v \in X} \{w_v\}$.\\
            \hspace*{20pt}Set $S := S \cup \{u\}$ and $X := X \setminus (\{u\} \cup N(u))$.
    \item Return $S$.
\end{enumerate}
\caption{The greedy algorithm for \MIS{}}
\label{alg:greedy}
\end{algorithm}

Bilu~\cite{Bilu} proved the following theorem.
\begin{theorem}[\cite{Bilu}]\label{thm:bilu-greedy}
The Greedy algorithm (see Algorithm~\ref{alg:greedy}) solves $\Delta$-stable instances of \MIS{} on graphs of maximum degree $\Delta$.
\end{theorem}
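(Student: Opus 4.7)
The plan is to proceed by induction on the number of vertices $n$, showing that the greedy algorithm always picks a vertex of $I^*$ at each step. The base case ($n=0$ or $n=1$) is trivial, so the whole content lies in the inductive step: if $u_1 = \arg\max_{v \in V} w_v$ is the first vertex the algorithm picks, we must show $u_1 \in I^*$, and then invoke the lemmas from the previous section to reduce to a strictly smaller $\Delta$-stable instance.

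\medskip\noindent\textbf{Main step (the heaviest vertex lies in $I^*$).} Suppose for contradiction that $u_1 \notin I^*$. Consider the feasible independent set
\[
S \;=\; \{u_1\} \;\cup\; \bigl(I^* \setminus N(u_1)\bigr).
\]
Since $u_1 \notin I^*$, we have $S \setminus I^* = \{u_1\}$ and $I^* \setminus S = I^* \cap N(u_1)$. The $\Delta$-stability of $G$ (in the equivalent reformulation used in this chapter) forces
\[
w\bigl(I^* \cap N(u_1)\bigr) \;>\; \Delta \cdot w_{u_1}.
\]
On the other hand, $u_1$ has at most $\Delta$ neighbors, so $|I^* \cap N(u_1)| \le \Delta$; moreover, since $u_1$ is the heaviest vertex, each $v \in I^* \cap N(u_1)$ satisfies $w_v \le w_{u_1}$. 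Therefore $w(I^* \cap N(u_1)) \le \Delta \cdot w_{u_1}$, contradicting the previous inequality. Hence $u_1 \in I^*$.

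\medskip\noindent\textbf{Reduction and induction.} Having established $u_1 \in I^*$, the greedy algorithm next restricts to the induced subgraph $\widetilde G = G[V \setminus (\{u_1\} \cup N(u_1))]$. By Lemma~\ref{lem:delete-points-inside}, $\widetilde G$ is itself a $\Delta$-stable instance of \MIS{} whose unique maximum independent set is $I^* \setminus \{u_1\}$. Its maximum degree is at most $\Delta$ (deleting vertices only decreases degrees), so the inductive hypothesis applies and the greedy algorithm, run on $\widetilde G$, returns exactly $I^* \setminus \{u_1\}$. Combining this with the first picked vertex $u_1$ shows that the algorithm's output on $G$ is $I^*$, completing the induction.

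\medskip\noindent\textbf{Where the difficulty lies.} The only nontrivial ingredient is the pigeonhole-style argument that the heaviest vertex must lie in $I^*$; everything else is bookkeeping that follows from the stability-preservation lemma already proved (Lemma~\ref{lem:delete-points-inside}). The key quantitative feature is that $|N(u_1)| \le \Delta$ together with $w_v \le w_{u_1}$ for $v \in N(u_1)$ matches the stability factor $\Delta$ exactly, so the inequality in the stability definition is strict and produces a contradiction with no slack to spare. This explains why $\Delta$ is the precise threshold for the greedy algorithm: any weaker stability would permit a heaviest vertex lying outside $I^*$.
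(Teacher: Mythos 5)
Your proof is correct. Note that the thesis does not actually reprove this statement (it cites Bilu); the closest in-house argument is the proof of Theorem~\ref{thm:certified-greedy}, which establishes the stronger weakly-stable version by a different route: there one defines the explicit $\Delta$-perturbation $w'_u = \Delta\, w_u$ for $u$ in the greedy output $I$ (and $w'_u = w_u$ otherwise), shows by a step-by-step exchange that $I$ is optimal for $w'$, and concludes $w(I^*\setminus I) \le \Delta\, w(I\setminus I^*)$, which contradicts $\Delta$-stability unless $I = I^*$. You instead prove directly, via the exchange set $S = \{u_1\}\cup(I^*\setminus N(u_1))$, that the heaviest remaining vertex always lies in $I^*$, and then recurse using Lemma~\ref{lem:delete-points-inside}. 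The combinatorial core is the same pigeonhole bound ($|N(u_1)|\le\Delta$ together with $w_v\le w_{u_1}$ for $v\in N(u_1)$), and your use of the strict inequality in the stability definition is sound because weights are strictly positive, so $w(S\setminus I^*)=w_{u_1}>0$. The trade-off between the two routes: your induction is more direct and self-contained for exact stability, but it hinges on greedy recovering $I^*$ one vertex at a time, which does not survive the passage to $(\Delta,\mathcal{N})$-weak stability (there the output need only land in $\mathcal{N}$, and Lemma~\ref{lem:delete-points-inside} has no weakly-stable analogue), whereas the perturbation argument of Theorem~\ref{thm:certified-greedy} handles both settings at once — which is presumably why the thesis presents that version instead.
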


We will now present an algorithm for $(n/k)$-stable instances of graphs with $n$ vertices, for any natural number $k \geq 1$, that runs in time $n^{O(k)}$. Let $G = (V,E,w)$ be a $(n/k)$-stable instance of \MIS{}, where $n = |V|$. The algorithm is defined recursively (see Algorithm~\ref{alg:unbounded-degree}).

\begin{algorithm}[h]
\noindent \texttt{Unbounded-Degree-Alg$(G,k)$}:
\begin{enumerate}
    \item If $k = 1$, run greedy algorithm (Algorithm~\ref{alg:greedy}) on $G$, report solution and exit.
    \item Solve standard LP relaxation for $G$ and obtain (fractional) solution $\{x_u\}_{u \in V}$.
    \item If $\{x_u\}_{u \in V}$ is integral, report solution and exit.
    \item Let $X = \{u \in V: \mathrm{deg}(u) \geq \left\lceil \frac{n}{k} \right\rceil\}$, and for each $u \in X$, let $G_u = G[V \setminus (\{u\} \cup N(u))]$.
    \item For each $u \in X$, run \texttt{Unbounded-Degree-Alg$(G_u,k-1)$} and obtain independent \\set $S_u$. Set $I_u = S_u \cup \{u\}$.
    \item Let $\widetilde{G} = G[V \setminus X]$ and run \texttt{Unbounded-Degree-Alg$(\widetilde{G},k-1)$} to obtain independent\\ set $\widetilde{I}$.
    \item Return the maximum independent set among $\{I_u\}_{u \in X}$ and $\widetilde{I}$.
\end{enumerate}
\caption{The algorithm for $(n/k)$-stable instances of \MIS{}}
\label{alg:unbounded-degree}
\end{algorithm}

\begin{theorem}
The above algorithm optimally solves $(n/k)$-stable instances of \MIS{} on graphs with $n$ vertices, in time $n^{O(k)}$.
\end{theorem}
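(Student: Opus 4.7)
The plan is to prove correctness by induction on $k$, with a case split driven by Lemma~\ref{lemma:colors-degrees}, and then bound the running time by a short recursion analysis.

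For the base case $k=1$, the instance is $n$-stable, so in particular $\Delta$-stable since $\Delta\le n-1<n$, and Theorem~\ref{thm:bilu-greedy} guarantees that Step~1 returns the unique optimum. For the inductive step, let $G$ be an $(n/k)$-stable instance with unique maximum independent set $I^*$. If the half-integral LP solution computed in Step~2 turns out to be integral, it must coincide with $\mathbf{1}_{I^*}$ and the algorithm is done. Otherwise I would invoke Lemma~\ref{lemma:colors-degrees} with parameter $k$: the first alternative $\chi(G)\le \lceil n/k\rceil$ would combine with Theorem~\ref{thm:colorable-robust} (and the easy inequality $n/k\ge \lceil n/k\rceil-1$) to force LP integrality, contradicting the case we are in, so the second alternative must hold and the set $X$ of vertices of degree at least $\lceil n/k\rceil$ has $|X|\ge \lceil n/k\rceil+1$.

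Now I would split on whether $I^*\cap X$ is empty. If some $u\in I^*\cap X$ exists, then $I^*\setminus\{u\}$ is the unique maximum independent set of $G_u=G[V\setminus(\{u\}\cup N(u))]$, and Lemma~\ref{lem:delete-points-inside} preserves $(n/k)$-stability. If instead $I^*\cap X=\emptyset$, then $I^*\subseteq V\setminus X$ is the unique maximum independent set of $\widetilde G=G[V\setminus X]$, and Lemma~\ref{lem:stable-delete-point-outside} again preserves $(n/k)$-stability. In either subcase the resulting subinstance $G'$ satisfies $|V(G')|\le n-\lceil n/k\rceil-1 < n(k-1)/k$, so
\[
\frac{n}{k} \;\ge\; \frac{|V(G')|}{k-1},
\]
which means $G'$ is $(|V(G')|/(k-1))$-stable and the inductive hypothesis applies to the recursive call with parameter $k-1$. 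Since Step~7 returns the heaviest among $\{I_u\}_{u\in X}$ and $\widetilde I$, and one of these candidates equals $I^*$, the algorithm outputs $I^*$.

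For the running time, writing $T(n,k)$ for the worst-case cost on an $n$-vertex instance with parameter $k$, every recursive layer spawns at most $|X|+1\le n+1$ subproblems on strictly smaller graphs with parameter $k-1$, and performs $\mathrm{poly}(n)$ additional work (LP solving, greedy), so $T(n,k)\le (n+1)\,T(n-1,k-1)+\mathrm{poly}(n)$ with $T(n,1)=\mathrm{poly}(n)$; this unrolls to $T(n,k)=n^{O(k)}$. Choosing $k=\lceil 1/\varepsilon\rceil$ recovers the promised $n^{O(1/\varepsilon)}$-time algorithm for $(\varepsilon n)$-stable instances. The only non-mechanical part is the stability-tracking across the induction: verifying that after removing either $\{u\}\cup N(u)$ or $X$ the stability parameter $n/k$ still dominates $|V(G')|/(k-1)$, which is exactly the inequality $|V(G')|\le n(k-1)/k-1$ established above.
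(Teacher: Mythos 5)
Your proof is correct and follows essentially the same route as the paper's: induction on $k$, the dichotomy from Lemma~\ref{lemma:colors-degrees} combined with Theorem~\ref{thm:colorable-robust} to force LP integrality in the low-chromatic-number case, the case split on $I^*\cap X$ handled by Lemmas~\ref{lem:delete-points-inside} and~\ref{lem:stable-delete-point-outside}, the size/stability bookkeeping $|V(G')|\le n-\lceil n/k\rceil-1$ giving $(|V(G')|/(k-1))$-stability, and the same $n^{O(k)}$ recursion. No gaps.
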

\begin{proof}
We will prove the theorem using induction on $k$. Let $G = (V,E,w)$, $n = |V|$, be a $(n/k)$-stable instance whose optimal independent set is $I^*$. If $k = 1$, Theorem~\ref{thm:bilu-greedy} shows that the greedy algorithm computes the optimal solution (by setting $\Delta = n - 1$), and thus our algorithm is correct.

Let $k\geq 2$, and let's assume that the algorithm correctly solve $(N/k')$-stable instances of graphs with $N$ vertices, for any $1 \leq k' < k$. We will show that it also correctly solves $(N/k)$-stable instances. By Lemma~\ref{lemma:colors-degrees}, we know that either the chromatic number of $G$ is at most $\left\lceil \frac{n}{k} \right\rceil$, or there are at least $\left\lceil \frac{n}{k} \right\rceil + 1$ whose degree is at least $\left\lceil \frac{n}{k} \right\rceil$. If the chromatic number is at most $\left\lceil \frac{n}{k} \right\rceil$, then, by Theorem~\ref{thm:colorable-robust} we know that the standard LP relaxation is integral if $G$ is $(\left\lceil \frac{n}{k} \right\rceil - 1)$-stable. We have $\left\lceil \frac{n}{k} \right\rceil - 1 \leq \left\lfloor \frac{n}{k} \right\rfloor \leq n/k$. Thus, in this case, the LP will be integral and the algorithm will terminate at step (3), returning the optimal solution.

So, let's assume that the LP is not integral for $G = (V,E,w)$, which means that the chromatic number of the graph is strictly larger than $\left\lceil \frac{n}{k} \right\rceil$. This means that the set of vertices $X = \{u \in V: \mathrm{deg}(v) \geq \left\lceil \frac{n}{k} \right\rceil\}$ has size at least $|X| \geq \left\lceil \frac{n}{k} \right\rceil + 1$. Fix a vertex $u \in X$. If $u \in I^*$, then, by Lemma~\ref{lem:delete-points-inside}, we get that $G_u$ is $(n/k)$-stable, and moreover, $I^* = \{u\} \cup I_u^*$, where $I_u^*$ is the optimal independent set of $G_u$. Note that $G_u$ has at most $n - \left\lceil \frac{n}{k} \right\rceil - 1 \leq \left\lfloor\frac{(k-1)}{k} \cdot n \right\rfloor = n'$ vertices. It is easy to verify that $n/k \geq n'/(k-1)$, which implies that $G_u = (V_u, E_u, w)$ is a $\left(\frac{|V_u|}{k-1} \right)$-stable instance with $|V_u|$ vertices. Thus, by the inductive hypothesis, the algorithm will compute its optimal independent set $S_u \equiv I_u^*$.

There is only one case remaining, and this is the case where $X \cap I^* = \emptyset$. In this case, by Lemma~\ref{lem:stable-delete-point-outside}, we get that $\widetilde{G} = G[V\setminus X]$ is $(n/k)$-stable. There are at most $n - \left\lceil \frac{n}{k} \right\rceil - 1$ vertices in $\widetilde{G}$, and so, by a similar argument as above, the graph $\widetilde{G} = (\widetilde{V}, \widetilde{E}, w)$ is a $\left(\frac{|\widetilde{V}|}{k-1} \right)$-stable instance with $|\widetilde{V}|$ vertices, and so, by the inductive hypothesis, the algorithm will compute its optimal independent set.

It is clear now that, since the algorithm always picks the best possible independent set, then at step (7) it will return the optimal independent set of $G$. This concludes the induction and shows that our algorithm is correct.

Regarding the running time, it is quite easy to see that we have at most $k$ levels of recursion, and at any level, each subproblem gives rise to at most $n$ new subproblems. Thus, the total running time is bounded by $\poly(n) \cdot n^{k + 1} = n^{O(k)}$. Thus, the algorithm always runs in time $n^{O(k)}$.
\end{proof}

It is immediate now that, for any given $\varepsilon > 0$, we can set $k = \lceil 1/\varepsilon \rceil$, and run our algorithm in order to optimally solve $(\varepsilon n)$-stable instances of \MIS{} with $n$ vertices, in total time $n^{O(1/\varepsilon)}$.

\section{The greedy algorithms solves weakly stable instances of \MIS{}}

In this section, we observe that the greedy algorithm, analyzed by Bilu~\cite{Bilu}, solves $(\Delta, \mathcal{N})$-weakly-stable instances of \MIS{} on graphs of maximum degree $\Delta$. Let $G=(V,E,w)$ be a graph of maximum degree $\Delta$, and $N(u) = \{v: (u,v)\in E\}$.

\begin{theorem}\label{thm:certified-greedy}
Given a $(\Delta, \mathcal{N})$-weakly-stable instance of \MIS{} of a graph of maximum degree $\Delta$, the Greedy algorithm (see Algorithm~\ref{alg:greedy}) returns a solution $I \in \mathcal{N}$ in polynomial time.
\end{theorem}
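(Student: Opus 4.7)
The plan is to mimic Bilu's $\Delta$-stability argument, but instead of deriving a contradiction whenever greedy deviates from $I^*$, we use weak stability to conclude that the greedy output itself lies in the neighborhood $\mathcal{N}$. The proof will be by contradiction: assume the greedy output $I_g$ is \emph{not} in $\mathcal{N}$, and construct a single $\Delta$-perturbation $w'$ under which $w'(I_g) \geq w'(I^*)$, which directly contradicts the definition of $(\Delta,\mathcal{N})$-weak stability (since $I^* \in \mathcal{N}$ while $I_g \notin \mathcal{N}$).

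The perturbation is the natural one suggested by Bilu's proof: set $w'_v = \Delta \cdot w_v$ for every $v \in I_g \setminus I^*$ and $w'_v = w_v$ otherwise. This is a valid $\Delta$-perturbation, and since $I^* \cap (I_g \setminus I^*) = \emptyset$ we have $w'(I^*) = w(I^*)$ and $w'(I_g) = w(I_g \cap I^*) + \Delta \cdot w(I_g \setminus I^*)$. Hence the desired inequality $w'(I_g) \geq w'(I^*)$ reduces to showing
\begin{equation*}
    w(I^* \setminus I_g) \;\leq\; \Delta \cdot w(I_g \setminus I^*).
\end{equation*}

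To establish this bound I will define a charging map $\phi : I^* \setminus I_g \to I_g \setminus I^*$ by tracking why each $v \in I^* \setminus I_g$ was discarded by the greedy algorithm. Since $v \notin I_g$, the greedy algorithm must have removed $v$ at the step it selected some vertex $u \in I_g$ with $v \in N(u)$; this $u$ cannot lie in $I^*$ (else $\{u,v\} \subseteq I^*$ would violate independence), so $u \in I_g \setminus I^*$, and we set $\phi(v) = u$. At the moment $u$ was selected, $v$ was still available, so by the greedy choice $w_u \geq w_v$. Finally, for every $u \in I_g \setminus I^*$ the preimage $\phi^{-1}(u) \subseteq N(u)$ has size at most $\Delta$. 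Combining these two facts gives $w(I^* \setminus I_g) = \sum_{u \in I_g \setminus I^*} \sum_{v \in \phi^{-1}(u)} w_v \leq \sum_{u \in I_g \setminus I^*} \Delta \cdot w_u = \Delta \cdot w(I_g \setminus I^*)$, as required.

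The only step requiring care is the definition of $\phi$ and the justification $w_{\phi(v)} \geq w_v$; the rest is essentially bookkeeping. Once the inequality $w'(I_g) \geq w'(I^*)$ is in hand, the assumption $I_g \notin \mathcal{N}$ combined with Definition~\ref{def:weak-stability} applied to the $\Delta$-perturbation $w'$ yields $w'(I^*) > w'(I_g)$, a contradiction. This forces $I_g \in \mathcal{N}$, and since greedy runs in polynomial time (each step reduces the vertex set by at least one), the theorem follows.
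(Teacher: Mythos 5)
Your proof is correct, and its skeleton is the same as the paper's: exhibit a $\Delta$-perturbation $w'$ under which the greedy output $I_g$ is at least as good as $I^*$, which by Definition~\ref{def:weak-stability} forces $I_g \in \mathcal{N}$; in both cases this reduces to the inequality $w(I^* \setminus I_g) \leq \Delta \cdot w(I_g \setminus I^*)$. The two write-ups differ only in how that inequality is established. The paper scales the weights of \emph{all} of $I_g$ and then runs an inductive exchange argument along the greedy order ($\Delta w_{u_i}$ dominates the weight of $u_i$'s surviving neighborhood, so each greedy vertex can be swapped into an optimal solution of the perturbed instance), concluding the stronger fact that $I_g$ is globally optimal for $w'$. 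You scale only $I_g \setminus I^*$ and prove the inequality directly via a charging map $\phi: I^* \setminus I_g \to I_g \setminus I^*$ sending each discarded optimal vertex to the greedy vertex that eliminated it, using the greedy order for $w_{\phi(v)} \geq w_v$ and the degree bound for $|\phi^{-1}(u)| \leq \Delta$. Your charging argument is the more elementary and self-contained of the two (it avoids the induction over the greedy execution and proves exactly what is needed, no more); the paper's version yields the extra, unneeded, conclusion that $I_g$ is optimal for the perturbed instance. Both are complete.
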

\begin{proof}
Let $I$ be the solution returned by the greedy algorithm, and let $I^*$ be the unique optimal solution. We will prove that $w(I^* \setminus I) \leq \Delta \cdot w(I \setminus I^*)$. For that, we define the perturbation $w'$ that sets $w_u' = \Delta \cdot w_u$ for every $u \in I$ and $w_u' = w_u$ for every $u \in V \setminus I$. We will now prove that $I$ is optimal for $G' = (V, E, w')$. Let $I'$ be an optimal solution for $G'$. We look at the execution of the greedy algorithm, and let $u_1, ..., u_t$ be the vertices that the algorithm picks, in that order. Clearly, $u_1$ is a vertex of maximum weight. Since the degree is at most $\Delta$, this means that $\Delta \cdot w_{u_1} \geq w(N(u_1))$. Thus, if $N(u_1) \cap I' \neq \emptyset$, then we can always define a feasible independent set $(I' \setminus N(u_1)) \cup \{u_1\}$ whose cost (w.r.t.~$w'$) is at least as much as the cost of $I'$. Thus, we can always obtain an optimal independent set $I_1'$ for $G'$ that contains $u_1$. We now remove the vertices $\{u_1\} \cup N(u_1)$ from the graph and look at the induced graph $G'[V \setminus (\{u_1\} \cup N(u_1))]$. It is again easy to see that $u_2$ is a vertex of maximum weight in $G'[V \setminus (\{u_1\} \cup N(u_1))]$. So, we can use the same argument to show that there exists an optimal solution $I_2'$ that contains both $u_1$ and $u_2$. Applying this argument inductively, we conclude that there exists an optimal solution for $G'$ that contains all the vertices $\{u_1, ..., u_k\}$. In other words, $I$ is an optimal solution for $G'$. This means that $w'(I') \geq w(I^*)$, which implies that $w(I^* \setminus I) \leq \Delta \cdot w(I \setminus I^*)$. This concludes the proof.
\end{proof}

\chapter[LP's and perturbation-resilient clustering]{LP-based results for perturbation-resilient clustering}\label{chap:clustering}

In this chapter, we apply LP-based techniques to obtain a robust algorithm for 2-metric-perturbation-resilient instances of symmetric $k$-center, and also exhibit some lower bounds for the integrality of the $k$-median LP relaxation on perturbation-resilient instances. We note here that the robust algorithm for 2-metric-perturbation-resilient instances of symmetric $k$-center presented in Section~\ref{sec:robust-k-center-lp} was also independently obtained by Chekuri and Gupta~\cite{Chekuri-Approx18}; in fact, they extend the result for the case of asymmetric $k$-center as well.

\section[Metric-perturbation-resilient symmetric $k$-center]{A robust algorithm for $2$-metric-perturbation-resilient symmetric $k$-center}\label{sec:robust-k-center-lp}

The $k$-center problem is a very well studied clustering problem with many applications (see e.g.~\cite{DBLP:journals/tois/Can93, DBLP:journals/siamcomp/CharikarCFM04, DBLP:journals/jacm/ChuzhoyGHKKKN05, DYER1985285, DBLP:journals/jal/PanigrahyV98}). One classical application of $k$-center is the problem of placing $k$ fire stations in a city so as to minimize the maximum time for a fire truck to reach any location. For symmetric k-center, which will be our case of study in this section, there is a 2-approximation algorithm and, moreover, the problem is \NP-hard to approximate within a factor of $2 - \varepsilon$ (see~\cite{DBLP:journals/mor/HochbaumS85, HSU1979209}).

The problem was first studied in the context of stability and perturbation resilient by Awasthi et al.~\cite{DBLP:journals/ipl/AwasthiBS12}, in which they gave a non-robust algorithm for 3-perturbation-resilient instances of symmetric $k$-center. Balcan et al.~\cite{DBLP:conf/icalp/BalcanHW16} obtained improved results, and specifically they gave a non-robust algorithm for 2-perturbation-resilient instances of both symmetric and asymmetric $k$-center. They also proved that there are no algorithms for $(2 - \varepsilon)$-perturbation-resilient instances of symmetric $k$-center, unless $\NP = \RP$. A non-robust algorithm for 2-metric-perturbation-resilient instances of symmetric $k$-center was subsequently given in~\cite{DBLP:journals/corr/MakarychevM16, DBLP:conf/stoc/AngelidakisMM17}. In this section, we give a \emph{robust} algorithm for $2$-metric-perturbation-resilient instances of symmetric $k$-center, based on linear programming. We first define the problem. Throughout this chapter, we denote the distance of a point $u$ to a set $A$ as $d(u,A) = \min_{v \in A} d(u,v)$.

\begin{definition}[symmetric $k$-center]
Let $(\X,d)$ be a finite metric space, $|\X| = n$, and let $k \in [n]$. The goal is to select a set $C \subset \X$ of $k$ centers (i.e. $|C| = k$), so as to minimize the objective $\max_{u \in \X} d(u,C)$.
\end{definition}
We will always assume that the distance function is symmetric (i.e. $d(u,v) = d(v,u)$ for every $u,v\in \X$). We crucially use the following theorem, first proved by Balcan et al.~\cite{DBLP:conf/icalp/BalcanHW16}. We reprove the theorem here, confirming that the original proof works for metric perturbation resilience and not only perturbation resilience.

\begin{theorem}[\cite{DBLP:conf/icalp/BalcanHW16}]\label{thm:BHW16}
Let $(\X,d)$ be an $\alpha$-metric-perturbation-resilient instance of symmetric $k$-center with optimal value $R^*$, for some $\alpha > 1$. Let $C \subset \X$ be a set of $k$ centers such that $\max_{u \in \X} d(u,C) \leq \alpha \cdot R^*$. Then, the Voronoi partition induced by $C$ is the (unique) optimal clustering.
\end{theorem}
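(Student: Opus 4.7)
The plan is to first extract from $\alpha$-metric-perturbation-resilience a quantitative ``center-proximity'' structural property, and then argue that any set $C$ with $k$-center cost at most $\alpha R^*$ must induce a Voronoi partition that coincides with the unique optimal clustering $\{V_1^*,\dots,V_k^*\}$ associated with the optimal centers $C^* = \{c_1^*,\dots,c_k^*\}$. Concretely, I would first establish (via perturbation resilience) the following property: for every $u \in V_j^*$ and every $j' \neq j$, $d(u, c_{j'}^*) > \alpha \cdot d(u, c_j^*)$. If this fails for some pair $(u, j')$, I would construct an $\alpha$-\emph{metric} perturbation $d'$ that keeps the optimum value at most $R^*$ but admits a different optimal assignment for $u$, contradicting perturbation resilience; the metric property of $d'$ would be ensured by taking the shortest-path closure of a suitable pointwise modification of $d$.

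Next, using that $\max_u d(u, C) \le \alpha R^*$, I would define a map $\pi \colon [k] \to [k]$ by setting $\pi(j)$ to be the index of a center in $C$ nearest to $c_j^*$, so $d(c_j^*, c_{\pi(j)}) \le \alpha R^*$. The next step is to verify that $\pi$ is a bijection. Suppose $\pi(j) = \pi(j') = i$ with $j \neq j'$; then $d(c_j^*, c_{j'}^*) \le 2\alpha R^*$. I would apply the center-proximity property at a point $u \in V_j^*$ that nearly saturates the distance to $c_j^*$: combining the triangle inequality $d(u, c_{j'}^*) \le d(u, c_j^*) + d(c_j^*, c_{j'}^*)$ with center proximity $d(u, c_{j'}^*) > \alpha \cdot d(u, c_j^*)$ should force $d(c_j^*, c_{j'}^*)$ to strictly exceed $2\alpha R^*$, a contradiction. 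Hence $\pi$ is a bijection and we can relabel so that $\pi$ is the identity, i.e.\ $c_j \in C$ is within $\alpha R^*$ of $c_j^*$ for every $j$.

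Finally, I would show that for every $j$ and every $u \in V_j^*$, the closest center in $C$ is exactly $c_j$. Upper bound: by the triangle inequality, $d(u, c_j) \le d(u, c_j^*) + d(c_j^*, c_j) \le R^* + \alpha R^* = (1+\alpha)R^*$. Lower bound: for any $j' \neq j$, using center proximity applied to $u$ and then the triangle inequality via $c_{j'}^*$,
\[
d(u, c_{j'}) \;\ge\; d(u, c_{j'}^*) - d(c_{j'}^*, c_{j'}) \;>\; \alpha \cdot d(u, c_j^*) - \alpha R^*,
\]
so when combined with a second application of center proximity (or with the symmetric argument that would be used in the bijection step) the gap closes in favor of $c_j$ being the strictly nearest center. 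This implies $V_j^* \subseteq V_j$ for every $j$, and since both $\{V_j^*\}$ and $\{V_j\}$ are partitions of $\X$ into $k$ parts, they must coincide.

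The main obstacle will be making the quantitative inequalities in the last step work simultaneously for every pair $(j, j')$ at the threshold $\alpha$, in particular for the tight case $\alpha = 2$ that is used in the corollary for robust algorithms. For small values of $\alpha$ one may have to strengthen center proximity into a ``metric'' variant that exploits not only $d(u, c_j^*)$ but also the geometry of pairs $(u, v)$ straddling cluster boundaries; deriving and cleanly applying such a strengthened property, while keeping the perturbation $d'$ a genuine metric, is the part that requires the most care.
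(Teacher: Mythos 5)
There is a genuine gap, and it occurs in both halves of your plan. First, the structural lemma you want to extract — center proximity in the multiplicative form $d(u,c_{j'}^*)>\alpha\cdot d(u,c_j^*)$ for every $u\in V_j^*$ — is the $k$-median/$k$-means property, and the standard derivation does not go through for the $k$-center objective: if you blow up $d(u,c_j^*)$ by a factor $\alpha$ (and take the metric closure), the \emph{maximum} cost of the optimal clustering is unchanged whenever $\alpha\, d(u,c_j^*)\le R^*$, so no alternative clustering becomes competitive and perturbation resilience yields no contradiction. The property that $\alpha$-perturbation resilience actually gives for $k$-center is an additive separation of the form $d(u,c_{j'}^*)>\alpha R^*$ for $u\notin V_{j'}^*$ (together with $d(u,c_j^*)\le R^*$ for $u\in V_j^*$), and the perturbation certifying it must be built so that the perturbed \emph{optimum equals $\alpha R^*$} — a perturbation can only increase distances, so your stated goal of "keeping the optimum value at most $R^*$" cannot be met. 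Second, even granting the strongest plausible structural facts, the quantitative bookkeeping in your matching argument does not close, and this is not a matter of care at the threshold $\alpha=2$: from $\pi(j)=\pi(j')$ you get $d(c_j^*,c_{j'}^*)\le 2\alpha R^*$, while center proximity at a boundary point only forces $d(c_j^*,c_{j'}^*)>(\alpha-1)R^*$, so the bijection step fails; and in the last step the bounds $d(u,c_j)\le(1+\alpha)R^*$ and $d(u,c_{j'})>\alpha\, d(u,c_j^*)-\alpha R^*$ (the right side is negative for $u$ near $c_j^*$) never yield $d(u,c_j)<d(u,c_{j'})$. The additive slack of $\alpha R^*$ incurred by routing through $c_j^*$ and $c_{j'}^*$ is of the same order as every distance in play, so triangle inequalities alone cannot recover it.

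The paper's proof avoids all of this with a single global perturbation: set $l(u,v)=\alpha\, d(u,v)$ everywhere except $l(u,c(u))=\alpha\min\{d(u,c(u)),R^*\}$, where $c(u)$ is $u$'s nearest center in $C$, and let $d'$ be the induced shortest-path metric. One checks that $d'$ is a legitimate $\alpha$-metric perturbation, that the optimum of $(\X,d')$ is exactly $\alpha R^*$ and is attained by $C$, so that by resilience the Voronoi partition of $C$ under $d'$ \emph{is} the unique optimal clustering, and finally that the Voronoi partitions of $C$ under $d$ and $d'$ coincide because the only distances shortened relative to $\alpha d$ are those of the form $(u,c(u))$. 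If you want a combinatorial route, you would need to replace multiplicative center proximity by the additive $\alpha R^*$-separation property and argue directly about which optimal cluster the points served by each $c\in C$ can lie in, rather than matching $C$ to $C^*$ and chaining triangle inequalities.
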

\begin{proof}
Let $C$ be a set of $k$ centers with $\max_{u \in \X} d(u,C) \leq \alpha \cdot R^*$. Let $c(u) = \arg \min_{c \in C} d(u, C)$, breaking ties arbitrarily. We define $l(u,v) = \alpha \cdot d(u,v)$ for all $u \in \X$ and $v \in \X \setminus \{c(u)\}$. For every $u \in \X$, we also set $l(u, c(u)) = \alpha \cdot \min \{d(u, c(u)), R^*\}$. Let $d'$ be the shortest-path metric induced by the length function $l$. We first show that $d'$ is an $\alpha$-metric perturbation. It is easy to see that for all $u, v \in \X$, we have $d'(u,v) \leq \alpha d(u,v)$. We also observe that $l(u, c(u)) \geq d(u, c(u))$, since $\alpha \cdot R^* \geq d(u, c(u))$. Thus, any path from $u$ to $v$ has length (w.r.t.~edge length $l$) at least the length based on distances $d$. This implies that $d'(u,v) \geq d(u,v)$, and so we conclude that $d'$ is indeed an $\alpha$-metric perturbation.

We will now prove that the optimal cost for the instance $(\X, d')$ is exactly $\alpha R^*$. To see this, let's assume that there is a set of centers $C'$ such that $\max_{u \in \X} d'(u, C') < \alpha R^*$. Let $u, v \in \X$ such that $d(u, v) \geq R^*$. By construction, it is easy to see that $d'(u,v) \geq \alpha R^*$. Thus, if $d'(u, v) < \alpha R^*$, then we must have $d(u, v) < R^*$. Since $\max_{u \in \X} d'(u, C') < \alpha R^*$, we must have $d(u, C') < R^*$ for every $u \in \X$, and so this contradicts the optimality of $R^*$. We conclude that the optimal cost of $(\X, d')$ is exactly $\alpha \cdot R^*$, since it cannot be less than that and $\max_{u \in \X} d'(u, C) \leq \alpha R^*$, by construction. This means that $C$ is an optimal set of centers for $(\X, d')$, whose induced Voronoi partition (under $d'$) gives the unique optimal clustering.

Finally, we show that the Voronoi partition induced by $C$ under $d$ is the same as the Voronoi partition under $d'$. This is easy to verify, as for any point $u \in \X$, its distance to $c(u)$ is at most $\alpha R^*$, and the only ``shortcuts" that are created by $d'$ involve pairwise distances of length at least $\alpha R^*$. Thus, we have $d'(u, c(u)) = d'(u, C)$. This concludes the proof.
\end{proof}

We now define the LP we will use. Let $B(u,r) = \{v \in \X: d(u,v) \leq r\}$. For every $R > 0$, let $\cP(R)$ be the polytope as defined in Figure~\ref{fig:kcenter-lp}. Observe that polytope is described by $\poly(n)$ variables and inequalities.
\begin{figure}
\begin{align*}
    \cP(R): & \quad \sum_{v \in \X} y_v \leq k, &&\\
            & \quad x_{uv} \leq y_v,  &&\forall (u,v) \in X \times \X, \\
            & \quad \sum_{v \in B(u,R)} x_{uv} \geq 1,  &&\forall u \in X, \\
            & \quad \sum_{v \in \X \setminus B(u,R)} x_{uv} = 0, &&\forall u \in X, \\
            & \quad 0 \leq x_{uv} \leq 1 &&\forall (u,v) \in \X \times \X,\\
            & \quad 0 \leq y_v \leq 1, && \forall v \in \X.
\end{align*}
\caption{The symmetric $k$-center polytope.}
\label{fig:kcenter-lp}
\end{figure}
In the intended integral solution, the variable $y_v$ denotes whether $v$ is selected as a center or not, and $x_{uv}$ is 1 if $u$ is ``served" by center $v$. Let $(x,y) \in \cP(R)$, given that $\cP(R) \neq \emptyset$. Let $\mathrm{supp}_u(x,y) = \{v \in \X: x_{uv} > 0\}$ denote the set of centers which $u$ is (fractionally) connected to; we call this set the support of $u$. When the solution $(x,y)$ is clear from the context, we will write $\mathrm{supp}_u$ instead of $\mathrm{supp}_u(x,y)$. We now prove the following lemma.

\begin{lemma}\label{lemma:kcenter-support}
Let $(\X,d,k)$ be a $2$-metric-perturbation-resilient instance of symmetric $k$-center with optimal clustering $\{C_1, ..., C_k\}$ and optimal value $R^*$. Let $R \leq R^*$. If $\cP(R) \neq \emptyset$, then for every $(x,y) \in \cP(R)$, for every $i \in [k]$ and $u \in C_i$, we have $\mathrm{supp}_u(x,y) \subseteq C_i$.
\end{lemma}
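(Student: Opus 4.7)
The plan is to reduce the lemma to a cluster-separation property: for any $u\in C_i$ and $v\in C_j$ with $i\neq j$, one has $d(u,v) > R^*$. Once this separation is established the lemma is immediate, because $v\in\mathrm{supp}_u(x,y)$ together with the LP constraint $\sum_{w\notin B(u,R)} x_{uw}=0$ forces $d(u,v)\leq R\leq R^*$, which rules out $v$ lying in any cluster $C_j$ different from $C_i$.

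To prove the separation, I would argue by contradiction: suppose $u\in C_i$, $v\in C_j$ with $j\neq i$, and $d(u,v)\leq R^*$, and then construct an explicit $2$-metric perturbation $d'$ under which the optimal clustering either strictly improves or ceases to be unique, contradicting $2$-metric-perturbation-resilience. Concretely, I would set edge lengths $l(a,b)=2d(a,b)$ for every pair $(a,b)$ except $l(v,c_i^*)=d(v,c_i^*)$, and let $d'$ be the shortest-path metric these lengths generate. The bounds $d\leq d'\leq 2d$ follow immediately: $d'\leq 2d$ from the direct edge, and $d'\geq d$ since any multi-edge path has total $l$-length at least $d$ by the triangle inequality for $d$ combined with $l\geq d$. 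I would then compare the $d'$-cost of the original clustering $\{C_1,\ldots,C_k\}$, with centers $\{c_t^*\}$, against that of the alternative clustering $C_i''=C_i\cup\{v\}$ and $C_j''=C_j\setminus\{v\}$, using the same centers. The two clusterings agree on every point except $v$, whose contribution under $d'$ is $d'(v,c_j^*)\leq 2d(v,c_j^*)\leq 2R^*$ in the original and $d'(v,c_i^*)=d(v,c_i^*)\leq d(v,u)+d(u,c_i^*)\leq R^*+R^*=2R^*$ in the alternative. The key to the contradiction is showing $d(v,c_i^*)\leq d'(v,c_j^*)$, which makes the alternative clustering at least as cheap as the original and thereby violates uniqueness of the $d'$-optimum.

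The main obstacle will be pushing through this last comparison and treating a couple of degenerate cases: the case $v=c_j^*$ (where $c_j^*$ must be replaced by an alternate center for $C_j''$ and the new center's cost must be controlled via the triangle inequality), and the case where the $(v,c_i^*)$-shortcut in $d'$ strictly lowers $d'(v,c_j^*)$ below its direct-edge value. I expect the right remedy is to strengthen the perturbation by preserving a second well-chosen edge, or equivalently to pick the alternative centers slightly differently (for instance, using $v$ itself as the center of $C_i''$), so that the inequality $d(v,c_i^*)\leq d'(v,c_j^*)$ — or its appropriate replacement for the alternative center — holds in every case. This corner-case bookkeeping is where the $2$-metric-perturbation-resilience assumption is used most sharply, and is the step I anticipate will require the most care.
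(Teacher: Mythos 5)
Your reduction of the lemma to a cluster-separation property is sound: the constraint $\sum_{w\notin B(u,R)}x_{uw}=0$ indeed forces $\mathrm{supp}_u(x,y)\subseteq B(u,R)$, so separation at scale $R^*$ would finish the proof, and that separation property is in fact true for $2$-metric-perturbation-resilient instances. The gap is in your proof of the separation property. Because $k$-center is a $\max$ objective and the cost of a \emph{clustering} (as opposed to a fixed center set) re-optimizes the centers, showing that both the original partition and your alternative have $d'$-cost at most $2R^*$ \emph{with the fixed centers} $\{c_t^*\}$ proves nothing: to contradict uniqueness you need $\mathrm{cost}_{d'}(\mathrm{alt})\le \mathrm{cost}_{d'}(\{C_1,\dots,C_k\})=OPT_{d'}$, and your construction gives no lower bound on $OPT_{d'}$. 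You rescale everything by $2$ except the single edge $(v,c_i^*)$, whose length $d(v,c_i^*)$ can exceed $R^*$; consequently the implication ``$d'(a,b)<2R^*\Rightarrow d(a,b)<R^*$'' fails, nothing rules out a center set exploiting that shortcut and achieving $d'$-cost strictly below $2R^*$, and then the original clustering also has $d'$-cost below $2R^*$ while your alternative is only bounded by $2R^*$ — no contradiction follows. The proof of Theorem~\ref{thm:BHW16} shows exactly what is needed to close this: the perturbation caps \emph{all} assignment edges at $\alpha\min\{d,R^*\}$, which lets one prove $d'(a,b)\ge\alpha R^*$ whenever $d(a,b)\ge R^*$, pins $OPT_{d'}$ at exactly $\alpha R^*$, and certifies that the alternative centers are optimal. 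Your construction omits this step, and the ``corner cases'' you flag at the end are precisely where it breaks.

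For comparison, the paper avoids a new perturbation argument entirely and uses the LP itself: it runs the greedy $2R$-ball-covering procedure, shows by an LP charging argument (the supports of distinct greedy centers are disjoint and each is fully paid for by the $y$-variables) that at most $k$ balls are opened, invokes Theorem~\ref{thm:BHW16} to conclude these balls are exactly the optimal clusters, and then starts the greedy at $u$ to get $\mathrm{supp}_u\subseteq B(u,R)\subseteq B(u,2R)=C_i$. If you want to keep your two-step structure, the cleanest way to obtain your separation lemma is that same greedy-plus-Theorem~\ref{thm:BHW16} argument run with $R=R^*$ and the integral optimal solution, rather than a fresh perturbation construction.
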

\begin{proof}
Let's assume that $(x,y) \in \cP(R) \neq \emptyset$. We consider the following standard greedy algorithm:
\begin{enumerate}
\setlength{\itemsep}{0pt}
    \item Let $A := \X$ and $j := 0$.
    \item while $(A \neq \emptyset)$:\\
            \hspace*{13pt}- let $j := j + 1$.\\
            \hspace*{13pt}- pick any point $u \in A$.\\
            \hspace*{13pt}- set $B_j := B(u,2R) \cap A$ and $b_j := u$\\
            \hspace*{13pt}- let $A := A \setminus B_j$.
    \item Return clustering $\{B_1, ..., B_j\}$.
\end{enumerate}
We will prove that the above algorithm is a 2-approximation algorithm for symmetric $k$-center. It is trivial to prove that the algorithm always terminates, since $u \in B(u,2R) \cap A$ for every $u \in A$. We will now prove that $j \leq k$. For that, we will prove that every center $b_i$ that the algorithm picks is ``fully paid for" by the fractional solution, and thus, at most $k$ such centers are opened. To see this, we first observe that for every point $u \in \X$, $\mathrm{supp}_u \subseteq B(u,R) \subseteq B(u,2R)$. This means that if at some iteration $i$ we have two different points $u, u' \in A$ such that $v \in \mathrm{supp}_u \cap \mathrm{supp}_{u'}$, then, if the algorithm sets $b_i = u$, we have $u' \in B_i$, as $d(u,u')\leq d(u,v) + d(v,u') \leq R + R = 2R$. Using induction, this implies that at the end of iteration $i$, the remaining points in $A \setminus B_i$ have disjoint support from the centers $b_1, ..., b_{i-1}$ that the algorithm has already opened. So, whenever the algorithm opens a center $b_i$, we know that $\sum_{v \in \mathrm{supp}_{b_i}} x_{uv} \geq 1$, which implies that $\sum_{v \in \mathrm{supp}_{b_i}} y_v \geq 1$, and, by the previous observation, none of these $y$-variables have paid for any other already open center. Thus, the variables $\{y_v\}_{v \in \mathrm{supp}_{b_i}}$ can fully pay for the center $b_i$. By induction, this holds for every center that the algorithm opens, and since $\sum_{v \in \X} y_v \leq k$, we get that the algorithm opens at most $k$ centers. Thus, we get a feasible solution of cost at $\max_{i \in[k]}\max_{u \in B_i} d(u, b_i) \leq 2R \leq 2R^*$.

By Theorem~\ref{thm:BHW16}, we now get that the clustering returned by the previous algorithm must in fact be the optimal clustering $\{C_1, ..., C_k\}$. Now, fix a point $u\in C_i$. The above algorithm always works, regardless of how the elements of A are selected at each iteration. Thus, if the algorithm selects $u$ in the first iteration, then $B_1 = B(u,2R)$ must be equal to the optimal cluster $C_i$, and, as noted in the previous paragraph, $\mathrm{supp}_u \subseteq B(u,2R)$. Thus, $\mathrm{supp}_u \subseteq C_i$, for every $i \in [k]$ and $u \in C_i$.
\end{proof}

We are now ready to state the main theorem of this section.
\begin{theorem}
Let $(\X,d, k)$ be a $2$-metric-perturbation-resilient instance of symmetric $k$-center with optimal clustering $\{C_1, ..., C_k\}$ and optimal value $R^*$. Let $\cP(R)$ be the polytope, as defined in Figure~\ref{fig:kcenter-lp}. Then, for every $R < R^*$, $\cP(R) = \emptyset$.
\end{theorem}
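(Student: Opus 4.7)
The plan is to argue by contradiction: assuming $\cP(R)\neq\emptyset$ for some $R<R^*$, we will extract an integral set of $k$ centers that achieves objective at most $R$, which contradicts the optimality of $R^*$. The key engine driving the argument is Lemma~\ref{lemma:kcenter-support}, which already pins down a rigid structural property of any fractional solution in $\cP(R)$ for $R\leq R^*$: the support of every point must lie entirely inside its own optimal cluster.

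First I would fix an arbitrary $(x,y)\in\cP(R)$ and, using Lemma~\ref{lemma:kcenter-support}, show that $\sum_{v\in C_i}y_v=1$ for every $i\in[k]$. The inequality $\sum_{v\in C_i}y_v\geq 1$ follows because for any $u\in C_i$ we have $\mathrm{supp}_u\subseteq C_i$, so
\begin{equation*}
    \sum_{v\in C_i}y_v\;\geq\;\sum_{v\in \mathrm{supp}_u}y_v\;\geq\;\sum_{v\in \mathrm{supp}_u}x_{uv}\;=\;\sum_{v\in B(u,R)}x_{uv}\;\geq\;1.
\end{equation*}
Summing the $k$ inequalities and using $\sum_{v}y_v\leq k$ forces equality in each cluster.

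Next I would use this exactness to show that, within every cluster, the set $S_i=\{v\in C_i:y_v>0\}$ is contained in $B(u,R)$ for \emph{every} $u\in C_i$. Indeed, for any $u\in C_i$,
\begin{equation*}
    \sum_{v\in C_i\cap B(u,R)}y_v\;\geq\;\sum_{v\in C_i\cap B(u,R)}x_{uv}\;=\;\sum_{v\in B(u,R)}x_{uv}\;\geq\;1,
\end{equation*}
where we again use $\mathrm{supp}_u\subseteq C_i$ to replace the second sum. Combined with $\sum_{v\in C_i}y_v=1$, this forces $\sum_{v\in C_i\setminus B(u,R)}y_v=0$, i.e.\ every $v\in S_i$ satisfies $d(u,v)\leq R$.

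Finally, since $\sum_{v\in C_i}y_v=1>0$, each $S_i$ is non-empty, so I can choose an arbitrary representative $v_i\in S_i$ for every $i\in[k]$. The previous step guarantees $d(u,v_i)\leq R$ for all $u\in C_i$, so the integral set of centers $\{v_1,\dots,v_k\}$ achieves $k$-center objective $\max_{i}\max_{u\in C_i}d(u,v_i)\leq R<R^*$, contradicting the optimality of $R^*$. The only nontrivial step is the first one (establishing $\sum_{v\in C_i}y_v=1$); everything else follows by chasing the polytope constraints against the support structure supplied by Lemma~\ref{lemma:kcenter-support}, so there is no real obstacle beyond organizing the bookkeeping cleanly.
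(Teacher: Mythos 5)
Your proof is correct and follows essentially the same route as the paper's: both use Lemma~\ref{lemma:kcenter-support} to force $\sum_{v\in C_i}y_v=1$ per cluster and then extract an integral center from the positive-$y$ mass of each cluster, which lies within distance $R$ of every point in that cluster. The only cosmetic difference is that the paper phrases the final step as "all points in $C_i$ share the same support," whereas you argue directly that $S_i\subseteq B(u,R)$ for all $u\in C_i$; these are the same observation.
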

\begin{proof}
Let's assume that for some $R < R^*$, $\cP(R) \neq \emptyset$, and let $(x,y) \in \cP(R)$. By Lemma~\ref{lemma:kcenter-support}, we know that for every $i \in [k]$ and $u \in C_i$, we have $\mathrm{supp}_u \subseteq C_i$. This means that, for any $u\in C_i$, we have $1 \leq \sum_{v \in \mathrm{supp}_u} x_{uv} \leq \sum_{v \in \mathrm{supp}_u} y_v \leq \sum_{v \in C_i} y_v$. This, combined with the fact that $\sum_{v \in \X} y_v \leq k$, implies that we must have $\sum_{v \in C_i} y_v = 1$, for every $i \in [k]$. This further implies that for every $u,u' \in C_i$, $\mathrm{supp}_u = \mathrm{supp}_{u'}$. Let $\mathrm{supp}^{(i)} =  \mathrm{supp}_u$, for any $u \in C_i$. By the previous discussion, this is well defined.

We now pick any point $q_i \in \mathrm{supp}^{(i)}$, for each $i \in [k]$. From the observations of the previous paragraph, it is easy to see now that we have $\max_{u \in C_i} d(u, q_i) \leq R < R^*$. Thus, we get a feasible (integral) solution (with centers $\{q_1, ..., q_k\}$) with cost at most $R < R^*$, which is a contradiction. Thus, we must have $\cP(R) = \emptyset$ for every $R < R^*$.
\end{proof}

\begin{corollary}
There exists an efficient robust algorithm for $2$-metric-perturbation-resilient instances of symmetric $k$-center.
\end{corollary}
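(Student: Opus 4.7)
The plan is to turn the theorem and Lemma~\ref{lemma:kcenter-support} into a decision-by-feasibility algorithm, then extract the clustering from the LP solution. Observe that the optimal radius $R^*$ is necessarily one of the $O(n^2)$ pairwise distances $d(u,v)$. The theorem guarantees that on a $2$-metric-perturbation-resilient instance we have $\cP(R) = \emptyset$ for every $R < R^*$, while $\cP(R^*)$ is trivially nonempty (any integral optimum is feasible). I would therefore enumerate the pairwise distances in increasing order and, for each candidate $R$, solve the polynomial-size LP to decide whether $\cP(R) \neq \emptyset$; let $R$ be the smallest such value. On a $2$-metric-perturbation-resilient instance this yields exactly $R=R^*$, and in general $R \leq R^*$.

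Given such an $R$ and any $(x,y) \in \cP(R)$, I would then run the greedy clustering procedure used inside the proof of Lemma~\ref{lemma:kcenter-support}: repeatedly pick an arbitrary point $u$ from the remaining set $A$, open $b_i := u$ as a pivot, form $B_i := B(u,2R)\cap A$, and remove $B_i$ from $A$. That proof shows the pivots $b_1,\dots,b_j$ satisfy $j \leq k$ and every point lies within distance $2R$ of some $b_i$. I would next compute the Voronoi partition $V_1,\dots,V_j$ induced by $\{b_1,\dots,b_j\}$, and for each $V_i$ compute the optimal $1$-center $c_i = \arg\min_{c\in V_i}\max_{u\in V_i} d(u,c)$ with cost $\rho_i$. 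If $\max_i \rho_i \leq R$, output the clustering $\{V_1,\dots,V_j\}$ with centers $\{c_1,\dots,c_j\}$; otherwise output ``instance is not $2$-metric-perturbation-resilient.''

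Correctness on stable instances goes as follows. Since $R=R^*$, the proof of the theorem shows $\sum_{v\in C_i}y_v = 1$ for every $i\in[k]$, which forces the greedy to open exactly $k$ pivots, one lying inside each optimal cluster $C_i$; these $k$ centers achieve cost at most $2R^*$, so Theorem~\ref{thm:BHW16} (applied with $\alpha=2$) guarantees that the induced Voronoi partition is precisely the unique optimal clustering $\{C_1,\dots,C_k\}$. The $1$-center of each $C_i$ then has cost at most $R^*=R$, so the final check passes and the optimal clustering is returned. For robustness on non-stable inputs, the inequality $R \leq R^*$ ensures that any clustering passing $\max_i \rho_i \leq R$ is automatically an optimal solution, and if the check fails we safely declare the instance non-stable, as allowed by Definition~\ref{def:robust}. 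The main conceptual obstacle, as already noted in the excerpt, is that the $(x,y) \in \cP(R)$ output by the LP solver need not have any structure beyond feasibility; the point is that Lemma~\ref{lemma:kcenter-support} together with Theorem~\ref{thm:BHW16} promote even a crude $2$-approximate clustering extracted from $(x,y)$ into an exact recovery of the partition, after which a polynomial-time $1$-center computation inside each cluster recovers the optimal centers.
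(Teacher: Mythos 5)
Your proposal is correct and follows essentially the same route as the paper: sweep the $O(n^2)$ candidate radii for the smallest $\bar R$ with $\cP(\bar R)\neq\emptyset$, run the greedy $2$-approximation, invoke Theorem~\ref{thm:BHW16} to argue exact recovery on stable instances, and verify the output for robustness. Your final check (computing the optimal $1$-center of each extracted cluster and testing $\max_i\rho_i\leq \bar R$) is just a more explicit rendering of the paper's comparison of $\bar R$ against the cost of the approximate clustering.
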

\begin{proof}
The previous theorem suggests a very simple robust algorithm. We check all possible values of $R$ (there are at most $O(n^2)$ such values), and let $\bar{R}$ be the smallest value such that $\cP(\bar{R}) \neq \emptyset$. We also use any 2-approximation algorithm (e.g.~the greedy algorithm presented above) and let $R'$ be the cost of the clustering returned by the algorithm. If $\bar{R} = R'$, we return the clustering computed by the algorithm, otherwise we report that the instance is not stable. The correctness follows from the previous discussion.
\end{proof}

%%%%%%%%%%%%%%%%%%%%%%%%%%%%%%%%%%%%%%%%%%%%%%%%%%%%%%%%%%%%%%%%%%%%%%%%%%%%%%%%%%%%%%%%%%%%%%%%%%%%%%%%%%%%%%%%%%%%%%%%%%%%%%%%%%%%%%%%%%%%%%%%%%%%%%%%%%%%%%%%%%%%%%%%%%%%%%%%%%%%%%%%%%%%%%%%

\section{The $k$-median LP relaxation}

The $k$-median problem is yet another fundamental location problem in combinatorial optimization, that has received much of attention throughout the years (see e.g.~\cite{DBLP:journals/jcss/CharikarGTS02, DBLP:journals/jacm/JainV01, DBLP:conf/stoc/JainMS02, DBLP:journals/siamcomp/AryaGKMMP04}. The current best algorithm is due to Byrka et al.~\cite{DBLP:journals/talg/ByrkaPRST17} and gives a $(2.675 + \varepsilon)$-approximation, building upon the recent breakthrough of Li and Svennson~\cite{DBLP:journals/siamcomp/LiS16}. On the negative side, Jain et al.~\cite{DBLP:conf/stoc/JainMS02} proved that the $k$-median problem is hard to approximate within a factor $1 + 2/e \approx 1.736$.  Moreover, the natural LP relaxation of $k$-median, which we will introduce in this section, is known to have an integrality gap of at least 2. The best upper bound is by Archer et al.~\cite{DBLP:conf/esa/ArcherRS03} who showed that the integrality gap is at most 3 by giving an exponential-time rounding algorithm. In contrast, the best polynomial-time LP-rounding algorithm achieves an approximation ratio of 3.25~\cite{DBLP:conf/icalp/CharikarL12}.

Again, in the setting of perturbation resilience, the problem was first studied by Awasthi et al.~\cite{DBLP:journals/ipl/AwasthiBS12}, in which they gave a non-robust algorithm for 3-perturbation-resilient instances of $k$-median with no Steiner points (proper definitions are given below) and a non-robust algorithm for $(2 + \sqrt{3})$-perturbation-resilient instances in general metrics (with Steiner points). Balcan and Liang~\cite{DBLP:journals/siamcomp/BalcanL16} gave a non-robust algorithm for $(1 + \sqrt{2})$-perturbation-resilient instances $k$-center with no Steiner points. Finally, a non-robust algorithm for 2-metric-perturbation-resilient instances of $k$-median with no Steiner points was given in~\cite{DBLP:journals/corr/MakarychevM16, DBLP:conf/stoc/AngelidakisMM17}. In this section, we initiate the study of the $k$-median LP for perturbation-resilient instances, and we present some constructions of perturbation-resilient instances of $k$-median for which the standard LP relaxation is not integral. We first define the problem.
\begin{definition}[$k$-median]
Let $\{\X, \F, d\}$ such that $\X \cup \F$ is a finite metric space, $|\X| = n$, and let $k \in [n]$. Our goal is to select a set $C \subseteq \F$ of $k$ centers so as to minimize the function $\sum_{u \in \X} d(u, C)$.
\end{definition}

We will consider two classes of instances. The first is when $\X \cap \F = \emptyset$, and we will say that these are instances with Steiner points, and the second is when $\X = \F$, and we will say that these are instances with no Steiner points. We remind the reader that the algorithm for 2-metric-perturbation-resilient instances of~\cite{DBLP:conf/stoc/AngelidakisMM17} works for instances with no Steiner points; a slight modification of it can work for 3-metric-perturbation-resilient instances with Steiner points.

%%%%%%%%%%%%%%%%%%%%%%%%%%%%%%%%%%%%%%%%%%%%%%%%%%%%%%%%%%%%%%%%%%%%%%%%%%%%%%%%%%%%%%%%%%%%%%%%%
\subsection{Instances with Steiner points}

In this section, we prove that for every $\varepsilon > 0$, there exist $(2-\varepsilon)$-perturbation-resilient instances of $k$-median with Steiner points for which the standard LP relaxation is fractional. In particular, we consider the instance shown in Figure~\ref{fig:steiner}.

\begin{figure}[h!]
    \begin{center}
    \scalebox{0.8}{\input{./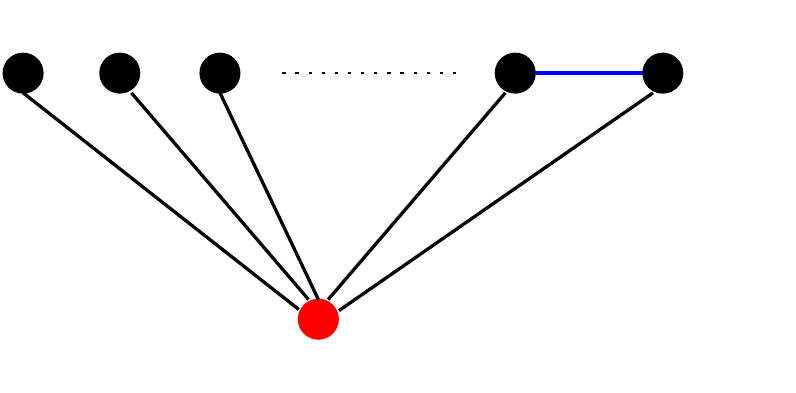_t}}
    \caption{An integrality gap perturbation-resilient instance for $k$-median with Steiner points.}
    \label{fig:steiner}
    \end{center}
\end{figure}

The set of points (clients) to be clustered is $\X = \{u_1, ..., u_n\}$, and the set of facilities is $\F = \{f_1, ..., f_n, f_{n+1}\}$, where $f_i \equiv u_i$, for $1 \leq i \leq n$ (in case we want to have $\X \cap \F = \emptyset$, we can simply consider facilities at an $\varepsilon'$-distance from each point, for sufficiently small $\varepsilon' = \varepsilon'(n) > 0$). The black edges all have length 1 (i.e. $d(u_i, f_{n + 1}) = 1$ for every $i \in [n]$) and the blue edge has length $d(u_{n - 1}, n) = 1 + \delta$, where $\delta = \frac{n}{(n-1)^2}$. Let $d: (\X \cup \F) \times (\X \cup \F) \to \mathbb{R}_{\geq 0}$ be the shortest-path metric induced by the above graph.

We now consider the $k$-median objective, with $k = n-1$, and the standard LP relaxation, given in Figure~\ref{fig:kmedianLP-Steiner}. In this relaxation, the variables $\{z_f\}_{f \in \F}$ denote which facilities are open and the variables $x(u,f)$ denote which facilities the point $u$ is (fractionally) connected to.
\begin{figure}
\begin{align*}
    \min:          & \quad \sum_{u \in \X,f \in \F} d(u,f) x(u,f) &&\\
    \textrm{s.t.:} & \quad \sum_{f \in \F} x(u,f) = 1, &&\forall u \in \mathcal{X}, \\
                   & \quad x(u,f) \leq z_f, &&\forall u \in X,f\in \F,\\
                   & \quad \sum_{f \in \F} z_f \leq k, \\
                   & \quad x(u,f) \geq 0, &&\forall u \in \X, f \in \F,\\
                   & \quad z_f \geq 0, &&\forall f \in \F.
\end{align*}
\caption{The standard $k$-median LP relaxation for instances with Steiner points.}
\label{fig:kmedianLP-Steiner}
\end{figure}

\paragraph{Optimal integral solution.} We first show that the clustering $C_i = \{u_i\}, 1 \leq i \leq n-2$, $C_{n-1} = \{u_{n-1}, u_n\}$ is the unique optimal clustering, with corresponding centers $F = \{f_1, ..., f_{n-1}\}$. The cost of this clustering is $1 + \delta$. Observe that if we open facility $f_{n+1}$ or if we open both $f_{n-1}$ and $f_n$, then the cost will be at least $2$, which is strictly larger than $1 + \delta$ for every $n \geq 3$. Thus, $F$ is indeed an optimal selection of centers for the unique optimal clustering $\{C_1, ..., C_{n-1}\}$ whose cost, as noted, is $OPT = 1 + \delta$.

We will now show that the instance is $\gamma$-perturbation-resilient, for $\gamma = \frac{2 - \delta}{1 + \delta}$. In order for the optimal clustering to change after some perturbation, $u_{n-1}$ or $u_n$ must join some other cluster, or they must be in separate clusters with 1 point each:
\begin{itemize}
    \item $u_{n-1}$ moves together with $u_i$, for some $1 \leq i \leq n - 2$: The cost of any such solution is at least $2$, and the cost of the original optimal clustering is now at most $\gamma (1 + \delta)$. We have $\gamma (1 + \delta) = 2 - \delta < 2$. Thus, the original optimal clustering remains strictly better.
    \item $u_n$ moves together with $u_i$, for some $1 \leq i \leq n - 2$: This case is identical to the above.
    \item $u_{n-1}$ and $u_n$ are in separate clusters by themselves, and $u_i, u_j$ move together in some cluster, with $1 \leq i < j \leq n-2$. The cost of this solution is $2$, while the cost of the original optimal clustering is, again, at most $\gamma(1 + \delta) = 2 - \delta < 2$, and, so, it remains strictly better.
\end{itemize}
Thus, the instance is $\gamma$-perturbation-resilient for $\gamma = \frac{2-\delta}{1 + \delta}$.

\paragraph{Fractional solution.} Consider the fractional solution $z_{f_i} = x(u_i, f_i) = \frac{n-2}{n-1}$, for every $i \in [n]$, $z_{f_{n+1}} = \frac{1}{n-1}$ and $x(u_i, f_{n + 1}) = \frac{1}{n-1}$ for every $i \in [n]$. Note that this is indeed a feasible solution. The cost for each point $u_i$ is $\frac{1}{n-1}$ and, so, the optimal fractional cost is at most
\begin{equation*}
    OPT_{LP} \leq \frac{n}{n-1} = 1 + \frac{1}{n-1} < 1 + \frac{n}{n-1} \cdot \frac{1}{n-1} = 1 + \delta = OPT.
\end{equation*}

Thus, we have proved that $OPT_{LP} < OPT$, and so the integrality gap of the LP is strictly larger than 1. We are ready to formally state our result.
\begin{theorem}
For every $\varepsilon > 0$, there exist $(2 - \varepsilon)$-perturbation-resilient instances for which the standard LP relaxation for $k$-median with Steiner points is not integral.
\end{theorem}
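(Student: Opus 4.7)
The plan is to use the construction and analysis already laid out in the excerpt almost verbatim: take the star instance on clients $\X = \{u_1,\dots,u_n\}$ with facilities $\F = \{f_1,\dots,f_n,f_{n+1}\}$, $k = n-1$, $d(u_i, f_{n+1}) = 1$ for $i < n$, $d(u_{n-1}, f_n) = 1+\delta$, and $\delta = n/(n-1)^2$, and show that for $n$ large enough this instance simultaneously (i) has a unique optimal integral solution of cost $1+\delta$, (ii) is $(2-\varepsilon)$-perturbation-resilient, and (iii) admits a fractional LP solution of strictly smaller cost.

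The first step is to recall from the excerpt that the unique optimal clustering is $C_i = \{u_i\}$ for $i \le n-2$ and $C_{n-1} = \{u_{n-1}, u_n\}$ with centers $\{f_1,\dots,f_{n-1}\}$, with total cost $OPT = 1 + \delta$; any alternative either opens $f_{n+1}$ or opens both $f_{n-1}$ and $f_n$, which forces cost at least $2$. The second step is the perturbation-resilience check, which the excerpt has already carried out by case analysis (moving $u_{n-1}$ or $u_n$ to another singleton, or splitting them and merging two other singletons): in every case the alternative solution has cost at least $2$, while a $\gamma$-perturbation can only inflate $OPT$ to at most $\gamma(1+\delta)$. Hence the instance is $\gamma$-perturbation-resilient for every $\gamma < 2/(1+\delta)$. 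The third step is the fractional solution $z_{f_i} = x(u_i,f_i) = (n-2)/(n-1)$ for $i \in [n]$, $z_{f_{n+1}} = 1/(n-1)$, and $x(u_i,f_{n+1}) = 1/(n-1)$, already shown feasible, with LP value $n/(n-1) = 1 + 1/(n-1) < 1 + n/(n-1)^2 = 1 + \delta = OPT$.

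The only remaining step, which ties everything together, is the asymptotic calculation. Fix $\varepsilon > 0$. Since $\delta = n/(n-1)^2 \to 0$ as $n \to \infty$, the stability parameter
\[
\gamma(n) \;=\; \frac{2-\delta}{1+\delta} \;\longrightarrow\; 2 \quad\text{as } n\to\infty,
\]
so we can choose $n = n(\varepsilon)$ large enough that $\gamma(n) \ge 2 - \varepsilon$. For this choice of $n$, the instance above is $(2-\varepsilon)$-perturbation-resilient, and the LP solution exhibited gives strictly smaller cost than any integral clustering, so the LP is not integral. Since $\varepsilon > 0$ was arbitrary, this proves the theorem.

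There is no real obstacle: every ingredient—the optimal clustering, the perturbation-resilience case analysis, and the fractional solution—is already established in the excerpt, so the only thing to verify carefully is the monotonic convergence $\gamma(n) \to 2$, which is immediate from $\delta \to 0$. If one wanted a fully explicit bound, one could solve $(2-\delta)/(1+\delta) \ge 2-\varepsilon$ for $\delta$ to obtain $\delta \le \varepsilon/(3-\varepsilon)$, which translates to a concrete lower bound on $n$, but this is not necessary for the theorem as stated.
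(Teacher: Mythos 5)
Your proposal is correct and follows essentially the same route as the paper: the same star construction with $\delta = n/(n-1)^2$, the same stability bound $\gamma = (2-\delta)/(1+\delta)$, the same fractional solution, and the same final calculation (the paper makes the bound explicit as $\delta \le \varepsilon/(3-\varepsilon)$, i.e.\ $n \ge 12/\varepsilon$, exactly as you note in your closing remark).
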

\begin{proof}
By the previous analysis, we know that for every $n \geq 3$, there exist $\left(\frac{2 - \delta}{1 + \delta} \right)$-perturbation-resilient instances of $k$-median with $n$ points (where $\delta = n / (n-1)^2$) for which the LP has integrality gap strictly larger than 1. Thus, for any given $\varepsilon > 0$, we want to have $\frac{2 - \delta}{1 + \delta} \geq 2 - \varepsilon$, which is equivalent to $\delta \leq \frac{\varepsilon}{3 - \varepsilon}$. Setting $n\geq 12/\varepsilon$, the previous inequality is satisfied.
\end{proof}

%%%%%%%%%%%%%%%%%%%%%%%%%%%%%%%%%%%%%%%%%%%%%%%%%%%%%%%%%%%%%%%%%%%%%%%%%%%%%%%%%%%%%%%%%%%%%%%%%
\subsection{Instances with no Steiner points}

In this section, we prove that for every $\varepsilon > 0$, there exist $(\phi-\varepsilon)$-perturbation-resilient instances of $k$-median with no Steiner points for which the standard LP relaxation is fractional, where $\phi = \frac{1 + \sqrt{5}}{2} \approx 1.618$ is the golden ratio. In particular, we consider the instance shown in Figure~\ref{fig:no_steiner}.

\begin{figure}[h!]
    \begin{center}
    \scalebox{0.8}{\input{./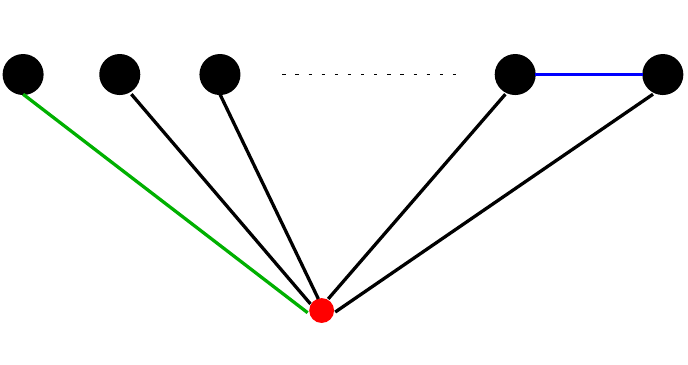_t}}
    \caption{An integrality gap perturbation-resilient instance for $k$-median with no Steiner points.}
    \label{fig:no_steiner}
    \end{center}
\end{figure}

Each vertex $U_i$ is a ``super-vertex", i.e.~it contains $n$ different points at pairwise distances 0 (in order to have a proper metric, we can set the points in pairwise distances $\varepsilon' = \varepsilon'(n) > 0$ for sufficiently small $\varepsilon'$). The center $v$ of the star is a single vertex. Thus, the set of points to be clustered is $\X = (\bigcup_i U_i) \cup \{v\}$, and we are allowed to open a center (facility) at any point $u \in \X$. Note that $|\X| = n^2 + 1$. The black edges have length 1 (i.e.~$d(U_i, v) = 1$, $i \in \{2, ..., n\}$), the blue edge has length $1 + \delta$ where $\delta = \frac{2}{\alpha n}$ (i.e.~$d(U_{n - 1}, U_n) = 1 + \delta$), and the green edge has length $1 / \alpha$, where $\alpha \in (1.5, 2)$ is a constant to be specified later (i.e.~$d(U_1, v) = 1 / \alpha$). Let $d: \X \times \X \to \R_{\geq 0}$ be the shortest-path metric on the above graph.

We now consider the $k$-median objective, with $k = n-1$, and the standard LP relaxation, given in Figure~\ref{fig:kmedianLP-noSteiner}.
\begin{figure}
\begin{align*}
    \min:         & \quad \sum_{u,c \in \X} d(u,c) x(u,c) \\
    \textrm{s.t.:}& \quad \sum_{c \in \X} x(u,c) = 1, &&\forall u \in \mathcal{X}, \\
                  & \quad x(u,c) \leq z_c, && \forall u, c \in X,\\
                  & \quad \sum_{c \in \X} z_c \leq k, \\
                  & \quad x(u,c) \geq 0, && \forall u, c \in \X, \\
                  & \quad z_c \geq 0, && \forall c \in \X.
\end{align*}
\caption{The standard $k$-median LP relaxation for instances with no Steiner points.}
\label{fig:kmedianLP-noSteiner}
\end{figure}

\paragraph{Optimal integral solution.} We first show that the clustering $\C = \{C_1, ..., C_{n-1}\}$ is the unique optimal clustering, where:
\begin{itemize}
    \item $C_1 = U_1 \cup \{v\}$, with any point of $U_1$ serving as a center.
    \item $C_i = U_i$, for $ 2 \leq i \leq n -2$, with any point of the cluster serving as a center.
    \item $C_{n-1} = U_{n-1} \cup U_n$, with any point of the cluster serving as a center.
\end{itemize}
The cost of this clustering is $n(1 + \delta) + 1/\alpha = n\left(1 + \frac{2}{\alpha n}\right) + 1/\alpha = n + 3/\alpha$. In order to prove that the above clustering is indeed the unique optimal clustering, we do some case analysis. Let's assume that there exists an optimal solution that opens a center in $v$. Then, it is easy to see that the best clustering we can get is the following: one cluster is $U_1 \cup U_j \cup \{v\}$, for any $j \in \{2, ..., n\}$, and the remaining clusters are exactly the sets $U_i$, $i \in [n] \setminus \{1,j\}$. Its cost is $n + n/\alpha > n + 3/\alpha$ for every $n > 3$.

Let's assume now that $v$ is not a center. It is easy to see that opening two centers in the same location can never lead to an improved solution (for sufficiently small $\varepsilon'$). Thus, the only other case we have to consider is when we have centers both in $U_{n-1}$ and $U_n$. Then, it is again easy to see that the best clustering we can get is the following: one cluster is $U_1 \cup U_j \cup \{v\}$, for any $j \in \{2, ..., n-2\}$, with a center in $U_1$ (since we have assumed that $v$ cannot be a center), and the remaining clusters are exactly the sets $U_i$, $i \in [n] \setminus \{1,j\}$. Its cost is $n(1 + 1/\alpha) + 1 / \alpha$ which, similar to the previous paragraph, is strictly larger than $n + 3/\alpha$ for every $n > 3$. Thus, the clustering $\C$ is indeed the optimal clustering and its cost is $OPT = n + 3 /\alpha$.

We will now show that the instance is $\gamma$-stable, for $\gamma = \frac{\alpha n}{\alpha n + 4} \cdot \alpha < \alpha$. We first observe that there is no way that any set $U_i$ will be ``split" into two clusters. Since $\gamma < \alpha$, it is also immediate that $v$ will always be clustered together with $U_1$, unless we open centers both in $v$ and in some point of $U_1$. Let's assume that this is indeed the case. Then it is easy to see that the cost of the clustering is at least $2n$, and the cost of clustering $\C$ is at most $\gamma \cdot OPT$. We will now show that $\gamma \cdot OPT < \alpha n$. We have
\begin{equation*}
\begin{split}
    \gamma \cdot OPT &= \frac{\alpha n}{\alpha n + 4} \cdot \alpha (n + 3/\alpha) = \frac{\alpha n}{\alpha n + 4} \cdot \alpha n + \frac{3 \alpha n}{\alpha n + 4} = \alpha n \left(\frac{\alpha n}{\alpha n + 4} + \frac{3}{\alpha n + 4} \right) \\
                     &= \alpha n \cdot \frac{\alpha n + 3}{\alpha n + 4} < \alpha n.
\end{split}
\end{equation*}
Since $\alpha \in (1.5,2)$ we immediately get that $\gamma \cdot OPT < \alpha n < 2n$, and, thus, the original optimal clustering remains strictly better in this case.

So from now on, we assume that $v$ and $U_1$ belong to the same cluster. If the center of such a cluster is $v$, then from the analysis above, the cost of any such clustering is at least $n/\alpha + n = n(1  + 1/\alpha)$. Since $\gamma \cdot OPT < \alpha n$, it is sufficient to require that $\alpha \in (1.5, 2)$ is picked such that $\alpha \leq 1 + 1/\alpha$. Let's assume that this is indeed satisfied. Then, it is clear that the original optimal clustering remains strictly better in this case as well. Thus, we can now assume that $v$ and $U_1$ belong to the same cluster and the center of this cluster is some point in $U_1$. Suppose we open a center both in $U_{n-1}$ and $U_n$, and so the sets $U_{n-1}$ and $U_n$ are separated. From previous analysis, the cost is at least $n(1 + 1/\alpha) + 1 / \alpha$, which is strictly worse than the case considered in the beginning of the paragraph. So, the only remaining case is when, wlog, $U_n$ is in the same cluster as $U_j$, for some $j \in \{1, ..., n-2\}$. The cost of such a clustering is at least $n(1 + 1/\alpha) + 1/\alpha$, and, so, again, this falls into the previous analysis.

So, we conclude that the instance is indeed $\gamma$-stable, for $\gamma = \frac{\alpha n}{\alpha n + 4} \cdot \alpha $, given that $\alpha \in (1.5, 2)$ and $\alpha \leq 1 + 1/\alpha$. Solving this last inequality gives $\frac{1 - \sqrt{5}}{2} \leq \alpha \leq \frac{1 + \sqrt{5}}{2}$. Since we want the largest possible value of $\gamma$, we set $\alpha = \frac{1 + \sqrt{5}}{2} = \phi$, the golden ratio (note that $\phi \approx 1.618 \in (1.5, 2)$). Thus, the above instance is $\gamma$-stable for $\gamma = \frac{\phi n}{\phi n + 4} \cdot \phi$, and its optimal value is $OPT = n + 3/\phi$.

\paragraph{Fractional solution.} We arbitrarily pick some point $c_i \in U_i$, for each $i \in [n]$ and set $z_{c_i} = \frac{n-2}{n-1}$. We also set $z_v = \frac{1}{n-1}$. We then set $x(u, c_i) = z_{c_i}$ for each $i \in [n]$ and $u \in U_i$, and also set $x(u, v) = z_v$ for all $u \in \X$. Finally, we set $x(v, c_1) = z_{c_1}$. Again, it is easy to see that this is a feasible solution. The cost for each point $u \in U_1$ is $\frac{1}{\phi(n-1)}$, the cost for each point $u \in U_i$ for any $i \in \{2, ..., n\}$ is $\frac{1}{n-1}$ and the cost for $v$ is $\frac{n-2}{\phi (n-1)}$. Thus, the optimal fractional value is at most
\begin{equation*}
    OPT_{LP} \leq \frac{n}{\phi(n-1)} + \frac{n(n-1)}{n-1} + \frac{n-2}{\phi (n-1)} = n + \frac{2}{\phi} < n + \frac{3}{\phi} = OPT.
\end{equation*}
Thus, for $n > 3$, the integrality gap of the LP is strictly larger than 1. We are ready to formally state our result.
\begin{theorem}
For every $\varepsilon > 0$, there exist $(\phi - \varepsilon)$-perturbation-resilient instances for which the standard LP relaxation for $k$-median with no Steiner points is not integral.
\end{theorem}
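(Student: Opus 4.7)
The plan is to invoke the construction and analysis that directly precedes the theorem statement; all the hard work has already been done, so the proof reduces to a quantitative choice of $n$ as a function of $\varepsilon$. Recall from the preceding analysis that, for any $n > 3$, the instance depicted in Figure~\ref{fig:no_steiner} with $\alpha = \phi$, $k = n - 1$, and $\delta = 2 / (\phi n)$ is $\gamma$-perturbation-resilient for
\begin{equation*}
    \gamma = \frac{\phi n}{\phi n + 4} \cdot \phi,
\end{equation*}
and the LP solution exhibited explicitly achieves $OPT_{LP} \leq n + 2/\phi < n + 3/\phi = OPT$, so the integrality gap is strictly larger than $1$.

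Fix $\varepsilon > 0$. The only remaining task is to choose $n = n(\varepsilon)$ large enough so that $\gamma \geq \phi - \varepsilon$. The inequality $\frac{\phi n}{\phi n + 4} \cdot \phi \geq \phi - \varepsilon$ is equivalent to $\frac{\phi n}{\phi n + 4} \geq 1 - \varepsilon/\phi$, and rearranging yields $n \geq \frac{4(\phi - \varepsilon)}{\varepsilon \phi}$. Thus picking any integer $n > \max\{3, 4/\varepsilon\}$ suffices to guarantee both that the instance is $(\phi - \varepsilon)$-perturbation-resilient and that the LP relaxation is not integral on this instance.

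Since no step beyond this threshold calculation is required (the feasibility of the LP solution, its cost, the value of $OPT$, and the verification of $\gamma$-perturbation resilience have all been carried out already in the case analysis preceding the theorem), the proof is essentially a one-line deduction. The only potential pitfall is to make sure the chosen $n$ also satisfies $n > 3$, which is needed for the strict inequality $OPT_{LP} < OPT$; this is handled by taking the maximum in the bound on $n$.
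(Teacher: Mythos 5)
Your proposal is correct and follows exactly the paper's own proof: both simply invoke the construction and case analysis preceding the theorem and then choose $n$ large enough (the paper takes $n \geq 4/\varepsilon$, which your slightly sharper threshold $n \geq 4(\phi-\varepsilon)/(\varepsilon\phi)$ also implies) so that $\frac{\phi n}{\phi n + 4}\cdot\phi \geq \phi - \varepsilon$. Your threshold algebra checks out, and your care about also requiring $n > 3$ for the strict gap $OPT_{LP} < OPT$ is appropriate.
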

\begin{proof}
By the previous analysis, we know that for every $n \geq 4$, there exist $\left(\frac{\phi n}{\phi n + 4 } \cdot \phi \right)$-per\-tur\-ba\-tion-resilient instances of $k$-median with $n^2 + 1$ points for which the LP has integrality gap strictly larger than 1. Thus, for any given $\varepsilon > 0$, it is easy to see that by setting $n \geq 4 / \varepsilon$, we have $\left(\frac{\phi n}{\phi n + 4 } \cdot \phi \right) \geq \phi - \varepsilon$.
\end{proof}

%%%%%%%%%%%%%%%%%%%%%%%%%%%%%%%%%%%%%%%%%%%%%%%%%%%%%%%%%%%%%%%%%%%%%%%%%%%%%%%%%%%%%%%%%%%%%%%%%%%%%%%%%%%%%%%%%%%%%%%%%%%%%%%%%%%%%%%%%%%%%%%%

\chapter{Stability and the Traveling Salesman problem} \label{chap:tsp}

In this chapter, we prove that the standard ``subtour-elimination" LP has integrality gap exactly 1 for $1.8$-stable instances of the (symmetric) Traveling Salesman problem (which we denote as TSP from now on). For completeness, we first define the problem.
\begin{definition}[symmetric TSP]
Let $G = (V, E, w)$ be a complete graph with $n$ vertices, where $w: V \times V \to \mathbb{R}_{\geq 0}$ is a metric. The goal is to compute an ordering (tour) of the vertices $\pi: [n] \to V$ that contains all vertices so as to minimize the total length of the tour $\sum_{i = 1}^n w(\pi_i, \pi_{i + 1})$ (where we set $\pi_{n + 1} = \pi_1$).
\end{definition}

TSP is one of the most famous problems and has been studied for decades. However, its exact approximability is still unknown. Christofides' classic algorithm from 1976~\cite{Christofides-TSP} is a $1.5$-approximation algorihtm. Improving this guarantee is a notorious open question in approximation algorithms. It is conjectured that the LP given in Figure~\ref{fig:tsp-lp} has an integrality gap of 4/3, but the current best upper bound on the integrality gap only matches Christofides' performance.

\begin{figure}[h]
\begin{align*}
    \min: & \quad \sum_{e \in E} w_e x_e \\
    \textrm{s.t.:} & \quad x(\delta(u)) = 2, &&\forall u\in V\\
                   & \quad x(\delta(S)) \geq 2, && \forall S \subset V, \;\;S \notin \{\emptyset, V\} \\
                   & \quad x_e \in [0,1], &&\forall e \in E,
\end{align*}
\caption{The subtour-elimination LP for TSP.}
\label{fig:tsp-lp}
\end{figure}

In the stability framework, the only work that has studied TSP so far that we are aware of is the work of Mihal{\'{a}}k et al.~\cite{DBLP:conf/sofsem/MihalakSSW11}, in which they prove that a simple greedy algorithm solves 1.8-stable instances. In this chapter, we will show that the integrality gap of the LP of Figure~\ref{fig:tsp-lp} is exactly 1 for $1.8$-stable instances. This directly implies a robust analog of the algorithm of~\cite{DBLP:conf/sofsem/MihalakSSW11}. Before proving our result, we introduce the notation used in the LP of Figure~\ref{fig:tsp-lp}. For a subset $S \subseteq V$ of vertices, we denote as $\delta(S)$ the set of edges with exactly one endpoint in $S$, and $x(E') = \sum_{e \in E'} x_e$, for any $E' \subset E$.

We now prove a few lemmas that we need. The following lemma was first proved in~\cite{DBLP:conf/sofsem/MihalakSSW11}.

\begin{lemma}[\cite{DBLP:conf/sofsem/MihalakSSW11}]\label{lemma:bounded_ratio}
Let $G = (V,E,w)$ be a $\gamma$-stable instance of TSP, and let $O = (e_1, ..., e_n)$ be the unique optimal tour. If there exists $i \in[n]$ such that $\frac{w(e_i)}{w(e_{i+1})} > q$ (where $e_{n+1} \equiv e_1$), then $\gamma < \frac{(q+1)^2}{q^2 + 1}$.
\end{lemma}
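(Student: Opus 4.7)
The plan is to prove the lemma by exhibiting a single alternative Hamiltonian tour $O'$ that witnesses the instability of $O$ whenever $\gamma \ge (q+1)^2/(q^2+1)$. Write $e_i = (u,v)$ and $e_{i+1} = (v,w)$, set $a = w(e_i)$, $b = w(e_{i+1})$, so by hypothesis $a > q b$. As in Definition~\ref{def:equiv-stab}, $\gamma$-stability of $O$ is equivalent to
\begin{equation*}
    \gamma \cdot w(O \setminus O') \;<\; w(O' \setminus O) \qquad \text{for every Hamiltonian tour } O' \ne O,
\end{equation*}
so it suffices to produce an $O'$ with $w(O' \setminus O) \le \tfrac{(q+1)^2}{q^2+1}\, w(O \setminus O')$, contradicting the inequality.

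The natural candidates are the two 2-opt swaps centered at the vertex $v$ shared by the heavy edge $e_i$ and the light edge $e_{i+1}$: let $O_a$ exchange $v_i = u$ with $v_{i+1} = v$, and let $O_b$ exchange $v_{i+1} = v$ with $v_{i+2} = w$. Direct inspection gives $O \setminus O_a = \{e_{i-1}, e_{i+1}\}$ and $O_a \setminus O = \{(v_{i-1}, v), (u, w)\}$, and by the triangle inequality applied along the old tour,
\begin{equation*}
    w(v_{i-1}, v) \le w(e_{i-1}) + a, \qquad w(u, w) \le a + b.
\end{equation*}
Similarly $O \setminus O_b = \{e_i, e_{i+2}\}$ and $w(v_i, v_{i+2}) \le a + b$, $w(v, v_{i+3}) \le b + w(e_{i+2})$. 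Writing $c = w(e_{i-1})$ and $d = w(e_{i+2})$ and substituting into the stability inequality for each of $O_a, O_b$ yields two linear inequalities
\begin{equation*}
    (\gamma - 1)(c + b) < 2a, \qquad (\gamma - 1)(a + d) < 2b.
\end{equation*}

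The core calculation is to combine these two inequalities in a way that eliminates the auxiliary weights $c$ and $d$ and leaves a relation in $a$ and $b$ alone. The target form is $(\gamma - 1)(a^2 + b^2) \le 2ab$, equivalently $\gamma - 1 \le 2/(r + 1/r)$ with $r = a/b$; since $r \mapsto 2r/(r^2+1)$ is strictly decreasing on $[1,\infty)$ and $r > q \ge 1$, the hypothesis then gives $\gamma - 1 < 2q/(q^2+1)$, i.e.\ $\gamma < (q+1)^2/(q^2+1)$, which is the claim.

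The main obstacle will be identifying the correct combination of the two 2-opt bounds. A straightforward additive combination only yields $\gamma < 1 + 2/q$, which is strictly weaker. I expect the sharp constant to emerge either from a weighted combination with weights proportional to $a$ and $b$ (so that the cross-terms involving $c$ and $d$ cancel against each other), or from replacing one of the 2-opts by a three-edge or-opt move that exploits the triangle $u, v, w$ more tightly; in either case the final inequality will reduce to the quadratic form $(\gamma-1)(a^2+b^2) < 2ab$, from which the bound on $\gamma$ follows by routine algebra together with the hypothesis $a/b > q$.
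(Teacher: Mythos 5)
There is a genuine gap, and it is exactly the step you flag as ``the main obstacle.'' Your move $O_b$ --- delete $e_i$ and $e_{i+2}$, keep $e_{i+1}$, add $(u,w)$ and $(v,v_{i+3})$ --- is the move the paper uses, and your inequality $(\gamma-1)(a+d) < 2b$ is the correct stability bound for it. But at the given index $i$ you know nothing about $c$ and $d$, and no nonnegative combination $\lambda\cdot\bigl[(\gamma-1)(c+b)<2a\bigr] + \mu\cdot\bigl[(\gamma-1)(a+d)<2b\bigr]$ can reach the sharp constant: since $c,d\ge 0$ appear only on the left, you must discard them, leaving $(\gamma-1)(\lambda b + \mu a) < 2(\lambda a + \mu b)$; minimizing $(\lambda a+\mu b)/(\lambda b+\mu a)$ over $\lambda,\mu\ge 0$ (with $a/b=r\ge 1$) is achieved at $\lambda=0$ and gives only $\gamma < 1+2/r < 1+2/q$ --- i.e.\ the second inequality alone is already the best your setup can produce, and your target quadratic $(\gamma-1)(a^2+b^2)\le 2ab$ is unreachable from it. (It is also not what the paper proves; the paper's final inequality has the form $\gamma < 1 + 2b'/(a'+d')$.)

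The missing idea is the \emph{choice of where} to apply the 2-opt, not how to combine two of them. Since the ratios $w(e_j)/w(e_{j+1})$ multiply to $1$ around the cycle, they cannot all exceed $q$ (for $q\ge 1$), so the existence of one index with ratio $>q$ forces the existence of three consecutive edges $e_t,e_{t+1},e_{t+2}$ with $w(e_t)>q\,w(e_{t+1})$ \emph{and} $w(e_{t+1})\le q\,w(e_{t+2})$. Applying your $O_b$ at that $t$ gives $\gamma < 1 + 2w(e_{t+1})/\bigl(w(e_t)+w(e_{t+2})\bigr)$ with denominator strictly greater than $(q+1/q)\,w(e_{t+1})$, hence $\gamma < 1+2q/(q^2+1) = (q+1)^2/(q^2+1)$. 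The second condition is precisely the lower bound on the fourth weight $d$ that your fixed-index argument lacks; neither of your proposed repairs (a weighted combination, or an or-opt move on the triangle $u,v,w$) supplies it.
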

\begin{proof}
Let's assume that there exist consecutive edges $e_i$, $e_{i+1}$, such that $w(e_i) > q \cdot w(e_{i+1})$. This means that there must exist 3 consecutive edges $e_t$, $e_{t+1}$, $e_{t+2}$ in the optimal tour such that $w(e_t) > q \cdot w(e_{t+1})$ and $w(e_{t+1}) \leq q \cdot w(e_{t+2})$. Let $e_t = (u_1,u_2)$, $e_{t+1} = (u_2,u_3)$ and $e_{t+2} = (u_3,u_4)$.

We consider the tour $C = (O \setminus \{e_t, e_{t+2}\}) \cup \{(u_1, u_3), (u_2, u_4)\}$. Since the instance is $\gamma$-stable, we must have $\frac{w(u_1,u_3) + w(u_2,u_4)}{w(e_t) + w(e_{t+2})} > \gamma$. We have
\begin{equation*}
\begin{split}
    \frac{w(u_1,u_3) + w(u_2,u_4)}{w(e_t) + w(e_{t+2})} &\leq \frac{w(e_t) + 2w(e_{t+1}) + w(e_{t+2})}{w(e_t) + w(e_{t+2})} = 1 + \frac{2w(e_{t+1})}{w(e_t) + w(e_{t+2})} \\
                                                        &< 1 + \frac{2w(e_{t+1})}{q \cdot w(e_{t+1}) + w(e_{t+1})/q} = 1 + \frac{2q}{q^2 + 1}.
\end{split}
\end{equation*}
Thus, we conclude that $\gamma < \frac{(q+1)^2}{q^2+1}$.
\end{proof}
The above lemma implies that for $\gamma = 1.8$, we have $\frac{w(e_i)}{w(e_{i+1})} \leq 2$ for every $i \in [n]$. We now prove a useful lemma that is a consequence of the previous one. Throughout the rest of the text, we use the following notation. The optimal tour is denoted by $O$ and is simply the order in which edges (or vertices) are visited in the tour, and for each vertex $u \in V$, the two edges of $O$ that are adjacent to $u$ are denoted as $e_{u,1}$ and $e_{u,2}$.

\begin{lemma}\label{lemma:bound_non_edges}
Let $G = (V,E,w)$ be a $1.8$-stable instance of TSP and let $u$ and $v$ be two non-adjacent (w.r.t.~the optimal tour) vertices. Then, we have $w(u,v) > \frac{1}{2} \cdot \left(w(e_{u,i}) + w(e_{v,j})\right)$, for every $i,j \in \{1,2\}$.
\end{lemma}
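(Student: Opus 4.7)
The plan is to apply the stability condition to a family of ``vertex relocation'' (or-opt) perturbations of the optimal tour $O$ that each use the candidate edge $(u,v)$, and then to combine the resulting inequalities with the bounded-ratio conclusion of Lemma~\ref{lemma:bounded_ratio}.

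Fix a non-adjacent pair $u,v$ and an arbitrary target $(i,j)\in\{1,2\}^2$. Write the tour-neighbors of $u$ as $u_0,u_1$ (with $e_{u,1}=(u_0,u)$ and $e_{u,2}=(u,u_1)$) and likewise $v_0,v_1$ for $v$. First I would consider the perturbation obtained by deleting $u$ from its position in $O$ (replacing $e_{u,1},e_{u,2}$ by the shortcut $(u_0,u_1)$) and re-inserting $u$ adjacent to $v$ on the side of $e_{v,j}$; this deletes $e_{v,j}$ and adds two edges, one of which is exactly $(u,v)$ and the other is of the form $(x,u)$, where $x$ is the appropriate tour-neighbor of $v$. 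The symmetric difference $O\triangle T'$ has three edges on each side, and $1.8$-stability gives
\[
w(u_0,u_1)+w(x,u)+w(u,v)\;>\;1.8\bigl(w(e_{u,1})+w(e_{u,2})+w(e_{v,j})\bigr).
\]
The triangle inequalities $w(u_0,u_1)\le w(e_{u,1})+w(e_{u,2})$ and $w(x,u)\le w(e_{v,j})+w(u,v)$ cancel the tour-edge weights on the left, yielding the clean bound
\[
w(u,v)\;>\;0.4\bigl(w(e_{u,1})+w(e_{u,2})+w(e_{v,j})\bigr).
\]
Now I would invoke Lemma~\ref{lemma:bounded_ratio} to conclude $w(e_{u,3-i})\ge \tfrac{1}{2}w(e_{u,i})$, hence $w(e_{u,1})+w(e_{u,2})\ge \tfrac{3}{2}w(e_{u,i})$, which upgrades the bound to $w(u,v)>0.6\,w(e_{u,i})+0.4\,w(e_{v,j})$.

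Running the mirror construction---removing $v$ from $O$ and re-inserting it adjacent to $u$ on the side of $e_{u,i}$---produces by the same chain of steps the complementary inequality $w(u,v)>0.4\,w(e_{u,i})+0.6\,w(e_{v,j})$. Adding the two bounds removes the asymmetry and gives $2\,w(u,v)>w(e_{u,i})+w(e_{v,j})$, which is exactly the claim. Because the argument depends only on the existence of the two relocations for the chosen $(i,j)$, it treats all four index pairs uniformly.

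The main obstacle is the degenerate configuration in which $u$ and $v$ share a common tour-neighbor (i.e., $v_0=u_1$ or $v_1=u_0$, so they are at tour-distance exactly $2$). In that case the ``added'' edge $(x,u)$ of one of the relocations coincides with a tour-edge, so the perturbation collapses to a $2$-opt with only two edges on each side of the symmetric difference, and the stability condition then yields only the weaker bound $w(u,v)>0.4\bigl(w(e_{u,1})+w(e_{v,1})\bigr)$ (or the analogous bound on the other side). I expect to close each such degenerate $(i,j)$ by invoking a complementary perturbation---for instance the transposition that exchanges the positions of $u$ and $v$ in $O$, or a $3$-opt that relocates the shared neighbor---and then combining the resulting inequality with Lemma~\ref{lemma:bounded_ratio} to trade factors of $1/2$ between consecutive tour-edges until the coefficients on $w(e_{u,i})$ and $w(e_{v,j})$ both reach $\tfrac{1}{2}$. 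The delicate bookkeeping is precisely this case analysis, i.e., verifying that for every degenerate $(i,j)$ the collection of available moves admits a convex combination whose implied lower bound on $w(u,v)$ matches the required $\tfrac{1}{2}\bigl(w(e_{u,i})+w(e_{v,j})\bigr)$.
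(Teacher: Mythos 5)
Your main construction is exactly the paper's: the $3$-exchange that deletes one vertex from the tour and re-inserts it next to the other, the triangle-inequality bounds on the two auxiliary chords, the resulting $w(u,v)>0.4\bigl(w(e_{u,1})+w(e_{u,2})+w(e_{v,j})\bigr)$, the upgrade via Lemma~\ref{lemma:bounded_ratio} to $0.6\,w(e_{u,i})+0.4\,w(e_{v,j})$, and the symmetrization by adding the mirror inequality. That part is correct and complete.

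The degenerate case --- $u$ and $v$ sharing a tour-neighbor --- is a genuine gap, and it is the only place where the proof requires a new idea rather than bookkeeping. First, the intermediate bound you assert for the collapsed move is not justified: when the relocation degenerates to the $2$-opt that removes $e_{u,1}=(u_0,u)$ and $e_{v,j}=(u_1,v)$ and adds $(u_0,u_1)$ and $(u,v)$, the only way to bound the chord $(u_0,u_1)$ by tour edges is $w(u_0,u_1)\le w(e_{u,1})+w(e_{u,2})$, which leaves $w(u,v)>0.8\,w(e_{u,1})+1.8\,w(e_{v,j})-w(e_{u,2})$; Lemma~\ref{lemma:bounded_ratio} can only trade $w(e_{u,2})$ for $2w(e_{v,j})$ or $2w(e_{u,1})$, and either substitution leaves a negative coefficient, so you do not get $w(u,v)>0.4\bigl(w(e_{u,1})+w(e_{v,1})\bigr)$ from this move. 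Second, the paper's actual resolution uses a different $2$-opt: with $e_{u,1}$ and $e_{v,1}$ meeting at the shared vertex, it removes $e_{u,1}$ together with $e_{v,2}$ (the edge on the \emph{far} side of $v$) and adds $(u,v)$ and the chord joining the shared vertex to $v$'s far neighbor; the triangle inequality then gives $w(u,v)>1.8\,w(e_{u,1})+0.8\,w(e_{v,2})-w(e_{v,1})$, and the leftover $-w(e_{v,1})$ is absorbed only by additionally assuming without loss of generality that $w(e_{v,1})\le w(e_{u,1})$. That WLOG weight comparison is the missing ingredient; the remaining index pairs for such a configuration reduce to your non-degenerate argument, since $e_{u,2}$ is non-adjacent to both edges at $v$. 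Without carrying out this case your proof covers only pairs $u,v$ at tour-distance at least $3$.
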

\begin{proof}
To simplify notation, we set $e = (u,v)$. W.l.o.g.~we assume that the edges $e_{u,2}$ and $e_{v,2}$ are non-adjacent, since otherwise $e_{u,1}-e_{v,1}$ and $e_{u,2}-e_{v,2}$ are adjacent and thus the total number of vertices of the graph is exactly 6. We now distinguish between two cases:\\

\noindent \textbf{Case 1:} the edges $e_{u,1}$ and $e_{v,1}$ are non-adjacent (see Figure \ref{fig:case1}). We consider the tour $C = O \setminus \{e_{u,1}, e_{v,1}, e_{v,2}\} \cup \{e, x, y\}$. We have
\begin{equation*}
\begin{split}
    1.8 &< \frac{w(e) + w(x) + w(y)}{w(e_{u,1}) + w(e_{v,1}) + w(e_{v,2})} \leq \frac{w(e_{u,1}) + 2w(e) + w(e_{v,1}) + w(e_{v,2})}{w(e_{u,1}) + w(e_{v,1}) + w(e_{v,2})}\\
        &= 1 + \frac{2w(e)}{w(e_{u,1}) + w(e_{v,1}) + w(e_{v,2})}.
\end{split}
\end{equation*}
This gives $w(e) > \frac{4}{5} \cdot \frac{w(e_{u,1}) + w(e_{v,1}) + w(e_{v,2})}{2}$. Using Lemma~\ref{lemma:bounded_ratio}, we get that $w(e_{v,2}) \geq w(e_{v,1})/2$. Thus, we conclude that $w(e) > \frac{4}{5} \cdot \left(\frac{w(e_{u,1})}{2} + \frac{3w(e_{v,1})}{4} \right)$. Note that using the fact that $w(e_{v,1}) \geq w(e_{v,2})/2$, we also get that $w(e) > \frac{4}{5} \cdot \left(\frac{w(e_{u,1})}{2} + \frac{3w(e_{v,2})}{4} \right)$.

By symmetry of the above argument, we also get that $w(e) > \frac{4}{5} \cdot \left(\frac{3w(e_{u,1})}{4} + \frac{w(e_{v,1})}{2} \right)$ and $w(e) > \frac{4}{5} \cdot \left(\frac{3w(e_{u,1})}{4} + \frac{w(e_{v,2})}{2} \right)$  Adding the corresponding inequalities, we conclude that $w(e) > \frac{1}{2} \cdot \left( w(e_{u,1}) + w(e_{v,1}) \right)$ and, similarly, $w(e) > \frac{1}{2} \cdot \left( w(e_{u,1}) + w(e_{v,2}) \right)$. Using symmetry again, we conclude that for every $i,j \in\{1,2\}$, $w(e) > \frac{1}{2} \cdot \left( w(e_{u,i}) + w(e_{v,j}) \right)$.
\begin{figure}[ht!]
\centering
\scalebox{1.4}{\input{./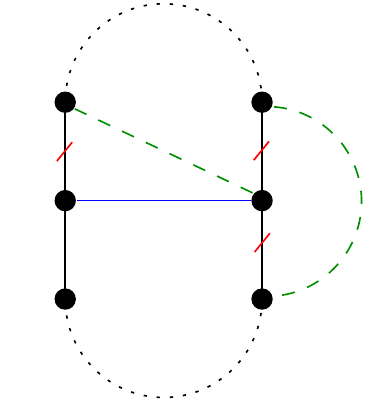_t}}
\caption{Case 1: edges $e_{u,1}$ and $e_{v,1}$ are not adjacent.}
\label{fig:case1}
\end{figure}

\noindent \textbf{Case 2:} $e_{u,1}$ and $e_{v,1}$ are adjacent (see Figure \ref{fig:case2}). W.l.o.g.~we assume that $w(e_{v,1}) \leq w(e_{u,1})$. We consider the tour $C = (O \setminus \{e_{u,1}, e_{v,2}\}) \cup \{e, x\}$. We have
\begin{equation*}
    1.8 < \frac{w(e) + w(x)}{w(e_{u,1}) + w(e_{v,2})} \leq \frac{w(e) + w(e_{v,1}) + w(e_{v,2})}{w(e_{u,1}) + w(e_{v,2})},
\end{equation*}
which implies that $w(e) > 1.8 w(e_{u,1}) + 1.8 w(e_{v,2}) - w(e_{v,1}) - w(e_{v,2})$. This further implies that
\begin{equation*}
    w(e) > \frac{4}{5} \left( w(e_{u,1}) + w(e_{v,2}) \right) \geq  \frac{4}{5} \left( w(e_{u,1}) + \frac{w(e_{v,1})}{2} \right) \geq \frac{4}{5} \left( \frac{3w(e_{u,1})}{4} + \frac{3w(e_{v,1})}{4} \right).
\end{equation*}
Thus, we conclude that $w(e) > \frac{4}{5} \left( \frac{w(e_{u,1})}{2} + \frac{3w(e_{v,1})}{4} \right)$. The above inequalities also immediately imply that $w(e) > \frac{4}{5} \left( \frac{3w(e_{u,1})}{4} + \frac{w(e_{v,1})}{2} \right)$. Adding the two inequalities, we again get that $w(e) > \frac{1}{2} \cdot \left( w(e_{u,1}) + w(e_{v,1}) \right)$. It is also immediate to get that $w(e) > \frac{1}{2} \cdot \left( w(e_{u,1}) + w(e_{v,2}) \right)$.

The remaining inequalities (i.e. that $w(e) > \frac{1}{2} \cdot \left( w(e_{u,2}) + w(e_{v,j}) \right)$ for $j \in \{1,2\}$) fall into Case 1, and so an identical proof as in Case 1 gives the desired inequalities.
\begin{figure}[ht!]
\centering
\scalebox{1.4}{\input{./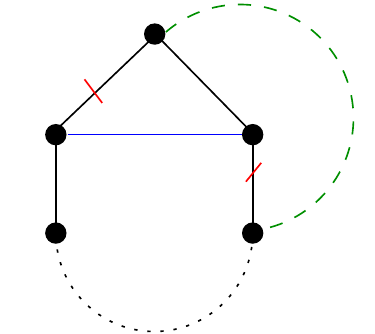_t}}
\caption{Case 2: edges $e_{u,1}$ and $e_{v,1}$ are adjacent.}
\label{fig:case2}
\end{figure}
\end{proof}

We are now ready to prove the main theorem.
\begin{theorem}
The LP of Figure~\ref{fig:tsp-lp} has integrality gap 1 for 1.8-stable instances of TSP.
\end{theorem}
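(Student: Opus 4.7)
The plan is to show $OPT_{LP}\ge w(O)$ for every feasible LP solution $x$, with strict inequality whenever some non-tour edge has $x_f>0$. Since $\mathbf{1}_{E(O)}$ is itself LP-feasible (a Hamiltonian cycle meets the degree and subtour constraints) and has cost $w(O)$, this pins $OPT_{LP}=w(O)$ and forces the unique optimum to be integral, giving integrality gap $1$. The whole argument hinges on the algebraic identity
\begin{equation*}
OPT_{LP}-w(O)\;=\;\sum_{f\notin E(O)} w_f\,x_f\;-\;\sum_{e\in E(O)} w_e\,(1-x_e),
\end{equation*}
so it suffices to show the right-hand side is nonnegative, strictly so whenever $x$ is fractional.

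To prove this inequality I would invoke Lemma~\ref{lemma:bound_non_edges} through a vertex-local allocation. Fix a vertex $u$ with tour-edges $e_{u,1},e_{u,2}$. By the degree constraint,
$\sum_{f\ni u,\,f\notin E(O)} x_f=(1-x_{e_{u,1}})+(1-x_{e_{u,2}})=:\alpha_u$,
and each target $1-x_{e_{u,k}}$ lies in $[0,\alpha_u]$ because $x_e\in[0,1]$. Hence I can pick coefficients $\pi^u_{f,k}\in[0,1]$ with $\pi^u_{f,1}+\pi^u_{f,2}=1$ for every non-tour $f\ni u$, satisfying
\begin{equation*}
\sum_{f\ni u,\,f\notin E(O)} \pi^u_{f,k}\,x_f \;=\; 1-x_{e_{u,k}}\qquad(k=1,2).
\end{equation*}
Doing this independently at every endpoint, for each non-tour edge $f=(u,v)$ I apply Lemma~\ref{lemma:bound_non_edges} to the convex combination indexed by the product distribution $\pi^u_{f,i}\pi^v_{f,j}$ over $\{1,2\}\times\{1,2\}$ and obtain
\begin{equation*}
w_f \;>\; \tfrac{1}{2}\Bigl(\textstyle\sum_{i}\pi^u_{f,i}\,w(e_{u,i})\;+\;\sum_{j}\pi^v_{f,j}\,w(e_{v,j})\Bigr).
\end{equation*}

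Multiplying by $x_f$, summing over non-tour $f$ (each $f$ contributing one term per endpoint) and regrouping by vertex,
\begin{equation*}
\sum_{f\notin E(O)} w_f\,x_f \;\ge\; \tfrac{1}{2}\sum_{u}\sum_{k\in\{1,2\}} w(e_{u,k})\sum_{f\ni u}\pi^u_{f,k}\,x_f
\;=\; \tfrac{1}{2}\sum_{u}\sum_{k} w(e_{u,k})\,(1-x_{e_{u,k}}) \;=\; \sum_{e\in E(O)} w_e\,(1-x_e),
\end{equation*}
where the middle equality is the defining property of $\pi$ and the last equality uses that every tour edge is counted once from each of its two endpoints. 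The inequality is strict the moment any non-tour $x_f$ is positive, so a fractional optimum would yield $OPT_{LP}>w(O)$, contradicting LP-relaxation. Hence all optima satisfy $x_f=0$ for every $f\notin E(O)$, and combined with the degree constraints and $x_e\le 1$ this forces $x=\mathbf{1}_{E(O)}$. The only nontrivial ingredient is the local allocation, whose feasibility is a one-vertex consequence of $x_e\in[0,1]$; the real muscle is Lemma~\ref{lemma:bound_non_edges}, so I do not anticipate any serious obstacle beyond writing the chain of inequalities cleanly.
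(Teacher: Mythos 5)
Your proposal is correct and is essentially the paper's own argument: both proofs hinge on Lemma~\ref{lemma:bound_non_edges} and on a vertex-local reallocation of the fractional mass $\sum_{f\ni u,\,f\notin O}x_f = (1-x_{e_{u,1}})+(1-x_{e_{u,2}})$ onto the two incident tour edges (your $\pi^u_{f,k}x_f$ are exactly the paper's $y_{e,u,k}$), yielding $\sum_{f\notin O}w_fx_f > \sum_{e\in O}w_e(1-x_e)$ whenever any non-tour edge carries positive weight. The only cosmetic difference is that you argue the inequality directly and note the consequent uniqueness of the integral optimum, whereas the paper phrases it as a contradiction with $OPT_{LP}<OPT$.
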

\begin{proof}
Let's assume that the integrality gap is strictly larger than 1, and let $G = (V,E,w)$ be a 1.8-stable instance such that $OPT_{LP} < OPT$. Let $x$ be an optimal fractional solution, and let $O = (u_1, u_2, ..., u_n, u_1)$ be the unique optimal tour, where $u_i \in V$. Let $\zeta(u) = \{e \in E \setminus O: e \textrm{ adjacent to }u \textrm{ and }x_e > 0\}$ (see Figure \ref{fig:frac}).
\begin{figure}[ht!]
\centering
\scalebox{1.4}{\input{./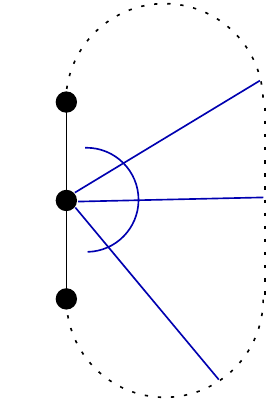_t}}
\caption{The fractional support of the LP solution.}
\label{fig:frac}
\end{figure}

We have $x(\zeta(u)) = \sum_{e \in \zeta(u)} x_e = 2 - x_{e_{u,1}} - x_{e_{u,2}}$. Since the fractional optimal is strictly better than the integral optimal, we have $\sum_{e \in O} w_e x_e + \sum_{e \notin O} w_e x_e < \sum_{e \in O} w_e$ which implies that $\sum_{e \notin O} w_e x_e < \sum_{e \in O} w_e (1 - x_e)$.

Let $e = (u,v) \notin O$ such that $x_e > 0$. Its contribution to the sum is $w_e x_e$. Lemma \ref{lemma:bound_non_edges} implies that $w_e x_e > y_{e,u,1} \cdot \frac{w(e_{u,1})}{2} + y_{e,u,2} \cdot \frac{w(e_{u,2})}{2} + y_{e,v,1} \cdot \frac{w(e_{v,1})}{2} + y_{e,v,2} \cdot \frac{w(e_{v,2})}{2}$, for any non-negative numbers $y_{e,u,1}, y_{e,u,2}, y_{e,v,1}, y_{e,v,2}$ such that $y_{e,u,1} + y_{e,u,2} = y_{e,v,1} + y_{e,v,2} = x_e$. Thus, we can write
\begin{equation*}
    \sum_{e \notin O} w_e x_e > \sum_{u \in V} \left(\frac{w(e_{u,1})}{2}  \sum_{e \in \zeta(u)} y_{e,u,1} + \frac{w(e_{u,2})}{2}  \sum_{e \in \zeta(u)} y_{e,u,2} \right).
\end{equation*}
Since $y_{e,u,1} + y_{e,u,2} = x_e$, we get that
\begin{equation*}
    \sum_{e \in \zeta(u)}y_{e,u,1} + \sum_{e \in \zeta(u)}y_{e,u,2} = \sum_{e \in \zeta(u)} x_e = (1 - x_{e_{u,1}}) + (1 - x_{e_{u,2}}).
\end{equation*}
This implies that there exists a choice of values $y_{e,u,1}, y_{e,u,2}$ such that $\sum_{e \in \zeta(u)}y_{e,u,1} = 1 - x_{e_{u,1}}$ and $\sum_{e \in \zeta(u)}y_{e,u,2} = 1 - x_{e_{u,2}}$. Plugging this to the inequality above, we get that
\begin{equation*}
    \sum_{e \notin O} w_e x_e > \sum_{u \in V} \left(\frac{w(e_{u,1})}{2}  (1 - x_{e_{u,1}}) + \frac{w(e_{u,2})}{2}  (1 - x_{e_{u,2}}) \right).
\end{equation*}
Observe that the edges appearing in the above sum are exactly the edges of the optimal tour. Moreover, since each edge has 2 endpoints, each edge of the optimal tour appears twice in the above sum. Thus, we conclude that
\begin{equation*}
    \sum_{e \notin O} w_e x_e > \sum_{e \in O} w_e  (1 - x_e),
\end{equation*}
which is a contradiction. Thus, the integrality gap of the LP is exactly 1.
\end{proof}

The above theorem immediately suggests a robust algorithm for $1.8$-stable instances of TSP: we run the greedy algorithm of Mihal{\'{a}}k et al.~\cite{DBLP:conf/sofsem/MihalakSSW11}, and then, by solving the LP, we verify whether the solution that we got from the greedy algorithm is optimal or not.

We also make the following interesting observation. If one looks closely at the proof of the above theorem, the exponentially many ``subtour elimination" constraints are not really used in the above proof. This means that even if we drop them, the LP would still have integrality gap exactly 1 on 1.8-stable instances. In other words, the much simpler ``cycle-cover" LP is sufficient to obtain a robust algorithm for 1.8-stable instances of TSP.

\chapter{Open problems from Part II}\label{chapter:open-problems-stability}

In this concluding chapter of Part II, we will state a few open problems related to Bilu-Linial stability that we believe are of interest.
\begin{enumerate}
    \item Since the Edge Multiway Cut problem is one of the very well studied problems, it would be nice to pinpoint the performance of the CKR relaxation on $\gamma$-stable instances. In Chapter~\ref{chap:multiway-cut}, we have shown that, as $k$ grows, the CKR LP is integral for $2$-stable instances and non-integral for $(4/3 - \varepsilon)$-stable instances, so it would be interesting to close this gap.
    \item In Chapter~\ref{chap:independent-set}, we presented several combinatorial and LP-based algorithms for stable instances of Independent Set. However, as the current state-of-the-art algorithms for Independent Set on bounded-degree graphs use SDPs (and corresponding hierarchies), it is only natural to ask what is the performance of SDPs on stable instances. In particular, does the Lov{\'{a}}sz Theta function SDP~\cite{Lovasz:2006:SCG:2263335.2269451} solve $o(\Delta)$-stable instances of Independent Set on graphs of maximum degree $\Delta$?
    \item It would be interesting to further explore the power of convex relaxations for perturbation-resilient instances of clustering problems such as $k$-median and $k$-means. Our preliminary results in Chapter~\ref{chap:clustering} about symmetric $k$-center (which were obtained independently and also extended to the asymmetric case of $k$-center in~\cite{Chekuri-Approx18}) reinforce our belief that convex relaxations, in most cases, perform at least as good as combinatorial algorithms. It would be interesting to explore the power of the standard LP relaxation for $k$-median, and either obtain lower bound constructions for $(2 - \varepsilon)$-perturbation-resilient instances of $k$-median (thus excluding the possibility of improved algorithms by using the LP), or obtain upper bounds for $\gamma$-perturbation-resilient instances, for $\gamma < 2$ (the latter would be an improvement over the combinatorial algorithm that solves $2$-perturbation-resilient instances of $k$-median~\cite{DBLP:conf/stoc/AngelidakisMM17}).
    \item As already implied, an interesting fact about stable instances is that, in most cases, convex relaxations seem to perform at least as good as the best combinatorial algorithms. A very useful property of algorithms based on the integrality of convex relaxations on stable instances is that the algorithms are robust (i.e.~they never err). It would be interesting to find a problem for which there is a gap between the performance of non-robust and robust algorithms. More precisely, it would be interesting to demonstrate a problem for which there is a lower bound of $\gamma$ for robust algorithms that solve $\gamma$-stable instances and for which there is a non-robust algorithm that solves $\gamma'$-stable instances, for $\gamma' < \gamma$.
    \item Since the notion of stability of Bilu and Linial is quite general and versatile, many more problems could, potentially, be studied in this framework, such as Sparsest Cut, Balanced Cut, Minimum $k$-Cut etc.

\end{enumerate}

%%%%%%%%%%%%%%%%%%%%%%%%%%%%%%%%%%%%%%%%%%%%

\begin{comment}
\input{./chapter1/chapter1.tex}

\appendix
\addtocontents{toc}{\vspace*{0.5cm}}
\appendixpage
\addappheadtotoc
\input{./appendix/appendixA.tex}
\input{./appendix/appendixB.tex}
\input{./appendix/appendixC.tex}

\end{comment}

\clearpage
\phantomsection
\addtocontents{toc}{\vspace*{0.5cm}}
\addcontentsline{toc}{chapter}{Bibliography}
\bibliography{references}

\begin{thebibliography}{100}

\bibitem{DBLP:conf/icalp/AbrahamDFGW11}
Ittai Abraham, Daniel Delling, Amos Fiat, Andrew~V. Goldberg, and Renato
  Fonseca~F. Werneck.
\newblock {VC-Dimension and Shortest Path Algorithms}.
\newblock In {\em Proceedings of the 38th International Colloquium on Automata,
  Languages, and Programming (ICALP)}, pages 690--699, 2011.

\bibitem{DBLP:conf/gis/AbrahamDFGW12}
Ittai Abraham, Daniel Delling, Amos Fiat, Andrew~V. Goldberg, and Renato
  Fonseca~F. Werneck.
\newblock {HLDB:} location-based services in databases.
\newblock In {\em Proceedings of the 20th International Conference on Advances
  in Geographic Information Systems (SIGSPATIAL - formerly known as GIS)},
  pages 339--348, 2012.

\bibitem{DBLP:journals/jea/AbrahamDGW13}
Ittai Abraham, Daniel Delling, Andrew~V. Goldberg, and Renato~F. Werneck.
\newblock Alternative routes in road networks.
\newblock {\em {ACM} Journal of Experimental Algorithmics}, 18, 2013.

\bibitem{DBLP:conf/wea/AbrahamDGW11}
Ittai Abraham, Daniel Delling, Andrew~V. Goldberg, and Renato Fonseca~F.
  Werneck.
\newblock {A Hub-Based Labeling Algorithm for Shortest Paths in Road Networks}.
\newblock In {\em Proceedings of the 10th International Symposium on
  Experimental Algorithms (SEA)}, pages 230--241, 2011.

\bibitem{DBLP:conf/esa/AbrahamDGW12}
Ittai Abraham, Daniel Delling, Andrew~V. Goldberg, and Renato Fonseca~F.
  Werneck.
\newblock {Hierarchical Hub Labelings for Shortest Paths}.
\newblock In {\em Proceedings of the 20th Annual European Symposium on
  Algorithms (ESA)}, pages 24--35, 2012.

\bibitem{DBLP:conf/soda/AbrahamFGW10}
Ittai Abraham, Amos Fiat, Andrew~V. Goldberg, and Renato Fonseca~F. Werneck.
\newblock {Highway Dimension, Shortest Paths, and Provably Efficient
  Algorithms}.
\newblock In {\em Proceedings of the 21st Annual {ACM-SIAM} Symposium on
  Discrete Algorithms (SODA)}, pages 782--793, 2010.

\bibitem{DBLP:journals/mp/AlonK98}
Noga Alon and Nabil Kahale.
\newblock Approximating the independence number via the $\theta$-function.
\newblock {\em Math. Program.}, 80:253--264, 1998.

\bibitem{DBLP:journals/rsa/AlonKS98}
Noga Alon, Michael Krivelevich, and Benny Sudakov.
\newblock Finding a large hidden clique in a random graph.
\newblock {\em Random Struct. Algorithms}, 13(3-4):457--466, 1998.

\bibitem{DBLP:journals/talg/AlonMS06}
Noga Alon, Dana Moshkovitz, and Shmuel Safra.
\newblock Algorithmic construction of sets for \emph{k}-restrictions.
\newblock {\em {ACM} Trans. Algorithms}, 2(2):153--177, 2006.

\bibitem{AABCD18}
Haris Angelidakis, Pranjal Awasthi, Avrim Blum, Vaggos Chatziafratis, and Chen
  Dan.
\newblock Bilu-{L}inial stability and the {I}ndependent {S}et problem.
  {P}reprint.
\newblock 2018.

\bibitem{DBLP:conf/stoc/AngelidakisMM17}
Haris Angelidakis, Konstantin Makarychev, and Yury Makarychev.
\newblock Algorithms for stable and perturbation-resilient problems.
\newblock In {\em Proceedings of the 49th Annual {ACM} {SIGACT} Symposium on
  Theory of Computing {(STOC)}}, pages 438--451, 2017.

\bibitem{AMMW18}
Haris Angelidakis, Konstantin Makarychev, Yury Makarychev, and Colin White.
\newblock Work in progress.
\newblock 2018.

\bibitem{DBLP:conf/ipco/AngelidakisMM17}
Haris Angelidakis, Yury Makarychev, and Pasin Manurangsi.
\newblock An improved integrality gap for the
  {C}{\u{a}}linescu-{K}arloff-{R}abani relaxation for {M}ultiway {C}ut.
\newblock In {\em Proceedings of the 19th International Conference on Integer
  Programming and Combinatorial Optimization ({IPCO})}, pages 39--50, 2017.

\bibitem{DBLP:conf/soda/AngelidakisMO17}
Haris Angelidakis, Yury Makarychev, and Vsevolod Oparin.
\newblock Algorithmic and hardness results for the hub labeling problem.
\newblock In {\em Proceedings of the 28th Annual {ACM-SIAM} Symposium on
  Discrete Algorithms ({SODA})}, pages 1442--1461, 2017.

\bibitem{DBLP:conf/esa/ArcherRS03}
Aaron Archer, Ranjithkumar Rajagopalan, and David~B. Shmoys.
\newblock Lagrangian relaxation for the k-median problem: New insights and
  continuity properties.
\newblock In {\em Proceedings of the 11th Annual European Symposium on
  Algorithms {(ESA)}}, pages 31--42, 2003.

\bibitem{DBLP:journals/siamcomp/AryaGKMMP04}
Vijay Arya, Naveen Garg, Rohit Khandekar, Adam Meyerson, Kamesh Munagala, and
  Vinayaka Pandit.
\newblock Local search heuristics for k-median and facility location problems.
\newblock {\em {SIAM} J. Comput.}, 33(3):544--562, 2004.

\bibitem{DBLP:journals/toc/AustrinKS11}
Per Austrin, Subhash Khot, and Muli Safra.
\newblock Inapproximability of vertex cover and independent set in bounded
  degree graphs.
\newblock {\em Theory of Computing}, 7(1):27--43, 2011.

\bibitem{DBLP:journals/ipl/AwasthiBS12}
Pranjal Awasthi, Avrim Blum, and Or~Sheffet.
\newblock Center-based clustering under perturbation stability.
\newblock {\em Inf. Process. Lett.}, 112(1-2):49--54, 2012.

\bibitem{DBLP:conf/icalp/BabenkoGGN13}
Maxim~A. Babenko, Andrew~V. Goldberg, Anupam Gupta, and Viswanath Nagarajan.
\newblock {Algorithms for Hub Label Optimization}.
\newblock In {\em Proceedings of the 40th International Colloquium on Automata,
  Languages, and Programming (ICALP)}, pages 69--80, 2013.

\bibitem{DBLP:conf/mfcs/BabenkoGKSW15}
Maxim~A. Babenko, Andrew~V. Goldberg, Haim Kaplan, Ruslan Savchenko, and
  Mathias Weller.
\newblock {On the Complexity of Hub Labeling}.
\newblock In {\em Proceedings of the 40th International Symposium on
  Mathematical Foundations of Computer Science (MFCS)}, pages 62--74, 2015.

\bibitem{DBLP:journals/jacm/Baker94}
Brenda~S. Baker.
\newblock Approximation algorithms for {NP}-complete problems on planar graphs.
\newblock {\em J. {ACM}}, 41(1):153--180, 1994.

\bibitem{DBLP:journals/jacm/BalcanBG13}
Maria{-}Florina Balcan, Avrim Blum, and Anupam Gupta.
\newblock Clustering under approximation stability.
\newblock {\em J. {ACM}}, 60(2):8:1--8:34, 2013.

\bibitem{DBLP:conf/icalp/BalcanHW16}
Maria{-}Florina Balcan, Nika Haghtalab, and Colin White.
\newblock $k$-center {C}lustering under {P}erturbation {R}esilience.
\newblock In {\em Proceedings of the 43rd International Colloquium on Automata,
  Languages, and Programming {(ICALP)}}, pages 68:1--68:14, 2016.

\bibitem{DBLP:journals/siamcomp/BalcanL16}
Maria{-}Florina Balcan and Yingyu Liang.
\newblock Clustering under perturbation resilience.
\newblock {\em {SIAM} J. Comput.}, 45(1):102--155, 2016.

\bibitem{DBLP:conf/soda/Bansal15}
Nikhil Bansal.
\newblock Approximating independent sets in sparse graphs.
\newblock In {\em Proceedings of the 26th Annual {ACM-SIAM} Symposium on
  Discrete Algorithms {(SODA)}}, pages 1--8, 2015.

\bibitem{DBLP:conf/stoc/BansalGG15}
Nikhil Bansal, Anupam Gupta, and Guru Guruganesh.
\newblock On the {L}ov{\'{a}}sz {T}heta function for {I}ndependent {S}ets in
  sparse graphs.
\newblock In {\em Proceedings of the 47th Annual {ACM} on Symposium on Theory
  of Computing {(STOC)}}, pages 193--200, 2015.

\bibitem{bansal2017lp}
Nikhil Bansal, Daniel Reichman, and Seeun~William Umboh.
\newblock {LP}-based robust algorithms for noisy minor-free and bounded
  treewidth graphs.
\newblock In {\em Proceedings of the 28th Annual ACM-SIAM Symposium on Discrete
  Algorithms {(SODA)}}, pages 1964--1979, 2017.

\bibitem{DBLP:series/lncs/BastDGMPSWW16}
Hannah Bast, Daniel Delling, Andrew~V. Goldberg, Matthias
  M{\"{u}}ller{-}Hannemann, Thomas Pajor, Peter Sanders, Dorothea Wagner, and
  Renato~F. Werneck.
\newblock Route planning in transportation networks.
\newblock In {\em Algorithm Engineering - Selected Results and Surveys}, pages
  19--80. 2016.

\bibitem{Bast06}
Holger Bast, Stefan Funke, and Domagoj Matijevic.
\newblock {TRANSIT - ultrafast shortest-path queries with linear-time
  preprocessing}.
\newblock In {\em 9th DIMACS Implementation Challenge}, 2006.

\bibitem{DBLP:journals/siamcomp/Ben-AsherFN99}
Yosi Ben{-}Asher, Eitan Farchi, and Ilan Newman.
\newblock Optimal search in trees.
\newblock {\em {SIAM} J. Comput.}, 28(6):2090--2102, 1999.

\bibitem{Berend_improvedbounds}
Daniel Berend and Tamir Tassa.
\newblock Improved bounds on bell numbers and on moments of sums of random
  variables.
\newblock {\em Probability and Math. Statistics}, 30:185--205, 2010.

\bibitem{DBLP:journals/disopt/BienstockO04}
Daniel Bienstock and Nuri {\"{O}}zbay.
\newblock Tree-width and the {S}herali-{A}dams operator.
\newblock {\em Discrete Optimization}, 1(1):13--21, 2004.

\bibitem{Bilu}
Yonatan Bilu.
\newblock On spectral properties of graphs and their application to clustering.
\newblock {\em PhD Thesis}, pages 77--78, 2004.

\bibitem{DBLP:conf/stacs/BiluDLS13}
Yonatan Bilu, Amit Daniely, Nati Linial, and Michael~E. Saks.
\newblock On the practically interesting instances of {MAXCUT}.
\newblock In {\em Proceedings of the 30th International Symposium on
  Theoretical Aspects of Computer Science ({STACS})}, pages 526--537, 2013.

\bibitem{DBLP:journals/cpc/BiluL12}
Yonatan Bilu and Nathan Linial.
\newblock Are stable instances easy?
\newblock {\em Combinatorics, Probability {\&} Computing}, 21(5):643--660,
  2012.

\bibitem{blum1995coloring}
Avrim Blum and Joel Spencer.
\newblock Coloring random and semi-random k-colorable graphs.
\newblock {\em Journal of Algorithms}, 19(2):204--234, 1995.

\bibitem{BODLAENDER19981}
Hans~L. Bodlaender.
\newblock A partial k-arboretum of graphs with bounded treewidth.
\newblock {\em Theoretical Computer Science}, 209(1):1 -- 45, 1998.

\bibitem{DBLP:journals/cpc/BollobasS04}
B{\'{e}}la Bollob{\'{a}}s and Alex~D. Scott.
\newblock Max cut for random graphs with a planted partition.
\newblock {\em Combinatorics, Probability {\&} Computing}, 13(4-5):451--474,
  2004.

\bibitem{DBLP:conf/focs/Boppana87}
Ravi~B. Boppana.
\newblock Eigenvalues and graph bisection: An average-case analysis (extended
  abstract).
\newblock In {\em Proceedings of the 28th Annual Symposium on Foundations of
  Computer Science (FOCS)}, pages 280--285, 1987.

\bibitem{DBLP:journals/tit/Breuer66}
Melvin~A. Breuer.
\newblock Coding the vertexes of a graph.
\newblock {\em {IEEE} Trans. Information Theory}, 12(2):148--153, 1966.

\bibitem{Breuer1967583}
Melvin~A Breuer and Jon Folkman.
\newblock An unexpected result in coding the vertices of a graph.
\newblock {\em Journal of Mathematical Analysis and Applications},
  20(3):583--600, 1967.

\bibitem{brooks_1941}
R.~L. Brooks.
\newblock On colouring the nodes of a network.
\newblock {\em Mathematical Proceedings of the Cambridge Philosophical
  Society}, 37(2):194--197, 1941.

\bibitem{DBLP:conf/stoc/BuchbinderNS13}
Niv Buchbinder, Joseph Naor, and Roy Schwartz.
\newblock Simplex partitioning via exponential clocks and the multiway cut
  problem.
\newblock In {\em Proceedings of the 45th Annual {ACM} Symposium on Theory of
  Computing ({STOC})}, pages 535--544, 2013.

\bibitem{DBLP:conf/soda/BuchbinderSW17}
Niv Buchbinder, Roy Schwartz, and Baruch Weizman.
\newblock Simplex transformations and the multiway cut problem.
\newblock In {\em Proceedings of the 28th Annual {ACM-SIAM} Symposium on
  Discrete Algorithms {(SODA)}}, pages 2400--2410, 2017.

\bibitem{DBLP:journals/talg/ByrkaPRST17}
Jaroslaw Byrka, Thomas Pensyl, Bartosz Rybicki, Aravind Srinivasan, and Khoa
  Trinh.
\newblock An improved approximation for \emph{k}-median and positive
  correlation in budgeted optimization.
\newblock {\em {ACM} Trans. Algorithms}, 13(2):23:1--23:31, 2017.

\bibitem{DBLP:journals/jcss/CalinescuKR00}
Gruia C{\u{a}}linescu, Howard~J. Karloff, and Yuval Rabani.
\newblock An improved approximation algorithm for {MULTIWAY} {CUT}.
\newblock {\em J. Comput. Syst. Sci.}, 60(3):564--574, 2000.

\bibitem{DBLP:journals/tois/Can93}
Fazli Can.
\newblock Incremental clustering for dynamic information processing.
\newblock {\em {ACM} Trans. Inf. Syst.}, 11(2):143--164, 1993.

\bibitem{DBLP:journals/jacm/Chan16}
Siu~On Chan.
\newblock Approximation resistance from pairwise-independent subgroups.
\newblock {\em J. {ACM}}, 63(3):27:1--27:32, 2016.

\bibitem{chan2012approximation}
Timothy~M Chan and Sariel Har-Peled.
\newblock Approximation algorithms for maximum independent set of pseudo-disks.
\newblock {\em Discrete \& Computational Geometry}, 48(2):373--392, 2012.

\bibitem{DBLP:journals/siamcomp/CharikarCFM04}
Moses Charikar, Chandra Chekuri, Tom{\'{a}}s Feder, and Rajeev Motwani.
\newblock Incremental clustering and dynamic information retrieval.
\newblock {\em {SIAM} J. Comput.}, 33(6):1417--1440, 2004.

\bibitem{DBLP:journals/jcss/CharikarGTS02}
Moses Charikar, Sudipto Guha, {\'{E}}va Tardos, and David~B. Shmoys.
\newblock A constant-factor approximation algorithm for the $k$-median problem.
\newblock {\em J. Comput. Syst. Sci.}, 65(1):129--149, 2002.

\bibitem{DBLP:conf/icalp/CharikarL12}
Moses Charikar and Shi Li.
\newblock A dependent {LP}-rounding approach for the $k$-median problem.
\newblock In {\em Proceedings of the 39th International Colloquium on Automata,
  Languages, and Programming {(ICALP)}}, pages 194--205, 2012.

\bibitem{DBLP:conf/esa/ChatziafratisRV17}
Vaggos Chatziafratis, Tim Roughgarden, and Jan Vondr{\'{a}}k.
\newblock Stability and recovery for independence systems.
\newblock In {\em Proceedings of the 25th Annual European Symposium on
  Algorithms ({ESA})}, pages 26:1--26:15, 2017.

\bibitem{Chekuri-Approx18}
Chandra Chekuri and Shalmoli Gupta.
\newblock Perturbation resilient clustering for $k$-center and related problems
  via {LP} relaxations.
\newblock In {\em Proceedings of the 21st International Conference on
  Approximation Algorithms for Combinatorial Optimization Problems ({APPROX})},
  2018.

\bibitem{Chlamtac2012}
Eden Chlamtac and Madhur Tulsiani.
\newblock {\em Convex Relaxations and Integrality Gaps}, pages 139--169.
\newblock Springer US, Boston, MA, 2012.

\bibitem{Christofides-TSP}
Nicos Christofides.
\newblock Worst-case analysis of a new heuristic for the {T}raveling {S}alesman
  problem.
\newblock page~10, 02 1976.

\bibitem{DBLP:journals/jacm/ChuzhoyGHKKKN05}
Julia Chuzhoy, Sudipto Guha, Eran Halperin, Sanjeev Khanna, Guy Kortsarz,
  Robert Krauthgamer, and Joseph Naor.
\newblock Asymmetric $k$-center is $\log^* n$-hard to approximate.
\newblock {\em J. {ACM}}, 52(4):538--551, 2005.

\bibitem{DBLP:journals/siamcomp/CohenHKZ03}
Edith Cohen, Eran Halperin, Haim Kaplan, and Uri Zwick.
\newblock {Reachability and Distance Queries via 2-Hop Labels}.
\newblock {\em {SIAM} J. Comput.}, 32(5):1338--1355, 2003.

\bibitem{DBLP:conf/focs/Cohen-AddadDW17}
Vincent Cohen{-}Addad, S{\o}ren Dahlgaard, and Christian Wulff{-}Nilsen.
\newblock Fast and compact exact distance oracle for planar graphs.
\newblock In {\em Proceedings of the 58th {IEEE} Annual Symposium on
  Foundations of Computer Science ({FOCS})}, pages 962--973, 2017.

\bibitem{DBLP:conf/soda/Coja-Oghlan05}
Amin Coja{-}Oghlan.
\newblock A spectral heuristic for bisecting random graphs.
\newblock In {\em Proceedings of the 16th Annual {ACM-SIAM} Symposium on
  Discrete Algorithms (SODA)}, pages 850--859, 2005.

\bibitem{DBLP:conf/ipco/CunninghamT99}
William~H. Cunningham and Lawrence Tang.
\newblock Optimal 3-terminal cuts and linear programming.
\newblock In {\em Proceedings of the 7th International Conference on Integer
  Programming and Combinatorial Optimization {(IPCO)}}, pages 114--125, 1999.

\bibitem{DBLP:journals/siamcomp/DahlhausJPSY94}
Elias Dahlhaus, David~S. Johnson, Christos~H. Papadimitriou, Paul~D. Seymour,
  and Mihalis Yannakakis.
\newblock The complexity of multiterminal cuts.
\newblock {\em {SIAM} J. Comput.}, 23(4):864--894, 1994.

\bibitem{DBLP:conf/icalp/DereniowskiKUZ17}
Dariusz Dereniowski, Adrian Kosowski, Przemyslaw Uznanski, and Mengchuan Zou.
\newblock Approximation strategies for generalized binary search in weighted
  trees.
\newblock In {\em Proceedings of the 44th International Colloquium on Automata,
  Languages, and Programming {(ICALP)}}, pages 84:1--84:14, 2017.

\bibitem{DBLP:conf/stoc/DinurS14}
Irit Dinur and David Steurer.
\newblock Analytical approach to parallel repetition.
\newblock In {\em Proceedings of the 46th ACM Symposium on Theory of Computing
  (STOC)}, pages 624--633, 2014.

\bibitem{DBLP:conf/focs/DyerF86}
Martin~E. Dyer and Alan~M. Frieze.
\newblock Fast solution of some random np-hard problems.
\newblock In {\em Proceedings of the 27th Annual Symposium on Foundations of
  Computer Science (FOCS)}, pages 331--336, 1986.

\bibitem{DYER1985285}
M.E Dyer and A.M Frieze.
\newblock A simple heuristic for the p-centre problem.
\newblock {\em Operations Research Letters}, 3(6):285 -- 288, 1985.

\bibitem{DBLP:conf/stoc/Emamjomeh-Zadeh16}
Ehsan Emamjomeh{-}Zadeh, David Kempe, and Vikrant Singhal.
\newblock Deterministic and probabilistic binary search in graphs.
\newblock In {\em Proceedings of the 48th Annual {ACM} {SIGACT} Symposium on
  Theory of Computing {(STOC)}}, pages 519--532, 2016.

\bibitem{DBLP:journals/jacm/Feige98}
Uriel Feige.
\newblock A threshold of $\ln n$ for approximating set cover.
\newblock {\em J. {ACM}}, 45(4):634--652, 1998.

\bibitem{DBLP:journals/siamdm/Feige04}
Uriel Feige.
\newblock Approximating maximum clique by removing subgraphs.
\newblock {\em {SIAM} J. Discrete Math.}, 18(2):219--225, 2004.

\bibitem{feige2001heuristics}
Uriel Feige and Joe Kilian.
\newblock Heuristics for semirandom graph problems.
\newblock {\em Journal of Computer and System Sciences}, 63(4):639--671, 2001.

\bibitem{Fredman:1987:FHU:28869.28874}
Michael~L. Fredman and Robert~Endre Tarjan.
\newblock {Fibonacci Heaps and Their Uses in Improved Network Optimization
  Algorithms}.
\newblock {\em J. ACM}, 34(3):596--615, July 1987.

\bibitem{DBLP:journals/ipl/FreundK00}
Ari Freund and Howard~J. Karloff.
\newblock A lower bound of $8/ \left(7 + \frac{1}{k-1}\right)$ on the
  integrality ratio of the {C}{\u{a}}linescu-{K}arloff-{R}abani relaxation for
  {M}ultiway {C}ut.
\newblock {\em Inf. Process. Lett.}, 75(1-2):43--50, 2000.

\bibitem{DBLP:journals/algorithmica/GargVY97}
Naveen Garg, Vijay~V. Vazirani, and Mihalis Yannakakis.
\newblock Primal-dual approximation algorithms for integral flow and multicut
  in trees.
\newblock {\em Algorithmica}, 18(1):3--20, 1997.

\bibitem{DBLP:journals/jal/GargVY04}
Naveen Garg, Vijay~V. Vazirani, and Mihalis Yannakakis.
\newblock Multiway cuts in node weighted graphs.
\newblock {\em J. Algorithms}, 50(1):49--61, 2004.

\bibitem{DBLP:journals/jal/GavoillePPR04}
Cyril Gavoille, David Peleg, St{\'{e}}phane P{\'{e}}rennes, and Ran Raz.
\newblock Distance labeling in graphs.
\newblock {\em J. Algorithms}, 53(1):85--112, 2004.

\bibitem{GawKMW17}
Pawel Gawrychowski, Haim Kaplan, Shay Mozes, and Oren Weimann.
\newblock Personal communication.
\newblock 2017.

\bibitem{DBLP:conf/soda/GawrychowskiMWW18}
Pawel Gawrychowski, Shay Mozes, Oren Weimann, and Christian Wulff{-}Nilsen.
\newblock Better tradeoffs for exact distance oracles in planar graphs.
\newblock In {\em Proceedings of the 29th Annual {ACM-SIAM} Symposium on
  Discrete Algorithms ({SODA})}, pages 515--529, 2018.

\bibitem{Gilmore}
P.C. Gilmore.
\newblock Families of sets with faithful graph representations.
\newblock {\em {Res. Note N. C. (2nd ed.)}}, 184, 1962.

\bibitem{DBLP:conf/mfcs/GoldbergRS13}
Andrew~V. Goldberg, Ilya~P. Razenshteyn, and Ruslan Savchenko.
\newblock Separating hierarchical and general hub labelings.
\newblock In {\em Proceedings of the 38th International Symposium on
  Mathematical Foundations of Computer Science {(MFCS)}}, pages 469--479, 2013.

\bibitem{Graham1972}
R.~L. Graham and H.~O. Pollak.
\newblock On embedding graphs in squashed cubes.
\newblock {\em Lecture Notes in Mathematics}, 303:99--110, 1972.

\bibitem{gyarfas1970helly}
Andr{\'a}s Gy{\'a}rf{\'a}s and Jen{\"o} Lehel.
\newblock A {H}elly-type problem in trees.
\newblock {\em Colloq. Math. Soc. J. Bolyai.}, 4, 1970.

\bibitem{DBLP:journals/jgaa/Halldorsson00}
Magn{\'{u}}s~M. Halld{\'{o}}rsson.
\newblock Approximations of weighted independent set and hereditary subset
  problems.
\newblock {\em J. Graph Algorithms Appl.}, 4(1), 2000.

\bibitem{DBLP:journals/njc/HalldorssonR94}
Magn{\'{u}}s~M. Halld{\'{o}}rsson and Jaikumar Radhakrishnan.
\newblock Improved approximations of independent sets in bounded-degree graphs
  via subgraph removal.
\newblock {\em Nord. J. Comput.}, 1(4):475--492, 1994.

\bibitem{DBLP:journals/siamcomp/Halperin02}
Eran Halperin.
\newblock Improved approximation algorithms for the vertex cover problem in
  graphs and hypergraphs.
\newblock {\em {SIAM} J. Comput.}, 31(5):1608--1623, 2002.

\bibitem{DBLP:conf/focs/Hastad96}
Johan H{\aa}stad.
\newblock Clique is hard to approximate within $n^{1-\varepsilon}$.
\newblock In {\em Proceedings of the 37th Annual Symposium on Foundations of
  Computer Science ({FOCS})}, pages 627--636, 1996.

\bibitem{HOCHBAUM1983243}
Dorit~S. Hochbaum.
\newblock Efficient bounds for the stable set, vertex cover and set packing
  problems.
\newblock {\em Discrete Applied Mathematics}, 6(3):243 -- 254, 1983.

\bibitem{DBLP:journals/mor/HochbaumS85}
Dorit~S. Hochbaum and David~B. Shmoys.
\newblock A best possible heuristic for the \emph{k}-center problem.
\newblock {\em Math. Oper. Res.}, 10(2):180--184, 1985.

\bibitem{HSU1979209}
Wen-Lian Hsu and George~L. Nemhauser.
\newblock Easy and hard bottleneck location problems.
\newblock {\em Discrete Applied Mathematics}, 1(3):209 -- 215, 1979.

\bibitem{Iyer:1988:ONR:49320.49322}
V.~Iyer, Ananth, H.~Donald Ratliff, and G.~Vijayan.
\newblock Optimal node ranking of trees.
\newblock {\em Inf. Process. Lett.}, 28(5):225--229, August 1988.

\bibitem{DBLP:conf/icalp/JacobsCLM10}
Tobias Jacobs, Ferdinando Cicalese, Eduardo~Sany Laber, and Marco Molinaro.
\newblock On the complexity of searching in trees: Average-case minimization.
\newblock In {\em Proceedings of the 37th International Colloquium on Automata,
  Languages and Programming ({ICALP})}, pages 527--539, 2010.

\bibitem{DBLP:conf/stoc/JainMS02}
Kamal Jain, Mohammad Mahdian, and Amin Saberi.
\newblock A new greedy approach for facility location problems.
\newblock In {\em Proceedings on 34th Annual {ACM} Symposium on Theory of
  Computing {(STOC)}}, pages 731--740, 2002.

\bibitem{DBLP:journals/jacm/JainV01}
Kamal Jain and Vijay~V. Vazirani.
\newblock Approximation algorithms for metric facility location and
  \emph{k}-median problems using the primal-dual schema and lagrangian
  relaxation.
\newblock {\em J. {ACM}}, 48(2):274--296, 2001.

\bibitem{DBLP:journals/siamdm/KannanNR92}
Sampath Kannan, Moni Naor, and Steven Rudich.
\newblock {Implicit Representation of Graphs}.
\newblock {\em {SIAM} J. Discrete Math.}, 5(4):596--603, 1992.

\bibitem{DBLP:journals/mor/KargerKSTY04}
David~R. Karger, Philip~N. Klein, Clifford Stein, Mikkel Thorup, and Neal~E.
  Young.
\newblock Rounding algorithms for a geometric embedding of minimum multiway
  cut.
\newblock {\em Math. Oper. Res.}, 29(3):436--461, 2004.

\bibitem{DBLP:conf/soda/KawarabayashiST13}
Ken{-}ichi Kawarabayashi, Christian Sommer, and Mikkel Thorup.
\newblock More compact oracles for approximate distances in undirected planar
  graphs.
\newblock In {\em Proceedings of the 24th Annual {ACM-SIAM} Symposium on
  Discrete Algorithms ({SODA})}, pages 550--563, 2013.

\bibitem{DBLP:journals/eccc/KhotMS18}
Subhash Khot, Dor Minzer, and Muli Safra.
\newblock Pseudorandom sets in grassmann graph have near-perfect expansion.
\newblock {\em Electronic Colloquium on Computational Complexity {(ECCC)}},
  25:6, 2018.

\bibitem{KP06}
Subhash Khot and Ashok~Kumar Ponnuswami.
\newblock Better inapproximability results for {M}ax{C}lique, {C}hromatic
  {N}umber and {M}in-3{L}in-{D}eletion.
\newblock In {\em Proceedings of the 33rd International Colloquium on Automata,
  Languages, and Programming (ICALP)}, pages 226--237, 2006.

\bibitem{DBLP:journals/jcss/KhotR08}
Subhash Khot and Oded Regev.
\newblock Vertex cover might be hard to approximate to within $2-\varepsilon$.
\newblock {\em J. Comput. Syst. Sci.}, 74(3):335--349, 2008.

\bibitem{DBLP:journals/jacm/KleinbergT02}
Jon~M. Kleinberg and {\'{E}}va Tardos.
\newblock Approximation algorithms for classification problems with pairwise
  relationships: metric labeling and markov random fields.
\newblock {\em J. {ACM}}, 49(5):616--639, 2002.

\bibitem{DBLP:conf/soda/KosowskiV17}
Adrian Kosowski and Laurent Viennot.
\newblock Beyond highway dimension: Small distance labels using tree skeletons.
\newblock In {\em Proceedings of the 28th Annual {ACM-SIAM} Symposium on
  Discrete Algorithms ({SODA})}, pages 1462--1478, 2017.

\bibitem{DBLP:conf/focs/KumarK10}
Amit Kumar and Ravindran Kannan.
\newblock Clustering with spectral norm and the k-means algorithm.
\newblock In {\em Proceedings of the 51th Annual {IEEE} Symposium on
  Foundations of Computer Science (FOCS)}, pages 299--308, 2010.

\bibitem{LM17}
Eduardo Laber and Marco Molinaro.
\newblock Personal communication.
\newblock 2017.

\bibitem{DBLP:journals/siamcomp/LiS16}
Shi Li and Ola Svensson.
\newblock Approximating $k$-median via pseudo-approximation.
\newblock {\em {SIAM} J. Comput.}, 45(2):530--547, 2016.

\bibitem{Lovasz:2006:SCG:2263335.2269451}
L{\'{a}}szl{\'{o}} Lov{\'{a}}sz.
\newblock On the shannon capacity of a graph.
\newblock {\em IEEE Trans. Inf. Theor.}, 25(1):1--7, September 2006.

\bibitem{DBLP:journals/jacm/LundY94}
Carsten Lund and Mihalis Yannakakis.
\newblock On the hardness of approximating minimization problems.
\newblock {\em J. {ACM}}, 41(5):960--981, 1994.

\bibitem{DBLP:conf/approx/MagenM09}
Avner Magen and Mohammad Moharrami.
\newblock Robust algorithms for {Max} {I}ndependent {S}et on minor-free graphs
  based on the {S}herali-{A}dams hierarchy.
\newblock In {\em 12th International Workshop ({APPROX}), and 13th
  International Workshop ({RANDOM})}, pages 258--271, 2009.

\bibitem{DBLP:journals/corr/MakarychevM16}
Konstantin Makarychev and Yury Makarychev.
\newblock Metric perturbation resilience.
\newblock {\em CoRR}, abs/1607.06442, 2016.

\bibitem{DBLP:conf/stoc/MakarychevMV12}
Konstantin Makarychev, Yury Makarychev, and Aravindan Vijayaraghavan.
\newblock Approximation algorithms for semi-random partitioning problems.
\newblock In {\em Proceedings of the 44th Symposium on Theory of Computing
  Conference (STOC)}, pages 367--384, 2012.

\bibitem{DBLP:conf/soda/MakarychevMV14}
Konstantin Makarychev, Yury Makarychev, and Aravindan Vijayaraghavan.
\newblock Bilu-{L}inial stable instances of {M}ax {C}ut and {M}inimum
  {M}ultiway {C}ut.
\newblock In {\em Proceedings of the 25th Annual {ACM-SIAM} Symposium on
  Discrete Algorithms ({SODA})}, pages 890--906, 2014.

\bibitem{DBLP:conf/stoc/MakarychevMV14}
Konstantin Makarychev, Yury Makarychev, and Aravindan Vijayaraghavan.
\newblock Constant factor approximation for balanced cut in the {PIE} model.
\newblock In {\em Proceedings of the 46th Symposium on Theory of Computing
  (STOC)}, pages 41--49, 2014.

\bibitem{DBLP:conf/stoc/ManokaranNRS08}
Rajsekar Manokaran, Joseph Naor, Prasad Raghavendra, and Roy Schwartz.
\newblock Sdp gaps and ugc hardness for multiway cut, 0-extension, and metric
  labeling.
\newblock In {\em Proceedings of the 40th Annual {ACM} Symposium on Theory of
  Computing ({STOC})}, pages 11--20, 2008.

\bibitem{DBLP:conf/focs/McSherry01}
Frank McSherry.
\newblock Spectral partitioning of random graphs.
\newblock In {\em Proceedings of the 42nd Annual Symposium on Foundations of
  Computer Science (FOCS)}, pages 529--537, 2001.

\bibitem{DBLP:conf/sofsem/MihalakSSW11}
Mat{\'{u}}s Mihal{\'{a}}k, Marcel Sch{\"{o}}ngens, Rastislav Sr{\'{a}}mek, and
  Peter Widmayer.
\newblock On the complexity of the metric {TSP} under stability considerations.
\newblock In {\em Proceedings of the 37th Conference on Current Trends in
  Theory and Practice of Computer Science {(SOFSEM)}}, pages 382--393, 2011.

\bibitem{Mozes:2008:FOT:1347082.1347202}
Shay Mozes, Krzysztof Onak, and Oren Weimann.
\newblock Finding an optimal tree searching strategy in linear time.
\newblock In {\em Proceedings of the 19th Annual ACM-SIAM Symposium on Discrete
  Algorithms ({SODA})}, pages 1096--1105, 2008.

\bibitem{Nemhauser1975}
G.~L. Nemhauser and L.~E. Trotter.
\newblock Vertex packings: Structural properties and algorithms.
\newblock {\em Mathematical Programming}, 8(1):232--248, 1975.

\bibitem{DBLP:conf/focs/OnakP06}
Krzysztof Onak and Pawel Parys.
\newblock Generalization of binary search: Searching in trees and forest-like
  partial orders.
\newblock In {\em Proceedings of the 47th Annual {IEEE} Symposium on
  Foundations of Computer Science (FOCS)}, pages 379--388, 2006.

\bibitem{DBLP:journals/jal/PanigrahyV98}
Rina Panigrahy and Sundar Vishwanathan.
\newblock An ${O}(\log^* n)$ approximation algorithm for the asymmetric
  $p$-center problem.
\newblock {\em J. Algorithms}, 27(2):259--268, 1998.

\bibitem{DBLP:journals/siamcomp/PatrascuR14}
Mihai Patrascu and Liam Roditty.
\newblock Distance oracles beyond the {T}horup-{Z}wick bound.
\newblock {\em {SIAM} J. Comput.}, 43(1):300--311, 2014.

\bibitem{DBLP:journals/jgt/Peleg00}
David Peleg.
\newblock Proximity-preserving labeling schemes.
\newblock {\em Journal of Graph Theory}, 33(3):167--176, 2000.

\bibitem{DBLP:conf/stoc/RazS97}
Ran Raz and Shmuel Safra.
\newblock {A Sub-Constant Error-Probability Low-Degree Test, and a Sub-Constant
  Error-Probability PCP Characterization of NP}.
\newblock In {\em Proceedings of the 29th {ACM} Symposium on Theory of
  Computing (STOC)}, pages 475--484, 1997.

\bibitem{Schaffer:1989:ONR:71361.71368}
A.~A. Sch\"{a}ffer.
\newblock Optimal node ranking of trees in linear time.
\newblock {\em Inf. Process. Lett.}, 33(2):91--96, November 1989.

\bibitem{DBLP:conf/stoc/SharmaV14}
Ankit Sharma and Jan Vondr{\'{a}}k.
\newblock Multiway cut, pairwise realizable distributions, and descending
  thresholds.
\newblock In {\em Proceedings of the 46th Annual {ACM} Symposium on Theory of
  Computing ({STOC})}, pages 724--733, 2014.

\bibitem{DBLP:journals/jacm/Thorup99}
Mikkel Thorup.
\newblock {Undirected Single-Source Shortest Paths with Positive Integer
  Weights in Linear Time}.
\newblock {\em J. {ACM}}, 46(3):362--394, 1999.

\bibitem{DBLP:journals/jacm/ThorupZ05}
Mikkel Thorup and Uri Zwick.
\newblock Approximate distance oracles.
\newblock {\em J. {ACM}}, 52(1):1--24, 2005.

\bibitem{DBLP:conf/esa/White15}
Colin White.
\newblock Lower bounds in the preprocessing and query phases of routing
  algorithms.
\newblock In {\em Proceedings of the 23rd Annual European Symposium on
  Algorithms {(ESA)}}, pages 1013--1024, 2015.

\bibitem{DBLP:journals/combinatorica/Winkler83}
Peter~M. Winkler.
\newblock Proof of the squashed cube conjecture.
\newblock {\em Combinatorica}, 3(1):135--139, 1983.

\bibitem{DBLP:journals/toc/Zuckerman07}
David Zuckerman.
\newblock Linear degree extractors and the inapproximability of max clique and
  chromatic number.
\newblock {\em Theory of Computing}, 3(1):103--128, 2007.

\end{thebibliography}
\bibliographystyle{plain}

\end{document}